\newcommand{\UHP}{\mathbb{H}}
\newcommand{\Hilbert}{\mathcal{H}}
\newcommand{\UD}{\mathbb{D}}
\newcommand{\deformspace}{\mathfrak{D}}
\newcommand{\cmpx}{\mathbb{C}}
\newcommand{\diffspace}{\mathcal{A}}
\newcommand{\schottky}{\mathfrak{S}}
\newcommand{\teich}{\mathcal{T}}
\newcommand{\moduli}{\mathcal{M}}
\newcommand{\symmoduli}{\mathfrak{M}}
\newcommand{\modular}{\operatorname{Mod}}
\newcommand{\puremod}{\operatorname{Mod}_0}
\newcommand{\symm}[1]{\operatorname{Symm}\left(#1\right)}
\newcommand{\kerphi}{\mathcal{N}}
\newcommand{\linebundle}{\mathscr{L}}
\newcommand{\Lie}{\mathcal{L}}
\newcommand{\PSLR}{\operatorname{PSL}(2,\mathbb{R})}
\newcommand{\PSLC}{\operatorname{PSL}(2,\mathbb{C})}
\newcommand{\equidef}{\overset{\text{\tiny \it def.}}{\equiv}}
\newcommand{\fund}{\mathcal{F}}
\newcommand{\confspace}[2]{\mathscr{F}_{#1}\left(#2\right)}
\newcommand{\Automorphism}{\operatorname{Aut}}
\newcommand{\Inn}{\operatorname{Inn}}
\newcommand{\chern}[2]{\mathsf{c}_1\left(#1, #2\right)}
\newcommand{\Gpotential}{\mathscr{S}}
\newcommand{\Lponetial}{\mathsf{h}}
\newcommand{\sigtype}{\boldsymbol{\mathsf{s}}}
\newcommand{\Sorder}[1]{\text{\scriptsize{$\mathcal{O}$}}\left(#1\right)}
\newcommand{\SchottkyFund}{\mathcal{D}}
\newcommand{\ff}{\mathrm{f}}
\newcommand{\sing}{\operatorname{Sing}}
\newcommand{\deck}{\operatorname{deck}}
\newcommand{\isom}{\operatorname{Isom}}
\newcommand{\hol}{\operatorname{hol}}
\newcommand{\dev}{\operatorname{dev}}
\newcommand{\orderrange}{\hat{\mathbb{N}}^{^{>1}}}
\newcommand{\CPl}{\mathbb{CP}^1}
\newcommand{\s}{\mathsf{s}}
\newcommand{\ord}{\operatorname{ord}}
\newcommand{\brdiv}{\mathscr{D}}
\newcommand{\singrigon}{\overset{{}_{\curlywedge}}{\Omega}}
\newcommand{\singfund}{\overset{{}_{\curlywedge}}{\mathcal{D}}}
\newcommand{\gal}{\operatorname{Gal}}
\newtheorem{theorem}{Theorem}[section]
\newtheorem{maintheorem}{Theorem}
\newtheorem*{theorem*}{Theorem}
\newtheorem{corollary}{Corollary}[section]
\newtheorem{lemma}{Lemma}[section]
\newtheorem*{lemma*}{Lemma}
\newtheorem{proposition}{Proposition}[section]
\theoremstyle{definition}
\newtheorem{definition}{Definition}[section]
\newtheorem{remark}{Remark}[section]
\title{\boldmath Classical Liouville Action and Uniformization of Orbifold Riemann Surfaces}
\author[1]{Behrad Taghavi,}
\author[1]{Ali Naseh,}
\author[2]{Kuroush Allameh}
\affiliation[1]{School of Particles and Accelerators, Institute for Research in Fundamental Sciences (IPM),\\ P.O. Box 19395-5531, Tehran, Iran}
\affiliation[2]{Physics Department, UC Santa Cruz, 1156 High Street Santa Cruz, CA 95064, USA}
\emailAdd{btaghavi@ipm.ir}
\emailAdd{naseh@ipm.ir}
\emailAdd{kallameh@ucsc.edu}
\abstract{We study the classical Liouville field theory on Riemann surfaces of genus $g>1$ in the presence of vertex operators associated with branch points of orders $m_i>1$. In order to do so, we will consider the generalized Schottky space $\schottky_{g,n}(\boldsymbol{m})$ obtained as a holomorphic fibration over the Schottky space $\schottky_g$ of the (compactified) underlying Riemann surface. The fibers of $\schottky_{g,n}(\boldsymbol{m}) \to \schottky_g$ correspond to configuration spaces of $n$ orbifold points of orders $\boldsymbol{m} = (m_1,\dots,m_n)$. Drawing on the previous work of Park, Takhtajan, and Teo \cite{park2015potentials} as well as Takhtajan and Zograf \cite{ZT_2018}, we define Hermitian metrics $\Lponetial_i$ for tautological line bundles $\linebundle_i$ over $\schottky_{g,n}(\boldsymbol{m})$. These metrics are expressed in terms of the first  coefficient of the expansion of covering map $J$ near each singular point on the Schottky domain. Additionally, we define the regularized classical Liouville action $S_{\boldsymbol{m}}$ using Schottky global coordinates on Riemann orbisurfaces with genus $g>1$. We demonstrate that $\exp[S_{\boldsymbol{m}}/\pi]$ serves as a Hermitian metric in the holomorphic $\mathbb{Q}$-line bundle $\linebundle = \bigotimes_{i=1}^{n} \linebundle_i^{\otimes (1-1/m_i^2)}$ over $\schottky_{g,n}(\boldsymbol{m})$. Furthermore, we explicitly compute the first and second variations of the smooth real-valued function $\Gpotential_{\boldsymbol{m}} = S_{\boldsymbol{m}} - \pi \sum_{i=1}^n (m_i - \tfrac{1}{m_i}) \log \Lponetial_{i}$ on the Schottky deformation space $\schottky_{g,n}(\boldsymbol{m})$. We establish two key results: (i) $\Gpotential_{\boldsymbol{m}}$ generates a combination of accessory and auxiliary parameters, and (ii) $-\Gpotential_{\boldsymbol{m}}$ acts as a K\"{a}hler potential for a specific combination of Weil--Petersson and Takhtajan--Zograf metrics that appear in the local index theorem for orbifold Riemann surfaces \cite{ZT_2018}. The obtained results can then be interpreted in terms of the complex geometry of the Hodge line bundle equipped with Quillen's metric over the moduli space $\symmoduli_{g,n}(\boldsymbol{m})$ of Riemann orbisurfaces and the tree-level approximation of conformal Ward identities associated with quantum Liouville theory.}
\begin{document} \maketitle
\flushbottom

\section{Introduction}
Conformal field theory in two dimensions has found a wide range of applications in both physics and mathematics. Perhaps, one of the most interesting applications of CFTs in mathematical physics is to the geometry of surfaces: This is most clear in Liouville CFT, introduced by Polyakov \cite{Polyakov:1981rd}, that can be viewed as a quantum theory of geometry in two dimensions \cite{Takhtajan:1993vt,Teschner:2003at}. This theory admits two dimensional surfaces of constant negative curvature (possibly with sources)  as its classical solutions. It is then natural to consider these classical solutions as critical points of a certain functional defined on the space of all smooth conformal metrics on a given Riemann surface. In the context of string theory, this functional is known as the \emph{Liouville action functional} while its critical value is usually called the \emph{classical Liouville action}.

From a mathematical perspective, the connection between Liouville theory and complex geometry of moduli spaces of Riemann surfaces was first established by Takhtajan and Zograf \cite{ZT_1985,Zograf1988ONLE,1988SbMat..60..297Z}. One novelty of their work was the use of (Fuchsian or Schottky) projective structures on Riemann surfaces to construct the Liouville action. Zograf and Takhtajan proved that the classical Liouville action is a K\"{a}hler potential for the Weill--Petersson (WP) metric on moduli spaces of punctured Riemann spheres \cite{Zograf1988ONLE}, as well as on Schottky spaces of compact Riemann surfaces \cite{1988SbMat..60..297Z}. In the case of punctured Riemann spheres, the classical action is a generating function for the famous \emph{accessory parameters} of Klein and Poincar\'{e}. For compact Riemann surfaces, the classical Liouville action is an anti-derivative of a 1-form on the Schottky space given by the difference of Fuchsian and Schottky projective connections. In turn, this 1-form is an anti-derivative of the Weil--Petersson symplectic 2-form on the Schottky space. See \cite{takhtajan1987uniformization,Takhtajan:1991yk} for reviews of these results.

Later on, Takhtajan and Zograf  introduced a new K\"{a}hler metric \cite{takhtajan1988selberg,takhtajan1991local}, called Takhtajan--Zograf (TZ) metric \cite{obitsu1999non,weng2001omega,wolpert2007cusps}, on the moduli space $\symmoduli_{g,n}$ of punctured Riemann surfaces in the process of deriving a local index theorem (in Quillen's form) for families of Cauchy-Riemann operators (for its precise definition, see Sect.~\ref{KahlerMetrics}). In 2015, Park, Takhtajan, and Teo \cite{park2015potentials} found a K\"{a}hler potential $\Lponetial_i$ for the i-th TZ metric in terms of the first coefficient of the Fourier expansion of a covering map $J$ near the i-th puncture. The authors of \cite{park2015potentials} also showed that these K\"{a}hler potentials are essential for defining classical Liouville actions that are invariant under certain subgroups of the Teichm\"uller modular group: An appropriate definition of classical Liouville action on a punctured Riemann surface needs a regularization procedure that introduces a ``modular anomaly'' (see \cite{Zograf_1990}). K\"{a}hler potentials $\Lponetial_i$ are then essential in cancellation of these anomalous contributions.

More recently, a generalization of local index theorem \cite[Theorem~1]{takhtajan1991local} to the case of orbifold Riemann surfaces \cite{ZT_2018}, has lead Zograf and Takhtajan to introduce yet another K\"{a}hler metric on the moduli space of Riemann orbisurfaces. In order to avoid confusion, this new K\"{a}hler metric will be called the \emph{elliptic} TZ metric while the one introduced in \cite{takhtajan1988selberg,takhtajan1991local} will be called the \emph{cuspidal} TZ metric.\footnote{It was demonstrated by the authors of \cite{ZT_2018} that the elliptic TZ metric converges to the cuspidal TZ metric in the limit that the opening angle of corresponding elliptic fixed point approaches zero.} Using the results of \cite{park2015potentials}, Zograf and Takhtajan \cite{ZT_2018} also found a K\"{a}hler potential for the i-th elliptic TZ metric in the case of genus zero orbifold Riemann surfaces.

Motivated by the results of \cite{park2015potentials, ZT_2018}, this manuscript explores the classical limit of Liouville field theory (LFT) on orbifold Riemann surfaces with genus $g>1$ using the Schottky global coordinates.\footnote{While our main results are derived for the case of Riemann orbisurfaces with genus $g>1$, we still study genus zero Riemann orbisurfaces to draw some important lessons.} Our main result can be viewed as an extension of \cite[Theorems~1,2]{park2015potentials} to the case of orbifold Riemann surfaces (both compact and with punctures): While the authors of \cite{park2015potentials} considered the classical Liouville action on (generalized) Schottky space of punctured Riemann surfaces, we have to take into account the contributions of orbifold points to the Liouville action as well. Despite the fact that some aspects of this generalization might be familiar to mathematicians and experts, we have still chosen to include them here in order to make this manuscript self-contained and more accessible. In particular, while the method of proofs in this work closely resemble those in \cite{park2015potentials,ZT_2018,Zograf_1990} (and the references therein), the details of calculations for the case of orbifolds, to the best of our knowledge, have not appeared explicitly anywhere in the literature.

From a mathematical perspective, our result provides evidence that the connection between classical Liouville action and Quillen's metric in the Hodge line bundle (see \cite{Zograf_1990}) extends to the orbifold setting.\footnote{Some other mathematical implications of our result will be highlighted in a forthcoming shorter version of this manuscript.} From the point of view of physics, the results of this paper have multiple applications, many of which stem from the connection between partition function of Liouville theory on a Riemann surface with conical singularities and correlation functions of Liouville vertex operators corresponding to conical defects (see, e.g. \cite{Takhtajan:1994vt,Takhtajan:1995fd,Takhtajan:2005md}).

Studying CFTs like LFT on Riemann orbisurfaces with genus $g$ has additional reasons from the physics perspective. One notable rationale comes from the fact that many established constructions that relate geometry and entanglement are based on bipartite entanglement. For instance, the emergence of eternal black holes from quantum entanglement between two copies of the CFT in a thermofield double (TFD) state \cite{maldacena2003eternal}, or Ryu-Takayanagi  \cite{ryu2006holographic} minimal area surface in AdS anchored on the boundary of a sub-region that determines spatial entanglement between that sub-region and its complement in dual theory. Likewise, the MERA ansatz which reveals an additional dimension for AdS spacetime in the direction of increasing or decreasing entanglement \cite{swingle2012constructing}. Additionally, linearized Einstein equations which can be derived from the entanglement of the underlying quantum degrees of freedom \cite{lashkari2014gravitational}. However, it is possible for the degrees of freedom to be entangled in a multipartite manner, much like how many-point correlation functions cannot be deduced from lower correlations. This is evident in  tripartite entangled states such as GHZ (Greenberger-Horne-Zeilinger) and W states. These states (and some of their deformations) are similar to the TFD state and can be described by integrating over half of a higher genus surface (with possible singularities). Another reason stems from the R\'{e}nyi entropies, $S_n$. For a reduced density matrix $\rho$ of a spatial region A,
\begin{equation*}
S_n = \frac{1}{1-n} \log \Tr (\rho^n)=\frac{1}{1-n}\left(\log Z_n-n\log Z_1\right),
\end{equation*}
where $Z_n$ is the partition function on an orbifold Riemann surface with non-zero genus and $\mathbb{Z}_n$ symmetry. This orbifold surface can be constructed by gluing together $n$-copies of the original system across the entangling region $A$ with some disjoint regions.\footnote{See e.g.\cite{Liu:2016zmr}.} The Schottky uniformization suggests that distinct phases should be considered in studying the $Z_n$ partition function and in order to prove the RT formula, the extension of those distinct phases into a quotient of $\mathbb{H}_3$ should be explored. Actually the dominant contribution, determined by the least action principle based on the values in each phase, is important in determining the wave function and proving the RT formula.\footnote{For attempts in this direction see \cite{Faulkner:2013yia,Maxfield:2016mwh}.} Accordingly, this implies that studying the CFT on orbifold Riemann surfaces is especially instrumental in determining whether the assumption of replica symmetry holds true in the dual gravitational system.

As another reason, it is worth noting that the correlation functions of twist operators in a CFT have a connection to partition functions on orbisurfaces with different genera. Specifically, a CFT that arises from the low energy limit of a 2-dimensional sigma model with a target space of $M^n/S_n$ (symmetric orbifold of $n$ copies of $M$). This relationship was highlighted in a paper by Lunin and Mathur \cite{lunin2001correlation}. Interestingly, the sigma model with the aforementioned target space can describe the low energy behavior of the $D1-D5$ system  \cite{de1999six,dijkgraaf1999instanton,seiberg1999d1}, which generates a near horizon structure of $\text{AdS}_3\times S^3\times M_4$, as presented in \cite{larsen1999u}. Thus, possessing knowledge about the Liouville action on orbisurfaces can be highly advantageous for gaining a better insight into not only the string theoretical constructions of $\text{AdS}_3/\text{CFT}_2$, but also for addressing important topics related to black hole physics, such as microstate counting \cite{strominger1996microscopic}.

Given that we intend to allocate a part to extending our findings from Riemann orbisurfaces to conical Riemann surfaces, it is essential to outline here some motivations behind this choice. One motivation ties into the importance of investigating quantum gravity in three dimensions. Previous research has shown \cite{maloney2010quantum,keller2015poincare} \footnote{Pure gravity on global AdS$_{3}$ can be rewritten as two copies of geometric quantization of a specific coadjoint orbit of the Virasoro group \cite{witten1988coadjoint, maloney2010quantum}. For the path integral quantization of the same coadjoint orbit and accordingly exploring the dual boundary theory, see \cite{cotler2019theory}.} that if one only considers smooth saddle-points when calculating the gravitational path integral in 3-dimensional gravity, the resulting regularized partition function is plagued by two issues. Firstly, the range of twists at a constant spin is continuous rather than discrete. Secondly, when dealing with high spins and energies near the edge of the spectrum, the density of states becomes negative. The first difficulty can potentially be resolved by considering recent findings that suggest the dual theory of 2-dimensional AdS gravity is an ensemble of one-dimensional quantum mechanics \cite{saad2019jt,stanford2019jt}. To address the issue of non-unitarity, it has been proposed that extra contributions should be added to the path integral over metrics, namely Seifert manifolds, which are off-shell configurations \cite{maxfield2021path}.\footnote{In another proposal, the 3-dimensional theory is modified by adding some special massive particles which it implies that one should consider 3-dimensional conical manifolds beside the smooth saddles, see \cite{benjamin2020pure}.} It is particularly interesting to note that through Kaluza-Klein reduction, the solutions of the derived 2-dimensional theory are conical Riemann surfaces. 

The study of conical Riemann surfaces also plays a crucial role in addressing a significant concern within the realm of 2-dimensional CFTs. Ideally, one aims to resolve the constraints of conformal invariance and unitarity to ascertain the permissible values for conformal dimensions $\Delta_i$ and Operator Product Expansion (OPE) coefficients. This pursuit would lead to a comprehensive classification of 2-dimensional CFTs. But since such exhaustive classification does not exist (up to now), one can at least explore the universal aspects of those data, i.e., those hold true in any conformal field theory. When all scalar operators in three-point functions possess high dimensions (i.e., they are "heavy"), a universal formula for the averaged value of OPE coefficients emerges for any unitary and compact 2-dimensional CFT with a central charge $c$ greater than 1, as detailed in references \cite{chang2016bootstrap,Collier:2019weq} (see also \cite{Abajian:2023bqv}). Notably, when $12\Delta_i/c <1$, this formula finds \cite{Collier:2019weq} a connection to the DOZZ (Dorn-Otto-Zamolodchikov-Zamolodchikov) formula for the structure constants of vertex operators in Liouville theory. Actually, the classical correlation function of Liouville vertex operators on a Riemann surface with genus $g$ can be linked to the on-shell value of the Liouville action functional on the same Riemann surface, albeit with the insertion of conical points at the positions of those operators -effectively transforming it into a conical Riemann surface. As a result, delving into LFT on conical Riemann surfaces offers a dual benefit. It not only allows us to investigate the universal features of OPEs within 2-dimensional CFTs but also sheds light on certain facets of 3-dimensional gravity in the presence of heavy particles, a realm characterized by 3-dimensional geometry with conical defects.\footnote{The operator with $12\Delta_i/c > 1$ is dual to a black hole state. See \cite{Abajian:2023bqv}.} 

The aforementioned observation presents a different aspect when examined within the context of the bulk dual. Within semiclassical gravity, the wormhole amplitudes can be understood as averaged solutions to the mentioned CFT's bootstrap constraints in the semiclassical limit  \cite{chandra2022semiclassical,sasieta2023wormholes}.\footnote{This interpretation differs from the random matrix interpretation for 2D gravity, where averaging occurs across a family of UV-complete quantum theories.} To be more precise, the Euclidean wormhole solutions provide connected contributions to the average of products of CFT's correlation functions.\footnote{An alternative interpretation in terms of coarse graining in a single CFT is provided by \cite{chandra2022coarse}.} Moreover, by initiating from a two-sphere boundary wormhole with $(n+1)$ massive particles going through the wormhole and then analytically continuing the mass of $(m+1)$ of them to the black hole regime, the two-sphere boundaries are effectively joined at their $(m+1)$ pairs of insertion points. This results in the creation of a genus-$m$ handlebody with $2(n-m)$ conical singularities \cite{chandra2022semiclassical}.\footnote{The parameters that define operator dimensions change into moduli of the (conical) Riemann surface as the operators are made heavy.} Consequently, the LFT on the single conical boundary of the handlebody not only is connected to the  analytical continuation of the dimension of defect operators (mass of massive particles) on two-boundary wormholes but also can shed light on the statistical distribution of CFT's data on some regime of scaling dimensions. 

Furthermore, it is established that there exist numerous distinct families of black hole microstates, each comprising an infinite number of members \cite{wheeler1964relativity,balasubramanian2022microscopic}. These families are also closely related to geometries featuring Einstein-Rosen bridges of potentially immense volume. Intriguingly, it has been demonstrated that a substantial reduction in the dimension of the Hilbert space can happen by adding the contribution of wormhole saddle points in the gravitational path integral. These wormholes yield minute yet universal contributions to the quantum overlap of candidate black hole microstates and shed light on the really orthogonal ones \cite{balasubramanian2022microscopic}.\footnote{See also \cite{chakravarty2021overcounting}. } Accordingly, this concept can also help to resolve the problem \cite{susskind2020black} of growth of holographic complexity at exponentially large times. Actually, some of the microstates are created by massive particles with masses below the black hole threshold, which reside behind the horizon without altering the mass. Therefore, by analytically continuing external operators to the black hole regime in two-boundary wormholes, the LFT on conical Riemann surfaces can also offer insights into the minute overlaps between different microstates and, ultimately, provide a deeper understanding of the Bekenstein-Hawking entropy and holographic complexity.

Even more intriguingly, when one integrates out the mentioned (on-shell) wormholes in 3-dimensional gravity coupled to sufficiently massive point particles, it results in the emergence of random bulk 3-point interactions among these point particles. These interactions exhibit the same statistical properties as the boundary OPE coefficients \cite{chandra2022semiclassical}. As a consequence, it becomes apparent that the LFT on conical Riemann surfaces can also be utilized to investigate these random couplings within the bulk Effective Field Theory and to offer a controlled and semiclassical way to realize the mechanism originally proposed by Coleman, Giddings and Strominger \cite{ coleman1988black,giddings1988loss}.

There are also other reasons to write this paper, which we will mention in the Discussion section when more possible applications for our results are explored. 
\subsection*{Related Works}
At this stage, we would like to make some comments about the relation between our work and that of other authors who have studied neighbouring questions:
\begin{itemize}
	\item In addition to the study of classical Liouville action on generalized Schottky space of punctured Riemann surfaces, reference \cite{park2015potentials} also studied Liouville theory on punctured Riemann surfaces with quasi-Fuchsian global coordinates. Moreover, the authors of \cite{park2015potentials} have proved the holographic correspondence for the case of punctured Riemann surfaces with both Schottky and quasi-Fuchsian global coordinates.\footnote{The holographic correspondence for compact Riemann surfaces has been proved a long time ago (see \cite{Krasnov:2000zq,Takhtajan:2002cc, Krasnov:2009vy}) and asserts that the renormalized volume of a hyperbolic 3-manifold, which is a purely three-dimensional quantity in its definition, is equivalent with the classical Liouville action on its conformal boundary --- a purely two-dimensional quantity.} While Park and Teo \cite{Teo_2018} have already extended the results of \cite{park2015potentials} to the case of orbifold Riemann surfaces with quasi-Fuchsian global coordinates,\footnote{In particular, reference \cite{Teo_2018} also includes the proof of holographic correspondence for the case of quasi-Fuchsian orbifolds. Moreover, from a physics perspective, such orbifolds have been studied by Chandra, Collier, Hartman, and Maloney \cite{chandra2022semiclassical}.} a rigorous study of Liouville action and holographic correspondence for the case of Riemann orbisurfaces with Schottky global coordinates has not appeared anywhere in the literature. The present manuscript aims to partially fill this gap: We study the classical Liouville action on Schottky deformation space of Riemann orbisurfaces but leave a rigorous proof of the holographic correspondence to a future work \cite{ANT23}.\footnote{While a rigorous proof of holographic correspondence is still outstanding for the case of handlebody orbifolds, many references have studied Einstein-Hilbert action on AdS$_{3}$ with conical singularities in connection to correlation functions of Liouville vertex operators (see, e.g. \cite{Krasnov:2002rn,Abajian:2023bqv}).}
	
	\item Motivated by the study of quantum Hall states on singular surfaces, reference \cite{Can:2017ycp} has studied the modular invariant Liouville action on Riemann sphere with conical singularities. For the special case of Riemann orbisurfaces with genus $g=0$, our results are in agreement with that of reference \cite{Can:2017ycp}.
	
	\item As we will further discuss in section \ref{Discussion}, the main results of this paper (see Theorems~\ref{mainthrm1} and \ref{mainthrm2}) provide strong evidence for a close connection between classical Liouville action and (appropriately defined) determinant of Laplacian on Riemann orbisurfaces with genus $g>1$; for Riemann sphere with conical singularities, this connection has also been studied by Kalvin (see, e.g. \cite{kalvin2023determinants,kalvin2023triangulations}).\footnote{More generally, see \cite{kalvin2021polyakov} for the derivation of a Polyakov-type anomaly formula in this case.} In this sense, our results are closely related to the studies of Laplace-Beltrami operator on Riemann orbisurfaces and its spectral properties \cite{Montplet2016RiemannRochII,Teo_2021}.\footnote{From a physics perspective, the zeta-regularized determinant of Laplacian on Riemann orbisurfaces has been recently studied in \cite{lin2023revisiting}.}
\end{itemize}
\subsection*{Structure of the Paper}
The structure of this work is as follows: In section~\ref{Action}, we will briefly review the relationship between correlation functions of heavy Liouville vertex operators corresponding to branch points and the uniformization theory of orbifold Riemann surfaces. Section~\ref{geometry} will cover various topics related to the deformation theory of Ahlfors and Bers. This will include a discussion of some known facts about the geometry of Teichm\"uller, Schottky, and moduli spaces of Riemann orbisurfaces as well as some variational formulas which we will need throughout this manuscript. Section~\ref{sec:action} contains a detailed study of regularized Liouville action and its geometric properties: Subsection~\ref{subsec:liouvilleactionnogenus} studies the regularized classical action on Riemann orbisurfaces of genus $g=0$ while subsection~\ref{gg1} focuses on the classical Liouville action defined for Riemann orbisurfaces of genus $g > 1$. In Section \ref{sec:Potentials}, we will first study K\"ahler potentials $\Lponetial_{i}$ for cuspidal and elliptic Takhtajan-Zograf metrics on both $\moduli_{0,n}$ and $\schottky_{g,n}(\boldsymbol{m})$. In particular, we will demonstrate that for certain special line bundles $\linebundle_i$ equipped with Hermitian metrics $\Lponetial_{i}^{m_i}$, the first Chern forms are related to the K\"ahler form of TZ metrics associated with elliptic and parabolic generators. Moreover, we will show that the first Chern form of the $\mathbb{Q}$-line bundle $\linebundle = \bigotimes_{i=1}^{n} \linebundle_i^{(1-1/m_i^2)}$ with Hermitian metric $\exp[S_{\boldsymbol{m}}/\pi]$ is given by $\tfrac{1}{\pi^2} \omega_{\text{WP}}$; here, $S_{\boldsymbol{m}}$ denotes the appropriately regularized classical Liouville action. From these results, it is easy to see that a specific combination $\omega_{\text{WP}}- \tfrac{4 \pi^2}{3} \, \omega_{\text{TZ}}^{\text{cusp}}- \tfrac{\pi}{2} \, \sum_{j=1}^{n_e} (m_i -\tfrac{1}{m_i}) \hspace{.5mm} \omega^{\text{ell}}_{\text{TZ},j}$ of Weil-Petersson and Takhtajan-Zograf metrics has a global K\"{a}hler potential on $\schottky_{g,n}(\boldsymbol{m})$ given by $\Gpotential_{\boldsymbol{m}} = S_{\boldsymbol{m}} - \pi \sum_{i=1}^n (m_i - \tfrac{1}{m_i}) \log \Lponetial_{i}$. Theorems~\ref{mainthrm1} and \ref{mainthrm2} constitute the main findings of this paper and are related to the first and second variations of $\Gpotential_{\boldsymbol{m}}$. In Section \ref{Discussion}, we will provide a brief overview of some implications of our findings and discuss potential pathways for future research. Appendix~\ref{Apx:orbifoldbackground} offers some mathematical background regarding orbifold Riemann surfaces, while Appendix~\ref{Apx:geometricorbifolds} delves into geometric structures on such orbisurfaces. Moreover, Appendix~\ref{Apx:Asymptotics} outlines the derivation of various asymptotic behaviors that are used throughout the main body of this manuscript. Finally, for the convenience of the reader, a list of symbols used throughout this text is also presented in Appendix~\ref{symbol}.

\section{Classical Liouville Action and Uniformization Theory}\label{Action}
In this section, we will discuss the semi-classical limit of quantum Liouville field theory on hyperbolic Riemann orbisurfaces and its connection with the uniformization theory. Let $X$ be a compact Riemann surface of genus $g > 1$. In the so-called geometric approach to quantum Liouville theory, developed by Takhtajan in \cite{Takhtajan:1993vt,Takhtajan:1994zi,Takhtajan:1994vt,Takhtajan:1995fd} based on the original proposal by Polyakov \cite{Polyakov82},\footnote{See references  \cite{Matone:1993tj,Matone:1993nf,Menotti:2004uq,Bertoldi:2004yk,Hadasz:2005gk} for details regarding the relation between geometric and standard approaches to Liouville CFT.} the un-normalized correlation functions of Liouville vertex operators with ``charges'' $\alpha_i$ on the Riemann surface $X$ are defined by (we set $\hbar=1$)
\begin{equation}\label{LiouvilleCorrelationFunc}
\expval{V_{\alpha_{1}}(x_1) \dotsm V_{\alpha_{n}}(x_n)} = \int\limits_{\mathscr{CM}_{\boldsymbol{\alpha}}(X)} \hspace*{-15pt}\mathcal{D}\psi \,\, e^{-\frac{1}{2 \pi} \Gpotential_{\boldsymbol{\alpha}}[\psi]},
\end{equation}
where $\mathscr{CM}_{\boldsymbol{\alpha}}(X)$ is the space of all smooth conformal metrics $e^{\psi(u,\bar{u})} |du|^2$ on $X\backslash\{x_1,\dots,x_n\}$ which have conical singularities of angles $2\pi(1- \alpha_i)$ at the insertion points, and $\Gpotential_{\boldsymbol{\alpha}}[\psi(u,\bar{u})]$ is the Liouville action functional which we intend to study. 

When $\alpha_i$s take the special values $1-\tfrac{1}{m_i}$ for integers $m_i \geq 2$, vertex operators $V_{\alpha_i}$ correspond to orbifold points (or branch points) of orders $m_i$; moreover, when $m_i$ is the symbol $\infty$, $V_{\alpha_i}$ corresponds to a cusp. Then, the correlation function $\expval{V_{m_1}(x_1) \dotsm V_{m_n}(x_n)}$ of these twist-like vertex operators can be interpreted as the partition function of LFT on a (possibly punctured) Riemann orbisurface $O$ of genus $g > 1$. In this viewpoint, $X$ plays the role of compactified \emph{underlying Riemann surface} of $O$, denoted by $\hat{X}_O$, and the collection of insertion points $\sing(O):= \{x_i\}_{_{i=1,\dots,n}} \subset X$ will be called the \emph{singular set} of $O$. Then, the Riemann orbisurface $O$ can be characterized as the triple $(X,\sing(O),\textcolor{black}{\nu})$ where $\nu: \sing(O) \to \orderrange :=\big(\mathbb{N}\backslash\{1\}\big) \cup \{\infty\}$ is the so-called \emph{branching function} that assigns to each singular point $x_i$ its corresponding \emph{branching order} $m_i \in \orderrange$ for $i=1,\dots,n$. Now, let $X_O := \hat{X}_O \backslash \{x_i \, | \, m_i = \infty\}$ be the underlying Riemann surface of $O$. A Riemann orbisurface $O$ can be equivalently characterized as a pair $(X_O,\brdiv)$ where the so-called \emph{branch divisor}
\begin{equation}
\brdiv:=\sum_{\{x_i \, |\, m_i < \infty\}} \left(1-\frac{1}{\nu(x_i)}\right) x_i,
\end{equation}
is a $\mathbb{Q}$-divisor on $X_O$ (see Section \ref{sect:orbifolddef} for more details).\footnote{By a \emph{$\mathbb{Q}$-divisor} on a Riemann surface $X$, we simply mean a formal linear combination of points on $X$ with rational coefficients.} When $O$ has cusps, i.e. branch points of order $m_i = \infty$, we will denote its compactification by $\hat{O} := (\hat{X}_O,\hat{\brdiv})$ where \begin{equation}
	\hat{\brdiv}:= \brdiv + \sum_{\{x_i \, |\, m_i = \infty\}} x_i.
\end{equation}
An orbifold Riemann surface $O$ with $g>1$ handles, $n_e \geq 3$ conical points of orders $2 \leq m_1 \leq \dotsm \leq m_{n_e} < \infty$, and $n_p \geq 0$ punctures is said to have the signature $\left(g;m_1,\dots, m_{n_e};n_p\right)$.

Next, let $O$ be an orbifold Riemann surface with signature $(g;m_1,\dots, m_{n_e};n_p)$ and let $\big\{(U_{a} , u_{a})\big\}_{a \in A}$ be a complex-analytic atlas on $X_O$ with charts $U_a$, local coordinates $u_a : U_a \to \cmpx$, and transition functions $g_{ab} : u_b (U_a \cap U_b) \to u_a (U_a \cap U_b)$. Denote by $\mathscr{CM}(O)$ the space of singular conformal metrics on $X_O$ representing $\brdiv$. If $X_O^{\text{reg}}:= \hat{X}_O \backslash \sing(O)$ denotes the so-called \emph{regular locus} of $O$, every such metric $ds^2 \in \mathscr{CM}(O)$ is given by a collection $\left\{e^{\psi_{a}} |du_a|^2\right\}_{a \in A}$, where the functions $\psi_a \in \mathcal{C}^{\infty}\big(U_a \cap X_O^{\text{reg}},\mathbb{R}\big)$ satisfy
\begin{equation}\label{phitransfrom}
\psi_{a} \circ g_{ab} + \log |g'_{ab}|^2 = \psi_{b} \quad \text{on} \quad U_a \cap U_b \cap X_O^{\text{reg}},
\end{equation}
and near each singular point $x_i \in U_a$, $e^{\psi_a}$ has the form
\begin{equation}\label{metricasymp}
e^{\psi_a} \simeq \left\{
\begin{split}
& \frac{4 \, m_i^{-2} |u_a-x_i|^{\frac{2}{m_i}-2}}{\left(1-|u_a-x_i|^{\frac{2}{m_i}}\right)^2} \qquad &\text{for} \qquad &m_i < \infty,\\ \\
& \frac{1}{|u_a-x_i|^2 \big(\log |u_a-x_i|\big)^2} & \text{for}\qquad & m_i = \infty,
\end{split}
\right.
\end{equation}
as $u_a \to x_i$. According to the classical results of Picard \cite{picard1893equation,picard1905integration} and the more recent work of McOwen \cite{McOwen-1988} and Troyanov \cite{troyanov1991prescribing}, when the Euler characteristic $\chi(O) := \chi(X_O) - \deg(\brdiv) = 2-2g -n_p -\sum_{i=1}^{n_e} (1-1/m_i)$ is negative --- i.e. when $O$ is hyperbolic --- there exists a unique singular conformal metric, called the \emph{hyperbolic metric}, on $X_O$ which represents the branch divisor $\brdiv$ and has constant Gaussian curvature $-1$ everywhere on $X_O^{\text{reg}}$. If we denote this unique hyperbolic metric by $ds_{\text{hyp}}^2 := \left\{e^{\varphi_a} |du_a|^2\right\}_{a \in A}$ and assuming that each open subset $U_a \subset X_O$ include at most one singular point $x_i$ of order $m_i$, the corresponding function $\varphi_a$ on $U_a$ satisfies the so-called \emph{Liouville equation} (see, e.g. \cite{Matone:1993nf})\footnote{If the cone point $x_i$ is fixed to be at infinity, instead of the last term on the right hand side of \eqref{Eq:Liouville}, we have $\pi (1+\frac{1}{m_i})\delta(u_a-x_i)$.}
\begin{equation}\label{Eq:Liouville}
\partial_{u_a} \partial_{\bar{u}_a} \varphi_a = \frac{1}{2} e^{\varphi_a} - \pi \left(1-\frac{1}{m_i}\right) \delta(u_a - x_i),
\end{equation}
which is equivalent to $ds_{\text{hyp}}^2$ having constant curvature $-1$ on $X_{O}^{\text{reg}}$ and satisfying asymptotics \eqref{metricasymp} near each singular point. 

The problem is now to define (suitably regularized) Liouville action functional on the Riemann orbisurface $O$ --- a functional $\Gpotential_{\boldsymbol{m}}: \mathscr{CM}(O) \to \mathbb{R}$ such that its Euler--Lagrange equation is the Liouville equation. However, it is well-known (see the discussion in \cite{Takhtajan:2002cc}) that a general mathematical definition of the Liouville action functional on a Riemann orbisurface $O$ of genus $g>1$ is a non-trivial problem. This is due to the fact that the classical Liouville field $\varphi$ is not a globally defined function on $O$ but rather a logarithm of the conformal factor of the metric. More concretely, due to transformation law \eqref{phitransfrom}, local kinetic terms $|\partial_{u_a} \varphi_a|^2 du_a \wedge d\bar{u}_a$ do not glue properly on $U_a \cap U_b \cap X_O^{\text{reg}}$ and thus can not be integrated over $X_O^{\text{reg}}$. This means that ``naive'' Dirichlet' type functional is not well defined and can not serve as an action for the Liouville theory (it also diverges at the singular points). In other words, the Liouville action functional cannot be defined in terms of a Riemann orbisurface $O$ alone and additionally depends on the choice of a \emph{global coordinate} on $X_O^{\text{reg}}$ --- a representation $X_O^{\text{reg}} \cong \Omega_K/K$, where $K$ is a Kleinian group with an invariant component $\Omega \subset \cmpx$ and $\Omega_K :=\Omega\backslash\{\text{fixed points}\}$. As we will see in the next subsection, for our purposes, it will be sufficient to consider the case when $K$ is either a Fuchsian group $\Gamma$ or a Schottky group $\Sigma$.

Finally, let us mention that once the action functional is defined, we can define the partition function $\expval{O}$, or rather its \emph{free energy} $\mathbb{F}_{O} := - \log \expval{O}$, using the perturbative expansion\footnote{In order to write this perturbative expansion, we have to temporarily restore $\hbar$ in \eqref{LiouvilleCorrelationFunc}.}
\begin{equation}
\mathbb{F}_{O} = -\log\expval{O} = \frac{1}{2 \pi \hbar} \Gpotential_{\boldsymbol{m}}[\varphi] + \frac{1}{2} \log\det(\Delta_0+\frac{1}{2}) + \order{\hbar},
\end{equation}
of the r.h.s of \eqref{LiouvilleCorrelationFunc} around the classical solution $\varphi$, where $\Gpotential_{\boldsymbol{m}}[\varphi]: \deformspace(K)/\sim \, \to \mathbb{R}$ (see Section \ref{geometry}) 
is the regularized classical Liouville action and $\Delta_0$ is the \emph{Laplace operator} of the hyperbolic metric acting on functions on $O$. In the bulk of this paper, we will only concern the classical contribution (i.e. contributions of order $\hbar^{-1}$) to the free energy $\mathcal{F}_{O}$, which is given by the classical Liouville action $\Gpotential_{\boldsymbol{m}}$.
\subsection{Fuchsian and Schottky Uniformizations}
Consider a hyperbolic orbifold Riemann surface $O=(X_O,\brdiv)$ with signature $(g;m_1,\dots, m_{n_e}$
$;n_p)$ and fix a base point $x_{\ast} \in X_O^{\text{reg}}$. Let us choose a standard system of generators for the orbifold fundamental group 
\begin{equation}
\begin{split}	
&\pi_1(O,x_{\ast}) =
\bigg{\langle} \mathsf{A}_1, \mathsf{B}_1,\dots, \mathsf{A}_g, \mathsf{B}_g, \mathsf{C}_1, \dots, \mathsf{C}_{n_e}, \mathsf{P}_1, \dots, \mathsf{P}_{n_p} \,\bigg|\,\\
&\hspace{5cm} \mathsf{C}_1^{m_1} = \dotsm = \mathsf{C}_{n_e}^{m_{n_e}} = \prod_{i=1}^{g}[\mathsf{A}_i,\mathsf{B}_i] \prod_{j=1}^{n_e} \mathsf{C}_j \prod_{k=1}^{n_p} \mathsf{P}_k = \operatorname{id}\bigg{\rangle},
\end{split}
\end{equation}
where $\mathsf{A}_i$s, $\mathsf{B}_i$s, $\mathsf{C}_j$s, and $\mathsf{P}_k$s are homotopy classes of loops based at $x_{\ast}$ and $[\mathsf{A}_i,\mathsf{B}_i]:=\mathsf{A}_i \mathsf{B}_i \mathsf{A}_i^{-1} \mathsf{B}_i^{-1}$; see Definition \ref{def:orbifoldfundgroup} and the discussion following it for more details. The Riemann orbisurface $O$ with a distinguished system of generators for its fundamental group $\pi_1(O,x_{\ast})$, up to inner automorphisms of $\pi_1(O,x_{\ast})$, will be called a \emph{marked Riemann orbisurface} (see Figure \ref{fig:fundamentalgroup}).
\begin{figure}
	\centering
	\includegraphics[width=0.6\linewidth]{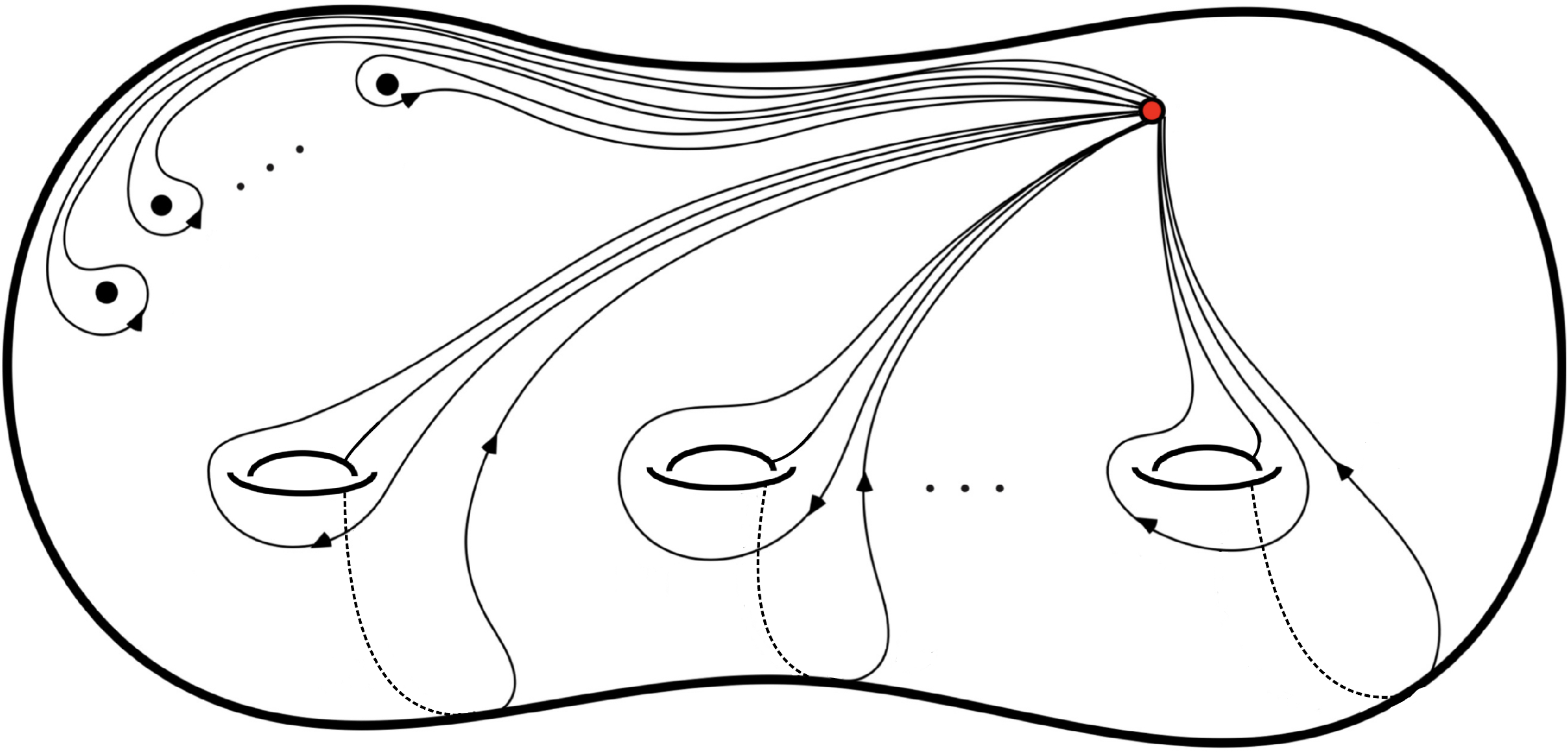}
	\put(-236,70){\rotatebox{0}{\fontsize{9}{9}\selectfont{} $\mathsf{C}_1$}}
	\put(-227,83){\rotatebox{0}{\fontsize{9}{9}\selectfont{} $\mathsf{C}_2$}}
	\put(-200,98){\rotatebox{0}{\fontsize{9}{9}\selectfont{} $\mathsf{P}_{n_p}$}}
	\put(-230,30){\rotatebox{0}{\fontsize{9}{9}\selectfont{} $\mathsf{B}_1$}}
	\put(-161,28){\rotatebox{0}{\fontsize{9}{9}\selectfont{} $\mathsf{B}_2$}}
	\put(-80,30){\rotatebox{0}{\fontsize{9}{9}\selectfont{} $\mathsf{B}_g$}}
	\put(-30,35){\rotatebox{0}{\fontsize{9}{9}\selectfont{} $\mathsf{A}_g$}}
	\put(-113,16){\rotatebox{0}{\fontsize{9}{9}\selectfont{} $\mathsf{A}_2$}}
	\put(-190,15){\rotatebox{0}{\fontsize{9}{9}\selectfont{} $\mathsf{A}_1$}}
	\put(-66,105){\rotatebox{0}{\fontsize{9}{9}\selectfont{} $x_{\ast}$}}
	\caption{\emph{Marked Riemann orbisurface}. A marked Riemann orbisurface with signature $(g;m_1,\dots,m_{n_e};n_p)$ is an orbifold Riemann surface $O$ together with a distinguished set of standard generators $\{\mathsf{A}_1, \dots, \mathsf{A}_g ; \mathsf{B}_1, \dots, \mathsf{B}_g ; \mathsf{C}_1, \dots, \mathsf{C}_{n_e} ; \mathsf{P}_1, \dots, \mathsf{P}_{n_p}\}$ for its fundamental group $\pi_1(O,x_{\ast})$.}\label{fig:fundamentalgroup}
\end{figure}

As a result of the Theorems~\ref{thm:universalcovering} and \ref{theorem:geometricorbifolds}, all hyperbolic Riemann orbisurfaces are \emph{developable} (or \emph{good} in Thurston's language) and hence can be realized as a global quotient $[\UHP/\Gamma]$ where $\UHP:=\left\{z \in \cmpx \, \big| \, \Im z >0 \right\}$ is the upper half-plane and $\Gamma \subset \PSLR$ is a Fuchsian group of the first kind\footnote {A Fuchsian group is said to be of \emph{first kind} if its limit set is the closed real line $\mathbb{R} \cup \{\infty\}$. Otherwise, a Fuchsian group is said to be of the second kind.} with signature $\left(g;m_1,\dots, m_{n_e};n_p\right)$; this is a direct consequence of the usual uniformization theorem for ordinary Riemann surfaces. The holomorphic orbifold covering map $\pi_{\Gamma}: \UHP \to O$ provides a Riemann orbisurface $O$ with the \emph{Fuchsian global coordinate}, and the hyperbolic metric $ds_{\text{hyp}}^2 \in \mathscr{CM}(O)$ is a push-forward of the Poincar\'{e} metric $(\Im z)^{-2} |dz|^2$ on $\UHP$ by the covering map $\pi_{\Gamma}$. From this point of view, $\Gamma$ can be thought of as the fundamental group of the Riemann orbisurface $O \cong [\UHP/\Gamma]$, and the group isomorphism $\Gamma \simeq \pi_1(O)$ can be viewed as being induced by the holonomy representation $\hol: \pi_1(O) \to \PSLR$ of the \emph{orbifold hyperbolic structure}.\footnote{Note that, in the language of orbifold $(G,\mathbb{X})$-structures introduced in Appx.\ref{Apx:geometricorbifolds}, orbifold hyperbolic structures are, in fact, $\big(\PSLR,\UHP\big)$-structures.} 

Such a Fuchsian group $\Gamma$ has a standard presentation, corresponding to the standard generators of $\pi_1(O)$ discussed above, which includes $2g$ hyperbolic generators $\alpha_1, \beta_1, \dots, \alpha_g, \beta_g$, $n_e$ elliptic generators $\tau_1, \dots, \tau_{n_e}$ of orders $m_1, \dots, m_{n_e}$, and $n_p$ parabolic generators $\kappa_1, \dots, \kappa_{n_p}$.\footnote{A non-identity element $\gamma \in \Gamma$ is called \emph{hyperbolic}, \emph{parabolic}, or \emph{elliptic} if $\gamma$ is conjugated in $\PSLR$ to a \emph{dilation}, \emph{horizontal translation}, or \emph{rotation} respectively. This is analogous to $|\tr(\gamma)|$ being greater than, equal, or less than 2, respectively.} Obviously, the generators of $\Gamma$ also satisfy
\begin{equation}
\prod_{i=1}^{g} \left[\alpha_i, \beta_i\right]  \prod_{j=1}^{n_e} \tau_j \prod_{k=1}^{n_p} \kappa_k = \mathbbm{1} \quad \text{and} \quad \tau_j^{m_j} =  \mathbbm{1} \quad (j=1,\dots,n_e),
\end{equation}
where $[\alpha_i, \beta_i]:=\alpha_i \beta_i \alpha_i^{-1} \beta_i^{-1}$ and $\mathbbm{1}$ is the identity element of $\PSLR$. The Fuchsian group $\Gamma$, together with a distinguished system of generators
\begin{equation}
\big\{\alpha_1, \dots, \alpha_g ; \beta_1, \dots, \beta_g ; \tau_1, \dots, \tau_{n_e} ; \kappa_1, \dots, \kappa_{n_p}\big\},
\end{equation}
is called the \emph{marked Fuchsian group} corresponding to the Riemann orbisurface $O \cong [\UHP/\Gamma]$. 

The elliptic elements of $\Gamma$ will have fixed points in $\UHP$ and are denoted by $z^{{}_e}_{1},\dots,z^{{}_e}_{n_e}$ while the fixed points of the parabolic elements lie in $\partial\UHP = \mathbb{R} \cup \{\infty\}$ and will be denoted by $z^{{}_p}_{1},\dots,z^{{}_p}_{n_p}$. The images of these elliptic and parabolic fixed points under the projection $\UHP \to O \cong [\UHP/\Gamma]$ will be the conical points $x^{{}_c}_1,\dots,x^{{}_c}_{n_e}$ and punctures $x^{{}_p}_{1},\dots,x^{{}_p}_{n_p}$ of $O$, respectively. For our future convenience, let us also introduce $\sing_m(O):=\nu^{-1}(m)$ for all $m \in \orderrange$. Note that $\bigsqcup_{m \in \orderrange} \sing_m(O)$ gives a canonical stratification of the singular set $\sing(O) \equiv \operatorname{Supp}(\hat{\brdiv})$, and each $\sing_m(O)$ for $m \neq \infty$ represents the stratum of conical points with stabilizer group $\mathbb{Z}_m$. In addition, we will denote by $\sing_{\infty}(O):=\nu^{-1}(\infty)$ and $\sing_{\curlywedge}(O):= \bigsqcup_{m \neq \infty} \sing_m(O)$ the subset of cusps and conical points in $\sing(O)$ respectively. Finally, following \cite{Bers1975DeformationsAM}, we will define the \emph{signature type} of $O$ as the unordered set $\sigtype := \{\s_m\}_{m \in \orderrange}$ where $\s_m:=\big|\sing_m(O)\big|$ denotes the cardinality of the stratum of singular points of order $m$. In particular, we have $\s_{\infty} = \big|\sing_{\infty}\big|=n_p$ and $\sum_{m\in \orderrange} \s_m =\big|\sing(O)\big| = n_e + n_p $.

\begin{remark}
	Sometimes, when we need to refer to singular points (or fixed points) collectively, we will denote them by $x_1,\dots,x_n$ (respectively $z_1,\dots,z_n$) where $n=n_e + n_p$ is the total number of singular points (or fixed points) and the indices are ordered such that the corresponding orders of isotropy increase $2 \leq m_1 \leq m_2 \leq \dots \leq m_n \leq \infty$. In this situation, the vector of orders $(m_1,\dots,m_n)$ will be denoted by $\boldsymbol{m}$. Note that with this convention, the first $n_e$ singular points $x_1,\dots,x_{n_e}$ will always correspond to conical points $x^{{}_c}_1,\dots,x^{{}_c}_{n_e}$ while the remaining $n_p \geq 0$ singular points $x_{_{n_e+1}},\dots,x_{n}$  will correspond to the punctures $x^{{}_p}_{1},\dots,x^{{}_p}_{n_p}$ of $O$.
\end{remark}

Similar to Fuchsian groups, Schottky groups can also be used to construct Riemann orbisurfaces. We begin with a few definitions: A \emph{Kleinian group} $K$ is a discrete subgroup of the M\"{o}bius group $\PSLC$ that acts properly discontinuously on a subset $\Omega \subset \hat{\cmpx}$ called the \emph{region of discontinuity} of $K$. The complement $\Lambda = \hat{\cmpx} \backslash \Omega$ is called the \emph{limit set} of $K$. In this work, we are particularly interested in Kleinian groups that are free, finitely generated, and strictly loxodromic; such Kleinian groups are called \emph{Schottky groups} and will be denoted by $\Sigma$. It is well-known that for a Schottky group $\Sigma$ of rank $g$, the limit set $\Lambda$ is a Cantor set\footnote{For more details on the geometry of limit sets see Ref.~\cite{Seade_2015}.} and the region of discontinuity $\Omega = \hat{\cmpx}\backslash\Lambda$ is a dense connected subset of $\hat{\cmpx}$ such that the Schottky group $\Sigma$ acts on $\Omega$ freely, and the quotient space $\Omega/\Sigma$ is a closed Riemann surface $X$ of genus $g$; this is called a \emph{Schottky uniformization} of $X$ and, as a consequence of the retrosection theorem \cite{koebe1910uniformisierung} (see also \cite{bers1976automorphic,ford2004automorphic}), every closed Riemann surface has such a uniformization. 

Now, let us consider uniformization of the compactified underlying Riemann surface $\hat{X}_O$ by a Schottky group $\Sigma$. If $\Omega$ denotes the region of discontinuity of $\Sigma$, we can subtract from it the pre-images of cusps by the covering map $\Omega \to \hat{X}_O$ to get another planar region $\Omega_0$. The space $\Omega_0$ will uniformize the underlying Riemann surface $X_O$ and we will denote the corresponding covering map $\Omega_0 \to X_O \cong \Omega_0/\Sigma$ by $\pi_0$. Next, we can lift the branch divisor $\brdiv$ by the covering map $\pi_{0}: \Omega_0 \to X_O$ to get another branch divisor 
\begin{equation}
	\widetilde{\brdiv} := \sum_{w_i \in \pi_{0}^{-1}(\sing_{\curlywedge}(O))} \left(1-\frac{1}{\nu\left(\pi_{\Sigma}(w_i)\right)}\right) w_i,
\end{equation}
which lives on the planar region $\Omega_0$. Then, the pair $(\Omega_0,\widetilde{\brdiv})$ will define a planar Riemann orbisurface $\singrigon$ such that $\pi_{\Sigma}:\singrigon \to O \cong \singrigon/\Sigma$ is an orbifold covering map (see \cite{Wong-1971} for more details). In addition, note that the restriction of $\pi_0$ to $\Omega^{\text{reg}} := \Omega_0 \backslash \operatorname{Supp}\widetilde{\brdiv}$ provides $X_O^{\text{reg}}$ with the \emph{Schottky global coordinate} such that the space $\mathscr{CM}(O)$ is identified with the affine subspace of $\mathcal{C}^{\infty}(\Omega^{\text{reg}}, \mathbb{R})$ consisting of functions $\psi$ satisfying the condition
\begin{equation}\label{Schottkycovariant}
	\psi \circ \sigma + \log|\sigma'|^2 = \psi \quad \text{for all} \quad \sigma \in \Sigma,
\end{equation}
and representing $\widetilde{\brdiv}$.

Let us now define a \emph{marked Schottky group} as a Schottky group $\Sigma$ of rank $g$ together with a choice of distinguished relation-free system of generators $L_1, \dots, L_g$ for it. In fact, a choice of marking for the Fuchsian group $\Gamma$ uniquely determines a marked Schottky group $(\Sigma;L_1,\dots,L_g)$. If $N$ is the smallest normal subgroup of $\Gamma$ containing $\{\alpha_1, \dots, \alpha_g ,\tau_1, \dots, \tau_{n_e},\kappa_1, \dots, \kappa_{n_p}\}$ then $\Sigma$ is isomorphic to the quotient group $\Gamma/N$. There is also a notion of equivalence between two marked Schottky groups: $(\Sigma; L_1, \dots, L_g)$ is said to be equivalent to $(\Sigma'; L'_1, \dots, L'_g)$ if and only if there exists a M\"{o}bius transformation $\varsigma \in \PSLC$ such that $L'_i = \varsigma  L_i \varsigma^{-1}$ for all $i=1,\dots,g$. Then, the \emph{Schottky space} $\schottky_{g}$ can be defined as the space of equivalence classes of marked Schottky groups of genus $g$. Now, we can introduce the generalized Schottky space $\schottky_{g,n}(\boldsymbol{m})$ of Riemann orbisurfaces, both with and without punctures. It is regarded as a holomorphic fibration $\jmath : \schottky_{g,n}(\boldsymbol{m}) \to \schottky_{g}$, where the fibers represent configuration spaces of $n$ labeled points.\footnote{For details, see Section \ref{subsec:Schottkyspace}.} Denote by $L_1, \dots, L_g$ the system of generators in $\Sigma$ corresponding to the cosets $\beta_{1}N,\dots,\beta_{g}N$ in $\Gamma$. Normalizing the marked Schottky group $(\Sigma; L_1, \dots, L_g)$,\footnote{By \emph{normalizing}, we mean using the equivalence notion of marked Schottky groups to set the attracting fixed points of generators $L_1$ and $L_2$ as well as the repelling fixed-point of generator $L_1$ equal to 0, 1, and $\infty$ respectively; see Section \ref{subsec:Schottkyspace} for more details.} we thereby associate with each marked Schottky group (equivalently, with each $O \cong \singrigon/\Sigma$) a point in the generalized Schottky space $\schottky_{g,n}(\boldsymbol{m})$.

The Schottky uniformization of an orbisurface $O$ is connected with the Fuchsian uniformization of it by the following commutative diagram
\begin{equation}\label{globalcoords}
\begin{tikzcd}[sep=large]
\UHP^{} \ar[r, "J"] \ar[rd, "\pi_{\Gamma}" '] 
&\singrigon \ar[d,"\pi_{\Sigma}"] \\
& O
\end{tikzcd}
\end{equation}
where each of the mappings is a holomorphic orbifold covering (complex-analytic covering). The normal subgroup $N$ of $\Gamma$ corresponds to the group of deck transformations of the covering $J$. A deck transformation is a homeomorphism $\text{deck}: \mathbb{H}\rightarrow \mathbb{H}$, such that the diagram of the maps
\begin{equation*}
\begin{tikzcd}[sep=large]
\UHP \ar[rr, "\text{deck}"] \ar[rd, "J" '] &
&\mathbb{H} \ar[ld,"J"] \\
& \singrigon &
\end{tikzcd}
\end{equation*}	
commutes. The set of deck transformations forms a group which is called the automorphism group of covering map, $\text{Aut}(J)$ (See Section \ref{covering}). Accordingly, the mapping $J$ can be regarded as a (meromorphic) function on $\UHP$, which is automorphic with respect to $N$ --- i.e. $J \circ N = J$. Moreover, $J \circ \beta_i = L_i \circ J$ for all $i=1,\dots,g$.
\subsection{Projective Connections and Energy-Momentum Tensor}  
Let $O=(X_O,\brdiv)$ be a hyperbolic Riemann orbisurface with signature $\left(g;m_1,\dots, m_{n_e};n_p\right)$, and let $\big\{(U_a, u_a)\big\}$ be a complex-analytic atlas on the underlying Riemann surface $X_O$ with local coordinates $u_a: U_a \to \cmpx$ and transition functions $u_a = g_{ab} \circ u_b$ on overlaps $U_a \cap U_b$. A (meromorphic) \emph{projective connection} on $O$ is a collection $R=\{r_a\}_{a \in A}$ of holomorphic functions $r_a$ defined on each $U_a \cap X_O^{\text{reg}}$ that satisfy
\begin{equation}\label{projconn}
r_b = r_a \circ g_{ab} \, (g'_{ab})^2 + \text{Sch}\left(g_{ab};u_b\right),
\end{equation}
on every intersection $U_a \cap U_b \cap X_O^{\text{reg}}$ and are compatible with $\brdiv$ --- i.e. if $x_i \in U_a \cap \sing(O)$ and $u_a(x_i) = 0$, we have
\begin{equation}\label{projconnasym}
r_a(u_a) = \frac{1-1/m_i^2}{2 u_a^2} + \order{|u_a|^{-1}} \quad \text{as} \quad u_a \to 0.
\end{equation}
In the above definition of projective connections, $\text{Sch}\left(f;z\right)$ denotes the \emph{Schwarzian derivative}
\begin{equation}\label{Sch}
\text{Sch}\left(f;z\right):= \frac{f'''}{f'} - \frac{3}{2} \left(\frac{f''}{f'}\right)^2,
\end{equation}
of a holomorphic function $f(z)$ and can be intuitively viewed as measuring the failure of $f(z)$ to be the restriction of a M\"{o}bius transformation. Such meromorphic projective connections are in one-to-one correspondence with $\CPl$-structures on $O$ (see Section \ref{sec:projconn} for more details).
\begin{remark}
	We note that the coefficient $h_i := 1-1/m_i^2$ of the leading singular term in the asymptotic behavior \eqref{projconnasym} of projective connections near each singular point $x_i \in \sing(O)$ does \emph{not} depend on the choice of chart or complex coordinate $u$.
\end{remark}
The above remark means that the difference between two projective connections is a (meromorphic) \emph{quadratic differential} on $O$ with only simple poles --- i.e. a collection $Q=\{q_a\}_{a \in A}$ of holomorphic functions on each open subset $U_a \cap X_O^{\text{reg}}$ with the transformation law
\begin{equation}
q_b  = q_a \circ g_{ab} (g'_{ab})^2,
\end{equation}
and the asymptotic behavior $q_a(u_a) = \order{|u_a|^{-1}}$ near each singular point $x_i \in U_a \cap \, \sing(O)$ with $u_a(x_i)=0$. Conversely, we can add a meromorphic quadratic differential to a given projective connection to obtain a new projective connection. Since we know that each Riemann orbisurface has at least one $\CPl$-structure, i.e. the one given by Poincar\'{e}-Koebe uniformization, we have the following (see \cite{biswas2006orbifold}): 
\begin{proposition}[Biswas]\label{Biswas}
	The space of all $\CPl$-structures on $O$, denoted by $\mathcal{P}(O)$, is an affine space for the vector space of all meromorphic quadratic differentials on $O$ with at most simple poles at singularities.
\end{proposition}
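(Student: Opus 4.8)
The plan is to use the one-to-one correspondence between $\mathbb{CP}^1$-structures on $O$ and meromorphic projective connections compatible with $\brdiv$, and then to verify directly the two axioms of an affine space: that the vector space $Q(O)$ of meromorphic quadratic differentials on $O$ with at most simple poles at the singular points acts on $\mathcal{P}(O)$ freely and transitively. Since we already know from the Poincar\'e--Koebe uniformization that $O$ admits at least one $\mathbb{CP}^1$-structure, the set $\mathcal{P}(O)$ is non-empty; I would fix any projective connection $R=\{r_a\}$ as a base point.

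First I would show that the difference of two projective connections lies in $Q(O)$. Given projective connections $R=\{r_a\}$ and $R'=\{r'_a\}$, I subtract the cocycle relation \eqref{projconn} written for each: the inhomogeneous term $\text{Sch}\left(g_{ab};u_b\right)$ is common to both and cancels, so the collection $\{r_a - r'_a\}$ obeys the homogeneous transformation law $q_b = (q_a\circ g_{ab})\,(g'_{ab})^2$ of a quadratic differential. To control the behavior at singular points, I invoke the Remark preceding the proposition: the leading coefficient $h_i = 1 - 1/m_i^2$ in \eqref{projconnasym} is independent of the choice of projective connection, so near each $x_i$ the singular parts of $r_a$ and $r'_a$ cancel, leaving $r_a - r'_a = \order{|u_a|^{-1}}$. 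Hence $R - R' \in Q(O)$, which both yields the subtraction map $\mathcal{P}(O)\times\mathcal{P}(O)\to Q(O)$ and, once the action below is defined, establishes transitivity.

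Conversely, I would check that $Q(O)$ acts on $\mathcal{P}(O)$ by addition. Given a projective connection $R=\{r_a\}$ and a quadratic differential $Q=\{q_a\}\in Q(O)$, the collection $\{r_a+q_a\}$ still satisfies \eqref{projconn}, since adding the homogeneously-transforming $q_a$ leaves the Schwarzian cocycle untouched; moreover, because each $q_a=\order{|u_a|^{-1}}$, the leading singular coefficient $h_i/(2u_a^2)$ required by \eqref{projconnasym} is preserved, so $R+Q$ is again a bona fide projective connection compatible with $\brdiv$. This defines the action $(R,Q)\mapsto R+Q$. Freeness is immediate: $R+Q=R$ forces $q_a\equiv 0$ for all $a$, i.e. $Q=0$. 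Combined with transitivity from the previous step, this exhibits $\mathcal{P}(O)$ as a torsor over $Q(O)$, which is precisely the assertion that $\mathcal{P}(O)$ is an affine space modeled on the vector space of meromorphic quadratic differentials with at most simple poles at the singularities.

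The computations here are entirely routine; the only point requiring care is the analysis at the singular points, namely confirming that the fixed leading coefficient $h_i$ is a connection-independent invariant and that the class of quadratic differentials with at most simple poles is exactly what is preserved by the affine action. This is the content of the Remark, and it is the step I would treat most carefully, since it is what distinguishes the orbifold statement from the classical smooth case, where the model space is instead the \emph{holomorphic} quadratic differentials.
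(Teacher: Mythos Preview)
Your proposal is correct and follows essentially the same approach as the paper. The paper's argument is contained in the paragraph immediately preceding the proposition: it observes that the difference of two projective connections is a meromorphic quadratic differential with at most simple poles (using the remark that $h_i$ is connection-independent), that conversely one can add such a quadratic differential to a projective connection to obtain another, and that non-emptiness follows from Poincar\'e--Koebe uniformization; you have simply made the affine-space axioms (freeness and transitivity) more explicit than the paper does.
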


These meromorphic projective connections on $O$ have the following physical interpretation: For the hyperbolic metric $ds_{\text{hyp}}^{2} =\big\{e^{\varphi_a(u_a,\bar{u}_a)}|du_{a}|^{2}\big\}_{a \in A}$ on $O$, let us define the following functions on each open subset $U_{a}$:
\begin{equation} \label{STE}
T_{a} =  \partial^{2}_{u_{a}}\varphi_{a} -\tfrac{1}{2}(\partial_{u_{a}}\varphi_{a})^{2}
\quad\text{and}\quad 
\bar{T}_{a} =\partial^{2}_{\bar{u}_{a}}\varphi_{a} -\tfrac{1}{2}(\partial_{\bar{u}_{a}}\varphi_{a})^{2}.
\end{equation}
The collections $T_{\varphi} =\{T_{a}\}_{a\in A}$ and $\bar{T}_{\varphi} =\{\bar{T}_{a}\}_{a\in A}$ are the $(2,0)$ and $(0,2)$ components of the \emph{classical energy-momentum tensor} on $O$ and are associated with the \emph{quasi-conformal transformations} of the hyperbolic metric (see e.g. \cite[Appx.~B]{Takhtajan:2005md} for more details). In addition, the functions $T_a\big(\varphi_{a}(u_a,\bar{u}_a)\big)$ satisfy the \emph{conservation law}
\begin{equation}
\partial_{\bar{u}_a} T_a\big(\varphi_{a}(u_a,\bar{u}_a)\big) = 0,
\end{equation}
on each open subset $U_a \cap X_{O}^{\text{reg}} \subset X_O^{\text{reg}}$ and, as a result of \eqref{metricasymp}, have the asymptotic behavior
\begin{equation}
T_a(u_a) = \frac{h_i}{2 u_a^2} + \order{|u_a|^{-1}} \quad \text{as} \quad u_a \to 0,
\end{equation}
near each singular point $x_i \in U_a \cap \, \sing(O)$ with $u_a(x_i)=0$ and $h_i=1-1/m_i^2$. The property that functions $T_a(u_a)$ are meromorphic expresses the fact that the energy-momentum tensor for the classical Liouville theory is traceless and the coefficients $h_i/2$ appearing in the above asymptotics have the interpretation of \emph{conformal weights} of Liouville vertex operators corresponding to each singular point \cite{Knizhnik:1987xp}. Finally, it follows from \eqref{phitransfrom} that on every overlap $U_{a}\cap U_{b} \cap X_O^{\text{reg}}$
\begin{equation} \label{proj-conn}
T_{b}=T_{a}\circ g_{ab}\,(g'_{ab})^{2} + \text{Sch}\left(g_{ab};u_b\right),
\end{equation}
which means that, by definition, $T_{\varphi}(u)$ is a meromorphic projective connection on $O$. Since the hyperbolic metric $ds_{\text{hyp}}^{2}$ on $O$ is a push-forward of the Poincar\'{e} metric on $\UHP$ by the covering map $\pi_{\Gamma}: \UHP \rightarrow O$, a simple computation gives $T_{\varphi}(u) =\big\{\text{Sch}\left(\pi_{\Gamma}^{-1};u_a\right)\big\}_{a\in A}$. The multi-valued analytic function  $\pi_{\Gamma}^{-1}: O \to \UHP$ is a locally univalent linear polymorphic function on $O$ (this means that its branches are connected by fractional linear transformations in $\Gamma$) and, using the property $\text{Sch}\left(\varsigma(z);z\right) = 0$ for all $\varsigma \in \PSLC$, as well as the Caley identity
\begin{equation}\label{Caleyid}
\text{Sch}\left(f \circ g;z\right) = \text{Sch}\left(f ;g\right) (g')^2 + \text{Sch}\left(g;z\right),
\end{equation}
it is easy to verify directly that $\text{Sch}\left(\pi_{\Gamma}^{-1};u_a\right)$  are well-defined functions on each subset $U_a \cap X_O^{\text{reg}}$, which satisfy \eqref{proj-conn}. Slightly abusing notations, we will write $T_{\varphi}(u)=\operatorname{Sch}(\pi_{\Gamma}^{-1})$ and call it the \emph{Fuchsian projective connection} on $O$. Similarly, the Schottky global coordinate given by the orbifold covering map $\pi_{\Sigma}: \singrigon \to O \cong \singrigon/\Sigma$, produces the so-called \emph{Schottky projective connection} $\operatorname{Sch}(\pi_{\Sigma}^{-1})$ on $O$. 
\begin{remark}
	While the Fuchsian projective connection is canonically determined by the Riemann orbisurface $O$ and does not depend on the choice of marking for $\Gamma$, the Schottky projective connection is defined only for marked Riemann orbisurfaces and is uniquely determined by the normal subgroup $N \subset \Gamma$ introduced in the previous subsection.
\end{remark}
It follows from the commutative diagram \eqref{globalcoords} and the Caley identity \eqref{Caleyid} for the Schwarzian derivative that
the Fuchsian and Schottky projective connections are related by
\begin{equation}
\text{Sch}\left(\pi_{\Gamma}^{-1};u_a\right) = \text{Sch}\left(J^{-1};w\right) \circ \pi_{\Sigma}^{-1} \,\,(\partial_{u_{a}}\pi_{\Sigma}^{-1})^2 + \text{Sch}\left(\pi_{\Sigma}^{-1};u_a\right) \quad \text{for all} \quad a \in A,
\end{equation}
where $w$ is the global coordinate on $\Omega$. Therefore, the collection
\begin{equation}
\Big\{\text{Sch}\left(J^{-1};w\right) \circ \pi_{\Sigma}^{-1} \, \big(\partial_{u_{a}}\pi_{\Sigma}^{-1}\big)^2 \Big\}_{a \in A}
\end{equation}
is a meromorphic quadratic differential on $O$ and $T_{\varphi}(w) := \text{Sch}\left(J^{-1};w\right)$ is a meromorphic automorphic form of weight four for the Schottky group $\Sigma$ --- i.e. $T_{\varphi}\big(\sigma(w)\big)\, (\sigma')^2 = T_{\varphi}(w)$ for all $\sigma \in \Sigma$.

With the above explanations in mind, we are now ready to give a summary of our main results: Let $\teich_{g,n}(\boldsymbol{m})$ be the Teichm\"{u}ller space of marked Riemann orbisurfaces of genus $g > 1$ defined as the space of all equivalence classes of marked Riemann orbisurfaces with signature $(g;m_1,\dots,m_{n_e};n_p)$.\footnote{For details, see Section \ref{subsec:teichmuller}.} The affine spaces $\mathcal{P}(O)$ for varying Riemann orbisurfaces $O$ glue together to an affine bundle $\mathscr{P}_{g,n}(\boldsymbol{m}) \to \teich_{g,n}(\boldsymbol{m})$, modeled over holomorphic cotangent bundle of $\teich_{g,n}(\boldsymbol{m})$. The Fuchsian projective connection $\operatorname{Sch}(\pi_{\Gamma}^{-1})$ gives a canonical section of the affine bundle $\mathscr{P}_{g,n}(\boldsymbol{m}) \to \teich_{g,n}(\boldsymbol{m})$, while the Schottky projective connection $\operatorname{Sch}(\pi_{\Sigma}^{-1})$ gives a canonical section of the affine bundle $\mathscr{P}_{g,n}(\boldsymbol{m}) \to \schottky_{g,n}(\boldsymbol{m})$. Their difference $\mathscr{Q}:= \operatorname{Sch}(\pi_{\Gamma}^{-1}) - \operatorname{Sch}(\pi_{\Sigma}^{-1})$ can be viewed as a (1,0)-form on the Schottky space $\schottky_{g,n}(\boldsymbol{m})$ and has the following interesting properties (see Theorems~\ref{mainthrm1} and \ref{mainthrm2}): Let us denote by $\omega_{\text{WP}}$ the symplectic form of the WP metric, and by  $\omega_{\text{TZ},i}^{\text{ell}}$, $\omega_{\text{TZ},j}^{\text{cusp}}$, the symplectic forms of $i^{th}$-elliptic, $j^{th}$-cuspidal TZ metrics on generalized Schottky space $\schottky_{g,n}(\boldsymbol{m})$. Additionally, let $\partial$ and $\bar{\partial}$ denote the (1,0) and (0,1) components of the exterior differential $\dd{}$ on the Schottky space $\schottky_{g,n}(\boldsymbol{m})$ --- i.e. $\dd{} = \partial + \bar{\partial}$. Then, we have:
\begin{enumerate}
	\item $\mathscr{Q}$ is $\partial$-exact --- i.e. there exists a smooth function $\Gpotential_{\boldsymbol{m}}: \schottky_{g,n}(\boldsymbol{m}) \to \mathbb{R}$ such that
	\begin{equation}\label{firstvarr}
	\mathscr{Q} = \frac{1}{2} \partial \Gpotential_{\boldsymbol{m}}.
	\end{equation}
	\item $\mathscr{Q}$ is a $\bar{\partial}$-antiderivative (hence, a $\dd{}$-antiderivative) of the following combination of WP and TZ symplectic forms on $\schottky_{g,n}(\boldsymbol{m})$:
	\begin{equation}
	\bar{\partial}\mathscr{Q} = - \sqrt{-1}\left(\omega_{\text{WP}}-\frac{4\pi^2}{3} \omega^{\text{cusp}}_{\text{TZ}} - \frac{\pi}{2} \sum_{i=1}^{n_e} m_i h_i
	\hspace{1mm}\omega^{\text{ell}}_{\text{TZ},i}\right).
	\end{equation}
	Here, $\omega^{\text{cusp}}_{\text{TZ}}= \sum_{j=1}^{n_p}\omega^{\text{cusp}}_{\text{TZ},j}$. 
	\item 
	It follows immediately from the above two statements that the function $-\Gpotential_{\boldsymbol{m}}$ is a K\"{a}hler potential for the special combination of WP and TZ  metrics on $\schottky_{g,n}(\boldsymbol{m})$:
	\begin{equation}\label{secondvarr}
	- \bar{\partial} \partial \Gpotential_{\boldsymbol{m}} = 2\sqrt{-1}\left(\omega_{\text{WP}}-\frac{4\pi^2}{3} \omega^{\text{cusp}}_{\text{TZ}} - \frac{\pi}{2} \sum_{i=1}^{n_e} m_i h_i
	\hspace{1mm}\omega^{\text{ell}}_{\text{TZ},i}\right).
	\end{equation}
\end{enumerate}

Before ending this section, and in order to avoid confusion in the remainder of this manuscript, we feel the need to talk about our notation for coordinate functions on different spaces: In this section, we have used $u_a$ to denote the coordinate function on each open subset $U_a \subset X_{O}$. However, in what follows, we will always use $w$ to denote the coordinate function on $X_{O}^{\text{reg}}$ when $O$ has genus $g=0$.\footnote{Note that, in this situation, $X_O \cong \hat{\cmpx}$ needs to be covered with at least two coordinate charts while $X_{O}^{\text{reg}}\subset \cmpx$ can be covered with only one chart.} When the orbifold Riemann surface $O$ has genus $g>1$, $\{u_a\}_{a \in A}$ denotes the set of coordinate functions on $X_{O}$ while $w$ is used to denote the coordinate function on $\Omega^{\text{reg}} \subset \cmpx$. This notation is meant to be suggestive of the fact that the difference between Schottky and Fuchsian uniformizations of $O \cong \singrigon/\Sigma$ is effectively equivalent to Fuchsian uniformization of the planar orbifold $\singrigon$. Finally, throughout this manuscript, $z$ has always been used to denote the coordinate function on the upper half-plane $\UHP$. 

\section{Geometry of Tiechm\"{u}ller, Moduli, and Schottky Spaces}\label{geometry}
In this section, we will recall some well-known facts about the deformation theory of Ahlfors and Bers. More details can be found in \cite{Ahlfors_quasiconformal_06, Bers_1960,nag1988complex}.
\subsection{Teichm\"{u}ller Space $\teich(\Gamma)$}\label{subsec:teichmuller}
 Let $\Gamma$ be a finitely generated Fuchsian group of the first kind that uniformizes the hyperbolic orbifold Riemann surface $O$ with the signature $(g;m_1, \dots, m_{n_e};n_p)$. In this situation, the Teichm\"uller space of Riemann orbisurfaces can be equivalently described as the Teichm\"uller space $\teich(\Gamma)$ of Fuchsian groups with signature $(g;m_1, \dots, m_{n_e};n_p)$ --- i.e. the space of all equivalence classes of marked Fuchsian groups with signature $(g;m_1, \dots, m_{n_e};n_p)$. 
 
 A \emph{Beltrami differential} for $\Gamma$ is defined as $\mu:= \mu(z) \, \partial_{z} \dd{\bar{z}}$ where $\mu(z)$ is a complex-valued bounded measurable function on $\UHP$ with the property that 
\begin{equation}\label{FuchsianBeltrami}
\mu(\gamma z) \frac{\overline{\gamma'(z)}}{\gamma'(z)} = \mu(z) \qquad \text{for all} \qquad \gamma \in \Gamma \quad \text{and} \quad z \in \UHP.
\end{equation}
We will denote by $\diffspace^{-1,1}(\UHP,\Gamma)$, the complex Banach space of Beltrami differentials for $\Gamma$. Now, let $\deformspace(\Gamma)$ denote the open unit ball in $\diffspace^{-1,1}(\UHP,\Gamma)$, in the sense of the $L^{\infty}$-norm:
\begin{equation}
\deformspace(\Gamma) \equiv \left\{\mu \in \diffspace^{-1,1}(\UHP,\Gamma) \, \Big| \, \norm{\mu}_{\infty} := \sup_{z \in \UHP} |\mu(z)| < 1 \right\}.
\end{equation}
For each $\mu \in \deformspace(\Gamma)$, the \emph{Beltrami equation}
\begin{equation}\label{Beltramieq}
\partial_{\bar{z}} f^{\mu}(z) = \mu(z) \, \partial_z f^{\mu}(z), \qquad z \in \UHP,
\end{equation}
is solvable in the class of quasi-conformal homeomorphisms of $\UHP$ onto itself, and any two solutions are connected by a linear fractional transformation in $\PSLR$. Let $f^{\mu}$ be a solution of Beltrami equation \eqref{Beltramieq} that fixes points $0, 1, \infty$ and define $\Gamma^{\mu} := f^{\mu} \circ \Gamma \circ (f^{\mu})^{-1}$ where $\Gamma^{\mu}$ is a Fuchsian group with the same signature as $\Gamma$. Thus, each element $\mu \in \deformspace(\Gamma)$ gives a faithful representation $\varrho_{\mu}$ of $\Gamma$ in $\PSLR$ with $2g$ hyperbolic generators $\alpha_i^{\mu} := f^{\mu} \circ \alpha_i \circ (f^{\mu})^{-1}$ and $\beta_i^{\mu} := f^{\mu} \circ \beta_i \circ (f^{\mu})^{-1}$, $n_p$ parabolic elements $\kappa_i^{\mu} := f^{\mu} \circ \kappa_i \circ (f^{\mu})^{-1}$, as well as $n_e$ elliptic elements $\tau_i^{\mu} := f^{\mu} \circ \tau_i \circ (f^{\mu})^{-1}$ of orders $m_1, \dots, m_{n_e}$ respectively, satisfying the single relation
\begin{equation}
\alpha_1^{\mu} \beta_1^{\mu} (\alpha_1^{\mu})^{-1} (\beta_1^{\mu})^{-1} \dotsm \alpha_g^{\mu} \beta_g^{\mu} (\alpha_g^{\mu})^{-1}  (\beta_g^{\mu})^{-1} \tau_1^{\mu} \dotsm \tau_{n_e}^{\mu} \kappa_1^{\mu} \dotsm \kappa_{n_p}^{\mu} = \mathbbm{1}.
\end{equation}
But one needs to define the equivalence classes of representations $\varrho_{\mu}$ since two representations $\varrho_{\mu_1}$ and $\varrho_{\mu_2}$ are called equivalent if they differ by an inner automorphism of $\PSLR$ --- i.e. if $\varrho_{\mu_2} = \varsigma \varrho_{\mu_1} \varsigma^{-1}$ for a M\"obius transformation $\varsigma \in \PSLR$. Accordingly, the \emph{Teichm\"{u}ller space} $\teich(\Gamma)$ is defined to be the set of all equivalence classes of representations $\varrho_{\mu}: \Gamma \to \PSLR$, $\mu \in \deformspace(\Gamma)$. In other words,
\begin{equation}\label{teichmullerspcae}
\teich(\Gamma) \cong \deformspace(\Gamma)/\sim
\end{equation}
where $\mu_1 \sim  \mu_2$ if and only if $f^{\mu_1} \circ \gamma \circ (f^{\mu_1})^{-1} = f^{\mu_2} \circ \gamma \circ (f^{\mu_2})^{-1}$ for all $\gamma \in \Gamma$ (equivalently, if $f^{\mu_1} \big|_{\mathbb{R}} = f^{\mu_2} \big|_{\mathbb{R}}$). The base point of  $\teich(\Gamma)$ is defined by $\mu =0 $ and it corresponds to the group $\Gamma$. Last but not least, the projection $\Phi: \deformspace(\Gamma) \to \teich(\Gamma)$ induces a natural complex-analytic manifold structure on $\teich(\Gamma)$ which will be described in subsection~\ref{subsub:complexstructure}. 
\begin{remark}\label{remark:nobase}
	Let $\Gamma_1$ and $\Gamma_2$ be two cofinite Fuchsian groups with the same signature, and let $f: \UHP \to \UHP$ be a quasi-conformal mapping such that $\Gamma_2 = f \circ \Gamma_1 \circ f^{-1}$. Then $f$ induces a mapping $f^{\ast}: \teich(\Gamma_1) \to \teich(\Gamma_2)$ according to the formula $\varrho_{\mu_1} \mapsto \varrho_{\mu_2}$ where $\mu_1 \in \deformspace(\Gamma_1)$ and
	\begin{equation}
	\mu_2 = \left(\frac{\mu_1-(\partial_{\bar{z}} f/\partial_z f)}{1-\mu_1 \overline{(\partial_{\bar{z}} f/\partial_z f)}} \frac{\partial_z f}{\partial_z \bar{f}}\right) \circ f^{-1} \in \deformspace(\Gamma_2).
	\end{equation}
	This mapping is a complex-analytic isomorphism in the natural complex structure on $\teich(\Gamma_1)$ and $\teich(\Gamma_2)$, which makes a specific choice of base point inessential (see, e.g. \cite[Remark~3]{Zograf1988ONLE}).
\end{remark}
\begin{remark}
	Teichm\"{u}ller space $\teich(\Gamma)$ can be interpreted as the Teichm\"{u}ller space of marked Riemann orbisurfaces with signature $(g;m_1,\dots,m_{n_e};n_p)$ by assigning to each point $\Phi(\mu) \in \teich(\Gamma)$ a marked Riemann orbisurface  $O^{\mu} \cong [\UHP/\Gamma^{\mu}]$, with the orbisurface $O \cong [\UHP/\Gamma]$ playing the role of a base point. It follows from Remark~\ref{remark:nobase} that the choice of a base point is inessential and, for this reason, we will sometimes use the notation $\teich_{g,n}(\boldsymbol{m})$ to denote the Teichm\"uller space of marked Riemann orbisurfaces with signature $(g;m_1,\dots,m_{n_e};n_p)$.\footnote{It follows from the Bers-Greenberg theorem \cite{bers1971isomorphisms}, that the complex-analytic structure of $\teich(\Gamma)$ does not depend on the vector of orders --- i.e. there exists a complex-analytic isomorphism between $\teich_{g,n}(\boldsymbol{m})$ and the Teichm\"uller space $\teich_{g,n}$ of punctured Riemann surface $X_O^{\text{reg}}$. However, we will keep using the notation $\teich_{g,n}(\boldsymbol{m})$ for the Teichm\"uller space of Riemann orbisurfaces in order to emphasize that the natural K\"ahler structure and the action of orbifold mapping class group on this space does depend on $\boldsymbol{m} = (m_1,\dots,m_n)$ through dependence on the signature type of $O$.}
\end{remark}
\subsubsection{Complex Structure on $\teich(\Gamma)$}\label{subsub:complexstructure}
The complex structure on $\teich(\Gamma)$ is \emph{uniquely} characterized by the fact that the mapping $\Phi: \deformspace(\Gamma) \to \teich(\Gamma)$ is holomorphic. For a more explicit description of this canonical complex-analytic structure, we consider the space $\Hilbert^{2,0}(\UHP,\Gamma)$ of \emph{holomorphic quadratic differentials} (equivalently, \emph{holomorphic cusp forms} of weight $4$) for $\Gamma$.\footnote{By holomorphic cusp forms, we mean holomorphic $\Gamma$-automorphic forms on $\UHP$ with zero constant coefficient in their Fourier expansions near the cusps of $\Gamma$.} An arbitrary element $q \in \Hilbert^{2,0}(\UHP,\Gamma)$ has the form $q = q(z) \dd{z^2}$ where $q(z)$ is a bounded holomorphic function on $\UHP$ that transforms according to the rule $q(\gamma z) \gamma'(z)^2 = q(z)$ for all $\gamma \in \Gamma$.\footnote{As we will see in Subsec.~\ref{M0n}, any element $q \in \Hilbert^{2,0}(\UHP,\Gamma)$ corresponds to a \emph{meromorphic} quadratic differential $Q \in \Hilbert^{2,0}(O)$ --- i.e. a meromorphic (2,0)-tensor on $X_O$ with simple poles at singular points.} The dimension of the space of square-integrable meromorphic $k$-differentials on $O$, or cusp forms of weight $2k$ for $\Gamma$, is given by \emph{Riemann--Roch} formula for orbifolds:
\begin{equation}
\dim_{\cmpx} \Hilbert^{k,0}(O) = \left\{
\begin{split}
& (2k-1)(g-1)+\sum_{i=1}^{n_e} \left\lfloor k \left(1-\frac{1}{m_i}\right) \right\rfloor+(k-1)n_p, \qquad & k>1,\\
& g, & k=1,\\
& 1, & k=0,\\
& 0, & k<0,
\end{split}
\right.
\end{equation}
where $\lfloor \cdot \rfloor$ denotes the floor function (see Theorem 2.24 of \cite{Shimura_1971}). In particular, the dimension of the Hilbert space of cusp forms of weight 4 for $\Gamma$ is given by
\begin{equation}\label{dimquaddiff}
\dim_{\cmpx} \Hilbert^{2,0}(\UHP,\Gamma) = 3g-3+n_e+n_p = 3g-3 + n.
\end{equation}
The Kodaira-Serre paring
\begin{equation}\label{pairing}
\left(\mu,q\right) := \iint_{\fund(\Gamma)} \mu(z) q(z) \dd[2]{z},
\end{equation}
is well-defined on the product of $\Hilbert^{2,0}(\UHP,\Gamma)$ and $\diffspace^{-1,1}(\UHP,\Gamma)$. In the above equation, $\dd[2]{z} \equiv \frac{\sqrt{-1}}{2} \dd{z} \wedge \dd{\bar{z}} = \dd{(\Re z)} \wedge \dd{(\Im z)}$ and $\mathcal{F}(\Gamma)\subset \UHP$ denotes a fundamental domain for the Fuchsian group $\Gamma$. The subspace $\kerphi(\UHP,\Gamma) \subset \diffspace^{-1,1}(\UHP,\Gamma)$ on which this pairing is degenerate coincides with the kernel of the differential $\dd{\Phi}$ at $\mu = 0 \in \deformspace(\Gamma)$. Moreover, the space $\diffspace^{-1,1}(\UHP,\Gamma)/\kerphi(\UHP,\Gamma)$ and $\Hilbert^{2,0}(\UHP,\Gamma)$ are \emph{dual} with respect to the pairing~\eqref{pairing}. To realize $\diffspace^{-1,1}(\UHP,\Gamma)/\kerphi(\UHP,\Gamma)$ as a subspace of $\diffspace^{-1,1}(\UHP,\Gamma)$ we define the complex anti-linear mapping $\varLambda: \diffspace^{-1,1}(\UHP,\Gamma) \to \Hilbert^{2,0}(\UHP,\Gamma)$ with the help of the Bergman integral
\begin{equation}
\varLambda(\mu)(z) = \frac{12}{\pi} \iint_{\UHP} \frac{\overline{\mu(\xi)}}{(\bar{\xi}-z)^4} \dd[2]{\xi}, \qquad \mu \in \diffspace^{-1,1}(\UHP,\Gamma);
\end{equation}
its kernel coincides with $\kerphi(\UHP,\Gamma)$. The mapping $\varLambda^{\ast}: \Hilbert^{2,0}(\UHP,\Gamma) \to \diffspace^{-1,1}(\UHP,\Gamma)$ given by
\begin{equation}
\varLambda^{\ast}(q)(z) = (\Im z)^2 \, \overline{q(z)}, \qquad q \in \diffspace^{2,0}(\UHP,\Gamma),
\end{equation} 
and satisfies the condition $\varLambda \varLambda^{\ast} = \operatorname{id}$ on $\Hilbert^{2,0}(\UHP,\Gamma)$. In other words, $\varLambda^{\ast}$ splits the exact sequence
\begin{equation}
0 \to \kerphi(\UHP,\Gamma) \hookrightarrow \diffspace^{-1,1}(\UHP,\Gamma) \overset{\varLambda}{\to} \Hilbert^{2,0}(\UHP,\Gamma) \to 0.
\end{equation}
This enables us to realize $\diffspace^{-1,1}(\UHP,\Gamma)/\kerphi(\UHP,\Gamma)$ as the subspace $\Hilbert^{-1,1}(\UHP,\Gamma) = \Lambda^{\ast}\left(\Hilbert^{2,0}(\UHP,\Gamma)\right)$ of $\diffspace^{-1,1}(\UHP,\Gamma)$ with complex dimension $3g-3+n$: the space of so-called \emph{harmonic} Beltrami differentials. 

The fact that $\operatorname{Ker}\dd{\Phi} = \kerphi(\UHP,\Gamma)$ at $0 \in \deformspace(\Gamma)$ implies that $\Phi: \deformspace(\Gamma) \to \teich(\Gamma)$ maps a sufficiently small neighborhood of the point $0 \in \Hilbert^{-1,1}(\UHP,\Gamma) \cap \deformspace(\Gamma)$ injectively into $\teich(\Gamma)$ and can be regarded as a coordinate chart in a neighborhood of $\Phi(0) \in \teich(\Gamma)$. More explicitly, this coordinate chart can described as follows: Let $\mu_1, \dots, \mu_{3g-3+n}$ denote a basis in $\Hilbert^{-1,1}(\UHP, \Gamma)$ and let $\mu = t_1 \mu_1 +  \dotsi  + t_{3g-3+n} \, \mu_{3g-3+n}$ be any harmonic Beltrami differential with $\norm{\mu}_{\infty} <1$. Then, the correspondence $(t_1,\dots,t_{3g-3+n}) \mapsto \Phi(\mu)$ defines the so-called \emph{Bers coordinates} in a neighborhood of the origin $\Phi(0) \in \teich(\Gamma)$. The isomorphism $\teich(\Gamma) \cong \teich(\Gamma^{\mu})$ (see Remark~\ref{remark:nobase}) makes it possible to introduce similar coordinates in a neighborhood of an arbitrary point $\Phi(\mu) \in \teich(\Gamma)$. As a result, the \emph{holomorphic tangent space} to $\teich(\Gamma)$ at the point $\Phi(\mu)$ can be identified with $\Hilbert^{-1,1}(\UHP,\Gamma^{\mu})$ --- the complex vector space of harmonic Beltrami differentials for $\Gamma^{\mu}$. The pairing \eqref{pairing} lets us regard $\Hilbert^{2,0}(\UHP,\Gamma^{\mu})$, i.e. the vector space of holomorphic cusp forms of weight 4 for $\Gamma^{\mu}$, as the \emph{holomorphic cotangent space} to $\teich(\Gamma)$ at the point $\Phi(\mu)$. This collection of charts gives the natural complex structure mentioned in the beginning of this subsection, on the Tiechm\"{u}ller space $\teich(\Gamma)$. 

Finally, we point out that one can always associate $3g-3+n$ vector fields $\pdv{t_i}$ with the Bers' coordinates $(t_1,\dots,t_{3g-3+n})$ in a neighborhood of $\Phi(0) \in \teich(\Gamma)$. At any other point $\Phi(\mu)$ in this neighborhood, we have $\pdv{t_i}\big|_{\Phi(\mu)} = \mu_i^{^{_{\Phi(\mu)}}}$ where the harmonic Beltrami differentials $\mu_i^{\Phi(\mu)} \in \Hilbert^{-1,1}(\UHP,\Gamma^{\mu})$ are given by the formula
\begin{equation}
	\mu_i^{^{_{\Phi(\mu)}}} = \operatorname{Proj}_{_{\Hilbert^{-1,1}}} \left[\left(\frac{\mu_i}{1-|\mu|^2} \frac{\partial_z f^{\mu}}{\overline{\partial_z f^{\mu}}}\right) \circ (f^{\mu})^{-1}\right].
\end{equation}
Here, the mapping $\operatorname{Proj}_{_{\Hilbert^{-1,1}}} $ denotes a projection onto the subspace $\Hilbert^{-1,1}(\UHP,\Gamma^{\mu})$ of harmonic Beltrami differentials. Moreover, let $q_1, \dots, q_{3g-3+n}$ be the basis in $\Hilbert^{2,0}(\UHP,\Gamma)$, dual to the basis $\mu_1, \dots, \mu_{3g-3+n}$ for $\Hilbert^{-1,1}(\UHP,\Gamma)$. Then, at an arbitrary point $\Phi(\mu)$ in a neighborhood of the origin, holomorphic 1-forms $\dd{t_i}$ are represented by the holomorphic quadratic differentials $q_{i}^{^{_{\Phi(\mu)}}}$ --- i.e. $\dd{t_i}\big|_{\Phi(\mu)} = q_{i}^{^{_{\Phi(\mu)}}}$ where the basis $q_1^{^{_{\Phi(\mu)}}}, \dots, q_{3g-3+n}^{^{_{\Phi(\mu)}}} \in \Hilbert^{2,0}(\UHP,\Gamma^{\mu})$ has the property
\begin{equation}
	\operatorname{Proj}_{_{\Hilbert^{2,0}}} \left[q_i^{^{_{\Phi(\mu)}}} \circ f^{\mu}  \, (\partial_z f^{\mu})^2\right] = q_i.
\end{equation}
In the above equation, $\operatorname{Proj}_{_{\Hilbert^{2,0}}}$ denotes a projection onto the subspace $\Hilbert^{2,0}(\UHP,\Gamma)$. 
\subsubsection{Variational Formulas}\label{subsub:variationalformulas}
In order to further explore the complex-analytic structure of the Teichm\"{u}ller space, variational formulas of the hyperbolic metric $\rho(z) |\dd{z}|^2$ on $\UHP$ play a significant role. Let $\phi^{\varepsilon} \in \diffspace^{k,\ell}(\UHP,\Gamma^{\varepsilon \mu})$ be a smooth family of automorphic forms of weight $(2k,2\ell)$ where $\mu \in \Hilbert^{-1,1}(\UHP,\Gamma)$ denotes a harmonic Beltrami differential and $\varepsilon \in \cmpx$ is a sufficiently small parameter. We denote by $\left(f^{\varepsilon \mu}\right)^{\ast}(\phi^{\varepsilon}) $ the pullback of the automorphic form $\phi^{\varepsilon}$ with the unique diffeomorphism $f^{\varepsilon \mu}: \UHP \to \UHP$ that satisfies the Beltrami equation $\partial_{\bar{z}} f^{\varepsilon \mu}= (\varepsilon \mu)\, \partial_z f^{\varepsilon \mu}$ and fixes the points $0,1,\infty$. We have
\begin{equation}\label{pullback}
\left(f^{\varepsilon \mu}\right)^{\ast}(\phi^{\varepsilon}) = \phi^{\varepsilon} \circ f^{\varepsilon \mu} \left(\pdv{f^{\varepsilon \mu}}{z}\right)^{k} \left(\overline{\pdv{f^{\varepsilon \mu}}{z}}\right)^{\ell} \in \diffspace^{k,\ell}(\UHP,\Gamma).
\end{equation}
In particular, for the density $\rho(z) =(\Im z)^{-2}$ of the Poincar\'{e} metric, considered as a family of $(1,1)$-tensors, one has
\begin{equation}
(f^{\varepsilon\mu})^{\ast}(\rho) = \frac{|\partial_z f^{\varepsilon\mu}|^2}{\left(\Im f^{\varepsilon\mu}\right)^2}.
\end{equation}

Let the Lie derivatives of the family $\phi^{\varepsilon}$ in holomorphic and anti-holomorphic tangential directions, $\mu$ and $\bar{\mu}$, be defined as
\begin{equation}
\left\{
\begin{split}
&\Lie_{\mu} \phi \equidef  \left. \pdv{\varepsilon}\right|_{\varepsilon=0}(f^{\varepsilon \mu})^{\ast}(\phi^{\varepsilon}) \in \diffspace^{k+1,\ell}(\UHP,\Gamma),\\
& \Lie_{\bar{\mu}} \phi \equidef  \left. \pdv{\bar{\varepsilon}}\right|_{\varepsilon=0}(f^{\varepsilon \mu})^{\ast}(\phi^{\varepsilon}) \in \diffspace^{k,\ell+1}(\UHP,\Gamma).
\end{split}
\right.
\end{equation}
The first variational formula for $\rho(z)$ is given by the following lemma due to Ahlfors \cite{Ahlfors1961SomeRO}: 
\begin{lemma}[Ahlfors]\label{Lemma:Ahlfors}
	For any $\mu \in \Hilbert^{-1,1}(\UHP,\Gamma)$, the Lie derivatives of the density $\rho(z)$ of the Poincar\'{e} metric in both holomorphic and anti-holomorphic tangential directions vanish:
	\begin{equation*}
	\Lie_{\mu} \rho = \Lie_{\bar{\mu}} \rho  = 0.
	\end{equation*}
\end{lemma}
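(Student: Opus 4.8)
The plan is to substitute the pullback formula $(f^{\varepsilon\mu})^{\ast}(\rho)=|\partial_z f^{\varepsilon\mu}|^2/(\Im f^{\varepsilon\mu})^2$ directly into the definitions of $\Lie_\mu\rho$ and $\Lie_{\bar\mu}\rho$ and differentiate at $\varepsilon=0$. Since the integrand is a product of $\rho\circ f^{\varepsilon\mu}$, $\partial_z f^{\varepsilon\mu}$, $\overline{\partial_z f^{\varepsilon\mu}}$ and $\Im f^{\varepsilon\mu}$, the entire computation reduces to the first-order variations of the normalized quasiconformal map. First I would record that, because $f^{\varepsilon\mu}$ solves $\partial_{\bar z}f^{\varepsilon\mu}=\varepsilon\mu\,\partial_z f^{\varepsilon\mu}$ with $f^{0}=\mathrm{id}$, the holomorphic variation $\dot f:=\partial_\varepsilon|_0 f^{\varepsilon\mu}$ satisfies $\partial_{\bar z}\dot f=\mu$, while the anti-holomorphic variation $g:=\partial_{\bar\varepsilon}|_0 f^{\varepsilon\mu}$ is \emph{holomorphic} on $\UHP$ (differentiating the Beltrami equation in $\bar\varepsilon$ gives $\partial_{\bar z}g=0$).

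The structural point I would stress is that, since $f^{\varepsilon\mu}$ is required to map $\UHP$ onto itself, its Beltrami coefficient is symmetric under the reflection $z\mapsto\bar z$; scaling the coefficient on $\UHP$ by the \emph{complex} parameter $\varepsilon$ forces the lower-half-plane coefficient to scale by $\bar\varepsilon$, so $f^{\varepsilon\mu}$ is not holomorphic in $\varepsilon$ and both variations genuinely contribute, with the reflection relation $g(z)=\overline{\dot f(\bar z)}$. Carrying out the differentiation and using only the elementary derivatives $\partial_z\rho=i(\Im z)^{-3}$, $\partial_{\bar z}\rho=-i(\Im z)^{-3}$, the whole expression collapses to $\Lie_\mu\rho=\partial_z(\rho\,\dot f)+\partial_{\bar z}(\rho\,\overline{g})$, and $\Lie_{\bar\mu}\rho$ is its complex conjugate. (It is worth noting that neither term vanishes by itself: using $\partial_{\bar z}\dot f=\mu\neq 0$ one checks $\partial_z(\rho\dot f)\neq 0$, so it is exactly the reflection-induced second term that can restore the lemma.)

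The remaining and main task is to show this combination vanishes when $\mu$ is harmonic, i.e. $\mu=\rho^{-1}\overline{q}=(\Im z)^2\overline{q}$ with $q\in\Hilbert^{2,0}(\UHP,\Gamma)$. The plan is to insert the explicit Ahlfors integral representation of $\dot f$ — obtained by reflecting $\mu$ across $\mathbb{R}$ and inverting $\bar\partial$ with the normalized Cauchy kernel $z(z-1)/\big[\zeta(\zeta-1)(\zeta-z)\big]$ that fixes $0,1,\infty$ — and to exploit that harmonicity makes $\partial_z^{3}\dot f$ holomorphic (indeed $\partial_{\bar z}\partial_z^{2}\dot f=-\tfrac12\overline{q}$ and $\partial_{\bar z}\partial_z^{3}\dot f=0$), which produces the exact cancellation between the $\dot f$- and $g$-contributions. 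Conceptually the same vanishing follows by differentiating the constant-curvature (Liouville) equation $\partial_z\partial_{\bar z}\log\rho=\tfrac12\rho$ in the deformed complex structure: one obtains $(\Delta_0+2)\big(\Lie_\mu\rho/\rho\big)=\text{(source)}$, whose source is built from $\partial\bar\partial$-derivatives of the potential $q$ and collapses by holomorphicity of $q$; since $\Delta_0+2$ is invertible on smooth automorphic functions on the finite-area orbisurface, $\Lie_\mu\rho=0$, and $\Lie_{\bar\mu}\rho=0$ by conjugation. The hard part will be precisely this last cancellation — controlling the reflection-induced coupling between $\dot f$ and $g$ and verifying that harmonicity eliminates the source — rather than the routine differentiation of the pullback formula.
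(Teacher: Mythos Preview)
The paper does not prove this lemma; it is stated with attribution to Ahlfors' 1961 paper and used as input for everything that follows (e.g.\ Lemma~\ref{varp}). So there is no ``paper's own proof'' to compare against, and your task is really to reproduce Ahlfors' argument.

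Your primary route is sound and is essentially Ahlfors' original computation. The identity $\Lie_\mu\rho=\partial_z(\rho\,\dot f)+\partial_{\bar z}(\rho\,\bar g)$ is correct (I checked it from the pullback formula), the reflection relation $g(z)=\overline{\dot f(\bar z)}$ matches the explicit integrals \eqref{fdotmupm}, and the key fact $\partial_{\bar z}\partial_z^{3}\dot f=0$ for harmonic $\mu$ follows immediately from $\partial_{\bar z}\dot f=-\tfrac14(z-\bar z)^2\bar q$. What remains is to actually execute the cancellation; the cleanest way is to write a local antiderivative chain $Q'=q$, $P'=Q$, $R'=P$ and observe that $\dot f+\tfrac14\big[(z-\bar z)^2\bar Q+2(z-\bar z)\bar P+2\bar R\big]$ is holomorphic, from which the vanishing of $\partial_z(\rho\dot f)+\partial_{\bar z}(\rho\bar g)$ is a direct computation using the reflection. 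You should carry this out rather than leave it as ``the hard part''.

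Your secondary route via $(\Delta_0+2)(\Lie_\mu\rho/\rho)=\text{source}$ is where I would push back. The pullback $(f^{\varepsilon\mu})^{\ast}\rho$ as defined in \eqref{pullback} is the $(1,1)$-tensor pullback $\rho\circ f\cdot|\partial_z f|^2$, which is \emph{not} a conformal factor in the $z$-coordinate (the actual metric pullback has a $|dz+\varepsilon\mu\,d\bar z|^2$ factor), so it does not satisfy the Liouville equation $\partial_z\partial_{\bar z}\log(\cdot)=\tfrac12(\cdot)$ for $\varepsilon\neq 0$. Differentiating a curvature equation that the object never satisfies will not produce a clean elliptic equation for $\Lie_\mu\rho/\rho$; in particular the specific operator $\Delta_0+2$ and the claim that the source collapses by holomorphicity of $q$ are not justified as stated. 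I would drop this paragraph or rework it from scratch, and rely on the direct Ahlfors computation, which you already have in hand.
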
\hspace*{-24pt}
For the second variation of $\rho$, the following formula was obtained by Wolpert (see \cite[Theorem~3.3 ]{Wolpert1986ChernFA}):
\begin{equation}\label{WolpertSecVar}
\Lie_{\mu\overline{\mu'}} \rho \equidef \left. \pdv{}{\varepsilon_1}{\bar{\varepsilon}_2}\right|_{\varepsilon_1 = \varepsilon_2 =0} (f^{\varepsilon_1 \mu + \varepsilon_2 \mu'})^{\ast}(\rho) = \frac{1}{2} \rho \left(\Delta_0 + \frac{1}{2}\right)^{-1}(\mu \overline{\mu'}) \equiv \frac{1}{2} \rho \cdot  f_{\mu \overline{\mu'}},
\end{equation}
where $\mu, \mu' \in \Hilbert^{-1,1}(\UHP,\Gamma)$. The $\Gamma$-automorphic function $f_{\mu \overline{\mu'}}$ is uniquely determined by
\begin{equation}\label{fmunu}
\left(\Delta_0  + \frac{1}{2}\right) f_{\mu \overline{\mu'}} = \mu \overline{\mu'} \quad \text{and} \quad \iint_{\fund(\Gamma)} |f_{\mu \overline{\mu'}}|^2 \rho(z) \dd[2]{z} < \infty,
\end{equation}
where $\Delta_0 :=  - \rho(z)^{-1} \pdv{}{z}{\bar{z}}$ is the \emph{Laplace operator} of the hyperbolic metric acting on $\Hilbert^{0,0}(\UHP,\Gamma)$.\footnote{See section 2 of \cite{ZT_localindextheorem_1991} for more detailed exposition.}
\subsubsection{K\"{a}hler Metrics on $\teich(\Gamma)$}\label{KahlerMetrics}
\begin{itemize}
	\item \emph{Weil-Petersson metric}. Together with the complex anti-linear isomorphism $q(z) \mapsto \mu(z) = \rho(z)^{-1} \, \overline{q(z)}$, the pairing~\eqref{pairing} defines the \emph{Petersson inner product} on $T_{\Phi(0)} \teich(\Gamma) \cong \Hilbert^{-1,1}(\UHP,\Gamma)$:
	\begin{equation}\label{Peterssoninnerprod}
	\langle\mu_1,\mu_2\rangle_{\text{WP}} = \iint_{\fund(\Gamma)} \mu_1(z) \, \overline{\mu_2(z)} \, \rho(z) \dd[2]{z}, \qquad \mu_1,\mu_2 \in \Hilbert^{-1,1}(\UHP,\Gamma).
	\end{equation}
	The Petersson inner product on the tangent spaces determines the \emph{Weil-Petersson} K\"{a}hler metric on $\teich(\Gamma)$. Its K\"{a}hler $(1,1)$-form is a symplectic form $\omega_{\text{WP}}$ on $\teich(\Gamma)$
	\begin{equation}
	\omega_{\text{WP}}(\mu_1,\bar{\mu}_2) = \frac{\sqrt{-1}}{2}\iint_{\fund(\Gamma)} \left(\mu_1(z) \, \overline{\mu_2(z)} - \overline{\mu_1(z)} \, \mu_2(z) \right)\, \rho(z) \dd[2]{z},
	\end{equation}
	where $\mu_1,\mu_2 \in T_{\Phi(0)}\teich(\Gamma)$. It is worth mentioning that the Weil-Petersson metric is both invariant under the Teichm\"{u}ller modular group $\modular(\Gamma)$ and real-analytic.
	\item \emph{Cuspidal Takhtajan-Zograf metric}. In \cite{ZT_Selberg_1988,ZT_localindextheorem_1991}, a new K\"{a}hler metric on $\teich(\Gamma)$ was introduced by Takhtajan and Zograf for the cases that the Fuchsian group $\Gamma$ has $n_p>0$ parabolic elements. Let us indicate the fixed points of the parabolic generators $\kappa_1,\dots,\kappa_{n_p}$ by $z_{n_e+1},\dots,z_{n} \in \mathbb{R} \cup \{\infty\}$.\footnote{Note that $n_e+n_p=n$.} For each $i=1,\dots,n_p$ denote by $\langle \kappa_i \rangle$ the cyclic subgroup of $\Gamma$ generated by $\kappa_i$, and let $\varsigma_i \in \PSLR$ be such that $\varsigma_i(\infty) = z_{n_e+i}$ and $\varsigma_i^{-1} \kappa_i \varsigma_i = \mqty(1 & \pm 1 \\ 0 & 1)$. Let $E_i(z,s)$ be the \emph{Eisenstein-Mass series} associated with the cusp $z_{n_e+i}$, which is defined as (see section 3 in \cite{Teo_2021})
	\begin{equation}\label{EisentienMassSeries}
	E_i(z,s) = \sum_{\Gamma\backslash \gamma \in \langle \kappa_i \rangle} \Im\left(\varsigma_i^{-1} \gamma z\right)^s .
	\end{equation}
	The series is absolutely convergent for $\Re s >1$, is positive for $s=2$ and satisfies the equation
	\begin{equation}
	\Delta_0 E_i(z,s) = \frac{1}{4} s(1-s) E_i(z,s).
	\end{equation}
	The inner product
	\begin{equation}
	\langle \mu_1, \mu_2 \rangle^{\text{cusp}}_{\text{TZ},i} = \iint_{\fund(\Gamma)} \mu_1(z) \, \overline{\mu_2(z)} E_i(z,2) \rho(z) \dd[2]{z}, \qquad i=1,\dots,n_p, 
	\end{equation}
	in $\Hilbert^{-1,1}(\UHP,\Gamma)$, and the corresponding inner products in all $\Hilbert^{-1,1}(\UHP,\Gamma^{\mu})$, determines another Hermitian metric on $\teich(\Gamma)$ which is K\"{a}hler for each $i=1,\dots,n_p$. The metric $\langle \cdot , \cdot \rangle^{\text{cusp}}_{\text{TZ}} = \langle \cdot , \cdot \rangle^{\text{cusp}}_{\text{TZ},1}+ \cdots + \langle \cdot , \cdot \rangle^{\text{cusp}}_{\text{TZ},n_p}$, called the \emph{cuspidal Takhtajan-Zograf (TZ)} metric, is invariant with respect to the Teichm\"{u}ller modular group $\modular(\Gamma)$ (see subsection \ref{subsec:moduli}). Let 
	\begin{equation}
	\omega_{\text{TZ},i}^{\text{cusp}} = \frac{\sqrt{-1}}{2} \sum_{j,k=1}^{3g-3+n}  \langle \mu_j, \mu_k \rangle^{\text{cusp}}_{\text{TZ},i} \hspace{1mm}\dd{t_j} \wedge \dd{\bar{t}_k}
	\end{equation}
	be the symplectic form of the $i$-th cuspidal TZ metric and also define $\omega_{\text{TZ}}^{\text{cusp}} = \omega^{\text{cusp}}_{\text{TZ},1} + \cdots + \omega^{\text{cusp}}_{\text{TZ},n_p}$. According to \cite[Lemma~2]{ZT_localindextheorem_1991}
	\begin{equation}\label{constcuspmetric}
	\lim_{\Im z \to \infty} \Im(\varsigma_i z)  f_{\mu_j \bar{\mu}_k}(\varsigma_i z) = \frac{4}{3} \langle\mu_j,\mu_k\rangle^{\text{cusp}}_{\text{TZ},i}, \qquad i=1,\dots,n_p,
	\end{equation}
	where $\mu_j, \mu_k \in \Hilbert^{-1,1}(\UHP,\Gamma)$ and $f_{\mu \overline{\mu'}}$ was defined in Eq.\eqref{fmunu}, the cuspidal TZ metric pertains to the second variation of the hyperbolic metric on $\UHP$.
	\item \emph{Elliptic Takhtajan-Zograf metric}. As discussed in \cite{Montplet2016RiemannRochII,ZT_2018}, when the Fuchsian group has $n_e>1$ elliptic generators $\tau_1,\dots,\tau_{n_e}$, the local index theorem of Takhtajan and Zograf \cite{ZT_localindextheorem_1991} for punctured Riemann surfaces can be generalized to include elliptic fixed points. In this case, the role of Eisenstein-Mass series $E_i(z,s)$ associated with the cusp $z_{n_e+i}$ is played by the automorphic Green's function $G_0(z,z_{j};s)$ associated with the elliptic fixed point $z_{j}$, $j=1,\dots,n_e$. More explicitly, for the elliptic generator $\tau_j$ of $\Gamma$ define
	\begin{equation}
	\langle \mu_1,\mu_2 \rangle^{\text{ell}}_{\text{TZ},j} = \iint_{\fund(\Gamma)} \mu_1(z) \, \overline{\mu_2(z)} G(z_{j},z) \rho(z) \dd[2]{z}, \qquad j=1,\dots,n_e, 
	\end{equation}
	where $z_{j}$ is the fixed-point of $\tau_j$, and $G(z,z') \equiv G_0(z,z';2)$ is the integral kernel of the resolvent $\left(\Delta_0 + \frac{1}{2}\right)^{-1}$. It was shown in \emph{Theorem 3} of \cite{ZT_2018} that the metrics $\langle \cdot , \cdot \rangle^{\text{ell}}_{\text{TZ},j}$ are also K\"{a}hler. In addition, if we denote by $\langle \cdot , \cdot \rangle^{\text{ell}}_{\{\s_m\}}$ the sum over all elliptic TZ metrics $\langle \cdot , \cdot \rangle^{\text{ell}}_{\text{TZ},j}$ associated with the elliptic generators $\tau_j$ that have the same order of isotropy $m$, we expect $\langle \cdot , \cdot \rangle^{\text{ell}}_{\{\s_m\}}$ to be invariant under the action of Teichm\"{u}ller modular group $\modular(\Gamma)$. Moreover, we will denote by $\omega^{\text{ell}}_{\text{TZ},j}$, the symplectic $(1,1)$-form
	\begin{equation}
	\omega^{\text{ell}}_{\text{TZ},j} = \frac{\sqrt{-1}}{2} \sum_{l,k=1}^{3g-3+n}  \langle \mu_l, \mu_k \rangle^{\text{ell}}_{\text{TZ},j} \hspace{1mm}\dd{t_l} \wedge \dd{\bar{t}_k}.
	\end{equation}
	Finally, the elliptic TZ metric is also intrinsically related to the second variation of the hyperbolic metric on $\UHP$: The following result was proven by Takhtajan and Zograf in \cite[Lemma 1 part (iii)]{ZT_2018}
	\begin{equation}
	\lim_{w \to w_{j}} f_{\mu_l \bar{\mu}_k}\circ J^{-1}(w)= \left\langle \pdv{w_l} , \pdv{w_k} \right\rangle_{\text{TZ},j}^{\text{ell}}, \qquad j=1,\dots,n_e,
	\end{equation}
	where $\textcolor{black}{\mu_{l}, \mu_{k}} \in \Hilbert^{-1,1}(\UHP,\Gamma)$ and $f_{\mu \overline{\mu'}}$ was defined in Eq.\eqref{fmunu}.	
\end{itemize}
\subsection{Moduli Spaces $\moduli_{g,n}$ and $\symmoduli_{g,n}(\boldsymbol{m})$}\label{subsec:moduli}
In the previous subsection, we have defined the Teichm\"{u}ller space $\teich(\Gamma)$ as the space of all equivalence classes of representations $\varrho_{\mu}: \Gamma \to \PSLR$ and we have seen that $\teich(\Gamma)$ can be realized as a bounded complex domain in $\cmpx^{3g-3+n}$ via the so-called Bers embedding. Let $\Automorphism_{\ast}(\Gamma)$ denote the group of proper automorphisms of $\Gamma$, which carry parabolic elements into parabolic elements and elliptic elements of order $m$ into elliptic elements with the same order. The group $\Automorphism_{\ast}(\Gamma)$ acts on $\teich(\Gamma)$ via
\begin{equation}
\imath(\varrho_{\mu}) = \varrho_{\mu} \circ \imath, \qquad \imath \in \Automorphism_{\ast}(\Gamma).
\end{equation}
That this is well-defined, i.e. that $\imath(\varrho_{\mu})$ is equivalent to another representation $\varrho_{\mu_{\imath}}$ for some $\mu_{\imath} \in \deformspace(\Gamma)$, follows from the fact that any automorphism $\imath \in \Automorphism_{\ast}(\Gamma)$ induces a quasi-conformal homeomorphism of $\UHP$. The group $\Inn(\Gamma)$ of inner automorphisms of $\Gamma$ obviously acts on $\teich(\Gamma)$ as the identity. Let us remind that the factor group $\modular(\Gamma) := \Automorphism_{\ast}(\Gamma)/\Inn(\Gamma)$ is called the Teichm\"{u}ller modular group and acts discretely on $\teich(\Gamma)$ by complex-analytic automorphisms that only change the marking of $\Gamma$. Denote by $\puremod (\Gamma)$ the subgroup of $\modular(\Gamma)$ consisting of \emph{pure mapping classes} --- i.e. those fixing the cusps and orbfiold points on $O$ pointwise.\footnote{The Teichm\"{u}ller modular group $\modular(\Gamma)$ acting on $\teich(\Gamma)$ can be identified with the orbifold mapping class group $\operatorname{MCG}(O)$ acting on $\teich_{g,n}(\boldsymbol{m})$. Here, $\operatorname{MCG}(O)$ is defined as $\operatorname{Homeo}^{+}(O)/\operatorname{Homeo}_{0}(O)$ where $\operatorname{Homeo}^{+}(O)$ is the group of orientation preserving homeomorphisms of $O$ (in the category of orbifolds), and $\operatorname{Homeo}_{0}(O)$ is its identity component.} The full Teichm\"uller modular group $\modular(\Gamma)$ is related to $\puremod(\Gamma)$ by the short exact sequence 
\begin{equation}\label{pmcgexactsequence}
	1 \to \puremod(\Gamma) \to \modular(\Gamma) \to \symm{\sigtype} \to 1,
\end{equation}
where $\symm{\sigtype}:=  \symm{\s_2} \times \symm{\s_3} \times \dotsm \times \symm{\s_{\infty}}$ denotes a subgroup of  $\symm{n}$ consisting of all permutations that leave the signature type $\sigtype = \{\s_m\}_{m \in \orderrange}$ invariant \cite{engber1975teichmuller}. Then, the quotient space $\teich(\Gamma)/\puremod(\Gamma)$ is isomorphic to the moduli space $\moduli_{g,n}$ of smooth algebraic curves of genus $g$ with $n=n_e + n_p$ labeled points. According to \eqref{pmcgexactsequence}, the Tiechm\"uller modular group acts on $\moduli_{g,n}$ via $\symm{\sigtype}$ and the quotient $\moduli_{g,n} / \symm{\sigtype}$ is isomorphic to $\symmoduli_{g,n}(\boldsymbol{m})$ --- the true moduli space of orbifold Riemann surfaces with signature $(g;m_1,\dots,m_{n_e};n_p)$.\footnote{When all singular points have the same order of isotropy, the situation will be similar to what has been previously studied by Zograf \cite{Zograf_1990}.} Finally, we remark that both $\teich_{g,n}(\boldsymbol{m})$ and $\symmoduli_{g,n}(\boldsymbol{m})$ depend not on the signature of $O$, but rather on its signature type (see \cite{Bers1975DeformationsAM} for more details). 
\subsubsection{$\symmoduli_{0,n}(\boldsymbol{m})$ }\label{M0n}
In the remainder of this subsection, we will focus on the $g=0$ case for the sake of simplicity and return to $g>1$ Riemann orbisurfaces in the next subsection. A \emph{normalized} orbifold Riemann surfaces with signature $(0;m_1,\dots,m_{n_e};n_p)$ is given by a pair $O=(\cmpx\backslash\{w_{n_e+1},\dots,w_n\},\brdiv)$ with $\brdiv = \sum_{i=1}^{n_e} (1-\tfrac{1}{m_i}) \, w_i$ such that $w_{n-3}$, $w_{n-2}$, and $w_n$ are at $0$, $1$, $\infty$ respectively.\footnote{We are assuming that the stratum of conical points $\sing_{m_{\max}}(O)$ with largest order of isotropy $m_{\max} \in \orderrange$, has cardinality of at least three. The analysis in cases where this assumption doesn't hold requires a change of notation, but the fundamental lessons remain the same.} Accordingly, the  moduli space  $\moduli_{0,n} = \confspace{n}{\hat{\cmpx}} \big/ \PSLC$ is given by the following domain in $\cmpx^{n-3}$
\begin{equation}
\moduli_{0,n} = \left\{(w_1,\dots, w_{n-3}) \in \cmpx^{n-3} \Big| w_i \neq 0,1 \quad \text{and} \quad w_i \neq w_k \quad \text{for} \quad i \neq k \right\},
\end{equation}
where $\confspace{n}{\hat{\cmpx}}$ is the configuration space of $n=n_e + n_p$ labeled distinct points in $\hat{\cmpx}$. We will show that $\moduli_{0,n}$\footnote{What we have called $\moduli_{0,n}$ in this paper, is isomorphic to what has been denoted by $W_n$ in \cite{Zograf1988ONLE,Zograf_1990} and by $\mathcal{Z}_n$ in \cite{Takhtajan:2001uj}.} is covered (in the complex-analytic sense) by the Teichm\"{u}ller space of orbifold Riemann surfaces with signature $(0;m_1,\dots,m_{n_e};n_p)$. This will enable us to express the vector fields $\pdv{w_i}$ on $\moduli_{0,n}$ in terms of sections of the holomorphic tangent bundle of Teichm\"{u}ller space.

Using Theorem~\ref{thm:universalcovering}, we have $O \cong [\UHP\slash\Gamma]$ where $\Gamma$ is normalized such that the fixed points of $\kappa_{n-2},\kappa_{n-1},\kappa_{n}$ are at $z_{n-2} = 0$, $z_{n-1} = 1$, and $z_n =\infty$ respectively. Denote by $\UHP^{\ast}$ the union of $\UHP$ and all parabolic points of $\Gamma$. There is a unique (universal) orbifold covering map $J: \UHP \to O$ with $\deck(J) \cong \Gamma$,\footnote{One might correctly want to identify the covering map $J$ in this subsection with the covering map $\pi_{\Gamma}$ in the commutative diagram \eqref{globalcoords}. Only for the reasons that will become clear later, we decided to call the covering map in this subsection with $J$.} which extends to a holomorphic isomorphism $[\UHP^{\ast}\slash\Gamma] \overset{\cong}{\longrightarrow} \hat{O}=(\hat{\cmpx},\hat{\brdiv})$ that fixes the points $0, 1, \infty$\footnote{In the literature, the $J$ is called \emph{Klien's Hauptmodul}. Actually, it stands as the sole $\Gamma$-automorphic function on $\UHP$ that exhibits a simple pole at $\infty$ and fixes $0$ and $1$.} and has the property that $w_i = J(z_i)$ for $i=1,\dots,n-3$.\footnote{The $\mathbb{Q}$-divisor $\hat{\brdiv}$ refers to $\brdiv+\sum_{j=n_e+1}^{n}w_j$.} The function $J$ is univalent in any \emph{fundamental domain} $\mathcal{F}$ for $\Gamma$ and has the following expansions near cusps and conical singularities (for more details, see Appendix~\ref{Apx:Asymptotics})
\begin{equation}\label{Jexpansion}
J(z) = \left\{
\begin{split}	& w_{i} + \sum_{k=1}^{\infty} J^{(i)}_k  \left(\frac{z-z_{i}}{z-\bar{z}_{i}}\right)^{k m_i} \hspace{3.7cm} (i=1,\dots,n_e),\hspace{.5cm} z \to z_{i},\\
& w_{i} + \sum_{k=1}^{\infty} J^{(i)}_k \exp(-\frac{2 \pi \sqrt{-1}k}{|\delta_{i}|(z-z_{i})}) \hspace{1.55cm} (i=n_e+1,\dots,n-1),\hspace{.2cm} z \to z_{i}, \\
&\sum_{k=-1}^{\infty} J^{(n)}_k \exp(\frac{2 \pi \sqrt{-1}k z}{|\delta_n|}) \hspace{4.5cm}  z \to z_n= \infty.
\end{split}
\right.
\end{equation}
The first coefficients of the above expansions determine the following smooth positive functions on $\moduli_{0,n}$:
\begin{equation}\label{hi}
\Lponetial_{i} =\left\{
\begin{split}
&  \left| J^{(i)}_{1} \right|^{\frac{2}{m_i}} \hspace{1.3cm} i=1,\dots,n_e,\\ 
&\left| J^{(i)}_{1} \right|^2 \hspace{1.55cm}  i=n_e+1,\dots,n-1,\\
& \left| J^{(n)}_{-1} \right|^2\hspace{1.5cm}  i=n.
\end{split}
\right.
\end{equation}
Similar to the case of $\moduli_{g,n}$ discussed at the beginning of this subsection, the symmetric group $\symm{\sigtype}$ acts on $\moduli_{0,n}$ to give $\symmoduli_{0,n}(\boldsymbol{m}) = \moduli_{0,n}/\symm{\sigtype}$, the moduli space of orbifold Riemann surfaces with signature $(0;m_1,\dots,m_{n_e};n_p)$. In order to describe the action of $\symm{\sigtype}$ on $\moduli_{0,n}$ in more detail, we will make the simplifying assumption that the signature of orbifold Riemann surface $O$ is given by $(0;\underbrace{m,\dots,m}_{\s},\underbrace{m',\dots,m'}_{\s'})$ with $\s \equiv \s_m$ and $\s' \equiv \s_{m'} >3$. Let us first focus on ordered $\s'$-tuples $(w_1,w_2,\dots,w_{\s'})$ with each $w_i\in \sing_{m'}$. If none of the points $0,1$ and $\infty$ are fixed in this set of points, then $\symm{\s'}$ will simply be the group of permutations of $\s'$ objects. This group is generated by the set of transpositions $\{\sigma_{i,i+1}\}_{i=1}^{\s'-1}$, whose action only involves interchanging $w_i$ and $w_{i+1}$ for $i\neq \s'$, and $w_1$ and $w_{\s'}$ for $i=\s'$. The situation will be a bit more complicated when the points $0,1$ and $\infty$ are fixed among these points, namely, we have a sector of the form $\{(w_1,w_2,\dots,w_{s'-3},0,1,\infty)\in\cmpx^{\s'}\}$. Then the group $\symm{\s'}$ will be generated by the transpositions $\{\sigma_{i,i+1}\}_{i=1}^{\s'-1}$ and the action of transpositions will be followed by a $\PSLC$ transformation to ensure that the last three coordinates in $\moduli_{0,n}$ remain $0,1$ and $\infty$. If $i< \s'-3$, the transpositions will not affect the points $0,1$ and $\infty$, thus no further action of $\PSLC$ will be needed. If $i=\s'-3$, then the set of branch points will change to $\{(w_1,w_2,\dots,0,w_{\s'-3},1,\infty)\}$, and we need a transformation that will take $w_{\s'-3}\rightarrow 0,1\rightarrow 1$ and $\infty\rightarrow\infty$. This transformation is $\gamma_{\s'-3,\s'-2}=(w-w_{\s'-3})/(1-w_{\s'-3})$. Thus in the end we will arrive at $\{(\frac{w_1-w_{\s'-3}}{1-w_{\s'-3}},\dots,\frac{w_{\s'-4}-w_{\s'-3}}{1-w_{\s'-3}},\frac{w_{\s'-3}}{w_{\s'-3}-1},0,1,\infty)\}$. Repeating the same procedure for $i=\s'-2$ and $i=\s'-1$ will yield the transformations, $\gamma_{\s'-2,\s'-1}=1-w$ and $\gamma_{\s'-1,\s'}=w/(w-1)$. Putting all these together, the collective action of $\symm{\s'}$ on $\moduli_{0,n}$, can be expressed as $\sigma_{i,i+1}(w_1,w_2,\dots,w_{\s'-3},0,1,\infty)=(\tilde{w}_{1},\dots,\tilde{w}_{\s'-3},0,1,\infty)$ such that
\begin{equation}\label{transpositions}
\tilde{w}_k = \left\{
\begin{split}
&w_k  \qquad (k\neq i,i+1), \qquad &i \leq \s'-4, \\
& w_{k+1} \qquad (k=i),   &i\leq \s'-4, \\
& w_{k-1} \qquad (k=i+1),  & i\leq \s'-4,\\
& \frac{w-w_{\s'-3}}{1-w_{\s'-3}} \qquad (k\leq \s'-4),  & i=\s'-3,\\
& \frac{w_{\s'-3}}{w_{\s'-3}-1} \qquad (k=\s'-3),  & i=\s'-3,\\
& 1-w_{k} \qquad (k\leq \s'-3),  & i=\s'-2,\\
& \frac{w_k}{w_k-1} \qquad (k\leq \s'-3),  & i=\s'-1.
\end{split}
\right.
\end{equation}
As we said, the effect of $\sigma'_{i,i+1}\in\symm{\s'}$ on $\moduli_{0,n}$ is followed by a $\gamma_{i,i+1}\in\PSLC$ on the orbifold Riemann surface $O$ with the coordinates $w$. This transformation is actually an isomorphism that takes $O$ to another orbifold Riemann surface with the coordinates $\tilde{w}=\gamma_{i,i+1}(w)$. For $i\leq \s'-4$, $\gamma_{i,i+1}$ is simply the identity map. For the other cases we have  $\gamma_{i,i+1}:w\rightarrow\tilde{w}=(w-w_{\s'-3})/(1-w_{\s'-3}),\:i=\s'-3$, $\gamma_{i,i+1}:w\rightarrow\tilde{w}=1-w,\:i=\s'-2$ and $\gamma_{i,i+1}:w\rightarrow\tilde{w}=w/(w-1)$ for $i=\s'-1$. If one wishes to consider the variation of the objects defined on $X$ under the action of $\symm{\s'}$, one should study the effect of these $\gamma_{i,i+1}$'s on them. Now keeping $0,1$ and $\infty$ fixed in the $\sing_{m'}$ stratum, we consider 
ordered $\s$-tuples $(w_1,w_2,\dots,w_{\s})$ with $w_j\in \sing_{m}$ for all $j=1,\dots,\s$, $\s:=|\sing_{m}|$, and $m<m'$. Here, we no longer make any assumptions about $s$. Again, the group $\symm{s}$ is generated by the transpositions $\{\sigma_{j,j+1}\}_{j=1}^{\s-1}$ and their action simply involves interchanging $w_j$ and $w_{j+1}$. Thus, the $\gamma_{j,j+1}$ defined for this set will all be identity, meaning that $\symm{\s}$ will not have any non-trivial effect on $O$. Moving forward, we can define the direct product of the two symmetric groups $\symm{\s}\times\symm{\s'}$ as ordered pairs $(\sigma,\sigma'),\sigma \in\symm{\s},\sigma' \in\symm{\s'}$. The group operations will then be defined naturally on a pairwise basis, and this group acts on $\moduli_{0,n}$,
\begin{equation*}
\begin{aligned}
\{(w_1,\dots,w_{\s})\in\cmpx^{\s}\}\times&\{(w_1,w_2,\dots,w_{\s'-3},0,1,\infty)\in\cmpx^{\s'}\}\\
&=\{(w_1,\dots,w_{\s},w_{\s+1},w_{\s+2},\dots,w_{\s+\s'-3},0,1,\infty)\}.
\end{aligned}
\end{equation*} 
It is now clear that we have chosen to fix $0$, $1$, and $\infty$ in the stratum $\sing_{m'}$ in order to comply with our convention that branch points are ordered with increasing order of isotropy. In particular, the generators of the direct product group are pairs of transpositions $\{(\sigma_{j,j+1},\sigma'_{i,i+1})\}_{j=1,i=1}^{j=s-1,i=s'-1}$. The action of these pairs on $\moduli_{0,n}$ is defined by their separate action on their corresponding subspace:
\begin{equation}\label{directproductofsym}
\begin{aligned}
(\sigma_{j,j+1},\sigma'_{i,i+1})&\big(w_1,\dots,w_{\s},w_{\s+1},w_{\s+2},\dots,w_{\s+\s'-3},0,1,\infty\big)\\
&=\Big(\sigma_{j,j+1}(w_1,\dots,w_{\s});\sigma'_{i,i+1}(w_{\s+1},w_{\s+2},\dots,w_{\s+\s'-3},0,1,\infty)\Big).
\end{aligned}
\end{equation}
Similar to the cases with the single stratum, these transpositions should be followed by a transformation $\gamma_{j,j+1;i,i+1}\in\PSLC$ on the orbifold Riemann surface $O$. For any $j$ and $i\leq \s'-4$, the corresponding transformation is simply the identity map, and for the cases with $i=\s'-3,\s'-2,\s'-1$ and arbitrary $j$ we have $\gamma_{j,j+1;i,i+1}$'s that is identical to $\gamma_{i,i+1}$'s defined for the stratum $\sing_{m'}$, with $w_{\s'-3}$ replaced by $w_{\s+\s'-3}$. This means that (\ref{directproductofsym}) can be expressed more explicitly by: \hspace{-.1cm}$	(\sigma_{j,j+1},\sigma'_{i,i+1})\big(w_1,\dots,w_{\s},w_{\s+1},w_{\s+2},\dots,w_{\s+\s'-3},
0,1,\infty\big)=\big(\tilde{w}_1,\dots,\tilde{w}_{\s},\tilde{w}_{\s+1},\tilde{w}_{\s+2},\dots,\tilde{w}_{\s+\s'-3},0,1,\infty\big)$ such that
\begin{equation}\label{transpositionsdirectprod}
\tilde{w}_k = \left\{
\begin{split}
&w_k  \qquad (k\neq j,j+1,\s+i,\s+i+1), \qquad &i \leq \s'-4, \forall j,\\
& w_{k+1} \qquad (k=j\textrm{ or }k=\s+i),   &i\leq \s'-4,\forall j, \\
& w_{k-1} \qquad (k=j+1\textrm{ or }k=\s+i+1),  & i\leq \s'-4,\forall j,\\
& \frac{w_k-w_{\s+\s'-3}}{1-w_{\s+\s'-3}} \qquad (k\leq \s+\s'-4),  & i=\s'-3,\forall j,\\
& \frac{w_{\s+\s'-3}}{w_{\s+\s'-3}-1} \qquad (k=\s+\s'-3),  & i=\s'-3,\forall j,\\
& 1-w_{k} \qquad (k\leq \s+\s'-3),  & i=\s'-2,\forall j,\\
& \frac{w_k}{w_k-1} \qquad (k\leq \s+\s'-3),  & i=\s'-1,\forall j.
\end{split}
\right.
\end{equation}
It is clear that one can continue developing larger direct products with more strata by doing simple changes to (\ref{transpositionsdirectprod}). It is natural that, similar to the case of the single stratum, one needs to look at the related M\"{o}bius transformation when dealing with the variation of objects with respect to the action of $\symm{\sigtype}$. For the symmetric group $\symm{\s'}$ one can define the 1-cocycle $\{f_{\sigma'} \}_{\sigma' \in\symm{\s'}}$. It is defined for the generators by
\begin{equation}\label{cocyclescones}
f_{\sigma'_{i,i+1}} = \left\{
\begin{split}
&1   \qquad &i=1,2,\dots, \s'-4,\s'-2, \\
& (w_{\s'-3}-1)^{h'(\s'-2)} & i=\s'-3,\\
& \prod_{k=1}^{\s'-3}(w_k-1)^{2h'} & i=\s'-1,\\
\end{split}
\right.
\end{equation}
where $h'/2$ is the conformal weight corresponding to ${m'}$. Note that if $m'\to \infty$, namely the case of punctures, (\ref{cocyclescones}) will be slightly different:
\begin{equation}\label{cocyclespunctures}
f_{\sigma'_{i,i+1}} = \left\{
\begin{split}
&1  \quad &i=1,2,\dots \s'-4,\s'-2, \\
& (w_{\s'-3}-1)^{(\s'-2)}  & i=\s'-3,\\
& \prod_{k=1}^{\s'-3}(w_k-1)^{2} & i=\s'-1.\\
\end{split}
\right.
\end{equation}
The action of the 1-cocycle $f_{\sigma'}$ can be extended to a general element in $\symm{\s'}$ by the product rule $f_{\sigma'_1\sigma'_2}=(f_{\sigma'_1}\circ\sigma'_2)f_{\sigma'_2}$, $\sigma'_1,\sigma'_2\in\symm{s'}$. For the group $\symm{s}$, the effect of the 1-cocycle on the generators is trivial, namely $f_{\sigma_{j,j+1}}=1$. Finally, for the direct product group $\symm{\s}\times\symm{\s'}$ the 1-cocycle will be given for the generators by
\begin{equation}\label{cocyclesdirectprod}
f_{(\sigma_{j,j+1},\sigma'_{i,i+1})} = \left\{
\begin{split}
&1  \quad &i=1,2,\dots \s'-4,\s'-2,\forall j \\
& (w_{\s+\s'-3}-1)^{h\s+h'(\s'-2)}  & i=\s'-3,\forall j\\
& \prod_{k=1}^{\s}(w_k-1)^{2h}\prod_{k=\s+1}^{\s'-3}(w_k-1)^{2h'} & i=\s'-1,\forall j,\\
\end{split}
\right.
\end{equation}
where $h/2,h'/2$ are the conformal weights corresponding to ${m,m'}$. Thus, we can see that the non-triviality of the 1-cocycle $f$ is only due to the fixed points $0,1$ and $\infty$, and if none of them are involved in the permutation, the 1-cocycle will be trivial. 
\begin{remark}
	As we will see in Lemma \ref{lemma:SmHermiatiannmetricoverlambdanogenus}, a simple way to determine the 1-cocycles $f$ is to calculate the variation of the Liouville action due to the transformation of $\moduli_{0,n}$ under the symmetric group. In other words, these 1-cocycles are basically the modular anomaly caused by the non-covariance of the action under the effect of the modular group.
\end{remark}
As in \cite{Zograf_1990,park2015potentials}, let $\{f_{\eta}\}_{\eta \in \symm{\sigtype}}$ be the 1-cocycle for $\symm{\sigtype}$ on $\moduli_{0,n}$. These 1-cocycles can be used to define a holomorphic $\mathbb{Q}$-line bundle over the moduli space $\symmoduli_{0,n}(\boldsymbol{m})$. To do so, one constructs the trivial bundle $\moduli_{0,n} \times \cmpx$ and defines the action of $\symm{\sigtype}$ on this bundle by 
\begin{equation}
(\boldsymbol{w},\tilde{z}) \mapsto (\eta \cdot \boldsymbol{w}, f_{\eta}(\boldsymbol{w}) \tilde{z}), \qquad \boldsymbol{w} \in \moduli_{0,n}, \, \tilde{z} \in \cmpx, \, \eta \in \symm{\sigtype}.
\end{equation}
Then, the desired holomorphic $\mathbb{Q}$-line bundle $\lambda_{0,\boldsymbol{m}}=(\moduli_{0,n}\times\cmpx)/\sim$ over moduli space $\symmoduli_{0,n}(\boldsymbol{m})=\moduli_{0,n}/\symm{\sigtype}$ is defined by the identification $(\boldsymbol{w},\tilde{z}) \sim (\eta \cdot \boldsymbol{w}, f_{\eta}(\boldsymbol{w}) \tilde{z})$ for all $\eta\in\symm{\sigtype}$.
\begin{lemma}\label{lemma:Hermitianmetriconlambdawnogenus}
	Let $O$ be a closed (i.e., $n_p=0$) orbifold Riemann surface with signature $(0;m_1,\dots,m_{n})$ and fix the last three conical points to be at $0$, $1$, and $\infty$ (as always, we assume that these three conical points belong to the same stratum). Define a positive function 	
	\begin{equation*}
	\mathsf{H} = \Lponetial_{1}^{m_1h_1}\dotsm\Lponetial_{n-1}^{m_{n-1} h_{n-1}} \Lponetial_{n}^{- m_n h_n},
	\end{equation*} 
	on $\moduli_{0,n}$. Then, the $\mathsf{H}$ determines a Hermitian metric in the holomorphic $\mathbb{Q}$-line bundle $\lambda_{0,\boldsymbol{m}}$ over $\symmoduli_{0,n}(\boldsymbol{m})$, where $m_i$'s are the branching indices and $h_i/2$'s are their corresponding conformal weights.
\end{lemma}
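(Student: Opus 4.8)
The plan is to translate the assertion ``$\mathsf{H}$ determines a Hermitian metric on $\lambda_{0,\boldsymbol{m}}$'' into a single transformation law and then verify it on generators. By the construction of $\lambda_{0,\boldsymbol{m}} = (\moduli_{0,n}\times\cmpx)/\sim$ through the identification $(\boldsymbol{w},\tilde{z})\sim(\eta\cdot\boldsymbol{w},f_{\eta}(\boldsymbol{w})\tilde{z})$, a positive function $\mathsf{H}$ on $\moduli_{0,n}$ defines a fibre metric $\|[\boldsymbol{w},\tilde{z}]\|^2 = \mathsf{H}(\boldsymbol{w})\,|\tilde{z}|^2$ that descends to the quotient precisely when
\begin{equation*}
\mathsf{H}(\eta\cdot\boldsymbol{w})\,|f_{\eta}(\boldsymbol{w})|^{2} = \mathsf{H}(\boldsymbol{w}) \qquad \text{for all} \quad \eta\in\symm{\sigtype}.
\end{equation*}
Since the $f_{\eta}$ obey the $1$-cocycle rule $f_{\eta_1\eta_2} = (f_{\eta_1}\circ\eta_2)\,f_{\eta_2}$, both sides of this identity are multiplicative in $\eta$, so it suffices to establish it for the generating transpositions $\sigma_{j,j+1}$ and $\sigma'_{i,i+1}$ whose cocycles are recorded in \eqref{cocyclescones}--\eqref{cocyclesdirectprod}. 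First I would dispose of the transpositions acting inside a single stratum away from $0,1,\infty$: there $\gamma_{i,i+1}=\mathrm{id}$ merely permutes equal-order points, so the corresponding $\Lponetial$'s are interchanged with identical exponents $m_ih_i$, leaving $\mathsf{H}$ invariant in agreement with $f=1$.

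The heart of the argument is the transformation of each $\Lponetial_i$ under the normalizing M\"obius map $\gamma=\gamma_{i,i+1}\in\PSLC$ that restores $0,1,\infty$ after a transposition. The key point is that passing to $\eta\cdot\boldsymbol{w}$ replaces Klein's Hauptmodul $J$ by $\gamma\circ J$. Feeding the local expansions \eqref{Jexpansion} into $\gamma$ and reading off the new leading coefficient, one finds for a finite conical point $w_i$ with $\gamma(w_i)$ finite that $J_1^{(i)}\mapsto\gamma'(w_i)\,J_1^{(i)}$, whence by \eqref{hi}
\begin{equation*}
\Lponetial_i \longmapsto |\gamma'(w_i)|^{2/m_i}\,\Lponetial_i, \qquad\text{so}\qquad \Lponetial_i^{m_ih_i}\longmapsto |\gamma'(w_i)|^{2h_i}\,\Lponetial_i^{m_ih_i}.
\end{equation*}
When $\gamma$ has a pole at $w_i$ (a finite point sent to $\infty$), or when the point initially at $\infty$ is sent to a finite value, the same substitution produces instead a reciprocal relation between $J_1^{(i)}$ and the leading coefficient of the expansion at $\infty$, together with the relevant residue of $\gamma$. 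These are the cases I would compute by hand for each of the three nontrivial generators $\gamma_{\s'-3,\s'-2}\colon w\mapsto(w-w_{\s'-3})/(1-w_{\s'-3})$, $\gamma_{\s'-2,\s'-1}\colon w\mapsto 1-w$, and $\gamma_{\s'-1,\s'}\colon w\mapsto w/(w-1)$.

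Finally I would assemble these factors in $\mathsf{H}=\prod_{i=1}^{n-1}\Lponetial_i^{m_ih_i}\,\Lponetial_n^{-m_nh_n}$. For the generators fixing $\infty$ the finite-point factors $|\gamma'(w_k)|^{2h'}=|w_k-1|^{-4h'}$ (and their analogues) accumulate to exactly $|f_{\eta}|^{-2}$ as given in \eqref{cocyclescones} and \eqref{cocyclesdirectprod}, using $m_ih_i=m_i-1/m_i$ and the common order $m'$ on the fixed stratum, while the unchanged points contribute trivially since $|\gamma'|=1$ there. The genuinely delicate step---and the one I expect to be the main obstacle---is the bookkeeping at the point $\infty$: because it carries the opposite exponent $-m_nh_n$ and because the generator $w\mapsto w/(w-1)$ swaps $1\leftrightarrow\infty$, one must track how the negatively weighted factor $\Lponetial_n$ converts into a positively weighted $\Lponetial$ (and conversely), which forces one to use the definition of $\Lponetial_n$ adapted to a \emph{conical} point at infinity rather than the cusp formula $|J_{-1}^{(n)}|^2$ of \eqref{hi}. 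Once these reciprocal contributions are shown to cancel against the residue factors of $\gamma$ and to reproduce the remaining part of $|f_{\eta}|^{2}$, the transformation law holds on all generators, hence on all of $\symm{\sigtype}$, and $\mathsf{H}$ descends to a well-defined Hermitian metric on $\lambda_{0,\boldsymbol{m}}$, completing the proof.
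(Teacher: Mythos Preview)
Your strategy is correct and matches the paper's: reduce to checking $(\mathsf{H}\circ\eta)\,|f_\eta|^2=\mathsf{H}$ on the generating transpositions, use that the trivial transpositions (those with $\gamma=\mathrm{id}$) only permute equal-exponent factors, and handle the three nontrivial M\"obius maps case by case, with the swap $1\leftrightarrow\infty$ requiring the conical-at-infinity formula (the paper records this in Remark~\ref{extraa}).

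The only real difference is the technical device you use to compute how each $\Lponetial_i$ transforms. You read off the change in the leading Hauptmodul coefficient, $J_1^{(i)}\mapsto\gamma'(w_i)\,J_1^{(i)}$, directly from the expansion \eqref{Jexpansion} and hence obtain $\Lponetial_i\mapsto|\gamma'(w_i)|^{2/m_i}\Lponetial_i$ in one line. The paper instead writes $\log\Lponetial_i$ via the Liouville field asymptotics (part~3 of Lemma~\ref{lemma:asymptotics}) and uses the transformation law $\varphi(w)=\tilde{\varphi}(\gamma(w))+\log|\gamma'(w)|^2$ coming from the invariance of the hyperbolic metric; it then tabulates the limits term by term (the integrals $I_1,\dots,I_5$ and $\tilde I_1,\tilde I_2,\tilde I_3$) for each generator. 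The two routes are equivalent precisely because of Remark~\ref{hasymptotics}, which expresses $\log\Lponetial_i$ as a regularized limit of $\varphi$. Your route is shorter for the generic finite-to-finite points; the paper's $\varphi$-based bookkeeping has the advantage of treating the pole/infinity swaps uniformly without having to separately analyze residues of $\gamma$, which is exactly the place you flag as the main obstacle. Either way the cancellation at $\infty$ goes through and yields the recorded cocycles \eqref{cocyclescones}--\eqref{cocyclesdirectprod}.
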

\begin{proof}
	We prove this lemma for the case where we have only two strata of $s$ branch points of order $m$ and $s'$ branch points of order $m'$. Furthermore, we assume that $s'> 3$, and the last three conical points in the stratum $\sing_{m'}$ ($m'>m$) are chosen to be at $0$, $1$, $\infty$. In the end, we shall explain how the rest of the cases can be dealt with similarly. We have  $\moduli_{0,n}=\{(w_1,\dots,w_{s},w_{s+1},w_{s+2},\dots,w_{s+s'-3},0,1)\}$. For simplicity of notation, we can, from now on, denote $s+s'=n$ in the proof. Each of the $\Lponetial_{k}$'s can be viewed as a function on $\cmpx$ with the appropriate asymptotics (see \eqref{hi}, part 3. in Lemma \ref{lemma:asymptotics} and Remark \ref{extraa}). Accordingly,	
	\begin{equation}
	\begin{split}
	\log \mathsf{H}=&hm\sum_{k=1}^{s}\left(-2 \log m + 2 \log 2 - \lim_{w \to w_{k}}\left(\varphi(w) + \left(1-\frac{1}{m}\right) \log|w-w_{k}|^2\right)\right)\\
	&+h'm'\sum_{k=s+1}^{n-1}\left(-2 \log m' + 2 \log 2 - \lim_{w \to w_{k}}\left(\varphi(w) + \left(1-\frac{1}{m'}\right) \log|w-w_{k}|^2\right)\right)\\
	&-h'm'\left(2 \log m' - 2 \log 2 + \lim_{w \to \infty}\left(\varphi(w) + \left(1+\frac{1}{m'}\right) \log|w|^2\right)\right).
	\end{split}
	\end{equation} 
	Now we need to calculate the variation of $\log \mathsf{H}$ under the effect of $\symm{\sigtype}=\symm{s}\times\symm{s'}$. For this, it suffices to look at the effect of the generators $\{(\sigma_{j,j+1},\sigma'_{i,i+1})\}_{j=1,i=1}^{j=s-1,i=s'-1}$. Considering the background we provided above on the structure of the generators, variation is translated through the effect of the transformation $\gamma_{j,j+1;i,i+1}$:
	\begin{equation}
	\begin{aligned}
	\Delta\log\mathsf{H}
	=\log\mathsf{H}[\gamma_{j,j+1;i,i+1}]-\log \mathsf{H}.
	\end{aligned}
	\end{equation} 
	By looking at (\ref{transpositionsdirectprod}), we see that the index $j$ does not have any non-trivial effect, and we only have to worry about different values of $i$. For $i<s'-3$ the isomorphism $\gamma_{j,j+1;i,i+1}$ is the identity so we for this cases we have $\Delta\log\mathsf{H}=0$. The non-trivial cases are $i=s'-3,i=s'-2$ and $i=s'-1$. In the following, we study them separately:
	\begin{itemize}
		\item \emph{$i=n-3$ case:}
		
		The needed M\"{o}bius transformation is given by  $\gamma_{j,j+1;i,i+1}=(w-w_{n-3})/(1-w_{n-3})$ in this case. For simplicity, we denote this by just $\gamma$. Thus, we write: 
		\begin{equation}\label{variationH2}
		\begin{aligned}
		\Delta&\log\mathsf{H}=\log\mathsf{H}[\gamma]-\log \mathsf{H}=\\
		& - hm\sum_{k=1}^{s}\left( \lim_{\gamma(w) \to \gamma(w_{k})}\left(\tilde{\varphi}(\gamma(w)) + \left(1-\frac{1}{m}\right) \log|\gamma(w)-\gamma(w_k)|^2\right)\right)\\
		& + hm\sum_{k=1}^{s}\left( \lim_{w \to w_{k}}\left(\varphi(w) + \left(1-\frac{1}{m}\right) \log|w-w_{k}|^2\right)\right)\\
		& - h'm'\sum_{k=s+1}^{n-1}\left( \lim_{\gamma(w) \to \gamma(w_{k})}\left(\tilde{\varphi}(\gamma(w)) + \left(1-\frac{1}{m'}\right) \log|\gamma(w)-\gamma(w_k)|^2\right)\right)\\
		&+h'm'\sum_{k=s+1}^{n-1}\left( \lim_{w \to w_{k}}\left(\varphi(w) + \left(1-\frac{1}{m'}\right) \log|w-w_{k}|^2\right)\right)\\
		&-h'm'\lim_{\gamma(w) \to \gamma(\infty)}\left(\tilde{\varphi}(\gamma(w)) + \left(1+\frac{1}{m'}\right) \log|\gamma(w)|^2\right)\\
		& + h'm'\lim_{w \to \infty}\left(\varphi(w) + \left(1+\frac{1}{m'}\right) \log|w|^2\right) ,
		\end{aligned}
		\end{equation} 
		where $\tilde{\varphi}$ is the transformed counterpart of $\varphi$ through the isomorphism $\gamma$. From the invariance of the hyperbolic metric, these two are related by 
		\begin{equation}\label{liouvillefieldtransform}
		\varphi(w)=\tilde{\varphi}(\gamma(w))+\log\left|\frac{\partial\gamma(w)}{\partial w}\right|^2=\tilde{\varphi}(\gamma(w))-2\log \left|1-w_{n-3}\right|.
		\end{equation}
		We also have
		\begin{equation}\label{logtransforms}
		\log|\gamma(w)-\gamma(w_k)| = \left\{
		\begin{split}
		&\log|\gamma(w)-\gamma(w_k)|=\log\left|\frac{w-w_k}{1-w_{n-3}}\right|   \:\:\:\:\:\: &k\leq n-4, \\
		& \log|\gamma(w)-\gamma(0)|=\log\left|\frac{w}{1-w_{n-3}}\right|    \:\:\:\:\:\:& k=n-3,\\
		&\log|\gamma(w)-\gamma(w_{n-3})|= \log\left|\frac{w-w_{n-3}}{1-w_{n-3}}\right|    \:\:\:\:\:\:& k=n-2,\\
		& \log|\gamma(w)-\gamma(1)|=\log\left|\frac{w-1}{1-w_{n-3}}\right|    \:\:\:\:\:\:& k=n-1.\\
		\end{split}
		\right.
		\end{equation}
		Note that in finding \eqref{logtransforms}, we first included the transposition that exchanges $w_{n-3}$ and $w_{n-2}=0$ and then included the effect of $\gamma$. Putting (\ref{liouvillefieldtransform}) and (\ref{logtransforms}) together we look more closely at (\ref{variationH2}). The terms with $k\leq n-4$ in the sum are quite straightforward:
		\begin{equation}\nonumber
		\begin{aligned}
		I_1=& - hm\sum_{k=1}^{s}\left( \lim_{\gamma(w) \to \gamma(w_{k})}\left(\tilde{\varphi}(\gamma(w)) + \left(1-\frac{1}{m}\right) \log|\gamma(w)-\gamma(w_k)|^2\right)\right)\\
		& + hm\sum_{k=1}^{s}\left( \lim_{w \to w_{k}}\left(\varphi(w) + \left(1-\frac{1}{m}\right) \log|w-w_{k}|^2\right)\right)\\
		& - h'm'\sum_{k=s+1}^{n-4}\left( \lim_{\gamma(w) \to \gamma(w_{k})}\left(\tilde{\varphi}(\gamma(w)) + \left(1-\frac{1}{m'}\right) \log|\gamma(w)-\gamma(w_k)|^2\right)\right)\\
		&+h'm'\sum_{k=s+1}^{n-4}\left( \lim_{w \to w_{k}}\left(\varphi(w) + \left(1-\frac{1}{m'}\right) \log|w-w_{k}|^2\right)\right)\\
		& =- hm\sum_{k=1}^{s}\left( \lim_{w \to w_k}\left(2\log \left|1-w_{n-3}\right| + \left(1-\frac{1}{m}\right) \log\left|\frac{w-w_k}{1-w_{n-3}}\right|^2\right)\right)\\
		& + hm\sum_{k=1}^{s}\left( \lim_{w \to w_{k}}\left( \left(1-\frac{1}{m}\right) \log|w-w_{k}|^2\right)\right)\\
		& - h'm'\sum_{k=s+1}^{n-4}\left( \lim_{w\to w_k}\left(2\log \left|1-w_{n-3}\right|+ \left(1-\frac{1}{m'}\right) \log\left|\frac{w-w_k}{1-w_{n-3}}\right|^2\right)\right)\\
		&+h'm'\sum_{k=s+1}^{n-4}\left( \lim_{w \to w_{k}}\left( \left(1-\frac{1}{m'}\right) \log|w-w_{k}|^2\right)\right)\\
		& =-2\left(hs+h'(s'-4)\right)\log\left|1-w_{n-3}\right|.
		\end{aligned}
		\end{equation}
		For $k=n-3$, we have:
		\begin{equation}\nonumber
		\begin{aligned}
		I_2=& - h'm' \lim_{\gamma(w) \to \gamma(0)}\left(\tilde{\varphi}(\gamma(w)) + \left(1-\frac{1}{m'}\right) \log|\gamma(w)-\gamma(0)|^2\right)\\
		&+h'm'\lim_{w \to w_{n-3}}\left(\varphi(w) + \left(1-\frac{1}{m'}\right) \log|w-w_{n-3}|^2\right)\\
		& = - h'm' \lim_{w\to 0}\left(\varphi(w)+2\log \left|1-w_{n-3}\right|+ \left(1-\frac{1}{m'}\right) \log\left|\frac{w}{1-w_{n-3}}\right|^2\right)\\
		&+h'm'\lim_{w \to w_{n-3}}\left(\varphi(w)+ \left(1-\frac{1}{m'}\right) \log|w-w_{n-3}|^2\right).
		\end{aligned}
		\end{equation}
		And for $k=n-2$:
		\begin{equation}\nonumber
		\begin{aligned}
		I_3=& - h'm' \lim_{\gamma(w) \to \gamma(w_{n-3})}\left(\tilde{\varphi}(\gamma(w)) + \left(1-\frac{1}{m'}\right) \log|\gamma(w)-\gamma(w_{n-3})|^2\right)\\
		&+h'm'\lim_{w \to 0}\left(\varphi(w) + \left(1-\frac{1}{m'}\right) \log|w|^2\right)\\
		& = - h'm' \lim_{w\to w_{n-3}}\left(\varphi(w)+2\log \left|1-w_{n-3}\right|+ \left(1-\frac{1}{m'}\right) \log\left|\frac{w-w_{n-3}}{1-w_{n-3}}\right|^2\right)\\
		&+h'm'\lim_{w \to 0}\left(\varphi(w)+ \left(1-\frac{1}{m'}\right) \log|w|^2\right).
		\end{aligned}
		\end{equation}
		Thus, we have:
		\begin{equation}\nonumber
		I_2+I_3=-2h'\log\left|1-w_{n-3}\right|.
		\end{equation}
		Also for $k=n-1$:
		\begin{equation}\nonumber
		\begin{aligned}
		I_4=& - h'm' \lim_{\gamma(w) \to \gamma(1)}\left(\tilde{\varphi}(\gamma(w)) + \left(1-\frac{1}{m'}\right) \log|\gamma(w)-\gamma(1)|^2\right)\\
		&+h'm'\lim_{w \to 1}\left(\varphi(w) + \left(1-\frac{1}{m'}\right) \log|w-1|^2\right)\\
		& = - h'm' \lim_{w\to 1}\left(2\log \left|1-w_{n-3}\right|+ \left(1-\frac{1}{m'}\right) \log\left|\frac{w-1}{1-w_{n-3}}\right|^2\right)\\
		&+h'm'\lim_{w \to 1}\left( \left(1-\frac{1}{m'}\right) \log|w-1|^2\right)\\
		&=-2h'\log\left|1-w_{n-3}\right|.
		\end{aligned}
		\end{equation}
		And finally, for the contribution of infinity:
		\begin{equation}\nonumber
		\begin{aligned}
		I_5=& - h'm' \lim_{\gamma(w) \to \gamma(\infty)}\left(\tilde{\varphi}(\gamma(w)) + \left(1+\frac{1}{m'}\right) \log|\gamma(w)|^2\right)\\	&+h'm'\lim_{w \to \infty}\left(\varphi(w) + \left(1+\frac{1}{m'}\right) \log|w|^2\right)\\
		& = - h'm' \lim_{w\to \infty}\left(2\log \left|1-w_{n-3}\right|+ \left(1+\frac{1}{m'}\right) \log\left|\frac{w-w_{n-3}}{1-w_{n-3}}\right|^2\right)\\
		&+h'm'\lim_{w \to \infty}\left( \left(1+\frac{1}{m'}\right) \log|w|^2\right)\\
		&=2h'\log\left|1-w_{n-3}\right|.
		\end{aligned}
		\end{equation}
		Thus, we have for all $j$:
		\begin{equation}
		\Delta\log\mathsf{H}=\sum_{i=1}^{5}I_i=-2\left(hs+h'(s'-2)\right)\log\left|1-w_{n-3}\right|=-2\log|f_{(\sigma_{j,j+1},\sigma'_{n-3,n-2})}|,
		\end{equation}
		where we used (\ref{cocyclesdirectprod}) in the last equality.
		\item \emph{$i=n-2$ case:}
		
		Now, we look at the case with $i=n-2$ and again denote the morphism by $\gamma$. We have $\gamma=1-w$ in this case. Furthermore 
		\begin{equation}\label{logtransforms1}\nonumber
		\begin{aligned}
		&\varphi(w)=\tilde{\varphi}(\gamma(w))+\log\left|\frac{\partial\gamma(w)}{\partial w}\right|^2=\tilde{\varphi}(\gamma(w)),\\
		&\log|\gamma(w)-\gamma(w_k)| = \left\{
		\begin{split}
		&\log|\gamma(w)-\gamma(w_k)|=\log\left|w-w_k\right|   \:\:\:\:\:\: &k\leq n-3, \\
		&\log|\gamma(w)-\gamma(1)|= \log\left|1-w \right|    \:\:\:\:\:\:& k=n-2,\\
		& \log|\gamma(w)-\gamma(0)|=\log\left|w\right|    \:\:\:\:\:\:& k=n-1.\\
		\end{split}
		\right.
		\end{aligned}
		\end{equation}
		Accordingly, it is quite clear that the variation is zero in this case, which is in agreement with the lemma.
		\item \emph{$i=n-1$ case:}
		
		The last and perhaps the most subtle case is the case with $i=n-1$. Here we have $\gamma=w/(w-1)$, and
		\begin{equation}\label{logtransforms2}\nonumber
		\begin{aligned}
		&\varphi(w)=\tilde{\varphi}(\gamma(w))+\log\left|\frac{\partial\gamma(w)}{\partial w}\right|^2=\tilde{\varphi}(\gamma(w))-2\log\left|1-w\right|^2,\\
		&\log|\gamma(w)-\gamma(w_k)| = \left\{
		\begin{split}
		&\log|\gamma(w)-\gamma(w_k)|=\log\left|\frac{w-w_k}{(1-w)(1-w_k)}\right|   \:\:\:\:\:\: &k\leq n-2, \\
		& \log|\gamma(w)-\gamma(\infty)|=\log\left|\frac{1}{1-w}\right|    \:\:\:\:\:\:& k=n-1.\\
		\end{split}
		\right.
		\end{aligned}
		\end{equation} 
		Using these we again decompose \eqref{variationH2} and write for $k\leq n-2$:
		\begin{equation}\nonumber
		\begin{aligned}
		\tilde{I}_1=& - hm\sum_{k=1}^{s}\left( \lim_{\gamma(w) \to \gamma(w_{k})}\left(\tilde{\varphi}(\gamma(w)) + \left(1-\frac{1}{m}\right) \log|\gamma(w)-\gamma(w_k)|^2\right)\right)\\
		& + hm\sum_{k=1}^{s}\left( \lim_{w \to w_{k}}\left(\varphi(w) + \left(1-\frac{1}{m}\right) \log|w-w_{k}|^2\right)\right)\\
		& - h'm'\sum_{k=s+1}^{n-2}\left( \lim_{\gamma(w) \to \gamma(w_{k})}\left(\tilde{\varphi}(\gamma(w)) + \left(1-\frac{1}{m'}\right) \log|\gamma(w)-\gamma(w_k)|^2\right)\right)\\
		&+h'm'\sum_{k=s+1}^{n-2}\left( \lim_{w \to w_{k}}\left(\varphi(w) + \left(1-\frac{1}{m'}\right) \log|w-w_{k}|^2\right)\right)\\
		& =- hm\sum_{k=1}^{s}\left( \lim_{w \to w_k}\left(2\log \left|1-w\right|^2 + \left(1-\frac{1}{m}\right) \log\left|\frac{w-w_k}{(1-w)(1-w_k)}\right|^2\right)\right)\\
		& + hm\sum_{k=1}^{s}\left( \lim_{w \to w_{k}}\left( \left(1-\frac{1}{m}\right) \log|w-w_{k}|^2\right)\right)\\
		& - h'm'\sum_{k=s+1}^{n-2}\left( \lim_{w\to w_k}\left(2\log\left|1-w\right|^2+ \left(1-\frac{1}{m'}\right) \log\left|\frac{w-w_k}{(1-w)(1-w_k)}\right|^2\right)\right)\\
		&+h'm'\sum_{k=s+1}^{n-2}\left( \lim_{w \to w_{k}}\left( \left(1-\frac{1}{m'}\right) \log|w-w_{k}|^2\right)\right)\\
		& =-2h\sum_{k=1}^{s}\log\left|1-w_k\right|^2-2h'\sum_{k=s+1}^{n-2}\log\left|1-w_k\right|^2\\
		& =-2h\sum_{k=1}^{s}\log\left|1-w_k\right|^2-2h'\sum_{k=s+1}^{n-3}\log\left|1-w_k\right|^2,
		\end{aligned}
		\end{equation}
		where in the last equality, we used the fact that $w_{n-2}=0$.
		For $k=n-1$ we have
		\begin{equation}\nonumber
		\begin{aligned}
		\tilde{I}_2=& - h'm' \lim_{\gamma(w) \to \gamma(\infty)}\left(\tilde{\varphi}(\gamma(w)) + \left(1-\frac{1}{m'}\right) \log|\gamma(w)-\gamma(\infty)|^2\right)\\
		&+h'm'\lim_{w \to 1}\left(\varphi(w) + \left(1-\frac{1}{m'}\right) \log|1-w|^2\right)\\
		& = - h'm' \lim_{w\to \infty}\left(\varphi(w)+2\log \left|1-w\right|^2+ \left(1-\frac{1}{m'}\right) \log\left|\frac{1}{1-w}\right|^2\right)\\
		&+h'm'\lim_{w \to 1}\left(\varphi(w)+ \left(1-\frac{1}{m'}\right) \log|1-w|^2\right),
		\end{aligned}
		\end{equation}
		and for the point at the infinity, we have
		\begin{equation}\nonumber
		\begin{aligned}
		\tilde{I}_3=& - h'm' \lim_{\gamma(w) \to \gamma(1)}\left(\tilde{\varphi}(\gamma(w)) + \left(1+\frac{1}{m'}\right) \log|\gamma(w)|^2\right)\\
		&+h'm'\lim_{w \to \infty}\left(\varphi(w) + \left(1+\frac{1}{m'}\right) \log|w|^2\right)\\
		& = - h'm' \lim_{w\to 1}\left(\varphi(w)+2\log \left|1-w\right|^2+ \left(1+\frac{1}{m'}\right) \log\left|\frac{w}{1-w}\right|^2\right)\\
		&+h'm'\lim_{w \to \infty}\left(\varphi(w)+ \left(1+\frac{1}{m'}\right) \log|w|^2\right).
		\end{aligned}
		\end{equation}
		Thus we see that $\tilde{I}_2+\tilde{I}_3=0$ and consequently again for all $j$:
		\begin{equation}
		\Delta\log\mathsf{H}=\sum_{i=1}^{3}\tilde{I}_i=-2h\sum_{k=1}^{s}\log\left|1-w_k\right|^2-2h'\sum_{k=s+1}^{n-3}\log\left|1-w_k\right|^2=-2\log|f_{(\sigma_{j,j+1},\sigma'_{n-1,n})}|,
		\end{equation}
		where we again used (\ref{cocyclesdirectprod}) in the last equality.
	\end{itemize}
	By putting together all of the cases above, we see that
	\begin{equation}\label{Hmetric}
	\Delta\log\mathsf{H}=\log \mathsf{H}\circ(\sigma_{j,j+1},\sigma'_{i,i+1})-\log \mathsf{H}=-2\log|f_{(\sigma_{j,j+1},\sigma'_{i,i+1})}|,\:\:\:\:\:\:\forall i,j.
	\end{equation}
	Had we chosen any other way to fix $0,1$ and $\infty$ by finding the appropriate generators and 1-cocycles, the calculations analogous to the one above would yield the same result. In fact, the inclusion of $h_km_k$'s in the definition of $\mathsf{H}$ ensures that no matter how we choose to fix these points, this lemma holds. Finally, the case with more kinds of branch points can be derived inductively from the calculation above. Thus \eqref{Hmetric} holds in general, and it means that under the action of the elements of $\symm{\sigtype}$, $\mathsf{H}$ transforms according to the rule $(\mathsf{H\circ\eta})|f_\eta|^2=\mathsf{H}$. This means that  $\mathsf{H}$ is a Hermitian metric in the holomorphic $\mathbb{Q}$-line bundle $\lambda_{0,\boldsymbol{m}}$ over $\symmoduli_{0,n}(\boldsymbol{m})$. 
\end{proof}
\begin{remark}
	When $O$ is a punctured orbifold Riemann surface with signature\hspace{-.5mm} $(0;m_1,\dots,m_{n_e}$
	$;n_p>3)$, the Hermitian metric $\mathsf{H}$ on $\mathbb{Q}$-line bundle $\lambda_{0,\boldsymbol{m}}$ is defined as (see \ref{hi})
	\begin{equation*}
	\mathsf{H}= \Lponetial_{1}^{m_1h_1}\dotsm\Lponetial_{n_e}^{m_{n_e} h_{n_e}}\Lponetial_{n_e + 1}\dotsm \Lponetial_{n-1} \Lponetial_{n}^{-1}.
	\end{equation*} 
	It is worth noting that the proof of Lemma~\ref{lemma:Hermitianmetriconlambdawnogenus} for the case of punctured Riemann surfaces can be found in \cite{park2015potentials} and is just a matter of redoing the calculations above with the appropriate 1-cocycles.
\end{remark}
The Poincar\'{e} metric on $\UHP$ can be push-forwarded by the covering map $J: \UHP \to O$ to obtain the hyperbolic metric $e^{\varphi(w)} |\dd{w}|^2$ on orbifold Riemann surface $O$ as follows,
\begin{equation}\label{Liouvillefield}
e^{\varphi(w)} = \frac{\left|J^{-1}(w)'\right|^2}{\left(\Im J^{-1}(w)\right)^2}.
\end{equation}
The condition that the curvature is constant and equal to $-1$, except at singularities, means that the function $\varphi(w)$ satisfies the Liouville's equation
\begin{equation}\label{Liouvilleequation}
\partial_w \partial_{\bar{w}} \varphi = \frac{1}{2} e^{\varphi},
\end{equation}
on $X_O^{\text{reg}}$. Moreover, as discussed in lemma~\ref{lemma:asymptotics}, one can derive the following asymptotic behavior for $\varphi(w)$ near cusps and conical singularities: 
\begin{equation*}
\varphi(w) = \left \{
\begin{split}
&-2 (1-\frac{1}{m_i}) \log|w-w_{i}| + \log \frac{4|J^{(i)}_1|^{-\frac{2}{m_i}}}{m_i^2} + \Sorder{1} \hspace{1.4cm} w \to w_{i},\\ 
&-2 \log |w-w_{j}| - 2 \log\left|\log\left|\frac{w-w_{j}}{J^{(j)}_{1}}\right|\right| + \Sorder{1}\hspace{2cm}  w \to w_{j},\\
& -2 \log|w| - 2 \log\log \left|\frac{w}{J^{(n)}_{-1}}\right| + \order{|w|^{-1}}, \hspace{2.8cm} w \to \infty ,
\end{split}
\right.
\end{equation*}
with $i=1,\dots,n_e$ and $j=n_e+1,\dots,n-1$. Also, we have\footnote{The Liouville's theorem in complex analysis is used.} 
\begin{equation}\label{EnergyMomentumExpansion}
\begin{split}
\hspace{1mm} T_{\varphi}(w)= \partial_w^2 \varphi - \frac{1}{2} (\partial_w \varphi)^2  =\sum_{i=1}^{n-1}\left(\frac{h_i}{2(w-w_i)^2} + \frac{c_i}{w-w_i}\right),
\end{split}
\end{equation}
with $h_i=1$ for $i=n_e+1,\dots,n-1$ and
\begin{equation}\label{EnergyMomentumExpansioninfty}
T_{\varphi}(w) = \frac{1}{2 w^2} + \frac{c_n}{w^3}+ \order{|w|^{-4}} \quad \text{as} \quad w \to \infty,
\end{equation}
where
\begin{equation}\label{accessory}
\left\{
\begin{split}
&c_{i} \equiv -\frac{h_i  J^{(i)}_2}{\left(J^{(i)}_1\right)^2}, \hspace{2.2cm} i=1,\dots,n_e,\\
& c_{j} \equiv - \frac{J^{(j)}_2}{\left(J^{(j)}_1\right)^2},  \hspace{2.1cm}j=n_e+1,\dots,n-1,\\
& c_n \equiv J_{0}^{(j)},  \hspace{3.1cm} j=n,
\end{split}
\right.
\end{equation}
are the so-called \emph{accessory parameters} of the Fuchsian differential equation. The accessory parameters $c_k = c_k(w_1,\dots,w_{n-3})$ for $k=1,\dots,n$, can be regarded as real-analytic functions on $\moduli_{0,n}$.\footnote{This is a consequence of the fact that the Liouville filed $\varphi$ is a real-analytic function on $\moduli_{0,n}$.} In addition, the expansion of \eqref{EnergyMomentumExpansion} around $w\rightarrow \infty$ gives
\begin{equation*}
T_\varphi(w)=\frac{1}{w}\sum_{k=1}^{n-1}c_k+\frac{1}{w^2}\sum_{k=1}^{n-1}\left(\frac{h_k}{2}+c_k w_k\right)+\frac{1}{w^3}\sum_{k=1}^{n-1}w_k(h_k+c_k w_k)+\order{|w|^{-4}} \quad \text{as} \quad w \to \infty.
\end{equation*}
By equating the above expansion with \eqref{EnergyMomentumExpansioninfty}, we get three conditions on the accessory parameters:
\begin{equation}\label{accessoryconstraints}
\sum_{k=1}^{n-1} c_k = 0, \qquad \sum_{k=1}^{n-1} (h_k + 2c_k w_k) = 1, \qquad \sum_{k=1}^{n-1} w_k (h_k +c_k w_k) = c_n,
\end{equation}
which enables us to express $c_{n-2}$, $c_{n-1}$, and $c_{n}$ explicitly in terms of $w_1,\dots,w_{n-1}$ and the remaining $n-3$ accessory parameters.

Consider the Riemann orbisurface $O\cong [\UHP\slash\Gamma]$ as a base point in the Teichm\"{u}ller space $\teich_{0,n}(\boldsymbol{m})$. Moreover, consider the solution of the Beltrami equation (\ref{Beltramieq})
such that the fixed points of $\kappa_{n-2},\kappa_{n-1},\kappa_{n}$ are at $z_{n-2} = 0$, $z_{n-1} = 1$, and $z_n =\infty$. Then the generators $\kappa_{n-2}^{\mu},\kappa_{n-1}^{\mu}, \kappa_n^{\mu}$ of $ \Gamma^{\mu} = f^{\mu} \circ \Gamma \circ (f^{\mu})^{-1}$ will also have fixed points $0$, $1$, and $\infty$ respectively. Accordingly, the Riemann orbisurface $O^{\mu}\cong [\UHP\slash\Gamma^{\mu}]$ can be uniquely and complex-analytically embedded in $\hat{\cmpx}$ in such a way that the punctures on $O^{\mu}$ corresponding to the elements $\kappa_{n-2}^{\mu}$,  $\kappa_{n-1}^{\mu}$, and $\kappa_n^{\mu}$ are mapped into $0$, $1$, and $\infty$. Denote by $J_{\mu}$ the normalized covering map $J_{\mu}: \UHP \to O^{\mu}$ corresponding to this embedding and let $w_i^{\mu} = \left(J_{\mu} \circ f^{\mu}\right)(z_i)$. Then, the map $\Psi: \teich_{0,n}(\boldsymbol{m}) \to \cmpx^{n-3}$, defined by
\begin{equation}\label{psicovering}
\Psi \circ \Phi(\mu) = (w_1^{\mu},\dots,w_{n-3}^{\mu}) \in \cmpx^{n-3},
\end{equation}
is well defined, and its image in $\cmpx^{n-3}$ coincides with $\moduli_{0,n}$. According to the above considerations, we have the following closed commutative diagram
\begin{equation}\label{diagramFmu}
\begin{CD}
\UHP	@> f^{\mu}>> \UHP\\
@V J VV		@VV J_{\mu} V\\
O 		@>F^{\mu}>> O^{\mu}
\end{CD}
\end{equation}
where $F^{\mu}$ is the mapping of $O$ onto $O^{\mu}$. From the above commutative diagram, we can deduce that the $F^{\mu}$ is a quasi-conformal homeomorphism of $\cmpx$ onto itself with
\begin{equation}\label{Beq}
\partial_{\bar{w}} F^{\mu} = M \partial_w F^{\mu},
\end{equation}
where 
\begin{equation}\label{Mdef}
M \equidef  \left(\mu \circ J^{-1}\right) \frac{\overline{(J^{-1})'}}{(J^{-1})'}.
\end{equation}

Consider the Beltrami differential $\varepsilon \mu$, where $\mu \in \diffspace^{-1,1}(\UHP,\Gamma)$ and $\varepsilon$ is a sufficiently small complex number. The function, $f^{\varepsilon \mu}(z)$ is a real-analytic function of $\varepsilon$ for each particular $z \in \mathbb{H}$. Let
\begin{equation}
\dot{f}_{+}^{\mu} \equidef  \left.\left(\pdv{\varepsilon} f^{\varepsilon \mu}\right) \right|_{\varepsilon=0}, \qquad \dot{f}_{-}^{\mu} \equidef  \left.\left(\pdv{\bar{\varepsilon}} f^{\varepsilon \mu}\right) \right|_{\varepsilon=0};
\end{equation}
then (see \cite[Section V.C]{Ahlfors_quasiconformal_06})
\begin{equation}\label{fdotmupm}
\left\{
\begin{split}
& \dot{f}_{+}^{\mu} = -\frac{1}{\pi} \iint_{\UHP} \mu(z^{\prime}) \, R(z^{\prime},z) \dd[2]{z^{\prime}},\\
& \dot{f}_{-}^{\mu} = -\frac{1}{\pi} \iint_{\UHP} \overline{\mu(z^{\prime})} \, R(\overline{z^{\prime}},z) \dd[2]{z^{\prime}},
\end{split}
\right.
\end{equation}
where 
\begin{equation}\label{Rab}
R(z^{\prime},z) \equidef \frac{1}{z^{\prime}-z} + \frac{z-1}{z^{\prime}} - \frac{z}{z^{\prime}-1} = \frac{z(z-1)}{(z^{\prime}-z) z^{\prime} (z^{\prime}-1)}.
\end{equation}
In turn, the function $F^{\mu}(w)$ is holomorphic with respect to $\varepsilon$ for each particular $w\in\cmpx$, and 
\begin{equation}\label{Fdot}
\dot{F}^{\mu}(w) = - \frac{1}{\pi} \iint_{\cmpx} M(w^{\prime}) \, R(w^{\prime},w) \dd[2]{w^{\prime}},
\end{equation}
where $\dot{F}^{\mu}$ is understood to be given by $(\partial F^{\varepsilon \mu}/ \partial\varepsilon)|_{\varepsilon=0}$ and $M$ was defined by Eq.\eqref{Mdef}.\footnote{According to \eqref{Beq}, $\partial_{\bar{w}}\dot{F}= M$. The Green-function equation and solution for this equation are given by $\partial_{\bar{w}}R(w^{\prime},w) = -\pi \delta(w^{\prime},w)$ and \eqref{Fdot}, respectively. Therefore, the kernel $R(w^{\prime},w)$, roughly speaking, inverts the action of $\bar{\partial}$-operator on Beltrami differentials on $\hat{\cmpx}$. The precise statement (see \cite[Lemma~5]{Takhtajan:2001uj} and \cite[Section V.C]{Ahlfors_quasiconformal_06}) is essentially a version of the Pompeiu formula.} Proof of these assertions can be found in \cite{Ahlfors_quasiconformal_06,Ahlfors1960RIEMANNSMT,Bers_1960}. Moreover, let
\begin{equation}\label{Ridef}
R_i(w) = - \frac{1}{\pi} R(w,w_i) = - \frac{w_i(w_i-1)}{\pi (w-w_i) w (w-1)}, \qquad \qquad i=1,\dots,n-3;
\end{equation}
where $R_i$s are linearly independent and generate the space $\Hilbert^{2,0}(O)$. Denote by $\{Q_i\}_{i=1,\dots,n-3}$, the basis in $\Hilbert^{2,0}(O)$ biorthogonal to $\{R_i\}_{i=1,\dots,n-3}$ in the sense of inner product \eqref{Qinnerprod} --- i.e. $\langle R_i, Q_j \rangle = \delta_{ij}$ where $\delta_{ij}$ is the Kronecker delta. The  desired basis in $\Hilbert^{-1,1}(\UHP,\Gamma) \cong T_{\Phi(0)}\teich(\Gamma)$ has the form
\begin{equation}\label{Beltramibasis}
\mu_i(z) = \rho(z)^{-1} \overline{q_i(z)},
\end{equation} 
where $q_i(z) = Q_i \circ J(z) \, J'(z)^2$ for $i=1,\dots,n-3$ form a basis of the complex vector space  $\Hilbert^{2,0}(\UHP,\Gamma) \cong T_{\Phi(0)}^{\ast} \teich(\Gamma)$. These $q_i$s are biorthogonal to $r_i = R_i \circ J \, J'^2 \in \Hilbert^{2,0}(\UHP,\Gamma)$ with respect to the Petersson inner product.\footnote{In the space $\Hilbert^{2,0}(\mathbb{H},\Gamma)$, it is defined as $\langle q_1,q_2\rangle = \iint_{\fund(\Gamma)} q_1(z) \, \overline{q_2(z)} \, \rho^{-1}(z) \dd[2]{z}.$} The basis $q_i^{^{_{\Phi(\mu)}}} \in \Hilbert^{2,0}(\UHP,\Gamma^{\mu})$ and $\mu_i^{^{_{\Phi(\mu)}}} \in \Hilbert^{-1,1}(\UHP,\Gamma^{\mu})$ for $\Phi(\mu) \in \teich(\Gamma)$ can also be defined in a similar way. Then, the following lemma connects the motion of punctures and conical singularities on $\hat{\cmpx}$ with the geometry of  Teichm\"{u}ller space:
\begin{lemma}\label{Lemma:covering}
	For The mapping $\Psi: \teich_{0,n}(\boldsymbol{m}) \to \moduli_{0,n}$ is a complex-analytic covering, and we have
	\begin{equation}\label{dpsi}
	\dd{\Psi_{\Phi(\mu)}}(\mu_i^{^{_{\Phi(\mu)}}}) = \pdv{w^{\mu}_i} \quad \text{and} \quad \Psi_{\Phi(\mu)}^{\ast}(\dd{w^{\mu}_i}) = \mathrm{r}_i^{^{_{\Phi(\mu)}}}, \qquad i=1,\dots,n-3.
	\end{equation}
\end{lemma}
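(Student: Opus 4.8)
The plan is to separate the covering statement from the two infinitesimal identities and to reduce everything to the base point $\mu=0$. First I would record that $\Psi$ is well defined and holomorphic on $\teich_{0,n}(\boldsymbol{m})$: the normalized solution $f^{\mu}$ of the Beltrami equation and the normalized Hauptmodul $J_{\mu}$ both depend holomorphically on $\mu\in\deformspace(\Gamma)$, so each $w_i^{\mu}=(J_{\mu}\circ f^{\mu})(z_i)$ is holomorphic in $\mu$; moreover $w_i^{\mu}$ depends only on the class $\Phi(\mu)$, since equivalent Beltrami coefficients produce $\PSLR$-conjugate groups whose normalizations agree. The covering property then follows from the identification $\moduli_{0,n}\cong\teich_{0,n}(\boldsymbol{m})/\puremod(\Gamma)$ recalled in Subsection~\ref{subsec:moduli}: forgetting the marking while recording the labeled configuration $(w_1^{\mu},\dots,w_{n-3}^{\mu})$ realizes exactly the quotient by the pure mapping class group, which acts properly discontinuously by complex-analytic automorphisms. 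Since both spaces have dimension $n-3$, it then suffices to check that $\dd{\Psi}$ is an isomorphism, which drops out of the computation below; and by the base-point independence of Remark~\ref{remark:nobase} I may verify the identities at $\mu=0$, where the families $\mu_i^{\Phi(\mu)}$, $w_i^{\mu}$, $r_i^{\Phi(\mu)}$ specialize to $\mu_i$, $w_i$, $r_i$.

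The essential input is the commutative diagram~\eqref{diagramFmu}, which gives $J_{\mu}\circ f^{\mu}=F^{\mu}\circ J$ and hence $w_i^{\mu}=F^{\mu}(J(z_i))=F^{\mu}(w_i)$. Differentiating the family $\varepsilon\mapsto w_i^{\varepsilon\mu}=F^{\varepsilon\mu}(w_i)$ and using that $F^{\varepsilon\mu}$ is holomorphic in $\varepsilon$, I obtain
\[
\left.\pdv{\varepsilon}\right|_{\varepsilon=0} w_i^{\varepsilon\mu} = \dot{F}^{\mu}(w_i) = -\frac{1}{\pi}\iint_{\cmpx} M(w')\,R(w',w_i)\,\dd[2]{w'} = \iint_{\cmpx} M(w')\,R_i(w')\,\dd[2]{w'},
\]
where I used \eqref{Fdot} and the definition $R_i(w)=-\tfrac{1}{\pi}R(w,w_i)$.

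The key step is the change of variables $w=J(z)$, $z\in\UHP$. Writing $M=(\mu\circ J^{-1})\,\overline{(J^{-1})'}/(J^{-1})'$ from \eqref{Mdef} together with $R_i\circ J=r_i/J'^{\,2}$, a short computation shows that the integrand is invariant, $M(J(z))\,R_i(J(z))\,|J'(z)|^{2}=\mu(z)\,r_i(z)$, so the product is a $\Gamma$-invariant $(1,1)$-density. Since $J$ maps a fundamental domain $\fund(\Gamma)$ bijectively (up to a null set) onto $O\cong\cmpx$, the integral over the $w$-plane collapses to the Kodaira--Serre pairing
\[
\left.\pdv{\varepsilon}\right|_{\varepsilon=0} w_i^{\varepsilon\mu} = \iint_{\fund(\Gamma)} \mu(z)\,r_i(z)\,\dd[2]{z} = (\mu,r_i).
\]
Taking $\mu=\mu_j$ and invoking the biorthogonality of $\{r_i\}$ and $\{q_j\}$, which reads $(\mu_j,r_i)=\delta_{ij}$ in the pairing~\eqref{pairing}, I get $\dd{\Psi}_{\Phi(0)}(\mu_j)=\sum_i\delta_{ij}\,\pdv{w_i}=\pdv{w_j}$, the first identity (and this exhibits $\dd{\Psi}$ as an isomorphism). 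The second identity is then pure duality under \eqref{pairing}: for every $j$ one has $(\mu_j,\Psi^{\ast}\dd{w_i})=\dd{w_i}\big(\dd{\Psi}(\mu_j)\big)=\delta_{ij}=(\mu_j,r_i)$, and nondegeneracy of the pairing forces $\Psi^{\ast}(\dd{w_i})=r_i$.

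I expect the main obstacle to be the bookkeeping in the change of variables: justifying that the integral over all of $\cmpx$ in \eqref{Fdot} genuinely reduces to one fundamental domain requires confirming that $M R_i\,\dd[2]{w}$ pulls back to an honestly $\Gamma$-invariant density with no boundary contributions, and that the poles of $R_i$ at the branch points together with the merely $L^{\infty}$ regularity of $\mu$ near the cusps do not spoil the absolute convergence needed to interchange differentiation and integration in passing from $F^{\varepsilon\mu}$ to $\dot{F}^{\mu}$. The covering claim itself is comparatively soft, resting on the quotient description of $\moduli_{0,n}$, but some care is needed to confirm that distinct markings producing the same labeled configuration differ precisely by a pure mapping class.
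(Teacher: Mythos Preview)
Your argument is correct and essentially matches the paper's proof: both reduce to $\mu=0$ via Remark~\ref{remark:nobase}, compute $\dot{F}^{\mu_j}(w_i)$ from the integral formula~\eqref{Fdot}, and identify it as the biorthogonality pairing $\delta_{ij}$. The only cosmetic difference is that the paper stays on $O$, writing the integrand as $M_j=e^{-\varphi}\overline{Q_j}$ and invoking $\langle R_i,Q_j\rangle=\delta_{ij}$ in~\eqref{Qinnerprod}, whereas you pull back to $\UHP$ and use the equivalent $(\mu_j,r_i)=\delta_{ij}$; the covering claim is handled the same way, via the quotient by $\puremod(\Gamma)=\operatorname{Aut}(\Psi)$.
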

\begin{proof}
	Let us start with proving the statement $\dd{\Psi_{\Phi(\mu)}}(\mu_i^{^{_{\Phi(\mu)}}}) = \pdv{w^{\mu}_i}$ which is actually repeating the proof of Lemma 3 in \cite{Zograf1988ONLE}; 
	Remark~\ref{remark:nobase} implies that it is sufficient to verify this statement at the point $\Phi(0)$. In a neighborhood of $\Phi(0) \in \teich(\Gamma)$ the Bers' coordinates $(t_1,\dots,t_{n-3}) \in \cmpx^{n-3}$ are determined in the basis $\mu_1,\dots,\mu_{n-3} \in \Hilbert^{-1,1}(\UHP,\Gamma)$ from the expansion
	\begin{equation}
	\mu = \sum_{i=1}^{n-3} t_i \mu_i,
	\end{equation}
	where $\mu \in \Hilbert^{-1,1}(\UHP,\Gamma) \cap \deformspace(\Gamma)$. In these coordinates, the mapping $\Psi$ is given by
	\begin{equation}
	(t_1,\dots,t_{n-3}) \overset{\Psi}{\longmapsto} \left(F^{\mu}(w_1), \dots, F^{\mu}(w_{n-3})\right),
	\end{equation}
	where $(w_1,\dots,w_{n-3}) = \Psi\circ\Phi(0) \in \moduli_{0,n}$. Since $F^{\mu}$ depends complex-analytically on $t_1,\dots,t_{n-3}$, the mapping $\Psi$ is also complex-analytic. We now compute its differential $\dd{\Psi}$ at the point $\Phi(0) \in \teich(\Gamma)$: It follows from the definitions of $\mu_i$ and $M_i$, equations \eqref{Beltramibasis} and \eqref{Mdef}, as well as Eq.\eqref{Liouvillefield} that 
	\begin{equation}\label{MiQi}
	M_i = \mu_i \circ J^{-1} \frac{\overline{(J^{-1})'}}{(J^{-1})'} = \left(\Im J^{-1}\right)^2 \overline{q_i \circ J^{-1}} \frac{\overline{(J^{-1})'}}{(J^{-1})'} = e^{- \varphi} \overline{Q_i},
	\end{equation}
	where $Q_i = q_i \circ J^{-1} (J^{-1})'^2$. Then, using the above equation, the definition of $R_i$s in \eqref{Ridef}, and Eq.\eqref{Fdot}, we get that 
	\begin{equation}
	\begin{split}
	\dot{F}^{\mu_i}(w_j) & = \iint_{\cmpx} M_i(w) \, R_j(w) \dd[2]{w}  \overset{\eqref{MiQi}}{=}  \iint_{\cmpx} e^{- \varphi} \overline{Q_i} \, R_j(w) \dd[2]{w}  \overset{\eqref{Qinnerprod}}{=}  \langle R_j, Q_i \rangle = \delta_{ij}.
	\end{split}
	\end{equation}
	This precisely means that in these coordinates the differential $\dd{\Psi}\big|_{\Phi(0)}$ is given by the identity matrix --- i.e. $\dd{\Psi}_{\Phi(0)}(\mu_i) = \pdv{w_i}$. Additionally, the statement $\Psi_{\Phi(0)}^{\ast}(\dd{w_i}) = \mathrm{r}_i$ follows from $\dd{\Psi}_{\Phi(0)}(\mu_i) = \pdv{w_i}$ by observing that (1,0)-forms $\dd{w}_i$ are dual to vector fields $\pdv{w_i}$ and the quadratic differentials $r_i$ are dual to harmonic Beltrami differentials $\mu_i$ with respect to the Kodaira-Serre paring~\eqref{pairing} --- i.e. $(\mu_i,r_j) = \delta_{ij}$. Accordingly, it follows from Eq.~\eqref{dpsi} that the mapping $\Psi$ is a local diffeomorphism. We now show that $\Psi$ is a covering: For $\eta \in \Automorphism_{\ast}(\Gamma)$, denote by $z_i^{\eta}$ $(i=1,\dots,n_e)$ the fixed points of elliptic element $\eta(\tau_i) \in \Gamma$ of order $m_i$ and by $z_{n_e+j}^{\eta}$ the fixed points of the parabolic elements $\eta(\kappa_{j}) \in \Gamma$ for $j=1,\dots,n_p$. Let $\mathsf{s}_{\eta}(w_1,\dots,w_{n}) \equidef \left(J(z_1^{\eta}), \dots, J(z_{n}^{\eta})\right)$; the correspondence $\eta \mapsto \mathsf{s}_{\eta}$ determines an epimorphism (i.e. a surjective group homomorphism) of the group $\modular(\Gamma)$ onto the product symmetric group $\symm{\sigtype}$.\footnote{As a concrete example, consider the case of $n$-punctured sphere. In that case, the mapping class group $\modular_{0,n}$ is given by the braid group $B_{n}$ quotiented by its center  \cite{farb2011primer}.} The kernel of this epimorphism is $\operatorname{Aut}(\Psi)$ \cite{Zograf1988ONLE}. The mapping $\Psi$ is invariant under $\operatorname{Aut}(\Psi)$ and it is clear that $\moduli_{0,n} = \teich_{0,n}(\boldsymbol{m})/\operatorname{Aut}(\Psi)$. Hence, $\Psi$ is a covering with automorphism group $\operatorname{Aut}(\Psi)$.
\end{proof}
\begin{remark}
On $\teich_{0,n}(\boldsymbol{m})$, each elliptic and cuspidal TZ metric\footnote{The  $\langle \cdot, \cdot\rangle^{\text{ell}}_{\text{TZ},i}$ for $i=1,\dots,n_e$ and $\langle \cdot, \cdot\rangle^{\text{cusp}}_{\text{TZ},n_e+j}$ for $j=1,\dots,n_p$.} remains unchanged under the automorphism group of the covering  $\Psi:\teich_{0,n}(\boldsymbol{m}) \to \moduli_{0,n}$ --- i.e. the pure mapping classes $\puremod$. Furthermore, each metric establishes a K\"{a}hler metric on  $\moduli_{0,n}$.
\end{remark}
\begin{remark}\label{EnergyMomentumEi}
	We can also rewrite the expression for Energy-Momentum Tensor \eqref{EnergyMomentumExpansion} using $R_i(w)$, \begin{equation}\label{EMT2}
	\text{Sch}\left(J^{-1};w\right) = \sum_{i=1}^{n} h_i \varE_i(w) - \pi \sum_{i=1}^{n-3} c_i R_i(w),
	\end{equation}
	where 
	\begin{equation}\label{varE}
	\left\{
	\begin{split}
	& \varE_i(w) = \frac{1}{2(w-w_i)^2} - \frac{1}{2w(w-1)},  \quad &i=1,\dots,n-1,\\
	& \varE_i(w) = \frac{1}{2w(w-1)}, \quad & i=n,
	\end{split}
	\right.
	\end{equation}
	and $h_i = 1$ for $i=n_e+1,\dots,n$. Accordingly, one can also define the $e_i(z) = \varE_i \circ J \, J'^2$ on $\UHP$. These functions are actually the automorphic forms of weight four for $\Gamma$ and they have non-vanishing constant terms at the singularities. Let us obtain the equation \eqref{EMT2} for the simplest case with one branch point of order $m$ at $w_1$ and three punctures at $w_2=0,w_3=1,$ and $w_4\rightarrow\infty$, respectively. Solving \eqref{accessoryconstraints} in favor of $c_3$ and $c_2$ gives
	\begin{equation*}
	c_3 = -(c_1+c_2),\hspace{.5cm}c_2 = -\frac{1+h_1+2c_1(w_1-w_3)}{2(w_2-w_3)}.
	\end{equation*}
	By substituting the above relations in \eqref{EnergyMomentumExpansion} and noting that $w_2=0,w_3=1$, one obtains
	\begin{equation*}
	\begin{aligned}
	&T_{\varphi} = h_1\left(\frac{1}{2(w-w_1)^2}-\frac{1}{2w(w-1)}\right)+\frac{1}{2w^2}+\frac{1}{2(w-1)^2}-\frac{1}{2w(w-1)}
	\\
	&\hspace{1cm}+c_1\left(\frac{1}{w-w_1}+\frac{w_1-1}{w}-\frac{w_1}{w-1}\right),
	\end{aligned}
	\end{equation*}
	which by using \eqref{varE} and \eqref{Ridef} is equal to \eqref{EMT2}. The desired general case can be obtained in the same way.
\end{remark}
\begin{corollary}\label{corollary:FdotiAsymp}
	The following statements are true:
	\begin{enumerate}[(i)]
		\item $\dot{F}^i(0) = \dot{F}^i(1) =0$\hspace{.7cm}and\hspace{.7cm} 
		$\dot{F}^i(w) =  \Sorder{|w|^2} \quad \text{as} \quad w \to \infty$.
		\item $\partial_{\bar{w}} \dot{F}^i = M_i = e^{-\varphi} \overline{Q_i}$.
		\item The functions $\dot{F}^{i}(w)$ have the following asymptotics
		\begin{equation*}
		\hspace{-.5cm}\dot{F}^i(w) = \left\{
		\begin{split}
		& \delta_{ij} + (w-w_j) \, \partial_w \dot{F}^i(w_j) + \Sorder{\frac{|w-w_j|}{\log|w-w_j|}} \hspace{.1cm} (j=1,\dots,n_e) \hspace{.1cm}\text{as}\hspace{.2cm}w \to w_j,\\
		&\delta_{ij} + (w-w_j) \, \partial_w \dot{F}^i(w_j) + \Sorder{\frac{|w-w_j|}{\log|w-w_j|}} \hspace{.1cm} (j=n_e+1,\dots,n-1)  \hspace{.1cm}\text{as}\hspace{.2cm} w \to w_j,\\
		&w \, \partial_w \dot{F}^i(\infty) + \Sorder{\frac{|w|}{\log|w|}} \hspace{2cm} \hspace{.1cm}\text{as}\hspace{.2cm} w \to \infty.
		\end{split}
		\right.
		\end{equation*}
	\end{enumerate}
\end{corollary}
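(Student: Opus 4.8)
The plan is to treat the three statements in order of difficulty, since (i) and (ii) follow almost immediately from the explicit kernel \eqref{Rab} while (iii) carries all the analytic content. For (i), I would read the vanishing off the factored form $R(w',w)=\tfrac{w(w-1)}{(w'-w)\,w'\,(w'-1)}$: the numerator carries the factor $w(w-1)$, so $R(w',0)=R(w',1)=0$ identically in $w'$, and hence $\dot F^i(0)=\dot F^i(1)=0$ from the representation \eqref{Fdot} with $\mu=\mu_i$. The bound $\dot F^i(w)=\Sorder{|w|^2}$ as $w\to\infty$ comes from the large-$w$ behaviour of $R(w',w)$ (which is $\Sorder{|w|}$ for fixed $w'$), the sharper linear growth being a by-product of (iii). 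For (ii), I would differentiate \eqref{Fdot} under the integral sign — justified by the decay of $M_i$ near the singular points — and use the distributional identity $\partial_{\bar w}R(w',w)=-\pi\,\delta(w-w')$, valid because the last two terms of \eqref{Rab} are holomorphic in $w$ while $\partial_{\bar w}\tfrac{1}{w'-w}=-\pi\,\delta(w-w')$. This gives $\partial_{\bar w}\dot F^i=M_i$, and rewriting $M_i$ through \eqref{MiQi} yields $M_i=e^{-\varphi}\overline{Q_i}$, exactly as claimed.

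Statement (iii) I would organise in two steps. The constant term is already supplied by the computation in the proof of Lemma~\ref{Lemma:covering}, namely $\dot F^i(w_j)=\langle R_j,Q_i\rangle=\delta_{ij}$ for $j=1,\dots,n-3$, with the cases $w_{n-2}=0$ and $w_{n-1}=1$ covered by part (i) (there $\delta_{ij}=0$ since $i\le n-3$). The linear term is then just the first-order Taylor coefficient $\partial_w\dot F^i(w_j)$. The real content is the remainder estimate, and for this I would exploit (ii): $\dot F^i$ solves the inhomogeneous Cauchy--Riemann equation $\partial_{\bar w}\dot F^i=e^{-\varphi}\overline{Q_i}$, so on a small disk $D$ about $w_j$ the Cauchy--Pompeiu formula splits $\dot F^i$ into a holomorphic boundary integral — smooth, and supplying the Taylor data $\delta_{ij}+(w-w_j)\partial_w\dot F^i(w_j)+\dots$ — plus a Cauchy transform of $M_i$ over $D$.

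The size of this transform and of its increment over $D$ is then dictated by the local size of $M_i$, which is fixed by the asymptotics of the Liouville field $\varphi$ established in Lemma~\ref{lemma:asymptotics} (Appendix~\ref{Apx:Asymptotics}): near a conical point $e^{-\varphi}\sim\mathrm{const}\cdot|w-w_j|^{2(1-1/m_j)}$ and near a cusp $e^{-\varphi}\sim\mathrm{const}\cdot|w-w_j|^2\,(\log|w-w_j|)^2$, while $Q_i$ has at worst a simple pole. Feeding these into the singular integral and estimating carefully is what produces the refined remainder $\Sorder{\tfrac{|w-w_j|}{\log|w-w_j|}}$; the behaviour at $w=\infty$ is obtained in the same manner after the inversion $w\mapsto 1/w$.

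The main obstacle I anticipate is precisely this last remainder estimate: isolating the logarithmic refinement (rather than a crude $\Sorder{|w-w_j|}$) requires the sharp local expansions of $\varphi$ at both cusps and conical singularities together with a delicate treatment of the Cauchy transform, and it is a genuinely sub-linear bound that the individual pieces of $M_i$ do not obviously reveal. An alternative route that trades the singular-integral bookkeeping for coefficient bookkeeping is to differentiate the commutative diagram \eqref{diagramFmu}, obtaining $\dot F^\mu(w)=\dot J(z)+J'(z)\,\dot f^\mu_{+}(z)$ with $w=J(z)$, and to insert the explicit expansions \eqref{Jexpansion} of $J$ near the fixed points together with the known form \eqref{fdotmupm} of $\dot f^\mu_{+}$; this transfers the analysis to $\UHP$, where the uniformising coordinate makes the logarithmic factors transparent (e.g. $z-z_j\sim\mathrm{const}/\log(w-w_j)$ at a cusp), at the cost of tracking the variations of the coefficients $J^{(j)}_k$. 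Either way, the decisive and subtle point is the passage from the order of vanishing of $M_i$ to the order of the remainder of $\dot F^i$.
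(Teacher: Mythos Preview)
Your approach for parts (i) and (ii) is exactly what the paper does: it simply points to \eqref{Fdot} and the factored form of $R$ in \eqref{Ridef} for (i), and to \eqref{Fdot} together with \eqref{MiQi} for (ii). For part (iii) the paper gives no argument at all, only a pointer to \cite[Remark~3]{park2015potentials}; your two sketched routes (Cauchy--Pompeiu estimate of the transform of $M_i$, or differentiating the diagram \eqref{diagramFmu} to get $\dot F^\mu(w)=\dot J_\mu(z)+J'(z)\dot f^\mu_+(z)$ and inserting the expansions \eqref{Jexpansion}) are both valid ways to recover what that reference supplies, and your identification of the sub-linear logarithmic remainder as the genuinely delicate step is accurate.
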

\begin{proof}
	The part \emph{(i)} can be proved by using the equation \eqref{Fdot} and the expression for $R(w,w_i)$ in \eqref{Ridef}. The part \emph{(ii)} can also be proved by using \eqref{Fdot} and  \eqref{MiQi}. The general idea of proof the part \emph{(iii)} can also be found in  \cite[Remark~3]{park2015potentials}.
\end{proof}

In section~\ref{sec:action}, we will need the derivatives of $\varphi$ with respect to the variables $w_1,\dots,w_{n-3}$ and the following lemma will enable us to give a geometric description of them (see \cite[Lemma~4]{Zograf1988ONLE}): 
\begin{lemma}\label{varp}
	The Liouville filed $\varphi$ is a continuously differentiable function on $\moduli_{0,n}$, and 
	\begin{equation}\label{widerivative}
	\partial_{w_i} \varphi + \dot{F}^{i} \partial_w \varphi + \partial_w \dot{F}^{i} = 0, \quad \text{for} \quad i=1,\dots,n-3,
	\end{equation}
\end{lemma}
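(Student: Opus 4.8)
The plan is to show that the combination on the left of \eqref{widerivative}, regarded as a function of $w$, satisfies a homogeneous elliptic eigenvalue equation admitting no nonzero solutions in the relevant function space, and is therefore identically zero. First I would dispose of the regularity claim: by the Ahlfors--Bers theory the normalized solution $f^{\varepsilon\mu}$ of the Beltrami equation, hence the covering map $J_\mu$, depends real-analytically on the moduli, so by \eqref{Liouvillefield} the field $\varphi$ is jointly (real-)analytic in $w$ and in $(w_1,\dots,w_{n-3})$. In particular $\partial_{w_i}\varphi$ exists and is continuous on $\moduli_{0,n}$, and mixed partials commute freely; this proves the first assertion and licenses the computations below (I would cite \cite{Zograf1988ONLE} for the precise dependence).

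Next, set $G:=\partial_{w_i}\varphi+\dot F^{i}\partial_w\varphi+\partial_w\dot F^{i}$; the aim is $G\equiv0$. The main algebraic step is to apply $\partial_w\partial_{\bar w}$ to $G$. Using part (ii) of Corollary~\ref{corollary:FdotiAsymp}, $\partial_{\bar w}\dot F^{i}=M_i=e^{-\varphi}\overline{Q_i}$, together with the antiholomorphy of $\overline{Q_i}$, one obtains the cancellation $M_i\partial_w\varphi+\partial_w M_i=0$, whence $\partial_{\bar w}G=\partial_{w_i}\partial_{\bar w}\varphi+\tfrac12\dot F^{i}e^{\varphi}$. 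Differentiating once more in $w$ and invoking the Liouville equation \eqref{Liouvilleequation} together with its $w_i$-derivative $\partial_w\partial_{\bar w}\partial_{w_i}\varphi=\tfrac12 e^{\varphi}\partial_{w_i}\varphi$, all terms regroup into
\begin{equation*}
\partial_w\partial_{\bar w}G=\tfrac12 e^{\varphi}\,G,\qquad\text{equivalently}\qquad\big(\Delta_0+\tfrac12\big)G=0,
\end{equation*}
with $\Delta_0=-e^{-\varphi}\partial_w\partial_{\bar w}$ the hyperbolic Laplacian.

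The hard part will be the boundary analysis needed to place $G$ in $L^2\big(X_O^{\text{reg}},e^{\varphi}\dd[2]{w}\big)$. Each constituent of $G$ develops a simple pole at the conical point $w_i$: by Corollary~\ref{corollary:FdotiAsymp}(iii) $\dot F^{i}(w_i)=1$ while $\partial_w\varphi$ has residue $-(1-1/m_i)$ there, and $\partial_{w_i}$ acting on the term $-2(1-1/m_i)\log|w-w_i|$ in the expansion of $\varphi$ produces the opposite residue $+(1-1/m_i)$; these poles cancel and $\partial_w\dot F^{i}$ stays finite, so $G$ is bounded near $w_i$. At the remaining cone points $\dot F^{i}$ vanishes and no pole appears, and an entirely analogous matching of the expansions recorded after \eqref{Liouvillefield} against Corollary~\ref{corollary:FdotiAsymp}(iii) shows $G$ remains bounded at the cusps $w_{n_e+1},\dots,w_{n-1}$ and at $w=\infty$. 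Since the orbisurface has finite hyperbolic area, a bounded $G$ lies in $L^2$, and the integrable nature of the singularities ensures $G$ carries no distributional mass there.

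Finally, the uniqueness: as $\Delta_0\ge0$ and the boundary terms in the integration by parts vanish for such $G$, one has $\langle(\Delta_0+\tfrac12)G,G\rangle\ge\tfrac12\,\norm{G}^2$; combined with $(\Delta_0+\tfrac12)G=0$ this forces $G\equiv0$, which is exactly \eqref{widerivative}. The genuine obstacle is the asymptotic bookkeeping of the third paragraph — verifying that the simple poles of the three terms cancel exactly at every singular point so that $G$ is truly bounded; once that is established, the eigenvalue computation and the spectral positivity of $\Delta_0+\tfrac12$ conclude the proof quickly.
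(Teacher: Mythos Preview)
Your approach is correct but takes a genuinely different route from the paper. The paper's proof is a direct application of Ahlfors' lemma (Lemma~\ref{Lemma:Ahlfors}): one writes the pullback of the Poincar\'{e} density as
\[
(f^{\varepsilon\mu_i})^{\ast}(\rho)=\exp\big(\varphi^{\varepsilon\mu_i}\circ F^{\varepsilon\mu_i}\circ J\big)\,|\partial_w F^{\varepsilon\mu_i}\circ J|^2\,|J'|^2
\]
via the commutative diagram~\eqref{diagramFmu}, and differentiates at $\varepsilon=0$; since $\Lie_{\mu_i}\rho=0$, the derivative vanishes, which is exactly \eqref{widerivative} after composing with $J$. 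No boundary analysis is needed: the identity holds pointwise because it is the logarithmic derivative of a quantity whose first variation vanishes.

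Your argument instead produces the identity by showing $G$ solves $(\Delta_0+\tfrac12)G=0$ and then invoking spectral positivity. The algebraic step is clean and correct, and the conclusion is valid once $G\in L^2$. The cost is precisely the asymptotic bookkeeping you flag: you need the leading singular behavior of $\partial_{w_i}\varphi$ at each marked point \emph{before} the lemma is established, whereas in the paper that behavior (Corollary~\ref{corollary:widerivativeasymptotics}) is a \emph{consequence} of \eqref{widerivative}. Your route around this—differentiating the explicit expansion of $\varphi$ coming from~\eqref{Liouvillefield} and~\eqref{JInverse}—is legitimate because the expansion coefficients depend smoothly on moduli, but it is more work than the paper's one-line appeal to Ahlfors. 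What your approach buys is a self-contained PDE argument that does not import Ahlfors' variational result and that makes the role of the resolvent $(\Delta_0+\tfrac12)^{-1}$ (which reappears in Wolpert's second-variation formula~\eqref{WolpertSecVar}) visible already at first order.
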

\begin{proof}
	It can be proved exactly in the same way as Lemma 4 in \cite{Zograf1988ONLE}. Let $\Gamma$ be a Fuchsian group of the first kind that uniformizes the orbifold Riemann surface $O$ and let $f^{\mu}(z)$ be the unique solution of the Beltrami equation \eqref{Beltramieq} with $\Hilbert^{-1,1}(\UHP,\Gamma)\ni \mu = t_1 \, \mu_1 + \dotsm + t_{n-3} \, \mu_{n-3}$ that fixes the points $z_{n-2}=0$, $z_{n-1}=1$, and $z_{n}=\infty$. The function $F^{\mu}(w) = (J_{\mu} \circ f^{\mu} \circ J^{-1})(w)$ is differentiable with respect to $w$ on $O$ and depends analytically on the Bers' coordinates $t_1,\dots,t_{n-3}$. It follows from the commutative diagram \eqref{diagramFmu} that $J_{\mu}$ and $J'_{\mu}$ are continuously differentiable with respect to the Bers' coordinates $t_1,\dots,t_{n-3}$ and that suitable branches of the functions $J_{\mu}^{-1}$ and $(J_{\mu}^{-1})'$ have this property locally outside the set of singular points. The continuous differentiability of $\varphi$ on  $\moduli_{0,n}$ now follows from Eq.\eqref{Liouvillefield} and Lemma~\ref{Lemma:covering}. As for equation~\eqref{widerivative}, it is a reformulation of lemma \ref{Lemma:Ahlfors} due to Ahlfors \cite{Ahlfors1961SomeRO} on the vanishing of the first variation of the area element in the Poincar\'{e} metric on $O$ under quasi-conformal mappings corresponding to harmonic Beltrami differentials: for any $\mu \in \Hilbert^{-1,1}(\UHP,\Gamma)$
	\begin{equation}\label{firstvariation}
	\Lie_{\mu}\rho = \pdv{\varepsilon}\Bigg|_{\varepsilon=0} (f^{\varepsilon\mu})^{\ast}(\rho)= \pdv{\varepsilon} \frac{|\partial_z f^{\varepsilon\mu}|^2}{\left(\Im f^{\varepsilon\mu}\right)^2}\Bigg|_{\varepsilon=0} = 0.
	\end{equation}
	
	Let $e^{\varphi^{\mu}(w)} \, |\dd{w}|^2$ be the hyperbolic metric on the Riemann orbisurface $O^{\mu}$ where $\varphi^{\mu}(w) = \varphi\left(w; F^{\mu}(w_1),\dots,F^{\mu}(w_{n-3})\right)$. From \eqref{diagramFmu} one has
	\begin{equation*}\label{pullbackfromXmu}
	(f^{\mu})^{\ast}(\rho) = (f^{\mu})^{\ast}\underbrace{(J_{\mu})^{\ast}(e^{\varphi^{\mu}})}_{\rho}.
	\end{equation*}
	Then, using Eq.\eqref{pullback} as well as $F^{\mu} \circ J = J_{\mu} \circ f^{\mu}$, one gets
	\begin{equation}
	\frac{|\partial_z f^{\mu}|^2}{\left(\Im f^{\mu}\right)^2} = \exp(\varphi^{\mu} \circ J_{\mu} \circ f^{\mu}) \left|\partial_z(J_{\mu} \circ f^{\mu})\right|^2 = \exp(\varphi^{\mu} \circ F^{\mu} \circ J ) \left|\partial_w F^{\mu} \circ J\right|^2 |J'|^2.
	\end{equation}
	Finally, it follows from \eqref{firstvariation} that
	\begin{equation}
	\pdv{\varepsilon}\Bigg|_{\varepsilon=0} e^{\varphi^{\varepsilon\mu} \circ F^{\varepsilon\mu} \circ J } \left|\partial_w F^{\varepsilon\mu} \circ J\right|^2 |J'|^2 = \pdv{\varphi^{\varepsilon \mu}}{\varepsilon}\Bigg|_{\varepsilon=0} + \partial_w \varphi \dot{F}^{\mu}+ \partial_w \dot{F}^{\mu}=0,
	\end{equation}
	which by setting $\mu =\mu_i$ and recalling Lemma \ref{Lemma:covering} (i.e. $d\Psi_{\Phi(0)}(\mu_i) = \partial/\partial_{w_i}$) gives us our desired result \eqref{widerivative}.
\end{proof}
\begin{corollary}\label{corollary:widerivativeasymptotics}
	According to Lemma \ref{lemma:asymptotics}, the $\partial_{w_i} \varphi$ has the following asymptotic expansion near the singular points
	\begin{equation}\nonumber
	\partial_{w_i} \varphi(w) = \left\{
	\begin{split}
	& - \delta_{ij}\partial_w \varphi(w) - \big((w-w_j) \partial_w \varphi(w)+1\big) \partial_w \dot{F}^i(w) + \Sorder{1} \quad  & \text{as} \quad & w \to w_{j\neq n},\\
	& - \big(w \partial_w \varphi(w)+1\big) \partial_w \dot{F}^i(w) + \Sorder{1} \quad &\text{as} \quad &  w \to w_n=\infty.
	\end{split}
	\right.
	\end{equation}
\end{corollary}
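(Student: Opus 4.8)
The plan is to treat the statement as the asymptotic evaluation of the \emph{exact} first-order identity of Lemma~\ref{varp}, $\partial_{w_i}\varphi = -\dot{F}^{i}\,\partial_w\varphi - \partial_w\dot{F}^{i}$, which holds on all of $X_O^{\text{reg}}$. Since the claimed formula is obtained simply by substituting the local behaviour of the two factors $\dot{F}^{i}$ and $\partial_w\varphi$, my only real task is to isolate, inside the product $\dot{F}^{i}\,\partial_w\varphi$, the genuinely bounded contribution from the parts that inherit the $(w-w_j)^{-1}$ singularity of $\partial_w\varphi$. For the input I would use Corollary~\ref{corollary:FdotiAsymp}(iii) for the expansion of $\dot{F}^{i}$ and Lemma~\ref{lemma:asymptotics} for the leading singular behaviour of $\partial_w\varphi$.

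Concretely, fix a finite singular point $w_j$ ($j\neq n$) and record from Lemma~\ref{lemma:asymptotics} that $\partial_w\varphi(w) = -\tfrac{1-1/m_j}{w-w_j}+\Sorder{1}$ at a cone point and $\partial_w\varphi(w)=-\tfrac{1}{w-w_j}+\Sorder{1}$ at a cusp; in either case $\partial_w\varphi(w)=\Sorder{|w-w_j|^{-1}}$ while the product $(w-w_j)\,\partial_w\varphi(w)=\Sorder{1}$ stays bounded. I then use the trivial decomposition
\[
\dot{F}^{i}\,\partial_w\varphi = \delta_{ij}\,\partial_w\varphi + (w-w_j)\,\partial_w\dot{F}^{i}\,\partial_w\varphi + g_j(w)\,\partial_w\varphi, \qquad g_j := \dot{F}^{i} - \delta_{ij} - (w-w_j)\,\partial_w\dot{F}^{i},
\]
so that Lemma~\ref{varp} becomes $\partial_{w_i}\varphi = -\delta_{ij}\,\partial_w\varphi - \big((w-w_j)\,\partial_w\varphi + 1\big)\partial_w\dot{F}^{i} - g_j(w)\,\partial_w\varphi$. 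Matching this against the Corollary, the entire statement reduces to the single estimate $g_j(w)\,\partial_w\varphi(w)=\Sorder{1}$. Inserting Corollary~\ref{corollary:FdotiAsymp}(iii) in the form $\dot{F}^{i}(w)-\delta_{ij} = (w-w_j)\,\partial_w\dot{F}^{i}(w_j)+\Sorder{|w-w_j|/\log|w-w_j|}$ gives $g_j(w) = (w-w_j)\big(\partial_w\dot{F}^{i}(w_j)-\partial_w\dot{F}^{i}(w)\big)+\Sorder{|w-w_j|/\log|w-w_j|}$; multiplying by $\partial_w\varphi=\Sorder{|w-w_j|^{-1}}$ converts the error into $\Sorder{1/\log|w-w_j|}$ and leaves the bounded factor $(w-w_j)\partial_w\varphi$ times $\big(\partial_w\dot{F}^{i}(w_j)-\partial_w\dot{F}^{i}(w)\big)$.

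The case $w\to w_n=\infty$ is identical but with no Kronecker term (recall $i\le n-3$): from the third branch of Corollary~\ref{corollary:FdotiAsymp}(iii), $\dot{F}^{i}(w)=w\,\partial_w\dot{F}^{i}(\infty)+\Sorder{|w|/\log|w|}$, while Lemma~\ref{lemma:asymptotics} gives $\partial_w\varphi(w)=-\tfrac1w+\Sorder{|w|^{-2}}$ so that $w\,\partial_w\varphi(w)=\Sorder{1}$. The analogue of $g_j$ is $g_\infty:=\dot{F}^{i}-w\,\partial_w\dot{F}^{i}$, and the same manipulation yields $g_\infty\,\partial_w\varphi=\Sorder{1}$, which is precisely the third line of the Corollary.

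The main obstacle is the analytic input buried in the last step: one must know that the \emph{holomorphic} derivative $\partial_w\dot{F}^{i}$ stays bounded (indeed has a finite limit) at the singular and parabolic points, so that $\partial_w\dot{F}^{i}(w_j)-\partial_w\dot{F}^{i}(w)=\Sorder{1}$ and hence the tangent-approximation remainder $g_j$ is $\Sorder{|w-w_j|}$ rather than merely controlled through the weaker $\mathcal{O}\big(|w-w_j|/\log|w-w_j|\big)$ bound. This boundedness is not a formal consequence of Lemma~\ref{varp}; it follows from the singular-integral representation $\dot{F}^{i}(w)=-\tfrac1\pi\iint_{\cmpx}M_i(w')\,R(w',w)\,\dd[2]{w'}$ of Eq.~\eqref{Fdot}, together with the decay $M_i=e^{-\varphi}\overline{Q_i}=\Sorder{|w-w_j|^{1-2/m_j}}$ at cone points and $M_i\to 0$ at cusps, read off from Eq.~\eqref{MiQi} and the asymptotics of $\varphi$. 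A Calder\'on--Zygmund (Beurling-transform) estimate applied to the $w$-derivative of the Cauchy-type kernel $R(w',w)$ then delivers the required continuity and boundedness of $\partial_w\dot{F}^{i}$, which is exactly what closes the estimate $g_j\,\partial_w\varphi=\Sorder{1}$.
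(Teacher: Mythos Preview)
Your proposal is correct and takes essentially the same approach as the paper: start from the exact identity $\partial_{w_i}\varphi = -\dot{F}^{i}\partial_w\varphi - \partial_w\dot{F}^{i}$ of Lemma~\ref{varp} and substitute the asymptotics of $\dot{F}^{i}$ from Corollary~\ref{corollary:FdotiAsymp}. The paper's own proof is a single sentence pointing to exactly these two inputs; your decomposition via the remainder $g_j$ and your explicit identification of the analytic input needed to bound $g_j\,\partial_w\varphi$ (namely the finiteness of $\partial_w\dot{F}^{i}$ at the marked points, which the paper takes as part of Corollary~\ref{corollary:FdotiAsymp} and which indeed follows from the integral representation~\eqref{Fdot}) is more explicit than what the paper writes.
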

\begin{proof}
	It follows from \eqref{widerivative}
	and from the asymptotics of $\dot{F}^{i}$ (see Corollary~\ref{corollary:FdotiAsymp}).
\end{proof}
\begin{remark}
	One can also use the results of Ahlfors and Wolpert mentioned in subsection~\ref{subsub:variationalformulas}, i.e. lemma~\ref{Lemma:Ahlfors} and Eq.\eqref{WolpertSecVar}, to calculate the variations of  $\exp(\varphi^{\varepsilon\mu}(w)) $ on the orbifold Riemann surfaces $O^{\varepsilon \mu} = F^{\varepsilon \mu}(O)$. To do so, let us use the commutative diagram \eqref{diagramFmu} once again to write
	\begin{equation}
	(F^{\mu})^{\ast} (e^{\varphi^{\mu}}) = (J^{-1})^{\ast}(f^{\mu})^{\ast}(\rho).
	\end{equation}
	Then, it is easy to show that the variations of hyperbolic metrics on Riemann orbisurfaces $O^{\varepsilon \mu}$ are given by the same formulas in subsection~\ref{subsub:variationalformulas} but with $\rho$ replaced by $e^{\varphi}$ and $f_{\mu \overline{\tilde\mu}}$ replaced by $(J^{-1})^{\ast}(f_{\mu \overline{\tilde\mu}}) = f_{\mu \overline{\tilde\mu}} \circ J^{-1}$. Moreover, for any $\mathtt{a} \in \mathbb{R}$ we have
	\begin{equation}\label{pullbackpower}
	(F^{\mu})^{\ast} (e^{\mathtt{a} \varphi^{\mu}}) = \left((F^{\mu})^{\ast} (e^{\varphi^{\mu}})\right)^{\mathtt{a}}.
	\end{equation}
	Therefore, we get the following formulas for the first and second variations of $\exp(\varphi^{\varepsilon\mu}(w))$
	\begin{equation}\label{expphivariations}
	\left\{
	\begin{split}
	& \pdv{\varepsilon}\Bigg|_{\varepsilon=0} (F^{\varepsilon\mu})^{\ast} (e^{\mathtt{a} \varphi^{\varepsilon\mu}}) = 0, \\ \\
	& \pdv[2]{}{\varepsilon_1}{\bar{\varepsilon_2}} \Bigg|_{\varepsilon_1= \varepsilon_2 = 0} (F^{\varepsilon_1\mu+ \varepsilon_2 \tilde\mu})^{\ast} (e^{\mathtt{a} \varphi^{\varepsilon\mu}}) = \frac{\mathtt{a}}{2} e^{\mathtt{a}\varphi} f_{\mu \overline{\tilde\mu}} \circ J^{-1}.
	\end{split}
	\right.
	\end{equation}
\end{remark}

\subsection{Schottky Space $\schottky_{g,n}(\boldsymbol{m})$}\label{subsec:Schottkyspace}
Let us start this subsection by recalling with more detail how a compact Riemann surface $X$ of genus $g\geq2$ is uniformized by a Schottky group. We begin with a few well-known definitions. Schottky groups are an important class of \emph{Kleinian groups}: discrete subgroups of M\"{o}bius group $\PSLC$ that act properly discontinuous on some domain (called region of discontinuity) of the Riemann sphere $\hat{\cmpx}$. A \emph{Schottky group} $\Sigma$ is strictly loxodromic Kleinian group which is also free and finitely generated \cite{Maskit_1967}. If we denote its limit set by $\Lambda$ (which is a Cantor set)\footnote{For more details on geometry of limit sets see Ref.~\cite{Seade_2015}.} then the \emph{region of discontinuity} $\Omega = \hat{\cmpx}\backslash\Lambda$ would be connected. Moreover, a Schottky group $\Sigma$ of rank $g$ will be called \emph{marked} if we choose a relation-free system of generators $L_1, \dots, L_g  \in \PSLC$. There is also a notion of equivalence between two marked Schottky groups: $(\Sigma; L_1, \dots, L_g)$ is equivalent to $(\tilde{\Sigma}; \tilde{L}_1, \dots, \tilde{L}_g)$ if there exists a M\"{o}bius transformation $\varsigma \in \PSLC$ such that $\tilde{L}_i = \varsigma  L_i \varsigma^{-1}$ for all $i=1,\dots,g$. The set of equivalence classes of marked Schottky groups of genus $g$ is called the \emph{Schottky space} of genus $g$ and is denoted by $\schottky_{g}$. Similar to Fuchsian groups, Schottky groups can be used to construct surfaces since the action of $\Sigma$ on $\Omega$ produces a \emph{compact} Riemann surface $\Omega\slash\Sigma$. An important result is that for every marked Schottky group $(\Sigma;L_1,\dots,L_g)$ there is a \emph{fundamental domain} $\SchottkyFund$\footnote{However, this fundamental domain $\SchottkyFund$ is \emph{not uniquely determined} by the choice of marking for the Schottky group $\Sigma$.\label{funddomnotunique}} for $\Sigma$ in $\Omega$. This domain is a (connected) region in $\hat{\cmpx}$ and it is bounded by $2g$ disjoint Jordan curves $C_1, \dots, C_g, C'_1, \dots, C'_g$ with $C'_i = -L_i(C_i)$, $i=1,\dots, g$. The orientations of $C_i$ and $C'_i$ are opposite and related to components of  $\partial \SchottkyFund$. The standard form for representation of each $L_i$ is
\begin{equation}
\bigg(L_i (w) - \tilde{a}_i\bigg)(w-\tilde{b}_i) = \lambda_i\hspace{1mm}\bigg(L_i (w) - \tilde{b}_i\bigg)(w-\tilde{a}_i), \qquad w \in \hat{\cmpx},
\end{equation}
where $\tilde{a}_i$ and $\tilde{b}_i$ are the respective \emph{attracting} and \emph{repelling} fixed points of the loxodromic element $L_i$ and $0 < |\lambda_i| < 1$ is the corresponding multiplier. Given this normal form, one can explicitly construct the fundamental domain $\SchottkyFund$ in the following way:\footnote{For more details, see \cite[Appendix~C]{Gaberdiel:2010jf}.} Let us define the M\"{o}bius transformations
\begin{equation}
\varsigma_{\tilde{a}_i,\tilde{b}_i}(w) = \frac{\tilde{b}_i w + \tilde{a}_i}{w+1},
\end{equation}
satisfying $\varsigma_{\tilde{a}_i,\tilde{b}_i}(0) =  \tilde{a}_i$ and $\varsigma_{\tilde{a}_i,\tilde{b}_i}(\infty) =  \tilde{b}_i$, so that the generators $L_i$ of the marked Schottky group $(\Sigma; L_1, \dots, L_g)$ can be written as
\begin{equation}
L_i = \varsigma_{\tilde{a}_i,\tilde{b}_i} \hspace{1mm}\varsigma_{\lambda_i}\hspace{1mm} \varsigma_{\tilde{a}_i,\tilde{b}_i}^{-1} \quad \text{for} \quad i=1,\dots,g.
\end{equation}
In the above equation, the M\"{o}bius transformation $\varsigma_{\lambda_i}$ is defined by $\varsigma_{\lambda_i}(w) = \lambda_i w$. Then, a fundamental domain for $(\Sigma; L_1, \dots, L_g)$ is given by
\begin{equation}\label{schottkyfunddomaindef}
\SchottkyFund \equidef  \hat{\cmpx} \Big\backslash \bigcup_{i=1}^{g} (\mathrm{D}_i \cup \mathrm{D}_{-i}),
\end{equation}
where 
\begin{equation}
\left\{
\begin{split}
& \mathrm{D}_i = \left\{ w \in \cmpx \, \Bigg| \, \frac{|w-\tilde{a}_i|}{|w-\tilde{b}_i|} < |\mathrm{R}_i|\right\} = \varsigma_{\tilde{a}_i,\tilde{b}_i} \hspace{1mm}\varsigma_{\mathrm{R}_i}(\mathrm{D}),\\ \\
& \mathrm{D}_{-i} = \left\{ w \in \cmpx \, \Bigg| \, \frac{|w-\tilde{b}_i|}{|w-\tilde{a}_i|} < |\mathrm{R}_{-i}|\right\} = \varsigma_{\tilde{a}_i,\tilde{b}_i} \hspace{1mm}\varsigma_{inv} \hspace{1mm}\varsigma_{\mathrm{R}_{-i}}(\mathrm{D}),
\end{split}
\right.
\end{equation}
$\varsigma_{inv}$ is defined by $\varsigma_{inv}(w) = -1/w$, and $\mathrm{D}$ is the unit disk,
\begin{equation}
\mathrm{D}= \left\{ w \in \cmpx \, \Big| \, |w| < 1 \right\}.
\end{equation}
Here $\mathrm{R}_i$ and $\mathrm{R}_{-i}$ represent the radii of disks $\mathrm{D}_i$ and $\mathrm{D}_{-i}$ respectively and satisfy\footnote{Equation~\eqref{radiiprod} makes clear the fact that,  as mentioned in footnote~\ref{funddomnotunique}, the fundamental domain $\SchottkyFund$  cannot be uniquely determined by a choice of marking for the Schottky group $\Sigma$.\label{nonuniqueschottkyfund}}
\begin{equation}\label{radiiprod}
\mathrm{R}_i \mathrm{R}_{-i} = \lambda_i \quad \text{for} \quad i=1,\dots,g.
\end{equation}
The boundary $\partial \SchottkyFund = \bigcup_i C_i \cup C'_i$ has components
\begin{equation}
C_i = \varsigma_{\tilde{a}_i,\tilde{b}_i} \hspace{1mm}\varsigma_{\mathrm{R}_i}(C) \quad \text{and} \quad C'_i = \varsigma_{\tilde{a}_i,\tilde{b}_i} \hspace{1mm}\varsigma_{inv} \hspace{1mm}\varsigma_{\mathrm{R}_{-i}}(C),
\end{equation}
where $C = \partial \mathrm{D}$ is the unit circle. In the rest of this article, we will consistently presume that a marked Schottky group is \emph{normalized}. This means that $\tilde{a}_1$ equals 0, $\tilde{b}_1$ is infinity, and $\tilde{a}_2$ is 1. In particular, this means that $ \infty \notin \SchottkyFund$.

Under the canonical holomorphic map $\Omega \to \Omega\slash\Sigma$, the boundary curves of a standard fundamental domain described above are mapped onto smooth non-intersecting simple closed curves $\mathsf{a}_1, \dots, \mathsf{a}_g$ on the Riemann surface. This motivates the following terminology introduced by Bers \cite{Bers1961UniformizationBB}: A complete set of \emph{retrosections} on a Riemann surface of genus $g$ is a choice of $g$ smooth simple non-intersecting, homologically independent, closed curves $\mathsf{a}_1, \dots, \mathsf{a}_g$. We therefore see that a marked Schottky group, together with the choice of a standard fundamental domain, determines a Riemann surface with a complete set of retrosections. A much more profound statement is that \emph{every} compact Riemann surface can be obtained in this way, which is the content of the classical Koebe's retrosection theorem \cite{Kra_1972}:
\begin{theorem*}[Koebe]
	For every compact Riemann surface $X$ of genus $g$ with a complete set of retrosections $(\mathsf{a}_1, \dots, \mathsf{a}_g)$, there exists a marked Schottky group of genus $g$, $(\Sigma; L_1, \dots, L_g)$, and a fundamental domain $\SchottkyFund \subset \Omega$ for $\Sigma$ with $2g$ boundary curves $C_1,\dots,C_g,C'_1,\dots,C'_g$ such that $X = \Omega \slash \Sigma$ and the map $\Omega \to X$ sends both $C_i$ and $C'_i$ to $\mathsf{a}_i$. Moreover, the marked Schottky group is unique up to equivalence $(\Sigma; L_1, \dots, L_g) \sim (\tilde{\Sigma}; \tilde{L}_1, \dots, \tilde{L}_g)$ defined before as well as $L_i \mapsto L_i^{-1}$.
\end{theorem*}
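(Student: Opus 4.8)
The plan is to construct the marked Schottky group by cutting $X$ open along its retrosections and then uniformizing the resulting planar piece as a domain bounded by circles in $\hat{\cmpx}$. First I would cut $X$ along the curves $\mathsf{a}_1, \dots, \mathsf{a}_g$. Since these are $g$ smooth, pairwise disjoint, homologically independent simple closed curves, cutting along all of them yields a \emph{connected} bordered surface $F$: homological independence guarantees each successive curve remains non-separating, so each cut lowers the genus by one while creating two boundary contours. An Euler-characteristic count ($\chi(F) = \chi(X) = 2-2g$) confirms that $F$ has genus $0$ with exactly $2g$ boundary circles, each $\mathsf{a}_i$ giving rise to a pair that I label $C_i$ and $C_i'$. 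The surface $F$ inherits the complex structure of $X$, making it a bordered Riemann surface of genus zero.

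Next I would realize $F$ conformally as a planar \emph{circle domain}. This is the analytic heart of the argument: by the uniformization theory for finitely connected planar Riemann surfaces (Koebe's Kreisnormierungssatz), $F$ is biholomorphic to a domain $\SchottkyFund \subset \hat{\cmpx}$ whose complement is a disjoint union of $2g$ round closed disks, so that $\partial \SchottkyFund$ consists of $2g$ disjoint circles carrying the transported labels $C_1,\dots,C_g,C_1',\dots,C_g'$. After this identification I may use a single M\"obius transformation to normalize so that $\tilde{a}_1 = 0$, $\tilde{b}_1 = \infty$, and $\tilde{a}_2 = 1$. Reconstructing $X$ from $F$ amounts to a biholomorphic identification of a collar of $C_i$ with a collar of $C_i'$, orientation-reversing with respect to the outward boundary orientations. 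I claim this identification is the restriction of a M\"obius map $L_i$: since it conformally matches neighborhoods of two \emph{round} circles, I would apply the Schwarz reflection principle across $C_i$ and $C_i'$ — reflections in circles being anticonformal M\"obius involutions — to continue the gluing map to an annular neighborhood on the sphere; a conformal map intertwining two circle reflections must coincide with a single element $L_i \in \PSLC$. Each such $L_i$ maps the disk exterior to $C_i$ onto the disk interior to $C_i'$, hence is strictly loxodromic, with $C_i' = -L_i(C_i)$ in the stated orientation.

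I would then set $\Sigma = \langle L_1,\dots,L_g\rangle$. Because $\SchottkyFund$ is bounded by the $2g$ circles paired by the $L_i$, the Klein combination (ping-pong) lemma shows that $\Sigma$ acts freely and properly discontinuously on its region of discontinuity $\Omega$, is free of rank $g$ on the $L_i$, and is strictly loxodromic — that is, a marked Schottky group with $\SchottkyFund$ as fundamental domain. By construction the identifications effected by $\Omega/\Sigma$ glue $C_i$ to $C_i'$ precisely as in the reconstruction of $X$ from $F$, so $\Omega/\Sigma \cong X$ and the covering $\Omega \to X$ sends both $C_i$ and $C_i'$ to $\mathsf{a}_i$, as required. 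For uniqueness, if two such data $(\Sigma; L_i)$ and $(\tilde{\Sigma}; \tilde{L}_i)$ both arise, each furnishes a circle-domain fundamental region uniformizing the \emph{same} cut surface $F$; the rigidity of circle domains (any conformal map between circle domains extends to a global M\"obius transformation of $\hat{\cmpx}$) forces the two fundamental domains, and hence the two groups, to be conjugate by some $\varsigma \in \PSLC$. The only residual freedom is $L_i \mapsto L_i^{-1}$, corresponding to the choice of which boundary curve in the pair $\{C_i, C_i'\}$ plays the distinguished role.

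The main obstacle is the circle-domain uniformization of the planar surface $F$ together with the rigidity statement invoked for uniqueness; these are the genuinely deep inputs. Everything downstream — the Schwarz-reflection extension of the gluing maps to honest elements of $\PSLC$, and the ping-pong verification that the generated group is a Schottky group with the prescribed fundamental domain — is then essentially formal. A secondary technical point requiring care is the regularity of the gluing collars and the matching of orientations, which must be arranged so that the extended $L_i$ pair the circles in the manner dictated by the standard form in \eqref{schottkyfunddomaindef}.
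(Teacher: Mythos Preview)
The paper does not prove this theorem. It is stated as the classical Koebe retrosection theorem with a citation to \cite{Kra_1972}, and immediately followed by a remark; no argument is given. So there is no ``paper's own proof'' to compare your proposal against.

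That said, your outline is essentially the standard route to the result and is sound in its broad strokes: cutting along the retrosections to obtain a planar bordered surface of connectivity $2g$, invoking Koebe's Kreisnormierungssatz to realize it as a circle domain, using Schwarz reflection to see that the gluing identifications are restrictions of M\"obius maps, and then applying the Klein combination theorem. Two small remarks. First, what you produce this way is a \emph{classical} Schottky group (round circles as boundary curves), which is slightly stronger than what the statement requires, since the paper's definition of $\SchottkyFund$ allows general Jordan boundary curves; this is harmless, but worth noting. Second, your uniqueness argument leans on rigidity of circle domains, which again proves a bit more than asserted: the statement only claims uniqueness up to $\PSLC$-conjugacy and $L_i \mapsto L_i^{-1}$, and that much follows already from the observation that any two Schottky uniformizations of the same marked retrosected surface are intertwined by a conformal map of their regions of discontinuity, which extends to a M\"obius transformation by the removability of the limit set.
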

\begin{remark}\label{Bcycle}
	The above theorem implies that given a compact Riemann surface $X$ uniformized by the marked Schottky group  $(\Sigma; L_1, \dots, L_g)$, we can take the homology classes of $C_1,\dots,C_g$ as the generators $[\mathsf{a}_1], \dots, [\mathsf{a}_g]$ in the symplectic basis of the first homology group $H_1(X,\mathbb{Z})$. However, determining a canonical basis for $H_1(X,\mathbb{Z})$, i.e. a symplectic basis $\left\{[\mathsf{a}_1], \dots, [\mathsf{a}_g], [\mathsf{b}_1], \dots, [\mathsf{b}_g] \right\}$ with intersection pairings given by $\#\left([\mathsf{a}_i], [\mathsf{a}_j]\right) = 0 = \#\left([\mathsf{b}_i], [\mathsf{b}_j]\right)$ and $\#\left([\mathsf{a}_i], [\mathsf{b}_j]\right) = \delta_{ij}$, depends also on the choice of  $\mathsf{b}$-cycles on the Riemann surface $X$. Therefore, we can choose the elements $[\mathsf{b}_1],\dots, [\mathsf{b}_g]$ in the canonical basis of $H_1(X,\mathbb{Z})$ such that the projections of their representative curves onto the marked Schottky group $\Sigma$ are precisely the marked generators $L_1,\dots, L_g$.
\end{remark}
\begin{remark}
The association between the set of normalized marked Schottky groups and the Schottky space $\schottky_g$, found within $\cmpx^{3g-3}$, is evidently bijective.\footnote{The space $\schottky_{g}$ is a finite covering of the moduli space $\moduli_g$ of compact Riemann surfaces.}
\end{remark}
For our purposes in the following sections, it will be crucial to give another (equivalent) definition for the Schottky space $\schottky_{g}$. Let $\diffspace^{-1,1}(\Omega,\Sigma)$ be the complex Banach space of Beltrami differentials for $\Sigma$ and, in analogy with the Teichm\"{u}ller case, let us define the deformation space $\deformspace(\Sigma)$ to be the open ball of radius $1$ (in the sense of $L^{\infty}$-norm) in $\diffspace^{-1,1}(\Omega,\Sigma)$:
\begin{equation}
\deformspace(\Sigma) = \left\{\mu \in \diffspace^{-1,1}(\Omega,\Sigma) \,  \big| \, \norm{\mu}_{\infty} <1 \right\}.
\end{equation}
A homeomorphism $F$ of a plane domain $\Omega$ onto another plane domain $\tilde{\Omega}$ is said to be \emph{quasi-conformal} if it satisfies the Beltrami equation at each point in $\Omega$. For each $\mu \in \deformspace(\Sigma)$, let $F^{\mu}$ be the unique normalized (i.e. $F^{\mu}(0)=0$ and  $F^{\mu}(1)=1$) solution of the corresponding Beltrami equation on $\cmpx$ that gives a quasi-conformal homeomorphism of $\cmpx$ onto itself. Then, the restriction of $F^{\mu}$ to the region of discontinuity $\Omega \subset \cmpx$ gives the desired quasi-conformal homeomorphism $F^{\mu}: \Omega \to \Omega^{\mu}$ and each element $\mu \in \deformspace(\Sigma)$ gives a faithful representation $\varrho_{\mu}$ of $\Sigma$ in $\PSLC$ according to the formula $\sigma \mapsto F^{\mu} \circ \sigma \circ (F^{\mu})^{-1}$, $\sigma \in \Sigma$. As mentioned before, two representations $\varrho_{\mu_1}$ and $\varrho_{\mu_2}$ are equivalent if they differ by an \emph{inner automorphism} of $\PSLC$, i.e., if $\varrho_{\mu_2} = \varsigma \varrho_{\mu_1} \varsigma^{-1}$, $\varsigma \in \PSLC$. Accordingly, the Schottky space $\schottky_{g}$ is defined to be the set of equivalence classes of representations $\left[\varrho_{\mu}\right] : \Sigma \to \PSLC$, $\mu \in \deformspace(\Sigma)$. In other words,
\begin{equation}\label{schottkyspcae}
\schottky_{g} \cong \deformspace(\Sigma)/\sim,
\end{equation}
where $\mu_1 \sim \mu_2$ if and only if $F^{\mu_1} \circ \sigma \circ (F^{\mu_1})^{-1} = F^{\mu_2} \circ \sigma \circ (F^{\mu_2})^{-1}$ for all $\sigma \in \Sigma$ (or equivalently, $F^{\mu_1} \big|_{\Lambda} = F^{\mu_2}\big|_{\Lambda}$). At $\mu=0$ one recovers the group $\Sigma$ which corresponds to the base point of $\schottky_{g}$.

The above alternative definition of Schottky space $\schottky_{g}$ gives us the opportunity to define also the generalized Schottky space $\schottky_{g,n}(\boldsymbol{m})$ for Riemann orbisurfaces with signature $(g;m_1,\dots,m_{n_e};n_p)$. Let us consider the configuration spaces $\confspace{n}{\Omega^{\mu} \slash \Sigma^{\mu}}= \confspace{n}{\SchottkyFund^{\mu}}$ with $\Sigma^{\mu} = F^{\mu} \circ \Sigma \circ (F^{\mu})^{-1},\hspace{1mm} \Omega^{\mu} = F^{\mu}(\Omega)$ and 
the deformation space of a marked Schottky group $(\Sigma;L_1,\dots,L_g)$ together with a point $(w_1,\dots,w_{n_e},w_{n_e+1},\dots,w_{n}) \in \confspace{n}{\SchottkyFund}$,
\begin{multline}\label{Dsigmamu}
\deformspace(\Sigma;L_1,\dots,L_g;w_1,\dots,w_{n_e};w_{n_e+1},\dots,w_{n})\\
=\left\{(\mu ; w_1^{\mu},\dots,w_{n_e}^{\mu};w_{n_e+1}^{\mu},\dots,w_{n}^{\mu}) \in \diffspace^{-1,1}(\Omega,\Sigma) \times \confspace{n}{\SchottkyFund^{\mu}} \, \Big| \, \norm{\mu}_{\infty} < 1 \right\},
\end{multline}
where $w^{\mu}_i = F^{\mu}(w_i)$. Just as in the case of $\schottky_{g}$, each element $\mu \in \deformspace(\Sigma;L_1,\dots,L_g;w_1,\dots,w_{n})$ gives a faithful representation $\varrho_{\mu}$ of  $(\Sigma;L_1,\dots,L_g;w_1,\dots,w_{n})$ in $\PSLC \times \confspace{n}{\SchottkyFund^{\mu}}$ according to the formula $L_i \mapsto F^{\mu} \circ L_i \circ (F^{\mu})^{-1}$ for all marked generators $L_i \in \Sigma$, $i=1,\dots,g$ and $w_j \mapsto F^{\mu}(w_j)$ for $j=1,\dots,n$. Two representations are then equivalent, $\varrho_{\mu_1} \sim \varrho_{\mu_2}$, if  $F^{\mu_1} \circ L_i \circ (F^{\mu_1})^{-1} = F^{\mu_2} \circ L_i \circ (F^{\mu_2})^{-1}$ for all $i=1,\dots,g$ as well as $w_j^{\mu_1} = w_j^{\mu_2}$ for all $j=1,\dots,n$ and the Schottky space $\schottky_{g,n}(\boldsymbol{m})$ is defined to be the set of equivalence classes of representations $\left[\varrho_{\mu}\right]$
\begin{equation}\label{genschottkyspcae}
\schottky_{g,n}(\boldsymbol{m}) \cong \deformspace(\Sigma;L_1,\dots,L_g;w_1,\dots,w_{n_e};w_{n_e+1},\dots,w_{n})/\sim.
\end{equation}
Let us remind that the Schottky uniformization of an orbisurface $O$ is connected with the Fuchsian uniformization of it by the commutative diagram \eqref{globalcoords}. Accordingly, each marked Fuchsian group $\Gamma$ with signature $(g;m_1,\dots,m_{n_e},n_p)$  corresponds to the unique marked normalized Schottky group $\Sigma \simeq \Gamma/N$ with the domain of discontinuity $\singrigon$ such that $\UHP\slash\Gamma \cong  \singrigon\slash \Sigma $ and this correspondence determines the following map
\begin{equation}
\pi: \teich_{g,n}(\boldsymbol{m}) \to \schottky_{g,n}(\boldsymbol{m})
\end{equation}
by putting $w_i = J(z_i)$ for $i=1,\dots,n$.\footnote{This map has the same role as the covering map $\Psi$ in lemma~\ref{Lemma:covering} and is a complex covering map.} We can use this map to understand the tangent and cotangent space to the Schottky space $\schottky_{g,n}(\boldsymbol{m})$: Elements of $\Hilbert^{2,0}(\UHP,\Gamma)$ descend to meromorphic quadratic differentials for $\Sigma$ --- i.e. automorphic forms of weight 4 for $\Sigma$ with simple poles at singular points of $\singrigon$. The space of meromorphic quadratic differentials for $\Sigma$ will be denoted by $\Hilbert^{2,0}(\singrigon,\Sigma)$ and each $Q \in \Hilbert^{2,0}(\singrigon,\Sigma)$ has the form
\begin{equation}
	Q(w) = (q \circ J^{-1})(w) \left(J^{-1}(w)'\right)^2, \qquad q \in \Hilbert^{2,0}(\UHP,\Gamma). 
\end{equation}
The vector space $\Hilbert^{2,0}(\singrigon,\Sigma)$ coincides with the holomorphic cotangent space $T_{\pi \circ \Phi(0)}^{\ast}\schottky_{g,n}(\boldsymbol{m})$ to $\schottky_{g,n}(\boldsymbol{m})$ at the origin. This implies that the holomorphic tangent space $T_{\pi \circ \Phi(0)}\schottky_{g,n}(\boldsymbol{m})$ is identified with the complex vector space $\Hilbert^{-1,1}(\singrigon,\Sigma)$ of harmonic Beltrami differentials.\footnote{Harmonic with respect to the hyperbolic metric on $\singrigon$.} Each $M \in \Hilbert^{-1,1}(\singrigon,\Sigma)$ has the form
\begin{equation}
M(w) = e^{-\varphi(w)} \overline{Q(w)}, \qquad Q \in \Hilbert^{2,0}(\singrigon,\Sigma).
\end{equation}
From Eq.\eqref{Dsigmamu}, which  implies that there exists a fibration $\jmath: \schottky_{g,n}(\boldsymbol{m}) \to \schottky_g$ whose fibers over the points $\pi\circ \Phi(\mu) \in \schottky_{g}$ are $\confspace{n}{\Omega^{\mu} \slash \Sigma^{\mu}}$, it follows that $T_{\pi\circ\Phi(0)}^{\ast}\schottky_{g,n}(\boldsymbol{m})$ has a subspace $\jmath^{\ast}(T_{\pi\circ\Phi(0)}^{\ast}\schottky_{g}) \cong \Hilbert^{2,0}(\Omega,\Sigma)$. The standard basis in this subspace of $\Hilbert^{2,0}(\Omega,\Sigma)$ is given by the following holomorphic automorphic forms of weight four for Schottky group, \begin{equation}
P_1(w), \dots,P_{3g-3}(w)\in \Hilbert^{2,0}(\Omega,\Sigma).
\end{equation}
Moreover, the $P_i$s actually coincide with the following cotangent vectors 
\begin{equation}\label{ls}
\dd{\lambda_1},\dots,\dd{\lambda_g},\dd{a_3}, \dots, \dd{a_g}, \dd{b_2}, \dots, \dd{b_g} \in  \jmath^{\ast}(T_{\pi\circ\Phi(0)}^{\ast}\schottky_{g}).
\end{equation}
The subspace that is isomorphic to $T_{\pi\circ\Phi(0)}^{\ast}\confspace{n}{\SchottkyFund}$ corresponds to the complement of 
$\jmath^{\ast}(T_{\pi\circ\Phi(0)}^{\ast}$
$\schottky_{g})$ in $T_{\pi\circ\Phi(0)}^{\ast}\schottky_{g,n}(\boldsymbol{m})$. This subspace, is, in fact, the cotangent space to the configuration space at $(w_1,\dots,w_{n})$. In other words, we have
\begin{equation}
T_{\pi\circ\Phi(0)}^{\ast}\schottky_{g,n}(\boldsymbol{m}) \cong \jmath^{\ast}(T_{\pi\circ\Phi(0)}^{\ast}\schottky_{g}) \oplus T_{\pi\circ\Phi(0)}^{\ast}\confspace{n}{\SchottkyFund}.
\end{equation}
It follows from Eq.\eqref{Fdot} that a standard basis for $T_{\pi\circ\Phi(0)}^{\ast}\confspace{n}{\SchottkyFund}$ is given by the following meromorphic automorphic forms of weight four,
\begin{equation}\label{ls2}
P_{3g-3+i}(w) = -\frac{1}{\pi} \sum_{\sigma \in \Sigma} R(\sigma w, w_i) \sigma'(w)^2, \qquad w \in \Omega^{\text{reg}},
\end{equation}
which represent $\dd{w_i}$ for $i=1,\dots,n$. According to the following pairing, which actually is an analog of pairing \eqref{pairing}
\begin{equation}\label{Schottkypairing}
\left( Q, M \right) = \iint_{\SchottkyFund} Q(w) M(w) \dd[2]{w},
\end{equation} 
we can obtain the dual basis for $P_1(w), \dots P_{3g-3+n}(w)$ --- i.e. the basis $M_1(w), \dots M_{3g-3+n}(w)$ in $\Hilbert^{-1,1}(\singrigon,\Sigma)$ which  coincide with the tangent vectors $\pdv{w_1}, \dots, \pdv{w_{n}} \in T_{\pi\circ\Phi(0)}\schottky_{g,n}$. Similarly, the corresponding bases in the tangent and cotangent spaces to $\schottky_{g,n}$ at an arbitrary point can also be defined. This implies that $\text{Sch}\left(J^{-1};w\right) = \partial_w^2 \varphi(w) - \frac{1}{2} \left(\partial_w \varphi(w)\right)^2$ can be decomposed as following \footnote{See the  asymptotic behavior of $\varphi(w)$ and its derivatives as $w \to w_i$ in lemma~\ref{lemma:asymptotics}.}
\begin{equation}\label{SchottkyEnergyMomentum}
\text{Sch}\left(J^{-1};w\right)= \sum_{i=1}^{n} h_i \mathscr{E}_i(w) - \pi \sum_{i=1}^{3g-3+n} c_i P_i(w),
\end{equation}
where
\begin{equation}\label{scrEdef}
\mathscr{E}_i(w)  = \frac{1}{2} \sum_{\sigma \in \Sigma} \left(\frac{1}{(\sigma w - w_i)^2} - \frac{1}{\sigma w (\sigma w -1)}\right) \sigma'(w)^2, \quad \text{for} \quad i=1,\dots,n,
\end{equation}
are meromorphic automorphic forms of weight four for Schottky group with the second order poles at $\Sigma \cdot w_i$ and $c_1,\dots,c_{3g-3+n}$ are  accessory parameters. Moreover, for the variations of hyperbolic metric we have the same formulas introduced in subsection~\ref{subsub:variationalformulas} and \ref{M0n}. Furthermore, the formula \eqref{widerivative} is also valid in this case. Finally, on $\teich_{g,n}(\boldsymbol{m})$, each cuspidal $\langle \cdot, \cdot \rangle_{\text{TZ},i}^{\text{cusp}}$ and elliptic $\langle \cdot, \cdot \rangle_{\text{TZ},i}^{\text{ell}}$ metric  remains invariant under the automorphism group of the covering $\pi: \teich_{g,n}(\boldsymbol{m}) \to \schottky_{g,n}(\boldsymbol{m})$. Accordingly, each metric descends to a K\"{a}hler metric on $\schottky_{g,n}(\boldsymbol{m})$. All these also imply that in analogy with the commutative diagram \eqref{diagramFmu}, we have
\begin{equation}\label{diagramFmuschottky}
\begin{CD}
\UHP	@> f^{\mu}>> \UHP\\
@V J VV		@VV J_{\mu} V\\
\singrigon 		@>F^{\mu}>> \singrigon\hspace{.4mm}^{\mu}
\end{CD}
\end{equation}
where
\begin{equation}
\partial_{\bar{w}} F^{\mu} = M(w)\hspace{1mm}\partial_w F^{\mu} \qquad \text{for} \quad w \in \singrigon,
\end{equation}
and $F^{\varepsilon\mu}$ is complex-analytic in $\varepsilon$ and $\dot{F}^{\mu}$ is given by similar equation with Eq.\eqref{Fdot} where $w\in \singrigon$.

Before closing this section let us also mention that the mapping $J$ in diagram \eqref{globalcoords} has the following expansions near cusps and branch points of $\Gamma$ (see Appendix~\ref{Apx:Asymptotics})
\begin{equation}\label{Jexpansion2}
J(z) = \left\{
\begin{split}
& w_{i} + \sum_{k=1}^{\infty} J^{(i)}_k  \left(\frac{z-z_{i}}{z-\bar{z}_{i}}\right)^{k m_i} \hspace{3.35cm} (i=1,\dots,n_e), \hspace{.2cm} z \to z_{i}, \\ \\
&w_{i} + \sum_{k=1}^{\infty} J^{(i)}_k \exp(-\frac{2 \pi\hspace{1mm}\sqrt{-1}k}{|\delta_{i}|(z-z_{i})}) \hspace{1cm} (i=n_e+1,\dots,n-1),\hspace{.2cm} z \to z_{i},
\end{split}
\right.
\end{equation}
where $w_i=J(z_i)$ for $i=1,\dots,n-1$.\footnote{Note that since $\Sigma$ is normalized, $\infty \notin \Omega$.}. If $e^{\varphi(w)} |\dd{w}|^2$ denotes the push-forward of the hyperbolic metric on $\UHP$ by the map $J$, then the density of hyperbolic metric on $\singrigon$ (i.e. $\rho(w) = e^{\varphi(w)}$) is once again given by Eq.\eqref{Liouvillefield}, where $\varphi(w)$ is smooth on $\Omega^{\text{reg}}$. The function $\varphi(w)$ satisfies\footnote{This equation follows from the invariance of hyperbolic metric on $\Omega^{\text{reg}}$ under the action of $\Sigma$ --- i.e. $e^{\varphi(w)}\dd{w} \dd{\bar{w}} = e^{\varphi(\sigma w)}\dd{\sigma(w)} \dd{\overline{\sigma(w)}}$ for all $\sigma \in \Sigma$.\label{footnote:sigmaonphiaction}}
\begin{equation}\label{Lonphi}
\varphi(\sigma w) = \varphi(w) - \log|\sigma'(w)|^2 \quad \text{for} \quad w \in \Omega^{\text{reg}}, \, \forall \sigma \in \Sigma.
\end{equation}
According to Lemma \ref{lemma:asymptotics}, it also has the following asymptotic form
\begin{equation}\label{asymptotics}
\varphi(w) = \left \{
\begin{split}
&-2 (1-\frac{1}{m_i}) \log|w-w_{i}| + \log \frac{4|J^{(i)}_1|^{-\frac{2}{m_i}}}{m_i^2} + \Sorder{1} \hspace{1.3cm} w \to w_{i},\\ 
&-2 \log |w-w_{j}| - 2 \log\left|\log\left|\frac{w-w_{j}}{J^{(i)}_{1}}\right|\right| + \Sorder{1}\hspace{2cm}  w \to w_{j},\\
& -2 \log|w| - 2 \log\log \left|\frac{w}{J^{(n)}_{-1}}\right| + \order{|w|^{-1}}, \hspace{2.8cm} w \to \infty ,
\end{split}
\right.
\end{equation}
for $i=1,\dots,n_e$ and $j=n_e+1,\dots,n-1$.
\begin{remark}\label{hasymptotics}
	It follows from the above asymptotics that (see statement~3 of lemma~\ref{lemma:asymptotics} for more details)
	\begin{equation}
	\left\{
	\begin{split}
	& \log \Lponetial_{i} = -2 \log m_i + 2 \log 2 - \lim_{w \to w_{i}}\left(\varphi(w) + \left(1-\frac{1}{m_i}\right) \log|w-w_{i}|^2\right), \hspace{.2cm} i=1,\dots,n_e,\\ \\
	&\log \Lponetial_{i} = \lim_{w \to w_{i}} \left(\log|w-w_{i}|^2 - \frac{2 e^{-\frac{\varphi(w)}{2}}}{|w-w_{i}|}\right),\hspace{2cm}  i=n_e+1,\dots,n-1,\\
	& \log \Lponetial_{n} = \lim_{w \to \infty} \left(\log|w|^2 - \frac{2 e^{-\frac{\varphi(w)}{2}}}{|w|}\right).
	\end{split}
	\right.
	\end{equation}
	with $\Lponetial_i = \left| J^{(i)}_{1} \right|^{\frac{2}{m_i}}$ for $i=1,\dots,n_e$, and $\Lponetial_{i} =  \left| J^{(i)}_{1} \right|^{2}$ for $i=n_e+1,\dots,n$.
\end{remark}
Finally, consider the points $w_1,\dots,L_k w_i, \dots, w_{n}$, corresponding to the branch points and cusps in $z_1,\dots,\beta_k z_i, \dots, z_{n}$.\footnote{One can comprehend this by recognizing that $J \circ \beta_k = L_k \circ J$.} Near the point $\beta_k z_i$, the first coefficient in the expansion \eqref{Jexpansion2} of $J(z)$ is given by $L'_{k}(w_i) J_{1}^{(i)}$. Accordingly, the positive functions $\Lponetial_i = \left| J^{(i)}_{1} \right|^{\frac{2}{m_i}}$ for $i=1,\dots,n_e$ and $\Lponetial_{i} =  \left| J^{(i)}_{1} \right|^{2}$ for $i=n_e+1,\dots,n$ are respectively replaced  with $\Lponetial_i  \left|L'_k(w_i)\right|^{\frac{2}{m_i}}$ and $\Lponetial_{i} \left|L'_k(w_{i})\right|^{2}$, when sending $w_i$ to $L_k w_i$. Moreover, let us define $\linebundle_i$ as the $i$-th relative cotangent line bundle on $\schottky_{g,n}(\boldsymbol{m})$, situated along the fibers of the projection $p_i: \schottky_{g,n}(\boldsymbol{m}) \to \schottky_{g,n-1}(\boldsymbol{m})$.\footnote{ This projection forgets the $w_i$ for $i=1,\dots,n$.} Given this understanding,
we can establish the following assertion.
\begin{lemma}\label{hihermitianmetric}
	Hermitian metrics in the holomorphic line bundles $\linebundle_i$ for $i=1,\dots,n_e$ are determined by the quantities $\Lponetial_{i}^{m_i}$.
\end{lemma}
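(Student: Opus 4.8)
The plan is to exhibit $\Lponetial_i^{m_i} = |J^{(i)}_1|^2$ as a positive smooth function on the total space $\schottky_{g,n}(\boldsymbol{m})$ that transforms, under the transition functions of $\linebundle_i$, exactly like the squared norm of a local holomorphic frame; this is precisely the data of a Hermitian metric, so the lemma reduces to matching a pair of cocycles.

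First I would fix the relevant local frame of $\linebundle_i$. By definition $\linebundle_i$ is the relative cotangent line bundle along the fibers of $p_i \colon \schottky_{g,n}(\boldsymbol{m}) \to \schottky_{g,n-1}(\boldsymbol{m})$, so over a chart in which $w_i$ serves as the fiber coordinate on $\singrigon \subset \Omega$, a local holomorphic frame is the fiberwise differential $\dd{w_i}$. The only ambiguity in this trivialization stems from the fact that a point of $\schottky_{g,n}(\boldsymbol{m})$ determines $w_i$ only up to the Schottky group $\Sigma$: the representatives $w_i$ and $L_k w_i$ describe the same point, and the corresponding frames are related by $\dd{(L_k w_i)} = L'_k(w_i)\,\dd{w_i}$. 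Hence the transition functions of $\linebundle_i$ over these overlaps are exactly $L'_k(w_i)$, and a Hermitian metric on $\linebundle_i$ is the same thing as a positive function $H$ obeying $H(L_k w_i) = |L'_k(w_i)|^2\, H(w_i)$.

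Next I would check that $\Lponetial_i^{m_i}$ satisfies this law. Since $\Lponetial_i = |J^{(i)}_1|^{2/m_i}$ we have $\Lponetial_i^{m_i} = |J^{(i)}_1|^2$, and the transformation of the leading coefficient recorded immediately above the lemma --- namely $J^{(i)}_1 \mapsto L'_k(w_i)\,J^{(i)}_1$ when $w_i$ is sent to $L_k w_i$ --- yields at once $\Lponetial_i^{m_i} \mapsto |L'_k(w_i)|^2\,\Lponetial_i^{m_i}$, which is precisely the required covariance. By the same token $\Lponetial_i$ itself transforms with the factor $|L'_k(w_i)|^{2/m_i}$, i.e.\ it is a Hermitian metric on the $\mathbb{Q}$-line bundle $\linebundle_i^{1/m_i}$; this is exactly why one must pass to the $m_i$-th power in order to land on the genuine line bundle $\linebundle_i$.

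Finally I would confirm positivity and smoothness, so that $H = \Lponetial_i^{m_i}$ is an honest metric rather than a degenerate section: because $J$ restricts to a branched cover of order $m_i$ at the elliptic fixed point, its leading coefficient $J^{(i)}_1$ is nonvanishing, whence $H>0$, and $J^{(i)}_1$ depends smoothly (indeed real-analytically) on the Schottky moduli and on $w_i$, so $H$ is smooth on $\schottky_{g,n}(\boldsymbol{m})$. I expect the only genuine subtlety to be bookkeeping rather than analysis: one must correctly identify the transition cocycle of the relative cotangent bundle $\linebundle_i$ with $\{L'_k(w_i)\}$ and carefully track the power $m_i$ that distinguishes the metric on $\linebundle_i$ from the metric $\Lponetial_i$ on its formal $m_i$-th root. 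Once the transformation law of $J^{(i)}_1$ is granted (its derivation being the real input, via $J\circ\beta_k = L_k\circ J$), the conclusion follows by matching the two cocycles.
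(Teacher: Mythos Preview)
Your proposal is correct and follows essentially the same approach as the paper: both arguments use the transformation law $J^{(i)}_1 \mapsto L'_k(w_i)\,J^{(i)}_1$ under $w_i \mapsto L_k w_i$ (established just before the lemma) to conclude that $\Lponetial_i^{m_i} = |J^{(i)}_1|^2$ transforms with the factor $|L'_k(w_i)|^2$, which is the cocycle of $\linebundle_i$. Your version is more explicit than the paper's in identifying the local frame $\dd{w_i}$ and spelling out why this transformation law is exactly what a Hermitian metric requires, but the substance is identical.
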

\begin{proof}
	To prove this lemma for the branch points, we use the transformation of $\Lponetial_{i}$'s under the action of the generators $L_k$ of the Schottky group. As explained above, sending $w_1,\dots,w_i,\dots,w_{n_e}$ to $w_1,\dots,L_k w_i, \dots, w_{n_e}$ will result in $\Lponetial_{i}$ to be replaced by $\Lponetial_{i}\left|L'_k(w_{i})\right|^{\frac{2}{m_{i}}}$. Thus, we have:
	\begin{equation}
	\Delta\log \Lponetial_{i}=\frac{1}{m_{i}}\log \left|L'_k(w_{i})\right|^2.
	\end{equation} 
	This means that $\Lponetial_{i}^{m_i}$ is a Hermitian metric in the line bundle $\linebundle_i$. For the case of cusp, a similar method of proof is applicable and we conclude that $\Lponetial_{i}$s will determine Hermitian metric in the line bundle $\linebundle_i$.
\end{proof}
\section{Classical Liouville Action}\label{sec:action}
In this section, we study the classical Liouville action for hyperbolic Riemann orbisurfaces with the genus $g=0$ and $g > 1$, separately. Before proceeding, we should mention that all the proofs in sections \ref{sec:action} and \ref{sec:Potentials} are previously provided for the case of punctured Riemann surfaces in \cite{Zograf1988ONLE} and \cite{park2015potentials}. When it comes to calculations involving punctures, we can only direct the reader to those articles. However, for the reader's convenience and to facilitate a clearer understanding of distinctions in the presence of conical singularities, we find it appropriate to present all proofs side by side.
\subsection{Riemann Orbisurfaces of Genus 0}\label{subsec:liouvilleactionnogenus}
Let $O$ be a marked Riemann orbisurface with signature $(0;m_1,\dots,m_{n_e},n_p)$ and let $\boldsymbol{m}$ denotes the vector of orders  $(m_1,\dots,m_{n})$. The regularized action functional for the Liouville equation~\eqref{Liouvilleequation} in presence of conical singularities with conical angles $2\pi/m_1, \dots , 2\pi/m_{n_e}$ at $w_{1},\dots,w_{n_e}$ together with punctures at $w_{n_e+1},\dots,w_{n-2}=0, w_{n-1}=1,w_{n}=\infty$ is defined as follows (see \cite{Zograf1988ONLE, Takhtajan:1995fd, Takhtajan:2001uj}):
\begin{multline}\label{NoGenusAction}
S_{\boldsymbol{m}}[\varphi] = \\
\lim_{\epsilon \to 0^{+}} \left(\iint_{O_{\epsilon}}(|\partial_w \varphi|^2 + e^{2\varphi}) \dd[2]{w}  + \frac{\sqrt{-1}}{2} \sum_{i=1}^{n_e} \left(1-\frac{1}{m_i}\right) \oint_{C_{i}^{\epsilon}} \varphi \left(\frac{\dd{\bar{w}}}{\bar{w}-\bar{w}_{i}} - \frac{\dd{w}}{w - w_{i}}\right)\right. \\ 
\left.  - 2\pi \sum_{i=1}^{n_e} \left(1-\frac{1}{m_i}\right)^2 \log\epsilon + 2 \pi n_p \log\epsilon + 4 \pi (n_p-2) \log|\log\epsilon|\right),
\end{multline}
where\footnote{Note that  $\lim_{\epsilon\rightarrow 0}O_{\epsilon}= O^{\text{reg}}= X_{O}^{\text{reg}}$.}
\begin{equation*}
O_\epsilon = \cmpx \Big\backslash \bigcup_{i=1}^{n-1}\left\{w \, \Big| \, |w-w_i|<\epsilon\right\} \cup \left\{w \, \Big| \, |w|>\epsilon^{-1}\right\},
\end{equation*}
and the circles 
\begin{equation*}	
C_{i}^{\epsilon} = \left\{w \, \Big| \, |w-w_{i}| = \epsilon\right\},
\end{equation*}
are oriented as a component of the boundary $\partial  O_{\epsilon}$. 
\begin{remark}
	When $n_p = 0$, the appropriate classical Liouville action is given by 
	\begin{multline}\label{CompactNoGenusAction}
	S_{\boldsymbol{m}}[\varphi] = \\
	\lim_{\epsilon \to 0^{+}} \Bigg(\iint_{O_{\epsilon}}(|\partial_w \varphi|^2 + e^{2\varphi}) \dd[2]{w} + \frac{\sqrt{-1}}{2} \sum_{i=1}^{n-1} \left(1-\frac{1}{m_i}\right) \oint_{C_{i}^{\epsilon}} \varphi \left(\frac{\dd{\bar{w}}}{\bar{w}-\bar{w}_{i}} - \frac{\dd{w}}{w - w_{i}}\right) \\ 
	+ \frac{\sqrt{-1}}{2} \left(1+\frac{1}{m_n}\right) \oint_{C_{n}^{\epsilon}} \varphi \left(\frac{\dd{\bar{w}}}{\bar{w}} - \frac{\dd{w}}{w}\right) - 2\pi \sum_{i=1}^{n-1} \left(1-\frac{1}{m_i}\right)^2 \log\epsilon - 2\pi \left(1+ \frac{1}{m_n}\right)^2 \log\epsilon \Bigg).
	\end{multline}
\end{remark}
\begin{remark}
	It is worth providing some explanation regarding the action \eqref{NoGenusAction} and why the terms associated with conical points appear different from those associated with the punctures. The contour integral in the first line of \eqref{NoGenusAction} is added to ensure a well-defined variational principle around the conical singularities.\footnote{Alternatively, the line integrals are necessary in order to ensure the proper asymptotic behavior \eqref{asymptotics}. See the explanation following \cite[Eq.~(8)]{Takhtajan:1995fd} for more details.} The variation of this integral cancels the boundary term arising from the variation of the bulk term. To ensure a well-defined variational principle around the punctures, additional contour integrals must be considered, which are given by \cite[\textsection2]{Takhtajan:1995fd}
	\begin{multline}\label{BTP}
	\frac{\sqrt{-1}}{2} \sum_{j=n_e+1}^{n-1}  \oint_{C_{j}^{\epsilon}} \varphi \left(\frac{\dd{\bar{w}}}{\bar{w}-\bar{w}_{j}} - \text{c.c}\right)+\frac{\sqrt{-1}}{2} \sum_{j=n_e+1}^{n-1}  \oint_{C_{j}^{\epsilon}} \varphi \left(\frac{\dd{\bar{w}}}{(\bar{w}-\bar{w}_{j})\log|w-w_j|} -\text{c.c.}\right)\\
	-\frac{\sqrt{-1}}{2} \oint_{C_{n}^{\epsilon}} \varphi \left(\frac{\dd{\bar{w}}}{\bar{w}} - \text{c.c}\right) - \frac{\sqrt{-1}}{2} \oint_{C_{n}^{\epsilon}} \varphi \left(\frac{\dd{\bar{w}}}{\bar{w}\log|w|} -\text{c.c.}\right).
	\end{multline}
	More concretely, the classical Liouville action \eqref{NoGenusAction} in the presence of these additional contour integrals takes the form \cite[Eq.~(8)]{Takhtajan:1995fd}
	\begin{multline}\label{FullNoGenusAction}
	S_{\boldsymbol{m}}[\varphi] = \\
	\lim_{\epsilon \to 0^{+}} \left(\iint_{O_{\epsilon}}(|\partial_w \varphi|^2 + e^{2\varphi}) \dd[2]{w}  + \frac{\sqrt{-1}}{2} \sum_{i=1}^{n-1} \left(1-\frac{1}{m_i}\right) \oint_{C_{i}^{\epsilon}} \varphi \left(\frac{\dd{\bar{w}}}{\bar{w}-\bar{w}_{i}} - \frac{\dd{w}}{w - w_{i}}\right)\right. \\ 
	+\frac{\sqrt{-1}}{2} \sum_{j=n_e+1}^{n-1}  \oint_{C_{j}^{\epsilon}} \varphi \left(\frac{\dd{\bar{w}}}{(\bar{w}-\bar{w}_{j})\log|w-w_j|} - \frac{\dd{w}}{(w-w_j)\log|w-w_j|} \right) -\frac{\sqrt{-1}}{2} \oint_{C_{n}^{\epsilon}} \varphi \left(\frac{\dd{\bar{w}}}{\bar{w}} - \frac{\dd{w}}{w}\right) \\
	\left. -\frac{\sqrt{-1}}{2} \oint_{C_{n}^{\epsilon}} \varphi \left(\frac{\dd{\bar{w}}}{\bar{w}\log|w|} - \frac{\dd{w}}{w\log|w|}\right) -2\pi \sum_{i=1}^{n} \left(1-\frac{1}{m_i}\right)^2 \log\epsilon 
	\right),
	\end{multline}
	where $m_i = \infty$ for $i=n_e+1,\dots,n$. By substituting the asymptotic form \eqref{asymptotics} of the Liouville field $\varphi$ near punctures in the contour integrals of \eqref{FullNoGenusAction}, up to $ \Sorder{1}$ terms, one gets 
	\begin{equation}\label{eq:contourintpunc}
	\begin{split}
	& \frac{\sqrt{-1}}{2} \sum_{j=n_e+1}^{n-1}  \oint_{C_{j}^{\epsilon}} \varphi \left(\frac{\dd{\bar{w}}}{\bar{w}-\bar{w}_{j}} - \text{c.c}\right)+\frac{\sqrt{-1}}{2} \sum_{j=n_e+1}^{n-1}  \oint_{C_{j}^{\epsilon}} \varphi \left(\frac{\dd{\bar{w}}}{(\bar{w}-\bar{w}_{j})\log|w-w_j|} -\text{c.c.}\right)\\
	& -\frac{\sqrt{-1}}{2} \oint_{C_{n}^{\epsilon}} \varphi \left(\frac{\dd{\bar{w}}}{\bar{w}} - \text{c.c}\right) - \frac{\sqrt{-1}}{2} \oint_{C_{n}^{\epsilon}} \varphi \left(\frac{\dd{\bar{w}}}{\bar{w}\log|w|} -\text{c.c.}\right) = 4\pi n_p \log\epsilon +4 \pi (n_p-2) \log|\log\epsilon|.
	\end{split}
	\end{equation}
	Since the contour integral \eqref{BTP} evaluated on-shell is merely a divergent term, we can add it to the counterterm $-2\pi n_p \log \epsilon$ in \eqref{FullNoGenusAction} to get the Liouville action \eqref{NoGenusAction}. The resulting regulating terms will have the opposite sign (i.e., plus sign instead of minus sign) and correctly cancel the divergence coming from the bulk term $\lim_{\epsilon \to 0^{+}} \iint_{O_{\epsilon}} |\partial_w \varphi|^2 \dd[2]{w}$.
	
	Performing a similar calculation for the case of conical points, i.e., substituting the asymptotic form \eqref{asymptotics} of the Liouville field $\varphi$ near the conical points in the contour integral of \eqref{NoGenusAction}, one observes that the result is given by 
	\begin{equation}\label{eq:contourintcone}
	\frac{\sqrt{-1}}{2} \sum_{i=1}^{n-1} \left(1-\frac{1}{m_i}\right) \oint_{C_{i}^{\epsilon}} \varphi \left(\frac{\dd{\bar{w}}}{\bar{w}-\bar{w}_{i}} - \frac{\dd{w}}{w - w_{i}}\right) = 4 \pi \sum_{i=1}^{n_e} \big[(1-\tfrac{1}{m_i})^2 \log\epsilon + \tfrac{1}{2} (1-\tfrac{1}{m_i}) \log \Lponetial_{i}\big] + \Sorder{1}.
	\end{equation}
	Notice that, apart from the divergent part, the above expression contains a finite part (i.e., $2\pi\sum_{i} (1-\tfrac{1}{m_i}) \log \Lponetial_{i}$) which behaves non-trivially under quasi-conformal transformations and therefore cannot be ignored (see lemma~\ref{lemma:widerivativeloghi} for more details).\footnote{For both punctures and conical points, there are finite constant terms in the calculation of the contour integrals which can be safely ignored. These terms are not written in equations \eqref{eq:contourintpunc} and \eqref{eq:contourintcone}.} We have decided to keep the line integrals around conical points in their integral form since this form is more familiar in the literature on Liouville CFT and can also be used when the conical points are of a more general type (see, e.g. \cite{Takhtajan:2001uj}). Therefore, one arrives at the classical Liouville action \eqref{NoGenusAction}.
\end{remark}

\begin{theorem*}[Takhtajan and Zograf]
	For any fixed vector of orders $\boldsymbol{m}=(m_1,\dots,m_{n})$ such that $\sum_{j=1}^{n_e} (1-\tfrac{1}{m_j})+n_p > 2$, the function $S_{\boldsymbol{m}}: \moduli_{0,n} \to \mathbb{R}$ is differentiable and
	\begin{equation}\label{theorem1}
	c_i = - \frac{1}{2 \pi} \pdv{S_{\boldsymbol{m}}}{w_i} \qquad \text{for all} \qquad i=1,\dots,n-3,
	\end{equation}
	where $c_i$s are the accessory parameters defined by \eqref{accessory}.\footnote{See Theorem 1 in \cite{Zograf1988ONLE} for genus $g=0$ punctured Riemann surfaces.}
\end{theorem*}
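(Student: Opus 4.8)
The plan is to follow Zograf and Takhtajan \cite{Zograf1988ONLE} and differentiate the regularized action \eqref{NoGenusAction} directly with respect to a modulus $w_i$, using crucially that the classical field $\varphi$ solves the Liouville equation \eqref{Liouvilleequation} and that its variation under motion of the singular points is governed by Lemma~\ref{varp}. Differentiability of $S_{\boldsymbol{m}}$ on $\moduli_{0,n}$ follows once we know that $\varphi$ is continuously differentiable there (Lemma~\ref{varp}), together with the controlled asymptotics of $\partial_{w_i}\varphi$ near the singular points (Corollary~\ref{corollary:widerivativeasymptotics}); these ensure that the $\epsilon$-regularized integrals converge uniformly on compacta, so that $\partial_{w_i}$ commutes with both the integration over $O_\epsilon$ and the limit $\epsilon\to 0^{+}$.

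First I would compute $\partial_{w_i}$ of the bulk term of \eqref{NoGenusAction}, whose Euler--Lagrange equation is precisely \eqref{Liouvilleequation}. Two effects contribute: the explicit motion of the excised boundary circle $C_{i}^{\epsilon}$, which depends on $w_i$, and the variation of the integrand through $\partial_{w_i}\varphi$. For the latter, after integrating by parts the bulk contribution to the variation vanishes on shell by virtue of \eqref{Liouvilleequation}, leaving only contour integrals over $\partial O_\epsilon$. At this point I would substitute the key variational identity \eqref{widerivative}, namely $\partial_{w_i}\varphi = -\dot F^{i}\,\partial_w\varphi - \partial_w\dot F^{i}$, which expresses every occurrence of $\partial_{w_i}\varphi$ through the holomorphic datum $\dot F^{i}$ and the field $\varphi$ itself, thereby reducing the entire first variation to a sum of contour integrals supported near $w_1,\dots,w_{n-1}$ and near $\infty$.

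The decisive step is the asymptotic evaluation of these contour integrals as $\epsilon \to 0^{+}$. Here I would insert the local expansions of $\varphi$ near conical points, punctures, and $\infty$ (the $g=0$ analogues of \eqref{asymptotics}, written in terms of the coefficients $\Lponetial_i$), the expansion \eqref{EnergyMomentumExpansion} of the energy--momentum tensor $T_\varphi$, and the behaviour of $\dot F^{i}$ from Corollary~\ref{corollary:FdotiAsymp} --- in particular $\dot F^{i}(w_j) = \delta_{ij}$ together with the refined estimate $\dot F^{i}(w) = \delta_{ij} + (w-w_j)\,\partial_w\dot F^{i}(w_j) + \Sorder{|w-w_j|/\log|w-w_j|}$. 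Differentiating the prescribed line integrals and the counterterms of \eqref{NoGenusAction} in $w_i$ contributes further $\log\epsilon$ and $\log|\log\epsilon|$ pieces. The content of the computation is that all divergences cancel, and the unique surviving finite term localizes at $w = w_i$, where it extracts precisely the coefficient of the simple pole $1/(w-w_i)$ of $T_\varphi$ --- the accessory parameter $c_i$ of \eqref{accessory} --- with overall normalization $-2\pi$, giving $c_i = -\tfrac{1}{2\pi}\,\partial_{w_i} S_{\boldsymbol{m}}$.

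The hard part will be the regularization bookkeeping just described: one must track the $\log\epsilon$ and $\log|\log\epsilon|$ contributions separately for conical points, where $\varphi$ has power-law behaviour with exponent $1-1/m_i$, and for punctures, where $\varphi$ carries a double logarithm, and verify that the moving boundary $C_{i}^{\epsilon}$, the $w_i$-dependent line integrals, and the explicit counterterms conspire to leave a finite answer. The $\Sorder{|w-w_j|/\log|w-w_j|}$ remainder in the expansion of $\dot F^{i}$ must be handled with care, so as to confirm that it produces no spurious finite contribution after contour integration, particularly at the punctures. Once this cancellation is established for $g=0$, identifying the residual term with $c_i$ is immediate from the definition \eqref{accessory}.
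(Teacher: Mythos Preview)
Your outline is correct and uses the same ingredients as the paper (Lemma~\ref{varp}, the asymptotics of Lemma~\ref{lemma:asymptotics} and Corollary~\ref{corollary:FdotiAsymp}, the expansion \eqref{EnergyMomentumExpansion}, and the counterterm bookkeeping), but the organization differs in one respect worth noting. You propose to integrate by parts first so that the bulk vanishes by the Liouville equation, and only then feed Lemma~\ref{varp} into the surviving boundary terms. The paper instead separates the potential term via Gauss--Bonnet (so $\int e^{\varphi}$ contributes nothing to the variation), pulls the integrand back by the quasi-conformal map $F^{\varepsilon\mu_i}$, and applies Lemma~\ref{varp} \emph{before} any Stokes step, directly in the bulk integrand. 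This algebraic rewriting produces $\big(2\partial_w^2\varphi-(\partial_w\varphi)^2\big)\partial_{\bar w}\dot F^{i}=2T_{\varphi}M_i$ plus exact forms, so that after Stokes the accessory parameter appears transparently as the residue of $T_{\varphi}$ against $\dot F^{i}(w_i)=1$. Your route is equivalent, but you would have to check that the boundary terms in $\partial_{w_i}\varphi$, after substituting $\partial_{w_i}\varphi=-\dot F^{i}\partial_w\varphi-\partial_w\dot F^{i}$, reassemble into the $T_\varphi$ contour integral; this is a short extra manipulation.

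One imprecision to flag: the finite contributions do \emph{not} localize solely at $w=w_i$. The $T_\varphi$ contour integral (the paper's $I_1$) produces, at every singular point $w_j$, a term $h_j\,\partial_w\dot F^{i}(w_j)$; the moving-boundary piece $I_5$ and the variation of the line integrals and counterterms produce matching terms with opposite sign. It is only after this cross-cancellation among \emph{all} the $w_j$ that $-2\pi c_i$ survives. Your description anticipates this in spirit, but the phrase ``the unique surviving finite term localizes at $w=w_i$'' understates the amount of cancellation you will actually need to verify.
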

\begin{proof}
	Let 
	\begin{equation}\label{def}
	\tilde{S}_{\boldsymbol{m}}^{\epsilon}(w_1,\dots,w_{n-3}) = \tilde{S}_{\boldsymbol{m}}^{(B)\epsilon}(w_1,\dots,w_{n-3}) +\tilde{S}^{\text{(ct)}\epsilon}- 2 \pi \chi(X), 
	\end{equation}
	with\footnote{In view of the Gauss-Bonnet formula for Riemann orbisurfaces \cite{Troyanov1991PrescribingCO,Li_orbifold_2018}
		\begin{equation}\nonumber\\
		\frac{\sqrt{-1}}{2}\iint_{O} e^{\varphi} \dd{w} \wedge \dd{\bar{w}} = 2 \pi \left(\sum_{j=1}^{n_e} \left(1- \frac{1}{m_j}\right)+n_p-2\right) = - 2 \pi \chi(O).
		\end{equation}}
	\begin{multline}\label{modifiedS}
	\tilde{S}_{\boldsymbol{m}}^{(B)\epsilon}(w_1,\dots,w_{n-3}) =
	\iint_{O_{\epsilon}} |\partial_{w} \varphi|^2 \dd[2]{w} + \frac{\sqrt{-1}}{2} \sum_{j=1}^{n_e} \left(1-\frac{1}{m_j}\right) \oint_{C_{i}^{\epsilon}} \varphi \left(\frac{\dd{\bar{w}}}{\bar{w}-\bar{w}_{j}} - \frac{\dd{w}}{w - w_{j}}\right)\\  
	\tilde{S}^{(\text{ct})\epsilon}=- 2\pi \sum_{j=1}^{n_e} \left(1-\frac{1}{m_j}\right)^2 \log\epsilon  + 2 \pi (n-n_e) \log\epsilon + 4 \pi (n-n_e-2) \log|\log\epsilon|.
	\end{multline}
	For any $\epsilon > 0$, the function $\tilde{S}_{\boldsymbol{m}}^{\epsilon}$ is continuously differentiable on $\moduli_{0,n}$. To prove this theorem, it suffices to show that $\Lie_{\mu_i}\tilde{S}_{\boldsymbol{m}}^{\epsilon}$ converges uniformly to $-2\pi c_i$ as $\epsilon \to 0$ in a neighborhood of any point of the moduli space $\moduli_{0,n}$. More explicitly, one needs to show that  
	\begin{equation}
	\Lie_{\mu_i}\tilde{S}_{\boldsymbol{m}}^{\epsilon} = \pdv{\tilde{S}_{\boldsymbol{m}}^{\epsilon}}{w_i} \equidef \pdv{\varepsilon}\Bigg|_{\varepsilon=0} \tilde{S}_{\boldsymbol{m}}^{\epsilon}(w_1^{\varepsilon \mu_i}, \dots, w_{n-3}^{\varepsilon \mu_i}) = -2 \pi c_i \quad \text{for} \quad i=1,\dots,n-3,
	\end{equation}
	pointwise on the moduli space $\moduli_{0,n}$.\footnote{We remind the reader that the basis $\{\mu_i\}$ for $ T_{\Phi(0)} \teich(\Gamma)$ has been defined in Eq.\eqref{Beltramibasis}.} Firstly, we can write
	\begin{equation}\label{ext01}
	\pdv{\tilde{S}_{\boldsymbol{m}}^{(B)\epsilon}}{w_i} = \frac{\sqrt{-1}}{2} \pdv{\varepsilon} \Bigg|_{\varepsilon=0} I_{\epsilon}(\varepsilon),
	\end{equation}
	where
	\begin{equation*}
	\begin{split}
	&I_{\epsilon}(\varepsilon) = \left(I_{\epsilon}[\varphi] \circ \Psi \circ \Phi \right)(\varepsilon \mu_i), \\ &\varphi^{\varepsilon \mu_i}(w) = \varphi\Big(w;(\Psi \circ \Phi)(\varepsilon \mu_i)\Big) = \varphi(w;\underbrace{F^{\varepsilon \mu_i}(w_1)}_{w_1^{\varepsilon \mu_i}},\dots, \underbrace{F^{\varepsilon \mu_i}(w_{n-3})}_{w_{n-3}^{\varepsilon \mu_i}}),
	\end{split}
	\end{equation*}
	and
	\begin{equation*}
	I_{\epsilon}[\varphi] = \iint_{O_{\epsilon}} |\partial_w \varphi|^2  \dd{w} \wedge \dd{\bar{w}} + \sum_{j=1}^{n_e} \left(1-\frac{1}{m_j}\right) \oint_{C_{i}^{\epsilon}} \varphi \left(\frac{\dd{\bar{w}}}{\bar{w}-\bar{w}_{j}} - \frac{\dd{w}}{w - w_{j}}\right).
	\end{equation*}
	Accordingly, one gets 
	\begin{equation}
	\begin{split}
	I_{\epsilon}(\varepsilon) =& \iint_{F^{\varepsilon\mu_i}(O_{\epsilon})} |\partial_w \varphi^{\varepsilon \mu_i}(w)|^2  \dd{w} \wedge \dd{\bar{w}}\\
	&+\sum_{j=1}^{n_e} \left(1-\frac{1}{m_j}\right) \oint_{F^{\varepsilon\mu_i}(C_{j}^{\epsilon})} \varphi^{\varepsilon \mu_i}(w) \left(\frac{\dd{\bar{w}}}{\bar{w}-\overline{w_{j}^{\varepsilon \mu_i}}} - \frac{\dd{w}}{w - w_{j}^{\varepsilon \mu_i}}\right),
	\end{split}
	\end{equation}
	with
	\begin{equation}
	\left\{
	\begin{split}
	& F^{\varepsilon\mu_i}(O_{\epsilon}) = \cmpx \Big\backslash \bigcup_{k=1}^{n-1}\left\{w \, \Big| \, |w-w_k^{\varepsilon\mu_i}|<\epsilon\right\} \cup \left\{w \, \Big| \, |w|>\epsilon^{-1}\right\},\\
	& F^{\varepsilon\mu_i}(C_{j}^{\epsilon}) = \left\{w \, \Big| \, |w-w_{j}^{\varepsilon\mu_i}| = \epsilon\right\}.
	\end{split}
	\right.
	\end{equation}
	The calculation of \eqref{ext01} can be done by almost verbatim repeating the corresponding computations in the proof of Theorem 1 in \cite{Zograf1988ONLE}. Accordingly, let us now use the \emph{change of variable formula} for differential forms, $\int_{F(X)} \omega = \int_{X} F^{\ast}(\omega)$, and the commutative diagram~\eqref{diagramFmu} to write
	\begin{equation*}
	\begin{split}
	I_{\epsilon}(\varepsilon) & = \iint_{O_{\epsilon}(\varepsilon)} (F^{\varepsilon\mu_i})^{\ast} \Big(|\partial_w \varphi^{\varepsilon \mu_i}|^2 \dd{w} \wedge \dd{\bar{w}}\Big)\\
	& + \sum_{j=1}^{n_e} \left(1-\frac{1}{m_j}\right) \oint_{C_{j}^{\epsilon}(\varepsilon)} (F^{\varepsilon\mu_i})^{\ast} \left(\varphi^{\varepsilon \mu_i} \left(\frac{\dd{\bar{w}}}{\bar{w}-\overline{w_{j}^{\varepsilon \mu_i}}} - \frac{\dd{w}}{w - w_{j}^{\varepsilon \mu_i}}\right)\right)\\
	& = \iint_{O_{\epsilon}(\varepsilon)} |\partial_w \varphi^{\varepsilon \mu_i} \circ F^{\varepsilon\mu_i}|^2  \dd{F^{\varepsilon\mu_i}(w)} \wedge \dd{\overline{F^{\varepsilon\mu_i}(w)}}\\
	& + \sum_{j=1}^{n_e} \left(1-\frac{1}{m_j}\right) \oint_{C_{j}^{\epsilon}(\varepsilon)}  (\varphi^{\varepsilon \mu_i} \circ F^{\varepsilon\mu_i}) \left(\frac{\dd{\overline{F^{\varepsilon\mu_i}(w)}}}{\overline{F^{\varepsilon\mu_i}(w)}-\overline{w_{j}^{\varepsilon \mu_i}}} - \frac{\dd{F^{\varepsilon\mu_i}(w)}}{F^{\varepsilon\mu_i}(w) - w_{j}^{\varepsilon \mu_i}}\right)\\
	& = \iint_{O_{\epsilon}(\varepsilon)} |\partial_w \varphi^{\varepsilon \mu_i} \circ F^{\varepsilon\mu_i}|^2 \, |\partial_w F^{\varepsilon\mu_i}|^2 (1 - |\varepsilon M_i|^2) \dd{w} \wedge \dd{\bar{w}}\\
	& + \sum_{j=1}^{n_e} \left(1-\frac{1}{m_j}\right) \oint_{C_{j}^{\epsilon}(\varepsilon)}  (\varphi^{\varepsilon \mu_i} \circ F^{\varepsilon\mu_i}) \left(\frac{\overline{\partial_w F^{\varepsilon\mu_i}}(\bar{\varepsilon} \overline{M_i} \dd{w} +\dd{\bar{w}})}{\overline{w^{\varepsilon\mu_i}}-\overline{w_{j}^{\varepsilon \mu_i}}} - \frac{\partial_w F^{\varepsilon\mu_i}(\dd{w}+ \varepsilon M_i \dd{\bar{w}})}{w^{\varepsilon\mu_i} - w_{j}^{\varepsilon \mu_i}}\right),
	\end{split}
	\end{equation*}
	where
	\begin{equation}
	\left\{
	\begin{split}
	& O_{\epsilon}(\varepsilon) = \cmpx \Big\backslash \bigcup_{k=1}^{n-1}\left\{w \, \Big| \, |w^{\varepsilon \mu_i}-w_i^{\varepsilon\mu_i}|<\epsilon\right\} \cup \left\{w \, \Big| \, |w^{\varepsilon \mu_i}|>\epsilon^{-1}\right\},\\
	&C_{j}^{\epsilon}(\varepsilon) = \left\{w \, \Big| \, |w^{\varepsilon \mu_i}-w_{j}^{\varepsilon\mu_i}| = \epsilon\right\}.
	\end{split}
	\right.
	\end{equation}
	In order to compute $\partial I_{\epsilon}(\varepsilon)/\partial\varepsilon |{\varepsilon=0}$, it is necessary to differentiate both the integrand and the  integration domains $O_{\epsilon}(\varepsilon)$ and $C_{j}^{\epsilon}(\varepsilon)$:
	\begin{equation}\label{pdvvarepsilonI}
	\begin{split}
	\pdv{\varepsilon}&\Bigg|_{\varepsilon=0} I_{\epsilon}(\varepsilon) = \iint_{O_{\epsilon}} \pdv{\varepsilon}\Bigg|_{\varepsilon=0} |\partial_w \varphi^{\varepsilon \mu_i} \circ F^{\varepsilon\mu_i}|^2 \, |\partial_w F^{\varepsilon\mu_i}|^2 (1 - |\varepsilon M_i|^2) \dd{w} \wedge \dd{\bar{w}}\\
	& + \pdv{\varepsilon}\Bigg|_{\varepsilon=0} \iint_{O_{\epsilon}(\varepsilon)} |\partial_w \varphi|^2  \dd{w} \wedge \dd{\bar{w}}\\
	& + \sum_{j=1}^{n_e} \left(1-\frac{1}{m_j}\right) \oint_{C_{j}^{\epsilon}} \pdv{\varepsilon}\Bigg|_{\varepsilon=0} (\varphi^{\varepsilon \mu_i} \circ F^{\varepsilon\mu_i}) \left(\frac{\overline{\partial_w F^{\varepsilon\mu_i}}(\bar{\varepsilon} \overline{M_i} \dd{w} +\dd{\bar{w}})}{\overline{w^{\varepsilon\mu_i}}-\overline{w_{j}^{\varepsilon \mu_i}}} - \frac{\partial_w F^{\varepsilon\mu_i}(\dd{w}+ \varepsilon M_i \dd{\bar{w}})}{w^{\varepsilon\mu_i} - w_{j}^{\varepsilon \mu_i}}\right)\\
	& + \sum_{j=1}^{n_e} \left(1-\frac{1}{m_j}\right) \pdv{\varepsilon}\Bigg|_{\varepsilon=0} \oint_{C_{j}^{\epsilon}(\varepsilon)} \varphi \left(\frac{\dd{\bar{w}}}{\bar{w}-\bar{w}_{j}} - \frac{\dd{w}}{w - w_{j}}\right).
	\end{split}
	\end{equation}
	The second and fourth terms in Eq.\eqref{pdvvarepsilonI} can be computed using the formula for differentiating a given $k$-form $\omega$ over a smooth family of variable domains $O(\varepsilon)$
	\begin{equation}\label{vardomaindiff}
	\pdv{\varepsilon}\Bigg|_{\varepsilon=0} \underbrace{\int\dotsm\int_{O_{\epsilon}(\varepsilon)}}_{k} \omega = \underbrace{\int\dotsm\int_{\textcolor{black}{\partial O_{\epsilon}}}}_{k-1} i_V(\omega), 
	\end{equation}
	where $i_V(\omega)$ denotes the interior product of the $k$-form $\omega$ with vector field $V$ which is the vector field along $\partial O_{\epsilon}$ corresponding to the family of curves $\partial O_{\epsilon}(\varepsilon)$. As a result, we get
	\begin{equation}\label{WiDerivative}
	\begin{split}
	\pdv{I_{\epsilon}}{w_i} & =  \iint_{O_{\epsilon}} \left[(\partial_{w_i}\partial_w \varphi + \partial_w^2 \varphi \dot{F}^i) \partial_{\bar{w}} \varphi + (\partial_{w_i} \partial_{\bar{w}} \varphi  + \partial_w \partial_{\bar{w}} \varphi  \dot{F}^i) \partial_w \varphi + |\partial_w\varphi|^2  \partial_w \dot{F}^i \right] \dd{w} \wedge \dd{\bar{w}}\\
	&\hspace{-.6cm} - \sum_{k=1}^{n}\int_{\partial O^{\epsilon}_k} |\partial_w \varphi|^2  \Big(\dot{F}^i(w) - \dot{F}^i(w_k)\Big) \dd{\bar{w}}\\
	&\hspace{-.6cm} + \sum_{j=1}^{n_e} \left(1-\frac{1}{m_j}\right) \oint_{C_{j}^{\epsilon}} \pdv{\varepsilon}\Bigg|_{\varepsilon=0} (\varphi^{\varepsilon \mu_i} \circ F^{\varepsilon\mu_i})\hspace{-1mm} \left(\frac{\overline{\partial_w F^{\varepsilon\mu_i}}(\bar{\varepsilon} \overline{M_i} \dd{w} +\dd{\bar{w}})}{\overline{w^{\varepsilon\mu_i}}-\overline{w_{j}^{\varepsilon \mu_i}}} - \frac{\partial_w F^{\varepsilon\mu_i}(\dd{w}+ \varepsilon M_i \dd{\bar{w}})}{w^{\varepsilon\mu_i} - w_{j}^{\varepsilon \mu_i}}\right),
	\end{split}
	\end{equation}
	where the last term in Eq.\eqref{pdvvarepsilonI} has vanished due to the fact that $\partial C_{j}^{\epsilon} = \emptyset$. By noting that 
	\begin{equation*}
	\begin{split}
	&\oint_{C_{j}^{\epsilon}} \varphi\left(\frac{d\bar{w}}{\bar{w}-\bar{w}_j}-\frac{dw}{w-w_j}\right) =\\
	&\hspace{.5cm}-2\oint_{C_{j}^{\epsilon}} \varphi\left(\frac{dw}{w-w_j}\right)-\oint_{C_{j}^{\epsilon}}\partial_{w}\varphi \log |w-w_j|^{2} dw -\oint_{C_{j}^{\epsilon}}\partial_{\bar{w}}\varphi \log |w-w_j|^{2} d\bar{w}, 
	\end{split}
	\end{equation*}
	we have
	\begin{equation*}
	\begin{split}
	&\hspace{-.8cm}\oint_{C_{j}^{\epsilon}}  (\varphi^{\varepsilon \mu_i} \circ F^{\varepsilon\mu_i}) \left(\frac{\overline{\partial_w F^{\varepsilon\mu_i}}(\bar{\varepsilon} \overline{M_i} \dd{w} +\dd{\bar{w}})}{\overline{w^{\varepsilon\mu_i}}-\overline{w_{j}^{\varepsilon \mu_i}}} - \frac{\partial_w F^{\varepsilon\mu_i}(\dd{w}+ \varepsilon M_i \dd{\bar{w}})}{w^{\varepsilon\mu_i} - w_{j}^{\varepsilon \mu_i}}\right)\\
	& = \underbrace{-2\oint_{C_{j}^{\epsilon}} (\varphi^{\varepsilon \mu_i} \circ F^{\varepsilon\mu_i})\left(\frac{\partial_{w} F^{\varepsilon\mu_i}(dw+\varepsilon M_{i} d\bar{w})}{F^{\varepsilon\mu_i}(w)-F^{\varepsilon\mu_i}(w_j)}\right)}_{B_{1}}\\
	&\underbrace{-\oint_{C_{j}^{\epsilon}}\left(\partial_{w}\varphi^{\varepsilon \mu_i} \circ F^{\varepsilon\mu_i}\right) \log|F^{\varepsilon\mu_i}(w)-F^{\varepsilon\mu_i}(w_j)|^{2} \hspace{1mm}\partial_{w}F^{\varepsilon\mu_i} \left(dw + \varepsilon M_{i}d\bar{w}\right)}_{B_2}
	\\
	&\underbrace{-\oint_{C_{j}^{\epsilon}}\left(\partial_{\bar{w}}\varphi^{\varepsilon \mu_i} \circ F^{\varepsilon\mu_i}\right)\log|F^{\varepsilon\mu_i}(w)-F^{\varepsilon\mu_i}(w_j)|^{2}\hspace{1mm}\overline{\partial_{w}F^{\varepsilon\mu_i}} \left( \bar{\varepsilon} \overline{M_{i}}dw+d\bar{w} \right)}_{B_3}.
	\end{split}
	\end{equation*}
	After simple calculations and using the Lemma \ref{varp}, one can see that
	\begin{equation*}
	\begin{split}
	&\pdv{\varepsilon}\Bigg|_{\varepsilon=0} B_{1} = 2\oint_{C_{j}^{\epsilon}}\partial_{w}\dot{F}^{i}(w)\left(\frac{dw}{w-w_j}\right)-2\oint_{C_{j}^{\epsilon}}\varphi \left(\frac{M_i\hspace{1mm}d\bar{w}}{w-w_j}\right)+\Sorder{1}\\
	&\pdv{\varepsilon}\Bigg|_{\varepsilon=0} B_{2} = \oint_{C_{j}^{\epsilon}}\partial^{2}_{w}\dot{F}^{i} \log|w-w_j|^{2} dw -\oint_{C_{j}^{\epsilon}}\partial_{w}\varphi \partial_{w}\dot{F}^{i}(w_j) dw \\
	&\hspace{2cm}-\oint_{C_{j}^{\epsilon}}\partial_{w}\varphi\hspace{1mm}M_{i} \log|w-w_j|^{2} d\bar{w}+\Sorder{1},\\
	&\pdv{\varepsilon}\Bigg|_{\varepsilon=0} B_{3} =\oint_{C_{j}^{\epsilon}}\partial_{w}\varphi\hspace{1mm}M_{i} \log|w-w_j|^{2}d\bar{w}+\oint_{C_{j}^{\epsilon}}\partial_{\bar{w}}\partial_{w}\dot{F}^{i} \log|w-w_j|^{2} d\bar{w}\\
	&\hspace{2cm}-\oint_{C_{j}^{\epsilon}}\partial_{\bar{w}}\varphi \partial_{w}\dot{F}^{i}(w_j)d\bar{w}+\Sorder{1},
	\end{split}
	\end{equation*}
	as $\epsilon \to 0$. This implies that
	\begin{equation*}
	\begin{split}
	&\pdv{\varepsilon}\Bigg|_{\varepsilon=0} \left(B_{1}+ B_{2}+B_{3}\right)\\
	&\hspace{1cm}= - \oint_{C_{j}^{\epsilon}}\partial_{w} \dot{F}^{i} \left(\frac{d\bar{w}}{\bar{w}-\bar{w}_j}-\frac{dw}{w-w_j}\right)-\oint_{C_{j}^{\epsilon}}\partial_{w}\dot{F}^{i}(w_j)d\varphi-2\oint_{C_{j}^{\epsilon}}\varphi \left(\frac{M_i\hspace{1mm} d\bar{w}}{w-w_j}\right)+\Sorder{1}\\
	& \hspace{1cm}\overset{\eqref{MAsymptotic}}{=} - \oint_{C_{j}^{\epsilon}}\partial_{w} \dot{F}^{i} \left(\frac{d\bar{w}}{\bar{w}-\bar{w}_j}-\frac{dw}{w-w_j}\right)+\Sorder{1}.
	\end{split}
	\end{equation*}
	Therefore, according to the above result, for the third term in (\ref{WiDerivative}) we get
	\begin{equation}\label{extra1}
	\begin{split}
	&\sum_{j=1}^{n_e} \left(1-\frac{1}{m_j}\right) \oint_{C_{j}^{\epsilon}} \pdv{\varepsilon}\Bigg|_{\varepsilon=0} (\varphi^{\varepsilon \mu_i} \circ F^{\varepsilon\mu_i}) \left(\frac{\overline{\partial_w F^{\varepsilon\mu_i}}(\bar{\varepsilon} \overline{M_i} \dd{w} +\dd{\bar{w}})}{\overline{w^{\varepsilon\mu_i}}-\overline{w_{j}^{\varepsilon \mu_i}}} - \frac{\partial_w F^{\varepsilon\mu_i}(\dd{w}+ \varepsilon M_i \dd{\bar{w}})}{w^{\varepsilon\mu_i} - w_{j}^{\varepsilon \mu_i}}\right) \\
	& \hspace{1cm}= -\sum_{j=1}^{n_e} \left(1-\frac{1}{m_j}\right) \oint_{C_{j}^{\epsilon}} \partial_{w}\dot{F}^{i} \left(\frac{d\bar{w}}{\bar{w}-\bar{w}_j}-\frac{dw}{w-w_j}\right)+\Sorder{1}.
	\end{split}
	\end{equation}
	Accordingly, the Eq.\eqref{WiDerivative} can be further simplified to get
	\begin{equation}\label{WiDerivativeI}
	\begin{split}
	\pdv{I_{\epsilon}}{w_i} & =  \iint_{O_{\epsilon}} \left[(\partial_{w_i}\partial_w \varphi + \partial_w^2 \varphi \dot{F}^i) \partial_{\bar{w}} \varphi + (\partial_{w_i} \partial_{\bar{w}} \varphi  + \partial_w \partial_{\bar{w}} \varphi  \dot{F}^i) \partial_w \varphi + |\partial_w\varphi|^2  \partial_w \dot{F}^i \right] \dd{w} \wedge \dd{\bar{w}}\\
	& - \sum_{k=1}^{n}\int_{\partial O^{\epsilon}_k} |\partial_w \varphi|^2  \Big(\dot{F}^i(w) - \dot{F}^i(w_k)\Big) \dd{\bar{w}}\\
	& - \sum_{j=1}^{n_e} \left(1-\frac{1}{m_j}\right) \oint_{C_{j}^{\epsilon}} \partial_{w}\dot{F}^{i} \left(\frac{\dd{\bar{w}}}{\bar{w}-\bar{w}_{j}} - \frac{\dd{w}}{w - w_{j}}\right)+\Sorder{1}.
	\end{split}
	\end{equation}
	Next, using the Lemma~\ref{varp}, we have
	\begin{equation}
	\left\{
	\begin{split}
	& \partial_{w_i}\partial_w \varphi + \partial_w^2 \varphi  \, \dot{F}^i =  - \partial_{w} \varphi \, \partial_{w} \dot{F}^i - \partial_w^2 \dot{F}^i, \\ \\
	& \partial_{w_i} \partial_{\bar{w}} \varphi  + \partial_w \partial_{\bar{w}} \varphi  \dot{F}^i = - \partial_{w} \varphi \, \partial_{\bar{w}} \dot{F}^i - \partial_{\bar{w}} \partial_{w} \dot{F}^i,
	\end{split}
	\right.
	\end{equation}
	which makes it possible to rewrite \eqref{WiDerivativeI} as
	\begin{equation}\label{WiDerivativeI2}
	\begin{split}
	\pdv{I_{\epsilon}}{w_i} & =  \iint_{O_{\epsilon}} \left[(- \partial_{w} \varphi \, \partial_{w} \dot{F}^i - \partial_w^2 \dot{F}^i) \partial_{\bar{w}} \varphi + (- \partial_{w} \varphi \, \partial_{\bar{w}} \dot{F}^i - \partial_{\bar{w}} \partial_{w} \dot{F}^i) \partial_w \varphi + |\partial_w\varphi|^2  \partial_w  \dot{F}^i \right] \dd{w} \wedge \dd{\bar{w}}\\
	& - \sum_{k=1}^{n}\int_{\partial O^{\epsilon}_k} |\partial_w \varphi|^2  \Big(\dot{F}^i(w) - \dot{F}^i(w_k)\Big) \dd{\bar{w}}\\
	& - \sum_{j=1}^{n_e} \left(1-\frac{1}{m_j}\right) \oint_{C_{j}^{\epsilon}} \partial_{w}\dot{F}^{i} \left(\frac{\dd{\bar{w}}}{\bar{w}-\bar{w}_{j}} - \frac{\dd{w}}{w - w_{j}}\right)+\Sorder{1}\\
	& = \iint_{O_{\epsilon}} \left[\Big(2 \partial_w^2 \varphi - (\partial_w \varphi)^2 \Big) \partial_{\bar{w}} \dot{F}^i - 2 \pdv{w}(\partial_w \varphi \, \partial_{\bar{w}} \dot{F}^i) + \pdv{\bar{w}}(\partial_{w} \varphi \, \partial_{w}\dot{F}^i) - \pdv{w}(\partial_{\bar{w}}\varphi \, \partial_{w} \dot{F}^i) \right] \dd{w} \wedge \dd{\bar{w}}\\
	& - \sum_{k=1}^{n}\int_{\partial O^{\epsilon}_k} |\partial_w \varphi|^2  \Big(\dot{F}^i(w) - \dot{F}^i(w_k)\Big) \dd{\bar{w}}\\
	& - \sum_{j=1}^{n_e} \left(1-\frac{1}{m_j}\right) \oint_{C_{j}^{\epsilon}} \partial_{w}\dot{F}^{i} \left(\frac{\dd{\bar{w}}}{\bar{w}-\bar{w}_{j}} - \frac{\dd{w}}{w - w_{j}}\right)+\Sorder{1}\\
	& = \underbrace{\iint_{O_{\epsilon}} \Big(2 \partial_w^2 \varphi - (\partial_w \varphi)^2 \Big) \partial_{\bar{w}} \dot{F}^i \dd{w} \wedge \dd{\bar{w}}}_{I_1} \underbrace{- 2 \int_{\partial O_{\epsilon}} \partial_w \varphi \, \partial_{\bar{w}} \dot{F}^i \dd{\bar{w}}}_{I_2} \underbrace{- \int_{\partial O_{\epsilon}} \partial_{w} \varphi \, \partial_{w}\dot{F}^i \dd{w}}_{I_3} \\
	&  \underbrace{- \int_{\partial O_{\epsilon}} \partial_{\bar{w}} \varphi \, \partial_{w} \dot{F}^i \dd{\bar{w}}}_{I_4} \underbrace{- \int_{\partial O_{\epsilon}} |\partial_w \varphi|^2  \dot{F}^i \dd{\bar{w}}}_{I_5} + \oint_{C^{\epsilon}_i} |\partial_w \varphi|^2 \dd{\bar{w}}\\
	&- \sum_{j=1}^{n_e} \left(1-\frac{1}{m_j}\right) \oint_{C_{j}^{\epsilon}} \partial_{w}\dot{F}^{i} \left(\frac{\dd{\bar{w}}}{\bar{w}-\bar{w}_{j}} - \frac{\dd{w}}{w - w_{j}}\right)+\Sorder{1}.
	\end{split}
	\end{equation}
	Let us compute each of the integrals $I_1, \dots, I_5$ separately using Lemma~\ref{lemma:asymptotics} and Corollary~\ref{corollary:FdotiAsymp} as well as equations \eqref{EnergyMomentumExpansion} and \eqref{EnergyMomentumExpansioninfty}. We begin with the integral $I_1$:
	\begin{equation*}
	\begin{split}
	I_1 &=  \iint_{O_{\epsilon}} \Big(2 \partial_w^2 \varphi - (\partial_w \varphi)^2 \Big) \partial_{\bar{w}} \dot{F}^i \dd{w} \wedge \dd{\bar{w}} = -2 \int_{\partial O_{\epsilon}} T_{\varphi} \dot{F}^i \dd{w}\\
	& = -2 \sum_{j=1}^{n_e} \oint_{C^{\epsilon}_{j}} \left(\frac{h_j}{2(w-w_{j})^2} + \frac{c_{j}}{w-w_{j}} + \dotsm \right) \left(\delta_{ij} + (w-w_{j}) \partial_{w} \dot{F}^i(w_{j}) + \dotsm \right) \dd{w} \\
	&-2 \sum_{j=n_e+1}^{n-1} \oint_{C^{\epsilon}_{j}} \left(\frac{1}{2(w-w_{j})^2} + \frac{c_{j}}{w-w_{j}} + \dotsm \right)\left(\delta_{ij} + (w-w_{j}) \partial_{w} \dot{F}^i(w_{j}) + \dotsm \right) \dd{w} \\
	& -2 \oint_{C_n^{\epsilon}} \left(\frac{1}{2w^2}+ \frac{c_n}{w^3} + \dotsm \right) \left(w \partial_{w} \dot{F}^i(\infty) + \dotsm\right) \dd{w}\\
	& = 4 \pi \sqrt{-1} c_i + 2 \pi \sqrt{-1} \sum_{k=1}^{n-1} h_k \partial_{w} \dot{F}^i(w_k)-2 \pi \sqrt{-1} \partial_{w} \dot{F}^i(\infty)  + \Sorder{1} \quad \text{as} \quad \epsilon \to 0.
	\end{split}
	\end{equation*}
	In the last line we have used the notation $h_k=1-1/m_k^2$ for $k=1,\dots,n_e$ and $h_k=1$ for $k=n_e+1,\dots,n-1$ . In addition, we have
	\begin{equation*}
	\begin{split}
	I_2 & = - 2 \int_{\partial O_{\epsilon}} \partial_w \varphi \, \partial_{\bar{w}} \dot{F}^i \dd{\bar{w}} = - 2 \int_{\partial O_{\epsilon}} \partial_w \varphi(w) \, M_i(w) \dd{\bar{w}}\\
	& \hspace{-.3cm}\overset{\eqref{MAsymptotic}}{=}-2 \sum_{j=1}^{n_e} \oint_{C^{\epsilon}_{j}} \left( - \frac{1-\frac{1}{m_j}}{w-w_{j}}+ \frac{c_{j}}{1-\frac{1}{m_j}} + \dotsm \right) \left(\frac{\bar{q}^{(j)}_1}{4 \bar{J}^{(j)}_1} \cdot |w - w_{j}|^{1-\frac{2}{m_j}} + \dotsm \right) \dd{\bar{w}} \\
	&-2 \sum_{j=n_e+1}^{n-1} \oint_{C^{\epsilon}_{j}} \left( -\frac{1}{w-w_{j}} \left(1+\frac{1}{\log\left|\frac{w-w_{j}}{J_{1}^{(j)}}\right|}\right) + c_{j} + \dotsm \right)\times \\
	&\hspace{2.7cm}\times\left(  - \frac{|\delta_{j}|^2 \bar{q}_1^{(j)}}{4 \pi^2 \bar{J}_1^{(j)}} \cdot |w-w_{j}| \log^2 |w-w_{j}|+ \dotsm \right)  \dd{\bar{w}} \\
	&-2 \oint_{C_n^{\epsilon}} \left(-\frac{1}{w} \left(1+\frac{1}{\log\left|\frac{w}{J_{-1}^{(n)}}\right|}\right) - \frac{c_n}{w^2} + \dotsm \right) \left( - \frac{|\delta_n|^2 \bar{q}_1^{(n)} \bar{J}_{-1}^{(n)}}{4 \pi^2} \cdot \frac{\log^2|w|}{|w|} + \dotsm \right)  \dd{\bar{w}}\\
	& = \Sorder{1},
	\end{split}
	\end{equation*}
	as $\epsilon \to 0$. Let us now calculate $I_3 + I_4$:
	\begin{equation*}
	\begin{split}
	I_3 + I_4 &= - \int_{\partial O_{\epsilon}} \partial_{w} \varphi \, \partial_{w}\dot{F}^i \dd{w} - \int_{\partial O_{\epsilon}} \partial_{\bar{w}} \varphi \, \partial_{w} \dot{F}^i \dd{\bar{w}}=  - \int_{\partial O_{\epsilon}} \partial_{w} \dot{F}^i \dd{\varphi} = \Sorder{1},
	\end{split}
	\end{equation*}
	as $\epsilon \to 0$. Finally, we can turn to calculating $I_5$: 
	\begin{equation*}
	\begin{split}
	I_5 & = - \int_{\partial O_{\epsilon}} |\partial_w \varphi|^2  \dot{F}^i \dd{\bar{w}}\\
	& = - \sum_{k=1}^{n-1} \oint_{C^{\epsilon}_k} |\partial_w \varphi|^2 \left(\delta_{ik} + (w-w_k) \partial_{w} \dot{F}^i(w_k) + \dotsm \right) \dd{\bar{w}}\\
	&\hspace{.4cm}- \oint_{C_n^{\epsilon}} \left(\frac{1}{|w|^2} + \dotsm \right) \left(w \partial_{w} \dot{F}^i (\infty) + \dotsm \right)\dd{\bar{w}} \\
	& =- \oint_{C^{\epsilon}_i} |\partial_w \varphi|^2 \dd{\bar{w}}   - \sum_{k=1}^{n-1} \oint_{C^{\epsilon}_k} \left(\frac{\left(1-\frac{1}{m_k}\right)^2}{|w-w_k|^2} + \dotsm \right)  \left((w-w_k) \partial_{w} \dot{F}^i(w_k) + \dotsm \right) \dd{\bar{w}} \\
	&\hspace{.4cm}- \oint_{C_n^{\epsilon}} \left(\frac{1}{|w|^2} + \dotsm \right) \left(w \partial_{w} \dot{F}^i (\infty) + \dotsm \right)\dd{\bar{w}} \\ 
	& = - \oint_{C^{\epsilon}_i} |\partial_w \varphi|^2 \dd{\bar{w}}  - 2 \pi \sqrt{-1} \sum_{k=1}^{n-1} \left(1-\frac{1}{m_k}\right)^2 \partial_{w} \dot{F}^i(w_k) + 2 \pi \sqrt{-1} \partial_{w} \dot{F}^i(\infty)+ \Sorder{1} ,
	\end{split}
	\end{equation*}
	as $\epsilon \to 0$ with $m_k = \infty$ for $k=n_e+1,\dots,n-1$. As a result, we have
	\begin{equation}\label{I1toI5}
	I_1+\dotsm+I_5 = 4 \pi \sqrt{-1} c_i  - \oint_{C^{\epsilon}_i} |\partial_w \varphi|^2 \dd{\bar{w}} + 2 \pi \sqrt{-1} \sum_{j=1}^{n_e} \left(-\frac{2}{m_j^2}+\frac{2}{m_j}\right) \partial_{w} \dot{F}^i(w_{j})  + \Sorder{1},
	\end{equation}
	as $\epsilon \to 0$. Moreover, by changing the conformal structure, the variation of counterterm action $\tilde{S}^{\epsilon}_{\text{ct}}$ gives
	\begin{equation}\label{Sct}
	\begin{split}
	&\pdv{\tilde{S}_{\text{ct}}^{\epsilon}}{w_i} = -2\pi \sum_{j=1}^{n_e} \left(1-\frac{1}{m_j}\right)^2 \partial_{w}\dot{F}^{i}(w_j) +\Sorder{1} \hspace{.5cm}\text{as}\hspace{.3cm}\epsilon\rightarrow 0.
	\end{split} 
	\end{equation}
	Notice that we have only taken into account the contributions coming from the regulating terms for conical points and not the punctures. The reason for this can be traced back to the equation \eqref{BTP}. By doing similar analysis as in equation \eqref{extra1} for the contour integrals in \eqref{BTP}, it is easy to see that the quasi-conformal transformations of these contour integrals are canceled by those of the counterterm $-2\pi n_p \log \epsilon$. In other words, it is sufficient to only consider the quasi-conformal transformations of the bulk term for the case of punctures, which is in agreement with the analysis in \cite{Zograf1988ONLE}.
	
	Now, using equations \eqref{WiDerivativeI2}, \eqref{I1toI5}, \eqref{def}, \eqref{Sct}, and putting everything together, we arrive at our desired result
	\begin{equation*}
	\begin{split}
	\pdv{\tilde{S}_{\boldsymbol{m}}^{\epsilon}}{w_i} & = \frac{\sqrt{-1}}{2} \left(4 \pi \sqrt{-1} c_i  - \oint_{C^{\epsilon}_i} |\partial_w \varphi|^2 \dd{\bar{w}} + 2 \pi \sqrt{-1} \sum_{j=1}^{n_e} \left(1-\frac{1}{m_j}\right)  \left(\frac{2}{m_j}\right) \partial_{w} \dot{F}^i(w_{j})\right.\\
	& \left. + \oint_{C^{\epsilon}_i} |\partial_w \varphi|^2 \dd{\bar{w}} -2 \pi \sqrt{-1} \sum_{j=1}^{n_e} \left(1-\frac{1}{m_j}\right)  \left(\frac{2}{m_j}\right) \partial_{w} \dot{F}^i(w_{j}) + \Sorder{1}\right)\\
	& = - 2 \pi c_i + \Sorder{1} \quad \text{as} \quad \epsilon \to 0.
	\end{split}
	\end{equation*} 
	In order to complete the proof, it remains to show that the remainder in the above formula can be estimated uniformly in a neighborhood of an arbitrary point $(w_1,\dots,w_{n-3}) \in \moduli_{0,n}$.  Let $\Gamma$ be a Fuchsian group uniformizing the orbisurface $O$. Then, the Hauptmodule $J_{\mu}$, where $\mu = t_1 \mu_1 + \dotsm + t_{n-3}\hspace{1mm}\mu_{n-3} \in \Hilbert^{-1,1}(\UHP,\Gamma)$, is continuously differentiable with respect to the Bers' coordinates $t_i$ and its coefficients in an expansion similar to Eq.\eqref{Jexpansion}  have the same property. From this, we can conclude with the help of Eq.\eqref{Liouvillefield} and Lemma~\ref{Lemma:covering} that the remainders in assertions (2) and (4) of  Lemma~\ref{lemma:asymptotics} can be estimated uniformly. Using the commutative diagram \eqref{diagramFmu}, we can conclude that an analogous assertion is valid also for the remainders in Corollary~\ref{corollary:FdotiAsymp} for $\dot{F}^i$, and this completes the proof of the theorem.
\end{proof}
\begin{remark}
	Theorem~\ref{theorem1} means that $\sum_{i=1}^{n-3} (c_i \dd{w_i} + \bar{c}_i \dd{\bar{w}_i})$ is an exact 1-form on $\moduli_{0,n}$ with anti-derivative $-S_{\boldsymbol{m}}/2\pi$.
\end{remark}

In addition to Theorem~\ref{theorem1}, we can also generalize \cite[Theorem~2]{Zograf1988ONLE} to include branch points: The Weil-Petersson metric defined on $\teich_{0,n}(\boldsymbol{m})$ in Sec.~\ref{KahlerMetrics} can be projected onto $\moduli_{0,n}$, since it is invariant under $\operatorname{Aut}(\Psi)$. We will continue to call this metric obtained on $\moduli_{0,n}$ as the Weil-Petersson metric and denote it with the same notation $\langle \cdot , \cdot \rangle_{_{\text{WP}}}$. Then, we have the following theorem:
\begin{theorem*}[Takhtajan and Zograf]
	For any fixed vector of orders $\boldsymbol{m}=(m_1,\dots,m_{n})$ such that $\sum_{j=1}^{n_e} (1-\tfrac{1}{m_j})+n_p > 2$, the function $-S_{\boldsymbol{m}}$ is a real-analytic K\"{a}hler potential for the metric $\langle \cdot , \cdot \rangle_{_{WP}}$ on $\moduli_{0,n}$:\footnote{See Theorem 2 in \cite[]{Zograf1988ONLE} for the genus $g=0$ punctured Riemann surfaces.}
	\begin{equation}\label{theorem2}
	\bar{\partial} \partial S_{\boldsymbol{m}} = - 2 \sqrt{-1} \hspace{1mm}\omega_{\text{WP}}.
	\end{equation}
\end{theorem*}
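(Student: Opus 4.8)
The plan is to deduce the second variation \eqref{theorem2} from the first variation \eqref{theorem1} that was just established, by reducing everything to a single pointwise identity for the derivatives of the accessory parameters. Since $S_{\boldsymbol{m}}$ is real-valued, Theorem~\ref{theorem1} gives $\partial S_{\boldsymbol{m}} = -2\pi\sum_{i=1}^{n-3} c_i\,\dd{w_i}$, so that $\bar\partial\partial S_{\boldsymbol{m}} = -2\pi\sum_{i,j}\partial_{\bar{w}_j} c_i\,\dd{\bar{w}_j}\wedge\dd{w_i}$. Comparing this with $\omega_{\text{WP}} = \tfrac{\sqrt{-1}}{2}\sum_{i,j}\langle\mu_i,\mu_j\rangle_{\text{WP}}\,\dd{w_i}\wedge\dd{\bar{w}_j}$ and reordering $\dd{\bar{w}_j}\wedge\dd{w_i} = -\dd{w_i}\wedge\dd{\bar{w}_j}$, the whole theorem becomes equivalent to showing $\partial_{\bar{w}_j} c_i = \tfrac{1}{2\pi}\langle\mu_i,\mu_j\rangle_{\text{WP}}$ for all $i,j = 1,\dots,n-3$. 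Thus the task is to compute how the accessory parameters vary in the anti-holomorphic moduli directions.

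To carry this out I would realize each $c_i$ as a residue, $c_i = \tfrac{1}{2\pi\sqrt{-1}}\oint_{C_i^\epsilon} T_\varphi(w)\,\dd{w}$, of the Fuchsian projective connection $T_\varphi = \partial_w^2\varphi - \tfrac12(\partial_w\varphi)^2$ (see \eqref{EnergyMomentumExpansion}, \eqref{accessory}), and differentiate under the contour. The essential input is the mixed spatial--moduli derivative of the Liouville field. Complex-conjugating Lemma~\ref{varp} and using reality of $\varphi$ together with Lemma~\ref{Lemma:covering} yields $\partial_{\bar{w}_j}\varphi = -\overline{\dot{F}^{j}}\,\partial_{\bar{w}}\varphi - \partial_{\bar{w}}\overline{\dot{F}^{j}}$. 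Applying $\partial_w$ and substituting the Liouville equation \eqref{Liouvilleequation} and $\partial_{\bar{w}}\dot{F}^{j} = M_j = e^{-\varphi}\overline{Q_j}$ from Corollary~\ref{corollary:FdotiAsymp} (so that $\partial_w\overline{\dot{F}^{j}} = e^{-\varphi}Q_j$), two terms cancel and leave the clean relation $\partial_w\partial_{\bar{w}_j}\varphi = -\tfrac12 e^{\varphi}\overline{\dot{F}^{j}}$. One more $w$-derivative, organized as $\partial_{\bar{w}_j}T_\varphi = e^{\varphi}\partial_w\!\big(e^{-\varphi}\partial_w\partial_{\bar{w}_j}\varphi\big)$, then collapses to $\partial_{\bar{w}_j}T_\varphi = -\tfrac12 Q_j$, where $Q_j\in\Hilbert^{2,0}(O)$ is the quadratic differential associated with $\mu_j$ through \eqref{MiQi}. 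Hence $\partial_{\bar{w}_j} c_i = -\tfrac12\operatorname{Res}_{w=w_i} Q_j$.

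It then remains to identify this residue with the Weil--Petersson pairing. Expanding $Q_j = \sum_k b_{jk} R_k$ in the basis of simple-pole differentials $R_k$ of \eqref{Ridef} and using $\operatorname{Res}_{w=w_i} R_k = -\tfrac{1}{\pi}\delta_{ik}$, one finds $\operatorname{Res}_{w=w_i} Q_j = -\tfrac{1}{\pi} b_{ji}$. The biorthogonality $\langle R_i, Q_j\rangle = \delta_{ij}$ from Lemma~\ref{Lemma:covering} forces $b_{ji} = \langle Q_j, Q_i\rangle$, which via $\mu_j = \rho^{-1}\overline{q_j}$ and the definition of the Petersson product equals $\langle\mu_i,\mu_j\rangle_{\text{WP}}$. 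Combining these gives $\partial_{\bar{w}_j} c_i = \tfrac{1}{2\pi}\langle\mu_i,\mu_j\rangle_{\text{WP}}$, which is precisely the content of \eqref{theorem2}; real-analyticity of $-S_{\boldsymbol{m}}$ follows from the real-analytic dependence of $\varphi$, and hence of the $c_i$, on the moduli $w_1,\dots,w_{n-3}$.

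The main obstacle, and the only genuine departure from the punctured-surface computation of Zograf--Takhtajan, lies in justifying these formal manipulations in the presence of conical singularities. One must verify that differentiation commutes with the shrinking residue contour and that the cancellations leading to $\partial_{\bar{w}_j}T_\varphi = -\tfrac12 Q_j$ persist near each $w_i$, where $\varphi$ has the logarithmic-plus-power asymptotics of Lemma~\ref{lemma:asymptotics} and $Q_j$ carries a simple pole whose leading coefficient is controlled by $h_i = 1 - 1/m_i^2$. This hinges on the uniform asymptotic estimates for $\dot{F}^{i}$, $M_i$, and $\partial_w\dot{F}^{i}$ near conical points and cusps (Corollary~\ref{corollary:FdotiAsymp} and the estimates invoked at the end of the proof of Theorem~\ref{theorem1}), which ensure that the boundary contributions from the circles $C_i^\epsilon$ vanish as $\epsilon\to 0^+$ and that the residue isolates exactly $-\tfrac12 Q_j$.
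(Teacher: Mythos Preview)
Your argument is correct and reaches the same key identity as the paper, namely $\partial_{\bar w_j}T_\varphi=-\tfrac12 Q_j$, but you arrive at it by a different route. The paper lifts to the uniformization side: it writes $\text{Sch}(J_{\varepsilon\mu_j}^{-1}\circ F^{\varepsilon\mu_j};w)=\text{Sch}(f^{\varepsilon\mu_j}\circ J^{-1};z)$ from diagram~\eqref{diagramFmu}, differentiates both sides in $\bar\varepsilon$, and invokes Ahlfors' third-derivative formula $\partial_{\bar\varepsilon}\partial_z^3 f^{\varepsilon\mu_j}\big|_{0}=-\tfrac12 q_j$ on the right; the left side collapses to $\sum_i(\partial_{\bar\varepsilon}c_i)\,(w-w_i)^{-1}$ because $F^{\varepsilon\mu_j}$ and the singular-part coefficients $h_i$ are holomorphic in $\varepsilon$. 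You instead stay entirely on $O$: you differentiate $T_\varphi=\partial_w^2\varphi-\tfrac12(\partial_w\varphi)^2$ directly, using the conjugate of Lemma~\ref{varp} together with $\partial_w\overline{\dot F^{j}}=e^{-\varphi}Q_j$ and the Liouville equation, and the cancellation you describe indeed yields $-\tfrac12 Q_j$ without appealing to the third-derivative Ahlfors formula. Your extraction of $\partial_{\bar w_j}c_i$ via the residue at $w_i$ is legitimate because $w_i$ is holomorphic in the moduli, so the pole location is fixed under $\bar\partial_{\bar w_j}$; the paper instead reads off the coefficients by rewriting the left side in the basis $R_i$ via the constraints~\eqref{accessoryconstraints} and then uses the same biorthogonality $\langle R_i,Q_j\rangle=\delta_{ij}$. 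Both routes give $\partial_{\bar w_j}c_i=\tfrac{1}{2\pi}\langle\mu_i,\mu_j\rangle_{\text{WP}}$; your version is a touch more self-contained (it needs only Lemma~\ref{varp} and holomorphy of $Q_j$), while the paper's makes the role of the quasiconformal deformation $f^{\varepsilon\mu_j}$ and the classical Ahlfors formula explicit.
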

\begin{proof}
	In order to prove this theorem, we have to prove that the accessory parameters $c_1,\dots,c_{n-3}$ are continuously differentiable on $\moduli_{0,n}$, and 
	\begin{equation}\label{wjderivativeci}
	\pdv{c_i}{\bar{w}_j} = \frac{1}{2\pi} \left\langle \pdv{w_i} , \pdv{w_j} \right\rangle_{_{\text{WP}}} \quad \text{for} \quad i,j = 1, \dots,n-3.
	\end{equation}
	The proof of continuous differentiability of the functions $c_i$ on $\moduli_{0,n}$ follows readily from the definition of accessory parameters  \eqref{accessory} and continuous differentiability of the Hauptmodule $J$ with respect to the Ber's coordinates. We now turn to proving Eq.\eqref{wjderivativeci} in the same way as for Theorem 2 in \cite[]{Zograf1988ONLE} where no branch point exists. Let $(w_1,\dots,w_{n-3})$ be an arbitrary point in $\moduli_{0,n}$ and let $\Gamma$ be a Fuchsian group uniformizing the orbisurface $O$. It follows from the commutative diagram \eqref{diagramFmu} that 
	\begin{equation*}
	\text{Sch}\left(J_{\varepsilon \mu_j}^{-1} \circ F^{\varepsilon \mu_j};w\right) = \text{Sch}\left(f^{\varepsilon \mu_j} \circ J^{-1};z\right),
	\end{equation*} 
	where $\mu_j$ is an element of the basis in $\Hilbert^{-1,1}(\UHP,\Gamma)$ given by Eq.\eqref{Beltramibasis}, and $\varepsilon \in \cmpx$ is sufficiently small. By using the following well-known property of Schwarzian derivative,
	\begin{equation*}
	\text{Sch}\left(A \circ B;w\right) = \text{Sch}\left(A;w\right)\circ B \hspace{1mm}B^{\prime 2} +\text{Sch}\left(B;w\right)
	\end{equation*}
	we have
	\begin{equation*}
	\text{Sch}\left(J_{\varepsilon \mu_j}^{-1};w\right) \circ F^{\varepsilon \mu_j} \left(\partial_{w} F^{\varepsilon \mu_j}\right)^2 + \text{Sch}\left(F^{\varepsilon \mu_j};w\right)=  \text{Sch}\left(f^{\varepsilon \mu_j};z\right) \circ J^{-1} (J^{-1})'^2 + \text{Sch}\left(J^{-1};z\right). 
	\end{equation*}
	We can now differentiate both sides of the above equality with respect to $\bar{\varepsilon}$ at the point $\varepsilon =0 $, using Eq.\eqref{EnergyMomentumExpansion} and the fact that $F^{\varepsilon \mu_j}$, and as a result $w_i^{\varepsilon \mu_j}$, are holomorphic functions of $\varepsilon$ at $\varepsilon =0 $. The left-hand side gives
	\begin{equation}\label{epsilonbarderivativelhs}
	\begin{split}
	&\left.\pdv{\bar{\varepsilon}}\right|_{\varepsilon=0}  \left(\text{Sch}\left(J_{\varepsilon \mu_j}^{-1};w\right) \circ F^{\varepsilon \mu_j} \left(\partial_{w} F^{\varepsilon \mu_j}\right)^2 + \text{Sch}\left(F^{\varepsilon \mu_j};w\right)\right)\\
	&\hspace{.8cm} = \left. \pdv{T_{\varphi^{\varepsilon \mu_j}}}{\bar{\varepsilon}}\right|_{\varepsilon=0}  + \partial_w T_{\varphi} \left. \pdv{F^{\varepsilon \mu_j}}{\bar{\varepsilon}}\right|_{\varepsilon=0}+  T_{\varphi} \left.\pdv{\bar{\varepsilon}}\right|_{\varepsilon=0}  \left(\partial_{w} F^{\varepsilon \mu_j}\right)^2 + \left.\pdv{\bar{\varepsilon}}\right|_{\varepsilon=0} \text{Sch}\left(F^{\varepsilon \mu_j};w\right)\\
	& \hspace{.8cm}= \left.\pdv{\bar{\varepsilon}}\right|_{\varepsilon=0} \sum_{i=1}^{n-1}\left(\frac{h_i}{2(w-w_i^{\varepsilon \mu_j})^2} + \frac{c_i^{\varepsilon \mu_j}}{w-w_i^{\varepsilon \mu_j}}\right)=\sum_{i=1}^{n-1} \left(\left.\pdv{c_i^{\varepsilon \mu_j}}{\bar{\varepsilon}}\right|_{\varepsilon=0}\right) \frac{1}{w-w_i},
	\end{split}
	\end{equation}
	where $c_i^{\varepsilon \mu_j} = c_i(w_1^{\varepsilon \mu_j},\dots,w_{n-3}^{\varepsilon \mu_j})$. To compute the right-hand side, we will use the definition of Schwarzian derivative \eqref{Sch}, the fact that $\partial/\partial{\bar{\varepsilon}}$ of the functions $\partial_z f^{\varepsilon \mu_j}$ and $\partial_z^2 f^{\varepsilon \mu_j}$ vanishes at $\varepsilon=0$, and the well-known Ahlfors formula\footnote{The Ahlfors formula can be derived by comparing  the expression for $\dot{f}^{\mu_j}_{-}(z)$, given by Eq.\eqref{fdotmupm}, and the equality $\varLambda \varLambda^{\ast} = \operatorname{id}$ discussed in Section \ref{subsub:complexstructure}.} \cite{Ahlfors_quasiconformal_06,Ahlfors1961SomeRO}
	\begin{equation*}
	\left.\pdv{\bar{\varepsilon}} \partial_z^3 f^{\varepsilon \mu_j} \right|_{\varepsilon=0}  = - \frac{1}{2} q_j,
	\end{equation*}
	where $\mu_j$ and $q_j$ are connected by the relation \eqref{Beltramibasis}. Then, we have
	\begin{equation}\label{epsilonbarderivativerhs}
	\begin{split}
	&\left.\pdv{\bar{\varepsilon}}\right|_{\varepsilon=0} \bigg(\text{Sch}\left(f^{\varepsilon \mu_j};z\right) \circ J^{-1} (J^{-1})'^2 + \text{Sch}\left(J^{-1};z\right)\bigg)\\
	&\hspace{1.2cm}= \left.\pdv{\bar{\varepsilon}}\left[\partial_z^3 f^{\varepsilon \mu_j} \circ J^{-1} (J^{-1})'^2 \right]\right|_{\varepsilon=0}= - \frac{1}{2} q_j \circ J^{-1} (J^{-1})'^2= - \frac{1}{2} Q_j.
	\end{split}
	\end{equation}
	Equating the left-hand side, \eqref{epsilonbarderivativelhs}, and the right-hand side, \eqref{epsilonbarderivativerhs}, and using Eq.\eqref{accessoryconstraints}, we get
	\begin{equation*}
	\begin{split}
	&\sum_{i=1}^{n-1} \left(\left.\pdv{c_i^{\varepsilon \mu_j}}{\bar{\varepsilon}}\right|_{\varepsilon=0}\right) \frac{1}{w-w_i} \\
	&\hspace{.3cm} =  \sum_{i=1}^{n-3} \left(\left.\pdv{c_i^{\varepsilon \mu_j}}{\bar{\varepsilon}}\right|_{\varepsilon=0}\right) \frac{1}{w-w_i}+\left(\left.\pdv{\bar{\varepsilon}} c_{n-2}^{\varepsilon \mu_j}\right|_{\varepsilon=0}\right) \frac{1}{w} + \left(\left.\pdv{\bar{\varepsilon}} c_{n-1}^{\varepsilon \mu_j}\right|_{\varepsilon=0}\right) \frac{1}{w-1}  \\
	& \hspace{.3cm}= \sum_{i=1}^{n-3} \left(\left.\pdv{c_i^{\varepsilon \mu_j}}{\bar{\varepsilon}}\right|_{\varepsilon=0}\right) \frac{1}{w-w_i}+ \left(\left.\pdv{\bar{\varepsilon}} \left(-1+\frac{n}{2}+ \sum_{i=1}^{n-3}c_i^{\varepsilon \mu_j}(w_i^{\varepsilon \mu_j}-1)\right)\right|_{\varepsilon=0}\right) \frac{1}{w} \\
	&\hspace{1cm}+  \left(\left.\pdv{\bar{\varepsilon}} \left(1-\frac{n}{2}- \sum_{i=1}^{n-3}c_i^{\varepsilon \mu_j} w_i^{\varepsilon \mu_j}\right)\right|_{\varepsilon=0}\right) \frac{1}{w-1} \\
	& \hspace{.3cm}= \sum_{i=1}^{n-3} \left(\left.\pdv{c_i^{\varepsilon \mu_j}}{\bar{\varepsilon}}\right|_{\varepsilon=0}\right) \left(\frac{1}{w-w_i}+\frac{w_i-1}{w}-\frac{w_i}{w-1}\right)\overset{\eqref{Ridef}}{=} - \pi \sum_{i=1}^{n-3} \left(\left.\pdv{c_i^{\varepsilon \mu_j}}{\bar{\varepsilon}}\right|_{\varepsilon=0}\right) R_i = - \frac{1}{2} Q_j,
	\end{split}
	\end{equation*}
	or
	\begin{equation*}
	\sum_{i=1}^{n-3} \left(\left.\pdv{c_i^{\varepsilon \mu_j}}{\bar{\varepsilon}}\right|_{\varepsilon=0}\right) R_i = \frac{1}{2\pi} Q_j.
	\end{equation*}
	It now follows from the biorthogonality of the bases $R_i$ and $Q_j$ as well as Lemma~\ref{Lemma:covering} that
	\begin{equation*}
	\pdv{c_i}{\bar{w}_j} = \frac{1}{2\pi} \langle Q_i , Q_j \rangle = \frac{1}{2 \pi} \left\langle \pdv{w_i} , \pdv{w_j} \right\rangle_{_{\text{WP}}},
	\end{equation*}
	which is what we wanted. Combining Eq.\eqref{wjderivativeci} and Theorem~\ref{theorem1}, we can conclude that the real-valued function $-S_{\boldsymbol{m}}$ is a potential for the Weil-Petersson metric on $\moduli_{0,n}$
	\begin{equation*}
	-\pdv[2]{S_{\boldsymbol{m}}}{w_i}{\bar{w}_j} = \left\langle \pdv{w_i} , \pdv{w_j} \right\rangle_{_{\text{WP}}},
	\end{equation*}
	while the function $-S_{\boldsymbol{m}} \circ \Psi$ is a potential for the Weil-Petersson metric on the Teichm\"{u}ller space $\teich_{0,n}(\boldsymbol{m})$. In addition, the real analyticity of the Weil-Petersson metric (see \cite{Ahlfors1961SomeRO}) implies the real analyticity of the accessory parameters $c_1,\dots,c_{n-3}$ and of their generating function $-S_{\boldsymbol{m}}/2\pi$ on the space $\moduli_{0,n}$.
\end{proof}
\begin{remark}
	Equations~\eqref{theorem1} and \eqref{theorem2} also give us a proof that the Weil-Petersson metric is a K\"{a}hler metric.
\end{remark}
Finally, let $\symmoduli_{0,n}(\boldsymbol{m}) = \moduli_{0,n}/\symm{\sigtype}$ be the moduli space of orbifold Riemann surfaces with signature $(0;m_1,\dots,m_{n_e};n_p)$. One can generalize the results of \cite[Sec.1]{Zograf_1990} to prove the following Lemma:
\begin{lemma}\label{lemma:SmHermiatiannmetricoverlambdanogenus}
	A Hermitian metric in a holomorphic $\mathbb{Q}$-line bundle $\lambda_{0,\boldsymbol{m}}$ over $\symmoduli_{0,n}(\boldsymbol{m})$ is determined by $\exp[S_{\boldsymbol{m}}/\pi]$, so that 
	\begin{equation}
	\chern{\lambda_{0,\boldsymbol{m}}}{\exp[S_{\boldsymbol{m}}/\pi]} = \frac{1}{\pi^2} \omega_{WP}.
	\end{equation}
\end{lemma}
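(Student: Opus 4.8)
The plan is to establish the statement in two stages: first, that $h \equidef \exp[S_{\boldsymbol{m}}/\pi]$ transforms under $\symm{\sigtype}$ precisely as a Hermitian metric in $\lambda_{0,\boldsymbol{m}}$ must, and second, that its first Chern form equals $\tfrac{1}{\pi^2}\omega_{\text{WP}}$. By the construction of $\lambda_{0,\boldsymbol{m}} = (\moduli_{0,n}\times\cmpx)/\!\sim$, a positive function $h$ on $\moduli_{0,n}$ defines a Hermitian metric on this $\mathbb{Q}$-line bundle exactly when $(h\circ\eta)\,|f_{\eta}|^2 = h$ for all $\eta\in\symm{\sigtype}$, just as in Lemma~\ref{lemma:Hermitianmetriconlambdawnogenus}. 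Passing to logarithms, this is equivalent to the modular-anomaly identity $S_{\boldsymbol{m}}\circ\eta - S_{\boldsymbol{m}} = -2\pi\log|f_{\eta}|$, which is the crux of the argument. Granting it, the Chern form is immediate: because $f_\eta$ is holomorphic, $\partial\bar\partial\log|f_\eta|^2=0$, so $\partial\bar\partial\log h$ is a well-defined $(1,1)$-form on the quotient, and $\chern{\lambda_{0,\boldsymbol{m}}}{h} = -\tfrac{\sqrt{-1}}{2\pi}\partial\bar\partial\log h = -\tfrac{\sqrt{-1}}{2\pi^2}\partial\bar\partial S_{\boldsymbol{m}}$. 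Equation~\eqref{theorem2} gives $\partial\bar\partial S_{\boldsymbol{m}} = -\bar\partial\partial S_{\boldsymbol{m}} = 2\sqrt{-1}\,\omega_{\text{WP}}$, whence $\chern{\lambda_{0,\boldsymbol{m}}}{h} = \tfrac{1}{\pi^2}\omega_{\text{WP}}$; since $\omega_{\text{WP}}$ is $\symm{\sigtype}$-invariant, this form descends to $\symmoduli_{0,n}(\boldsymbol{m})$.

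To prove the anomaly identity it suffices to treat the generators of $\symm{\sigtype}$, because the multiplicative cocycle relation $f_{\eta_1\eta_2} = (f_{\eta_1}\circ\eta_2)\,f_{\eta_2}$ is matched by the additive relation $S_{\boldsymbol{m}}\circ\eta_1\eta_2 - S_{\boldsymbol{m}} = (S_{\boldsymbol{m}}\circ\eta_1 - S_{\boldsymbol{m}})\circ\eta_2 + (S_{\boldsymbol{m}}\circ\eta_2 - S_{\boldsymbol{m}})$, so verifying the identity on the transpositions $(\sigma_{j,j+1},\sigma'_{i,i+1})$ propagates it to all of $\symm{\sigtype}$. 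Each such transposition acts on the normalized orbisurface $O$ through a M\"obius transformation $\gamma\in\PSLC$ restoring the placement of the three distinguished points $0,1,\infty$; the only non-trivial $\gamma$ are $(w-w_{n-3})/(1-w_{n-3})$, $1-w$, and $w/(w-1)$, while every transposition not touching $0,1,\infty$ gives $\gamma=\mathrm{id}$, hence $f=1$ and zero anomaly. Since the hyperbolic metric $e^{\varphi}|\dd{w}|^2$ is intrinsic to $O$, it is unchanged by $\gamma$, so the Liouville field obeys $\varphi(w) = \tilde\varphi(\gamma(w)) + \log|\gamma'(w)|^2$, exactly as in \eqref{liouvillefieldtransform}.

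The substantive step is to feed this transformation law into the regularized action \eqref{NoGenusAction} (equivalently \eqref{FullNoGenusAction}) and to track how each ingredient — the bulk Dirichlet and area integrals over $O_\epsilon$, the contour integrals around conical points and punctures, and the $\log\epsilon$ and $\log|\log\epsilon|$ counterterms — responds to $w\mapsto\gamma(w)$. This is the exact analogue, for the full action, of the limit manipulations carried out for the combination $\mathsf{H}$ in \eqref{variationH2}--\eqref{Hmetric}: one changes variables by $\gamma$, pushes the cut-off regions and circles $C_i^\epsilon$ forward, uses the asymptotics of Lemma~\ref{lemma:asymptotics} near each singular point, and shows that every $\epsilon$-divergent contribution cancels while the surviving finite remainder equals $-2\pi\log|f_{\eta}|$ with $f_\eta$ read off from \eqref{cocyclesdirectprod}. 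I expect this bookkeeping to be the main obstacle: one must isolate the finite parts of the transformed boundary integrals and confirm that, for every generator and for every admissible way of fixing $0,1,\infty$ among the strata, they assemble into precisely the prescribed cocycle. A convenient organizing device — to be used as a cross-check rather than a shortcut — is that $\mathsf{H}$ of Lemma~\ref{lemma:Hermitianmetriconlambdawnogenus} is already a Hermitian metric in $\lambda_{0,\boldsymbol{m}}$; comparing $S_{\boldsymbol{m}}$ with the factors $\Lponetial_i^{m_ih_i}$ (whose individual anomalies are controlled by Lemma~\ref{hihermitianmetric}) reduces the claim to the vanishing of the residual anomaly of $\Gpotential_{\boldsymbol{m}} = S_{\boldsymbol{m}} - \pi\sum_i(m_i - \tfrac{1}{m_i})\log\Lponetial_i$, i.e. to its $\symm{\sigtype}$-invariance — the genus-zero counterpart of the global K\"ahler-potential property of $-\Gpotential_{\boldsymbol{m}}$.

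Once the anomaly identity is secured, both assertions of the lemma follow: $h=\exp[S_{\boldsymbol{m}}/\pi]$ is a bona fide Hermitian metric in $\lambda_{0,\boldsymbol{m}}$ over $\symmoduli_{0,n}(\boldsymbol{m})$, and the Chern-form computation of the first paragraph, powered by \eqref{theorem2}, yields $\chern{\lambda_{0,\boldsymbol{m}}}{\exp[S_{\boldsymbol{m}}/\pi]} = \tfrac{1}{\pi^2}\omega_{\text{WP}}$, as claimed.
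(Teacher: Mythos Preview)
Your proposal is correct and follows essentially the same route as the paper: establish the modular-anomaly identity $S_{\boldsymbol{m}}\circ\eta - S_{\boldsymbol{m}} = -2\pi\log|f_\eta|$ on the generating transpositions by pushing the regularized action through the associated M\"obius map and tracking bulk, boundary, and counterterm contributions via the asymptotics of Lemma~\ref{lemma:asymptotics}, then invoke \eqref{theorem2} for the Chern form. The paper carries out exactly this bookkeeping explicitly for the generator with $\gamma=(w-w_{n-3})/(1-w_{n-3})$, splitting $\Delta S_{\boldsymbol{m}}$ into four pieces and matching the result against \eqref{cocyclesdirectprod}.

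One small correction in your cross-check paragraph: Lemma~\ref{hihermitianmetric} concerns the action of Schottky generators $L_k$ on $\schottky_{g,n}(\boldsymbol{m})$, not the $\symm{\sigtype}$-action on $\moduli_{0,n}$, so it does not directly control the individual anomalies of the $\Lponetial_i^{m_ih_i}$ under the symmetric group. The relevant genus-zero statement is entirely contained in Lemma~\ref{lemma:Hermitianmetriconlambdawnogenus}, which you already cite; since you explicitly flag this as a cross-check rather than the main argument, it does not affect the validity of your plan.
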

\begin{proof}
	We first need to show that $\exp[S_{\boldsymbol{m}}/\pi]$ is a hermitian metric in the holomorphic $\mathbb{Q}$-line bundle $\lambda_{0,\boldsymbol{m}}$ defined in the Section \ref{subsec:moduli}. To do so, we use the same representation of $\symm{\sigtype}$ introduced there. For our purposes, it suffices to prove this lemma for the case where the signature of orbifold Riemann surface $O$ is given by $(0;\underbrace{m,\dots,m}_{s},\underbrace{m',\dots,m'}_{s'})$ with $s \equiv s_m$ and $s' \equiv s_{m'} >3$. Namely, we have only two kinds of points, $s$ branch points of order $m$ and $s'$ branch points of order $m'$. Furthermore, we will fix the last three points with order $m' > m$ to be at $0,1,\infty$. Then, the generators and 1-cocycles of $\symm{\sigtype}=\symm{s}\times\symm{s'}$ would be the same as the ones we introduced in Section \ref{subsec:moduli}. Moreover, in this case we have
	\begin{multline}\label{actionlinebundle}
	S_{\boldsymbol{m}}[\varphi] = \\
	\hspace{.5cm}\lim_{\epsilon \to 0^{+}} \left( \iint_{O_{\epsilon}}(|\partial_w \varphi|^2 + e^{\varphi}) \dd[2]{w} + \frac{\sqrt{-1}}{2} \sum_{k=1}^{s+s'-1} \left(1-\frac{1}{m_k}\right) \oint_{C_{k}^{\epsilon}} \varphi \left(\frac{\dd{\bar{w}}}{\bar{w}-\bar{w}_{k}} - \frac{\dd{w}}{w - w_{k}}\right)\right.\\\left.
	\hspace{-1cm}+ \frac{\sqrt{-1}}{2} \left(1+\frac{1}{m'}\right) \oint_{C_{s+s'}^{\epsilon}} \varphi \left(\frac{\dd{\bar{w}}}{\bar{w}} - \frac{\dd{w}}{w}\right)\right),
	\end{multline}
	where 
	\begin{equation}
	\left\{
	\begin{split}
	&O_{\epsilon} = \cmpx \Big\backslash \bigcup_{k=1}^{s+s'-1}\left\{w \, \Big| \, |w-w_k|<\epsilon\right\} \cup \left\{w \, \Big| \, |w|>\epsilon^{-1}\right\},\\
	&	C_{k}^{\epsilon} = \left\{w \, \Big| \, |w-w_{k}| = \epsilon\right\},\hspace{.5cm}k=1,2,\dots,s+s'-1\\
	&C_{s+s'}^{\epsilon} = \left\{w \, \Big| \, |w| = 1/\epsilon\right\},
	\end{split}
	\right.
	\end{equation}
	and the circles are oriented as a component of $\partial O_{\epsilon}$. Notice that we did not consider the counterterms in (\ref{actionlinebundle}). Actually, $\symm{\sigtype}$ does not change the conformal family, and therefore the counterterms will not contribute to the variation under the action of this group, and therefore writing them would be redundant. 
	To study the variation of $S_{\boldsymbol{m}}[\varphi]$ under the action of $\symm{\sigtype}$, it suffices to study its variation under the generators $\{(\sigma_{j,j+1},\sigma'_{i,i+1})\}_{j=1,i=1}^{j=s-1,i=s'-1}$ of $\symm{\sigtype}$. Considering the background we provided in Section \ref{subsec:moduli} on the structure of the generators, variation is translated through the effect of the transformation $\gamma_{j,j+1;i,i+1}$. We have:
	\begin{equation}\label{variationSsymm}
	\begin{split}
	&\Delta S_{\boldsymbol{m}}[\varphi]=S_{\boldsymbol{m}}[\varphi]\circ (\sigma_{k,k+1},\sigma'_{i,i+1})-S_{\boldsymbol{m}}[\varphi]\\
	&\hspace{.1cm}=\lim_{\epsilon \to 0^{+}} \bigg( \frac{\sqrt{-1}}{2}\bigg\{ \Delta S_{\boldsymbol{m}}^{(1)}[\varphi]+ \big(1-\frac{1}{m}\big)\Delta  S_{\boldsymbol{m}}^{(2)}[\varphi]+ \big(1-\frac{1}{m'}\big)\Delta S_{\boldsymbol{m}}^{(3)}[\varphi]+ \big(1+\frac{1}{m'}\big)\Delta S_{\boldsymbol{m}}^{(4)}[\varphi]\bigg\}\bigg),
	\end{split}		
	\end{equation}
	where
	\begin{equation*}
	\begin{split}
	& \Delta S_{\boldsymbol{m}}^{(1)}[\varphi]=\iint_{\tilde{O}_{\epsilon}}(|\partial_{\gamma_{j,j+1;i,i+1}w} \tilde{\varphi}|^2 + e^{\tilde{\varphi}}) \dd{(\gamma_{j,j+1;i,i+1}w)}\wedge\dd{(\overline{\gamma_{j,j+1;i,i+1}w})}
	\\
	&\hspace{2cm}-\iint_{O_{\epsilon}}(|\partial_w \varphi|^2 + e^{\varphi}) \dd{w}\wedge\dd{\bar{w}}, \\
	&\Delta S_{\boldsymbol{m}}^{(2)}[\varphi] = \sum_{k=1}^{s}\bigg\{\oint_{\tilde{C}_{k}^{\epsilon}} \tilde{\varphi} \left(\frac{\dd{(\overline{\gamma_{j,j+1;i,i+1}w})}}{\overline{\gamma_{j,j+1;i,i+1}w}-\overline{\gamma_{j,j+1;i,i+1}w}_{k}} - \frac{\dd{(\gamma_{j,j+1;i,i+1}w)}}{\gamma_{j,j+1;i,i+1}w - \gamma_{j,j+1;i,i+1}w_{k}}\right)\\
	&\hspace{2cm}-\oint_{C_{k}^{\epsilon}} \varphi \left(\frac{\dd{\bar{w}}}{\bar{w}-\bar{w}_{k}}
	- \frac{\dd{w}}{w - w_{k}}\right)\bigg\},\\
	&\Delta S_{\boldsymbol{m}}^{(3)}[\varphi] =\sum_{k=s+1}^{s+s'-1}  \bigg\{\oint_{\tilde{C}_{k}^{\epsilon}} \tilde{\varphi} \left(\frac{\dd{(\overline{\gamma_{j,j+1;i,i+1}w})}}{\overline{\gamma_{j,j+1;i,i+1}w}-\overline{\gamma_{j,j+1;i,i+1}w_k}} - \frac{\dd{(\gamma_{j,j+1;i,i+1}w)}}{\gamma_{j,j+1;i,i+1}w - \gamma_{j,j+1;i,i+1}w_k}\right)\\
	&\hspace{2cm}-\oint_{C_{k}^{\epsilon}} \varphi \left(\frac{\dd{\bar{w}}}{\bar{w}-\bar{w}_{k}} - \frac{\dd{w}}{w - w_{k}}\right)\bigg\},\\
	&\Delta S_{\boldsymbol{m}}^{(4)}[\varphi] = \oint_{\tilde{C}_{s+s'}^{\epsilon}} \tilde{\varphi} \left(\frac{\dd{(\overline{\gamma_{j,j+1;i,i+1}w})}}{\overline{\gamma_{j,j+1;i,i+1}w}} - \frac{\dd{(\gamma_{j,j+1;i,i+1}w)}}{\gamma_{j,j+1;i,i+1}w}\right)-\oint_{C_{s+s'}^{\epsilon}} \varphi \left(\frac{\dd{\bar{w}}}{\bar{w}} - \frac{\dd{w}}{w }\right),
	\end{split}
	\end{equation*}
	and $\tilde{O}_{\epsilon}$,  $\tilde{C}_{k}^{\epsilon}$, $\tilde{\varphi}$ are the transformed orbifold Riemann surface, circles, and Liouville field, respectively, such that $\tilde{O}_{\epsilon}=\cmpx\backslash\cup_{k=1}^{s+s'}\operatorname{Int}\tilde{C}_{k}^\epsilon$. By looking at (\ref{transpositionsdirectprod}), we see that the index $j$ does not have any non-trivial effect and we only have to worry about different values of $i$. For $i<s'-3$, the transformation $\gamma_{j,j+1;i,i+1}$ is given by the identity of $\PSLC$, so, for this cases we have $\Delta S_{\boldsymbol{m}}[\varphi]=0$. Accordingly, the only non-trivial cases are $i=s'-3,s'-2,s'-1$. Let us look at the case with $i=s'-3$. Here we have the transformation $\gamma_{s'-3,s'-2}=(w-w_{s+s'-3})/(1-w_{s+s'-3})$ which for simplicity we call $\gamma$. Let us calculate each contribution in \eqref{variationSsymm} separately \footnote{For $\Delta S_{\boldsymbol{m}}^{(1)}[\varphi]$, the exponential terms give the same  constants, and we can safely ignore them in variation.}.
	\begin{equation*}\label{I1}
	\begin{aligned}
	\Delta S_{\boldsymbol{m}}^{(1)}[\varphi]&=
	\iint_{\tilde{O}_{\epsilon}}|\partial_{\gamma w} \tilde{\varphi}|^2 \dd{(\gamma w)}\wedge\dd{(\overline{\gamma w})}-\iint_{O_{\epsilon}}|\partial_w \varphi|^2 \dd{w}\wedge\dd{\bar{w}} \\
	&=\iint_{\gamma^{-1}(\tilde{O}_{\epsilon})}|\partial_{ w} \tilde{\varphi}\circ\gamma|^2 |\gamma'|^{-2}(|\gamma'|^2\dd{ w}\wedge\dd{\bar{w}}) -\iint_{O_{\epsilon}}|\partial_w \varphi|^2 \dd{w}\wedge\dd{\bar{w}}\\
	&=\iint_{\gamma^{-1}(\tilde{O}_{\epsilon})\backslash O_{\epsilon}} |\partial_w \varphi|^2 \dd{w}\wedge\dd{\bar{w}},
	\end{aligned}
	\end{equation*}
	where $\gamma'=\partial\gamma(w)/\partial w$ and in the last line, we used the invariance of the hyperbolic metric:
	\begin{equation}\label{hyperbolicinvariance}
	\begin{aligned}
	e^{\tilde{\varphi}\circ\gamma}\dd{(\gamma w)}\wedge\dd{(\overline{\gamma w})}=e^\varphi \dd{w}\wedge\dd{\bar{w}}\implies \tilde{\varphi}\circ\gamma+\log|\gamma'|^2=\varphi\implies
	\partial_w\tilde{\varphi}\circ\gamma=\partial_w\varphi,
	\end{aligned}
	\end{equation}
	since $\gamma'=1/(1-w_{s+s'-3})$. The region $\gamma^{-1}(\tilde{O}_{\epsilon})\backslash O_{\epsilon} $ contains a part of $\cmpx$ bounded by the circles  $\tilde{C}_{k}$ and $C_{k}$. Thus, by integration by parts and using the equations of motion together with taking into account the orientation of these circles, we have:
	\begin{equation}\label{I1-1}
	\Delta S_{\boldsymbol{m}}^{(1)}[\varphi]=\hspace{-2mm}\sum_{k=1}^{s+s'-1}\left(\oint_{\tilde{C}_{k}^{\epsilon}}\varphi\partial_{\bar{w}}\varphi\dd{\bar{w}}-\oint_{C_{k}^{\epsilon}}\varphi\partial_{\bar{w}}\varphi\dd{\bar{w}}\right)+\left(\oint_{\tilde{C}_{s+s'}^{\epsilon}}\varphi\partial_{\bar{w}}\varphi\dd{\bar{w}}-\oint_{C_{s+s'}^{\epsilon}}\varphi\partial_{\bar{w}}\varphi\dd{\bar{w}}\right).
	\end{equation}
	The the explicit form of the circles $\tilde{C}_{k}^\epsilon$ are given by
	\begin{equation}\label{Ctildes}
	\tilde{C}_{k}^\epsilon = \left\{
	\begin{split}
	&\big\{w \, \big| \, \:|\gamma w-\gamma w_k|=\epsilon \big\}= \big\{w \, \big|\, \:| w-w_k|=\epsilon|1-w_{s+s'-3}| \big\} &k\leq s+s'-1, \\\\
	& \left\{w \, \big|\, \:|\gamma w|=\tfrac{1}{\epsilon} \right\}= \big\{w \, \big|\, \:| w|=\tfrac{1}{\epsilon} |1-w_{s+s'-3}| \big\} &k= s+s'.
	\end{split}
	\right.
	\end{equation}
	Now, with the use of equations \eqref{I1-1},\eqref{Ctildes} and the asymptotic form of $\varphi$ given by \ref{lemma:asymptotics}, one can find\footnote{Note that the orientation of the contour around the point at infinity is opposite of the other points, hence the sign difference in the last term.}
	\begin{equation}\label{I1-2}
	\begin{aligned}
	&\Delta S_{\boldsymbol{m}}^{(1)}[\varphi]=\hspace{-1mm}-4\pi\sqrt{-1} \left(1-\frac{1}{m}\right)^2\hspace{-1mm}s\log\frac{1}{|1-w_{s+s'-3}|}-4\pi\sqrt{-1} \left(1-\frac{1}{m'}\right)^2\hspace{-1mm}(s'-1)\log\frac{1}{|1-w_{s+s'-3}|}\\
	&\hspace{1.75cm}+4\pi\sqrt{-1} \left(1+\frac{1}{m'}\right)^2\log\frac{1}{|1-w_{s+s'-3}|}.
	\end{aligned}
	\end{equation} 
	Next, for the $\Delta S_{\boldsymbol{m}}^{(2)}[\varphi]$ we have
	\begin{equation*}
	\begin{aligned}
	\Delta S_{\boldsymbol{m}}^{(2)}[\varphi]&=\sum_{k=1}^{s}  \left(\oint_{\tilde{C}_{k}^{\epsilon}} \tilde{\varphi} \left(\frac{\dd{(\overline{\gamma w})}}{\overline{\gamma w}-\overline{\gamma w}_{k}} - \frac{\dd{(\gamma w)}}{\gamma w - \gamma w_{k}}\right)\right. \left.-\oint_{C_{k}^{\epsilon}} \varphi \left(\frac{\dd{\bar{w}}}{\bar{w}-\bar{w}_{k}} - \frac{\dd{w}}{w - w_{k}}\right)\right)\\
	&\hspace{-1.2cm}=\sum_{k=1}^{s}  \left(\oint_{\tilde{C}_{k}^{\epsilon}}(\varphi-\log|\gamma'|^2) \left(\frac{\bar{\gamma'}\dd{\bar{ w}}}{\overline{\gamma w}-\overline{\gamma w}_{k}} - \frac{\gamma'\dd{ w}}{\gamma w - \gamma w_{k}}\right)\right. \left.-\oint_{C_{k}^{\epsilon}} \varphi \left(\frac{\dd{\bar{w}}}{\bar{w}-\bar{w}_{k}} - \frac{\dd{w}}{w - w_{k}}\right)\right)\\
	&\hspace{-1.2cm}=\sum_{k=1}^{s}  \left(\oint_{\tilde{C}_{k}^{\epsilon}}(\varphi+2\log|1-w_{s+s'-3}|) \left(\frac{\dd{\bar{w}}}{\bar{w}-\bar{w}_{k}} - \frac{\dd{w}}{w - w_{k}}\right)\right. \left.-\oint_{C_{k}^{\epsilon}} \varphi \left(\frac{\dd{\bar{w}}}{\bar{w}-\bar{w}_{k}} - \frac{\dd{w}}{w - w_{k}}\right)\right),
	\end{aligned}
	\end{equation*}
	where we have used (\ref{hyperbolicinvariance}) and  $\gamma'=1/(1-w_{s+s'-3})$. By using the asymptotics of $\varphi$ given in Lemma \ref{lemma:asymptotics} and equation (\ref{Ctildes}), the above expression can be simplified to give
	\begin{equation}\label{I2}
	\begin{aligned}
	\Delta S_{\boldsymbol{m}}^{(2)}[\varphi]
	= -8\pi\sqrt{-1}\hspace{1mm}\frac{1}{m}s\log\frac{1}{|1-w_{s+s'-3}|}.
	\end{aligned}
	\end{equation}
	By identical calculation to that of $\Delta S_{\boldsymbol{m}}^{(2)}[\varphi]$, we get:
	\begin{equation}\label{I3}
	\begin{aligned}
	\Delta S_{\boldsymbol{m}}^{(3)}[\varphi]
	=-8\pi\sqrt{-1}\hspace{1mm}\frac{1}{m'}(s'-1)\log\frac{1}{|1-w_{s+s'-3}|}.
	\end{aligned}
	\end{equation}
	Finally, 
	\begin{equation*}
	\begin{aligned}
	\Delta S_{\boldsymbol{m}}^{(4)}[\varphi]&=\oint_{\tilde{C}_{s+s'}^{\epsilon}} \tilde{\varphi} \left(\frac{\dd{(\overline{\gamma w})}}{\overline{\gamma w}} - \frac{\dd{(\gamma w)}}{\gamma w}\right)-\oint_{C_{s+s'}^{\epsilon}} \varphi \left(\frac{\dd{\bar{w}}}{\bar{w}} - \frac{\dd{w}}{w }\right)\\
	&=\oint_{\tilde{C}_{s+s'}^{\epsilon}} (\varphi-\log|\gamma'|^2) \left(\frac{\bar{ \gamma'}\dd{ \bar{ w}}}{\overline{\gamma w}} - \frac{\gamma'\dd{w}}{\gamma w}\right)-\oint_{C_{s+s'}^{\epsilon}} \varphi \left(\frac{\dd{\bar{w}}}{\bar{w}} - \frac{\dd{w}}{w }\right)\\
	&=\oint_{\tilde{C}_{s+s'}^{\epsilon}} (\varphi+2\log|1-w_{s+s'-3}|) \left(\frac{\dd{ \bar{ w}}}{\bar{ w}} - \frac{\dd{w}}{ w}\right)-\oint_{C_{s+s'}^{\epsilon}} \varphi \left(\frac{\dd{\bar{w}}}{\bar{w}} - \frac{\dd{w}}{w }\right),
	\end{aligned}
	\end{equation*}
	where in the last line we approximated $w-w_{s+s'-3}$ with $w$ in the denominator of the second term due to $w$ having a large absolute value. Again, using the asymptotics of $\varphi$ from Lemma \ref{lemma:asymptotics}, Eq.(\ref{Ctildes}) and taking into account the opposite orientation of contours around the point at infinity, we have:
	\begin{equation}
	\begin{aligned}\label{I4}
	\Delta S_{\boldsymbol{m}}^{(4)}[\varphi]
	=-8\pi\sqrt{-1}\hspace{1mm}\frac{1}{m'}\log \frac{1}{|1-w_{s+s'-3}|}.
	\end{aligned}
	\end{equation}
	Putting \eqref{variationSsymm}, \eqref{I1-2}, \eqref{I2}, \eqref{I3} and \eqref{I4} together, we get
	\begin{multline}\label{variationSfinal}
	\Delta S_{\boldsymbol{m}}[\varphi]
	=2\pi\left(\frac{1}{m^2}-1\right)s\log|1-w_{s+s'-3}|+2\pi\left(\frac{1}{m'^2}-1\right)(s'-2)\log|1-w_{s+s'-3}|.
	\end{multline}
	By comparing (\ref{variationSfinal}) with equation (\ref{cocyclescones}), we can readily see that for $i=s'-3$: 
	\begin{equation}\label{modularanomaly}
	\Delta S_{\boldsymbol{m}}[\varphi]=S_{\boldsymbol{m}}[\varphi]\circ (\sigma_{k,k+1},\sigma'_{i,i+1})-S_{\boldsymbol{m}}[\varphi]=-2\pi\log |f_{(\sigma_{k,k+1},\sigma'_{i,i+1})}|.
	\end{equation}
	Doing the analogous calculations for $i=s'-2,s'-3$ would yield similar results. One can continue this proof for the case with the direct product of the symmetric groups of more strata inductively. Furthermore, had we chosen to fix the points $0,1$ and $\infty$ in the other stratum or decided to deal with punctures, we would still get the same result with minor changes in the path of the proof. So, in principle, we proved that under the action of any element $\eta$ of $\symm{\sigtype}$, $S_{\boldsymbol{m}}$ transforms according to the rule 
	\begin{equation}
	\exp[S_{\boldsymbol{m}}\circ \eta/\pi]\hspace{1mm}|f_\eta|^2=\exp[S_{\boldsymbol{m}}/\pi].
	\end{equation}
	This means that $\exp[S_{\boldsymbol{m}}/\pi]$ is a Hermitian metric in the holomorphic $\mathbb{Q}$-line bundle $\lambda_{0,\boldsymbol{m}}$ defined in the Section \ref{subsec:moduli}. As we mentioned before, (\ref{modularanomaly}) shows that 1-cocycles can be viewed as the modular anomaly caused by the non-covariance of the action under the effect of the modular group. To complete the proof, we remind that the first Chern form $\chern{\lambda_{0,\boldsymbol{m}}}{\exp[S_{\boldsymbol{m}}/\pi]}$ of the metrized $\mathbb{Q}$-line bundle $\lambda_{0,\boldsymbol{m}}$ with a metric $\exp[S_{\boldsymbol{m}}/\pi]$ is given by  
	\begin{equation}
	\chern{\lambda_{0,\boldsymbol{m}}}{\exp[S_{\boldsymbol{m}}/\pi]}=-\frac{\sqrt{-1}}{2\pi}\hspace{1mm}\partial\bar{\partial} \left(\frac{S_{\boldsymbol{m}}}{\pi}\right).
	\end{equation}
	Thus using Theorem \ref{theorem2}, we find that
	\begin{equation}
	\chern{\lambda_{0,\boldsymbol{m}}}{\exp[S_{\boldsymbol{m}}/\pi]} = \frac{1}{\pi^2} \omega_{\text{WP}}.
	\end{equation}
	This statement completes the proof.
\end{proof}
\subsection{Riemann Orbisurfaces of Genus >1}\label{gg1}
Consider a marked normalized Schottky group of rank $g>1$ denoted by $(\Sigma; L_1, \dots, L_g)$. The Liouville action, at the classical level, is actually the on-shell value of the Liouville action functional. For the case of a closed Riemann surface with $g>1$ it was first defined by Zograf and Takhtajan \cite{1988SbMat..60..297Z} (and was later interpreted by Takhtajan and Teo \cite{Takhtajan:2002cc} in cohomological language) to be given by
\begin{equation}\label{Liouvilleactiononlygenus}
S[\varphi] = \iint_{\SchottkyFund} (|\partial_w \varphi|^2 + e^{\varphi}) \dd[2]{w} + \frac{\sqrt{-1}}{2} \sum_{k=2}^{g} \oint_{C_k} \theta_{L_k^{-1}}(\varphi),
\end{equation}
where the 1-form $\theta_{L_k^{-1}}(\varphi)$ is given by
\begin{equation}
\theta_{L_k^{-1}}(\varphi) = \left(\varphi - \frac{1}{2} \log|L_k ' |^2 - \log |l_k|^2\right)\left(\frac{L_k ''}{L_k'} \dd{w} - \frac{\overline{L_k ''}}{\overline{L_k'}} \dd{\bar{w}}\right), \qquad \forall \hspace{.5mm}L_k \in (\Sigma; L_1, \dots, L_g).
\end{equation}
In the above equations, $\SchottkyFund$ is the fundamental domain of the marked normalized Schottky group $(\Sigma; L_1, \dots, L_g)$, $\partial \SchottkyFund = \bigcup_{k=1}^{g} (C_k \cup C_k')$, and
\begin{equation}
l_k = \frac{1-\lambda_k}{\sqrt{\lambda_k}(a_k -b_k)}
\end{equation}
is the left-hand lower element in the matrix representation of the generator $L_k \in \PSLC$ for $k=2,\dots,g$. Notice that since we have chosen the marked Schottky group to be normalized, in particular $a_1=0$ and $b_1=\infty$, we have $l_1 =0$ and $\theta_{L_1^{-1}}(\varphi)= 0$. Zograf and Takhtajan \cite[Theorems~1,2]{1988SbMat..60..297Z} have proven that
\begin{equation}
\partial S = - 2 \pi \sum_{i=1}^{3g-3} c_i P_i \qquad \text{and} \qquad \bar{\partial} \partial S = - 2 \sqrt{-1} \omega_{WP},
\end{equation}
where  $\partial$ and $\bar{\partial}$ are $(1,0)$ and $(0,1)$ components of the  de~Rham differential on $\schottky_{g}$ (see Section \ref{subsec:Schottkyspace}).
This implies that $-S$ is actually a K\"{a}hler potential for the projection of Weil-Petersson metric on $\schottky_{g}$.
\begin{remark}
	The addition of the second term in the on-shell Liouville action \eqref{Liouvilleactiononlygenus} makes sure that:
	\begin{itemize}
		\item The variation $\delta S$ of the classical action $S[\varphi]$ has the form
		\begin{equation}
		\delta S[\varphi] \equidef \lim_{t \to 0} \frac{S[\varphi + t \delta \psi] - S[\varphi]}{t} = \iint_{\SchottkyFund} (-2 \partial_{\bar{w}} \partial_w  \varphi + e^{\varphi}) \delta \psi \dd[2]{w}, 
		\end{equation}
		where the variation $\delta \psi \in \mathcal{C}^{\infty}(\Omega,\mathbb{R})$ is a smooth function on $\Omega$ which is automorphic with respect to $\Sigma$.
		For the marked Schottky group $(\Sigma; L_1, \dots, L_g)$, the  $S[\varphi]$ is \emph{independent} of the specific choice of a fundamental domain $\SchottkyFund$.\footnote{See footnote~\ref{nonuniqueschottkyfund}.}
	\end{itemize}
\end{remark}
When the conical points and cusps are present, the area integral in \eqref{Liouvilleactiononlygenus} diverges in the limits  $w \to w_i$ due to asymptotics of Liouville field $\varphi$ (see Lemma \ref{lemma:asymptotics}) and the classical Liouville action needs to be regularized. Let us define $\singfund$ as the pair $(\SchottkyFund_0 , \widetilde{\brdiv}|_{_{\SchottkyFund}})$ where $\SchottkyFund_0$ is defined as $\SchottkyFund \cap \Omega_0$ and $\widetilde{\brdiv}|_{_{\SchottkyFund}}$ denotes the restriction of $\widetilde{\brdiv}$ to $\SchottkyFund$. Here, we have assumed that all singular points $w_1,\dots,w_{n}$ belong to the interior of fundamental domain $\SchottkyFund$. For sufficiently small $\epsilon >0$, define\footnote{Note that $\lim_{\epsilon\rightarrow 0}\singfund_{\epsilon}  =  \SchottkyFund_{\text{reg}}$ where $\SchottkyFund_{\text{reg}}:= \SchottkyFund_0 \backslash \operatorname{Supp}(\widetilde{\brdiv}|_{_{\SchottkyFund}})$.}
\begin{equation}
\singfund_{\epsilon} = \singfund \big\backslash \bigcup_{i=1}^{n} D_i^{\epsilon},
\end{equation}
with $D_i^{\epsilon} \equidef \left\{w \,\big| \, |w-w_i| < \epsilon \right\}$. It follows from Lemma \ref{lemma:asymptotics} that the following limit exists
\begin{multline}\label{Sb}
S_{\singfund_{\text{reg}}}[\varphi] = \\
\lim_{\epsilon \to 0^{+}} \left(\iint_{\singfund_{\epsilon}}(|\partial_w \varphi|^2 + e^{2\varphi}) \dd[2]{w}  + \frac{\sqrt{-1}}{2} \sum_{j=1}^{n_e} \left(1-\frac{1}{m_j}\right) \oint_{C_{j}^{\epsilon}} \varphi \left(\frac{\dd{\bar{w}}}{\bar{w}-\bar{w}_{j}} - \frac{\dd{w}}{w - w_{j}}\right)\right. \\ 
\left. - 2\pi \sum_{j=1}^{n_e} \left(1-\frac{1}{m_j}\right)^2 \log\epsilon  +  2 \pi n_p \big(\log\epsilon + 2 \log|\log\epsilon|\big) \right).
\end{multline}
\begin{remark}
	When $n_p = 0$, the appropriate $S_{\singfund_{\text{reg}}}[\varphi]$ is given by 
	\begin{multline}\label{CompactSb}
	\lim_{\epsilon \to 0^{+}} \Bigg(\iint_{\singfund_{\epsilon}}(|\partial_w \varphi|^2 + e^{2\varphi}) \dd[2]{w} + \frac{\sqrt{-1}}{2} \sum_{j=1}^{n} \left(1-\frac{1}{m_j}\right) \oint_{C_{j}^{\epsilon}} \varphi \left(\frac{\dd{\bar{w}}}{\bar{w}-\bar{w}_{j}} - \frac{\dd{w}}{w - w_{j}}\right) \\ 
	- 2\pi \sum_{i=1}^{n} \left(1-\frac{1}{m_j}\right)^2 \log\epsilon\Bigg).
	\end{multline}
\end{remark}
\noindent Now, we can define the regularized action as 
\begin{equation}\label{regularizeLiouvilleaction}
\begin{split}
&S_{\boldsymbol{m}}[\varphi] = S_{\boldsymbol{m}}(\SchottkyFund;w_1,\dots,w_{n}) 
= S_{\singfund_{\text{reg}}}[\varphi] + \frac{\sqrt{-1}}{2} \sum_{k=2}^{g} \oint_{C_k} \theta_{L_k^{-1}}(\varphi).
\end{split}
\end{equation}
This completes the definition of $S_{\boldsymbol{m}}$ provided that all of the fixed points $w_1,\dots,w_{n}$ lie in the interior of the fundamental domain --- i.e. $w_1,\dots,w_{n} \in \operatorname{Int} \SchottkyFund$. The $S_{\boldsymbol{m}}(\SchottkyFund;w_1,\dots,w_{n})$ depends on the choice of representatives in $\Sigma \cdot \{w_1,\dots,w_{n}\}$ and \emph{no longer determines a function on the Schottky space $\schottky_{g,n}(\boldsymbol{m})$}. Note that $w_i$ and $L_{k}(w_i)$ have the same order and are related by the action of $\symm{\sigtype}$ acting on $\singrigon$. The geometric meaning of $S_{\boldsymbol{m}}$ is given by the following Lemma (also see Lemma~\ref{hihermitianmetric}):
\begin{lemma}\label{Soverpihermitianmetric}
	The regularized Liouville action determines a Hermitian metric $\exp [S_{\boldsymbol{m}}/\pi]$ in the holomorphic $\mathbb{Q}$-line bundle $\linebundle = \bigotimes_{i=1}^{n} \linebundle_i^{h_i}$ over $\schottky_{g,n}(\boldsymbol{m})$.
\end{lemma}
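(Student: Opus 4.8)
The plan is to reduce the statement to a single transformation law and then verify that law by a localized computation running parallel to the genus-zero argument of Lemma~\ref{lemma:SmHermiatiannmetricoverlambdanogenus}. By Lemma~\ref{hihermitianmetric}, $\Lponetial_i^{m_i}$ (for conical points) and $\Lponetial_i$ (for cusps) are already Hermitian metrics in the relative cotangent bundles $\linebundle_i$, so the product $\prod_{i=1}^{n} \Lponetial_i^{m_i h_i}$ — with the cusp factors understood as $\Lponetial_i$, since there $h_i=1$ — is a Hermitian metric in $\linebundle=\bigotimes_i \linebundle_i^{h_i}$. Hence it suffices to show that $\exp[S_{\boldsymbol{m}}/\pi]$ transforms under a change of representative of the marked points with precisely the same cocycle. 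Since the representatives of the $i$-th point differ by the Schottky action $w_i \mapsto L_k(w_i)$, under which the first expansion coefficient obeys $J_1^{(i)} \mapsto L_k'(w_i)\,J_1^{(i)}$ (so that $\Delta \log \Lponetial_i = \tfrac{1}{m_i}\log|L_k'(w_i)|^2$, as recorded before Lemma~\ref{hihermitianmetric}), the whole claim collapses to the identity
\[
S_{\boldsymbol{m}}(\SchottkyFund;\dots,L_k w_i,\dots) - S_{\boldsymbol{m}}(\SchottkyFund;\dots,w_i,\dots) = \pi\, h_i \,\log|L_k'(w_i)|^2,
\]
for every generator $L_k$ and every $i$, with $h_i=1-1/m_i^2$. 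Establishing this is equivalent to showing that $\Gpotential_{\boldsymbol{m}} = S_{\boldsymbol{m}} - \pi\sum_i m_i h_i \log\Lponetial_i$ descends to a genuine function on $\schottky_{g,n}(\boldsymbol{m})$, which is exactly the invariance needed later.

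First I would split the difference $\Delta S_{\boldsymbol{m}}$, using the definition \eqref{regularizeLiouvilleaction} together with \eqref{Sb}, into the change of the bulk Dirichlet-plus-area integral over $\singfund_\epsilon$, the change of the contour integral attached to the $i$-th point, and the change of the boundary terms $\tfrac{\sqrt{-1}}{2}\sum_k \oint_{C_k}\theta_{L_k^{-1}}(\varphi)$. The essential input is the automorphy relation \eqref{Lonphi}, $\varphi(\sigma w) = \varphi(w) - \log|\sigma'(w)|^2$ for $\sigma\in\Sigma$: performing the change of variables $w\mapsto L_k w$ near the moved point identifies the integrals around $L_k w_i$ with those around $w_i$, so that all contributions away from this point cancel and $\Delta S_{\boldsymbol{m}}$ localizes to the excised disk $\{|w-w_i|<\epsilon\}$. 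Because $S_{\boldsymbol{m}}$ is independent of the choice of fundamental domain for a fixed set of representatives, I can always arrange the relevant disk to lie in the interior, which keeps the $\theta_{L_k^{-1}}$ boundary integrals under control and prevents the moved point from crossing $\partial\SchottkyFund$.

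The heart of the argument is then the local computation near the $i$-th singular point. Here I would insert the asymptotic expansion of $\varphi$ near a conical point (respectively a cusp) from Lemma~\ref{lemma:asymptotics} and \eqref{asymptotics}, together with the explicit expression for $\log\Lponetial_i$ from Remark~\ref{hasymptotics}, into the excised Dirichlet integral $\iint |\partial_w\varphi|^2 \dd[2]{w}$ and into the finite part of the contour integral $\tfrac{\sqrt{-1}}{2}(1-\tfrac{1}{m_i})\oint_{C_i^\epsilon}\varphi\,(\cdots)$. The divergent $\log\epsilon$ pieces are annihilated by the counterterms exactly as in \eqref{Sb}, and the surviving representative-dependent remainder is a fixed multiple of $\log\Lponetial_i$. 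The shift $\Delta\log\Lponetial_i = \tfrac{1}{m_i}\log|L_k'(w_i)|^2$ then propagates into $\Delta S_{\boldsymbol{m}}$, and combining the weight extracted from the bulk with the contour coefficient $(1-\tfrac{1}{m_i})$ assembles the full factor $h_i=1-1/m_i^2$.

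The main obstacle I anticipate is precisely this final matching: the contour term alone yields the factor $(1-1/m_i)$, so the remaining $1/m_i-1/m_i^2$ must emerge from the regularized Dirichlet integral near the cone point, and isolating that finite piece requires a careful cancellation of the leading $\log|w-w_i|$ singularity against the boundary of the excised disk. Tracking the orientation of the circles (with the opposite orientation at infinity) and the interplay with the $\theta_{L_k^{-1}}$ terms — just as in the genus-zero computation of Lemma~\ref{lemma:SmHermiatiannmetricoverlambdanogenus} — is the delicate bookkeeping; the cusp case $m_i=\infty$ should then follow as the limit $h_i\to 1$, recovering Zograf's result \cite{Zograf_1990}.
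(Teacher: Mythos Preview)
Your overall strategy matches the paper's: reduce to the single transformation law under $w_i\mapsto L_k w_i$, localize to a disk around the moved point, and extract the coefficient $h_i$ from the asymptotics of $\varphi$. Two points, however, deserve sharpening.

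First, the transformation law you write compares $S_{\boldsymbol{m}}(\SchottkyFund;\dots,L_k w_i,\dots)$ with $S_{\boldsymbol{m}}(\SchottkyFund;\dots,w_i,\dots)$ on the \emph{same} fundamental domain, but if $w_i\in\operatorname{Int}\SchottkyFund$ then $L_k w_i\notin\operatorname{Int}\SchottkyFund$, so the first expression is not defined. The paper instead compares $S_{\boldsymbol{m}}(\tilde{\SchottkyFund};\dots)$ with $S_{\boldsymbol{m}}(\SchottkyFund;\dots)$ where $\tilde{\SchottkyFund}=(\SchottkyFund\setminus\SchottkyFund_0)\cup L_k(\SchottkyFund_0)$ for a small region $\SchottkyFund_0\ni w_i$. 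The mechanism that makes the Schottky boundary terms $\oint_{C_k}\theta_{L_k^{-1}}(\varphi)$ participate correctly is the explicit identity
\[
L_k^{\ast}\bigl((|\partial_w\varphi|^2+e^{\varphi})\,\dd w\wedge\dd\bar w\bigr)=(|\partial_w\varphi|^2+e^{\varphi})\,\dd w\wedge\dd\bar w+\dd\theta_{L_k^{-1}}(\varphi),
\]
which by Stokes converts the change of domain into a contour term over $\partial L_k^{-1}(\tilde D_i^{\epsilon})$ that is $\Sorder{1}$. Invoking ``independence of the fundamental domain'' here is circular: that independence holds only for a \emph{fixed} set of representatives, and the point of the lemma is precisely to quantify the failure when a representative is moved by $L_k$.

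Second, your attribution of the pieces of $h_i$ is inverted. In the paper's computation the bulk Dirichlet integral over the annulus between $C_i^{\epsilon}$ and $\tilde C_i^{\epsilon}$ produces the $(1-1/m_i)^2$ contribution, while the contour integral $\frac{\sqrt{-1}}{2}(1-1/m_i)\oint_{C_i^{\epsilon}}\varphi(\cdots)$ contributes $-\tfrac{2}{m_i}(1-1/m_i)$; their sum is $1-1/m_i^2=h_i$. So the ``remaining $1/m_i-1/m_i^2$'' does not come from the bulk as you anticipate; rather the contour provides the correction that reduces $(1-1/m_i)^2$ to $h_i$.
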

\begin{proof}
	To establish this claim for $i=1,\dots, n$, it is adequate to demonstrate that we have
	\begin{equation}\label{ext02}
	S_{\boldsymbol{m}}(\tilde{\SchottkyFund};w_1,\dots,L_k w_i, \dots,w_{n}) - S_{\boldsymbol{m}}(\SchottkyFund;w_1,\dots,w_{n}) = \pi h_i\hspace{.5mm}\log|L_k'(w_i)|^2, 
	\end{equation}
	where $w_1,\dots,w_{n} \in \operatorname{Int}\SchottkyFund$ and $w_1,\dots,L_k w_i, \dots,w_{n} \in \operatorname{Int}\tilde{\SchottkyFund}$. Furthermore, it is sufficient to consider the case when
	\begin{equation*}
	\tilde{\SchottkyFund} = (\SchottkyFund \backslash \SchottkyFund_0) \cup L_k(\SchottkyFund_0),
	\end{equation*}
	and $\SchottkyFund_0 \subset \SchottkyFund$ is such that $\partial \SchottkyFund_0 \cap \partial \SchottkyFund \subset C_k$ and $w_i \in \SchottkyFund_0$, and all other $w_j \in \SchottkyFund \backslash \SchottkyFund_0$ for $j \neq i$. Note that by a finite combination of such transformations, any choice of a fundamental domain for $\Sigma$ can be obtained from $\SchottkyFund$. The computation of \eqref{ext02} closely follows the corresponding computation in the proof of Lemma 3 in \cite{park2015potentials} where no branch point were present. Let
	\begin{equation*}
	\begin{split}
	I_{\epsilon}(\SchottkyFund;w_1,\dots,w_{n}) & = \iint_{\singfund_{\epsilon}} (|\partial_w \varphi|^2 + e^{\varphi}) \dd{w} \wedge \dd{\bar{w}} + \sum_{k=2}^{g} \oint_{C_k} \theta_{L_k^{-1}}(\varphi) \\
	&+ \sum_{j=1}^{n_e} \left(1-\frac{1}{m_j}\right) \oint_{C_{j}^{\epsilon}} \varphi \left(\frac{\dd{\bar{w}}}{\bar{w}-\bar{w}_{j}} - \frac{\dd{w}}{w - w_{j}}\right),
	\end{split}
	\end{equation*}
	where we did not include the counterterms in the action due to the fact that the action of $L_k$ does not change the conformal class. Since $\tilde{C}_j = C_j$ for $j \neq k$ and $\tilde{C}_k = C_k - \partial\SchottkyFund_0$, we have
	\begin{equation*}
	\begin{split}
	\Delta I_{\epsilon} &= I_{\epsilon}(\tilde{\SchottkyFund};w_1,\dots,L_k w_i, \dots,w_{n}) -  I_{\epsilon}(\SchottkyFund;w_1,\dots,w_{n})\\
	&= \iint_{L_k(\SchottkyFund_0)\backslash \tilde{D}_i^{\epsilon}} (|\partial_w \varphi|^2 + e^{\varphi}) \dd{w} \wedge \dd{\bar{w}} - \iint_{\SchottkyFund_0\backslash D_i^{\epsilon}} (|\partial_w \varphi|^2 + e^{\varphi}) \dd{w} \wedge \dd{\bar{w}} - \oint_{\partial \SchottkyFund_0} \theta_{L_k^{-1}}(\varphi) \\
	&+\sum_{j=1}^{n_e} \delta_{i\,j} \left(1-\frac{1}{m_j}\right) \left[ \oint_{\tilde{C}_{j}^{\epsilon}} \varphi \left(\frac{\dd{\bar{w}}}{\bar{ w}-\overline{L_k w_{j}}} - \frac{\dd{w}}{ w - L_k w_{j}}\right) - \oint_{C_{j}^{\epsilon}} \varphi \left(\frac{\dd{\bar{w}}}{\bar{w}-\bar{w}_{j}} - \frac{\dd{w}}{w - w_{j}}\right) \right].
	\end{split}
	\end{equation*}
	According to Eq.\eqref{Lonphi} we have
	\begin{equation}
	\begin{split}
	L_k^{\ast}\left((|\partial_w \varphi|^2 + e^{\varphi}) \dd{w} \wedge \dd{\bar{w}}\right) & = \left((|\partial_w \varphi|^2 + e^{\varphi}) \dd{w} \wedge \dd{\bar{w}}\right)\circ L_k \, |L_k'|^2 \\
	& = (|\partial_w \varphi|^2 + e^{\varphi}) \dd{w} \wedge \dd{\bar{w}} + \dd{\theta_{L_k^{-1}}(\varphi)},
	\end{split}
	\end{equation}
	which, together with using the  Stokes theorem, we get
	\begin{equation*}
	\begin{split}
	\Delta I_{\epsilon} &=\iint_{\SchottkyFund_0\backslash L_k^{-1}(\tilde{D}_i^{\epsilon})} (|\partial_w \varphi|^2 + e^{\varphi}) \dd{w} \wedge \dd{\bar{w}} - \iint_{\SchottkyFund_0\backslash D_i^{\epsilon}} (|\partial_w \varphi|^2 + e^{\varphi}) \dd{w} \wedge \dd{\bar{w}} - \oint_{\partial L_k^{-1}(\tilde{D}_i^{\epsilon})} \theta_{L_k^{-1}}(\varphi) \\
	& + \sum_{j=1}^{n_e} \delta_{i\,j} \left(1-\frac{1}{m_j}\right) \left[ \oint_{L_k^{-1}(\tilde{C}_{j}^{\epsilon})} (\varphi-\log|L'_{k}w|^2) \left(\frac{\overline{L'_kw}\dd{\bar{w}}}{\overline{L_k w}-\overline{L_k w_{j}}} - \frac{L'_kw\dd{w}}{L_k w - L_k w_{j}}\right) \right.\\&\left.
	\hspace{2cm}- \oint_{C_{j}^{\epsilon}} \varphi \left(\frac{\dd{\bar{w}}}{\bar{w}-\bar{w}_{j}} - \frac{\dd{w}}{w - w_{j}}\right) \right].
	\end{split}
	\end{equation*}
	Since the third term's integrand does not have a pole, its contribution will be of $\Sorder{1}$ and it can be safely omitted. Furthermore, the exponential term in the Liouville term gives the Euler characteristic and  its contribution will be canceled between related terms. Now, we can rewrite the above equation as
	\begin{equation}\label{schottkyvar1}
	\begin{split}
	\Delta I_{\epsilon} &=\iint_{D_i^{\epsilon} }(|\partial_w \varphi|^2 ) \dd{w} \wedge \dd{\bar{w}} - \iint_{L_k^{-1}(\tilde{D}_i^{\epsilon})} (|\partial_w \varphi|^2) \dd{w} \wedge \dd{\bar{w}} \\
	& +\sum_{j=1}^{n_e} \delta_{i\,j} \left(1-\frac{1}{m_j}\right) \left[ \oint_{L_k^{-1}(\tilde{C}_{j}^{\epsilon})} (\varphi-\log|L'_{k}(w)|^2) \left(\partial_{\bar{w}}\log|L_kw-L_kw_{j}|^2 \right.\right.\\&\left.\left.- \partial_{w}\log|L_kw-L_kw_{j}|^2\right) - \oint_{C_{j}^{\epsilon}} \varphi \left(\frac{\dd{\bar{w}}}{\bar{w}-\bar{w}_{j}} - \frac{\dd{w}}{w - w_{j}}\right) \right].
	\end{split}
	\end{equation}
	By noting that
	\begin{equation*}
	\begin{aligned}
	\partial_{w}\log|L_kw-L_kw_{j}|^2
	=\partial_w\log|w-w_{j}|^2+\Sorder{1},
	\end{aligned}
	\end{equation*}
	we can write (\ref{schottkyvar1}) as follows
	\begin{equation}\label{schottkyvar2}
	\begin{split}
	\Delta I_{\epsilon} &=\iint_{D_i^{\epsilon} }(|\partial_w \varphi|^2 ) \dd{w} \wedge \dd{\bar{w}} - \iint_{L_k^{-1}(\tilde{D}_i^{\epsilon})} (|\partial_w \varphi|^2) \dd{w} \wedge \dd{\bar{w}} \\
	& + \sum_{j=1}^{n_e} \delta_{i\,j} \left(1-\frac{1}{m_j}\right) \left[ \oint_{L_k^{-1}(\tilde{C}_{j}^{\epsilon})} (\varphi-\log|L'_{k}(w)|^2)\left(\frac{\dd{\bar{w}}}{\bar{w}-\bar{w}_{j}} - \frac{\dd{w}}{w - w_{j}}\right)\right.\\&\left.\hspace{2cm}- \oint_{C_{j}^{\epsilon}} \varphi \left(\frac{\dd{\bar{w}}}{\bar{w}-\bar{w}_{j}} - \frac{\dd{w}}{w - w_{j}}\right) \right].
	\end{split}
	\end{equation}
	By doing the integration by parts and imposing the equations of motion together with considering the orientation of the boundary circles, the first line in \eqref{schottkyvar2} becomes
	\begin{equation}\label{schottkybulk}
	\begin{aligned}
	&\iint_{D_i^{\epsilon} }(|\partial_w \varphi|^2 ) \dd{w} \wedge \dd{\bar{w}} - \iint_{L_k^{-1}(\tilde{D}_i^{\epsilon})} (|\partial_w \varphi|^2) \dd{w} \wedge \dd{\bar{w}}=\oint_{C_{i}^{\epsilon}}\varphi\partial_{\bar{w}}\varphi\dd\bar{ w}-\oint_{\tilde{C}_{i}^{\epsilon}}\varphi\partial_{\bar{w}}\varphi\dd\bar{w},
	\end{aligned}
	\end{equation}
	where
	\begin{equation}\label{schottkycircs}
	\begin{aligned}
	\tilde{C}_{i}^{\epsilon}=\{w\big{|}\:|L_kw-L_kw_{i}|=\epsilon\}
	\equiv\{w\big{|}\:|w-w_{i}|=\frac{\epsilon}{L'_kw_{i}}\}.
	\end{aligned}
	\end{equation}
	To be more precise, the right hand side of \eqref{schottkybulk} had an extra contribution
	\begin{equation}\label{extra3}
	\frac{1}{2}\iint_{L_k^{-1}(\tilde{D}_i^{\epsilon})}\varphi e^{\varphi}\dd{w} \wedge \dd{\bar{w}} -\frac{1}{2}\iint_{D_i^{\epsilon} } \varphi e^{\varphi} \dd{w} \wedge \dd{\bar{w}},
	\end{equation}
	which by defining $y_i = |w-w_i|,\hspace{.5mm}\alpha_i = 1-1/m_i, [y_i]=: \epsilon \leq y_i \leq \epsilon\slash L'_kw_{i}$, this contribution for branch points and cusps respectively becomes
	\begin{equation*}
	\begin{split}
	&-\frac{2\pi}{2}\iint_{[y_i]}\frac{1}{y_i^{2\alpha_{i}-1}}\log{y_i^{2\alpha_{i}}}dy_i = \pi \left(\frac{\alpha_{i}}{2(1-\alpha_i)^{2}}y_{i}^{2-2\alpha_i}-\frac{\alpha_i}{(1-\alpha_{i})}y_{i}^{2-2\alpha_{i}}\log y_i\right)\bigg{|}^{\epsilon\slash L'_kw_{i}}_{\epsilon}
	\\
	&-\hspace{-.5mm}\frac{2\pi}{2}\hspace{-1mm}\iint_{[y_i]}\frac{1}{y_i\log^2 y_i}\log\left({y_i^{2}}\log^{2}y_i\right)\hspace{-.5mm} dy_i\hspace{-.5mm}=\hspace{-.5mm} \pi\hspace{-.5mm}\left(2\hspace{-.5mm}+\hspace{-.5mm}\frac{2}{\log y_i}\hspace{-.5mm}-\hspace{-.5mm}2\log|\log y_i|\hspace{-.5mm}+\hspace{-.5mm}\frac{2}{\log y_i}\log|\log y_i|\right)\hspace{-1mm}\bigg{|}^{\epsilon\slash L'_kw_{i}}_{\epsilon}
	\end{split}
	\end{equation*}
	but both of them, and accordingly \eqref{extra3}, vanish in the limit $\epsilon \rightarrow 0$. Now, by using the asymptotic form of $\varphi$ from the Lemma \ref{lemma:asymptotics} together with the relation \eqref{schottkycircs}, the Eq.\eqref{schottkybulk} is simplified to\footnote{The $D_i^{\epsilon}$'s boundaries have the opposite orientation of the boundary of the fundamental domain.}
	\begin{equation}\label{extra2}
	\begin{aligned}
	\iint_{D_i^{\epsilon} }(|\partial_w \varphi|^2 ) \dd{w} \wedge \dd{\bar{w}} - \iint_{L_k^{-1}(\tilde{D}_i^{\epsilon})} (|\partial_w \varphi|^2) \dd{w} \wedge \dd{\bar{w}}=-4\pi\sqrt{-1}\left(1-\frac{1}{m_i}\right)^2\log| L'_kw_{i}|.
	\end{aligned}
	\end{equation}
	For the rest of the integrals in (\ref{schottkyvar2}), again by using the asymptotics of $\varphi$ from Lemma \ref{lemma:asymptotics} and (\ref{schottkycircs}) we have 
	\begin{equation}
	\begin{aligned}\label{schottkyboundary}
	&\sum_{j=1}^{n_e} \delta_{ij} \left(1-\frac{1}{m_j}\right) \left[ \oint_{L_k^{-1}(\tilde{C}_{j}^{\epsilon})} (\varphi-\log|L'_{k}(w)|^2)\left(\frac{\dd{\bar{w}}}{\bar{w}-\bar{w}_{j}} - \frac{\dd{w}}{w - w_{j}}\right)\right.\\&\left.\hspace{1.3cm}-\oint_{C_{j}^{\epsilon}} \varphi \left(\frac{\dd{\bar{w}}}{\bar{w}-\bar{w}_{j}} - \frac{\dd{w}}{w - w_{j}}\right) \right]
	=8\pi\sqrt{-1}\left(\frac{1}{m_i^2}-\frac{1}{m_i}\right)\log|L'_kw_{i}|.
	\end{aligned}
	\end{equation}
	Now, combining \eqref{extra2} and \eqref{schottkyboundary} gives
	\begin{equation*}
	\Delta I_{\epsilon}=
	-4\pi\sqrt{-1}\hspace{1mm}h_i\log|L'_kw_{i}|.
	\end{equation*}
	So, in principle, we proved that
	\begin{equation*}
	\begin{aligned}
	\exp[S_{\boldsymbol{m}}(\tilde{\SchottkyFund};w_1,\dots,L_k w_i,  \dots,w_{n})/(\pi h_i)]=\exp[S_{\boldsymbol{m}}(\SchottkyFund;w_1,\dots,w_{n})/(\pi h_i)]\hspace{1mm}|L'_kw_{i}|^2,
	\end{aligned}
	\end{equation*}
	which means that $\exp[S_{\boldsymbol{m}}/\pi]$ is a Hermitian metric on $\linebundle =\bigotimes_{i=1}^{n_p+n_e} \linebundle_i^{h_i}$.
\end{proof}
\noindent Combining Lemmas \ref{Soverpihermitianmetric} and \ref{hihermitianmetric}, we can deduce the following statement:
\begin{corollary}\label{cor:invariantaction}
	Put $\mathsf{H} = \Lponetial_{1}^{m_1 h_1} \dotsm \Lponetial_{n_e}^{m_{n_e} h_{n_e}} \Lponetial_{n_e+1} \dotsm \Lponetial_{n}$. Then,
	\begin{equation}\label{curlyaction}
	\Gpotential_{\boldsymbol{m}} = S_{\boldsymbol{m}} - \pi \log \mathsf{H}
	\end{equation}
	determines a smooth real-valued function on $\schottky_{g,n}(\boldsymbol{m})$. 
\end{corollary}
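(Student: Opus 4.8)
The plan is to exhibit $\exp[\Gpotential_{\boldsymbol{m}}/\pi]$ as the quotient of two Hermitian metrics on one and the same holomorphic $\mathbb{Q}$-line bundle $\linebundle = \bigotimes_{i=1}^{n} \linebundle_i^{h_i}$ over $\schottky_{g,n}(\boldsymbol{m})$. Since the ratio of two Hermitian metrics on a fixed line bundle is a globally well-defined positive function on the base, taking its logarithm (times $\pi$) immediately yields $\Gpotential_{\boldsymbol{m}}$ as a global real-valued function. The two metrics in question are $\exp[S_{\boldsymbol{m}}/\pi]$ and $\mathsf{H}$.

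First I would recall from Lemma~\ref{Soverpihermitianmetric} that $\exp[S_{\boldsymbol{m}}/\pi]$ is a Hermitian metric in $\linebundle$; concretely, under a change of representative $w_i \mapsto L_k w_i$ (together with the attendant change of fundamental domain) one has $\exp[S_{\boldsymbol{m}}/\pi] \mapsto \exp[S_{\boldsymbol{m}}/\pi]\,|L_k'(w_i)|^{2 h_i}$, which is precisely the transition rule dictated by $\linebundle$. Next I would check that $\mathsf{H}$ is a Hermitian metric in the \emph{same} bundle. By Lemma~\ref{hihermitianmetric}, $\Lponetial_i^{m_i}$ is a Hermitian metric in $\linebundle_i$ for each conical point $i=1,\dots,n_e$, while $\Lponetial_i$ is a Hermitian metric in $\linebundle_i$ for each cusp $i=n_e+1,\dots,n$. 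Raising these to the power $h_i$ and tensoring over $i$ produces a Hermitian metric in $\bigotimes_i \linebundle_i^{h_i} = \linebundle$: for conical points it is $(\Lponetial_i^{m_i})^{h_i} = \Lponetial_i^{m_i h_i}$, and for cusps, where $h_i=1$, it is $\Lponetial_i^{h_i} = \Lponetial_i$. Their product is exactly $\mathsf{H} = \Lponetial_1^{m_1 h_1} \dotsm \Lponetial_{n_e}^{m_{n_e} h_{n_e}} \Lponetial_{n_e+1} \dotsm \Lponetial_n$.

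With both $\exp[S_{\boldsymbol{m}}/\pi]$ and $\mathsf{H}$ identified as metrics in $\linebundle$, their quotient $\exp[S_{\boldsymbol{m}}/\pi]/\mathsf{H} = \exp[\Gpotential_{\boldsymbol{m}}/\pi]$ is a well-defined positive function on $\schottky_{g,n}(\boldsymbol{m})$, so $\Gpotential_{\boldsymbol{m}} = S_{\boldsymbol{m}} - \pi \log \mathsf{H}$ descends to a real-valued function there. Smoothness follows from the smooth (indeed real-analytic) dependence of the Liouville field $\varphi$ — hence of the local representative $S_{\boldsymbol{m}}$ — on the Schottky coordinates, together with the smooth dependence of the leading coefficients $J^{(i)}_1$ and $J^{(n)}_{-1}$ defining $\Lponetial_i$, and therefore of $\log \mathsf{H}$.

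As a direct cross-check of the cancellation, one can verify that the anomaly $\Delta S_{\boldsymbol{m}} = \pi h_i \log|L_k'(w_i)|^2$ supplied by Lemma~\ref{Soverpihermitianmetric} is matched termwise by $\pi\,\Delta \log \mathsf{H}$, using $\Lponetial_i \mapsto \Lponetial_i |L_k'(w_i)|^{2/m_i}$ for conical points and $\Lponetial_i \mapsto \Lponetial_i |L_k'(w_i)|^{2}$ for cusps. The only genuine bookkeeping — and the step most prone to error — is confirming that the exponents in $\mathsf{H}$ are tuned so that $\mathsf{H}$ lands in exactly $\linebundle$ rather than some other rational power of the $\linebundle_i$; it is precisely this matching of powers that forces the two anomalies to coincide and makes $\Gpotential_{\boldsymbol{m}}$ globally defined.
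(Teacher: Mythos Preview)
Your proposal is correct and follows exactly the approach indicated in the paper, which simply states that the corollary is deduced by combining Lemmas~\ref{Soverpihermitianmetric} and~\ref{hihermitianmetric}. You have spelled out in detail what that combination means: both $\exp[S_{\boldsymbol{m}}/\pi]$ and $\mathsf{H}$ are Hermitian metrics on the same $\mathbb{Q}$-line bundle $\linebundle=\bigotimes_i\linebundle_i^{h_i}$, so their ratio descends to a globally defined positive function, and smoothness follows from the regularity of $\varphi$ and the coefficients $J_1^{(i)}$.
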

\noindent The above form of $\Gpotential_{\boldsymbol{m}}$ can be easily understand by demanding the Liouville action to be independent from the choice of a fundamental domain: It is clear from commutative diagram~\eqref{globalcoords} and the definition of regularized Liouville action~\eqref{regularizeLiouvilleaction}, that the problem of defining the appropriate Liouville action on $\schottky_{g,n}(\boldsymbol{m})$ is closely related to the Fuchsian uniformization of $\singrigon$. Then, the fact that $\Omega \subset \hat{\cmpx}$ and the observation that, roughly speaking, the action of Schottky group $\Sigma$ on $\singrigon$ resembles that of $\symm{\sigtype}$ on a genus zero Riemann orbisurface,\footnote{More specifically, by acting each generator $L_k \in \Sigma$ on a singular point $w_i$ inside a particular fundamental domain $\SchottkyFund$, we will get another singular point with same order of isotropy in a different fundamental domain.} suggests that the same pattern of ``anomaly cancellation'' observed in section \ref{M0n} should also happen in this case.
\section{Potentials for Weil-Petersson and Takhtajan-Zograf Metrics}\label{sec:Potentials}
In this section, following \cite{park2015potentials,ZT_2018}, we construct K\"ahler potentials for cuspidal and elliptic TZ metrics on $\moduli_{0,n}$ (see Section \ref{KahlerMetrics}). We will also prove that the first Chern forms of the line bundles $\linebundle_i$ over the Schottky space $\schottky_{g,n}(\boldsymbol{m})$ with Hermitian metrics $\Lponetial_{i}$ are given by $\tfrac{1}{2\pi} \omega^{\text{ell}}_{\text{TZ},i}$ for $i=1,\dots,n_e$ and $\tfrac{4}{3} \omega^{\text{cusp}}_{\text{TZ},i}$ for $i=n_e+1,\dots,n$. In addition, we will show that $\tfrac{1}{\pi^2} \omega_{_{\text{WP}}}$ is the first Chern form of the $\mathbb{Q}$-line bundle $\linebundle = \bigotimes_{i=1}^{n} \linebundle_i^{h_i}$ with Hermitian metric $\exp[S_{\boldsymbol{m}}/\pi]$, where the regularized classical Liouville action $S_{\boldsymbol{m}}$ is given by Eq.\eqref{regularizeLiouvilleaction}. Then, it follows readily from these two results that the specific combination $\omega_{_{\text{WP}}}- \tfrac{4\pi^2}{3} \omega_{_{\text{TZ}}}^{\text{cusp}}-\tfrac{\pi}{2}\sum_{j=1}^{n_e} m_j h_j\hspace{.5mm} \omega^{\text{ell}}_{\text{TZ},j}$ of Weil-Petersson metric as well as cuspidal and elliptic Takhtajan-Zograf metrics has a \emph{global} K\"{a}hler potential on $\schottky_{g,n}(\boldsymbol{m})$.

\subsection{Potentials for Cuspidal and Elliptic TZ Metrics on $\mathcal{M}_{0,n}$}
As in Section \ref{subsec:moduli}, let $\Gamma$ be a  marked normalized  Fuchsian group with signature $(0;m_1,\dots,m_{n_e}$

\hspace{-.7cm}$,n_p)$ that uniformizes the orbifold Riemann surface $O$ and let $J: \UHP \to O$ be the Klien's Hauptmodule. In addition, let $\Lponetial_{i} =  \left| J^{(i)}_{1} \right|^{\frac{2}{m_i}}$ for $i=1,\dots,n_e$, $\Lponetial_{i} = \left| J^{(i)}_{1} \right|^2$ for $i=n_e+1,\dots,n-1$, and $\Lponetial_n = \left| J^{(n)}_{-1} \right|^2$ be smooth positive functions on $\moduli_{0,n}$.\footnote{See Eq.\eqref{hi}.} Now, according to the expressions for $\log \Lponetial_{i}$ in  Remark~\ref{hasymptotics}, we prove the following lemma:
\begin{lemma}\label{lemma:widerivativeloghi}
	For all $k=1,\dots,n-3$, we have
	\begin{equation}
	\begin{split}
	&\pdv{w_k} \log \Lponetial_{i}=\frac{1}{m_i} \partial_{w}\dot{F}^k(w_i), \hspace{1cm} i=1,\dots,n_e,\\ \\
	&\pdv{w_k} \log \Lponetial_{i}= \partial_{w}\dot{F}^k(w_i), \hspace{1.5cm} i=n_e+1,\dots,n.
	\end{split}
	\end{equation}
\end{lemma}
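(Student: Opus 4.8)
The plan is to compute $\partial_{w_k}\log\Lponetial_i$ directly from the formulas for $\log\Lponetial_i$ given in Remark~\ref{hasymptotics}, reducing everything to the variational derivatives of the Liouville field $\varphi$ that are already under control through Lemma~\ref{varp} and Corollary~\ref{corollary:widerivativeasymptotics}. For a conical point $i=1,\dots,n_e$, I start from
\begin{equation*}
\log \Lponetial_{i} = -2 \log m_i + 2 \log 2 - \lim_{w \to w_{i}}\left(\varphi(w) + \left(1-\tfrac{1}{m_i}\right) \log|w-w_{i}|^2\right),
\end{equation*}
and apply $\partial_{w_k}$ to the regularized (limit) expression. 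The constant terms drop out, so the task is to differentiate the bracketed quantity with respect to the modulus $w_k$ while $w$ is held fixed, and then take the limit $w\to w_i$. Here the subtlety is that $w_i$ itself is one of the moduli, so for $k\neq i$ the point $w_i$ is held fixed under $\partial_{w_k}$, whereas the $k=i$ contribution requires extra care; I would treat the generic $k\neq i$ case first and handle $k=i$ using the explicit $\delta_{ij}$ terms appearing in Corollary~\ref{corollary:widerivativeasymptotics}.

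The key input is Lemma~\ref{varp}, namely $\partial_{w_k}\varphi = -\dot{F}^k\,\partial_w\varphi - \partial_w\dot{F}^k$, together with the asymptotic expansion of $\partial_{w_k}\varphi$ near the singular points supplied by Corollary~\ref{corollary:widerivativeasymptotics}. First I substitute the asymptotic form of $\partial_{w_k}\varphi(w)$ as $w\to w_i$ into $\partial_{w_k}\!\left(\varphi(w)+(1-\tfrac{1}{m_i})\log|w-w_i|^2\right)$. Using Corollary~\ref{corollary:widerivativeasymptotics}, as $w\to w_i$ one has
\begin{equation*}
\partial_{w_k}\varphi(w) = -\delta_{ik}\,\partial_w\varphi(w) - \big((w-w_i)\partial_w\varphi(w)+1\big)\partial_w\dot{F}^k(w) + \Sorder{1}.
\end{equation*}
The plan is to combine the leading singular behaviour $\partial_w\varphi(w)\sim -(1-\tfrac{1}{m_i})\,(w-w_i)^{-1}$ (read off from the asymptotics of $\varphi$ in Lemma~\ref{lemma:asymptotics}) with the above, so that the singular pieces cancel against $\partial_{w_k}$ of the explicit $\log|w-w_i|^2$ counterterm. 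After this cancellation the surviving finite remainder in the limit $w\to w_i$ should be precisely $\tfrac{1}{m_i}\partial_w\dot{F}^k(w_i)$, giving the claimed identity with the overall sign flipped by the minus sign in front of the limit. The cusp cases $i=n_e+1,\dots,n-1$ and $i=n$ are handled the same way, starting instead from the corresponding formulas in Remark~\ref{hasymptotics} involving $\log|w-w_i|^2 - 2 e^{-\varphi/2}/|w-w_i|$; here one differentiates the exponential factor and uses $\partial_{w_k}e^{-\varphi/2} = -\tfrac12 e^{-\varphi/2}\partial_{w_k}\varphi$ together with the same asymptotics, noting that the absence of the $1/m_i$ factor for cusps is consistent with $m_i=\infty$.

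The main obstacle is the careful bookkeeping of subleading terms in the limit $w\to w_i$: the expression being differentiated is defined only as a regularized limit, so I must verify that the orders of differentiation and limit can be interchanged and that the $\Sorder{\cdot}$ error terms in Corollary~\ref{corollary:widerivativeasymptotics} (which carry logarithmic factors near cusps) genuinely vanish in the limit rather than contributing a finite piece. In particular, for the cusp cases the $\log|w-w_i|$ and $\log\log$ factors in the asymptotics of $\varphi$ must be tracked to ensure that the $2e^{-\varphi/2}/|w-w_i|$ regulator extracts exactly the right finite remainder. I would resolve this by expanding $\varphi$ and its $w_k$-derivative to one order beyond the leading singularity, showing term-by-term that all divergent and all spurious finite contributions cancel, leaving only $\partial_w\dot{F}^k(w_i)$ (respectively $\tfrac{1}{m_i}\partial_w\dot{F}^k(w_i)$). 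This mirrors the computation for punctured surfaces in \cite{park2015potentials}, so the structure of the argument is known; the novelty is purely in carrying the conical-point asymptotics \eqref{asymptotics} through the same steps.
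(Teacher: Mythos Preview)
Your proposal is correct and follows essentially the same approach as the paper: start from the formulas in Remark~\ref{hasymptotics}, differentiate under the limit $w\to w_i$, and use Lemma~\ref{varp} to control $\partial_{w_k}\varphi$. The only organizational difference is that the paper pulls back via $F^{\varepsilon\mu_k}$ before differentiating, so the combination $\partial_{w_k}\varphi+\dot F^k\partial_w\varphi=-\partial_w\dot F^k$ appears directly and no asymptotic expansion is needed (and for cusps the vanishing of the first variation \eqref{expphivariations} for $(F^{\varepsilon\mu_k})^\ast(e^{-\varphi/2})$ plays the same role); your route through Corollary~\ref{corollary:widerivativeasymptotics} and explicit cancellation of the $\delta_{ik}$ singular piece against the counterterm is equivalent but slightly heavier on bookkeeping.
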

\begin{proof}
	Consider the orbifold Riemann surface $O \cong [\UHP \slash \Gamma]$. Using Lemma \ref{Lemma:covering}, it is sufficient to demonstrate that
	\begin{equation}
	\left.\left(\pdv{\log \Lponetial_{i}^{\varepsilon \mu_k}}{\varepsilon}\right)\right|_{\varepsilon=0} = \left\{
	\begin{split}
	& \frac{1}{m_i} \partial_{w} \dot{F}^k (w_{i}) &\text{for} \quad & i=1,\dots,n_e,\\ \\
	&  \partial_{w}\dot{F}^k(w_{i})\quad &\text{for} \quad &i=n_e+1,\dots,n,
	\end{split}
	\right.
	\end{equation}
	and for all $k=1,\dots,n-3$. The following proof repeats verbatim the proof of Lemma 4  in \cite{park2015potentials} for the case of punctures. Using the fact that $F^{\varepsilon \mu_k}$ is holomorphic in $\varepsilon$ at $\varepsilon=0$, Corollary~\ref{corollary:FdotiAsymp} and formulas in Remark~\ref{hasymptotics}, Eq.\eqref{expphivariations} we get:
	\begin{itemize}
		\item \emph{$i=1,\dots,n_e$ case:}
		\begin{equation*}
		\begin{split}
		\hspace{-.5cm}\left.\left(\pdv{\log \Lponetial_{i}^{\varepsilon \mu_k}}{\varepsilon}\right)\right|_{\varepsilon=0} & =-\left.\left(\pdv{\varepsilon}\right)\right|_{\varepsilon=0}\lim_{w \to w_{i}}  \ \left(\varphi^{\varepsilon \mu_k} \circ F^{\varepsilon \mu_k} + \left(1-\frac{1}{m_i}\right) \log|F^{\varepsilon \mu_k}(w-w_{i})|^2\right)
		\\
		&\hspace{-1cm}= -\lim_{w \to w_{i}} \left\{ \left.\pdv{\varepsilon}\right|_{\varepsilon=0} \left(\varphi^{\varepsilon \mu_k} \circ F^{\varepsilon \mu_k} + \left(1-\frac{1}{m_i}\right) \log|F^{\varepsilon \mu_k}(w-w_{i})|^2\right)\right\} \\
		&\hspace{-1cm} = -\lim_{w \to w_{i}} \left(\partial_{w_{k}} \varphi + \partial_{w} \varphi \dot{F}^k(w) + \left(1-\frac{1}{m_i}\right) \frac{\dot{F}^k(w) - \dot{F}^k(w_{i})}{w-w_{i}}\right)\\
		&\hspace{-1.2cm} \overset{\eqref{widerivative}}{=}  -\lim_{w \to w_{i}} \left(- \partial_{w} \varphi \dot{F}^k(w) - \partial_{w} \dot{F}^k (w) + \partial_{w} \varphi \dot{F}^k(w) + \left(1-\frac{1}{m_i}\right) \partial_{w} \dot{F}^k (w) \right)\\
		&\hspace{-1cm}= \frac{1}{m_i} \partial_{w} \dot{F}^k (w_{i}). 
		\end{split}
		\end{equation*}
		In going from the first line to the second line, we have interchanged the order of the limit $w \to w_i$ and the differentiation. This is allowed due to convergence in the above formula and the fact that the definition of $\Lponetial_{i}$ is uniform in a neighborhood of an arbitrary point $(w_1,\dots,w_{n-3}) \in \moduli_{0,n}$.
		\item \emph{$i=n_e+1,\dots,n-1$ case:}
		\begin{equation*}
		\begin{split}
		\hspace{-.4cm}\left.\left(\pdv{\log \Lponetial_{i}^{\varepsilon \mu_k}}{\varepsilon}\right)\right|_{\varepsilon=0} & = \left.\left(\pdv{\varepsilon}\right)\right|_{\varepsilon=0} \lim_{w \to w_i} \left(\log|F^{\varepsilon \mu_k}(w-w_i)|^2 - \frac{2 e^{-\frac{\varphi^{\varepsilon \mu_k} \circ F^{\varepsilon \mu_k}(w)}{2}}}{|F^{\varepsilon \mu_k}(w-w_i)|}\right) \\
		&\hspace{-1.5cm}= \lim_{w \to w_i} \left\{ \left.\pdv{\varepsilon}\right|_{\varepsilon=0} \left(\log|F^{\varepsilon \mu_k}(w-w_i)|^2 - 2 (F^{\varepsilon \mu_k})^{\ast}\left(e^{-\frac{1}{2}\varphi^{\varepsilon \mu_k}}\right)\left|\frac{\partial_{w}F^{\varepsilon \mu_k}(w)}{F^{\varepsilon \mu_k}(w-w_i)}\right|\right)\right\}\\
		&\hspace{-1.5cm} = \lim_{w \to w_i} \left\{ \left.\pdv{\varepsilon}\right|_{\varepsilon=0}\log(F^{\varepsilon \mu_k}(w-w_i)) - 2 \frac{e^{-\frac{1}{2}\varphi}}{(\bar{w} - \bar{w}_i)^{\frac{1}{2}}}\left. \pdv{\varepsilon}\right|_{\varepsilon=0}\left(\frac{\partial_{w}F^{\varepsilon \mu_k}(w)}{F^{\varepsilon \mu_k}(w-w_i)}\right)^{\frac{1}{2}}\right\}\\
		&\hspace{-1.5cm} = \lim_{w \to w_i} \left(\frac{\dot{F}^k(w) - \dot{F}^k(w_i)}{w-w_i} - \frac{e^{-\frac{1}{2}\varphi(w)} \left( (w - w_i) \partial_{w} \dot{F}^k(w)- \dot{F}^k(w) + \dot{F}^k(w_i)\right)}{(w-w_i)|w-w_i|}\right) \\
		&\hspace{-1.5cm} = \partial_{w} \dot{F}^k(w_i).
		\end{split}
		\end{equation*}
		\item \emph{$i=n$ case:}
		\begin{equation*}
		\begin{split}
		\hspace{-.3cm}\left.\left(\pdv{\log \Lponetial_{n}^{\varepsilon \mu_k}}{\varepsilon}\right)\right|_{\varepsilon=0} & = \left.\left(\pdv{\varepsilon}\right)\right|_{\varepsilon=0} \lim_{w \to \infty} \left(\log|F^{\varepsilon \mu_k}(w)|^2 - \frac{2 e^{-\frac{\varphi^{\varepsilon \mu_k} \circ F^{\varepsilon \mu_k}(w)}{2}}}{|F^{\varepsilon \mu_k}(w)|}\right) \\
		&\hspace{-1.5cm} = \lim_{w \to \infty} \left\{ \left.\pdv{\varepsilon}\right|_{\varepsilon=0} \left(\log|F^{\varepsilon \mu_k}|^2 - 2 (F^{\varepsilon \mu_k})^{\ast}\left(e^{-\frac{1}{2}\varphi^{\varepsilon \mu_k}}\right)\left|\frac{\partial_{w}F^{\varepsilon \mu_k}}{F^{\varepsilon \mu_k}}\right|\right)(w)\right\}\\
		&\hspace{-1.5cm} = \lim_{w \to \infty} \left\{ \left.\pdv{\varepsilon}\right|_{\varepsilon=0}\log(F^{\varepsilon \mu_k}(w)\overline{F^{\varepsilon \mu_k}(w)}) - 2 \frac{e^{-\frac{1}{2}\varphi}}{\bar{w}^{\frac{1}{2}}}\left. \pdv{\varepsilon}\right|_{\varepsilon=0}\left(\frac{\partial_{w}F^{\varepsilon \mu_k}(w)}{F^{\varepsilon \mu_k}(w)}\right)^{\frac{1}{2}}\right\}\\
		&\hspace{-1.5cm} = \lim_{w \to \infty} \left(\frac{\dot{F}^k(w)}{w} - \frac{e^{-\frac{1}{2}\varphi(w)} (w \partial_{w} \dot{F}^k(w)- \dot{F}^k(w))|w|}{w^2 \bar{w}}\right) \\
		&\hspace{-1.5cm} = \partial_{w} \dot{F}^k(\infty).
		\end{split}
		\end{equation*}
	\end{itemize}	
\end{proof}
As before, let $\partial$ and $\bar{\partial}$ be the $(1,0)$ and $(0,1)$ components of  de Rham differential $d=\partial + \bar{\partial}$ on $\moduli_{0,n}$. We have (see \cite{park2015potentials,ZT_2018}):
\begin{lemma*}[Takhtajan and Zograf]
	The functions $-\log \Lponetial_{i},-\log \Lponetial_{j},\log \Lponetial_{n}, : \moduli_{0,n}\to \mathbb{R}_{>0}$ for  $i=1,\dots,n_e$ and $j=n_e+1,\dots,n-1$ are K\"{a}hler potential for TZ metrics $\frac{1}{2}\langle \cdot , \cdot \rangle_{\text{TZ},i}^{\text{ell}}$ ,$\frac{4 \pi}{3} \langle \cdot , \cdot \rangle_{\text{TZ}, j}^{\text{cusp}}$, $-\frac{4 \pi}{3} \langle \cdot , \cdot \rangle_{\text{TZ}, n}^{\text{cusp}}$, respectively:\footnote{See the Proposition 1  in \cite{park2015potentials} for punctured Riemann surfaces.}
	\begin{equation}\label{lemma:localpotentialsonmoduliwnogenus}
	\bar{\partial} \partial \log \Lponetial_{i} = -\sqrt{-1}\hspace{.5mm} \omega_{\text{TZ},i}^{\text{ell}}, \qquad \bar{\partial} \partial \log \Lponetial_{j} = -\frac{8\pi \sqrt{-1}}{3}\hspace{.5mm} \omega_{\text{TZ},j}^{\text{cusp}}, \qquad \bar{\partial} \partial \log \Lponetial_{n} = \frac{8\pi \sqrt{-1}}{3}\hspace{.5mm} \omega_{\text{TZ},n}^{\text{cusp}}.
	\end{equation}
\end{lemma*}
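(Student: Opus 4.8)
The plan is to obtain the three identities \eqref{lemma:localpotentialsonmoduliwnogenus} by computing the mixed second variation $\partial_{\bar w_l}\partial_{w_k}\log\Lponetial_i$ and matching it against the coordinate expressions of the elliptic and cuspidal TZ metrics recorded in Section \ref{KahlerMetrics}. The first (holomorphic) variation is already supplied by Lemma \ref{lemma:widerivativeloghi}, so the entire content lies in taking one further antiholomorphic derivative. Concretely, I would fix a conical (resp. cuspidal) index $i$, regard $\partial_{w_k}\log\Lponetial_i=\tfrac1{m_i}\partial_w\dot F^k(w_i)$ (resp. $\partial_w\dot F^k(w_i)$) as a function on $\moduli_{0,n}$, and differentiate it along the antiholomorphic direction $\bar\mu_l$, using Lemma \ref{Lemma:covering} to identify $\partial_{\bar w_l}$ with $\Lie_{\bar\mu_l}$ and the two-parameter family $F^{\varepsilon_1\mu_k+\varepsilon_2\mu_l}$ attached to the commutative diagram \eqref{diagramFmu}.

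The crucial analytic input is the second variation of the hyperbolic metric. I would start from the definition of $\log\Lponetial_i$ in Remark \ref{hasymptotics} as a regularized limit of $\varphi$, and express the mixed second variation of the pulled-back density $(F^{\varepsilon\mu})^{\ast}(e^{\mathtt a\varphi^{\varepsilon\mu}})$ through Wolpert's formula \eqref{WolpertSecVar}, transported to the orbifold by \eqref{expphivariations}; this replaces the second variation of $\varphi$ by $f_{\mu_k\bar\mu_l}\circ J^{-1}$, where $f_{\mu_k\bar\mu_l}$ solves $(\Delta_0+\tfrac12)f=\mu_k\overline{\mu_l}$ as in \eqref{fmunu}. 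Combining this with the first-order asymptotics of $\dot F^k$ from Corollary \ref{corollary:FdotiAsymp} (and its second-order refinement), the regularizing term $(1-\tfrac1{m_i})\log|w-w_i|^2$ should cancel the divergent pieces, so that the mixed second variation of $\log\Lponetial_i$ collapses to a finite limit of $f_{\mu_k\bar\mu_l}\circ J^{-1}(w)$ as $w\to w_i$, with an explicit numerical prefactor coming from the exponent $\tfrac1{m_i}$ (conical case) or $1$ (cuspidal case).

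The final step is to identify these limits with the TZ inner products. For a conical point I would invoke the elliptic limit formula $\lim_{w\to w_i}f_{\mu_k\bar\mu_l}\circ J^{-1}(w)=\langle\partial_{w_k},\partial_{w_l}\rangle^{\text{ell}}_{\text{TZ},i}$ quoted in Section \ref{KahlerMetrics} from \cite[Lemma~1(iii)]{ZT_2018}, which turns the computation into $\partial_{\bar w_l}\partial_{w_k}\log\Lponetial_i=-\tfrac12\langle\mu_k,\mu_l\rangle^{\text{ell}}_{\text{TZ},i}$ and hence, after reassembling into a $(1,1)$-form, into $\bar\partial\partial\log\Lponetial_i=-\sqrt{-1}\,\omega^{\text{ell}}_{\text{TZ},i}$. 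For a finite cusp $j$ the pointwise limit must be replaced by the horocyclic limit \eqref{constcuspmetric}, $\lim_{\Im z\to\infty}\Im(\varsigma_j z)\,f_{\mu_k\bar\mu_l}(\varsigma_j z)=\tfrac43\langle\mu_k,\mu_l\rangle^{\text{cusp}}_{\text{TZ},j}$, which produces the factor $\tfrac43$ and, together with the different exponent in the definition of $\Lponetial_j$, yields the coefficient $-\tfrac{8\pi\sqrt{-1}}{3}$; the cusp at infinity, $\Lponetial_n$, carries the opposite sign because of its reciprocal normalization, giving $+\tfrac{8\pi\sqrt{-1}}{3}\,\omega^{\text{cusp}}_{\text{TZ},n}$. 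Throughout, I would justify the interchange of the limit $w\to w_i$ with the $\varepsilon$-differentiations exactly as in the proof of Lemma \ref{lemma:widerivativeloghi}, via uniform estimates on the coefficients of the expansion \eqref{Jexpansion} in the Bers coordinates.

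The main obstacle I expect is the cuspidal indices: unlike the conical case, $f_{\mu_k\bar\mu_l}\circ J^{-1}$ has no clean finite pointwise limit at a cusp, and one must instead extract the $\tfrac43$ normalization through the Eisenstein/horocyclic average \eqref{constcuspmetric}, carefully tracking the $\log|w-w_j|$ and $\log\bigl|\log|w-w_j|\bigr|$ corrections in \eqref{asymptotics} and in the second-order asymptotics of $\dot F^k$ so that no spurious logarithmic or divergent contribution survives. The conical computation is structurally simpler but still hinges on the cancellation of the $|w-w_i|^{2/m_i}$-type remainders against the counterterm, which is where the factor $\tfrac1{m_i}$ ultimately enters.
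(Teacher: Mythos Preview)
Your proposal is correct and follows essentially the same route as the paper: go back to the regularized-limit definition of $\log\Lponetial_i$ in Remark~\ref{hasymptotics}, compute the mixed second variation via Wolpert's formula \eqref{WolpertSecVar} (transported by \eqref{expphivariations}), and identify the resulting limits of $f_{\mu_k\bar\mu_l}\circ J^{-1}$ with the elliptic and cuspidal TZ inner products using \cite[Lemma~1(iii)]{ZT_2018} and \eqref{constcuspmetric}, respectively. One small remark: your opening sentence suggests differentiating the formula $\partial_{w_k}\log\Lponetial_i=\tfrac{1}{m_i}\partial_w\dot F^k(w_i)$ from Lemma~\ref{lemma:widerivativeloghi} in $\bar w_l$, but that is not what you (or the paper) actually do---the paper bypasses $\dot F^k$ entirely in the elliptic case and computes the second variation of $\varphi$ directly, while in the cuspidal case the cross terms involving first variations of $F$ vanish in the limit and only the Wolpert piece survives.
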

\begin{proof}
	We need to prove that for all $j,k=1,\dots,n-3$,	
	\begin{equation*}
	-\pdv[2]{\log \Lponetial_{i}}{w_j}{\bar{w}_k} = \left\{
	\begin{split}
	&\frac{1}{2} \left\langle \pdv{w_j} , \pdv{w_k} \right\rangle_{\text{TZ},i}^{\text{ell}} & \text{for} \quad& i=1,\dots,n_e,\\ \\
	& \frac{4 \pi}{3} \left\langle \pdv{w_j} , \pdv{w_k} \right\rangle_{\text{TZ},i}^{\text{cusp}} & \text{for}\quad & i=n_e+1,\dots,n-1, \\ \\
	&- \frac{4 \pi}{3} \left\langle \pdv{w_j} , \pdv{w_k} \right\rangle_{\text{TZ},i}^{\text{cusp}} \quad & \text{for} \quad &i=n.
	\end{split}
	\right.
	\end{equation*}
	Let us consider the three cases $i=1,...,n_e$, $i=n_e+1,\dots,n-1$, and $i=n$ separately. The following proof repeats verbatim the proof of Proposition 1  in \cite{park2015potentials} for the case of punctures. According to Section \ref{subsec:teichmuller}, Lemma \ref{Lemma:covering} and Eq.\eqref{expphivariations}, for a given Riemann orbisurface $O  \cong [\UHP \slash \Gamma]$ one can write:
	\begin{itemize}
		\item \emph{$i=1,\dots,n_e$ case:}
		\begin{equation*}
		\begin{split}
		\hspace{-.5cm}-\pdv[2]{\log \Lponetial_{i}}{w_j}{\bar{w}_k} & = - \left.\left(\pdv[2]{\log \Lponetial_{i}^{\varepsilon_j \mu_j + \varepsilon_k \mu_k}}{\varepsilon_j}{\bar{\varepsilon}_k}\right)\right|_{\varepsilon_j = \varepsilon_k =0}\\
		&\hspace{-1.6cm}= \lim_{w \to w_{i}} \left\{ \left.\left(\pdv[2]{}{\varepsilon_j}{\bar{\varepsilon}_k}\right)\right|_{\varepsilon_j = \varepsilon_k =0} \left(\varphi^{(\varepsilon \mu)_{jk}} \circ F^{(\varepsilon \mu)_{jk}} + \left(1-\frac{1}{m_i}\right) \log|F^{(\varepsilon \mu)_{jk}}(w-w_{i})|^2\right)\right\} \\
		&\hspace{-1.5cm}= \lim_{w \to w_{i}} \left.\left(\pdv[2]{}{\varepsilon_j}{\bar{\varepsilon}_k}\right)\right|_{\varepsilon_j = \varepsilon_k =0} \left(F^{\varepsilon_j \mu_j + \varepsilon_k \mu_k}\right)^{\ast}(\varphi).
		\end{split}
		\end{equation*}
		In the above equation, we have used the notation $(\varepsilon \mu)_{jk} = \varepsilon_j \mu_j + \varepsilon_k \mu_k$. Using the commutative diagram \eqref{diagramFmu}, one has
		\begin{equation*}
		(F^{\varepsilon_j \mu_j + \varepsilon_k \mu_k})^{\ast} \Big(e^{\varphi^{\varepsilon_j \mu_j + \varepsilon_k \mu_k}}\Big) = (J^{-1})^{\ast}(f^{\varepsilon_j \mu_j + \varepsilon_k \mu_k})^{\ast}(\rho).
		\end{equation*}
		Taking the logarithm of the above formula, we get
		\begin{equation*}
		\varphi^{\varepsilon_j \mu_j + \varepsilon_k \mu_k} \circ F^{\varepsilon_j \mu_j + \varepsilon_k \mu_k} + \log |\partial_{w} F^{\varepsilon_j \mu_j + \varepsilon_k \mu_k}|^2 = \log((f^{\varepsilon_j \mu_j + \varepsilon_k \mu_k})^{\ast}(\rho)) \circ J^{-1} + \log|(J^{-1})'|^2.
		\end{equation*}
		Then, using the above equations together with Ahlfors formulae \eqref{firstvariation} and Wolpert's formula \eqref{WolpertSecVar}, we have
		\begin{equation*}
		\begin{split}
		- \left.\left(\pdv[2]{\log \Lponetial_{i}^{\varepsilon_j \mu_j + \varepsilon_k \mu_k}}{\varepsilon_j}{\bar{\varepsilon}_k}\right)\right|_{\varepsilon_j = \varepsilon_k =0} & = \lim_{w \to w_{i}} \left.\left(\pdv[2]{}{\varepsilon_j}{\bar{\varepsilon}_k}\right)\right|_{\varepsilon_j = \varepsilon_k =0} \log((f^{\varepsilon_j \mu_j + \varepsilon_k \mu_k})^{\ast}(\rho)) \circ J^{-1} \\
		&\overset{\eqref{WolpertSecVar}}{=}  \frac{1}{2} \lim_{w \to w_{i}} f_{\mu_j \bar{\mu}_k}\circ J^{-1}(w)= \frac{1}{2} \left\langle \pdv{w_j} , \pdv{w_k} \right\rangle_{\text{TZ},i}^{\text{ell}}.
		\end{split}
		\end{equation*}
		\item \emph{$i=n_e+1,\dots,n-1$ case:} 
		\begin{equation*}
		\begin{split}
		\hspace{-.4cm}-\pdv[2]{\log \Lponetial_{i}}{w_j}{\bar{w}_k} & = - \left.\left(\pdv[2]{\log \Lponetial_{i}^{\varepsilon_j \mu_j + \varepsilon_k \mu_k}}{\varepsilon_j}{\bar{\varepsilon}_k}\right)\right|_{\varepsilon_j = \varepsilon_k =0}\\
		&\hspace{-1.5cm}=  2 \lim_{w \to w_i} \left\{ \frac{1}{|w-w_i|}\left.\pdv[2]{}{\varepsilon_j}{\bar{\varepsilon}_k} \right|_{\varepsilon_j = \varepsilon_k =0}  \left(F^{(\varepsilon \mu)_{jk}}\right)^{\ast}\left(e^{-\frac{1}{2}\varphi^{(\varepsilon \mu)_{jk}}(w)}\right) \right.\\
		& \hspace{1cm} +\left. e^{-\frac{1}{2}\varphi(w)}  \left.\pdv{\varepsilon_j}\right|_{\varepsilon_j=0} \left(\frac{\partial_{w}F^{\varepsilon_j \mu_j}(w)}{F^{\varepsilon_j \mu_j}(w-w_i)}\right)^{\frac{1}{2}} \left.\pdv{\bar{\varepsilon}_k}\right|_{\varepsilon_k=0} \left(\frac{\overline{\partial_{w}F^{\varepsilon_k \mu_k}(w)}}{\overline{F^{\varepsilon_k \mu_k}(w-w_i)}}\right)^{\frac{1}{2}} \right\}\\
		& \hspace{-1.6cm}\overset{\eqref{expphivariations}}{=}\lim_{w \to w_i} \left\{-\frac{1}{2} \log|w-w_i| f_{\mu_j \bar{\mu}_k}\circ J^{-1}(w) + \frac{1}{2} e^{-\frac{1}{2}\varphi(w)} \, \times \right.\\
		&\left. \frac{\left((w-w_i) \partial_{w} \dot{F}^j(w)- \dot{F}^j(w) + \dot{F}^j(w_i) \right)}{|w-w_i|(\bar{w}-\bar{w}_i)^{\frac{1}{2}}} \frac{\left(\overline{(w-w_i) \partial_{w} \dot{F}^k(w)- \dot{F}^k(w) + \dot{F}^k(w_i)}\right)}{|w-w_i| (w-w_i)^{\frac{1}{2}}}\right\}\\
		&\hspace{-1.6cm} \overset{\eqref{diagJexpansion}}{=} - \frac{1}{2} \lim_{w \to \infty}  \frac{\log |w-w_i|}{\Im(\varsigma_i^{-1} \circ J^{-1}(w))} (\Im \varsigma_i z) f_{\mu_j \bar{\mu}_k}(\varsigma_i z)\\
		&\hspace{-1.5cm} = \pi \lim_{w \to \infty}  (\Im \varsigma_i z) f_{\mu_j \bar{\mu}_k}(\varsigma_i z) \overset{\eqref{constcuspmetric}}{=} \frac{4 \pi}{3} \left\langle \pdv{w_j} , \pdv{w_k} \right\rangle_{TZ,i}^{cusp},
		\end{split}
		\end{equation*}
		where in the last line, we have used the fact that $\log |w-w_i|/\Im(\varsigma_i^{-1} \circ J^{-1}(w)) = - 2 \pi$ as $ w \to w_i$.
		\item \emph{$i=n$ case:}
		\begin{equation*}
		\begin{split}
		\hspace{-.4cm}\pdv[2]{\log \Lponetial_{1}}{w_j}{\bar{w}_k} & = \left.\left(\pdv[2]{\log \Lponetial_{1}^{\varepsilon_j \mu_j + \varepsilon_k \mu_k}}{\varepsilon_j}{\bar{\varepsilon}_k}\right)\right|_{\varepsilon_j = \varepsilon_k =0}\\
		&\hspace{-1.3cm} = \lim_{w \to \infty} \left\{ \left.\pdv[2]{}{\varepsilon_j}{\bar{\varepsilon}_k} \right|_{\varepsilon_j = \varepsilon_k =0} \left(\log\left|F^{(\varepsilon \mu)_{jk}}\right|^2 - 2 \left(F^{(\varepsilon \mu)_{jk}}\right)^{\ast}\left(e^{-\frac{1}{2}\varphi^{(\varepsilon \mu)_{jk}}}\right)\left|\frac{\partial_{w}F^{(\varepsilon \mu)_{jk}}}{F^{(\varepsilon \mu)_{jk}}}\right|\right)(w)\right\}\\
		&\hspace{-1.3cm} = - 2 \lim_{w \to \infty} \left\{ \frac{1}{|w|}\left.\pdv[2]{}{\varepsilon_j}{\bar{\varepsilon}_k} \right|_{\varepsilon_j = \varepsilon_k =0}  \left(F^{(\varepsilon \mu)_{jk}}\right)^{\ast}\left(e^{-\frac{1}{2}\varphi^{(\varepsilon \mu)_{jk}}}\right) \right.\\
		& \hspace{1.5cm} +\left. e^{-\frac{1}{2}\varphi(w)}  \left.\pdv{\varepsilon_j}\right|_{\varepsilon_j=0} \left(\frac{\partial_{w}F^{\varepsilon_j \mu_j}(w)}{F^{\varepsilon_j \mu_j}(w)}\right)^{\frac{1}{2}} \left.\pdv{\bar{\varepsilon}_k}\right|_{\varepsilon_k=0} \left(\frac{\overline{\partial_{w}F^{\varepsilon_k \mu_k}(w)}}{\overline{F^{\varepsilon_k \mu_k}(w)}}\right)^{\frac{1}{2}} \right\}\\
		&\hspace{-1.4cm} \overset{\eqref{expphivariations}}{=}\lim_{w \to \infty} \left\{\frac{e^{-\frac{1}{2}\varphi(w)}}{2|w|} f_{\mu_j \bar{\mu}_k}\circ J^{-1}  - \frac{e^{-\frac{1}{2}\varphi(w)}}{2}  \frac{\left(w \partial_{w} \dot{F}^j(w)- \dot{F}^j(w)\right)}{|w|\bar{w}^{\frac{1}{2}}} \frac{\left(\overline{w \partial_{w} \dot{F}^k(w)- \dot{F}^k(w)}\right)}{|w| w^{\frac{1}{2}}}\right\}\\
		&\hspace{-1.3cm} =\lim_{w \to \infty} \left\{\frac{|w| \log |w|}{2|w|} f_{\mu_j \bar{\mu}_k}\left( J^{-1}(w)\right)\right\} \overset{\eqref{diagJexpansion}}{=} \frac{1}{2} \lim_{w \to \infty}  \frac{\log |w|}{\Im(J^{-1}(w))} (\Im z) f_{\mu_j \bar{\mu}_k}(z)\\
		&\hspace{-1.3cm} = \pi \lim_{w \to \infty}  (\Im z) f_{\mu_j \bar{\mu}_k}(z) \overset{\eqref{constcuspmetric}}{=} \frac{4 \pi}{3} \left\langle \pdv{w_j} , \pdv{w_k} \right\rangle_{\text{TZ},n}^{\text{cusp}}.
		\end{split}
		\end{equation*}
		In the above equation, we have used the fact that $\log |w|/\Im(J^{-1}(w)) = 2 \pi$ as $ w \to \infty$.
	\end{itemize}  
\end{proof}

\noindent Using Lemmas~\ref{lemma:Hermitianmetriconlambdawnogenus} and \ref{lemma:localpotentialsonmoduliwnogenus}, we can deduce the following corollary:
\begin{corollary}
	The function $-\log \mathsf{H} = -m_{1}h_{1}\log \Lponetial_{1}\dots - m_{n-1} h_{n-1}\log \Lponetial_{n-1}+\log \Lponetial_{n}$ is a K\"ahler potential for the combination $\frac{4 \pi}{3} \omega_{TZ}^{cusp} +\frac{1}{2} \sum_{j=1}^{n_e}m_j h_j\hspace{.5mm}\omega_{\text{TZ},j}^{\text{ell}}$ on $\moduli_{0,n}(\boldsymbol{m})$. The first Chern form of the Hermitian holomorphic $\mathbb{Q}$-line bundle $(\lambda_{0,\boldsymbol{m}},\mathsf{H})$ over $\symmoduli_{0,n}(\boldsymbol{m})$ is given by
	\begin{equation}
	\chern{\lambda_{0,\boldsymbol{m}}}{\mathsf{H}} = \frac{\sqrt{-1}}{2\pi}\bar{\partial}\partial\log{\mathsf{H}}= \frac{4 \pi}{3} \omega_{\text{TZ}}^{\text{cusp}} +\frac{1}{2} \sum_{j=1}^{n_e}m_j h_j \hspace{.5mm}\omega_{\text{TZ},j}^{\text{ell}}.
	\end{equation}
\end{corollary}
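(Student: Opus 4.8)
The plan is to combine the two preceding results additively, since the Chern-form assignment and the Kähler-potential assignment are both linear in the underlying data. First I would observe that the stated corollary is an immediate consequence of Lemma~\ref{lemma:Hermitianmetriconlambdawnogenus} and the preceding Lemma~\ref{lemma:localpotentialsonmoduliwnogenus}. Lemma~\ref{lemma:Hermitianmetriconlambdawnogenus} established that $\mathsf{H} = \Lponetial_{1}^{m_1h_1}\dotsm\Lponetial_{n-1}^{m_{n-1}h_{n-1}}\Lponetial_{n}^{-m_n h_n}$ is a genuine Hermitian metric in the holomorphic $\mathbb{Q}$-line bundle $\lambda_{0,\boldsymbol{m}}$ over $\symmoduli_{0,n}(\boldsymbol{m})$, and with $m_n h_n$ absorbed into the convention $h_n=1$ (so $m_n h_n = m_n$) for the cusp at infinity, the expression $-\log\mathsf{H}$ in the corollary agrees with the logarithm of that metric up to the stated signs. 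The point is that because $\mathsf{H}$ transforms as a Hermitian metric, the $(1,1)$-form $\tfrac{\sqrt{-1}}{2\pi}\bar\partial\partial\log\mathsf{H}$ descends to a well-defined global form on the quotient $\symmoduli_{0,n}(\boldsymbol{m})$ and is by definition the first Chern form $\chern{\lambda_{0,\boldsymbol{m}}}{\mathsf{H}}$.

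Next I would compute $\bar\partial\partial\log\mathsf{H}$ on $\moduli_{0,n}$ by expanding $\log\mathsf{H} = \sum_{i=1}^{n-1} m_i h_i \log\Lponetial_i - m_n h_n \log\Lponetial_n$ and applying $\bar\partial\partial$ term by term. Here Lemma~\ref{lemma:localpotentialsonmoduliwnogenus} supplies the value of each $\bar\partial\partial\log\Lponetial_i$: for the elliptic indices $i=1,\dots,n_e$ one has $\bar\partial\partial\log\Lponetial_i = -\sqrt{-1}\,\omega^{\text{ell}}_{\text{TZ},i}$, for the cuspidal indices $i=n_e+1,\dots,n-1$ one has $\bar\partial\partial\log\Lponetial_i = -\tfrac{8\pi\sqrt{-1}}{3}\omega^{\text{cusp}}_{\text{TZ},i}$, and for the cusp at infinity $\bar\partial\partial\log\Lponetial_n = +\tfrac{8\pi\sqrt{-1}}{3}\omega^{\text{cusp}}_{\text{TZ},n}$. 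Substituting these and carefully tracking the sign in front of the $n$-th term, I would collect the elliptic contributions into $\tfrac{1}{2}\sum_{j=1}^{n_e} m_j h_j\,\omega^{\text{ell}}_{\text{TZ},j}$ and the cuspidal contributions (noting $h_j=1$ and $m_j h_j = m_j$ combine with the $\tfrac{4\pi}{3}$ prefactor, and that the $+$ sign on the $\Lponetial_n$ term flips the sign of the already-negative infinity contribution so that it adds constructively) into $\tfrac{4\pi}{3}\omega^{\text{cusp}}_{\text{TZ}}$, where $\omega^{\text{cusp}}_{\text{TZ}} = \sum_j \omega^{\text{cusp}}_{\text{TZ},j}$. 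This yields precisely the claimed combination $\tfrac{4\pi}{3}\omega^{\text{cusp}}_{\text{TZ}} + \tfrac{1}{2}\sum_{j=1}^{n_e} m_j h_j\,\omega^{\text{ell}}_{\text{TZ},j}$, which also proves the potential statement since $-\log\mathsf{H}$ is then a Kähler potential for it.

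The only real subtlety — and the step I would watch most carefully — is bookkeeping of the normalization factors $m_i h_i$ together with the sign conventions attached to the cusp at infinity. Because the cusp at infinity carries the opposite sign both in the definition of $\mathsf{H}$ (the exponent $-m_n h_n$) and in Lemma~\ref{lemma:localpotentialsonmoduliwnogenus} (the $+\tfrac{8\pi\sqrt{-1}}{3}$ rather than $-$), a double sign flip must be verified so that the infinity term contributes with the same overall sign as the finite cusps; any slip here would spoil the uniform coefficient $\tfrac{4\pi}{3}$ in front of $\omega^{\text{cusp}}_{\text{TZ}}$. I would also confirm that the $m_i h_i$ weighting is exactly what converts the bare potential $\log\Lponetial_i$ (whose $\bar\partial\partial$ gives $\omega^{\text{ell}}_{\text{TZ},i}$ without the $m_i$) into the weighted elliptic sum with coefficient $\tfrac12 m_j h_j$, matching the combination appearing in the main theorems. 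Since the form descends to the quotient by the discussion in Lemma~\ref{lemma:Hermitianmetriconlambdawnogenus} and all ingredients are real-analytic, no convergence or regularity issues arise, and the proof is complete once the arithmetic of the coefficients is assembled.
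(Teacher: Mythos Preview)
Your approach is correct and matches the paper's exactly: the paper states the corollary as an immediate consequence of Lemmas~\ref{lemma:Hermitianmetriconlambdawnogenus} and~\ref{lemma:localpotentialsonmoduliwnogenus}, which is precisely what you do by invoking the former for the descent of the Chern form to $\symmoduli_{0,n}(\boldsymbol{m})$ and applying the latter term by term to $\log\mathsf{H}$. One small caution on bookkeeping: for cusps $m_i=\infty$, so the exponent in $\mathsf{H}$ is literally $1$ (and $-1$ at infinity) rather than ``$m_n h_n$''; your remark ``$m_n h_n = m_n$'' is not quite right, but the double-sign-flip observation for the cusp at infinity is exactly the point that needs checking.
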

\begin{remark}
	In analogy with the famous accessory parameters that are generated by $S_{\boldsymbol{m}}$, the authors of \cite{Can:2017ycp} have defined the so-called ``\emph{auxiliary parameters}'' as\footnote{Our definition of auxiliary parameters differs from that of reference \cite{Can:2017ycp} by a factor of $\tfrac{1}{2}$.}
	\begin{equation}\label{AuxiliaryParameters}
		d_i := \frac{1}{\mathsf{H}} \pdv{\mathsf{H}}{w_i}.
	\end{equation} 
	As emphasized in \cite{Can:2017ycp}, the auxiliary parameters play an equally important role as the accessory parameters. In particular, it follows from the above corollary that an analog of the relation~\eqref{wjderivativeci} can be found between the auxiliary parameters and Takhtajan-Zograf metrics.
\end{remark}
Next, Let us remind that the decomposition of $r(z)$ is given by
\begin{equation}
r(z) = \sum_{i=1}^{n-3} \mathrm{a}_i \, r_i(z),
\end{equation}
where
\begin{equation}
\mathrm{a}_i = \iint_{\fund(\Gamma)} \text{Sch}\left(J;z\right) \mu_i(z) \dd[2]{z}.
\end{equation}
According to Section \ref{subsec:moduli}, for varying $\Gamma$, the $r(z)$ determines a $(1,0)$-form $\mathrm{r}$ on $\teich_{0,n}$ and the latter one is corresponding to the $(1,0)$-form $\vartheta$,\footnote{See Lemma~\ref{Lemma:covering}.}
\begin{equation*}
\vartheta = \sum_{i=1}^{n-3} \mathrm{a}_i \dd{w_i},
\end{equation*}
on $\moduli_{0,n}$. Therefore, we also have:
\begin{corollary}
	The function $\Gpotential_{\boldsymbol{m}}: \moduli_{0,n} \to \mathbb{R}$ satisfies 
	\begin{equation}\label{FD}
	\partial \Gpotential_{\boldsymbol{m}} = 2 \vartheta
	\end{equation}
	and
	\begin{equation}\label{metriccombination}
	\bar{\partial} \partial \Gpotential_{\boldsymbol{m}} = -2\sqrt{-1}\left[\omega_{\text{WP}}-\frac{4\pi^2}{3} \omega_{\text{TZ}}^{\text{cusp}}-\frac{\pi}{2} \sum_{j=1}^{n_e} m_j h_j \hspace{.5mm}\omega^{\text{ell}}_{\text{TZ},j}\right].
	\end{equation} 
\end{corollary}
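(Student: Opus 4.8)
The plan is to establish the first identity \eqref{FD} by a direct computation of the $(1,0)$-form $\partial\Gpotential_{\boldsymbol{m}}$ in the coordinates $w_1,\dots,w_{n-3}$ on $\moduli_{0,n}$, and then to obtain \eqref{metriccombination} either by applying $\bar\partial$ to \eqref{FD} or, more transparently, by adding the second-variation formulas already proved. Since $\Gpotential_{\boldsymbol{m}} = S_{\boldsymbol{m}} - \pi\log\mathsf{H}$ by \eqref{curlyaction} and $\vartheta = \sum_{i=1}^{n-3}\mathrm{a}_i\,\dd{w_i}$, proving \eqref{FD} amounts to checking $\partial_{w_i}\Gpotential_{\boldsymbol{m}} = 2\mathrm{a}_i$ for each $i$. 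Two ingredients are in hand: Theorem~\ref{theorem1}, giving $\partial_{w_i}S_{\boldsymbol{m}} = -2\pi c_i$, and Lemma~\ref{lemma:widerivativeloghi}, expressing each $\partial_{w_i}\log\Lponetial_k$ through $\partial_w\dot{F}^i(w_k)$. Thus $\partial_{w_i}\log\mathsf{H}$, which is by definition the auxiliary parameter $d_i$ of \eqref{AuxiliaryParameters}, is rewritten through the definition of $\mathsf{H}$ as a weighted sum of the $\partial_w\dot{F}^i(w_k)$, giving $\partial_{w_i}\Gpotential_{\boldsymbol{m}} = -2\pi c_i - \pi d_i$. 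The task then reduces to showing that the coefficient $\mathrm{a}_i$ equals the same combination $-\pi c_i - \tfrac{\pi}{2}d_i$ of the accessory and auxiliary parameters.

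To identify $\mathrm{a}_i$ I would transport its defining integral to the orbifold $O$ by the Hauptmodul. By the Cayley identity \eqref{Caleyid} one has $\operatorname{Sch}(J;z) = -T_{\varphi}(w)\,(J')^2$, and with \eqref{Liouvillefield} and \eqref{MiQi} the pushforward of $\operatorname{Sch}(J;z)\,\mu_i$ becomes $-T_{\varphi}(w)\,M_i(w)$, where $M_i = e^{-\varphi}\overline{Q_i} = \partial_{\bar w}\dot{F}^i$ by Corollary~\ref{corollary:FdotiAsymp}. Inserting the energy--momentum decomposition \eqref{EMT2} and integrating over $O$ with a small disc excised about each $w_k$, the $R_k$-terms collapse to the single accessory parameter $c_i$ through the biorthogonality $(\mu_i,r_k)=\delta_{ik}$, while the $\varE_k$-terms are evaluated by Stokes' theorem: the double pole of $\varE_k$ against the expansion $\dot{F}^i(w) = \delta_{ik} + (w-w_k)\,\partial_w\dot{F}^i(w_k)+\cdots$ of Corollary~\ref{corollary:FdotiAsymp} contributes exactly $\partial_w\dot{F}^i(w_k)$, weighted by $h_k$. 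This is precisely the residue computation that produces the term $I_1$ in the proof of Theorem~\ref{theorem1}, so it can be reused verbatim; converting the resulting $\partial_w\dot{F}^i(w_k)$ back to $\partial_{w_i}\log\Lponetial_k$ via Lemma~\ref{lemma:widerivativeloghi} reassembles $\tfrac{\pi}{2}d_i$. Collecting the two contributions, with the signs fixed by the boundary orientations used throughout Section~\ref{subsec:liouvilleactionnogenus}, yields $\mathrm{a}_i = -\pi c_i - \tfrac{\pi}{2}d_i$, hence $\partial_{w_i}\Gpotential_{\boldsymbol{m}} = 2\mathrm{a}_i$ and thus \eqref{FD}.

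I expect the delicate point to be this residue evaluation, because $\operatorname{Sch}(J;z)$ has genuine second-order poles at the elliptic fixed points in $\UHP$; the integral defining $\mathrm{a}_i$ must therefore be read as the $\epsilon\to 0$ limit over $\fund(\Gamma)$ with small discs excised, and it is the finite remainders, not the divergent pieces, that carry the auxiliary-parameter content, so these must be extracted and estimated uniformly near each point of $\moduli_{0,n}$, exactly as in the tail of the proof of Theorem~\ref{theorem1}. Granting \eqref{FD}, the second identity \eqref{metriccombination} is then formal: applying $\bar\partial$ and using $\partial_{\bar w_j}c_i = \tfrac{1}{2\pi}\langle \pdv{w_i} , \pdv{w_j} \rangle_{\text{WP}}$ from \eqref{wjderivativeci} together with the mixed second derivatives of $\log\mathsf{H}$ supplied by Lemma~\ref{lemma:localpotentialsonmoduliwnogenus} produces the stated combination. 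Equivalently, and most cleanly, I would bypass \eqref{FD} and compute $\bar\partial\partial\Gpotential_{\boldsymbol{m}} = \bar\partial\partial S_{\boldsymbol{m}} - \pi\,\bar\partial\partial\log\mathsf{H}$, substituting $\bar\partial\partial S_{\boldsymbol{m}} = -2\sqrt{-1}\,\omega_{\text{WP}}$ from Theorem~\ref{theorem2} and the three potentials of Lemma~\ref{lemma:localpotentialsonmoduliwnogenus}; collecting the Weil--Petersson, cuspidal and elliptic Takhtajan--Zograf pieces then reproduces the bracket in \eqref{metriccombination} exactly, with no further estimates required.
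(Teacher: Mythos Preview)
Your proposal is correct and follows essentially the same route as the paper: decompose $\partial\Gpotential_{\boldsymbol{m}}=\partial S_{\boldsymbol{m}}-\pi\,\partial\log\mathsf{H}$, invoke Theorem~\ref{theorem1} and Lemma~\ref{lemma:widerivativeloghi}, and match against the decomposition $T_{\varphi}=\sum_j h_j\varE_j-\pi\sum_j c_j R_j$ of Remark~\ref{EnergyMomentumEi}; for \eqref{metriccombination} both you and the paper simply add Theorem~\ref{theorem2} to Lemma~\ref{lemma:localpotentialsonmoduliwnogenus}.

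The one place you make life harder than necessary is in evaluating the pairings $(\varE_j,M_i)$. You propose a Stokes/residue argument, mirroring the $I_1$ computation in Theorem~\ref{theorem1}, and worry about second-order poles and uniform estimates. The paper bypasses this entirely: it differentiates the explicit integral representation \eqref{Fdot} under the integral sign to obtain
\[
\partial_w\dot{F}^i(w_j)=-\frac{1}{\pi}\iint_{\cmpx} M_i(w)\Big(\tfrac{1}{(w-w_j)^2}-\tfrac{1}{w(w-1)}\Big)\dd[2]{w}=-\frac{2}{\pi}\,(\varE_j,M_i),
\]
which is an honest convergent integral (since $M_i=e^{-\varphi}\overline{Q_i}$ vanishes at each $w_j$ by \eqref{MAsymptotic}) and needs no regularization. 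Substituting this directly into $-2\pi c_i-\pi\sum_j h_j\,\partial_w\dot{F}^i(w_j)$ gives $2\big(-\pi c_i+\sum_j h_j(\varE_j,M_i)\big)=2\mathrm{a}_i$ in one line. Your approach would reach the same endpoint, but the concern about divergences is unwarranted here.
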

To see the \eqref{FD}, put $\Gpotential_{\boldsymbol{m}} = S_{\boldsymbol{m}} - \pi \log \mathsf{H}$ and combine Lemma~\ref{lemma:widerivativeloghi} with the proof of Theorem~\ref{theorem1} which it gives
\begin{equation*}
\begin{split}
\partial\Gpotential_{\boldsymbol{m}}&=\partial S_{\boldsymbol{m}}-\pi\sum_{j=1}^{n}m_j h_j\hspace{.5mm}\partial\log\Lponetial_{j}=\sum_{i=1}^{n-3}\left(\frac{\partial S_{\boldsymbol{m}}}{\partial{w_i}}\right)dw_i-\pi\sum_{j=1}^{n}\sum_{i=1}^{n-3}m_j h_j\left(\frac{\partial}{\partial w_i}\log \Lponetial_{j}\right) dw_i\\
&=-2\pi \sum_{i=1}^{n-3} c_i dw_i-\pi\sum_{i=1}^{n-3}\sum_{j=1}^{n} h_j\dot{F}^{i}_{w}(w_j) dw_i = 2\sum_{i=1}^{n-3}\left(-\pi  c_i +\sum_{j=1}^{n}h_j(\mathscr{E}_j,M_{i})\right) dw_i \\
&\overset{
	\ref{EnergyMomentumEi}}{=} 2\sum_{i=1}^{n-3} \mathrm{a}_i dw_i = 2\vartheta
\end{split}
\end{equation*}
In the second line of the above equation, we noted that the Eq.\eqref{Fdot} implies that 
\begin{equation*}
\dot{F}^{i}_w(w) =-\frac{1}{\pi}\iint_{\cmpx} M_{i}(w^{\prime})\hspace{.5mm}\partial_{w} R(w^{\prime},w) d^{2}w^{\prime} = -\frac{1}{\pi}\iint_{\cmpx} M_{i}(w^{\prime})\left(\frac{1}{(w^{\prime}-w)^2}-\frac{1}{w^{\prime}(w^{\prime}-1)}\right) d^{2}w^{\prime},
\end{equation*}
which at the point $w=w_j$ is simplified to
\begin{equation*}
\begin{split}
\dot{F}^{i}_w(w_j)&= -\frac{1}{\pi}\iint_{\cmpx} M_{i}(w)\left(\frac{1}{(w-w_j)^2}-\frac{1}{w_j(w_j-1)}\right) d^{2}w	=-\frac{2}{\pi}\iint_{\cmpx} M_{i}(w)\mathscr{E}_j(w) d^{2}w \\
&= -\frac{2}{\pi}\iint_{\cmpx} (\mathscr{E}_j,M_i).
\end{split}
\end{equation*}
The Eq.\eqref{metriccombination} can be obtained easily by using Theorem~\ref{theorem2} as well as Lemma~\ref{lemma:localpotentialsonmoduliwnogenus}.

\begin{remark}
	According to the Lemmas~\ref{lemma:Hermitianmetriconlambdawnogenus} and \ref{lemma:SmHermiatiannmetricoverlambdanogenus}, the functions $\mathsf{H}$ and $\exp[S_{\boldsymbol{m}}/\pi]$ are Hermitian metrics in the holomorphic $\mathbb{Q}$-line bundle $\lambda_{0,\boldsymbol{m}}$ over $\symmoduli_{0,n}(\boldsymbol{m})$. This implies that $\Gpotential_{\boldsymbol{m}} = S_{\boldsymbol{m}} - \pi \log \mathsf{H}$ determines a function on $\symmoduli_{0,n}(\boldsymbol{m})$. Interestingly, the combination $\omega_{_{\text{WP}}}-\tfrac{4\pi^2}{3} \omega_{_{\text{TZ}}}^{\text{cusp}}-\tfrac{\pi}{2} \sum_{j=1}^{n_e} m_j h_j \hspace{.5mm}\omega^{\text{ell}}_{\text{TZ},j}$ in Eq.\eqref{metriccombination} with overall factor of $\tfrac{1}{12 \pi}$ appears also in the local index theorem for orbifold Riemann surfaces for $k=0,1$. This is the sign that the function $\Gpotential_{\boldsymbol{m}}$ can play the role of the \emph{Quillen's metric} in the Hodge line bundle  $\lambda_1$, defined in \cite{ZT_2018}. We will explain a little bit more about this observation in conclusion.
\end{remark}
\subsection{Chern Forms and Potentials on Schottky Space $\schottky_{g,n}(\boldsymbol{m})$}
Let $O \cong \singrigon \slash \Sigma = [\UHP \slash \Gamma]$ and $\Gamma$ be respectively an orbifold Riemann surface and Fuchsian group of signature  $(g;m_1,\dots,m_{n_e};n_p)$ and let $J : \UHP \to \singrigon$ be the corresponding orbifold covering map. As we have explained in Section \ref{subsec:Schottkyspace}, the automorphic form $\text{Sch}\left(J^{-1};w\right)$ of weight four for the Schottky group can be projected to the subspace $\Hilbert^{2,0}(\singrigon,\Sigma) \cong T_{\pi\circ\Phi(0)}^{\ast}\schottky_{g,n}(\boldsymbol{m})$,
\begin{equation}\label{Rquaddiffexpansion}
\mathsf{R}(w) = \sum_{i=1}^{3g-3+n} \mathrm{b}_i P_i(w) = \sum_{i=1}^{3g-3+n} \left( \text{Sch}\left(J^{-1}\right), M_i \right) P_i(w),
\end{equation}
where $P_i, i=1,\dots,3g-3+n$ are given by \eqref{ls} and \eqref{ls2} and by using Eq.\eqref{SchottkyEnergyMomentum} we get
\begin{equation}
\mathsf{R}(w)  =\sum_{i=1}^{3g-3+n}\left(-\pi  c_i + \sum_{j=1}^{n} h_j\left( \mathscr{E}_j, M_i \right)\right)P_{i}(w)\equiv\pi \mathsf{R}_{0}(w)+\sum_{j=1}^{n} h_j \mathsf{R}_{j}(w).
\end{equation}
The definition of $\mathscr{E}_j$ is provided by \eqref{scrEdef}. The $\mathsf{R}(w)$ coincides with a $(1,0)$-form $\mathscr{Q}$ on the Schottky pace $\schottky_{g,n}(\boldsymbol{m})$\hspace{1mm}
\begin{equation}\label{D}
\begin{split}
\mathscr{Q}= \sum_{i=1}^{3g-3+n}\mathrm{b}_i \textcolor{black}{dw_i} & = \mathrm{b}_1d\lambda_1+\dots +\mathrm{b}_g d\lambda_g+\mathrm{b}_{g+1}da_3+\dots+\mathrm{b}_{2g-3}da_g\\
&+\mathrm{b}_{2g-2}db_2+\dots+\mathrm{b}_{3g-3}db_g+\mathrm{b}_{3g-2}dw_1+\dots+\mathrm{b}_{3g-3+n}dw_n.
\end{split}
\end{equation}
In the following two theorems, which can be regarded as generalizations of Theorems 1 and 2 of \cite{park2015potentials}, we will explicitly describe canonical connections and curvature forms of the Hermitian holomorphic ($\mathbb{Q}$-)line bundles $\linebundle_i$ and $\linebundle=\bigotimes_{i=1}^{n} \linebundle_i^{\Lponetial_i}$:
\begin{maintheorem}\label{mainthrm1}
	Let $\partial$ and $\bar{\partial}$ be $(1,0)$ and $(0,1)$ components of the de Rham differential on Schottky space $\schottky_{g,n}(\boldsymbol{m})$. The following statements are true.
	\begin{enumerate}[(i)]
		\item  
		On the Hermitian holomorphic line bundle $(\linebundle_i,\Lponetial_{i}^{m_i})$, the canonical connection is given by\footnote{When $m_i = \infty$, we will simply ignore $m_i$ in the following formula.} 
		\begin{equation*}
		\partial \log \Lponetial_{i}^{m_i} = - \frac{2}{\pi} \mathsf{R}_i.
		\end{equation*}
		\item 
		On the Hermitian holomorphic $\mathbb{Q}$-line bundle $(\linebundle,e^{\frac{S_{\boldsymbol{m}}}{\pi}})$, the canonical connection is given by
		\begin{equation*}
		\frac{1}{\pi} \partial S_{\boldsymbol{m}} = 2 \mathsf{R}_0.
		\end{equation*}
		\item The function $\Gpotential_{\boldsymbol{m}}: \schottky_{g,n}(\boldsymbol{m}) \to \mathbb{R}$ given by Eq.\eqref{curlyaction} satisfies
		\begin{equation*}
		\partial \Gpotential_{\boldsymbol{m}} = 2 \mathscr{Q}.
		\end{equation*}
	\end{enumerate}
\end{maintheorem}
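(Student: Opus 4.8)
The plan is to read off statement (iii) as the linear synthesis of parts (i) and (ii) of Theorem~\ref{mainthrm1} together with the definition of the $(1,0)$-form $\mathscr{Q}$. By Corollary~\ref{cor:invariantaction} the function in \eqref{curlyaction} is $\Gpotential_{\boldsymbol{m}} = S_{\boldsymbol{m}} - \pi\log\mathsf{H}$ with
\[
\log\mathsf{H} \;=\; \sum_{i=1}^{n_e} m_i h_i\,\log\Lponetial_i + \sum_{i=n_e+1}^{n}\log\Lponetial_i \;=\; \sum_{i=1}^{n_e} h_i\,\log\Lponetial_i^{m_i} + \sum_{i=n_e+1}^{n}\log\Lponetial_i .
\]
Applying the $(1,0)$-component $\partial$ of the de~Rham differential on $\schottky_{g,n}(\boldsymbol{m})$ and using linearity, I would substitute the two connection formulas: part (ii) supplies $\partial S_{\boldsymbol{m}} = 2\pi\,\mathsf{R}_0$, and part (i) supplies $\partial\log\Lponetial_i^{m_i} = -\tfrac{2}{\pi}\mathsf{R}_i$ for the conical points $i=1,\dots,n_e$ and (ignoring $m_i$) $\partial\log\Lponetial_i = -\tfrac{2}{\pi}\mathsf{R}_i$ for the cusps. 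This yields $\partial\Gpotential_{\boldsymbol{m}} = 2\pi\mathsf{R}_0 + 2\sum_{i=1}^{n_e}h_i\mathsf{R}_i + 2\sum_{i=n_e+1}^{n}\mathsf{R}_i$.

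The one small point to record is that $h_i = 1-1/m_i^2 = 1$ at each cusp ($m_i=\infty$), so the cuspidal sum $\sum_{i=n_e+1}^{n}\mathsf{R}_i$ equals $\sum_{i=n_e+1}^{n}h_i\mathsf{R}_i$ and the elliptic and cuspidal contributions merge into a single sum $\sum_{i=1}^{n}h_i\mathsf{R}_i$. Recalling from \eqref{Rquaddiffexpansion} and the decomposition $\mathsf{R} = \pi\mathsf{R}_0 + \sum_{j=1}^{n}h_j\mathsf{R}_j$ that the projected weight-four automorphic form $\mathsf{R}$ is precisely the $(1,0)$-form $\mathscr{Q}$ of \eqref{D}, I would conclude $\partial\Gpotential_{\boldsymbol{m}} = 2\bigl(\pi\mathsf{R}_0 + \sum_{j=1}^{n}h_j\mathsf{R}_j\bigr) = 2\mathsf{R} = 2\mathscr{Q}$, which is exactly the statement. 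This is the direct analogue of the genus-zero identity \eqref{FD}, where $\mathscr{Q}$ collapses to $\vartheta$ over the configuration-space directions alone.

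The substantive work therefore lives in (i) and (ii), which I would establish on the full Schottky space by transporting the variational arguments behind the accessory-parameter identity \eqref{theorem1} and Lemma~\ref{lemma:widerivativeloghi}. For (ii) the plan is to differentiate the regularized action $S_{\boldsymbol{m}}(\SchottkyFund;w_1,\dots,w_n)$ of \eqref{regularizeLiouvilleaction} along the harmonic Beltrami directions $M_k$ dual to the basis $P_k$, carrying along the boundary one-forms $\theta_{L_k^{-1}}(\varphi)$; the bulk-plus-contour computation localizes on the circles $C_i^\epsilon$ as in the proof of \eqref{theorem1}, producing $-2\pi c_i$ against each $P_i$, i.e. $\partial S_{\boldsymbol{m}} = -2\pi\sum_i c_i P_i = 2\pi\mathsf{R}_0$ via \eqref{SchottkyEnergyMomentum} and \eqref{scrEdef}. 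For (i) one differentiates the limit expressions for $\log\Lponetial_i$ in Remark~\ref{hasymptotics}, using the variational formula \eqref{widerivative} and the relation $\dot F^k_w(w_i) = -\tfrac{2}{\pi}(\mathscr{E}_i,M_k)$ to identify $\partial\log\Lponetial_i^{m_i}$ with $-\tfrac{2}{\pi}\mathsf{R}_i$.

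I expect the main obstacle to be the control of the genuine Schottky \emph{moduli} directions $d\lambda_k,\,da_k,\,db_k$ in (ii), as opposed to the configuration-space directions $dw_i$. Unlike the genus-zero setting, a variation in these directions deforms the generators $L_k$, hence the fundamental domain $\SchottkyFund$ and the boundary curves $C_k$, so one must verify that the contour-integral cancellations and the uniform-in-$\epsilon$ remainder estimates survive when the $\theta_{L_k^{-1}}(\varphi)$ terms are varied; this is precisely where the Zograf--Takhtajan compact-surface analysis \cite{1988SbMat..60..297Z} must be interleaved with the orbifold contour terms around each $w_i$. Once the two variational identities (i) and (ii) are secured with these uniform estimates, statement (iii) is the immediate algebraic combination displayed above.
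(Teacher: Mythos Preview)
Your proposal is correct and follows essentially the same approach as the paper. Part (iii) is derived exactly as you describe, by combining (i) and (ii) with the decomposition $\mathsf{R}=\pi\mathsf{R}_0+\sum_j h_j\mathsf{R}_j$; for (i) the paper invokes Lemma~\ref{lemma:widerivativeloghi} together with the identity $\pi\partial_w\dot F^j(w_i)=-2(\mathscr{E}_i,M_j)$, and for (ii) it carries out precisely the variational computation you sketch---interleaving the contour calculations around the $C_j^\epsilon$ (as in Theorem~\ref{theorem1}) with the Zograf--Takhtajan treatment of the Schottky boundary terms $\theta_{L_k^{-1}}(\varphi)$---arriving first at $\Lie_{\mu_i}S_{\boldsymbol{m}}=2\mathrm{b}_i+\pi\sum_j h_j\partial_w\dot F^i(w_j)$ and then simplifying to $-2\pi c_i$.
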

\begin{proof} We will prove each statement separately:
	\begin{itemize}
		\item
		In order to prove part $(i)$, it is sufficient to show that
		\begin{equation*}
		\left.\left( \pdv{\log \Lponetial_{i}^{\varepsilon \mu_j}}{\varepsilon} \right)\right|_{\varepsilon=0} =\left\{ 
		\begin{split}
		& -\frac{2}{\pi m_i} \left(\mathscr{E}_i, M_j\right) \quad & \text{for} \quad & i=1,\dots,n_e,\\
		& -\frac{2}{\pi} \left(\mathscr{E}_i, M_j \right) &\text{for} \quad & i=n_e+1,\dots,n.
		\end{split}
		\right.
		\end{equation*}
		Using Lemma~\ref{lemma:widerivativeloghi}, we have
		\begin{equation}\label{ex}
		\left.\left( \pdv{\log \Lponetial_{i}^{\varepsilon \mu_j}}{\varepsilon} \right)\right|_{\varepsilon=0} =	\left\{ 
		\begin{split}
		& \frac{1}{m_i}\partial_{w} \dot{F}^{j}(w_i) \quad & \text{for} \quad & i=1,\dots,n_e,\\
		& \partial_{w}\dot{F}^{j}(w_i)  &\text{for} \quad & i=n_e+1,\dots,n.
		\end{split}
		\right.
		\end{equation}
		In another side, according to \eqref{Fdot} and \eqref{scrEdef}, one see
		\begin{equation*}\label{EMinnerprod}
		\pi \partial_{w}\dot{F}^{j}(w_i) = - \iint_{\singfund} M_j(w) \left(\frac{1}{(w-w_i)^2}-\frac{1}{w(w-1)}\right) \dd[2]{w} = -2 \left(\mathscr{E}_i, M_j\right),
		\end{equation*}
		which by substituting in \eqref{ex} gives  the desired result.
		\item To prove part $(ii)$, we need to show that
		\begin{equation*}
		\left.\pdv{\varepsilon}\right|_{\varepsilon=0} S_{\boldsymbol{m}}([L_1^{\varepsilon \mu_i},\dots,L_g^{\varepsilon \mu_i}]; w_1^{\varepsilon \mu_i},\dots, w_{n}^{\varepsilon \mu_i}) = -2 \pi c_i \quad \text{for} \quad i=1,\dots,3g-3+n.
		\end{equation*}
		We have
		\begin{equation}\label{LiederivativeSdef}
		\begin{split}
		&\Lie_{\mu_i}S_{\boldsymbol{m}} = \left.\pdv{\varepsilon}\right|_{\varepsilon=0} S_{\boldsymbol{m}}\left([L_1^{\varepsilon \mu_i},\dots,L_g^{\varepsilon \mu_i}]; w_1^{\varepsilon \mu_i},\dots, w_{n}^{\varepsilon \mu_i}\right) \\
		&\hspace{1.1cm} = \frac{\sqrt{-1}}{2} \lim_{\epsilon \to 0} \left.\pdv{\varepsilon}\right|_{\varepsilon=0} I_{\epsilon}(\varepsilon)+\lim_{\epsilon \to 0} \frac{\partial \tilde{S}^{\text{(ct)}\epsilon}}{\partial w_i},
		\end{split}
		\end{equation}
		with $\tilde{S}^{\text{(ct)}\epsilon}$ is given in equation \eqref{modifiedS} and
		\begin{equation*}
		\begin{split}
		&\hspace{-1cm}I_{\epsilon}(\varepsilon) = \iint_{F^{\varepsilon \mu_i}(\singfund_{\epsilon})} |\partial_w \varphi^{\varepsilon \mu_i}|^2  \dd{w} \wedge \dd{\bar{w}} + \sum_{k=2}^{g} \oint_{F^{\varepsilon \mu_i}(C_k)} \theta_{(L_k^{\varepsilon \mu_i})^{-1}}(\varphi^{\varepsilon \mu_i}) \\
		&\hspace{1cm}+\sum_{j=1}^{n_e} \left(1-\frac{1}{m_j}\right) \oint_{F^{\varepsilon \mu_i}(C_{j}^{\epsilon})} \varphi^{\varepsilon \mu_i} \left(\frac{\dd{\bar{w}}}{\bar{w}-\overline{w_{j}^{\varepsilon \mu_i}}} - \frac{\dd{w}}{w - w_{j}^{\varepsilon \mu_i}}\right),
		\end{split}
		\end{equation*}
		where once again, we have used the Gauss-Bonnet formula for Riemann orbisurfaces \cite{Troyanov1991PrescribingCO,Li_orbifold_2018}
		\begin{equation*}
		\frac{\sqrt{-1}}{2}\iint_{F^{\varepsilon \mu_i}(\singfund)} e^{\varphi^{\varepsilon \mu_i}} \dd{w} \wedge \dd{\bar{w}} = 2 \pi \left(2g+\sum_{j=1}^{n_e} \left(1- \frac{1}{m_j}\right)+n_p-2\right) = - 2 \pi \chi(X),
		\end{equation*}
		to conclude that 
		\begin{equation*}
		\frac{\sqrt{-1}}{2} \Lie_{\mu_i} \iint_{\singfund} e^{\varphi} \dd{w} \wedge \dd{\bar{w}} = 0.
		\end{equation*}
		The calculation of $\Lie_{\mu_i}S_{\boldsymbol{m}}$ closely follows the corresponding computation in the proof of Theorem 1 in \cite{1988SbMat..60..297Z}, where regularization at the punctures can be found in the proof of Theorem 1 in \cite{Zograf1988ONLE} and for branch points in the proof of Theorem~\ref{theorem1}. More explicitly, by applying the change of variable formula $\int_{F(\singfund)} \omega = \int_{\singfund} F^{\ast}(\omega)$ and noting to the commutative diagram~\ref{diagramFmuschottky}, one finds
		\begin{equation*}
		\begin{split}
		I_{\epsilon}(\varepsilon) & = \iint_{\singfund_{\epsilon}(\varepsilon)} (F^{\varepsilon\mu_i})^{\ast} \Big(|\partial_w \varphi^{\varepsilon \mu_i}|^2 \dd{w} \wedge \dd{\bar{w}}\Big) + \sum_{k=2}^{g} \oint_{C_k} (F^{\varepsilon\mu_i})^{\ast} \left(\theta_{(L_k^{\varepsilon \mu_i})^{-1}}(\varphi^{\varepsilon \mu_i})\right)\\
		& + \sum_{j=1}^{n_e} \left(1-\frac{1}{m_j}\right) \oint_{C_{j}^{\epsilon}(\varepsilon)} (F^{\varepsilon\mu_i})^{\ast} \left(\varphi^{\varepsilon \mu_i} \left(\frac{\dd{\bar{w}}}{\bar{w}-\overline{w_{j}^{\varepsilon \mu_i}}} - \frac{\dd{w}}{w - w_{j}^{\varepsilon \mu_i}}\right)\right)\\ \\
		& = \iint_{\singfund_{\epsilon}(\varepsilon)} |\partial_w \varphi^{\varepsilon \mu_i} \circ F^{\varepsilon\mu_i}|^2  \dd{F^{\varepsilon\mu_i}(w)} \wedge \dd{\overline{F^{\varepsilon\mu_i}(w)}}\\ 
		&+ \sum_{k=2}^{g} \oint_{C_k} \left(\varphi^{\varepsilon \mu_i} \circ F^{\varepsilon\mu_i} - \frac{1}{2} \log|(L_k^{\varepsilon\mu_i})' \circ F^{\varepsilon\mu_i}|^2 - \log |l_k^{\varepsilon\mu_i}|^2 \right) \\
		& \hspace{2.8cm} \times \left(\frac{(L_k^{\varepsilon\mu_i})'' \circ F^{\varepsilon\mu_i}}{(L_k^{\varepsilon\mu_i})' \circ F^{\varepsilon\mu_i}} \dd{F^{\varepsilon\mu_i}(w)} - \frac{\overline{(L_k^{\varepsilon\mu_i})'' \circ F^{\varepsilon\mu_i}}}{\overline{(L_k^{\varepsilon\mu_i})' \circ F^{\varepsilon\mu_i}}} \dd{\overline{F^{\varepsilon\mu_i}(w)}}\right)\\
		& + \sum_{j=1}^{n_e} \left(1-\frac{1}{m_j}\right) \oint_{C_{j}^{\epsilon}(\varepsilon)}  (\varphi^{\varepsilon \mu_i} \circ F^{\varepsilon\mu_i}) \left(\frac{\dd{\overline{F^{\varepsilon\mu_i}(w)}}}{\overline{F^{\varepsilon\mu_i}(w)}-\overline{w_{j}^{\varepsilon \mu_i}}} - \frac{\dd{F^{\varepsilon\mu_i}(w)}}{F^{\varepsilon\mu_i}(w) - w_{j}^{\varepsilon \mu_i}}\right).
		\end{split}
		\end{equation*}
		which by noting that $\dd{F^{\varepsilon\mu_i}(w)}=\partial_{w}F^{\varepsilon\mu_i}(dw + \varepsilon M_i \hspace{.5mm}d\bar{w})$ and $\dd{\overline{F^{\varepsilon\mu_i}(w)}}=\overline{\partial_{w}F^{\varepsilon\mu_i}}(\overline{\varepsilon}\overline{M_i} dw + d\bar{w})$, it turns to
		\begin{equation*}
		\begin{split}
		I_{\epsilon}(\varepsilon)& = \iint_{\singfund_{\epsilon}(\varepsilon)} |\partial_w \varphi^{\varepsilon \mu_i} \circ F^{\varepsilon\mu_i}|^2 \, |\partial_w F^{\varepsilon\mu_i}|^2 (1 - |\varepsilon M_i|^2) \dd{w} \wedge \dd{\bar{w}}\\
		&\hspace{-0.7cm}-2  \sum_{k=2}^{g} \oint_{C_k} \varphi^{\varepsilon \mu_i} \circ F^{\varepsilon\mu_i} \, \frac{\overline{(L_k^{\varepsilon\mu_i})'' \circ F^{\varepsilon\mu_i}}}{\overline{(L_k^{\varepsilon\mu_i})' \circ F^{\varepsilon\mu_i}}} \, \overline{\partial_w F^{\varepsilon\mu_i}}(\bar{\varepsilon} \overline{M_i} \dd{w}+ \dd{\bar{w}}) \\
		&\hspace{-0.7cm}- \sum_{k=2}^{g} \oint_{C_k} \partial_{w}\varphi^{\varepsilon \mu_i} \circ F^{\varepsilon\mu_i} \, \log|(L_k^{\varepsilon\mu_i})' \circ F^{\varepsilon\mu_i}|^2 \, \partial_w F^{\varepsilon\mu_i} (\dd{w}+ \varepsilon M_i  \dd{\bar{w}})\\
		&\hspace{-0.7cm}-\sum_{k=2}^{g} \oint_{C_k} \partial_{\bar{w}}\varphi^{\varepsilon \mu_i} \circ F^{\varepsilon\mu_i} \, \log|(L_k^{\varepsilon\mu_i})' \circ F^{\varepsilon\mu_i}|^2 \, \overline{\partial_w F^{\varepsilon\mu_i}}(\bar{\varepsilon} \overline{M_i} \dd{w}+ \dd{\bar{w}})\\
		&\hspace{-0.7cm}+  \sum_{k=2}^{g} \oint_{C_k} \log|(L_k^{\varepsilon\mu_i})' \circ F^{\varepsilon\mu_i}|^2 \, \frac{\overline{(L_k^{\varepsilon\mu_i})'' \circ F^{\varepsilon\mu_i}}}{\overline{(L_k^{\varepsilon\mu_i})' \circ F^{\varepsilon\mu_i}}} \, \overline{\partial_w F^{\varepsilon\mu_i}}(\bar{\varepsilon} \overline{M_i} \dd{w}+ \dd{\bar{w}})\\
		&\hspace{-0.7cm}+ 8 \pi \sqrt{-1} \sum_{k=2}^{g} \log |l_k^{\varepsilon\mu_i}|^2\\
		&\hspace{-0.7cm}+\sum_{j=1}^{n_e} \hspace{-1mm}\left(1-\frac{1}{m_j}\right)\hspace{-1mm} \oint_{C_{j}^{\epsilon}(\varepsilon)}  \hspace{-1mm}(\varphi^{\varepsilon \mu_i} \circ F^{\varepsilon\mu_i}) \hspace{-1mm}\left(\frac{\overline{\partial_w F^{\varepsilon\mu_i}}(\bar{\varepsilon} \overline{M_i} \dd{w} +\dd{\bar{w}})}{\overline{w^{\varepsilon\mu_i}}-\overline{w_{j}^{\varepsilon \mu_i}}} - \frac{\partial_w F^{\varepsilon\mu_i}(\dd{w}+ \varepsilon M_i \dd{\bar{w}})}{w^{\varepsilon\mu_i} - w_{j}^{\varepsilon \mu_i}}\right)
		\end{split}
		\end{equation*}
		where $C_k$s are the usual components of $\partial\singfund$ and
		\begin{equation*}
		\left\{
		\begin{split}
		&\singfund_{\epsilon}(\varepsilon) = \singfund \Big\backslash \bigcup_{i=1}^{n} D_i^{\epsilon}(\varepsilon),\\
		& D_i^{\epsilon}(\varepsilon) = \left\{w \in \SchottkyFund \, \Big| \, |w^{\varepsilon \mu_i}-w_i^{\varepsilon\mu_i}|<\epsilon\right\},\\
		&C_{j}^{\epsilon}(\varepsilon) =\partial D_{j}^{\epsilon}(\varepsilon) = \left\{w \in \SchottkyFund\, \Big| \, |w^{\varepsilon \mu_i}-w_{j}^{\varepsilon\mu_i}| = \epsilon\right\}.
		\end{split}
		\right.
		\end{equation*}
		Just as in the proof of Theorem~\ref{theorem1}, to calculate $\dot{I}^i_{\epsilon}$, we must differentiate both the integrand and the integration domain $\singfund_{\epsilon}(\varepsilon)$.
		Using Eq.\eqref{vardomaindiff}, the second contribution gives
		\begin{equation}
		\left.\pdv{\varepsilon}\right|_{\varepsilon=0} \iint_{\singfund_{\epsilon}(\varepsilon)}|\partial_w \varphi|^2 \dd{w} \wedge \dd{\bar{w}} = - \sum_{j=1}^{n}  \oint_{\partial D_{j}^{\epsilon}(\varepsilon)} |\partial_w \varphi|^2 \left(\dot{F}^i(w)-\dot{F}^i(w_j)\right) \dd{\bar{w}},
		\end{equation}
		where the boundaries $\partial D_{j}^{\epsilon}(\varepsilon)$ are oriented in a manner that is counter to the orientation of $\partial\singfund_{\epsilon}$. These boundaries also oriented as a boundary of $D_{j}^{\epsilon}(\varepsilon)$. Moreover, the contribution from the variation of integral domain $C_{j}^{\varepsilon}$ vanishes since $\partial C_{j}^{\epsilon} = \emptyset$. The differentiation under the integral sign repeats the calculations done in the proof of \cite[Theorem~1]{1988SbMat..60..297Z} almost word-for-word. The only change is that the integration domain has now changed to $\SchottkyFund_{\epsilon}$.  Accordingly,
		\begin{equation*}\label{variationISchottky}
		\begin{split}
		\dot{I}^i_{\epsilon} &= \iint_{\singfund_{\epsilon}} \left[(\partial_w \dot{\varphi}^i + \partial_w^2 \varphi \dot{F}^i) \partial_{\bar{w}} \varphi + (\partial_{\bar{w}} \dot{\varphi}^i  + \partial_w \partial_{\bar{w}} \varphi  \dot{F}^i) \partial_w \varphi + |\partial_w\varphi|^2  \partial_w \dot{F}^i \right] \dd{w} \wedge \dd{\bar{w}}\\
		& - \sum_{j=1}^{n}  \oint_{\partial D_{j}^{\epsilon}} |\partial_w \varphi|^2 \left(\dot{F}^i(w)-\dot{F}^i(w_j)\right) \dd{\bar{w}} - 2\sum_{k=2}^g \oint_{C_k}( \dot{\varphi}^i + \partial_{w} \varphi \dot{F}^i)\frac{\overline{L_k''}}{\overline{L_k'}} \dd{\bar{w}}\\
		&-\sum_{k=2}^g \oint_{C_k} \left[(\partial_{w}\dot{\varphi}^i+\partial_{w}^2 \varphi \dot{F}^i) \log|L_k'|^2 \dd{w} + \partial_{w}\varphi \frac{(\dot{L}_k^i)' + L_k''\dot{F}^i}{L'_k} \dd{w}\right.\\
		&\hspace{3cm} + \partial_{w}\varphi \log|L'_k|^2 \partial_{w} \dot{F}^i \dd{w} + \partial_{w}\varphi \log|L'_k|^2 M_i \dd{\bar{w}}\Bigg]\\
		& -\sum_{k=2}^g \oint_{C_k} \left[(\partial_{\bar{w}}\dot{\varphi}^i +\partial_{w} \partial_{\bar{w}} \varphi \dot{F}^i) \log|L_k'|^2 + \partial_{w}\varphi \frac{(\dot{L}_k^i)'  + L_k''\dot{F}^i}{L'_k} \right]\dd{\bar{w}}\\
		& + \sum_{k=2}^g \oint_{C_k} \frac{(\dot{L}_k^i)'  + L_k''\dot{F}^i}{L'_k} \frac{\overline{L_k''}}{\overline{L_k'}} \dd{\bar{w}} + 8 \pi \sqrt{-1} \sum_{k=2}^g  \frac{\dot{l}_k^i}{l_k}\\
		&  -\sum_{j=1}^{n_e} \left(1-\frac{1}{m_j}\right) \oint_{C_{j}^{\epsilon}} \partial_{w}\dot{F}^{i} \left(\frac{d\bar{w}}{\bar{w}-\bar{w}_j}-\frac{dw}{w-w_j}\right)+\Sorder{1},
		\end{split}
		\end{equation*}
		where we have used Eq.\eqref{extra1} and the fact that $F^{\varepsilon \mu_i}$ and $L_k^{\varepsilon \mu_i}$ are holomorphic in the variable $\varepsilon \in \cmpx$. As in the proof of Theorem~\ref{theorem1}, we can use Lemma~\ref{varp} and the equality $\partial_{\bar{w}} \dot{F}^i = M_i$ to rewrite the above equation as
		\begin{equation}\label{variationISchottky2}
		\begin{split}
		\dot{I}^i_{\epsilon} &= \iint_{\singfund_{\epsilon}} \left[(-\partial_w \varphi \partial_w \dot{F}^i - \partial_w^2 \dot{F}^i) \partial_{\bar{w}} \varphi + (-\partial_{w}\varphi \partial_{\bar{w}} \dot{F}^i  - \partial_w \partial_{\bar{w}} \dot{F}^i) \partial_w \varphi + |\partial_w\varphi|^2  \partial_w \dot{F}^i \right] \dd{w} \wedge \dd{\bar{w}}\\
		& - \sum_{j=1}^{n}  \oint_{\partial D_{j}^{\epsilon}} |\partial_w \varphi|^2 \left(\dot{F}^i(w)-\dot{F}^i(w_j)\right) \dd{\bar{w}} + 2\sum_{k=2}^g \oint_{C_k} \partial_{w} \dot{F}^i \frac{\overline{L_k''}}{\overline{L_k'}} \dd{\bar{w}}\\
		&-\sum_{k=2}^g \oint_{C_k} \left[(-\partial_w \varphi \partial_w \dot{F}^i - \partial_w^2 \dot{F}^i) \log|L_k'|^2 \dd{w} + \partial_{w}\varphi \frac{(\dot{L}_k^i)' + L_k''\dot{F}^i}{L'_k} \dd{w}\right.\\
		&\hspace{3.7cm} + \partial_{w}\varphi \log|L'_k|^2 \partial_{w} \dot{F}^i \dd{w} + \partial_{w} \partial_{\bar{w}} \dot{F}^i \log|L'_k|^2 \dd{\bar{w}}\Bigg]\\
		& - \sum_{k=2}^g \oint_{C_k} \left(\partial_{w}\varphi-\frac{\overline{L_k''}}{\overline{L_k'}} \right)\frac{(\dot{L}_k^i)'  + L_k''\dot{F}^i}{L'_k} \dd{\bar{w}} + 8 \pi \sqrt{-1} \sum_{k=2}^g  \frac{\dot{l}_k^i}{l_k}\\
		& -\sum_{j=1}^{n_e} \left(1-\frac{1}{m_j}\right) \oint_{C_{j}^{\epsilon}} \partial_{w}\dot{F}^{i} \left(\frac{d\bar{w}}{\bar{w}-\bar{w}_j}-\frac{dw}{w-w_j}\right)+\Sorder{1}.
		\end{split}
		\end{equation}
		Let us first compute the integral over $\singfund_{\epsilon}$:
		\begin{equation*}
		\begin{split}
		&\iint_{\singfund_{\epsilon}} \left[(-\partial_w \varphi \partial_w \dot{F}^i - \partial_w^2 \dot{F}^i) \partial_{\bar{w}} \varphi + (-\partial_{w}\varphi \partial_{\bar{w}} \dot{F}^i  - \partial_w \partial_{\bar{w}} \dot{F}^i) \partial_w \varphi + |\partial_w\varphi|^2  \partial_w \dot{F}^i \right] \dd{w} \wedge \dd{\bar{w}}\\
		&=\iint_{\singfund_{\epsilon}} \left[\big(2\partial_{w}^2 \varphi - (\partial_{w}\varphi)^2\big)\partial_{\bar{w}}\dot{F}^i - 2\pdv{w}(\partial_{w}\varphi \partial_{\bar{w}}\dot{F}^i)\right.\\ & \hspace{4.35cm} \left. + \pdv{\bar{w}}(\partial_{w}\varphi \partial_{w}\dot{F}^i) - \pdv{w}(\partial_{\bar{w}}\varphi \partial_{w}\dot{F}^i)\right] \dd{w} \wedge \dd{\bar{w}}\\
		& = 2 \iint_{\singfund_{\epsilon}} T_{\varphi} M_i \dd{w} \wedge \dd{\bar{w}} \underbrace{-2 \int_{\partial\singfund_{\epsilon}}\partial_{w}\varphi \partial_{\bar{w}}\dot{F}^i \dd{\bar{w}}}_{I_1} \underbrace{-\int_{\partial\singfund_{\epsilon}}\partial_{w}\varphi \partial_{w}\dot{F}^i \dd{w}}_{I_2}\underbrace{- \int_{\partial\singfund_{\epsilon}}\partial_{\bar{w}}\varphi \partial_{w} \dot{F}^i \dd{\bar{w}}}_{I_3},
		\end{split}
		\end{equation*}
		where in the last line, we have used the definition of the Energy-Momentum $T_{\varphi}  = \text{Sch}\left(J^{-1};w\right) = \partial_w^2 \varphi - \frac{1}{2} (\partial_w \varphi)^2$. In order to compute the integrals $I_1$, $I_2$, and $I_3$, we need some useful identities. From the equality
		$F^{\varepsilon \mu_i} \circ L_k = L_k^{\varepsilon \mu_i} \circ F^{\varepsilon \mu_i} $ for $k=1,\dots,g$
		one can see that
		\begin{equation*}
		\left\{
		\begin{split}
		&\dot{F}^i \circ L_k = \dot{F}^i L'_k + \dot{L}_k^i,\\ \\
		&\partial_{w}\dot{F}^i \circ L_k \, L'_k = \partial_{w}\dot{F}^i L'_k  + \dot{F}^i L''_k + \dot{L}_k^i,\\ \\
		& \partial_{\bar{w}}\dot{F}^i \circ L_k \, \overline{L'_k} = \partial_{\bar{w}}\dot{F}^i L'_k ,
		\end{split}
		\right.
		\end{equation*}
		Moreover, from Eq.\eqref{Lonphi} one has
		\begin{equation*}
		\partial_{w} \varphi \circ L_k \, L'_k + \frac{L''_k}{L'_k} = \partial_{w}\varphi, \qquad \partial_{\bar{w}} \varphi \circ L_k \, \overline{L'_k} + \frac{\overline{L''_k}}{\overline{L'_k}} = \partial_{\bar{w}}\varphi.
		\end{equation*}
		Now, recalling that $\partial\singfund_{\epsilon} = \left[\bigcup\limits_{k=1}^g (C_k \cup C'_k)\right] \cup \left[\bigcup\limits_{i=1}^{n+l} C_i^{\epsilon}\right]$ where $C'_k = - L_k(C_k)$, together with implementing some of the above identities we get
		\begin{equation*}
		\begin{split}
		I_1 &= -2 \int_{\partial\singfund_{\epsilon}}\partial_{w}\varphi \partial_{\bar{w}}\dot{F}^i \dd{\bar{w}}\\
		&= -2 \sum_{k=2}^g \oint_{C_k}\partial_{w}\varphi \partial_{\bar{w}}\dot{F}^i \dd{\bar{w}} -2 \sum_{j=1}^{n_e} \oint_{C^{\epsilon}_j}\partial_{w}\varphi \partial_{\bar{w}}\dot{F}^i \dd{\bar{w}} -2 \sum_{j=1}^{n_p} \oint_{C^{\epsilon}_{n_e+j}}\partial_{w}\varphi \partial_{\bar{w}}\dot{F}^i \dd{\bar{w}}\\
		&= -2 \sum_{k=2}^g \oint_{C_k}\partial_{\bar{w}}\dot{F}^i \frac{L''_k}{L'_k} \dd{\bar{w}} -2 \sum_{j=1}^{n_e} \oint_{C^{\epsilon}_j}\partial_{w}\varphi M_i \dd{\bar{w}} -2 \sum_{j=1}^{n_p} \oint_{C^{\epsilon}_{n_e+j}}\partial_{w}\varphi M_i \dd{\bar{w}},
		\end{split}
		\end{equation*} 
		which it can be simplified more by using the asymptotic expansions \eqref{MAsymptotic}
		\begin{equation*}
		\begin{split}
		&I_1 = 2 \sum_{k=2}^g \oint_{C_k}\left[\left(\frac{L''_k}{L'_k} \right)' \dot{F}^i + \frac{L''_k}{L'_k} \partial_{w}\dot{F}^i \right]\dd{w} \\
		&-2 \sum_{j=1}^{n_e} \oint_{C^{\epsilon}_{j}} \left( - \frac{1-\frac{1}{m_j}}{w-w_{j}}+ \frac{c_{j}}{1-\frac{1}{m_j}} + \dotsm \right) \left(\frac{\bar{q}^{(j)}_1}{4 \bar{J}^{(j)}_1} \cdot |w - w_{j}|^{1-\frac{2}{m_j}} + \dotsm \right) \dd{\bar{w}} \\
		&-2 \sum_{j=1}^{n_p-1} \oint_{C^{\epsilon}_{n_e+j}} \left( -\frac{1}{w-w_{n_e+j}} \left(1+\frac{1}{\log\left|\frac{w-w_{n_e+j}}{J_{1}^{(n_e+j)}}\right|}\right) + \dotsm \right)\times \\
		&\hspace{2cm}\times \left(  - \frac{|\delta_{n_e+j}|^2 \bar{q}_1^{(n_e+j)}}{4 \pi^2 \bar{J}_1^{(n_e+j)}} \cdot |w-w_{n_e+j}| \log^2 |w-w_{n_e+j}|+ \dotsm \right)  \dd{\bar{w}} \\
		&-2 \oint_{C_n^{\epsilon}} \left(-\frac{1}{w} \left(1+\frac{1}{\log\left|\frac{w}{J_{-1}^{(n)}}\right|}\right) - \frac{c_n}{w^2} + \dotsm \right) \left( - \frac{|\delta_n|^2 \bar{q}_1^{(n)} \bar{J}_{-1}^{(n)}}{4 \pi^2} \cdot \frac{\log^2|w|}{|w|} + \dotsm \right)  \dd{\bar{w}}\\
		& = 2 \sum_{k=2}^g \oint_{C_k}\left[\left(\frac{L''_k}{L'_k} \right)' \dot{F}^i + \frac{L''_k}{L'_k} \partial_{w}\dot{F}^i \right]\dd{w}+ \Sorder{1};
		\end{split}
		\end{equation*}
		Moreover,
		\begin{equation*}
		\begin{split}
		I_2 &=-\int_{\partial\singfund_{\epsilon}}\partial_{w}\varphi \partial_{w}\dot{F}^i \dd{w} = - \sum_{k=2}^g \oint_{C_k} \partial_{w}\varphi \partial_{w}\dot{F}^i \dd{w} - \sum_{j=1}^{n} \oint_{C_j^{\epsilon}} \partial_{w}\varphi \partial_{w}\dot{F}^i \dd{w}\\
		& = - \sum_{k=2}^g \oint_{C_k} \left[\frac{L''_k}{L'_k} \partial_{w}\dot{F}^i - \left(\partial_{w}\varphi - \frac{L''_k}{L'_k}\right) \frac{(\dot{L}_k^i)'  + L_k''\dot{F}^i}{L'_k} \right]\dd{w} - \sum_{j=1}^{n} \oint_{C_j^{\epsilon}} \partial_{w}\varphi \partial_{w}\dot{F}^i \dd{w};
		\end{split}
		\end{equation*}
		Furthermore,
		\begin{equation*}
		\begin{split}
		I_3 &= - \int_{\partial\singfund_{\epsilon}}\partial_{\bar{w}}\varphi \partial_{w} \dot{F}^i \dd{\bar{w}} = - \sum_{k=2}^g \oint_{C_k} \partial_{\bar{w}}\varphi \partial_{w} \dot{F}^i \dd{\bar{w}} - \sum_{j=1}^{n} \oint_{C_j^{\epsilon}} \partial_{\bar{w}}\varphi \partial_{w} \dot{F}^i \dd{\bar{w}}\\
		& = - \sum_{k=2}^g \oint_{C_k} \left[\frac{\overline{L''_k}}{\overline{L'_k}} \partial_{w}\dot{F}^i - \left(\partial_{\bar{w}}\varphi - \frac{\overline{L''_k}}{\overline{L'_k}}\right) \frac{(\dot{L}_k^i)'  + L_k''\dot{F}^i}{L'_k} \right]\dd{\bar{w}}  - \sum_{j=1}^{n} \oint_{C_j^{\epsilon}} \partial_{\bar{w}}\varphi \partial_{w} \dot{F}^i \dd{\bar{w}};
		\end{split}
		\end{equation*}
		Using the fact that 
		\begin{equation*}
		- \sum_{j=1}^{n} \oint_{C_j^{\epsilon}} \partial_{w}\varphi \partial_{w} \dot{F}^i \dd{w}- \sum_{j=1}^{n} \oint_{C_j^{\epsilon}} \partial_{\bar{w}}\varphi \partial_{w} \dot{F}^i \dd{\bar{w}} = 	- \sum_{j=1}^{n} \oint_{C_j^{\epsilon}} \partial_{w} \dot{F}^i \dd{\varphi} = \Sorder{1} \quad \text{as} \quad \epsilon \to 0,
		\end{equation*}
		we get
		\begin{equation*}
		\begin{split}
		I_1 + I_2 + I_3 & = 2 \sum_{k=2}^g \oint_{C_k}\left[\left(\frac{L''_k}{L'_k} \right)' \dot{F}^i + \frac{L''_k}{L'_k} \partial_{w}\dot{F}^i \right]\dd{w}\\
		& - \sum_{k=2}^g \oint_{C_k} \left[\frac{L''_k}{L'_k} \partial_{w}\dot{F}^i - \left(\partial_{w}\varphi - \frac{L''_k}{L'_k}\right) \frac{(\dot{L}_k^i)'  + L_k''\dot{F}^i}{L'_k} \right]\dd{w}\\
		& - \sum_{k=2}^g \oint_{C_k} \left[\frac{\overline{L''_k}}{\overline{L'_k}} \partial_{w}\dot{F}^i - \left(\partial_{\bar{w}}\varphi - \frac{\overline{L''_k}}{\overline{L'_k}}\right) \frac{(\dot{L}_k^i)'  + L_k''\dot{F}^i}{L'_k} \right]\dd{\bar{w}}  + \Sorder{1}.
		\end{split}
		\end{equation*}
		Substituting the above results in Eq.\eqref{variationISchottky2}, we get
		\begin{equation}\label{variationISchottky4}
		\begin{split}
		\dot{I}^i_{\epsilon} &= 2 \iint_{\singfund_{\epsilon}} T_{\varphi} M_i \dd{w} \wedge \dd{\bar{w}} - \sum_{j=1}^{n}  \oint_{C_{j}^{\epsilon}} |\partial_w \varphi|^2 \left(\dot{F}^i(w)-\dot{F}^i(w_j)\right) \dd{\bar{w}}\\
		& +\sum_{k=2}^g \oint_{C_k} \left[\partial_{w} \dot{F}^i \left(\frac{L''_k}{L'_k} \dd{w} + \frac{\overline{L''_k}}{\overline{L'_k}} \dd{\bar{w}}\right) + \log|L'_k|^2 \left(\partial_{w}^2 \dot{F}^i \dd{w}+ \partial_{\bar{w}}\partial_{w} \dot{F}^i \dd{\bar{w}}\right)\right] \\
		& +\sum_{k=2}^g \oint_{C_k} \dot{F}^i \left[2\left(\frac{L''_k}{L'_k}\right)' - \left(\frac{L''_k}{L'_k}\right)^2\right] \dd{w} \\
		& +\sum_{k=2}^g \oint_{C_k} \frac{(\dot{L}_k^i)'  + L_k''}{(L'_k)^2} \dd{w} + 8 \pi \sqrt{-1} \sum_{k=2}^g  \frac{\dot{l}_k^i}{l_k}\\
		& -\sum_{j=1}^{n_e} \left(1-\frac{1}{m_j}\right) \oint_{C_{j}^{\epsilon}} \partial_{w}\dot{F}^{i} \left(\frac{d\bar{w}}{\bar{w}-\bar{w}_j}-\frac{dw}{w-w_j}\right)+\Sorder{1}.
		\end{split}
		\end{equation}
		Using the fact that the second line in \eqref{variationISchottky4} can be written as
		\begin{equation*}
		\begin{split}
		&\sum_{k=2}^g \oint_{C_k} \left[\partial_{w} \dot{F}^i \left(\frac{L''_k}{L'_k} \dd{w} + \frac{\overline{L''_k}}{\overline{L'_k}} \dd{\bar{w}}\right) + \log|L'_k|^2 \left(\partial_{w}^2 \dot{F}^i \dd{w}+ \partial_{\bar{w}}\partial_{w} \dot{F}^i \dd{\bar{w}}\right)\right] \\
		& = \sum_{k=2}^g \oint_{C_k} \dd{\left(\partial_{w} \dot{F}^i \log|L'_k|^2\right)},
		\end{split}
		\end{equation*}
		one can see that this sum vanishes; the third line in \eqref{variationISchottky4} is also equal to $0$, since
		\begin{equation*}
		\sum_{k=2}^g \oint_{C_k} \dot{F}^i \left[2\left(\frac{L''_k}{L'_k}\right)' - \left(\frac{L''_k}{L'_k}\right)^2\right] \dd{w} = 2 \sum_{k=2}^g \oint_{C_k} \dot{F}^i \hspace{1mm}\text{Sch}\left(L_k;w\right)\dd{w} = 0.
		\end{equation*}
		As for the fourth line of Eq.\eqref{variationISchottky4}, since the contour $C_k$ encircles a zero of the function $L'_k$, i.e. the point $L_k^{-1}(\infty)$, we get from the Cauchy formula that 
		\begin{equation*}
		\oint_{C_k} \frac{(\dot{L}_k^i)'  + L_k''}{(L'_k)^2} \dd{w} = - 8 \pi \sqrt{-1} \frac{\dot{l}_k^i}{l_k} \quad \text{for} \quad k=2,\dots,g;
		\end{equation*}
		therefore, the fourth line of Eq.\eqref{variationISchottky4} vanishes as well. Putting everything together, one has
		\begin{equation}\label{variationISchottky5}
		\begin{split}
		\dot{I}^i_{\epsilon} &= 2 \iint_{\singfund_{\epsilon}} T_{\varphi} M_i \dd{w} \wedge \dd{\bar{w}} - \sum_{j=1}^{n}  \oint_{C_{j}^{\epsilon}} |\partial_w \varphi|^2 \left(\dot{F}^i(w)-\dot{F}^i(w_j)\right) \dd{\bar{w}}\\
		&\hspace{1cm} -\sum_{j=1}^{n_e} \left(1-\frac{1}{m_j}\right) \oint_{C_{j}^{\epsilon}} \partial_{w}\dot{F}^{i} \left(\frac{d\bar{w}}{\bar{w}-\bar{w}_j}-\frac{dw}{w-w_j}\right)+\Sorder{1}.
		\end{split}
		\end{equation}
		Using Eq.\eqref{LiederivativeSdef}, the limit
		\begin{equation*}
		\lim_{\epsilon \to 0} \iint_{\singfund_{\epsilon}} T_{\varphi} M_i \dd{w} \wedge \dd{\bar{w}} \overset{\eqref{Rquaddiffexpansion}}{=} -2 \sqrt{-1} \mathrm{b}_i,
		\end{equation*}
		and
		\begin{equation*}\label{variationISchottky6}
		\begin{split}
		& \lim_{\epsilon \to 0}  \sum_{j=1}^{n}  \oint_{C_{j}^{\epsilon}} |\partial_w \varphi|^2 \left(\dot{F}^i(w)-\dot{F}^i(w_j)\right) \dd{\bar{w}}\\
		& \overset{\ref{lemma:asymptotics}}{=}\lim_{\epsilon \to 0}  \sum_{j=1}^{n_e} \left(1-\frac{1}{m_j}\right)^2 \oint_{C_{j}^{\epsilon}} \frac{\dot{F}^i(w)-\dot{F}^i(w_{j})}{|w-w_{j}|^2} \dd{\bar{w}}\\
		& + \lim_{\epsilon \to 0}  \sum_{j=1}^{n_p-1} \oint_{C_{j}^{\epsilon}} \left(\frac{\dot{F}^i(w)-\dot{F}^i(w_j)}{|w-w_j|^2} + \frac{2\big(\dot{F}^i(w)-\dot{F}^i(w_j)\big)}{|w-w_j|^2 \log|w-w_j|}\right) \dd{\bar{w}}\\
		&+ \lim_{\epsilon \to 0}   \oint_{C_{n}^{\epsilon}} \left(\frac{\dot{F}^i(w)-\dot{F}^i(\infty)}{|w|^2} + \frac{2\big(\dot{F}^i(w)-\dot{F}^i(\infty)\big)}{|w|^2 \log|w|}\right) \dd{\bar{w}}\\
		&= 2\pi \sqrt{-1} \sum_{j=1}^{n} \left(1-\frac{1}{m_j}\right)^2 \partial_{w}\dot{F}^i(w_j) \quad \text{with} \quad m_j=\infty \quad \text{for} \quad j=n_e+1,\dots,n,
		\end{split}
		\end{equation*}
		Accordingly, the Eq.\eqref{variationISchottky5} is simplified to
		\begin{equation}\label{extra7}
		\begin{split}
		\dot{I}^i_{\epsilon} = -4 \sqrt{-1} \mathrm{b}_i -2\pi \sqrt{-1} \sum_{j=1}^{n} \left(1-\frac{1}{m_j}\right)^2 \partial_{w}\dot{F}^i(w_j)-4\pi \sqrt{-1}\sum_{j=1}^{n_e} \left(1-\frac{1}{m_j}\right)  \partial_{w}\dot{F}^{i}(w_j)
		\end{split}
		\end{equation}
		Finally, by putting everything together from \eqref{LiederivativeSdef}, \eqref{extra7}  and \eqref{Sct}, we get
		\begin{equation*}
		\begin{split}
		&\Lie_{\mu_i}S_{\boldsymbol{m}} = 2 \mathrm{b}_i + \pi \sum_{j=1}^{n} h_j \partial_{w}\dot{F}^i(w_j)  = -2 \pi c_i + 2 \sum_{j=1}^{n} h_j \left( \mathscr{E}_j , M_i\right) + \pi \sum_{j=1}^{n} h_j \partial_{w} \dot{F}^i(w_{j}) \\
		&\overset{\eqref{EMinnerprod}}{=}  -2 \pi c_i .
		\end{split}
		\end{equation*}
		\item
		The proof of part $(iii)$ follows readily from those of parts $(i)$ and $(ii)$.
	\end{itemize}
\end{proof}

\begin{maintheorem}\label{mainthrm2}
	The following statements are also true:
	\begin{enumerate}[(i)]
		\item The first Chern form of the Hermitian holomorphic line bundle $(\linebundle_{i},\Lponetial_{i}^{m_i})$  is given by 
		\begin{equation*}
		\chern{\linebundle_{i}}{\Lponetial_{i}^{m_i}} = \frac{m_i}{2\pi} \omega^{\text{ell}}_{\text{TZ},i}, \qquad i=1,\dots,n_e.
		\end{equation*}
		\item The first Chern form of the Hermitian holomorphic $\mathbb{Q}$-line bundle $(\linebundle,\exp[S_{\boldsymbol{m}}/\pi])$ is given by
		\begin{equation*}
		\chern{\linebundle}{exp[S_{\boldsymbol{m}}/\pi]} = \frac{1}{\pi^2} \omega_{\text{WP}}.
		\end{equation*}
		\item The function $\Gpotential_{\boldsymbol{m}} = S_{\boldsymbol{m}}-\pi \log \mathsf{H}$ satisfies 
		\begin{equation*}
		-\hspace{.5mm}\bar{\partial} \partial \Gpotential_{\boldsymbol{m}} = 2\sqrt{-1}\left(\omega_{\text{WP}}-\frac{4\pi^2}{3} \omega^{\text{\text{cusp}}}_{\text{TZ}}- \frac{\pi}{2} \sum_{j=1}^{n_e} m_j h_j \hspace{.5mm}\omega^{\text{\text{ell}}}_{\text{TZ},j}\right)
		\end{equation*}
	\end{enumerate}
\end{maintheorem}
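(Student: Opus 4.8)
The plan is to derive all three statements from the canonical-connection formulas already established in Theorem~\ref{mainthrm1} by applying one further $\bar\partial$-derivative, using throughout the normalization that the first Chern form of a Hermitian holomorphic ($\mathbb{Q}$-)line bundle $(\mathscr{L},h)$ is $\mathsf{c}_1(\mathscr{L},h)=\tfrac{\sqrt{-1}}{2\pi}\bar\partial\partial\log h=-\tfrac{\sqrt{-1}}{2\pi}\partial\bar\partial\log h$, exactly as in Lemma~\ref{lemma:SmHermiatiannmetricoverlambdanogenus} and the corollaries of the genus-zero section. With this convention, parts $(i)$ and $(ii)$ reduce to computing the $(1,1)$-forms $\bar\partial\partial\log\Lponetial_i^{m_i}$ and $\bar\partial\partial S_{\boldsymbol m}$ on $\schottky_{g,n}(\boldsymbol m)$, while part $(iii)$ is then a purely algebraic recombination.

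For part $(i)$ I would establish the Schottky-space analog of the local-potential Lemma~\ref{lemma:localpotentialsonmoduliwnogenus}, namely $\bar\partial\partial\log\Lponetial_i=-\sqrt{-1}\,\omega^{\text{ell}}_{\text{TZ},i}$ for $i=1,\dots,n_e$, together with the cusp counterparts $\bar\partial\partial\log\Lponetial_j=-\tfrac{8\pi\sqrt{-1}}{3}\,\omega^{\text{cusp}}_{\text{TZ},j}$ for $j=n_e+1,\dots,n$. Concretely, I would apply $\partial_{\bar w_k}$ to the first-variation formula of Lemma~\ref{lemma:widerivativeloghi} (which gives $\partial_{w_j}\log\Lponetial_i=\tfrac{1}{m_i}\partial_w\dot F^j(w_i)$) and evaluate the resulting second variation using Wolpert's formula~\eqref{WolpertSecVar} and Ahlfors' Lemma~\ref{Lemma:Ahlfors}, mirroring the three cases of the genus-zero proof but now with the composite covering $J:\UHP\to\singrigon$ and the commutative diagram~\eqref{diagramFmuschottky}. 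The elliptic and cuspidal TZ inner products enter through the boundary limits $\lim_{w\to w_i} f_{\mu_j\overline{\mu_k}}\circ J^{-1}(w)$ recorded in Section~\ref{KahlerMetrics} (the elliptic limit of \cite{ZT_2018} and the cuspidal limit~\eqref{constcuspmetric}). Substituting $\bar\partial\partial\log\Lponetial_i^{m_i}=m_i\bar\partial\partial\log\Lponetial_i=-m_i\sqrt{-1}\,\omega^{\text{ell}}_{\text{TZ},i}$ into the Chern-form normalization yields
\begin{equation*}
\chern{\linebundle_i}{\Lponetial_i^{m_i}}=\frac{m_i}{2\pi}\,\omega^{\text{ell}}_{\text{TZ},i}.
\end{equation*}

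For part $(ii)$ I would first record that Theorem~\ref{mainthrm1}$(ii)$, together with $\mathsf R_0=-\sum_i c_i P_i$, gives $\partial S_{\boldsymbol m}=-2\pi\sum_i c_i\,\dd w_i$ on $\schottky_{g,n}(\boldsymbol m)$; it then remains to prove the Schottky analog of the Takhtajan--Zograf theorem, $\partial_{\bar w_j}c_i=\tfrac{1}{2\pi}\langle\pdv{w_i},\pdv{w_j}\rangle_{\text{WP}}$. I would differentiate the Schwarzian identity attached to the diagram~\eqref{diagramFmuschottky} (relating $\text{Sch}(J^{-1}_{\varepsilon\mu_j})$, $\text{Sch}(F^{\varepsilon\mu_j})$, $\text{Sch}(f^{\varepsilon\mu_j})$ and $\text{Sch}(J^{-1})$) with respect to $\bar\varepsilon$ at $\varepsilon=0$: expansion~\eqref{SchottkyEnergyMomentum} turns the left side into $\sum_i(\partial_{\bar\varepsilon}c_i^{\varepsilon\mu_j})\,P_i$, while the Ahlfors formula $\partial_{\bar\varepsilon}\partial_z^3 f^{\varepsilon\mu_j}\big|_{0}=-\tfrac12 q_j$ collapses the right side to $-\tfrac12 Q_j$. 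Biorthogonality of $\{P_i\}$ and $\{M_j\}$ under the pairing~\eqref{Schottkypairing}, with Lemma~\ref{Lemma:covering}, then yields the claim, whence $\bar\partial\partial S_{\boldsymbol m}=-2\sqrt{-1}\,\omega_{\text{WP}}$ and $\chern{\linebundle}{\exp[S_{\boldsymbol m}/\pi]}=\tfrac{1}{\pi^2}\omega_{\text{WP}}$.

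Finally, part $(iii)$ follows by writing $\log\mathsf H=\sum_{i=1}^{n_e}m_ih_i\log\Lponetial_i+\sum_{i=n_e+1}^{n}\log\Lponetial_i$ and combining the $(1,1)$-forms from $(i)$, its cusp counterpart, and $(ii)$:
\begin{equation*}
-\bar\partial\partial\Gpotential_{\boldsymbol m}=-\bar\partial\partial S_{\boldsymbol m}+\pi\,\bar\partial\partial\log\mathsf H
=2\sqrt{-1}\left(\omega_{\text{WP}}-\tfrac{4\pi^2}{3}\,\omega^{\text{cusp}}_{\text{TZ}}-\tfrac{\pi}{2}\sum_{j=1}^{n_e}m_jh_j\,\omega^{\text{ell}}_{\text{TZ},j}\right),
\end{equation*}
where the coefficients match after factoring out $2\sqrt{-1}$. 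I expect the main obstacle to lie in the second-variation analysis of part $(i)$: one must justify interchanging the boundary limit $w\to w_i$ with the $\varepsilon$-differentiations and control the asymptotics of $\dot F^j$ and $\varphi$ near each elliptic fixed point and cusp uniformly on $\schottky_{g,n}(\boldsymbol m)$. A point meriting care is that, since the Schottky fundamental domain is bounded and $\infty\notin\Omega$, there is no distinguished cusp ``at infinity''; unlike the genus-zero case all cusps are interior and enter $\log\mathsf H$ with the \emph{same} sign, which is exactly what makes the uniform cuspidal coefficient $-\tfrac{4\pi^2}{3}\,\omega^{\text{cusp}}_{\text{TZ}}$ emerge.
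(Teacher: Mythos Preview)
Your proposal is correct and follows essentially the same route as the paper's proof. The paper's argument is terse: for $(i)$ it simply notes that $\chern{\linebundle_i}{\Lponetial_i^{m_i}}=\tfrac{\sqrt{-1}}{2\pi}\bar\partial\partial\log\Lponetial_i^{m_i}$ and invokes the genus-zero local-potential Lemma verbatim; for $(ii)$ it cites Theorem~\ref{mainthrm1} together with the same Schwarzian/Ahlfors argument as in the genus-zero Theorem~\eqref{theorem2}; and $(iii)$ is recorded as an immediate combination of $(i)$ and $(ii)$ --- exactly the plan you outline, including your observation about the absence of a distinguished cusp at infinity in the Schottky domain.
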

\begin{proof}
	Since
	\begin{equation*}
	\chern{\linebundle_i}{\Lponetial_i^{m_i}} =\frac{\sqrt{-1}}{2\pi} \bar{\partial}\partial \log \Lponetial_{i}^{m_i},
	\end{equation*}	
	the proof of part \emph{(i)} is exactly the same as that of Lemma \ref{lemma:localpotentialsonmoduliwnogenus}. The proof of part \emph{(ii)} can be obtained by Theorem \ref{mainthrm1} and accordingly following similar analysis with the Theorem \ref{theorem2}. The last part immediately follows from \emph{(i)-(ii)}.
\end{proof}
\section{Discussion and some Future Directions}\label{Discussion}
This paper explores the semi-classical limit of Liouville field theory on Riemann orbisurfaces of finite conformal type $(g>1,n)$. This is accomplished through the use of Schottky global coordinates. This study can be seen as an extension of a previous work by Park, Takhtajan, and Teo \cite{park2015potentials} in which they examined the classical Liouville action on the Schottky space of compact Riemann surfaces with only punctures. In this paper, we include the contributions of orbifold points to the Liouville action. 

In Section~\ref{gg1}, we noted that the Liouville action is independent from a specific choice of fundamental domain. Corollary~\ref{cor:invariantaction} already demonstrates that $\Gpotential_{\boldsymbol{m}} = S_{\boldsymbol{m}} - \pi \log \mathsf{H}$ is not reliant on the choice of a fundamental domain for the Schottky group  $\Sigma$; however, this can also be verified using holography. The holographic principle for compact Riemann surfaces $X$ posits that the classical Liouville action on such surfaces is equal to the renormalized volume of a hyperbolic 3-manifold $M$, where that surface is its conformal boundary.  \cite{witten1998anti,Krasnov:2000zq,Takhtajan:2002cc}, extended this principle to punctured (for both Schottky and quasi-Fuchsian cases), and \cite{Teo_2018,chandra2022semiclassical} to orbifold Riemann surfaces (for only the quasi-Fuchsian case). However, this principle is not yet proven for 3-dimensional handlebody orbifolds with Riemann orbisurfaces as their conformal boundary. Thus, in an upcoming paper \cite{ANT23}, we aim to demonstrate that the Liouville action $\Gpotential_{\boldsymbol{m}}$  is linked to the renormalized volume of the corresponding hyperbolic 3-dimensional manifold $M$ up to some constants that do not rely on moduli parameters. This would also imply that the Liouville action is independent of the fundamental domain chosen. 

Although the presented proofs only apply to orbifold Riemann surfaces (i.e. conical points of angle $2 \pi/m_i$), we believe that most of our findings can be extended to (hyperbolic) conical Riemann surfaces with genus $g>1$. Specifically, for weighted punctured Riemann surfaces,  the modified classical Liouville action (with Schottky global coordinates) is expected to be accurately given by $\Gpotential_{\boldsymbol{\alpha}}$ with $\boldsymbol{\alpha}= (\alpha_1,\dots, \alpha_n)$ and the cone angles $2\pi(1-\alpha_i)$ where each factor of $(1-\tfrac{1}{m_i})$ in $\Gpotential_{\boldsymbol{m}}$ is replaced by $\alpha_i$. Let's talk about it in more detail here.

The weighted punctured Riemann surface $(X,\brdiv)$ is a compact Riemann surface $X$ together with an $\mathbb{R}$-divisor $\brdiv= \sum_{i=1}^{n} \alpha_i \,x_i$ such that the weights $0< \alpha_i \leq 1$ are associated to each puncture (or marked point) $x_i$. Hyperbolic metrics on weighted punctured Riemann surfaces have, by definition, conical singularities at the punctures --- for this reason, the pair $(X,\brdiv)$ is also called a Riemann cone-surface (or conical Riemann surface). The existence
and uniqueness of a conical hyperbolic metric with prescribed singularities at
a finite number of points on a Riemann surface is a classical problem that is
closely related (and in special cases, is equivalent) to the famous Uniformization
Problem of Klein and Poincar\'{e}. Let's remember that such metrics have been considered beginning with the work of Picard \cite{picard1893equation}. Starting from the classical results by Kazdan-Warner \cite{kazdan1972curvature,kazdan1974curvature,kazdan1975direct}, the existence and uniqueness of conical hyperbolic metrics in every conformal class on a $(X,\brdiv)$ were proved by McOwen \cite{McOwen-1988} and Troyanov \cite{Troyanov1991PrescribingCO}. The necessary and sufficient condition for the existence of a hyperbolic conical metric according to \cite{McOwen-1988,Troyanov1991PrescribingCO}, is that the statement of the Gauss-Bonnet theorem holds --- in other words, the degree of the log-canonical divisor $\mathcal{K}_X +\brdiv$ should be positive, where $\mathcal{K}_X$ denotes the canonical divisor of $X$. The positivity of the log-canonical divisor implies that $\sum_{i=1}^{n}\alpha_i >2$. In this case, the unique hyperbolic metric $ds^2$
on $X$ is said to be compatible with the divisor $\brdiv$ if $ds^2$ is a hermitian metric
of class $\mathcal{C}^{\infty}$ on the punctured Riemann surface $X_\text{reg} : X \backslash \text{supp}(\brdiv)$ such that if
$u_i$ is a holomorphic local coordinate in a neighborhood $U_i$ of $x_i$, then there
exists a real-analytic function $\varphi_i(u_i,\bar{u}_i)$, which is smooth on $U_i \backslash  \{x_i\}$, and
such that in $U_i$ the metric $ds^2$ is of the form
\begin{equation}\nonumber
ds^2 = \frac{e^{\varphi_i} |d u_i|^2}{|u_i-x_i|^{2\alpha_i}},
\end{equation}
for $0<\alpha_i <1$ and
\begin{equation}\nonumber
ds^2 = \frac{e^{\varphi_i} |d u_i|^2}{-2|u_i-x_i|^2 \log^2 |u_i-x_i|},
\end{equation}
for $\alpha_i =1$. Moreover, Area$(X,ds^2) = \pi \hspace{.5mm}\text{deg} (\mathcal{K}_X + \brdiv) = -\pi \chi(X,\brdiv)$, where $\chi(X,\brdiv) =\chi(X)-\text{deg}(\brdiv)$ is by definition
the Euler-Poincar\'{e} characteristic of the Riemann cone surface $(X,\brdiv)$.
The dependence of these hyperbolic cone metrics on the vector of weights
$\boldsymbol{\alpha}= (\alpha_1, \dots, \alpha_n)$ is characterized by \cite[Proposition 2.2]{Schumacher2008WeilPeterssonGF}. The (unique) conical metrics of constant negative curvature (with fixed weights) induce
new K\"{a}hler structures on the Teichm\"{u}ller spaces of punctured Riemann
surfaces that depend on the cone angles: In \cite{Takhtajan:2001uj}, Takhtajan and Zograf introduced a generalized Weil-Petersson metric, parameterized by the vector of weights
$\boldsymbol{\alpha}= (\alpha_1,\dots,\alpha_n)$, 
on the moduli spaces of n-pointed rational
curves in the context of Liouville actions.
Later, a variation of hyperbolic
conical metrics in holomorphic families was studied by Schumacher and Trapani
in \cite{Schumacher2005VariationOC,Schumacher2008WeilPeterssonGF} (see also \cite{mazzeo2017teichmuller,do2009weil,anagnostou2022volumes}). They showed that it is still possible to introduce "harmonic
Beltrami differentials" with respect to a hyperbolic conical metric, together
with a Kodaira-Spencer map needed for the notion of a Weil-Petersson metric. To be more precise, the
generalized Kodaira-Spencer map derived in \cite{Schumacher2005VariationOC,Schumacher2008WeilPeterssonGF} identifies the tangent
and cotangent spaces to Teichm\"{u}ller space $\teich_{g,n}(\boldsymbol{\alpha})$\footnote{The $\teich_{g,n}(\boldsymbol{\alpha})$
	is complex-analytically isomorphic to the Teichm\"{u}ller space of n-pointed
	stable curves $\teich_{g,n}$ (see e.g. \cite{Schumacher2005VariationOC,Schumacher2008WeilPeterssonGF}).} of Riemann cone-surfaces $(X,\brdiv)$ with appropriately defined
"harmonic Beltrami differentials" and "holomorphic quadratic differentials"
on the Riemann cone-surface. It turned out that for 
$0 < \alpha_i < 1$, 
the generalized Weil-Petersson metric depends
in a smooth monotone way on the weights $\alpha_i$. 
In particular, for $\alpha_i \rightarrow 0$
one recovers the Weil-Petersson metric for non-punctured surfaces, while for $\alpha_i \rightarrow 1$, 
one gets the Weil-Petersson metric for Riemann surfaces with cusps. Note that in the case of general conical singularities, the monodromy group $\Gamma$ of the Fuchsian differential equation is
no longer discrete in $\PSLR$; hence, $\teich_{g,n}(\boldsymbol{\alpha}) \neq \teich(\Gamma)$ (see \cite{Takhtajan:2001uj,takhtajan2014real} and \cite[\textsection{2.2}]{nag1988complex}). 
Moreover, one can  consider the variation of hyperbolic conical metrics
in holomorphic families, studied in \cite{Schumacher2005VariationOC,Schumacher2008WeilPeterssonGF}, and introduce a generalized
Takhtajan-Zograf (TZ) metric. 
In particular, one should study the behavior of the integral kernel of the Resolvant $(\Delta_0+1/2)^{-1}$ (see, e.g. \cite[\textsection{8}]{garbin2019spectral}). Although a conical singularity is simple, its presence has a profound impact on the Laplace
operator. Unlike surfaces with smooth boundaries, the Laplace operator no
longer has a canonical self-adjoint extension. Instead, it has many self-adjoint
extensions, and these need not be equivalent \cite{loya2007zeta}. Probably the most notable
self-adjoint extension of the Laplace-Beltrami operator on a Riemann surface
with conical singularities is given by the so-called Friedrichs extension (see,
e.g. \cite{kalvin2022determinant,kalvin2023determinants}). Using the same method as in \cite{Schumacher2005VariationOC,Schumacher2008WeilPeterssonGF}, one can
show that this generalized TZ metric depends in a smooth monotone way on the weights
$0 <\alpha_i < 1$ and that, for $\alpha_i = 1-1/m_i$ 
with integers $2 \leq m_i <\infty $,
it corresponds to the elliptic TZ metric introduced in \cite{ZT_2018}. One can also construct
a K\"{a}hler potential ($\sim\log \mathsf{H}_{\boldsymbol{\alpha}}$) for these generalized TZ metrics in terms of the solutions of Fuchsian differential equations. To be more precise, K\"{a}hler potentials for generalized TZ metric
should be defined in terms of the coefficients of the expansion \cite[Eq.(9)]{Takhtajan:2001uj}
(alternatively, in terms of the expansion of $\varphi$ as in \cite[Eq. (1.2) and (1.5)]{kalvin2023determinants}). This generalizes the results of
\cite{park2015potentials} to the case of Riemann cone surfaces. Lastly, regularized Liouville action $S_{\boldsymbol{\alpha}}$ should be defined similar to \cite{Takhtajan:2001uj} and the combination $\Gpotential_{\boldsymbol{\boldsymbol{\alpha}}} = S_{\boldsymbol{\alpha}} - \pi \log \mathsf{H}_{\boldsymbol{\alpha}}$ should define a function on   $\schottky_{g,n}(\boldsymbol{\alpha})$ (and $\symmoduli_{g,n}(\boldsymbol{\boldsymbol{\alpha}})$) (see also  \cite[Theorem 1.1]{kalvin2023determinants}). Moreover, the different asymptotic behavior should be derived using the Fuchsian differential equation in analogy with \cite[Lemma 4]{Takhtajan:2001uj}. This asymptotics can be used to derive the first and second variations of $\Gpotential_{\boldsymbol{\boldsymbol{\alpha}}} = S_{\boldsymbol{\alpha}} - \pi \log \mathsf{H}_{\boldsymbol{\alpha}}$. It is worth remembering that, when $g > 1$, the regularized Liouville action $S_{\boldsymbol{\alpha}}$ has to be defined on the Schottky fundamental domain with the singularities removed; in this case, the fibration $\jmath: \schottky_{g,n}(\boldsymbol{\alpha}) \to \schottky_g$ allows us to write $T_{\pi\circ\Phi(0)}^{\ast}\schottky_{g,n}(\boldsymbol{\alpha})$ as $\jmath^{\ast}(T_{\pi\circ\Phi(0)}^{\ast}\schottky_{g}) \oplus T_{\pi\circ\Phi(0)}^{\ast}\confspace{n}{\SchottkyFund}$. This decomposition makes it possible to use the methods of \cite{1988SbMat..60..297Z} on $\jmath^{\ast}(T_{\pi\circ\Phi(0)}^{\ast}\schottky_{g})$ while variations w.r.t. cone points should be carried out using a method similar
to \cite{Takhtajan:2001uj}. Accordingly, most of our findings can be extended to conical Riemann surfaces with genus $g>1$.

Finally, we mention some interpretations related to our results and also future directions:
\begin{itemize}
	\item 
	For the marked Riemann surface with $g>1$, one can define a normalized basis $v_1,\dots,v_g$ of the space of holomorphic 1-forms - abelian differentials of the first kind, and the period matrix $\boldsymbol{\tau}$,\footnote{The $\boldsymbol{\tau}$ is called period matrix.}
	\begin{equation*}
	\int_{\alpha_{k}} v_{k^\prime} =\delta_{k k ^\prime},\hspace{1cm}\boldsymbol{\tau}_{k k^\prime}=\int_{\beta_{k}}v_{k^\prime},
	\end{equation*}
	with
	\begin{equation*}
	\Im \boldsymbol{\tau}_{kk'}=\langle v_{k},v_{k'}\rangle=\frac{i}{2} \int_{\SchottkyFund}v_{k}\hspace{.5mm}\bar{v}_{k'}\hspace{1mm} e^{\phi(w)} dw d\bar{w}.
	\end{equation*}
	By defining also $\det' \Delta_0$ (zeta function regularized determinant of the Laplace operator in the hyperbolic metric $\exp(\varphi)|dw|^2$ acting on functions) and $S_{g} = S_{g,\boldsymbol{m}=0}$ as  function on the Schottky space $\mathfrak{S}_{g}$, Zograf \cite{Zograf_1990} shown that there exists a holomorphic function $\mathfrak{F}_{g}: \mathfrak{S}_{g}\rightarrow \cmpx$,  such that  
	\begin{equation}\label{Zograf}
	\frac{\det^{\prime}\Delta_0}{\det \Im\boldsymbol{\tau}} = c_{g} \, e^{-\frac{S_{g}}{12\pi}} \, |\mathfrak{F}_{g}|^{2},
	\end{equation}	
	with $c_g$ as a constant that depends just on the genus $g$ and 
	\begin{equation}\label{Fg}
	\mathfrak{F_{g}} = \prod_{\{\gamma\}}\prod_{k=0}^{\infty}\left(1- \hspace{1mm}q_{\gamma}^{1+k}\right),
	\end{equation} 
	where $\{\gamma\}$ runs over all the distinct primitive conjugacy classes in $\Gamma$ (i.e., $\gamma \in \Gamma$, that cannot be written as the power of any other element of $\Gamma$)\footnote{Or equivalently, the set of simple closed
		geodesics on the surface, and $\log \{\gamma\}$ is the corresponding geodesic length.}, excluding the identity, and $q_{\gamma}$ is the multiplier of $\gamma \in \Gamma$ ( $q_{\gamma}+1/q_{\gamma}=|\text{Tr}\gamma|$). Now, by comparing our result in part $(iv)$ of Theorem~\ref{mainthrm2},
	\begin{equation*}
	\bar{\partial} \partial \Gpotential_{\boldsymbol{m}} = -2\sqrt{-1}\left(\omega_{\text{WP}}-\frac{4\pi^2}{3} \omega^{\text{\text{cusp}}}_{\text{TZ}}- \frac{\pi}{2} \sum_{j=1}^{n_e} m_j h_j \hspace{.5mm}\omega^{\text{\text{ell}}}_{\text{TZ},j}\right),
	\end{equation*}	
	with \cite{ZT_2018}\footnote{In this case, one needs to extend the conditions $\int_{\alpha_{k}} v_{k^\prime} =\delta_{k k ^\prime}$ to $\int_{\{\alpha_{k},\kappa_k,\tau_k\}} v_{k^\prime} =\delta_{k k ^\prime}$.}
	\begin{equation*}
	\bar{\partial}\partial\hspace{1mm} \log \frac{\det \Im\boldsymbol{\tau}}{\det ^{\prime}\Delta_0}=-\frac{\sqrt{-1}}{6\pi}\left(\omega_{\text{WP}}-\frac{4\pi^2}{3} \omega^{\text{\text{cusp}}}_{\text{TZ}}- \frac{\pi}{2} \sum_{j=1}^{n_e} m_j h_j \hspace{.5mm}\omega^{\text{\text{ell}}}_{\text{TZ},j}\right),
	\end{equation*}
	we can generalize Zograf's formula \eqref{Zograf} using a function $\mathfrak{F}_{g,n}(\boldsymbol{m})$ on $\schottky_{g,n}(\boldsymbol{m})$, i.e. $\mathfrak{F}_{g,n}(\boldsymbol{m}): \schottky_{g,n}(\boldsymbol{m})\rightarrow \mathbb{C}$, such that
	\begin{equation}\label{gen}
	\frac{\det^{\prime} \Delta_0}{\det\Im\boldsymbol{\tau} } =c_{g,n}(\boldsymbol{m})\hspace{1mm} e^{-\frac{\Gpotential_{\boldsymbol{m}}}{12\pi}} \hspace{1mm}\big|\mathfrak{F}_{g,n}(\boldsymbol{m})\big|^2,
	\end{equation} 
	where $\Gpotential_{\boldsymbol{m}} = S_{\boldsymbol{m}} - \pi \log \mathsf{H}$ is the classical generalized Liouville action and $c_{g,n}(\boldsymbol{m})$ is a constant depending only on $g,n$ and $\boldsymbol{m}$. It would be interesting to find the exact form of the function $\mathfrak{F}_{g,n}(\boldsymbol{m})$ which its importance, particularly from the perspective of physics, will be clear in the following items.\footnote{The exact form of this holomorphic anomaly formula for the determinant of Laplacian on sphere with just three conical singularities and on singular genus zero surfaces that can be glued from copies of (hyperbolic, spherical, or flat) double triangles  have been found in \cite{kalvin2023determinants} and \cite{kalvin2023triangulations}, respectively. As a by-product, for these cases, the accessory parameters, the Liouville action, and $\log{\mathsf{H}}$ are also explicitly evaluated.  We would like to thank Victor Kalvin for informing us about \cite{kalvin2023triangulations}.}
	\item 
	Zograf's formula has an interesting geometric description in the context of the Quillen metric and local index theorem. The Quillen metric on $\lambda_1$ (the determinant line bundle associated with the Cauchy-Riemann operator $\bar{\partial}_1$, usually called the \emph{Hodge line bundle}) is defined by
	\begin{equation*}
	\left(\|v\|_{1}^{\text{Quill}}\right)^{2}
	= \frac{\det \Im \boldsymbol{\tau}}{\det^{\prime}\Delta_0},
	\end{equation*}
	and accordingly, the first Chern form of the Hermition line bundle $(\lambda_{1}, \|v\|_{1}^{\text{Quill}})
	$ over $\mathfrak{S}_{g}$ is given by
	\begin{equation*}
	c_{1}(\lambda_{1}, \|v\|_{1}^{\text{Quill}}) =\frac{1}{12\pi^2} \omega_{\text{WP}}.
	\end{equation*}
	This observation provides the existence of an isometry between the line bundle over $\symmoduli_{g}$ determined by carrying the Hermitian metric $\exp[S_{g}/12\pi]$ and the line bundle $\lambda_1$ with the Quillen metric. More generally, Takhtajan and Zograf have studied the local index theorem for families of $\bar{\partial}$-operators in the orbifold setting \cite{ZT_2018}. The main result of this paper (see  \cite[Theorem~2]{ZT_2018}) is the following formula on the moduli space $\symmoduli_{g,n}(\boldsymbol{m})$ of punctured orbifold Riemann surfaces $O=[\UHP\slash\Gamma]$:
	\begin{equation}\label{Quillen}
	\chern{\lambda_k}{\|\cdot\|_{k}^{Quill}} \hspace{-1mm}=\hspace{-1mm} \frac{6k^2 - 6k +1}{12 \pi^2} \omega_{WP} - \frac{1}{9} \omega_{cusp} -\frac{1}{4\pi}\hspace{-1mm}\sum_{j=1}^{n_e} m_j\hspace{-1mm} \left[B_2\hspace{-1mm}\left(\left\{\frac{k-1}{m_j}\right\}\right)\hspace{-1mm} - \frac{1}{6 m_j^2}\right]\hspace{-1mm} \omega^{ell}_j,
	\end{equation}
	for $k \geq 1$. Here $\lambda_k$ is the determinant line bundle associated with the Cauchy-Riemann operator $\bar{\partial}_k$ and is a holomorphic $\modular(\Gamma)$-invariant line bundle on $\teich(\Gamma)$ whose fibers are given by the determinant lines $\bigwedge^{\max} \ker \bar{\partial}_k \otimes \left(\bigwedge^{\max} \operatorname{coker} \bar{\partial}_k\right)^{-1}$ while $\|\cdot\|_{k}^{\text{Quill}}=\|\cdot\|_{k}/\sqrt{\det \Delta_k}$ denotes the \emph{Quillen norm} in $\lambda_k$;\footnote{Hermition line bundles $\left(\lambda_k,\|\cdot\|_{k}^{\text{Quill}}\right)$ and $\left(\lambda_{1-k},\|\cdot\|_{1-k}^{\text{Quill}}\right)$ are isometrically isomorphic.} since the determinant line bundle $\lambda_k$ is $\modular$-invariant, one can alternatively think of $\lambda_k$ as a holomorphic $\mathbb{Q}$-line bundle on the moduli space $\symmoduli_{g,n}(\boldsymbol{m}) = \teich(\Gamma)/\modular(\Gamma)$. Finally, $B_2(x) = x^2 - x + 1/6$ is the second Bernoulli polynomial, and $\{x\}$ denotes the fractional part of $x \in \mathbb{Q}$. It is clear from Eq.\eqref{Quillen} that for $k=1$, the first Chern form of  the determinant line bundle $\lambda_1$ with Quillen norm is given by  $\tfrac{1}{12\pi^2}(\omega_{\text{WP}}-\frac{4\pi^2}{3} \omega^{\text{\text{cusp}}}_{\text{TZ}}- \frac{\pi}{2} \sum_{j=1}^{n_e} m_j h_j \omega^{\text{\text{ell}}}_{\text{TZ},j})$. Comparing this observation with our result in part $(iv)$ of Theorem \ref{mainthrm2} and noting to \cite[Theorem~3.1]{Zograf_1990}, suggests that the following remark is correct:
	\begin{remark}\label{mainthrm3}
		The function $\Gpotential_{\boldsymbol{m}}$ on the Schottky space $\schottky_{g,n}(\boldsymbol{m})$ determines a holomorphic $\mathbb{Q}$-line bundle $\lambda_{_{Sch}}$ on the moduli space $\symmoduli_{g,n}(\boldsymbol{m})$ with Hermitian metric $\langle \cdot , \cdot \rangle_{_{Sch}}$, where $\langle 1 , 1 \rangle_{_{Sch}} = \exp(\Gpotential_{\boldsymbol{m}}/12 \pi)$  (here, $1$ is understood as the corresponding section of the trivial bundle $\schottky_{g,n}(\boldsymbol{m}) \times \cmpx \to \schottky_{g,n}(\boldsymbol{m})$). The Hermitian $\mathbb{Q}$-line bundle $(\lambda_{_{Sch}} ; \langle \cdot , \cdot \rangle_{_{Sch}})$ is isometrically isomorphic to the Hodge line bundle $(\lambda_{_{Hod}} ; \langle \cdot , \cdot \rangle_{_{Quil}})$ over $\symmoduli_{g,n}(\boldsymbol{m})$ --- i.e. there exists an isomorphism $\imath: \lambda_{_{Sch}} \to  \lambda_{_{Hod}}$ such that $\langle s , s \rangle_{_{Sch}} = \langle \imath \circ s , \imath \circ s \rangle_{_{Quil}}$ for every local section $s$ of the bundle $\lambda_{_{Sch}}$.
	\end{remark}

	While we do not attempt proving the above claim rigorously, we would like to comment that it might be possible to do so in analogy with the proof of \cite[Theorem~3.1]{Zograf_1990}. In particular, a very interesting question is to investigate whether it would be possible to determine the constants $c_{\gamma}$ and $c$ in the following naive generalization
	\begin{equation}\label{ConjecturedCocyle}
		f_{\gamma}(t) = \exp(c \int_{t_{\ast}}^{t} \partial(\tilde{\Gpotential}_{\boldsymbol{m}}-\tilde{\Gpotential}_{\boldsymbol{m}} \circ \gamma) + \frac{c}{2} \left(\tilde{\Gpotential}_{\boldsymbol{m}}(t_{\ast}) - \tilde{\Gpotential}_{\boldsymbol{m}}(\gamma t_{\ast}) + 2\pi \sqrt{-1} c_{\gamma}\right) ),
	\end{equation} 
	of $f_{\gamma}$s constructed in \cite{Zograf_1990}, such that Eq.~\eqref{ConjecturedCocyle} still defines a 1-cocycle of the action of Teichm\"uller modular group $\modular(\Gamma)$ (see \cite{Zograf_1990} for more details). In this formula, $\gamma \in \modular(\Gamma)$ denotes mapping classes, $\tilde{\Gpotential}_{\boldsymbol{m}}$ is defied as $\Gpotential_{\boldsymbol{m}} \circ \pi$ where $\pi: \teich_{g,n}(\boldsymbol{m}) \to \schottky_{g,n}(\boldsymbol{m})$ is a natural holomorphic cover, $t_{\ast}$ denotes a marked point in the Teichm\"uller space $\teich_{g,n}(\boldsymbol{m})$ while $t$ is any other point in this space, and $\partial$ denotes the (1,0)-component of the exterior differentiation operator on $\teich_{g,n}(\boldsymbol{m})$.
	
	It is worth mentioning that the Determinant Line Bundles and Quillen Metrics in the conical case have been
	constructed in \cite[\textsection{6}]{Schumacher2008WeilPeterssonGF}. The curvature tensor of the Weil-Petersson metric for Teichm\"{u}ller spaces of compact (or punctured) Riemann surfaces was computed explicitly by Tromba \cite{tromba1986natural} and Wolpert \cite{wolpert1986chern}. In \cite[\textsection{7}]{Schumacher2008WeilPeterssonGF}, the authors show the analogous result for the weighted punctured case.
	
	\item
	The function $\mathfrak{F}_g$ has an important rule to find the physical wave-function of 3-dimensional pure AdS quantum gravity and define the last ingredient of canonical quantization, i.e., the scalar product between the physical wave-functions: the physical norm should be invariant under the mapping class group transformations. As we will see, it is achievable via the Quillen norm as follows. Let us recall that according to Eq.\eqref{Zograf},
	\begin{equation*}
	\frac{\det^{\prime}\Delta_0}{\det \Im\tau} = c_{g} \, e^{-\frac{S_{g}}{12\pi}} \, |\mathfrak{F}_{g}|^{2}.
	\end{equation*} 
	The determinant of the operator $\Delta_0$ is invariant under large diffeomorphisms, and let's assume that under $\modular$ transformation, we have
	\begin{equation}\label{e1}
	\det\Im\tau\rightarrow \det\Im \tau^{\prime} = |\mathfrak{A}|^2\hspace{1mm}
	\det\Im\tau,\hspace{1cm} S_{g} \rightarrow S'_{g}=  S_{g}+\mathfrak{B}+\bar{\mathfrak{B}},
	\end{equation}
	which imply that the function $\mathfrak{F}_g$ should transforms as follows 
	\begin{equation*}
	\mathfrak{F}_g \rightarrow \mathfrak{F}_g^{'}= \frac{e^{\mathfrak{B}}}{\mathfrak{A}}\hspace{1mm}\mathfrak{F}_g.
	\end{equation*}
	Now, if the matrix $U_{i}^{j}$, under large diffeomorphism,
	relates wave functions defined in two coordinate systems, i.e. $\Psi^{\prime ^i} = U^{i}_{j}\hspace{.5mm}e^{-\mathfrak{B}\mathsf{D}}\hspace{.5mm}\Psi^{j}$, then the functions
	\begin{equation}\label{e2}
	\tilde{\Psi}^{i} = \left(\mathfrak{F}_g\right)^{\mathsf{D}}\hspace{.1mm}\Psi^{i},
	\end{equation}
	will transform as 
	\begin{equation}\label{e3}
	\tilde{\Psi}^{i} \rightarrow \tilde{\Psi}^{\prime i} =  \mathfrak{A}^{-\mathsf{D}}\hspace{.5mm}U^{i}_{j}\hspace{.5mm}\tilde{\Psi}^{j}.
	\end{equation}
	By noting the equations \eqref{e1}, \eqref{e2} and \eqref{e3}, the MCG-invariant scalar product of physical wave functions $\tilde{\Psi}$ can be defined based on Quillen norm (see \cite{kim2015canonical,maloney2015geometric} as well as \cite{eberhardt2022off})
	\begin{equation*}
	\langle \tilde{\Psi}_1,\tilde{\Psi}_2\rangle = \int_{\teich(\Sigma)} \left(\frac{\det^{\prime}\Delta_0}{\det\Im\tau}\right)^{-\mathsf{D}}\hspace{1mm} \overline{\tilde{\Psi}}_1\wedge\ast\tilde{\Psi}_2.
	\end{equation*}
	Accordingly, the function  $\mathfrak{F}_{g}$ determines the physical wave functions of 3-dimensional pure quantum gravity and their MCG-invariant scalar product.
	
	In light of the aforementioned observation and taking into account the remarks in the Introduction regarding the inclusion of massive particles in the path integral of 3-dimensional gravity (or considering 3-dimensional Seifert manifolds whose KK-reductions are related to conical Riemann surfaces), the function $\mathfrak{F}_{g,n}(\boldsymbol{m})$ plays crucial rule in defining the physical wave-functions (and their scalar product) of 3-dimensional quantum gravity in presence of those massive particles (or in presence of  Seifert manifold's contribution). This itself presents an intriguing problem warranting further exploration. Apart from that, $\mathfrak{F}_{g}$ plays another significant role in the connection between the physical wave function of 3-dimensional pure gravity and dual two-dimensional Virasoro conformal blocks. According to \cite{teschner2003liouville}, the 3-dimensional physical
	wave-functions, obeying the Gauss law constraints, are 
	conformal blocks of quantum Liouville theory. Exploring this proposal in the presence of conical singularities, based on our results, is also an interesting problem, i.e., finding the relation between the wave-function of 3-dimensional AdS gravity in the presence of particles with special masses and quantum Liouville theory with some special vertex operators and even more to determine whether the partition function of gravity can indeed be decomposed into Virasoro characters. 
	\item 	
	Our results are applicable to two-dimensional theories of gravity such as deformed Jackiw-Teitelboim gravity \cite{witten2020matrix,maxfield2021path,alishahiha2021free,turiaci20212d,eberhardt20232d,artemev2023p,lin2023revisiting}, where the gravitational path integral can be written as an integral over the moduli space of orbifold (conical) Riemann surfaces.
	\item
	Our results, via an analytic continuation of the classical Matschull process, have potential application in the study of black hole production with a non-trivial topology inside the horizon from the collision of massive point particles with a certain mass   \cite{Krasnov:2001ui,Krasnov:2002rn, lindgren2016black}.\footnote{Also see \cite{Chen:2016dfb,Chen:2016kyz,Cho:2017fzo,Cardy:2017qhl,Can:2017ycp,Collier:2019weq,Belin:2021ibv} and \cite[\textsection4]{chang2016bootstrap}.} 
	
	\item 
	There is a rich mathematical
	theory- that of the Selberg trace formula and its generalizations - where the sum over
	elements $\gamma \in \Gamma$ is used to compute the spectrum of differential operators on
	$\UHP/\Gamma$. The Selberg zeta function $Z_{\text{Sz}}(s,\Gamma,\mathfrak{U})$ associated with the Riemann surface is
	\begin{equation}\label{Z}
	Z_{\text{Sz}}(s,\Gamma,\mathfrak{U}) =\prod_{\{\gamma\}}\prod_{k=0}^{\infty}\left(1-\mathfrak{U}(\gamma) \hspace{1mm}q_{\gamma}^{s+k}\right),
	\end{equation}
	where $\mathfrak{U}:\Gamma \rightarrow U(1)$ be a unitary character. For $\Re s>1$, the above product admits a meromorphic continuation to the complex $s$-plane. According to the Selberg trace formula, for the case where $\Gamma$ has just hyperbolic and parabolic elements, it is shown \cite{d1986determinants} that 
	\begin{equation*}
	\det\hspace{.01mm}^{\prime}\hspace{.5mm} \Delta_0 \simeq Z_{\text{Sz}}^{\prime}(1),
	\end{equation*}
	where $Z_{\text{Sz}}(s) \equiv Z_{\text{Sz}}(s,\Gamma,\mathfrak{U}=1)$. Hence, Zograf's formula \eqref{Zograf} also gives a factorization of $Z_{\text{Sz}}^{\prime}(1)$ as a function on $\mathfrak{S}_{g}$. Accordingly, it would be intriguing to find a function akin to the Selberg zeta function $Z_{\text{Sz}}(s)$ acting on $\schottky_{g,n}(\boldsymbol{m})$, utilizing the generalized holomorphic factorization theorem \eqref{gen}. That is important, at least, for finding the contribution of 1-loop effects to the partition function of 3-dimensional AdS gravity in the presence of massive particles. Let us remind that the one-loop partition function of pure three-dimensional gravity is given by
	\begin{equation*}
	Z^{1-\text{loop}}_{\text{gravity}}=\frac{\det\Box^{(1)}}{\sqrt{\det\Box^{(2)}\hspace{1mm}\det\Box^{(0)}}},
	\end{equation*}
	where $\Box^{(2)}$ is the kinetic operator for linearized graviton (symmetric traceless) fluctuations around the chosen background and $\Box^{(1)},\Box^{(0)}$ are related to kinetic operators of vector ghost and weyl mode, respectively, since to compute the partition function we must also include the Fadeev-Popov modes arising due to gauge fixing. By computing each one concretely, it is shown that \cite{giombi2008one} 
	\begin{equation}\label{1loop}
	Z^{1-\text{loop}}_{\text{gravity}}= \prod_{\{\gamma\}}\prod_{k=0}^{\infty}\frac{1}{\big{|}1-q_{\gamma}^{2+k}\big{|}}.
	\end{equation}
	When $M= \UHP_{3}/\Gamma$ is a solid torus, $\Gamma =\mathbb{Z}$, and
	$\{\gamma\}$ consists of the generator of $\Gamma$ and its inverse, the above formula reduces to
	\begin{equation*}
	Z^{1-\text{loop}}_{\text{gravity}}(\tau,\bar{\tau}) = \prod_{k=2}^{\infty}\frac{1}{|1-q^k|^2}, 
	\end{equation*}
	with $q=e^{2\pi i \tau}$.\footnote{
		Unlike the $\UHP_{3}/\mathbb{Z}$,
		the higher-loop contributions to the partition function may not vanish for the general case $M = \UHP_3/\Gamma$. At least for the handlebody geometries, there exists a proposal to calculate all-loop expressions \cite{yin2007partition}.} Comparing \eqref{1loop} with \eqref{Z} implies that
	\begin{equation*}
	Z^{1-\text{loop}}_{\text{gravity}}= Z^{-1}_{\text{Sz}}(2).
	\end{equation*} 
	Let us also remind that, according to \eqref{Fg}, $\mathfrak{F_{g}}=Z_{\text{Sz}}(1)$. Hence, with knowledge of the function $\mathfrak{F}_{g,n}(\boldsymbol{m})$, we can investigate the appropriate function that assumes the role of Selberg zeta function $Z_{\text{Sz}}(s)$ in the presence of conical singularities. Consequently, this exploration can also provide insights into the contribution of conical singularities to the 1-loop gravity partition function. For the special case $M = \UHP_{3}/(\mathbb{Z}\times\mathbb{Z}_{m})$, it is shown  \cite{benjamin2020pure} that in the presence of conical singularities, the 1-loop partition function again is given by $Z_{\text{Sz}}(s)$ but evaluated at a different point, i.e. $Z^{1-\text{loop}}_{\text{gravity}}\sim Z^{-1}_{\text{Sz}}(1)$.	
	\item 
	In this paper, we explored further the relation between the modified classical Liouville action (i.e., the one which its Euler-Lagrange equation admits the hyperbolic Riemann surfaces with conical singularities as a solution), uniformization of orbifold Riemann surfaces and complex geometry of moduli spaces. One knows that the quantized Liouville theory can potentially describe the quantum corrections to those hyperbolic geometries (outside the singularities). Accordingly, one approach to understanding two-dimensional quantum gravity through quantum Liouville theory is to demand that this theory takes advantage of conformal symmetry akin to its classical counterpart. If the conformal symmetry is also symmetry of quantum Liouville theory, it should show itself in the conformal Ward identities (CWIs) for correlation functions of components of stress-energy tensor with another operators. An important point to consider is the space on which the Liouville action functional is expressed. This choice significantly impacts the form of the conformal Ward identities (CWIs) and their implications.
	
	Let's consider the simplest case, i.e., the Liouville action on the moduli space $\moduli_{0,n}$
	and identifying $X_{\boldsymbol{m}}$ with  $V_{m_{1}}(x_1) \dotsm V_{m_{n}}(x_n)$ and restoring the $\hbar$. Since the $S_{\boldsymbol{m}}$ in \eqref{NoGenusAction} is a well-defined (single-valued) function on $\moduli_{0,n}$, the quantum Liouville theory is defined by 
	\begin{equation}\label{PI1}
	\expval{ X_{\boldsymbol{m}}} = \int\limits_{\mathscr{CM}_{\boldsymbol{m}}(X)} \hspace*{-15pt}\mathcal{D}\psi \,\, e^{-\frac{1}{2 \pi\hbar} S_{\boldsymbol{m}}[\psi]},
	\end{equation}
	where the partition function $\expval{ X_{\boldsymbol{m}}}$ is a real-valued function on the $\moduli_{0,n}$. Moreover, we have
	\begin{equation}\label{TX1}
	\expval{\boldsymbol{T}(w) X_{\boldsymbol{m}}} = \int\limits_{\mathscr{CM}_{\boldsymbol{m}}(X)} \hspace*{-15pt}\mathcal{D}\psi \,\, \boldsymbol{T}(\psi)(w)\hspace{1mm}e^{-\frac{1}{2 \pi\hbar} S_{\boldsymbol{m}}[\psi]}.
	\end{equation} 
	To have conformal symmetry, one is required to prove
	\begin{equation}\label{CWI}
	\frac{1}{\hbar}\expval{\boldsymbol{T}(w) X_{\boldsymbol{m}}} = \sum_{i=1}^{n} \left(\frac{\boldsymbol{h}_{m_i}(\hbar) }{(w-w_i)^2}+\frac{1}{(w-w_i)}\partial_{w_i}\right)\expval{X_{\boldsymbol{m}}},
	\end{equation}
	where $\boldsymbol{h}_{m_i}(\hbar)$ are considered as conformal dimensions of vertex operators 
	$V_{m_i}= e^{\frac{\alpha(m_i)}{\hbar} \psi}$. Note that at the tree level when $\hbar \rightarrow 0$,
	\begin{equation*}
	\begin{split}
	& \expval{ X_{\boldsymbol{m}}} \sim e^{-\frac{1}{2\pi\hbar}S_{\boldsymbol{m}}[\varphi]},\\
	&\expval{\boldsymbol{T}(w) X_{\boldsymbol{m}}}\sim T_{\varphi}(w) \hspace{1mm}e^{-\frac{1}{2\pi\hbar}S_{\boldsymbol{m}}[\varphi]},\\
	&\boldsymbol{h}_{m_i}(\hbar)\sim \frac{h_{\text{cl}}(m_i)}{2\hbar}.
	\end{split}
	\end{equation*}	
	By substituting the above relations in \eqref{CWI}, one gets \cite{Takhtajan:1994vt,Takhtajan:1994zi}
	\begin{equation*}
	\frac{1}{\hbar}\hspace{1mm}T_{\varphi}(w) \hspace{1mm}e^{-\frac{1}{2\pi\hbar}S_{\boldsymbol{\boldsymbol{m}}}[\varphi]} = \sum_{i=1}^{n}\left(\frac{h_{\text{cl}}(m_i)}{2\hbar(w-w_i)^2}+\frac{1}{(w-w_i)}\partial_{w_i}\right)e^{-\frac{1}{2\pi\hbar}S_{\boldsymbol{\boldsymbol{m}}}[\varphi]},
	\end{equation*}
	which implies that
	\begin{equation}\label{T1}
	T_{\varphi}(w)\hspace{1mm}= \sum_{i=1}^{n}\left(\frac{h_{\text{cl}}(m_i)}{2(w-w_i)^2}-\frac{1}{2\pi}\frac{1}{(w-w_i)}\partial_{w_i}S_{\boldsymbol{\boldsymbol{m}}}[\varphi]\right).
	\end{equation}
	Comparing the above result with $T_{\varphi}(w) = \text{Sch}(J^{-1};w)$, where $\text{Sch}(J^{-1};w)$ is given by \eqref{EnergyMomentumExpansion}, implies that
	\begin{equation}\label{res1}
	h_{\text{cl}}(m_i)= h_i = 1-\frac{1}{m_i^2},\hspace{1cm}\partial_{w_i} S_{\boldsymbol{m}} = -2\pi c_i,
	\end{equation}
	which are in agreement with Theorem~\ref{theorem1}. 
	
	Moreover, the conformal symmetry at the quantum level implies that the vertex operators and components of the energy-momentum tensor satisfy the operator product expansion (OPE) of Belavin-Polyakov-Zamolodchikov, for example,
	\begin{equation*}
	\begin{split}
	&\frac{1}{\hbar^2}\boldsymbol{T}(w)\boldsymbol{T}(w^\prime)=\frac{\mathsf{c}(\hbar)/2}{(w-w^{\prime})^4}+\frac{1}{\hbar}\frac{2\boldsymbol{T}(w^{\prime})}{(w-w^{\prime})^2}+\frac{1}{\hbar}\frac{1}{(w-w^{\prime})}\partial_{w^{\prime}}\boldsymbol{T}(w^{\prime})+\text{regular terms},\\
	&\frac{1}{\hbar^2}\boldsymbol{T}(w)\overline{\boldsymbol{T}}(\bar{w}^\prime)=\text{regular terms}
	\end{split}
	\end{equation*}
	which yields the following CWIs,
	\begin{equation}\label{CWI2}
	\begin{split}
	&\frac{1}{\hbar^2}\expval{\boldsymbol{T}(w)\boldsymbol{T}(w^{\prime}) X_{\boldsymbol{m}}} =\frac{\mathsf{c}(\hbar)/2}{(w-w^{\prime})^4}\expval{X_{\boldsymbol{m}}} +\frac{1}{\hbar}\left(\frac{2}{(w-w^{\prime})^2}+\frac{1}{(w-w^{\prime})}\partial_{w^{\prime}}\right)\expval{\boldsymbol{T}(w^{\prime})X_{\boldsymbol{m}}}\\
	&\hspace{2cm}+\frac{1}{\hbar} \sum_{i=1}^{n} \left(\frac{\boldsymbol{h}_{m_i}(\hbar) }{(w-w_i)^2}+\frac{1}{(w-w_i)}\partial_{w_i}\right)\expval{\boldsymbol{T}(w^{\prime})X_{\boldsymbol{m}}},\\
	&\frac{1}{\hbar^2}\expval{\boldsymbol{T}(w)\overline{\boldsymbol{T}}(\bar{w}^{\prime}) X_{\boldsymbol{m}}} =\frac{1}{\hbar} \sum_{i=1}^{n} \left(\frac{\boldsymbol{h}_{m_i}(\hbar) }{(w-w_i)^2}+\frac{1}{(w-w_i)}\partial_{w_i}\right)\expval{\overline{\boldsymbol{T}}(\bar{w}^{\prime}) X_{\boldsymbol{m}}}.
	\end{split}
	\end{equation}
	Let us show that our results also agree with the second CWI in \eqref{CWI2} at the tree-level. To see that, let us define the normalized connected two-point function
	\begin{equation}\label{nc}
	\expval{\boldsymbol{T}(w)\overline{\boldsymbol{T}}(\bar{w}^{\prime}) X_{\boldsymbol{m}}}_{\text{nc}} = \frac{\expval{\boldsymbol{T}(w)\overline{\boldsymbol{T}}(\bar{w}^{\prime}) X_{\boldsymbol{m}}}}{\expval{X_{\boldsymbol{m}}}}-\frac{\expval{\boldsymbol{T}(w) X_{\boldsymbol{m}}}}{\expval{X_{\boldsymbol{m}}}}\frac{\expval{\overline{\boldsymbol{T}}(\bar{w}^{\prime}) X_{\boldsymbol{m}}}}{\expval{X_{\boldsymbol{m}}}}.
	\end{equation}
	By using \eqref{CWI} and its similar expression for $\expval{\overline{\boldsymbol{T}}(\bar{w}^{\prime}) X_{\boldsymbol{m}}}$,
	\begin{equation*}
	\frac{1}{\hbar}\expval{\overline{\boldsymbol{T}}(\bar{w}^{\prime}) X_{\boldsymbol{m}}} = \sum_{j=1}^{n} \left(\frac{\bar{\boldsymbol{h}}_{m_j}(\hbar) }{(\bar{w}^{\prime}-\bar{w}^{\prime}_j)^2}+\frac{1}{(\bar{w}^{\prime}-\bar{w}^{\prime}_j)}\partial_{\bar{w}^{\prime}_j}\right)\expval{X_{\boldsymbol{m}}},
	\end{equation*}
	the normalized connected correlation function \eqref{nc} is simplified to
	\begin{equation}\label{CWI4}
	\expval{\boldsymbol{T}(w)\overline{\boldsymbol{T}}(\bar{w}^{\prime}) X_{\boldsymbol{m}}}_{\text{nc}}=\sum_{i=1}^{n}\sum_{j=1}^{n}\frac{1}{(w-w_i)}\frac{1}{(\bar{w}^{\prime}-\bar{w}_j^{\prime})}\partial_{w_i}\partial_{\bar{w}^{\prime}_j}\log \expval{X_{\boldsymbol{m}}}.
	\end{equation}
	At the tree-level, the right-hand side of the above equation has an order of $\mathcal{O}(\hbar^{-1})$. To find the same order of $\hbar$ on the left-hand side, one can use the path integral representation of the two-point function,
	\begin{equation}\label{TP}
	\expval{\boldsymbol{T}(w)\overline{\boldsymbol{T}}(\bar{w}^{\prime}) X_{\boldsymbol{m}}} = \int\limits_{\mathscr{CM}_{\boldsymbol{m}}(X)} \hspace*{-15pt}\mathcal{D}\psi \,\, \boldsymbol{T}(\psi)(w)\overline{\boldsymbol{T}}(\psi)(\bar{w}^{\prime})\hspace{1mm}e^{-\frac{1}{2 \pi\hbar} S_{\boldsymbol{m}}[\psi]}
	\end{equation}
	Let us expand the field $\psi$ around its classical value $\varphi$,
	\begin{equation*}
	\psi(w) = \varphi(w)+\sqrt{\pi\hbar}\hspace{.5mm}\delta\psi(w)+...
	\end{equation*}
	which implies that
	\begin{equation*}
	\begin{split}
	&S_{\boldsymbol{m}}[\psi]= S_{\boldsymbol{m}}[\varphi]+\pi\hbar \int \delta\psi \left(\Delta_0+\frac{1}{2}\right)\delta\psi\hspace{1mm}e^{\varphi}d^2w+...\\
	&\frac{1}{\hbar}\boldsymbol{T}(w) =\frac{1}{\hbar}T_{\varphi}(w)+\sqrt{\frac{\pi}{\hbar}}\hspace{1mm}\mathsf{D}_w\delta\psi(w)+... 
	\end{split}
	\end{equation*}
	with $\mathsf{D}_w=\left(\partial_{w}^2-(\partial_w \varphi)\partial_w\right)$. Now, substituting the above expansion in the right-hand side of \eqref{TP}, the CWI \eqref{CWI4} at the tree-level becomes
	\begin{equation}\label{CWI5}
	\frac{2\pi}{\hbar}\mathsf{D}_{w}\mathsf{D}_{\bar{w}^{\prime}} \hspace{.5mm}G(w,w^{\prime})=-\frac{1}{2\pi\hbar}\sum_{i=1}^{n}\sum_{j=1}^{n}\frac{1}{(w-w_i)}\frac{1}{(\bar{w}^{\prime}-\bar{w}_j^{\prime})}\hspace{.5mm}\partial_{w_i}\partial_{\bar{w}^{\prime}_j}S_{\boldsymbol{m}}[\varphi],
	\end{equation}
	where $G = 1/2\hspace{1mm}(\Delta_0+1/2 )^{-1}$ is the propagator of quantum Liouville theorem, $\Delta_0= -\exp{-\varphi}\hspace{1mm}\partial^2/\partial_w \partial_{\bar{w}}$. Moreover, by using the overall $SL(2,\mathbb{C})$-symmetry to normalize $w_{n-2}=0,w_{n-1}=1,w_n=\infty$, the right hand side can be written as
	\begin{equation}\label{RHS}
	\begin{split}
	\text{RHS} &= -\frac{1}{2\pi\hbar}\sum_{i=1}^{n-3}\sum_{j=1}^{n-3}R(w,w_i)R(\bar{w}^{\prime},\bar{w}^{\prime}_j)\hspace{.5mm}\partial_{w_i}\partial_{\bar{w}^{\prime}_j}S_{\boldsymbol{m}}[\varphi],\\
	&=-\frac{\pi}{2\hbar}\sum_{i=1}^{n-3}\sum_{j=1}^{n-3}R_i(w)R_j(\bar{w}^{\prime})\hspace{.5mm}\partial_{w_i}\partial_{\bar{w}^{\prime}_j}S_{\boldsymbol{m}}[\varphi],
	\end{split}
	\end{equation}
	where $R(w,w_i)$ and $R_i$ are defined in \eqref{Rab} and \eqref{Ridef}, respectively. Therefore, according to Theorem \ref{mainthrm2}, the RHS of \eqref{CWI5} is (1,1)-component of the curvature form for the connection in the holomorphic line bundle over the moduli.
	Through rigorous mathematical computations \cite{Takhtajan:1994vt,Takhtajan:1994zi}, it can be demonstrated that the left-hand side of equation \eqref{CWI5} encompasses precisely the identical information. However, a streamlined approach can be employed to substantiate this claim by using the principles of Friedan-Shenker modular geometry \cite{Friedan:1986ua}. Friedan and Shenker interpret the expectation value $\expval{X}$ as a Hermitian metric in a certain holomorphic line bundle over moduli space and the quadratic differential
	$\expval{T(w)X}_{nc} dw^2$  as a (1,0)-component of the canonical metric connection. This interpretation aligns with the tree-level analysis of the CWI \eqref{CWI} above. Moreover, they interpret $\expval{\boldsymbol{T}(w)\overline{\boldsymbol{T}}(\bar{w}^{\prime}) X_{\boldsymbol{m}}}_{\text{nc}}dw d\bar{w}$ as (1,1)-component of the curvature form of that connection;  as illustrated in Eq.~\eqref{RHS}, this is also in agreement with the tree level analysis of \eqref{CWI4} for the moduli space $\moduli_{0,n}$. 
	
	Now, One can utilize the aforementioned terminology to find new constraints on K\"{a}hler geometry of moduli space $\symmoduli_{0,n}(\boldsymbol{m})$ by demanding that the CWIs hold true when they are projected to this space. Conversely, one can also  use the known information about K\"{a}hler geometry to find the ``dynamical proof'' of the Virasoro  symmetry of the Liouville theory. Let's find one of those constraints by studying the quantum Liouville theory and its associated CWIs on the moduli space $\symmoduli_{0,n}(\boldsymbol{m})$. In comparison to the previous case, two points should be highlighted. Firstly, to define the path integral representation of quantum Liouville theory, one needs a Liouville functional that is a well-defined (single-valued) function on the moduli space $\symmoduli_{0,n}(\boldsymbol{m})$. It is shown that, unlike the function $S_{\boldsymbol{m}}$ that is not invariant under the action of $\symm{\boldsymbol{m}}$, at least at the semi-classical limit, the function $\Gpotential_{\boldsymbol{m}}$  has this property. Accordingly, for the moduli space $\symmoduli_{0,n}(\boldsymbol{m})$, instead of \eqref{PI1}, we have 
	\begin{equation}\label{PI2}
	\expval{ X_{\boldsymbol{m}}} = \int\limits_{\mathscr{CM}_{\boldsymbol{m}}(X)} \hspace*{-15pt}\mathcal{D}\psi \,\, e^{-\frac{1}{2 \pi\hbar} \Gpotential_{\boldsymbol{m}}[\psi]}.
	\end{equation}
	Secondly, to check the counterpart of \eqref{CWI} for the moduli space $\symmoduli_{0,n}(\boldsymbol{m})$, the automorphic form $\text{Sch}\left(J^{-1};w\right)$ should be projected to the subspace $ T_{[0]}^{\ast}\symmoduli_{0,n}(\boldsymbol{m})$.\footnote{If we define the mapping $\tilde{\Psi}:\moduli_{0,n} \rightarrow \symmoduli_{0,n}$, then $[0] \equiv \tilde{\Psi}\circ \Psi\circ\Phi(0)$.} In order to identify the appropriate bases for $ T_{[0]}^{\ast}\symmoduli_{0,n}(\boldsymbol{m})$, one can get help from their counterpart bases for $ T_{\Psi\circ\Phi(0)}^{\ast}\moduli_{0,n}$. In the semi-classical limit $\hbar \rightarrow 0$, the projection of $\text{Sch}\left(J^{-1};w\right)$ on $\moduli_{0,n}$, according to \eqref{EMT2}, is given by 
	\begin{equation}\label{EMT3}
	\begin{split}
	\sum_{i=1}^{n-3}(\text{Sch}\left(J^{-1};w\right),M_i)dw_i &=\sum_{i=1}^{n-3}\bigg(- \pi \sum_{j=1}^{n-3} c_j R_j(w)+ \sum_{j=1}^{n} h_j \varE_j(w),M_i\bigg)dw_i \\
	&= \sum_{i=1}^{n-3}\bigg(-\pi  c_i +\sum_{j=1}^{n} h_j \big(\varE_j(w),M_i\big)\bigg)dw_i
	\end{split}
	\end{equation}
	where $R_j(w)$ and $\varE_j(w)$ are defined by \eqref{Ridef} and \eqref{varE}, respectively, and none of them are invariant under the action of $\symm{\boldsymbol{m}}$. To find the projection of $\text{Sch}\left(J^{-1};w\right)$ on the space $ T_{[0]}^{\ast}\symmoduli_{0,n}(\boldsymbol{m})$, one needs the $\symm{\boldsymbol{m}}$-invariant version of $R_i(w)$ and $\varE_i(w)$ which are given by
	\begin{equation}\label{ls3}
	\begin{split}
	&\tilde{R}_{i}(w)=-\frac{1}{\pi} \sum_{\gamma_{j,j+1}\in \symm{\boldsymbol{m}}} R\left(\gamma_{j,j+1}(w), w_i\right) \gamma_{j,j+1}'(w)^2,\\
	&\tilde{\mathscr{E}}_k(w)  = \frac{1}{2} \sum_{\gamma_{j,j+1} \in \symm{\boldsymbol{m}}} \left(\frac{1}{(\gamma_{j,j+1}(w) - w_k)^2} - \frac{1}{\gamma_{j,j+1}(w) (\gamma_{j,j+1}(w) -1)}\right) \gamma_{j,j+1}'(w)^2,\\
	&\tilde{\varE}_n(w) =\frac{1}{2} \sum_{\gamma_{j,j+1} \in \symm{\boldsymbol{m}}} \frac{1}{\gamma_{j,j+1}(w)(\gamma_{j,j+1}(w)-1)}\hspace{1mm}\gamma_{j,j+1}'(w)^2
	\end{split}
	\end{equation} 
	with $k=1,\dots,n-1$ and $\gamma_{j,j+1}(w)$ are defined in Section \ref{M0n}. In the above, just for simplicity, we assumed that the signature of $O$ is $(O;m,...,m)$, but it can be easily extended to the general case. Accordingly, the projection of $\text{Sch}\left(J^{-1};w\right)$ on moduli space $\symmoduli_{0,n}(\boldsymbol{m})$ is given by
	\begin{equation}\label{EMT4}
	\begin{split}
	\sum_{i=1}^{n-3}\left(\text{Sch}\left(J^{-1};w\right),M_i\right)d\tilde{w}_i &=\sum_{i=1}^{n-3}\bigg(-\pi \sum_{j=1}^{n-3} c_j \tilde{R}_j(w)+ \sum_{j=1}^{n} h_j \tilde{\varE}_j(w),M_i\bigg)d\tilde{w}_i\\
	&=\sum_{i=1}^{n-3}\bigg(-\pi c_i+\big(T_{s}(w),M_i\big)\bigg)d\tilde{w}_i,
	\end{split}
	\end{equation}
	where
	\begin{equation}\label{EMT5}
	\begin{split}
	T_s(w) &= \frac{1}{2}\sum_{i=1}^{n-1}  \sum_{\gamma_{j,j+1} \in \symm{\boldsymbol{m}}} \frac{h_i}{(\gamma_{j,j+1}(w) - w_i)^2} \gamma_{j,j+1}'(w)^2\\
	&-\frac{1}{2} \sum_{\gamma_{j,j+1} \in \symm{\boldsymbol{m}}} \frac{\sum_{i=1}^{n-1} h_i-h_n}{\gamma_{j,j+1}(w) (\gamma_{j,j+1}(w) -1)}\gamma_{j,j+1}'(w)^2.
	\end{split}
	\end{equation}
	In the above, the $d\tilde{w}_i$s are $(1,0)$ forms on cotangent space $ T_{[0]}^{\ast}\symmoduli_{0,n}(\boldsymbol{m})$. The counterpart of \eqref{CWI} for this case is 
	\begin{equation}\label{T2}
	T^{\symmoduli_{0,n}(\boldsymbol{m})}_{\varphi}(w)\hspace{-1mm}=\hspace{-.5mm}\sum_{i=1}^{n} \sum_{\gamma_{j,j+1} \in \symm{\boldsymbol{m}}}\hspace{-2mm}\left(\frac{h_{\text{cl}}(m_i)}{2(\gamma_{j,j+1}(w)-w_i)^2}-\frac{1}{2\pi}\frac{1}{(\gamma_{j,j+1}(w)-w_i)}\partial_{\tilde{w}_i}\Gpotential_{\boldsymbol{m}}[\varphi]\right),
	\end{equation}
	which together with the relations \eqref{EMT4} and \eqref{EMT5} gives information about the moduli space of Riemann orbisurface $O$, i.e, a relation between accessory parameters $c_i$ in \eqref{EMT4} for the Fuchsian uniformization and the first variation $\partial_{\tilde{w}_i}\Gpotential_{\boldsymbol{m}}[\varphi]$. It also provides the opportunity to check the observed relation between the universal CWIs and Friedan-Shenker modular geometry. It is noteworthy that using the same methodology, one can probe the K\"{a}hler geometry of moduli space $\symmoduli_{g,n}(\boldsymbol{m})$ based on using the equations \eqref{ls},\eqref{ls2},\eqref{scrEdef} and \eqref{ls3}.
	
	Last but not least, it is shown \cite{ZT_localindextheorem_1991} that the validity of CWI \eqref{CWI} at 1-loop approximation for the genus $g=0$ case with just $n_p$ punctures yields a formula for the first derivative of $Z_{\text{Sz}}(2)$. Moreover, it is shown \cite{ZT_localindextheorem_1991} that when $g\neq 0$ with just $n_p$ punctures, the CWI \eqref{CWI4} at the 1-loop level is equivalent to the local index theorem for the families of $\bar{\partial}$-operators on the punctured Riemann surfaces. Based on the above-examined cases, it would be interesting to explore more the relation between CWIs for the correlation functions
	\begin{equation*}
	\frac{1}{\hbar^2}\expval{\prod_{k=0}\boldsymbol{T}(w_k)\prod_{l=0}\overline{\boldsymbol{T}}(\bar{w}_l^{\prime}) \hspace{1mm}X_{\boldsymbol{m}}},
	\end{equation*}
	and uniformization
	of the Riemann surfaces and complex geometry of Teichm\"{u}ller space $\teich_{g,n} (\boldsymbol{m})$, Schottky space $\schottky_{g,n}(\boldsymbol{m})$ and Moduli spaces $\moduli_{g,n},\symmoduli_{g,n}(\boldsymbol{m})$ and more importantly 2-dimensional quantum gravity. Moreover, the multi-point correlation functions can provide further evidence in favor of the profound role of Ward identities in Friedan-Shenker modular geometry. 
	\item
	Also, very recently, it has been shown \cite{Benjamin:2023uib} that asymptotic series (in powers of the central charge) for expansion of 2-dimensional conformal blocks, involving the exchange of identity operator, necessitates the existence of non-perturbative effects via resurgence analysis. In the dual three-dimensional theory, this implies that the graviton loop expansion is also an asymptotic series, and to cure it, one needs to consider new saddle points, which are particle-like states with large negative mass (non-manifold saddles with conical excesses). In 
	this paper, we focus on geometries with  conical defects (orbifold geometries), but studying the relation between our work and \cite{Benjamin:2023uib} would be interesting.
\end{itemize}	

\acknowledgments
The authors would like to thank Lorenz Eberhardt, Ruben Hidalgo, Kirill Krasnov, Alex Maloney, Leon Takhtajan, Lee-Peng Teo, Tina Torkaman, and Peter Zograf for inspiring discussions. We are especially grateful to Peter Zograf for providing us with an English translation of his paper \cite{Zograf_1990} and to Lorenz Eberhardt and Leon Takhtajan for very helpful feedback on a draft version of this work. We are also grateful to Hossein Mohammadi for pointing out a large number of typos in an earlier version of this manuscript. The research of B.T. and A.N. is supported by the Iranian National Science Foundation (INSF) under Grant No.~4001859.
\appendix
\section{Introduction to Orbifold Riemann Surfaces}\label{Apx:orbifoldbackground}
Orbifolds lie at the intersection of many different areas of mathematics and Physics, including algebraic and differential geometry, as well as conformal field theory and string theory. Orbifolds were first introduced into topology and differential geometry by Satake \cite{satake1956generalization}, who called them \emph{V-manifolds}. Satake described them as a generalization of smooth manifolds that are locally modeled on a quotient of $\mathbb{R}^\mathfrak{n}$ by the action of a finite group and generalized concepts such as de Rham cohomology and the Gauss-Bonnet theorem to orbifolds. Shortly after the original paper of Satake, Baily introduced complex V-manifolds and generalized both the Hodge decomposition theorem \cite{baily1956decomposition} and Kodaira's projective embedding theorem \cite{baily1957imbedding} to V-manifolds. The concept of V-manifolds was later re-invented by Thurston \cite{thurston80} under the name of ``orbifolds'' and the notion of fundamental groups was generalized for these objects. Even though orbifolds were already very important objects in mathematics, the work of Dixon, Harvey, Vafa, and Witten \cite{Dixon:1985jw,Dixon:1986jc} as well as the subsequent work of Dixon, Friedan, Martinec, and Shenker \cite{Dixon:1986qv} lead to a dramatic increase of interest in orbifolds among Physicists.\footnote{Professor H. Arfaei was among the first string theorists who quickly realized the significance of orbifolds in string theory and worked on various aspects of their role within this framework \cite{ardalan1987quotient,Ardalan:1989xq,Ardalan:1990ef}.} The main objective of this appendix is to compile some basic facts about orbifolds and fix some notations used throughout this paper. Although we start with a general setting, the main focus of this appendix is on complex 1-dimensional orbifolds, called orbifold Riemann surfaces, as they are the objects of study in the main body of this paper. 

To motivate our interest in orbifold Riemann surfaces, let us recall that ordinary Riemann surfaces are complex 1-dimensional algebro-geometric objects with a lot of good properties: Geometrical facts about Riemann surfaces are as ``nice'' as possible, and they often provide the intuition and motivation for generalizations to more complicated manifolds or varieties. The name ``surface'' comes from the fact that every Riemann surface is a two-dimensional real analytic manifold (i.e., a surface), but it contains more structures: In fact, a Riemann surface is the simplest example of a K\"{a}hler manifold which means that it admits three mutually compatible structures --- a complex structure, a Riemannian structure, and a symplectic structure. In addition, the existence of non-constant meromorphic functions on these surfaces can be used to show that any compact Riemann surface is a \emph{projective algebraic curve} and, therefore can be given by polynomial equations inside a projective space. \emph{Orbifold Riemann surfaces} are the natural generalization of Riemann surfaces in the orbifold world. Just like their manifold counterparts, orbifold Riemann surfaces can be viewed both as a complex orbifold of dimension 1 (\emph{complex analysts viewpoint}) or as a smooth, proper Deligne--Mumford stack\footnote{\emph{Deligne--Mumford stacks} (or \emph{DM stacks}) are an algebraic generalization of orbifolds, and can be roughly thought of as an ``orbivariety'' or ``orbischeme''. Just as orbifolds are locally the quotient of a manifold by a finite group, a DM stack can be characterized as being ``locally'' the quotient of a scheme by a finite group action.} (over $\cmpx$) of dimension 1 (\emph{algebraic geometers viewpoint}). These orbisurfaces also admit Riemannian metrics and can be regarded as the simplest examples of K\"{a}hler orbifolds. When the emphasis is on the algebro-geometric viewpoint, orbifold Riemann surfaces are usually called \emph{orbifold curves} or \emph{orbicurves}.

The viewpoint that one takes on the singular points of an orbifold depends a lot on what type of ``space'' one is working with: When working in the topological realm, one usually treats the orbifold singularities as an \emph{additional structure} -- an \emph{orbifold structure} -- on an \emph{underlying topological space} in the same way that one thinks of a smooth structure as an additional structure on a topological manifold (see \cite{thurston80,steer1999grothendieck,Scott1983TheGO,Cooper_2000,caramello2019introduction}). In particular, a topological space is allowed to have several different orbifold structures. On the other hand, from an algebro-geometric viewpoint \cite{behrend2006uniformization}, it is more convenient to consider (analytic or algebraic) \emph{stacks} as the proper notion of space. Such a stack is then called an orbifold (be it an analytic or an algebraic one) if it admits a \emph{covering} by open substacks of the form $[\tilde{U}/\Gamma]$,\footnote{If $M$ is a complex manifold of dimension $\mathfrak{n}$ and $\Gamma \subset \operatorname{Aut}(M) \subset \operatorname{GL}(\mathfrak{n}, \cmpx)$ is a finite subgroup of holomorphic automorphisms of $M$ which does \emph{not} act \emph{freely} (i.e. has fixed points on $M$), the quotient space $M/\Gamma$ will have the structure of an analytic stack (or, equivalently, of a complex orbifold) and we will use the notation $[M/\Gamma]$ to mean $M/\Gamma$ as an analytic orbifold/stack. The notation $M/\Gamma$ will be reserved for the \emph{coarse moduli space} or the \emph{underlying analytic space} of $[M/\Gamma]$ which will be a \emph{variety with quotient singularities}.} parameterising families of $\Gamma$-orbits in $\tilde{U}$, where $\tilde{U}$ is the local model for representable stacks (i.e. manifolds) and $\Gamma$ is a finite subgroup of the automorphism group of $\tilde{U}$. This second point of view treats an orbifold singularity as an intrinsic structure of the space. 

For us, the appropriate notion of space will be that of \emph{analytic spaces} \cite{Gunning:1969rm}, by which we mean a generalization of complex manifolds that allow the presence of singularities and are locally isomorphic to the common zero locus of a finite collection of holomorphic functions. Our point of view, which will be reflected in our introduction to orbifolds, lies somewhere in between the two extremes mentioned above: We will treat the subclass of codimension $\geq 2$ orbifold singularities as the intrinsic structure of an \emph{underlying complex analytic space}\footnote{This underlying analytic space is actually the same as the coarse moduli space of the orbifold regarded as an analytic stack.} while the orbifold singularities of codimension-1 still need to be treated as additional structures on this analytic space (see \cite{Boyer_2008} for more details).

Throughout this paper, we will need to work with different characterizations of orbifold Riemann surfaces that have appeared in the literature: In order to introduce orbifold geometric structures in Appendix~\ref{Apx:geometricorbifolds}, we will need to work with a definition of complex orbifolds based on \emph{orbifold charts} \cite{kawasaki1979riemann} while a characterization of Riemann orbisurfaces as \emph{log pairs} \cite{Campana2004} will be more suitable for studying orbifold metrics \cite{McOwen-1988}. A third way of characterizing Riemann orbisurfaces will be as \emph{Riemann surfaces with signature} and this viewpoint -- which is closely related to the notion of Riemann orbisurfaces as log pairs -- is the one that we have adapted in the main body of this paper. 

Our presentation in this appendix is closer to that of references \cite[\textsection4]{Boyer_2008}, \cite[Appx.~E]{milnor1990dynamics}, and \cite{uludaug2007orbifolds}. For more details about other approaches, the reader is encouraged to consult with \cite{thurston80,Scott1983TheGO,Cooper_2000, caramello2019introduction,adem2007orbifolds} among others.
\subsection{Analytic Geometry}\label{orbifolddef}
We will start our introduction to complex orbifolds with a quick review of some background information about complex analytic spaces and analytic mappings --- particularly, ramified covering maps of analytic spaces. The reader is advised to consult with references \cite{gunning1970lectures, Gunning:1969rm, Grauert-Remmert, Riemenschneide_2011, namba1991finite, namba1992finite, uludaug2007orbifolds} for more details.
\subsubsection{Complex Analytic Spaces and Analytic Mappings}
Let us start our review of complex analytic spaces with defining a complex analytic subvariety:
\begin{definition}[Analytic subvariety]\label{def:subvariety}
	Let $U$ be an open subset of $\cmpx^{\mathfrak{n}}$ (or of any complex analytic manifold $M$) and let $X$ be a subset of $U$. We say that $X$ is an \emph{analytic subvariety} in $U$ if, for any point $x$ in $U$, there exist a neighborhood $V$ of $x$ and a finite number of holomorphic functions $f_1, \dots, f_k$ on $V$ such that
	\begin{equation*}
	X \cap V = \left\{z \in V \, \Big| \, f_1(z) = \dotsm = f_k(z) = 0\right\}.
	\end{equation*}
	In other words, in some open neighborhood of each point of $U$, an analytic subvariety $X \subset U$ is the set of common zeros of a finite number of complex analytic functions. We call $f_1,\dots,f_k$ a system of local defining functions for $X$ and a non-empty analytic subvariety of $U$ which is locally defined by a \emph{single} (not identically zero) holomorphic function will be called an \emph{analytic hypersurface} in $U$. Finally, a subvariety $X$ of $U$ is called \emph{irreducible} if it cannot be written as a union $X = X_1 \cup X_2$ where $X_i$ are analytic subvarieties of $U$ properly contained in $X$.
\end{definition}

As suggested before, complex analytic subvarieties can be viewed as a generalization of complex (sub)manifolds which allow for the presence of singularities:
\begin{definition}[Smooth and singular points of subvarieties]
	A point $x$ on an analytic subvariety $X$ is said to be \emph{regular} or \emph{smooth} if it is possible to choose coordinates $(z_1,\dots,z_{\mathfrak{n}})$ in an open neighborhood $V \subset U \subset \cmpx^{\mathfrak{n}}$ of point $x$ such that locally $X$ is a linear subspace $\{z \in V \, | \,  z_{k+1} = \dotsm = z_{\mathfrak{n}} = 0 \}$ --- i.e. if $X \cap V$ is a $k$-dimensional submanifold of $\cmpx^{\mathfrak{n}}$. The set of all regular points of $X$ is an open dense subset of $X$ and will be denoted by $\operatorname{Reg}(X)$. The points of a subvariety that are not regular points are called the \emph{singular points}. The set $X\backslash \operatorname{Reg}(X)$ of all singular points of $X$ will be denoted by $\sing(X)$ and is called the \emph{singular locus} of $X$.
\end{definition}
\noindent An analytic subvariety $X$ will be called a \emph{smooth analytic subvariety} if $X= \operatorname{Reg}(X)$; evidently, a smooth analytic variety is just a complex analytic manifold itself. When $X$ is an irreducible analytic subvariety, the \emph{complex dimension of $X$} is defined as the dimension of its smooth part $\operatorname{Reg}(X)$ regarded as a complex manifold. More generally, if $X$ is reducible, the dimension of $X$ is defined as the maximum of the dimensions of its irreducible components. A reducible analytic subvariety $X$ is called \emph{pure dimensional} if every irreducible component of $X$ has the same dimension.

It is clear that analytic subvariety $X \subset U$ can be endowed with the relative topology coming from $U$; however, the main point in the study of analytic subvarieties is that one should take into account consideration not only about the topology of these analytic subvarieties but also about their \emph{function-theoretic properties}: For simplicity, let us take the open set $U \subset \cmpx^{\mathfrak{n}}$ to be a sufficiently small open polydisc $\varDelta(\boldsymbol{\epsilon})$ such that an analytic subvariety $X$ of $\varDelta(\boldsymbol{\epsilon})$ can be determined as the set of common zeros of a finite number of functions that are analytic throughout $\varDelta(\boldsymbol{\epsilon})$. Under the natural addition and multiplication of complex-valued functions, the set of all holomorphic functions on $\varDelta(\boldsymbol{\epsilon})$ forms a ring $\mathfrak{O}_{\varDelta(\boldsymbol{\epsilon})}$ containing the constants $c \in \cmpx^{\mathfrak{n}}$ --- hence in fact a $\cmpx$-algebra. The set of all analytic functions in $\varDelta(\boldsymbol{\epsilon})$ which vanish on $X$ form an \emph{ideal} $\mathfrak{I}(X)$ in the ring $\mathfrak{O}_{\varDelta(\boldsymbol{\epsilon})}$, called the \emph{ideal of $X$}. Then, the \emph{ring of holomorphic functions on $X$} is given by the quotient ring $\mathfrak{O}_{X}:=\mathfrak{O}_{\varDelta(\boldsymbol{\epsilon})}/\mathfrak{I}(X)$. It is easy to see that a subvariety $X$ of $\varDelta(\boldsymbol{\epsilon})$ is irreducible precisely when the ideal $\mathfrak{I}(X)$ is a \emph{prime} ideal in $\mathfrak{O}_{\varDelta(\boldsymbol{\epsilon})}$, by which we mean $\mathfrak{I}(X) \neq \mathfrak{O}_{\varDelta(\boldsymbol{\epsilon})}$ and for any two holomorphic functions $f, f' \in \mathfrak{O}_{\varDelta(\boldsymbol{\epsilon})}$ the statement $f  f' \in \mathfrak{I}(X)$ implies $f \in \mathfrak{I}(X)$ or $f' \in \mathfrak{I}(X)$ (or both).

\begin{remark}\label{rmk:radical}
	Let $f_1,\dots,f_k \in \mathfrak{O}_{\varDelta(\boldsymbol{\epsilon})}$ be the system of defining functions for general (i.e., not necessarily irreducible) subvariety $X$. Then, the defining functions $f_1,\dots,f_k$ generate an \emph{ideal} $\mathfrak{I}$ in the $\mathfrak{O}_{\varDelta(\boldsymbol{\epsilon})}$ which is sometimes called the \emph{defining ideal} for $X$. While all holomorphic functions $f \in \mathfrak{I}$ vanish on $X$, i.e. $\mathfrak{I} \subseteq \mathfrak{I}(X)$, the inverse statement is \emph{not} necessarily true --- i.e. in general $\mathfrak{I}(X) \nsubseteq \mathfrak{I}$. However, provided that the polydisc $\varDelta(\boldsymbol{\epsilon})$ is small enough, an important fact is that \emph{all} holomorphic functions $f \in \mathfrak{I}(X)$ have a power which is contained in $\mathfrak{I}$. This motivates us to define the \emph{radical} of $\mathfrak{I}$ as the set
	\begin{equation*}
	\sqrt{\mathfrak{I}} := \left\{f \in \mathfrak{O}_{\varDelta(\boldsymbol{\epsilon})} \, \Big| \, f^{k'} \in \mathfrak{I} \text{ for some positive integer } k'\right\}.
	\end{equation*}
	This is again an ideal in $\mathfrak{O}_{\varDelta(\boldsymbol{\epsilon})}$ and we have $\sqrt{\mathfrak{I}} = \mathfrak{I}(X)$. When the defining ideal $\mathfrak{I}$ is prime in $\mathfrak{O}_{\varDelta(\boldsymbol{\epsilon})}$, we get $\sqrt{\mathfrak{I}} = \mathfrak{I}$; hence, $\mathfrak{I} = \mathfrak{I}(X)$ exactly when $X$ is irreducible.
\end{remark}
More generally, when the open set $U \subset \cmpx^{\mathfrak{n}}$ is not restricted to be small (or to be an open polydisc), the above statements hold true in small neighborhoods of each point $x \in X$. For such local considerations, it is often convenient to introduce the notion of \emph{germs}: To make this precise, consider two pairs $(X, U)$ and $(X', U')$ where $U, U'$ are open neighborhoods of the origin in $\cmpx^{\mathfrak{n}}$ and $X,X'$ are analytic subvarieties of $U, U'$ respectively. The two pairs $(X, U)$ and $(X', U')$ are said to define the same \emph{germ of analytic subvarieties} at the origin in $\cmpx^{\mathfrak{n}}$ if there exists a neighborhood $W \subseteq U \cap U'$ of 0 such that $X \cap W = X' \cap W$. We will denote the germ of analytic subvariety $X$ at 0 in $\cmpx^{\mathfrak{n}}$ by $[X]_0$. Now, let $\mathfrak{O}_{U}$ be the ring of holomorphic functions in some open subset $U \subset \cmpx^{\mathfrak{n}}$ containing the origin. Analogously, we can define an equivalence relation $\sim_{0}$ between two holomorphic functions $f,f' \in \mathfrak{O}_{U}$ where $f \sim_{0} f'$ if there exists a neighborhood $W$ of 0 such that the restrictions of $f$ and $f'$ to $W$ are identical --- i.e. $f\big|_{_{W}} = f'\big|_{_{W}}$. The equivalence class of a function $f$ is called the \emph{germ of holomorphic function $f$ at the origin} and will be denoted by $[f]_0$. In addition, the quotient ring $\mathfrak{O}_{U,0} := \mathfrak{O}_{U}/\sim_{0}$ will be the \emph{ring of germs of holomorphic functions at the origin}.\footnote{If we denote by $\cmpx\{ z_1, \dots,z_{\mathfrak{n}}\}$ the set of power series which converge absolutely in some neighborhood of 0, this set also has the structure of a ring. Since, as in the one variable case, $f \sim_{0} f'$ if and only if $f$ and $f'$ have the same power series expansion, we may identify $\mathfrak{O}_{U,0}$ with $\cmpx\{ z_1, \dots,z_{\mathfrak{n}}\}$.}

Similar to the case of analytic subvariety of a sufficiently small open polydisc, to each germ $[X]_0$ of an analytic subvariety at the origin in $\cmpx^{\mathfrak{n}}$ there is canonically associated an ideal $\mathfrak{I}([X]_0)$ in the local ring\footnote{A commutative ring is called \emph{local} if it has only one maximal ideal. The unique maximal ideal of $\mathfrak{O}_{\cmpx^{\mathfrak{n}},0}$ (equivalently, $\cmpx\{ z_1, \dots,z_{\mathfrak{n}}\}$) corresponds to the set of all germs of holomorphic functions that vanish at the origin (equivalently, the set of all power series whose constant term vanishes).} $\mathfrak{O}_{\cmpx^{\mathfrak{n}},0}$ which is defined as the ideal of germs of all analytic functions vanishing on the subvariety $X$ representing the germ $[X]_0$. In the other direction, to each ideal $\mathfrak{I} \subseteq \mathfrak{O}_{\cmpx^{\mathfrak{n}},0}$ there is canonically associated a germ of an analytic subvariety at the origin in $\cmpx^{\mathfrak{n}}$, called the \emph{locus of the ideal $\mathfrak{I}$} and denoted by $[X(\mathfrak{I})]_0$. The germ $[X(\mathfrak{I})]_0$ is defined as the germ represented by the analytic subvariety $X = \{z \in U \, | \, f_1(z) = \dotsm = f_k(z) = 0\}$ of the open set $U \subset \cmpx^{\mathfrak{n}}$, where $f_i \in  \mathfrak{O}_{U}$ are analytic functions in $U$ whose germs in $\mathfrak{O}_{U,0}$ generate the ideal $\mathfrak{I}$. Note that, similar to what we saw in Remark~\ref{rmk:radical}, $\mathfrak{I}\big([X(\mathfrak{I})]_0\big) = \sqrt{\mathfrak{I}}$ and $\sqrt{\mathfrak{I}} = \mathfrak{I}$ iff the ideal $\mathfrak{I} \subseteq \mathfrak{O}_{\cmpx^{\mathfrak{n}},0}$ is prime (equivalently, if the germ $[X(\mathfrak{I})]_0$ is irreducible). Finally, the residue class ring $\mathfrak{O}_{\cmpx^{\mathfrak{n}},0}/\mathfrak{I}([X]_0)$ will now be denoted by $\mathfrak{O}_{X,0}$ and will be called the \emph{ring of germs of holomorphic functions on the subvariety $X$} at the origin in $\cmpx^{\mathfrak{n}}$.

Having learned how to think about the function-theoretic properties of an analytic subvariety locally, we are now ready to study those properties from a global perspective; it is then convenient to think about \emph{sheaves} of rings/ideals/modules as we shall now explain: Let us start with the local rings $\mathfrak{O}_{\cmpx^{\mathfrak{n}},z}$ of germs of holomorphic functions at any point $z \in \cmpx^{\mathfrak{n}}$. The set of rings $\mathfrak{O}_{\cmpx^{\mathfrak{n}}, z}$ for all points $z \in \cmpx^{\mathfrak{n}}$ can be taken to form the \emph{sheaf of germs of holomorphic functions of $\mathfrak{n}$ complex variables} $\mathscr{O}_{\cmpx^{\mathfrak{n}}}$; the restriction $\mathscr{O}_{\cmpx^{\mathfrak{n}}}\big|_{_{U}}$ of the sheaf of rings $\mathscr{O}_{\cmpx^{\mathfrak{n}}}$ to any open set $U \subset \cmpx^{\mathfrak{n}}$ will be simply denoted by $\mathscr{O}_{U}$. Similarly, consider an analytic subvariety $X$ of an open subset $U \subset \cmpx^{\mathfrak{n}}$ and to each point $z \in U$ associate the ideal $\mathfrak{I}([X]_z) \subseteq \mathfrak{O}_{\cmpx^{\mathfrak{n}},z}$ of the germ of the subvariety $X$ at that point (if $z \notin X$ the ideal $\mathfrak{I}([X]_z)$ is of course the trivial ideal $\mathfrak{O}_{\cmpx^{\mathfrak{n}},z}$). The set of all ideals $\mathfrak{I}([X]_z)$ at any point $z \in U$ form an analytic subsheaf \footnote{An \emph{analytic sheaf} over an open set $U \subset \cmpx^{\mathfrak{n}}$ is a sheaf of modules over $\mathscr{O}_{U}$.} of the sheaf $\mathscr{O}_{U}$ over the set $U$, which will be dented by $\mathscr{I}(X)$ and called the \emph{sheaf of ideals of the analytic subvariety $X$}. Finally, the restriction to the subvariety $X$ of the analytic sheaf $\mathscr{O}_{U}/\mathscr{I}(X)$ will be called the \emph{the sheaf of germs of holomorphic functions on the subvariety $X$} and is dented by $\mathscr{O}_{X}$; the local rings $\mathfrak{O}_{X,x} = \mathfrak{O}_{U,x}/\mathfrak{I}([X]_x)$ at any point $x \in X$ can then be viewed as the \emph{stalks} of $\mathscr{O}_{X}$.

Locally, a germ of an analytic subvariety determines a germ of a topological space and this space further possess a distinguished subring of the ring of germs of continuous complex-valued functions, namely the ring of germs of holomorphic functions on the subvariety. This observation suggests that the correct way of characterizing an analytic subvariety $X$ is as a \emph{$\cmpx$-ringed space}:
\begin{definition}
	A \emph{ringed space} $X$ is a pair $(|X|,\mathscr{O}_X)$ consisting of a Hausdorff topological space $|X|$ and a sheaf of rings $\mathscr{O}_{X}$ on $|X|$, called the \emph{structure sheaf} of $X$. It is called a \emph{locally ringed space} when, for every $x \in |X|$, the stalk $\mathscr{O}_{X,x}$ is a local ring. Its maximal ideal is denoted by $\mathfrak{m}_{X,x}$. A locally ringed space is called a \emph{$\cmpx$-ringed space} when furthermore $\mathscr{O}_{X}$ is a sheaf of $\cmpx$-algebras and, for every $x \in |X|$, there is an isomorphism $\mathscr{O}_{X,x}/\mathfrak{m}_{X,x} \cong \cmpx$ of $\cmpx$-algebras.
\end{definition}
\noindent It is clear that for analytic subvarieties, the role of structure sheaf is played by the \emph{sheaf of germs of holomorphic functions} on the subvariety.

The notion of an analytic subvariety as defined in \ref{def:subvariety} depends quite essentially on a particular embedding in the ambient space $\cmpx^{\mathfrak{n}}$. For example, the germ of an analytic subvariety at the origin in $\cmpx^{\mathfrak{n}}$ can also be viewed as the germ of an analytic subvariety at the origin in $\cmpx^{\mathfrak{n}+1}$ through the canonical embedding $\cmpx^{\mathfrak{n}} \hookrightarrow \cmpx^{\mathfrak{n}+1}$ but these will be inequivalent germs of analytic subvarieties. It is thus evident that there is a point to introducing an equivalence relation among analytic subvarieties in order to investigate those properties, which are to some extent independent of the embeddings of these subvarieties in their ambient complex number spaces. 

Once again, for the sake of simplicity, let us consider sufficiently small open polydiscs $\varDelta^{\mathfrak{n}_1}(\boldsymbol{\epsilon}_1) \subset \cmpx^{\mathfrak{n}_1}$ and  $\varDelta^{\mathfrak{n}_2}(\boldsymbol{\epsilon}_2) \subset \cmpx^{\mathfrak{n}_2}$ such that analytic subvarieties $X_1 \subset \varDelta^{\mathfrak{n}_1}(\boldsymbol{\epsilon}_1)$ and $X_2 \subset \varDelta^{\mathfrak{n}_2}(\boldsymbol{\epsilon}_2)$ can be determined as the set of common zeros of a finite number of functions that are analytic throughout $\varDelta^{\mathfrak{n}_1}(\boldsymbol{\epsilon}_1)$ and $\varDelta^{\mathfrak{n}_2}(\boldsymbol{\epsilon}_2)$, respectively. A continuous mapping between two such analytic subvarieties $X_1 \subset \varDelta^{\mathfrak{n}_1}(\boldsymbol{\epsilon}_1)$ and $X_2 \subset \varDelta^{\mathfrak{n}_2}(\boldsymbol{\epsilon}_2)$ is said to be a \emph{complex analytic mapping} $f: X_1 \to X_2$ between these two subvarieties if there is a holomorphic mapping $F: \varDelta^{\mathfrak{n}_1}(\boldsymbol{\epsilon}_1) \to  \varDelta^{\mathfrak{n}_2}(\boldsymbol{\epsilon}_2)$ such that the restriction of $F$ to the subvariety $X_1 \subset \varDelta^{\mathfrak{n}_1}(\boldsymbol{\epsilon}_1)$ is just $f$ --- i.e. $F \big|_{X_1} = f$. Additionally, two analytic subvarieties $X_1 \subset \varDelta^{\mathfrak{n}_1}(\boldsymbol{\epsilon}_1)$ and $X_2 \subset \varDelta^{\mathfrak{n}_2}(\boldsymbol{\epsilon}_2)$ are said to be \emph{analytically equivalent} if there are complex analytic mappings $f: X_1 \to X_2$ and $g: X_2 \to X_1$ such that the compositions $f \circ g$ and $g \circ f$ are the appropriate identity mappings. This notion of equivalence thus allows one to speak of analytic subvarieties without reference to the spaces in which they are embedded; an equivalence class is called an \emph{analytic variety}, and a space which has locally the
structure of an analytic variety is called an \emph{analytic space}.

For global considerations, it is more convenient to think about an analytic subvariety $X$ as a  $\cmpx$-ringed space $(|X|,\mathscr{O}_X)$. Complex analytic mappings $f: X_1 \to X_2$ between two subvarieties $X_1$ and $X_2$ are then viewed as \emph{morphisms of $\cmpx$-ringed spaces} between $(|X_1|,\mathscr{O}_{X_1})$ and $(|X_2|,\mathscr{O}_{X_2})$: 

\begin{definition}[Morphism of $\cmpx$-ringed spaces]
	A \emph{morphism} $f: X_1 \to X_2$ of ringed spaces $(|X_1|,\mathscr{O}_{X_1})$ and $(|X_2|,\mathscr{O}_{X_2})$ is a pair $f=(|f|,f^{\ast})$ consisting of a continuous map
	\begin{equation*}
	|f|: |X_1| \to |X_2|
	\end{equation*}
	and a homomorphism
	\begin{equation*}
	f^{\ast}: \mathscr{O}_{X_2} \to |f|_{\ast}(\mathscr{O}_{X_1})
	\end{equation*}
	of sheaves of rings on $X_2$. For any point $x \in X_1$, we think of $f^{\ast}_{x}$ as the ring homomorphism
	\begin{equation*}
	f_{x}^{\ast}: \mathscr{O}_{X_2,f(x)} \to \mathscr{O}_{X_1,x}
	\end{equation*}
	defined as the composition of the canonical homomorphisms
	\begin{equation*}
	\mathscr{O}_{X_2,f(x)} \to \Big(|f|_{\ast}(\mathscr{O}_{X_1})\Big)_{f(x)} \to \mathscr{O}_{X_1,x}.
	\end{equation*}
	In case $X_1$ and $X_2$ are locally ringed spaces, a morphism by definition has to be \emph{local}, that is, satisfy
	\begin{equation*}
	f_{x}^{\ast}\big(\mathfrak{m}_{X_2,f(x)}\big) \subset \mathfrak{m}_{X_1,x}
	\end{equation*}
	for every $x \in X_1$. A \emph{morphism of $\cmpx$-ringed spaces} $X_1$ and $X_2$ is a morphism of ringed spaces, where $f^{\ast}$ is furthermore a homomorphism of sheaves of $\cmpx$-algebras. In this case, $f_{x}^{\ast}$ is automatically local for every $x \in X_1$.
\end{definition}
\noindent It is an immediate consequence of this definition that two analytic subvarieties $X_1$ and $X_2$ determine \emph{equivalent varieties} if and only if there is a \emph{topological homeomorphism} $|f|: |X_1| \to |X_2|$ inducing an isomorphism $f^{\ast}: \mathscr{O}_{X_2} \xrightarrow{\cong} |f|_{\ast}(\mathscr{O}_{X_1})$ between the sheaves of $\cmpx$-algebras. Thus, the coherent analytic sheaf $\mathscr{O}_X$ on an analytic subvariety $X$ is the complete invariant determining equivalence as varieties --- hence the name \emph{structure sheaf}.

To build up complex analytic spaces, we construct local models as follows: Let $U \subset \cmpx^{\mathfrak{n}}$ be an open subset, and assume that $\mathscr{I}$ is a coherent sheaf of $\mathscr{O}_U$-ideals. Then
\begin{equation*}
\operatorname{Supp}\big(\mathscr{O}_U/\mathscr{I}\big) = \Big\{x \in U \, \Big| \, \big(\mathscr{O}_U/\mathscr{I}\big)_{x} \neq 0 \Big\}
\end{equation*}
is an analytic subset of $U$ which we will denote by $\tilde{A}$. The pair $\big(\tilde{A}, (\mathscr{O}_U/\mathscr{I})\big|_{\tilde{A}}\big)$ is a $\cmpx$-ringed space, which is called a \emph{local model of an analytic space}.
\begin{definition}[Complex analytic spaces and analytic mappings]
	A \emph{complex analytic space}, or an \emph{analytic space} for short, is a $\cmpx$-ringed space $(|X|, \mathscr{O}_X)$ satisfying the following conditions:
	\begin{enumerate}[(i)]
		\item $|X|$ is Hausdorff,
		\item For every $x \in X$, there is an open neighbourhood $V_x$ of $x$ such that $(V_x, \mathscr{O}_X\big|_{V_{x}})$ is isomorphic (as $\cmpx$-ringed space) to some local model.
	\end{enumerate}
	If $X= (|X|, \mathscr{O}_X) $ and $Y=(|Y|, \mathscr{O}_Y)$ are complex analytic spaces, then any morphism
	\begin{equation*}
	(|f|,f^{\ast}) : (|X|, \mathscr{O}_X) \to (|Y|, \mathscr{O}_Y),
	\end{equation*}
	of $\cmpx$-ringed spaces is called an \emph{analytic map} (or \emph{holomorphic map}).
\end{definition}
\begin{remark}
	Note that any complex manifold $M$ can be considered as an analytic space $(|M|, \mathscr{O}_M)$ and holomorphic maps $f: M \to N$ between complex manifolds can be extended to analytic mappings $(|f|,f^{\ast}): (|M|, \mathscr{O}_M) \to (|N|, \mathscr{O}_N)$.
\end{remark}

Let $|X|$ be a topological space. Any analytic variety $(|U|, \mathscr{O}_{U})$, where $|U|$ is an open set in $|X|$, is called an \emph{analytic chart} on $|X|$. A family $\big\{(|U_a|, \mathscr{O}_a), g^{\ast}_{ab}\big\}_{a,b \in A}$ consisting of complex charts on $|X|$ and of $\cmpx$-algebra isomorphisms 
\begin{equation*}
g^{\ast}_{ab}: \mathscr{O}_b\big|_{_{|U_a| \cap |U_b|}} \to  \mathscr{O}_a\big|_{_{|U_a| \cap |U_b|}},
\end{equation*}
is called an \emph{analytic atlas} on $X$, if $\big\{|U_a|\big\}_{a \in A}$ is an open covering of $|X|$, and if furthermore
\begin{equation*}
g^{\ast}_{ab} \circ  g^{\ast}_{bc} =  g^{\ast}_{ac} \quad \text{for all} \quad a,b,c \in A \qquad \text{(coycle condition)};
\end{equation*}
the maps $g^{\ast}_{ab}$ are the \emph{gluing isomorphisms} of the atlas. Furthermore, we have the following lemma (see e.g. \cite[\textsection1.7]{Grauert-Remmert} for the proof):
\begin{lemma}[Gluing Lemma]\label{lemma:gluing}
	Let $\big\{(|V_i|, \mathscr{O}_i), g^{\ast}_{ij}\big\}_{i,j \in I}$ be an analytic atlas on a Hausdorff topological space $|X|$. Then, there exists a unique (up to isomorphism) complex analytic space $(|X|, \mathscr{O}_X)$ and $\cmpx$-algebra isomorphisms $f^{\ast}_i: \mathscr{O}_{X}\big|_{_{|V_i|}} \to \mathscr{O}_i$ for all $i 
	\in I$, such that
	\begin{equation*}
	g^{\ast}_{ij} = f^{\ast}_i \circ (f^{\ast}_j)^{-1}
	\end{equation*}
	on every intersection $|V_i| \cap |V_j|$.
\end{lemma}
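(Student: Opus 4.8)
The plan is to build the global structure sheaf $\mathscr{O}_X$ directly out of compatible families of local sections, and then to verify that the resulting $\cmpx$-ringed space is an analytic space carrying the prescribed gluing isomorphisms. First I would define, for each open set $W \subseteq |X|$, the $\cmpx$-algebra
\[
\mathscr{O}_X(W) := \Big\{(s_i)_{i \in I} \ \Big|\ s_i \in \mathscr{O}_i(W \cap |V_i|), \ s_i\big|_{W \cap |V_i| \cap |V_j|} = g^{\ast}_{ij}\big(s_j\big|_{W \cap |V_i| \cap |V_j|}\big) \ \text{for all } i,j \in I\Big\},
\]
with the algebra operations and restriction homomorphisms defined componentwise. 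Since each $\mathscr{O}_i$ is itself a sheaf and the compatibility condition is local in $W$, the two sheaf axioms (locality and gluing) for $\mathscr{O}_X$ reduce to checking them componentwise in each $\mathscr{O}_i$; a compatible family of these compatible-families therefore patches uniquely to a global one.

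Next I would produce the isomorphisms. Define $f^{\ast}_i \colon \mathscr{O}_X\big|_{|V_i|} \to \mathscr{O}_i$ by sending a family $(s_j)_j$ to its $i$-th component $s_i$. Its candidate inverse sends $t \in \mathscr{O}_i(W)$, for $W \subseteq |V_i|$, to the family $\big(g^{\ast}_{ji}(t|_{W \cap |V_j|})\big)_j$. Here the cocycle relation $g^{\ast}_{ij} \circ g^{\ast}_{jk} = g^{\ast}_{ik}$ does all the work: it guarantees that this family satisfies the compatibility condition defining $\mathscr{O}_X$, and — after first extracting from it the identities $g^{\ast}_{ii} = \operatorname{id}$ and $g^{\ast}_{ji} = (g^{\ast}_{ij})^{-1}$ — that the two maps are mutually inverse $\cmpx$-algebra isomorphisms. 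The required identity $g^{\ast}_{ij} = f^{\ast}_i \circ (f^{\ast}_j)^{-1}$ over $|V_i| \cap |V_j|$ is then immediate, since $(f^{\ast}_j)^{-1}$ produces precisely the family whose $i$-th component is $g^{\ast}_{ij}(t)$.

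With the ringed space in hand, I would check the two axioms in the definition of a complex analytic space: $|X|$ is Hausdorff by hypothesis, and each $\big(|V_i|, \mathscr{O}_X\big|_{|V_i|}\big) \cong \big(|V_i|, \mathscr{O}_i\big)$ is, by assumption on the atlas, an analytic chart and hence isomorphic near every point to a local model. Finally, for uniqueness I would take any second solution $\big(|X|, \mathscr{O}'_X\big)$ equipped with isomorphisms $f'^{\ast}_i$, form the local isomorphisms $(f'^{\ast}_i)^{-1} \circ f^{\ast}_i \colon \mathscr{O}_X\big|_{|V_i|} \to \mathscr{O}'_X\big|_{|V_i|}$, and note that, because $f^{\ast}_i = g^{\ast}_{ij} \circ f^{\ast}_j$ and $f'^{\ast}_i = g^{\ast}_{ij} \circ f'^{\ast}_j$ encode the \emph{same} cocycle $g^{\ast}_{ij}$, these maps agree on every overlap $|V_i| \cap |V_j|$; they therefore glue (again by the sheaf property) to a global $\cmpx$-algebra isomorphism $\mathscr{O}_X \cong \mathscr{O}'_X$ intertwining the $f^{\ast}_i$ and $f'^{\ast}_i$.

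The construction is conceptually routine, so the only genuinely delicate points are bookkeeping: confirming that the cocycle condition really forces $g^{\ast}_{ii} = \operatorname{id}$ and $g^{\ast}_{ji} = (g^{\ast}_{ij})^{-1}$ (without which the inverse maps are not even well-defined), and carrying out the gluing axiom componentwise with the indices kept straight. Thus I expect the main obstacle to be organizational rather than mathematical, the whole content being a careful unwinding of the cocycle identity.
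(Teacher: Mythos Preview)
Your construction is the standard and correct one for gluing sheaves along a cocycle, and the verification that the result is a complex analytic space is handled properly. The paper itself does not supply a proof of this lemma; it simply cites \textsection1.7 of Grauert--Remmert, where precisely this construction (defining sections of the glued sheaf as compatible families and invoking the cocycle identity) is carried out. So your proposal is in line with the intended argument, and there is nothing to correct.
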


A complex analytic space $Y$ is called an \emph{open complex analytic subspace} of $X$, if $|Y|$ is an open subset of $|X|$, and $\mathscr{O}_Y = \mathscr{O}_X\big|_{_Y}$. In addition, $Y$ is called a \emph{closed complex analytic subspace} of $X$, if there is a coherent ideal $\mathscr{J} \subset  \mathscr{O}_X$ such that $|Y|= \operatorname{Supp}(\mathscr{O}_X/\mathscr{J})$ and $\mathscr{O}_Y = (\mathscr{O}_X/\mathscr{J})\big|_{Y}$.
In this case, there is a canonical analytic map determined by the injection, which we denote by $Y \hookrightarrow X$. A subset $A$ of a complex analytic space $X$ is called \emph{analytic} when there is a coherent ideal $\mathscr{J}\subset \mathscr{O}_X$ such that $A = \operatorname{Supp}(\mathscr{O}_X/\mathscr{J})$. 

We now discuss several possibilities of how good (respectively, how bad) a given analytic space $X= (|X|, \mathscr{O}_X) $ may behave at a point $x \in X$. The situation is optimal if $x$ is a smooth point of $X$:
\begin{definition}[Smooth and singular points of analytic spaces]
	A point $x$ in an analytic space $X=(|X|, \mathscr{O}_{X})$ is called \emph{smooth} or \emph{regular}, if there exists a neighborhood $V_x$ of $x$ in $|X|$ and an open set $U$ in some complex number space $\cmpx^\mathfrak{n}$ such that $(V_x, \mathscr{O}_X\big|_{_{V_x}})$ and $(U, \mathscr{O}_{U})$ are analytically isomorphic --- i.e. if there exist analytic mappings 
	\begin{equation*}
	(|f|,f^{\ast}): (V_x, \mathscr{O}_X\big|_{_{V_x}}) \to (U, \mathscr{O}_{U}) \quad \text{and} \quad (|g|,g^{\ast}): (U, \mathscr{O}_{U}) \to (V_x, \mathscr{O}_X\big|_{_{V_x}}),
	\end{equation*}
	such that the compositions $|f| \circ |g|$, $|g| \circ |f|$, $f^{\ast} \circ g^{\ast}$, and $g^{\ast} \circ f^{\ast}$ are the appropriate identity mappings. In other words, a point $x \in X$ is smooth if and only if $\mathscr{O}_{X,x} \cong \mathscr{O}_{\cmpx^{\mathfrak{n}},x}$. Of course, the singular locus is defined to be the set of all non-smooth points: $\sing(X) = X\backslash \operatorname{Reg}(X)$.
\end{definition}
\noindent 
When $X$ is not necessarily smooth (i.e., has a non-empty singular locus), we define the following notions regarding the behavior of an analytic space $X$ at a point $x \in X$: 
\begin{itemize}
	\item  The analytic space $X$ is called \emph{irreducible at point $x$} if the stalk $\mathscr{O}_{X,x}$ is an integral domain, otherwise $X$ is called reducible at $x$. All smooth points are irreducible points, since for such a point $x$, the stalk $\mathscr{O}_{X,x}$ is isomorphic to the ring of convergent power series $\cmpx\{z_1,\dots,z_{\mathfrak{n}}\}$. The analytic space $X$ will be called \emph{locally irreducible} if all points of $X$ are irreducible; in particular, complex manifolds are locally irreducible. 
	\item The complex analytic space $X$ is called \emph{reduced at $x$} if the stalk $\mathscr{O}_{X,x}$ is a reduced ring ---  i.e. does not contain nilpotent elements. All irreducible points are reduced points of $X$. We call $X$ a \emph{reduced analytic space} if $X$ is reduced at all its points; this happens when every local model for the space is defined by a radical sheaf of ideals. An analytic space $X$, which isn't reduced has a reduction $X_{\text{red}}$, which is a reduced analytic space with the same underlying topological space. There exists a canonical embedding $\iota: X_{\text{red}} \hookrightarrow X$ and every morphism from $X$ to a reduced analytic space factors through $\iota$. In particular, every analytic mapping $f: Y \to X$ of complex analytic spaces induces a canonical analytic morphism
	$f_{\text{red}} : Y_{\text{red}} \to X_{\text{red}}$ of their reductions such that $f \circ \iota_Y = \iota_X \circ f_{\text{red}}$ where $\iota_X: X_{\text{red}} \hookrightarrow X$ and $\iota_Y: Y_{red} \hookrightarrow Y$ are the canonical embeddings. When $Y=Y_{\text{red}}$, the sheaf homomorphism component, $f^{\ast}$, of an analytic map $(|f|,f^{\ast}): (|Y|,\mathscr{O}_Y) \to (|X|,\mathscr{O}_X)$ is \emph{uniquely} determined by its continuous mapping component $|f|$. 
	\item A reduced point $x \in X$ will be called a \emph{normal point} of $X$, if the stalk $\mathscr{O}_{X,x}$ is integrally closed in its quotient ring. Smooth points are normal, and $X$ is irreducible at every normal point. An analytic space $X$ will be called \emph{normal} if every point $x \in X$ is a normal point; in a normal analytic space, the \emph{singular locus has codimension at least two}. Once again, all non-normal analytic spaces can be smoothed out into normal spaces in a canonical way; this construction is called the \emph{normalization}.
\end{itemize}
In the rest of this appendix, we will mainly focus on \emph{normal analytic spaces}.
\subsubsection{An Intermezzo on Line Bundles and Divisors}
In this subsection, we will introduce the basic definitions concerning line bundles and divisors. Although we are mainly interested in complex analytic spaces, we will mostly focus our attention to complex manifolds (i.e., smooth analytic spaces) to avoid complication; we will comment on some of the subtleties of generalizing the introduced notion to singular analytic spaces. We start with reviewing the notions of connection, curvature, and Chern classes for complex vector bundles (see \cite{Boyer_2008,Grauert-Remmert,suwa2007introduction,akhiezer2012several} for more details).
\subsubsection*{Complex Vector Bundles}
Let $M$ be a complex manifold of dimension $\mathfrak{n}$. A \emph{complex vector bundle} of \emph{rank} $r$ over $M$ is a smooth manifold $E$ together with a continuous map $\pi: E \to M$ such that there exists an open covering $\mathcal{V} = \{V_{\alpha}\}_{\alpha \in A}$ of $M$ with the following properties:
\begin{enumerate}[(i)]
	\item for each $\alpha \in A$, there is a homeomorphism
	\begin{equation*}
	\psi_{\alpha}: \pi^{-1}(V_{\alpha}) \xrightarrow{\sim} V_{\alpha} \times \cmpx^r
	\end{equation*}
	with $\operatorname{pr} \circ \psi_{\alpha} = \pi$ where $\operatorname{pr}$ denotes the projection $V_{\alpha} \times \cmpx^r \to V_{\alpha}$;
	\item for each pair $(\alpha, \beta) \in A \times A$, there is a $\mathcal{C}^{\infty}$ map
	\begin{equation*}
	g^{\alpha \beta}: V_{\alpha} \cap V_{\beta} \to \operatorname{GL}(r,\cmpx)
	\end{equation*}
	with
	\begin{equation*}
	\psi_{\alpha} \circ \psi_{\beta}^{-1} (p,\zeta) = \big( p, g^{\alpha \beta}(p) \zeta \big) \qquad \text{for} \qquad (p,\zeta) \in V_{\alpha} \cap V_{\beta} \times \cmpx^r.
	\end{equation*}
\end{enumerate}
We call $\psi_{\alpha}$ a \emph{trivialization} of $E$ on $V_{\alpha}$. We also call $g^{\alpha \beta}$ the \emph{transition matrix} of $E$ on $V_{\alpha} \cap V_{\beta}$ and the collection $\{g^{\alpha \beta}\}_{(\alpha, \beta) \in A \times A}$ the \emph{system of transition matrices} of $E$. For each point $p$ in $V_{\alpha} \cap V_{\beta} \cap V_{\gamma}$ we have the identity
\begin{equation}\label{cocyclecondition}
g^{\alpha \beta}(p) \, g^{\beta \gamma}(p) = g^{\alpha \gamma}(p) \qquad \qquad \text{(cocycle condition)}.
\end{equation}
Thus, in particular, $g^{\alpha \alpha}(p) = \mathbbm{1}$ (the identity matrix) and $g^{\beta \alpha }(p) =\big(g^{\alpha \beta}(p) \big)^{-1}$. We may think of the system $\big\{(V_{\alpha}, \psi_{\alpha}, g^{\alpha \beta})\big\}_{(\alpha, \beta) \in A \times A}$ as defining a \emph{vector bundle structure} on $E$.

Conversely, if we are given an open covering $\mathcal{V} = \{V_{\alpha}\}_{\alpha \in A}$ of $M$ and a collection $\{g^{\alpha \beta}\}_{(\alpha, \beta) \in A \times A}$ of $\mathcal{C}^{\infty}$ maps
\begin{equation*}
g^{\alpha \beta}: V_{\alpha} \cap V_{\beta} \to \operatorname{GL}(r,\cmpx),
\end{equation*}
satisfying the cocycle condition \eqref{cocyclecondition} for $p \in V_{\alpha} \cap V_{\beta} \cap V_{\gamma}$, we may construct a vector bundle as follows:  For $(p_{\alpha},\zeta_{\alpha}) \in V_{\alpha} \times \cmpx^r$ and $(p_{\beta},\zeta_{\beta}) \in V_{\beta} \times \cmpx^r$, we define
$(p_{\alpha},\zeta_{\alpha}) \sim  (p_{\beta},\zeta_{\beta})$ if and only if
\begin{equation*}
\left\{
\begin{split}
& p_{\alpha} = p_{\beta} (= p) \\
& \zeta_{\alpha} = g^{\alpha \beta}(p) \, \zeta_{\beta}.
\end{split}
\right.
\end{equation*}
Then, it is easy to see that this is an equivalence relation in the disjoint union $\bigsqcup_{\alpha} (V_{\alpha} \times \cmpx^r)$. Now, let $E$ be the quotient space $(\bigsqcup_{\alpha} (V_{\alpha} \times \cmpx^r))/\sim$. Then, since
\begin{equation*}
(V_{\alpha} \times \cmpx^r)/\sim \, = \, V_{\alpha} \times \cmpx^r,
\end{equation*}
$E$ has a vector bundle structure with $\{g^{\alpha \beta}\}_{(\alpha, \beta) \in A \times A}$ as a system of transition matrices.

A complex vector bundle over a complex manifold $M$ is said to be \emph{holomorphic} if $E$ admits a system of transition matrices $\{g^{\alpha \beta}\}_{(\alpha, \beta) \in A \times A}$ such that each $g^{\alpha \beta}$ is holomorphic. Note that in this case, $E$ has the structure of a complex manifold so that the projection $\pi: E \to M$ is a holomorphic submersion. We will come back to this point later when we study holomorphic vector bundles on complex analytic spaces.

Let $\pi: E \to M$ be a complex vector bundle of rank $r$ and $V$ an open set in $M$. A smooth complex section of $E$ on $V$ is a $\mathcal{C}^{\infty}$-map $s: V \to E\big|_{V} := \pi^{-1}(V) $ such that $\pi \circ s = \operatorname{id}_{V}$, the identity map of $V$. A vector bundle, $E$, always admits the \emph{zero section} --- i.e., the map $M \to E$ which assigns to each point $p \in M$ the zero of the vector space $E_{p}$. The set of $\mathcal{C}$ complex sections of $E$ on $V$ is denoted by $\mathcal{C}^{\infty}(V,E)$. This has a natural structure of vector space by the operations defined by $(s_1+s_2)(p)= s_1(p)+s_2(p)$ and $(c  s)(p)= c \, s(p)$ for $s_1, s_2$ and $s$ in $\mathcal{C}^{\infty}(V,E)$, $c \in \cmpx$ and $p \in V$. More precisely, a section $s$ on $V$ can be described as follows: We fix a system of transition matrices $\{g^{\alpha \beta}\}_{(\alpha, \beta) \in A \times A}$ of $E$ on an open covering $\mathcal{V} = \{V_{\alpha}\}_{\alpha \in A}$. Using the $\mathcal{C}^{\infty}$-diffeomorphism $\psi_{\alpha}: E\big|_{V_{\alpha}} \xrightarrow{\sim} V_{\alpha} \times \cmpx^r$, we may write
\begin{equation*}
\psi_{\alpha}\big(s(p)\big) = \big(p, s^{\alpha}(p) \big) \quad \text{for} \quad p \in V \cap V_{\alpha},
\end{equation*}
where $s^{\alpha}$ is a $\mathcal{C}^{\infty}$-map from $V \cap V_{\alpha}$ into $\cmpx^r$. For each point $p \in V \cap V_{\alpha} \cap V_{\beta}$, we have
\begin{equation}\label{sectiontrans}
s^{\alpha}(p) = g^{\alpha \beta}(p)  \, s^{\beta}(p).
\end{equation}
Conversely, suppose we have a system $\{s^{\alpha}\}_{\alpha \in A}$ of  $\mathcal{C}^{\infty}$-maps satisfying \eqref{sectiontrans}. Then, by setting $s(p) = \psi_{\alpha}^{-1}\big(p, s^{\alpha}(p) \big)$ for $p$ in $V \cap V_{\alpha}$, we have a section $s$ over $V$.

For $k = 1,\dots,r$, a \emph{$k$-frame} of $E$ on an open set $V \subset M$ is a collection $\boldsymbol{s} = (s_1, \dots, s_k)$ of $k$ sections $s_i$ of $E$ on $V$ linearly independent at each point in $V$. An $r$-frame is simply called a \emph{frame}. Note that a frame of $E$ on $V$ determines a trivialization of $E$ over $V$.

Let us denote by $\mathscr{E}_{M}$ the sheaf of germs of $\mathcal{C}^{\infty}$ complex functions on $M$. If $\pi: E \to M$ is a $\mathcal{C}^{\infty}$ complex vector bundle of rank $r$ over $M$, we denote by $\mathscr{E}(E)$ the sheaf of germs of $\mathcal{C}^{\infty}$-sections of $E$ --- i.e. the sheaf whose space of sections on an open subset $V \subset M$ is $\mathscr{E}(E)\big|_{V} = \mathcal{C}^{\infty}(V,E)$. It is clear that $\mathscr{E}(E)$ is a $\mathscr{E}_{M}$-module. Furthermore, the sheaf $\mathscr{E}(E)$ is a \emph{locally free $\mathscr{E}_{M}$-module of rank $r$}: There exists a covering $\mathcal{V} = \{V_{\alpha}\}_{\alpha \in A}$ of $M$ and a sheaf isomorphism
\begin{equation*}
\psi_{\alpha}^{\ast}: \mathscr{E}(E)\big|_{V_{\alpha}} \to \mathscr{E}_{V_{\alpha}}^r\qquad \mathscr{E}_{V_{\alpha}}^r := \underbrace{\mathscr{E}_{V_{\alpha}} \oplus \dotsm \oplus \mathscr{E}_{V_{\alpha}}}_{r}.
\end{equation*}
Then, we have transition isomorphisms $\psi_{\alpha}^{\ast} \circ (\psi_{\beta}^{\ast})^{-1} : \mathscr{E}_{M}^{r} \to \mathscr{E}_{M}^{r}$ defined on $V_{\alpha} \cap V_{\beta}$, and such isomorphism is the multiplication by an invertible matrix with $\mathcal{C}^{\infty}$ coefficients on $V_{\alpha} \cap V_{\beta}$. The concepts of complex vector bundles and locally free  $\mathscr{E}_{M}$-modules are thus completely equivalent.

If we are given some vector bundles, we may construct new ones by algebraic operations. Thus, we let $E_1$ and $E_2$ be complex vector bundles of rank $r_1$ and $r_2$ on $M$. We may construct the direct sum $E_1 \oplus E_2$, the tensor product $E_1 \otimes E_2$, and the homomorphism $\operatorname{Hom}(E_1,E_2)$. Note that there is a natural isomorphism $\operatorname{Hom}(E_1,E_2) = E_1^{\ast} \otimes E_2$. We may also construct the complex conjugate $\bar{E_1}$ and the k-th exterior power $\bigwedge^k E_1$. The bundles $E_1$ and $E_2$ can be trivialized over the same covering $\mathcal{V} = \{V_{\alpha}\}_{\alpha \in A}$ of $M$ (otherwise take a common refinement). If $\{g_1^{\alpha \beta}\}_{\alpha,\beta \in A}$ and $\{g_2^{\alpha \beta}\}_{\alpha,\beta \in A}$ are the corresponding transition matrices of $E_1$ and $E_2$, then for example $E_1 \otimes E_2$, $\bigwedge^k E_1$, $E^{\ast}_1$ are the bundles defined by the transition matrices $g_1^{\alpha \beta} \otimes g_2^{\alpha \beta}$, $\bigwedge^k g_1^{\alpha \beta}$, $\big((g_1^{\alpha \beta})^T\big)^{-1}$ where ``$\cdot^T$'' denotes transposition.
\subsubsection*{Connections and Curvature}
Now, for a $\mathcal{C}^{\infty}$ complex vector bundle $E$ of rank $r$ on $M$, we let $\mathcal{A}^{k}(V,E)$ be the vector space of $\mathcal{C}^{\infty}$ sections of $\big( \bigwedge^k \, T^{\ast} M \big) \otimes E$ on $V \subset M$, which are called \emph{differential forms on $V$ with values in $E$}. Thus $\mathcal{A}^{0}(V,E) := \mathcal{C}^{\infty}(V,E)$ is the $\mathcal{A}^{0}(V):= \mathscr{E}_{M}\big|_{_V}$-module of $\mathcal{C}^{\infty}$ sections of $E$.
\begin{definition}[Connection]
	A \emph{connection} for $E$ is a $\cmpx$-linear map
	\begin{equation*}
	\nabla: \mathcal{A}^{0}(M,E) \to \mathcal{A}^{1}(M,E)
	\end{equation*}
	satisfying
	\begin{equation*}
	\nabla(f s) = df \otimes s + f \nabla(s) \qquad \text{for} \qquad f \in \mathcal{A}^{0}(M) \quad\text{and}\quad s \in \mathcal{A}^{0}(M,E).
	\end{equation*}
\end{definition}

A connection $\nabla$ is a local operator --- i.e., if a section $s$ is identically 0 on an open set $V \subset M$, so is $\nabla(s)$. Thus, the restriction of $\nabla$ to an open set $V$ makes sense, and it is a connection for $E\big|_{V}$. In addition, from the above definition, we conclude that if $\nabla_1,\dots,\nabla_k$ are connections for $E$ and $f_1, \dots, f_k$ are $\mathcal{C}^{\infty}$ functions on $M$ with $\sum_{i =1}^{k} f_i = 1$, then $\sum_{i =1}^{k} f_i \nabla_i$ will also be a connection on $E$.

If $\nabla$ is a connection for $E$, it induces a $\cmpx$-linear map
\begin{equation*}
\nabla: \mathcal{A}^{1}(M,E) \to \mathcal{A}^{2}(M,E)
\end{equation*}
and satisfying
\begin{equation*}
\nabla(\omega \otimes s) = d\omega \otimes s - \omega \wedge \nabla(s) \qquad \text{for} \qquad \omega \in \mathcal{A}^{1}(M) \quad\text{and}\quad s \in \mathcal{A}^{0}(M,E).
\end{equation*}
The composition
\begin{equation*}
\Theta = \nabla \circ \nabla : \mathcal{A}^{0}(M,E) \to \mathcal{A}^{2}(M,E)
\end{equation*}
is called the \emph{curvature} of $\nabla$. It is not difficult to see that
\begin{equation*}
\Theta (f s) = f \Theta(s) \qquad \text{for} \qquad f \in \mathcal{A}^{0}(M) \quad\text{and}\quad s \in \mathcal{A}^{0}(M,E).
\end{equation*}
The fact that a connection is a local operator allows us to get local representations of it and its curvature by matrices whose entries are differential forms. Thus, suppose that $\nabla$ is a connection for a complex vector bundle $E$ of rank $r$ and that $E$ is trivial on $V$ --- i.e. $E\big|_{V} \cong V \times \cmpx^r$. If $\boldsymbol{s} = (s_1,\dots,s_r)$ is a frame of $E$ on $V$, then we may write 
\begin{equation*}
\nabla(s_i) = \sum_{j=1}^{r} A_{ij} \otimes s_j \quad \text{with} \quad A_{ij} \in \mathcal{A}^{1}(V) \quad \text{and for} \quad i=1,\dots,r.
\end{equation*}
We call $A = (A_{ij})$ the \emph{connection matrix with respect to $s$}. For an arbitrary section $s$ on $V$, we may write $s = \sum_{i=1}^{r} f_i s_i$ with $f_i$ being $\mathcal{C}^{\infty}$-functions on $V$ and we compute
\begin{equation*}
\nabla(s) = \sum_{i=1}^{r} \left(df_i + \sum_{j=1}^r f_j A_{ji}\right) \otimes s_i.
\end{equation*}
The connection $\nabla$ is \emph{trivial} with respect to $s$, if and only if $A = 0$. Thus in this case we have $\nabla(s) = \sum_{i=1}^{r} df_i\otimes s_i$. Also, from the definition, we compute to get
\begin{equation*}
\Theta(s_i) = \sum_{j=1}^{r} \theta_{ij} \otimes s_j \quad \text{with} \quad \theta_{ij}= dA_{ij} - \sum_{k=1}^{r} A_{ik} \wedge A_{kj}.
\end{equation*}
We call $\theta = (\theta_{ij})$ the \emph{curvature matrix with respect to $s$}. If $\boldsymbol{s}'=(s'_1,\dots,s'_r)$ is another frame of $E$ on $V'$, we have $s'_i = \sum_{j=1}^{r} \mathsf{g}_{ij} s_j$ for some $\mathcal{C}^{\infty}$ functions $\mathsf{g}_{ij}$ on $V \cap V'$. The matrix $\mathsf{g} = (\mathsf{g}_{ij})$ is non-singular at each point of $V \cap V'$. If we denote by $A'$ and $\theta'$ the connection and curvature matrices of $\nabla$ with respect to $\boldsymbol{s}'$, we obtain the \emph{gauge transformation law}
\begin{equation}\label{gaugetrans}
A' =  \mathsf{g} A \mathsf{g}^{-1} + \mathsf{g}^{-1} d\mathsf{g} \qquad \text{and} \qquad \theta' = \mathsf{g} \theta \mathsf{g}^{-1} \quad \text{on} \quad V \cap V'.
\end{equation}

Now, suppose that $(M,\mathcal{J})$\footnote{Here $\mathcal{J}$ denotes an (integrable) almost complex structure.} is a complex manifold, and $E$ is a complex vector bundle on $M$. We can consider the tensor product bundle $\bigwedge^{k,l} T^{\ast}M \otimes E$, and we let $\mathscr{E}^{(k,l)}_{M}(E)$ denote the sheaf of germs of smooth sections of $\bigwedge^{k,l} T^{\ast}M \otimes E$. Smooth sections of this sheaf are $(k,l)$-forms with values in $E$, the set of which we denote by $\mathcal{A}^{(k,l)}(E)$. The connection $\nabla$ in $E$ induces a connection, also written as $\nabla$, in $\mathcal{A}^{(k,l)}(E)$. This connection splits as $\nabla = \nabla^{(1,0)} \oplus \nabla^{(0,1)}$ giving maps
\begin{equation*}
\nabla^{(1,0)}:  \mathcal{A}^{(k,l)}(E) \to  \mathcal{A}^{(k+1,l)}(E) \qquad \text{and} \qquad \nabla^{(0,1)}:  \mathcal{A}^{(k,l)}(E) \to  \mathcal{A}^{(k,l+1)}(E). 
\end{equation*}
\begin{theorem}
	A smooth, complex vector bundle $E$ over a complex manifold $M$ admits a holomorphic structure if and only if there exists a connection $\nabla$ in $E$ such that $\nabla^{0,1} = \bar{\partial}$.
\end{theorem}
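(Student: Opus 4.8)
The plan is to recognize this as the Koszul--Malgrange theorem, which identifies holomorphic structures on a smooth complex vector bundle with integrable Cauchy--Riemann operators, and to prove the two implications separately. The forward (``only if'') direction is essentially formal, while the backward (``if'') direction carries the analytic content.

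For the ``only if'' direction I would start from a holomorphic structure on $E$. In a holomorphic frame $\boldsymbol{s}=(s_1,\dots,s_r)$ the transition matrices $g^{\alpha\beta}$ are holomorphic, so the operator sending $s=\sum_i f_i s_i$ to $\bar\partial s := \sum_i (\bar\partial f_i)\otimes s_i$ is well defined independently of the chosen holomorphic frame, precisely because $\bar\partial g^{\alpha\beta}=0$. This $\bar\partial$ is a $\cmpx$-linear map $\mathcal{A}^{(0,0)}(E)\to\mathcal{A}^{(0,1)}(E)$ obeying the Leibniz rule $\bar\partial(f s)=\bar\partial f\otimes s+f\,\bar\partial s$. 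I would then choose any Hermitian metric $h$ on $E$ and invoke the standard existence-and-uniqueness statement for the Chern connection: there is a unique connection $\nabla$ compatible with $h$ and satisfying $\nabla^{0,1}=\bar\partial$. This exhibits the required connection and settles the implication.

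For the ``if'' direction I would begin with a connection $\nabla$ whose $(0,1)$-part $\nabla^{0,1}$ is a Cauchy--Riemann (Dolbeault-type) operator $\bar\partial$, and manufacture a holomorphic atlas from it. I would declare a local section $s$ to be holomorphic exactly when $\nabla^{0,1}s=0$; the crux is to produce, near each point, a full frame $(s_1,\dots,s_r)$ of such solutions. Writing $\nabla^{0,1}=\bar\partial+A^{0,1}$ in a local smooth trivialization, with $A^{0,1}$ a matrix of $(0,1)$-forms, the frame condition becomes the inhomogeneous perturbed equation $\bar\partial\sigma=-A^{0,1}\sigma$. Solving this by elliptic theory for the $\bar\partial$-operator, together with the integrability condition $(\nabla^{0,1})^2=0$ (which in the complex dimension-one setting of interest is automatic, since $\mathcal{A}^{(0,2)}=0$), yields $r$ linearly independent local solutions. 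Any two such holomorphic frames are related by a matrix $\mathsf{g}$ with $\bar\partial\mathsf{g}=0$, i.e.\ by holomorphic transition functions, so the Gluing Lemma~\ref{lemma:gluing} assembles these charts into a genuine holomorphic structure on $E$ that induces the given $\bar\partial$.

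The hard part will be the local solvability step in the backward direction: proving that $\bar\partial\sigma=-A^{0,1}\sigma$ admits enough solutions to span a frame. This is where elliptic regularity for $\bar\partial$ enters (and, in higher dimensions, a Newlander--Nirenberg-type argument invoking integrability); it is the only genuinely analytic ingredient. Everything else reduces to formal bookkeeping with the Leibniz rule, the construction of the Chern connection, and the gluing of holomorphic charts.
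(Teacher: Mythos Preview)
The paper does not supply a proof of this theorem; it is stated in the appendix as a standard background result (the Koszul--Malgrange theorem), followed only by the remark that the holomorphic structure is uniquely determined and that the condition amounts to the vanishing of the $(0,2)$-part $\nabla^{0,1}\circ\nabla^{0,1}$ of the curvature. Your outline is the standard proof and is correct; in particular, your identification of the integrability condition $(\nabla^{0,1})^2=0$ as the analytic crux matches the paper's comment, and your observation that this condition is automatic in complex dimension one is apt given the paper's focus on Riemann surfaces.

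One minor simplification: in the forward direction you do not need the Chern connection specifically. Once you have the canonical $\bar\partial$ on a holomorphic bundle, any connection of the form $\nabla=\partial'+\bar\partial$ with $\partial'$ a $(1,0)$-type operator satisfying the Leibniz rule will do; the Chern connection is merely the most natural choice.
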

\noindent The holomorphic structure in $E$ is \emph{uniquely determined} by the condition $\nabla^{0,1} = \bar{\partial}$, and this condition says that the $(0, 2)$ component $\nabla^{0,1} \circ \nabla^{0,1}$ of the curvature of $\nabla$ vanishes.\\

Using partitions of unity, one easily sees that Hermitian metrics exist on every complex vector bundle:
\begin{definition}[Hermitian metric]
	A \emph{Hermitian metric} $h$ on a complex vector bundle $E$ is an assignment of a Hermitian inner product to each fibre $E_p$ of $E$ that varies smoothly with $p$. A connection $\nabla$ in $E$ is called a Hermitian connection if $\nabla h = 0$ for some Hermitian metric $h$. A vector bundle equipped with a Hermitian metric is often called an \emph{Hermitian vector bundle}.
\end{definition}
\noindent We then have:
\begin{proposition}
	Let $E$ be a holomorphic vector bundle with a Hermitian metric $h$. Then there exists a {\bfseries unique} Hermitian connection $\nabla$ such that $\nabla^{0,1} = \bar{\partial}$. This unique connection is called {\bfseries the} ``Hermitian connection''.
\end{proposition}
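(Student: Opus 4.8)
The plan is to establish uniqueness and existence separately, in each case reducing the problem to a computation in a local \emph{holomorphic} frame, which is available precisely because $E$ carries a holomorphic structure. The structural observation driving the whole argument is that the requirement $\nabla^{0,1}=\bar{\partial}$ forces the connection matrix to be of pure type $(1,0)$ in any holomorphic frame, after which Hermitian compatibility becomes a single matrix identity that can be separated by bidegree and solved explicitly.

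\emph{Uniqueness.} First I would fix a holomorphic frame $\boldsymbol{s}=(s_1,\dots,s_r)$ of $E$ over an open set $V$ and record the metric by its Gram matrix $H=(h_{i\bar{j}})$ with $h_{i\bar{j}}=h(s_i,s_j)$, a positive-definite Hermitian matrix of smooth functions. Writing $\nabla(s_i)=\sum_j A_{ij}\otimes s_j$ as in the text, the compatibility $\nabla h=0$ reads, in matrix form,
\[
dH = A H + H A^{\dagger}, \qquad A^{\dagger}:=\overline{A}^{\,\mathsf{T}}.
\]
Since $\bar{\partial}s_i=0$ in a holomorphic frame, $\nabla^{0,1}(s_i)$ is the $(0,1)$-part of $\sum_j A_{ij}\otimes s_j$, so the condition $\nabla^{0,1}=\bar{\partial}$ forces $A$ to consist of $(1,0)$-forms. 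Consequently $AH$ has type $(1,0)$ and $HA^{\dagger}$ type $(0,1)$, and comparing bidegrees in $dH=\partial H+\bar{\partial}H$ gives $\partial H = AH$, hence $A=(\partial H)\,H^{-1}$. As this determines $A$ completely in every holomorphic chart, the connection is unique.

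\emph{Existence.} I would reverse the computation and \emph{define}, in each holomorphic frame, the $(1,0)$-matrix $A:=(\partial H)\,H^{-1}$, then verify three points. (i) The local matrices assemble into a global connection: under a change of holomorphic frame $s'_i=\sum_j \mathsf{g}_{ij}s_j$ the transition $\mathsf{g}$ is holomorphic, the Gram matrices obey $H'=\mathsf{g}H\mathsf{g}^{\dagger}$, and because $\partial\mathsf{g}=d\mathsf{g}$ while $\partial\mathsf{g}^{\dagger}=0$, one checks that $A'=(\partial H')(H')^{-1}$ transforms with precisely the inhomogeneous term $d\mathsf{g}$ required of a connection matrix by the gauge law \eqref{gaugetrans}. (ii) The connection is Hermitian: taking the conjugate transpose of $\partial H=AH$ and using $H^{\dagger}=H$ yields $\bar{\partial}H=HA^{\dagger}$, and adding the two recovers $dH=AH+HA^{\dagger}$, i.e.\ $\nabla h=0$. (iii) $\nabla^{0,1}=\bar{\partial}$: since $A$ is of type $(1,0)$, on any section $s=\sum_i f_i s_i$ one has $\nabla^{0,1}s=\sum_i \bar{\partial}f_i\otimes s_i=\bar{\partial}s$.

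The main obstacle is step (i), the gluing: one must confirm that the purely local recipe $A=(\partial H)H^{-1}$ is frame-independent as a connection rather than merely as a collection of matrices. Everything hinges on $\bar{\partial}\mathsf{g}=0$, the holomorphicity of the transition functions — it is exactly this vanishing that suppresses an anti-holomorphic contribution and makes the inhomogeneous part of the transformation match \eqref{gaugetrans}; absent a holomorphic structure on $E$ the formula would fail to define a connection. The remaining verifications (ii)–(iii) are routine once the bidegree bookkeeping is set up correctly.
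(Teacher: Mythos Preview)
Your argument is correct and is the standard proof of existence and uniqueness of the Chern connection: in a holomorphic frame the condition $\nabla^{0,1}=\bar{\partial}$ forces the connection matrix to be of type $(1,0)$, whereupon Hermitian compatibility $dH=AH+HA^{\dagger}$ splits by bidegree to give $A=(\partial H)H^{-1}$, and the converse direction checks out. The paper itself states this proposition without proof, so there is no argument to compare against; your write-up would serve perfectly well as the missing justification.
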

\subsubsection*{Chern Forms}
Consider the space $\mathsf{M}_{r\times r}(\cmpx)$ of complex $r \times r$ matrices. For any $\mathtt{M} \in \mathsf{M}_{r\times r}(\cmpx)$, we define
\begin{equation*}
\det(\mathtt{M}  + \lambda \mathbbm{1}) = \sigma_r(\mathtt{M}) + \lambda \sigma_{r-1}(\mathtt{M}) + \dotsm + \lambda^{r-1} \sigma_1(\mathtt{M}) + \lambda^r. 
\end{equation*}
Clearly, for any $i=1,\dots,r$ the function $\sigma_i: \mathsf{M}_{r\times r}(\cmpx) \to \cmpx$ is a $\operatorname{GL}(r,\cmpx)$-invariant, complex homogeneous polynomial of $\deg(\sigma_i) = i$. Note that $\sigma_i$ is the i-th elementary symmetric function of the eigenvalues of $\mathtt{M}$. In particular, $\sigma_r(\mathtt{M}) = \det(\mathtt{M})$ and $\sigma_1(\mathtt{M}) = \tr(\mathtt{M})$. Since differential forms of even degrees commute with one another with respect to the exterior product, we may treat the curvature 2-form $\theta$ as an ordinary matrix whose entries are numbers. Thus, we define
\begin{definition}[Chern form]
	Let $E \to M$ be a rank $r$ complex vector bundle over $M$, and let $\nabla$ be a complex connection on $E$ with curvature 2-form $\theta$. For each $i=1,\dots,r$ we define the $2i$-form
	\begin{equation*}
	c_i(E,\nabla) := \sigma_i\left(\frac{\sqrt{-1}}{2\pi} \theta \right)
	\end{equation*}
	and call it the \emph{i-th Chern form} of $E$.
\end{definition}
\noindent Additionally, we have the following definition
\begin{definition}[Chern classes]
	Given $(E,\nabla)$ and any $1 \leq i \leq r$, the i-th Chern form $c_i(E,\nabla)$ is closed. Furthermore, if $\nabla'$ is another complex connection on $E$ the difference $c_i(E,\nabla) - c_i(E,\nabla')$ is exact --- i.e. the cohomology class $[c_i(E,\nabla)] \in H^{2i}_{dR}(M) \simeq H^{2i}(M,\cmpx)$ is independent of $\nabla$. The resulting cohomology class is called the \emph{i-th Chern class} of $E$ and is denoted by $c_i(E)$.
\end{definition}
\begin{remark}
	It is known that $c_i(E)$ is in the image of the canonical homomorphism
	\begin{equation*}
	H^{2i}(M,\mathbb{Z}) \to H^{2i}(M,\cmpx).
	\end{equation*}
	In fact, it is possible to define $c_i(E)$ in $H^{2i}(M,\mathbb{Z})$ using the obstruction theory; it is the primary obstruction to constructing $r-i+1$ sections linearly independent everywhere \cite[\textsection I]{akhiezer2012several}.
\end{remark}
\subsubsection*{Complex Line Bundles}
Assume now that $L \to M$ is a line bundle ($r = 1$). Then, every collection of transition functions $\{g^{\alpha \beta}\}_{(\alpha, \beta) \in A \times A}$ defines a Cech 1-cocycle with values in the multiplicative sheaf $\mathscr{E}_{M}^{\ast}$ of invertible $\mathcal{C}^{\infty}$ complex functions on $M$. In fact, the definition of the Cech differential (see e.g. \cite[\textsection1.3]{brylinski2007loop}) gives $(\delta g)^{\alpha \beta \gamma} = g^{\beta \gamma} (g^{\alpha \gamma})^{-1} g^{\alpha \beta}$, and we have $\delta g = 1$ in view of \eqref{cocyclecondition}. Let $\psi'_{\alpha}$ be another family of trivializations and $\{g'^{\alpha \beta}\}_{(\alpha, \beta) \in A \times A}$ the associated cocycle (it is no loss of generality to assume that both are defined on the same covering since we may otherwise take a refinement). Then we have
\begin{equation*}
\psi'_{\alpha} \circ \psi_{\alpha}^{-1}: V_{\alpha} \times \cmpx \to V_{\alpha} \times \cmpx, \qquad (p,\zeta) \mapsto (p, u_{\alpha}(p) \zeta), \qquad u_{\alpha} \in \mathscr{E}_{M}^{\ast}(V_{\alpha}).
\end{equation*}
It follows that $g^{\alpha \beta} = g'^{\alpha \beta} u_{\alpha}^{-1}   u_{\beta}$ --- i.e. the Cech 1-cocycles $g^{\alpha \beta}$ and $g'^{\alpha \beta}$ differ only by the Cech 1-coboundary $\delta u$. Therefore, there exists a well-defined map which associates to every complex line bundle $L$ over $M$ the Cech cohomology class $\{g^{\alpha \beta}\}_{(\alpha, \beta) \in A \times A} \in H^1(M, \mathscr{E}_{M}^{\ast})$ of its cocycle of transition functions. It is easy to verify that the cohomology classes associated to two complex line bundles $L$ and $L'$ are equal if and only if these bundles are isomorphic. It is also clear that the multiplicative group structure on $H^1(M, \mathscr{E}_{M}^{\ast})$ corresponds to the tensor product of line bundles (the inverse of a line bundle being its dual). We may summarize this discussion by the following:
\begin{proposition}\label{prop:CechCohomology}
	The group of isomorphism classes of complex $\mathcal{C}^{\infty}$ line bundles is in one-to-one correspondence with the Cech cohomology group $\check{H}^1(M, \mathscr{E}_{M}^{\ast})$.
\end{proposition}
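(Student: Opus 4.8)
The plan is to construct an explicit map $\Phi$ from the set of isomorphism classes of complex $\mathcal{C}^{\infty}$ line bundles on $M$ to $\check{H}^1(M,\mathscr{E}_M^{\ast})$ and to verify that it is a group isomorphism. First I would define $\Phi$ as follows: given a line bundle $L \to M$, choose a trivializing open cover $\mathcal{V} = \{V_{\alpha}\}_{\alpha \in A}$ together with trivializations $\psi_{\alpha}$, extract the associated transition functions $\{g^{\alpha\beta}\}$, and --- as already observed in the discussion preceding the statement --- note that the cocycle condition \eqref{cocyclecondition} is precisely the relation $\delta g = 1$, so that $\{g^{\alpha\beta}\}$ represents a class in $\check{H}^1(M,\mathscr{E}_M^{\ast})$. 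I then set $\Phi(L) = [\{g^{\alpha\beta}\}]$.

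The bulk of the remaining argument is a sequence of well-definedness and structural checks, most of which have already been carried out in the text. To see that $\Phi$ does not depend on the chosen trivializations, I would invoke the computation $g^{\alpha\beta} = g'^{\alpha\beta} u_{\alpha}^{-1} u_{\beta}$, which exhibits two cocycles coming from different trivializations (on a common cover) as differing by the coboundary $\delta u$; independence from the cover itself follows by passing to a common refinement of any two trivializing covers. That $\Phi$ descends to isomorphism classes is immediate, since an isomorphism $L \cong L'$ induces precisely such a change of trivialization. Finally, $\Phi$ is a group homomorphism because the group law on isomorphism classes of line bundles is the tensor product, whose transition functions are the pointwise products $g^{\alpha\beta} (g')^{\alpha\beta}$, matching the multiplicative group structure on $\check{H}^1(M,\mathscr{E}_M^{\ast})$, with the dual bundle furnishing inverses.

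It then remains to prove bijectivity. Injectivity is the statement --- already recorded above --- that two line bundles with the same cohomology class are isomorphic: if the representing cocycles differ by a coboundary $\delta u$, the functions $u_{\alpha}$ assemble into a fiberwise-linear bundle isomorphism. For surjectivity I would use the clutching construction described earlier in the excerpt for general vector bundles, specialized to rank $r=1$: given any \v{C}ech $1$-cocycle $\{g^{\alpha\beta}\}$ with values in $\mathscr{E}_M^{\ast}$, form the quotient of $\bigsqcup_{\alpha}(V_{\alpha} \times \cmpx)$ by the equivalence relation $(p,\zeta_{\beta}) \sim (p, g^{\alpha\beta}(p)\zeta_{\beta})$; the cocycle condition guarantees that this relation is consistent on triple overlaps, and the resulting space is a line bundle whose transition functions recover $\{g^{\alpha\beta}\}$, so that $\Phi$ hits the class of the given cocycle. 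I expect the main (though still mild) obstacle to be the bookkeeping in the surjectivity step --- verifying that the gluing is consistent on all triple intersections and yields a genuine $\mathcal{C}^{\infty}$ line bundle structure rather than merely a set-theoretic quotient --- together with the care needed to treat refinements uniformly, so that the entire argument is phrased at the level of the direct limit defining $\check{H}^1(M,\mathscr{E}_M^{\ast})$.
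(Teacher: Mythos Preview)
Your proposal is correct and follows essentially the same approach as the paper: the paper's argument is the discussion immediately preceding the proposition, which defines the map via transition functions, checks well-definedness using the coboundary computation $g^{\alpha\beta} = g'^{\alpha\beta} u_{\alpha}^{-1} u_{\beta}$, and notes that the tensor product corresponds to the multiplicative group structure. Your write-up is somewhat more explicit than the paper's---in particular you spell out surjectivity via the clutching construction, which the paper treats only implicitly by referring back to the general rank-$r$ gluing described earlier---but the structure and content of the argument are the same.
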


Now, let $\nabla$ be a connection on $L$ with curvature 2-form $\theta$. The de Rham class $[\theta] \in H^2_{dR}(M,\cmpx)$ does not depend on the particular choice of $\nabla$. If $\nabla$ is chosen to be hermitian with respect to a given hermitian metric on $L$ (such a connection can always be constructed by means of a partition of unity) then $\sqrt{-1} \theta$ is a real 2-form, thus $[\sqrt{-1} \theta] \in H^2_{dR}(M,\mathbb{R})$. 

Consider now the one-to-one correspondence given by Proposition~\ref{prop:CechCohomology} and the exponential short exact sequence of sheaves on $M$
\begin{equation*}
0 \to \mathbb{Z} \xrightarrow{\imath} \mathscr{E}_{M} \xrightarrow{\exp} \mathscr{E}_{M}^{\ast} \to 0,
\end{equation*}
where the map $\imath$ is $\imath(k) = 2 \pi \sqrt{-1} k$ and the exponential map sends the germ $f$ of any complex $\mathcal{C}^{\infty}$ function to $\exp(f)$. Since $\mathscr{E}_{M}$ is a fine sheaf,\footnote{A fine sheaf over a over \emph{paracompact} Hausdorff space $M$ is one with ``partitions of unity.'' More precisely, for any open cover of the space $M$, we can find a family of homomorphisms from the sheaf to itself with sum $1$ such that each homomorphism is $0$ outside some element of the open cover (see e.g., \cite[Def.~4.35]{Voisin_2003} for more details).} we have $H^q(M,\mathscr{E}_{M}) = 0$ for all $q > 0$; in particular $H^1(M,\mathscr{E}_{M}) = H^2(M,\mathscr{E}_{M}) = 0$. So, the induced long exact cohomology sequence 
\begin{equation*}
\dotsm \to H^1(M,\mathscr{E}_{M}) \to H^1(M,\mathscr{E}_{M}^{\ast}) \to H^2(M,\mathbb{Z}) \to H^2(M,\mathscr{E}_{M}) \to \dotsm
\end{equation*}
gives an isomorphism $H^1(M,\mathscr{E}_{M}^{\ast}) \simeq H^2(M,\mathbb{Z})$ which says that the \emph{topological invariant $H^2(M,\mathbb{Z})$} can be thought of as the \emph{group of complex line bundles on $M$}. This isomorphism is realized by associating to a complex line bundle $L$ its first Chern class $c_1(L)$.

The natural morphism
\begin{equation*}
H^2(M,\mathbb{Z}) \to H^2(M,\mathbb{R}) \simeq H^2_{dR}(M,\mathbb{R})  
\end{equation*}
results in the following theorem
\begin{theorem}
	The image of $c_1(L)$ in $H^2_{dR}(M, \mathbb{R})$ coincides with the de Rham cohomology class $[\sqrt{-1} \theta]$ associated to any (hermitian) connection $\nabla$ on $L$.
\end{theorem}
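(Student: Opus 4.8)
The plan is to prove the equality of classes by realizing both sides as \v{C}ech cohomology classes on a common cover and comparing the representing cocycles; this is the classical \v{C}ech--de~Rham (Weil) comparison, specialized to a line bundle. First I would fix a locally finite good cover $\mathcal{V} = \{V_\alpha\}_{\alpha \in A}$ of $M$ --- one for which every nonempty finite intersection is contractible --- so that both the de~Rham isomorphism $H^2_{dR}(M,\mathbb{R}) \cong \check{H}^2(\mathcal{V},\mathbb{R})$ and the identification $\check{H}^1(\mathcal{V},\mathscr{E}_M^{\ast}) \cong \check{H}^2(\mathcal{V},\mathbb{Z})$ defining $c_1(L)$ are available on the same combinatorial data. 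Relative to the trivializations $\psi_\alpha$, the hermitian connection $\nabla$ has local connection $1$-forms $A_\alpha$, and since $L$ has rank one the quadratic term in the curvature vanishes, so $\theta|_{V_\alpha} = dA_\alpha$ while $\sqrt{-1}\,\theta$ is a global real closed $2$-form (as recorded earlier in the excerpt), guaranteeing $[\sqrt{-1}\,\theta] \in H^2_{dR}(M,\mathbb{R})$.

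The heart of the argument is a zig-zag in the double complex $C^p(\mathcal{V}, \mathcal{A}^q)$ equipped with the \v{C}ech differential $\delta$ and the exterior differential $d$. Starting from the globally defined closed form $\theta \in C^0(\mathcal{V},\mathcal{A}^2)$, the relation $\theta|_{V_\alpha} = dA_\alpha$ lifts it to $\{A_\alpha\} \in C^0(\mathcal{V},\mathcal{A}^1)$. Applying $\delta$ gives $(\delta A)_{\alpha\beta} = A_\beta - A_\alpha$, which is $d$-closed and, by the rank-one case of the gauge transformation law \eqref{gaugetrans} (yielding $A_\alpha - A_\beta = d\log g_{\alpha\beta}$ on $V_\alpha \cap V_\beta$ up to the bundle's sign convention), equals $dh_{\alpha\beta}$ with $h_{\alpha\beta} = -\log g_{\alpha\beta}$ for a chosen branch. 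A second application of $\delta$ produces the locally constant \v{C}ech $2$-cocycle $(\delta h)_{\alpha\beta\gamma} = h_{\beta\gamma} - h_{\alpha\gamma} + h_{\alpha\beta}$, which by construction represents $[\theta]$ in $\check{H}^2(\mathcal{V},\cmpx)$ under the de~Rham isomorphism. I would then identify this cocycle, up to the normalization $2\pi\sqrt{-1}$ built into $\imath$, with the integral cocycle defining $c_1(L)$: lifting $g_{\alpha\beta}$ through the exponential sequence $0 \to \mathbb{Z} \xrightarrow{\imath} \mathscr{E}_M \xrightarrow{\exp} \mathscr{E}_M^{\ast} \to 0$ by $f_{\alpha\beta} = \tfrac{1}{2\pi\sqrt{-1}}\log g_{\alpha\beta}$, the connecting homomorphism gives $c_1(L) = [(\delta f)_{\alpha\beta\gamma}]$. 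Comparing the two expressions yields $(\delta f)_{\alpha\beta\gamma} = -\tfrac{1}{2\pi\sqrt{-1}}(\delta h)_{\alpha\beta\gamma}$, hence $c_1(L) = \big[\tfrac{\sqrt{-1}}{2\pi}\theta\big]$, in agreement with the first Chern form $c_1(L,\nabla) = \tfrac{\sqrt{-1}}{2\pi}\theta$ defined earlier.

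Two points deserve care and constitute the main obstacle. The first is purely bookkeeping but easy to get wrong: the signs and the branch choices for $\log g_{\alpha\beta}$ must be made consistently so that $\delta h$ lands exactly on $2\pi\sqrt{-1}\,\mathbb{Z}$, and so that the factor of $2\pi$ relating $[\sqrt{-1}\,\theta]$ to the integral generator $c_1(L)$ is tracked honestly; I expect this normalization to be the genuinely delicate part, and it is precisely why the sharp statement reads $c_1(L) = \big[\tfrac{\sqrt{-1}}{2\pi}\theta\big]$, so the assertion $[\sqrt{-1}\,\theta]$ should be understood up to this standard $2\pi$ convention. The second is the independence of the resulting class from the auxiliary data: independence from $\nabla$ is already granted in the excerpt, while independence from the cover and the trivializations follows from the acyclicity of the double complex on a good cover together with a routine refinement argument. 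Once these are in place, the cocycle comparison completes the identification.
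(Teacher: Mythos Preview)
The paper states this theorem without proof; it is quoted as background in the appendix on line bundles and divisors, and the text moves immediately to the next subsubsection. So there is no ``paper's own proof'' to compare against.

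Your proposal is the standard \v{C}ech--de~Rham zig-zag argument and is correct in outline. You have also correctly flagged the one genuine issue: the theorem as stated in the paper has a normalization discrepancy. The paper's own definition of the first Chern form is $c_1(L,\nabla) = \tfrac{\sqrt{-1}}{2\pi}\theta$, and the image of the integral class $c_1(L)$ under $H^2(M,\mathbb{Z}) \to H^2_{dR}(M,\mathbb{R})$ is $\big[\tfrac{\sqrt{-1}}{2\pi}\theta\big]$, not $[\sqrt{-1}\,\theta]$. Your computation recovers exactly this, and your remark that the stated assertion ``should be understood up to this standard $2\pi$ convention'' is the right diagnosis: the theorem in the paper is either a typo or is using $c_1$ loosely to mean the Chern form before the $\tfrac{1}{2\pi}$ normalization.
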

\subsubsection*{Holomorphic Vector Bundles on Analytic Spaces}
Let us now generalize our discussion of holomorphic vector bundles to those defined over complex analytic spaces. We have:
\begin{definition}[Holomorphic vector bundle]
	Let $\pi: E \to X$ be an analytic map between reduced analytic spaces such that every fiber $E_x := \pi^{-1}(x)$ over a point $x \in X$, is equipped with the structure of a $r$-dimensional complex vector space. Then, $\pi: E \to X$ will be called a \emph{holomorphic vector bundle of rank $r$} on $X$ if every point $x \in X$ has an open neighborhood $V$ in $X$ such that the restricted map $\pi\big|_{E|_{V} := \pi^{-1}(V)}: E\big|_{V} \to V$ is (analytically) \emph{trivial} --- i.e. there exists a biholomorphic map $E\big|_{V} \xrightarrow{\cong} V \times \cmpx^{r}$, called the \emph{holomorphic trivialization} of $E$ over $V$, which maps every fiber $E_x$ for $x \in V$ onto $x \times \cmpx^{r}$ as an isomorphism of complex vector spaces. A holomorphic vector bundle of rank $r=1$ on $X$ is called a \emph{holomorphic line bundle} on $X$. 
\end{definition}
\noindent An analytic map $E \to E'$ between holomorphic vector bundles is called a \emph{bundle map} if it is fiber preserving and if all induced maps $E_x \to E'_x$ are linear; clearly, the holomorphic vector bundles with bundle maps as morphisms form a category.\\

Assume now that $E$ is a holomorphic vector bundle on $X$ of rank $r$. A \emph{holomorphic section} of $E$ over $V \subset X$ is a holomorphic map $s: V \to \pi^{-1}(V) \subset E$ such that $\pi \circ s = \operatorname{id}_{V}$. If $\pi^{-1}(V) \cong V \times \cmpx^{r}$ then $s$ is simply given by a $r$-tuple of holomorphic functions on $V$. Hence, local holomorphic sections in $E$ determine, in a natural way, a canonical presheaf on $X$ which gives rise to the analytic sheaf $\mathscr{O}_X(E)$ on $X$ of germs of holomorphic sections in $E$. Such a sheaf is always \emph{locally free of the same rank as the rank of the vector bundle}: If $E\big|_{V}$ is trivial, we have an isomorphism $\mathscr{O}(E)\big|_{V} = \mathscr{O}_{V}^r$. It follows that if $E$ is the trivial line bundle $X \times \cmpx$, the sheaf $\mathscr{O}(E = X \times \cmpx)$ coincides with the structure sheaf $\mathscr{O}_{X}$ of the complex analytic space $X$. Moreover, the cohomology of a holomorphic vector bundle $E$ over $X$ is defined to be the sheaf cohomology of $\mathscr{O}(E)$. In particular, we have
\begin{equation*}
H^{0}\big(X, \mathscr{O}(E)\big) = \Gamma\big(X,E\big)
\end{equation*}
the \emph{space of global holomorphic sections of $E$}.

To study the holomorphic line bundles on the analytic space $X$, we consider the exact sequence
\begin{equation*}
0 \to \mathbb{Z} \xrightarrow{\imath} \mathscr{O}_{X} \xrightarrow{\exp} \mathscr{O}_{X}^{\ast} \to 0,
\end{equation*}
where the maps $\imath$ and $\exp$ are defined as before and $\mathscr{O}_{X}^{\ast}$ denotes the sheaf of invertible elements in $ \mathscr{O}_{X}$ --- in other words, the sheaf of nowhere-vanishing holomorphic function on $X$. This induces a long exact sequence in cohomology,
\begin{equation*}
\dotsm \to H^1(X,\mathscr{O}_{X}) \to H^1(X,\mathscr{O}_{X}^{\ast}) \xrightarrow{\delta} H^2(X,\mathbb{Z}) \to H^2(X,\mathscr{O}_{X}) \to \dotsm.
\end{equation*}
The group $H^1(X,\mathscr{O}_{X}^{\ast})$ represents the group of holomorphic line bundles on the analytic space $X$ with group multiplication being the tensor product, and the inverse bundle being the dual bundle. This group is called the \emph{Picard group} of $X$ and often denoted by $\operatorname{Pic}(X)$. As seen above, the connecting homomorphism $\delta$ takes a holomorphic line bundle $\linebundle$ to its first Chern class $c_1(\linebundle)$, and the group $H^2(X, \mathbb{Z})$ is isomorphic to the group of topological complex line bundles on $X$. So if $H^2(X,\mathscr{O}_{X}) \neq 0$, we see that not every complex line bundle gives rise to a holomorphic line bundle. Similarly, if $H^1(X,\mathscr{O}_{X}) \neq 0$, there can be inequivalent holomorphic bundles associated to the same complex line bundle. The kernel of the map $\delta$ is denoted by $\operatorname{Pic}^{0}(X)$ and represents the subgroup of holomorphic line bundles that are trivial topologically.

There is a holomorphic line bundle canonically associated with every analytic space:
\begin{definition}[Canonical line bundle]
	Let $X$ be a reduced analytic space of dimension $\mathfrak{n}$. The $\mathfrak{n}$-th exterior power $\bigwedge^{\mathfrak{n}} T^{\ast}_{(1,0)}X$ is a holomorphic line bundle, called the \emph{canonical line bundle} and denoted by $\mathcal{K}_{X}$. The dual or inverse line bundle $\mathcal{K}^{-1}_{X}$ is called the \emph{anticanonical line bundle}.
\end{definition}
\noindent When the underlying analytic space $X$ is understood we often write just $\mathcal{K}$ for $\mathcal{K}_{X}$. It is easy to see that 
\begin{proposition}
	The first Chern class of $\mathcal{K}_{X}$ satisfies $c_1(\mathcal{K}_{X}) = - c_1(X)$.
\end{proposition}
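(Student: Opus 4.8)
The plan is to reduce the statement to two elementary identities for first Chern forms of vector bundles, both of which follow immediately from the curvature-theoretic definition $c_1(E,\nabla) = \tfrac{\sqrt{-1}}{2\pi}\sigma_1(\theta) = \tfrac{\sqrt{-1}}{2\pi}\tr(\theta)$ recalled above, together with the conventions that $c_1(X)$ denotes $c_1(T_{(1,0)}X)$ and that $\mathcal{K}_X = \bigwedge^{\mathfrak{n}} T^{\ast}_{(1,0)}X = \det\big((T_{(1,0)}X)^{\ast}\big)$. The two identities are $c_1(E) = c_1(\det E)$ and $c_1(E^{\ast}) = -c_1(E)$, and once these are in hand the proposition is a one-line composition.

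First I would prove $c_1(E) = c_1(\det E)$ for an arbitrary holomorphic vector bundle $E$ of rank $r$, in fact at the level of Chern forms. Choosing any connection $\nabla$ on $E$ with local connection matrix $A$ and curvature $\theta = dA - A\wedge A$, the induced connection on the line bundle $\det E = \bigwedge^{r} E$ (whose transition functions are $\det g^{\alpha\beta}$, using the excerpt's rule that $\bigwedge^{k} E$ has transition matrices $\bigwedge^{k} g^{\alpha\beta}$) has scalar connection form $\tr A$ and hence curvature $d(\tr A) = \tr(dA)$. Since $\tr(A\wedge A) = 0$ for a matrix of $1$-forms, by the antisymmetry of the wedge product of $1$-forms, one has $\tr\theta = \tr(dA)$, so the two curvature forms coincide and $c_1(E,\nabla) = c_1(\det E, \det\nabla)$.

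Next I would establish $c_1(E^{\ast}) = -c_1(E)$. Using the excerpt's description of the dual bundle's transition matrices as $\big((g^{\alpha\beta})^{T}\big)^{-1}$, the induced dual connection $\nabla^{\ast}$ (fixed by $d\langle s, s^{\ast}\rangle = \langle \nabla s, s^{\ast}\rangle + \langle s, \nabla^{\ast} s^{\ast}\rangle$) has connection matrix $-A^{T}$; a short computation using $(A\wedge A)^{T} = -A^{T}\wedge A^{T}$ gives its curvature $\theta^{\ast} = -\theta^{T}$, whence $\tr\theta^{\ast} = -\tr\theta$ and $c_1(E^{\ast},\nabla^{\ast}) = -c_1(E,\nabla)$. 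Applying the first identity with $E = (T_{(1,0)}X)^{\ast}$ and then the second with $E = T_{(1,0)}X$ yields
\begin{equation*}
c_1(\mathcal{K}_X) = c_1\big(\det (T_{(1,0)}X)^{\ast}\big) = c_1\big((T_{(1,0)}X)^{\ast}\big) = -\,c_1(T_{(1,0)}X) = -\,c_1(X),
\end{equation*}
which is the claim.

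The main obstacle is not the algebra of traces and duals but the fact that $X$ is only a reduced analytic space: the cotangent sheaf fails to be locally free along $\sing(X)$, so $T_{(1,0)}X$ is an honest vector bundle only over the regular locus $\operatorname{Reg}(X)$, and the curvature-based definition of Chern forms used above applies literally only there. I would handle this by carrying out the computation on $\operatorname{Reg}(X)$ — whose complement has codimension at least two for the normal spaces of interest — and, in the orbifold setting that actually concerns this paper, by running the identical computation equivariantly on the local uniformizing charts $\tilde{U}$, where the tangent bundle is genuine and the finite group acts by bundle automorphisms, so that the trace and duality identities descend to the quotient. This is the only point at which the singular (or orbifold) structure genuinely enters; everything else is identical to the smooth manifold case.
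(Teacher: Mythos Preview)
The paper does not actually supply a proof of this proposition: it is stated immediately after the phrase ``It is easy to see that'' with no argument. Your proposal is correct and gives precisely the standard justification the paper leaves implicit---reducing the claim to the two identities $c_1(E)=c_1(\det E)$ and $c_1(E^{\ast})=-c_1(E)$ via the trace of the curvature form. Your final paragraph about the singular locus is a thoughtful addition that goes beyond anything the paper addresses here; since the proposition appears in the background appendix and the paper treats it as elementary, no such discussion was expected, but it does no harm.
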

\subsubsection*{Meromorphic Functions and Divisors}
There are two equivalent ways to describe divisors on smooth complex manifolds. However, they are not equivalent for singular analytic spaces. We discuss both of these notions here.

\begin{definition}[Weil divisor]\label{def:Weil}
	A \emph{Weil divisor} $\mathcal{D}$ on an analytic space $X$ is a locally finite formal linear combination of irreducible analytic hypersurfaces $H_i$
	\begin{equation*}
	\mathcal{D} = \sum_i a_i H_i \quad \text{with} \quad a_i \in \mathbb{Z},
	\end{equation*}
	where locally finite means that every point $x \in X$ has a neighborhood intersecting only finitely many of the $H_i$'s. $\mathcal{D}$ is said to be \emph{effective} if $a_i \geq 0$ for all $i$ (not all $a_i$'s equal to zero). For a Weil divisor $\mathcal{D} = \sum_i a_i H_i$, we set $\operatorname{Supp}(\mathcal{D}) := \bigcup_{i} H_i$ and call it the \emph{support} of $\mathcal{D}$. Additionally, the coefficient $a_i$ is called the \emph{multiplicity of $\mathcal{D}$ along $H_i$} and will be denoted by $\operatorname{mult}_{H_i}(\mathcal{D})$; we set $\operatorname{mult}_{H}(\mathcal{D}) = 0$ for every other irreducible divisor $H \neq H_i \,\, \forall i$. Finally, the degree of $\mathcal{D}$ is denoted by $\deg(\mathcal{D})$ and is defined as the sum of coefficients $a_i$ --- i.e. $\deg(\mathcal{D}) := \sum_i\operatorname{mult}_{H_i}(\mathcal{D}) = \sum_i a_i$.
	
	Under the formal sum operation, Weil divisors form a group called the divisor group and denoted by $\operatorname{Div}(X)$. It then follows, from definition~\ref{def:subvariety} of hypersurfaces, that a Weil divisor is described locally by the zero set of holomorphic functions.
\end{definition}

Now, let us recall that a \emph{meromorphic function} on an open set $V \subset X$ is a ratio $f/g$ of relatively prime holomorphic functions $f$ and $g$ on $V$. We will denote by $\mathscr{M}_X$ the sheaf of meromorphic functions on $X$ and by $\mathscr{M}^{\ast}_X$ the subsheaf of \emph{not-identically-zero meromorphic functions}. We also denoted by $\mathscr{O}^{\ast}_X$ the subsheaf of invertible elements in $\mathscr{O}_X$ and called it the sheaf of \emph{nowhere-vanishing holomorphic functions}. We have the following short, exact sequence
\begin{equation}\label{divisorsheaf}
0 \to \mathscr{O}^{\ast}_X \to \mathscr{M}^{\ast}_X \to \mathscr{M}^{\ast}_X / \mathscr{O}^{\ast}_X  \to 0 .
\end{equation}
Then, we can define:
\begin{definition}[Cartier divisor]\label{def:Cartier}
	A \emph{Cartier divisor} on $X$ is a global section of the sheaf $\mathscr{M}^{\ast}_X / \mathscr{O}^{\ast}_X$ --- i.e. an element of the group $H^0(X, \mathscr{M}^{\ast}_X / \mathscr{O}^{\ast}_X)$. Any Cartier divisor can be represented by giving an open covering $\mathcal{V} := \{V_{\alpha}\}_{\alpha \in A}$ of $X$ and, for all $\alpha \in A$, an element $\phi^{\alpha} := f^{\alpha}/g^{\alpha} \in \mathscr{M}^{\ast}_X(V_{\alpha})$ such that $\phi^{\alpha} = u^{\alpha \beta} \phi^{\beta}$ on any intersection $V_{\alpha} \cap V_{\beta}$ with $u^{\alpha \beta} \in \mathscr{O}^{\ast}_X(V_{\alpha} \cap V_{\beta})$. A Cartier divisor $\mathcal{D}$ on $X$ is called \emph{effective} if it can be represented by the system $\{(V_{\alpha},f^{\alpha})\}_{\alpha \in A}$ with all local equations $f^{\alpha} \in \Gamma(V_{\alpha},\mathscr{O}_X)$. Additionally, two systems $\{(V_{\alpha}, \phi^{\alpha})\}_{\alpha \in A}$ and  $\{(V'_{\beta}, \phi'^{\beta})\}_{\beta \in B}$ represent the same Cartier divisor if and only if on $V_{\alpha} \cap V'_{\beta}$, $\phi^{\alpha}$ and $\phi'^{\beta}$ differ by a multiplicative factor in $\mathscr{O}^{\ast}_X(V_{\alpha} \cap V'_{\beta})$. The abelian group of Cartier divisors on $X$ will denoted by $H^0(X, \mathscr{M}^{\ast}_X / \mathscr{O}^{\ast}_X)$. If $\mathcal{D}_1 :=  \{(V^1_{\alpha}, \phi_{1}^{\alpha})\}_{\alpha \in A}$ and $\mathcal{D}_2 :=  \{(V^2_{\beta}, \phi_2^{\beta})\}_{\beta \in B}$ then $\mathcal{D}_1+ \mathcal{D}_2 = \{(V^1_{\alpha} \cap V^2_{\beta}, \phi_{1}^{\alpha}\phi_2^{\beta})\}_{\alpha \in A, \beta \in B}$.
\end{definition}

Since on a smooth analytic space $X$ (i.e. a complex manifold) the local rings $\mathscr{O}_{X,x}$ are unique factorization domains (UFD), Weil divisors and Cartier divisors coincide: If we cover $X$ by open sets $\{V_{\alpha}\}_{\alpha \in A}$ so that $H_i$ is defined by $f_i^{\alpha}$ on $V_{\alpha}$, we have the meromorphic function $\phi^{\alpha} = \prod_{i} (f_i^{\alpha})^{a_i}$ which is determined by the expression of the Weil divisor $\mathcal{D} = \sum_i a_i H_i$. The systems $\{(V_{\alpha}, \phi^{\alpha})\}_{\alpha \in A}$ would then correspond to a Cartier divisor.
\begin{theorem}
	Let $M$ be a smooth complex manifold. Then there is an isomorphism
	\begin{equation*}
	\operatorname{Div}(M) \simeq H^0(M,\mathscr{M}^{\ast}_M / \mathscr{O}^{\ast}_M).
	\end{equation*}
\end{theorem}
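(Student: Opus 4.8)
The plan is to construct two mutually inverse group homomorphisms between $\operatorname{Div}(M)$ and $H^0(M,\mathscr{M}^{\ast}_M / \mathscr{O}^{\ast}_M)$, the essential input being that every local ring $\mathscr{O}_{M,x}\cong \cmpx\{z_1,\dots,z_{\mathfrak{n}}\}$ of a complex manifold is a unique factorization domain (by the Weierstrass preparation theorem). First I would record the two consequences of the UFD property that do all the geometric work: (a) every irreducible hypersurface germ is the zero locus of a single irreducible element, and more generally a height-one prime of $\mathscr{O}_{M,x}$ is principal; and (b) for an irreducible hypersurface $H$, the localization of $\mathscr{O}_M$ at the generic point of $H$ is a discrete valuation ring, which furnishes a well-defined order-of-vanishing valuation $\ord_H\colon \mathscr{M}^{\ast}_{M,x}\to\mathbb{Z}$ that is additive, $\ord_H(fg)=\ord_H(f)+\ord_H(g)$, and vanishes on units of $\mathscr{O}_{M,x}$.

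Next I would define the map $\Phi\colon H^0(M,\mathscr{M}^{\ast}_M / \mathscr{O}^{\ast}_M)\to\operatorname{Div}(M)$. Given a Cartier divisor represented by a system $\{(V_\alpha,\phi^\alpha)\}$ with $\phi^\alpha=u^{\alpha\beta}\phi^\beta$ on overlaps for units $u^{\alpha\beta}\in\mathscr{O}^{\ast}_M(V_\alpha\cap V_\beta)$, I assign to each irreducible hypersurface $H$ the integer $\operatorname{mult}_H:=\ord_H(\phi^\alpha)$ computed on any chart $V_\alpha$ meeting $H$. This is independent of $\alpha$ precisely because $\ord_H$ kills the units $u^{\alpha\beta}$, and local finiteness of $\Phi(\mathcal{D})=\sum_H\operatorname{mult}_H\,H$ follows from the fact that each $\phi^\alpha$ is a ratio of relatively prime holomorphic functions whose zero and polar sets are hypersurfaces meeting any compact set in finitely many components. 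That $\Phi$ is a homomorphism is immediate from additivity of $\ord_H$, since the product of local data representing two Cartier divisors has order along $H$ equal to the sum of the orders.

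In the other direction I would define $\Psi\colon\operatorname{Div}(M)\to H^0(M,\mathscr{M}^{\ast}_M / \mathscr{O}^{\ast}_M)$: for $\mathcal{D}=\sum_i a_i H_i$, choose by local finiteness an open cover $\{V_\alpha\}$ so fine that on each $V_\alpha$ only finitely many $H_i$ intervene and each such $H_i\cap V_\alpha$ is principal, cut out by an irreducible $f_i^\alpha\in\mathscr{O}_M(V_\alpha)$ (consequence (a)); then set $\phi^\alpha:=\prod_i (f_i^\alpha)^{a_i}$. Since two local generators of the same $H_i$ differ by a unit, the ratios $\phi^\alpha/\phi^\beta$ lie in $\mathscr{O}^{\ast}_M(V_\alpha\cap V_\beta)$, so $\{(V_\alpha,\phi^\alpha)\}$ defines a genuine Cartier divisor, and $\Psi$ is visibly additive. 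The proof then concludes by checking $\Phi\circ\Psi=\operatorname{id}$ and $\Psi\circ\Phi=\operatorname{id}$, both of which reduce to the purely local identities $\ord_{H_i}\!\big(\prod_j (f_j)^{a_j}\big)=a_i$ and the reconstruction of a local generator from its order datum, guaranteed again by unique factorization.

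I expect the main obstacle to be the local algebra rather than the bookkeeping: establishing (a) and (b)---that $\cmpx\{z_1,\dots,z_{\mathfrak{n}}\}$ is a UFD, that height-one primes are principal, and that the relevant localizations are discrete valuation rings---is where smoothness is genuinely used, and it is also where one must be careful to produce a single open cover trivializing all the (locally finitely many) hypersurfaces simultaneously. Once these inputs are in place, the homomorphism and inverse-map verifications are formal.
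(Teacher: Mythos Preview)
Your proposal is correct and follows essentially the same approach as the paper: both rely on the UFD property of the local rings $\mathscr{O}_{M,x}$ of a smooth complex manifold, and both build the Weil-to-Cartier map by choosing local defining equations $f_i^{\alpha}$ for each irreducible hypersurface and setting $\phi^{\alpha}=\prod_i (f_i^{\alpha})^{a_i}$. The paper in fact gives only this one-paragraph sketch of the Weil-to-Cartier direction and then states the theorem without further argument, whereas you supply the full two-sided construction, the valuation-theoretic inverse $\Phi$, and the verification that $\Phi$ and $\Psi$ are mutual inverses; so your write-up is considerably more complete than what the paper records, but not a different method.
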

\noindent On smooth complex manifolds $M$, such as the regular locus of an analytic space, we shall often identify Weil divisors and Cartier divisors by just referring to a \emph{divisor}, $\text{Div}(M)$. This isomorphism does \emph{not} hold on singular analytic spaces: Let $X$ be a normal analytic space and let $\mathcal{D}^{\text{reg}} = \sum_i a_i H_i^{\text{reg}}$ be a Weil divisor defined on the regular locus $\operatorname{Reg}(X)$. Since the singular set of $X$ has codimension at least 2, the Remmert--Stein extension theorem (see e.g. \cite[p.~181]{Grauert-Remmert}) ensures that $\mathcal{D}^{\text{reg}}$ admits a unique extension to a Weil divisor $\mathcal{D}$ on $X$. However, not every Cartier divisor on $\operatorname{Reg}(X)$ extends to a Cartier divisor on $X$. Thus, the group of Cartier divisors $H^0(X, \mathscr{M}^{\ast}_X / \mathscr{O}^{\ast}_X)$ on a singular analytic space $X$ is identified with a subgroup of $\operatorname{Div}(X)$. 

Any global section $\phi \in \Gamma(X, \mathscr{M}_X^{\ast})$ determines a \emph{principal Cartier divisor} $(\phi):= \{(X,\phi)\}$ by taking all local equations equal to $\phi$. Equivalently, if $\phi$ is a global meromorphic function on $X$ which can be written locally as $\phi= f/g$, we may consider a Weil divisor $(\phi) = \ord(f) Z_{f} - \ord(g) Z_g$ where $Z_f$ denotes the zero set of the holomorphic function $f$ and $\ord(f)$ denotes its order of vanishing. Then, two divisors $\mathcal{D}$ and $\mathcal{D}'$ on $X$ are said to be \emph{linearly equivalent}, written $\mathcal{D} \sim \mathcal{D}'$, if $\mathcal{D}' = \mathcal{D} + (\phi)$, where $(\phi)$ denotes the principal divisor defined by the global meromorphic function $\phi$. We denote by $[\mathcal{D}]$ the set of all divisors on $X$ that are linearly equivalent to $\mathcal{D}$. It is called the \emph{linear system} of divisors defined by $\mathcal{D}$. The common intersection $\bigcap_{\mathcal{D}' \in [D]} \mathcal{D}'$  is called the \emph{base locus} of linear system $[\mathcal{D}]$. We will also denote by $\operatorname{Cl}(X)$ the \emph{divisor class group} of Weil divisors modulo linear equivalence, and by $\operatorname{CaCl}(X)$ the \emph{group of Cartier divisor classes} (Cartier divisors modulo principal divisors). On a singular analytic space, the group $\operatorname{CaCl}(X)$ is generally a subgroup of the divisor class group $\operatorname{Cl}(X)$.

We now describe the relationship between line bundles and divisors: From the short exact sequence \eqref{divisorsheaf} one has
\begin{equation*}
0 \to H^0(M,\mathscr{M}^{\ast}_X / \mathscr{O}^{\ast}_X)/H^0(M,\mathscr{M}_{X}) \to H^1(M,\mathscr{O}_{X}^{\ast}) \to H^1(M,\mathscr{M}_{X}^{\ast}).
\end{equation*}
This says that every divisor $\mathcal{D}$ on $M$ determines a holomorphic line bundle $\linebundle(\mathcal{D})$, and the line bundle $\linebundle(\mathcal{D})$ is holomorphically trivial if and only if $\mathcal{D}$ is a \emph{principal divisor} --- i.e. the divisor $(\phi)$ of a global meromorphic function. The holomorphic line bundle $\linebundle(\mathcal{D})$ has as the system of transition functions, the collection $\{u^{\alpha \beta}\}$ of nowhere vanishing holomorphic functions $u^{\alpha \beta} \in \mathscr{O}^{\ast}_X(V_{\alpha} \cap V_{\beta})$ defined uniquely in terms of Cartier divisors in \ref{def:Cartier}. If $\sum_i a_i H_i$ is the Weil divisor corresponding to the Cartier divisors $\mathcal{D}$, we may write $\linebundle(\mathcal{D}) = \bigotimes_{i} \linebundle_i^{a_i}$ where $\linebundle_i := \linebundle(H_i)$ and $\linebundle_i^{a_i}$ denotes the tensor product of $a_i$ copies of $\linebundle_i$, for $a_i > 0$, and the tensor product of $-a_i$ copies of $\linebundle_i^{\ast}$, for $a_i < 0$. The quotient $H^0(M,\mathscr{M}^{\ast}_X / \mathscr{O}^{\ast}_X)/H^0(M,\mathscr{M}_{X})$ is precisely the Cartier divisor class group $\operatorname{CaCl}(M)$ defined before. Furthermore, $H^1(M,\mathscr{M}_{X}^{\ast})=0$ if and only if every holomorphic line bundle on $M$ has a global meromorphic section. In this case, we get an isomorphism between the Cartier divisor class group and the Picard group. This happens, for example, for smooth projective algebraic varieties (such as Riemann surfaces):
\begin{proposition}
	Let $X$ be a smooth projective algebraic variety. Then, we have $\operatorname{Cl}(X) \simeq \operatorname{CaCl}(X) \simeq \operatorname{Pic}(X )$.
\end{proposition}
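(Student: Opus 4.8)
The plan is to prove the two isomorphisms separately: first $\operatorname{Cl}(X) \simeq \operatorname{CaCl}(X)$, which rests on the smoothness of $X$, and then $\operatorname{CaCl}(X) \simeq \operatorname{Pic}(X)$, which rests on its projectivity through a vanishing theorem. First I would establish $\operatorname{Cl}(X) \simeq \operatorname{CaCl}(X)$. Since $X$ is smooth, the local rings $\mathscr{O}_{X,x}$ are regular, hence (by Auslander--Buchsbaum) unique factorization domains; in a UFD every height-one prime is principal, so every irreducible hypersurface through $x$ is cut out locally by a single holomorphic function. This is precisely the content of the theorem quoted above giving $\operatorname{Div}(X) \simeq H^0(X,\mathscr{M}^{\ast}_X/\mathscr{O}^{\ast}_X)$: the Weil divisor $\sum_i a_i H_i$ corresponds to the Cartier divisor with local equations $\prod_i (f_i^{\alpha})^{a_i}$. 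I would then note that this isomorphism of divisor groups carries a principal Weil divisor $(\phi)$ to the principal Cartier divisor defined by the same global meromorphic function $\phi$, and conversely; passing to the quotients by linear equivalence therefore yields $\operatorname{Cl}(X) \simeq \operatorname{CaCl}(X)$.

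Next I would establish $\operatorname{CaCl}(X) \simeq \operatorname{Pic}(X)$ from the long exact cohomology sequence of the short exact sequence \eqref{divisorsheaf}, namely
\begin{equation*}
H^0(X,\mathscr{M}_X^{\ast}) \to H^0(X,\mathscr{M}^{\ast}_X/\mathscr{O}^{\ast}_X) \xrightarrow{\;\delta\;} H^1(X,\mathscr{O}^{\ast}_X) \to H^1(X,\mathscr{M}^{\ast}_X).
\end{equation*}
By definition $H^1(X,\mathscr{O}^{\ast}_X) = \operatorname{Pic}(X)$, and the cokernel $H^0(X,\mathscr{M}^{\ast}_X/\mathscr{O}^{\ast}_X)/\operatorname{im}\!\big(H^0(X,\mathscr{M}_X^{\ast})\big)$ of the first arrow is exactly $\operatorname{CaCl}(X)$, with $\delta$ sending a Cartier divisor $\mathcal{D}$ to the associated line bundle $\linebundle(\mathcal{D})$. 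Exactness makes $\delta$ inject $\operatorname{CaCl}(X)$ into $\operatorname{Pic}(X)$ with image the kernel of $\operatorname{Pic}(X) \to H^1(X,\mathscr{M}^{\ast}_X)$. Hence it suffices to prove $H^1(X,\mathscr{M}^{\ast}_X) = 0$, equivalently that every holomorphic line bundle on $X$ admits a nonzero global meromorphic section.

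The hard part will be this vanishing, and it is where irreducibility and projectivity enter essentially. I would use that on an irreducible space any nonzero meromorphic germ extends uniquely by the identity theorem, so that every nonempty open $V$ satisfies $\mathscr{M}_X(V) = K(X)$, the field of global meromorphic (rational) functions; consequently $\mathscr{M}^{\ast}_X$ is the constant sheaf $\underline{K(X)^{\ast}}$. Constant sheaves on irreducible topological spaces are flasque (restrictions between nonempty opens are the identity), hence acyclic, giving $H^1(X,\mathscr{M}^{\ast}_X) = 0$. Projectivity is what makes this description nontrivial: by Siegel's theorem (or GAGA) $K(X)$ is a finitely generated extension of $\mathbb{C}$ of transcendence degree $\dim X$, so that every line bundle genuinely acquires a meromorphic section; I expect this transcendence/finiteness input to require the most care, and an alternative is to invoke Serre's GAGA to transport the purely algebraic statement (Hartshorne~II.6) to the analytic category.

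Combining the two steps, $\delta$ is an isomorphism $\operatorname{CaCl}(X) \simeq \operatorname{Pic}(X)$, which together with $\operatorname{Cl}(X) \simeq \operatorname{CaCl}(X)$ established above gives $\operatorname{Cl}(X) \simeq \operatorname{CaCl}(X) \simeq \operatorname{Pic}(X)$, as claimed.
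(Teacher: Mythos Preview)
Your proposal is correct and follows precisely the approach the paper sketches in the discussion immediately preceding the proposition: the paper states (without formal proof) that smoothness gives $\operatorname{Div}(M) \simeq H^0(M,\mathscr{M}^{\ast}_M/\mathscr{O}^{\ast}_M)$ via local UFD-ness, then uses the exact sequence from \eqref{divisorsheaf} to reduce $\operatorname{CaCl}(X) \simeq \operatorname{Pic}(X)$ to the vanishing $H^1(X,\mathscr{M}^{\ast}_X)=0$, noting this holds ``for example, for smooth projective algebraic varieties.'' Your argument fills in exactly these steps, including the flasqueness/constancy of $\mathscr{M}^{\ast}_X$ on an irreducible projective variety as the mechanism for the vanishing.
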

Finally, let us finish this subsection by pointing out that if each \emph{prime divisor} (i.e. irreducible hypersurface) $H_i$ in a Weil divisor $\mathcal{D} = \sum_{i} a_i H_i$ is compact, up to linear equivalence, $\mathcal{D}$ defines a homology class $[\mathcal{D}] = \sum_i a_i [H_i]$ in $H_{2\mathfrak{n}-2}(M, \mathbb{Z})$. Furthermore, it is known that if $M$ is compact, the homology class $[\mathcal{D}] \in H_{2\mathfrak{n}-2}(M, \mathbb{Z})$ is the Poincar\'{e} dual of the cohomology class $c_1(\linebundle(\mathcal{D})) \in H^2(M,\mathbb{Z})$.
\subsubsection{Finite Group Actions, Quotient Singularities, and Galois Coverings}\label{covering}
Let $Y = (|Y|, \mathscr{O}_Y)$ be a reduced normal\footnote{Let remind that an analytic space is \emph{normal} if every stalk of the structure sheaf is a normal ring (meaning an integrally closed integral domain). In a normal analytic space, the singular locus has codimension of at least two.} complex analytic space and let $\Gamma$ be a finite subgroup of the group $\operatorname{Aut}(Y)$ of analytic automorphisms of $Y$. Our main goal in this subsection is to study the analytic quotient of $Y$ by the group $\Gamma$. More concretely, we want to construct a reduced normal complex analytic space $X = (|X|, \mathscr{O}_X)$ together with a surjective analytic map $\varpi: Y \to X$ which is invariant under $\Gamma$ --- i.e. $\varpi \circ \gamma = \varpi$ for all $\gamma \in \Gamma$. 
\subsubsection*{Finite Group Actions and Analytic Quotients}
Since any analytic space $Y$ has an underlying (Hausdorff) topological space $|Y|$, we start our study of analytic quotients with a few definitions regarding the action of a topological group on a topological space:
\begin{definition}[Group action]
	A topological group $\Gamma$ induces a \emph{(left) group action} on a topological space $|Y|$ if there is a map $\Gamma \times |Y|  \to |Y|$ such that:
	\begin{enumerate}
		\item For any $y \in |Y|$, $(\mathbbm{1} , y) \mapsto y$ where $\mathbbm{1} \in \Gamma$ is the identity element,
		\item For any $y \in |Y|$ and any two group elements $\gamma_1,\gamma_2 \in \Gamma$, $(\gamma_1 \gamma_2 , y) = \big(\gamma_1 , (\gamma_2 , y)\big)$.
	\end{enumerate}
	We will usually write $\gamma \cdot y$ or even $\gamma(y)$ instead of $(\gamma,y)$. There is also a notion of \emph{right group action}, which we will avoid introducing.
\end{definition}
\begin{remark}
	Throughout this paper, all groups are assumed to act \emph{effectively}, which means for every two distinct elements in the group, there is some point in the space at which they differ. 
\end{remark}
Given a group action $\Gamma \times |Y|  \to |Y|$, we can associate to every element $\gamma \in \Gamma$ a homeomorphism $\imath_{\gamma}: |Y|  \to |Y|$ which is defined as $\imath_{\gamma}(y) = \gamma \cdot y$ for all $y \in |Y|$. The map $\gamma \mapsto \imath_{\gamma}$ induces a group homomorphism $\imath: \Gamma \to \operatorname{Aut}\big(|Y|\big)$ where $\operatorname{Aut}\big(|Y|\big)$ is the group of automorphisms of topological space $|Y|$. Conversely, it is easy to see that any group homomorphism $\imath: \Gamma \to \operatorname{Aut}\big(|Y|\big)$ yields a group action $\Gamma \times |Y| \to |Y|$, by setting $\gamma \cdot y = \imath_{\gamma}(y)$. Observe that a group action is \emph{effective} if and only if $\imath: \Gamma \to \operatorname{Aut}\big(|Y|\big)$ is a \emph{monomorphism}.

Let us now run through some terms that are associated with this group action: The \emph{isotropy subgroup}, also called the \emph{stabilizer subgroup}, of any point $y \in |Y|$ is defined as the set $\Gamma_{y} := \{\gamma \in \Gamma \, | \, \gamma \cdot y = y \}$ and is a closed subgroup of $\Gamma$. The action of $\Gamma$ on $|Y|$ is said to be \emph{free} if $\Gamma_{y} = \{\mathbbm{1}\}$, for all $y \in |Y|$. The set $\Gamma(y) := \big\{\gamma \cdot y \in |Y| \, \big| \, \gamma \in \Gamma \big\}$ denotes the \emph{orbit} of point $y$. The action of $\Gamma$ on $|Y|$ is called \emph{transitive} if $\Gamma(y) = |Y|$ for any point $y \in |Y|$ --- i.e. if for any two points $y_1, y_2 \in |Y|$ there is a element $\gamma \in \Gamma$ such that $\gamma \cdot y_1 = y_2$. Moreover, we call an action \emph{regular} if it is both transitive and free.
We always denote by $|Y|/\Gamma$ the set of all $\Gamma$-orbits in $|Y|$.\footnote{More concretely, we should denote the set of all \emph{left} $\Gamma$-orbits in $|Y|$ by ``$\Gamma\backslash|Y|$'' instead of ``$|Y|/\Gamma$'' (the later should be reserved for set of all \emph{right} $\Gamma$-orbits). However, we will continue to use the notation $|Y|/\Gamma$ with the understanding that it represents the quotient of $|Y|$ by the left action of $\Gamma$.} The orbit space $|Y|/\Gamma$ will be called the \emph{topological quotient} of $|Y|$ by (the left action of) $\Gamma$ and the natural map $|\varpi|: |Y| \to |Y|/\Gamma$, sending $y$ to its left orbit $\Gamma(y)$, is called the corresponding \emph{quotient map}. To make the quotient map $|\varpi|: |Y| \to |Y|/\Gamma := |X|$ continuous for an arbitrary topological $\Gamma$-action on $|Y|$, we have to endow $|X|$ with the \emph{quotient topology}: $W \subset |X|$ is open, if and only if $|\varpi|^{-1}(W)$ is open in $|Y|$.
\begin{remark}
	For an open set $V \subset |Y|$, the image $\gamma(V)$ is also open for all $\gamma \in \Gamma$. Hence,
	\begin{equation*}
	|\varpi|^{-1}\big(|\varpi|(V)\big) = \bigcup_{\gamma \in \Gamma} \gamma(V)
	\end{equation*}
	is an open set --- i.e. $|\varpi|(V)$ is open in $|X|$. In other words, the quotient map $|\varpi|: |Y| \to |X|$ is an \emph{open map}. In particular, if $|\varpi|$ is (locally) a bijective, it is (locally) a homeomorphism.
\end{remark}

Since analytic subvarieties (representing an analytic variety) inherit a locally compact Hausdorff structure with a countable basis from their ambient complex number spaces, we shall assume that all topological spaces in the present text have these properties, at least locally. However, while patching local models together, we also want to avoid the creation of new pathologies. Therefore, we always assume $|Y|$ to be globally Hausdorff and to have a countable basis; in particular, all topological spaces in this paper are paracompact. Then, we have to put strong conditions on the action of group $\Gamma$ in order to make sure that the topological quotient $|Y|/\Gamma$ preserves these properties (e.g. being Hausdorff). For this reason, we will always assume that $\Gamma$ acts \emph{properly discontinuously} on $|Y|$ by which we mean that for all compact sets $V \subset |Y|$, the set 
\begin{equation*}
\left\{\gamma \in \Gamma \, \Big| \, \gamma(V) \cap V \neq \emptyset \right\}
\end{equation*}
is \emph{finite}; this ensures that the topological quotient $|Y|/\Gamma$ is indeed a Hausdorff space. Note that \emph{finite groups} always have this property for trivial reasons.
\begin{remark}\label{rmk:properlydiscontinous}
	Since the finite group $\Gamma$ acts properly discontinuously on $|Y|$, there exist for all $y \in |Y|$ (arbitrarily small) neighborhoods $V_y$ of $y$ such that
	\begin{equation*}
	\left\{
	\begin{split}
	& \gamma \big(V_y\big)  = V_y  \quad \text{for all} \quad \gamma \in \Gamma_y,\\
	& \gamma \big(V_y\big) \cap V_y = \emptyset  \quad \text{for all} \quad \gamma \in \Gamma \backslash \Gamma_y.\\
	\end{split}
	\right.
	\end{equation*}
	It is then sufficient to construct the analytic quotient of $\big(V_y, \mathscr{O}_{Y} \big|_{_{V_y}}\big)$ by $\Gamma_y$ for all $y \in Y$ (which is obviously identical with the quotient of $\bigcup_{\gamma \in \Gamma} \gamma(V_y)$ by $\Gamma$).\footnote{In fact, these spaces may be glued together (in a uniquely determined manner) to a space which possesses the desired property, if and only if the underlying topological space is Hausdorff.} 
\end{remark}
When $Y$ carries more structure, we are often compelled to equip the quotient $Y/\Gamma$ with a comparable structure: Let $Y = (|Y|, \mathscr{O}_Y)$ be a reduced complex analytic space and let $\Gamma$ be a finite subgroup of $\operatorname{Aut}(Y)$, the group of complex analytic automorphisms of $Y$.\footnote{As a general rule, $\operatorname{Aut}(Y)$ refers to the automorphisms of an object $Y$ in a category which will sometimes not be mentioned explicitly, if in the given context there is no ambiguity.} The orbit space $X:= Y/\Gamma$ is then called the \emph{analytic quotient} of $Y$ by $\Gamma$ and is constructed as follows: Define topologically $|X| := |Y|/\Gamma$ and denote by $|\varpi|: |Y| \to |X|$ the corresponding (topological) quotient map; according to remark~\ref{rmk:properlydiscontinous}, it is sufficient to consider the case in which $|Y|$ is small with respect to $y \in |Y|$ and that $\Gamma=\Gamma_{y}$ is finite. Then, for $W$ open in $|X|$, the set $|\varpi|^{-1}(W)$ is open and $\Gamma$-invariant in $|Y|$ such that we can form the invariant algebra
\begin{equation}
\mathscr{O}_X(W) := \left(\mathscr{O}_Y\left(|\varpi|^{-1}(W)\right)\right)^{\Gamma}
\end{equation}
where $\Gamma$ acts in the obvious manner on the algebra $\mathscr{O}_Y\left(|\varpi|^{-1}\big(|U|\big)\right)$ of holomorphic functions. Hence, we have furnished the topological space $|X|$ with a ringed structure $(|X|,\mathscr{O}_X)$. In order to see that $(|X|,\mathscr{O}_X)$ is indeed a (reduced) complex analytic space, we need the following local identity:
\begin{equation}
\mathscr{O}_{X,|\varpi|(y)} = \mathscr{O}_{Y,y}^{\Gamma}.
\end{equation}
In fact, since $|\varpi|$ is finite and $|\varpi|^{-1}\big(|\varpi|(y)\big) = \{y\}$, we have
\begin{equation}
\begin{split}
\mathscr{O}_{X,|\varpi|(y)}  &= \lim_{\overset{\longrightarrow}{W \ni |\varpi|(y)}} H^{0}(W,\mathscr{O}_{Y}) = \lim_{\overset{\longrightarrow}{W \ni |\varpi|(y)}} H^{0}(|\varpi|^{-1}(W),\mathscr{O}_{X})^{\Gamma}\\
&= \big[\lim_{\overset{\longrightarrow}{W \ni |\varpi|(y)}} H^{0}(|\varpi|^{-1}(W),\mathscr{O}_{X})\big]^{\Gamma} = \mathscr{O}_{Y,y}^{\Gamma}.
\end{split}
\end{equation}
More concretely, we have the following theorem \cite[Theorem~8.1]{Riemenschneide_2011}:
\begin{theorem}[Analytic quotient]\label{thrm:analyticquotient}
	If $\Gamma$ acts properly discontinuously on the reduced complex analytic space $Y$, then there exists the analytic quotient $X = Y/\Gamma$. The analytic quotient map $\varpi: Y \to X$ is locally finite and surjective and (near any point $y$) isomorphic to the quotient map $Y \to Y/\Gamma_y$, where $\Gamma_y$ denotes the finite stabilizer subgroup of $\Gamma$ at $y$. In particular, the analytic algebra $\mathscr{O}_{X,x}$ can be identified with the invariant algebra $\mathscr{O}_{Y,y}^{\Gamma_y}$ for an arbitrary point $y \in |\varpi|^{-1}(x)$, and $\varpi_y^{\ast}: \mathscr{O}_{X,x} \to \mathscr{O}_{Y,y}$ is just the finite
	inclusion $\mathscr{O}_{Y,y}^{\Gamma_y} \hookrightarrow \mathscr{O}_{Y,y}$.
\end{theorem}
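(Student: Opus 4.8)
The plan is to proceed in two stages: first construct the quotient locally near each point, where the action reduces to that of the finite stabilizer group, and then glue these local models together by means of the Gluing Lemma (Lemma~\ref{lemma:gluing}). By Remark~\ref{rmk:properlydiscontinous}, proper discontinuity guarantees that every $y \in |Y|$ admits a neighborhood $V_y$ with $\gamma(V_y) = V_y$ for $\gamma \in \Gamma_y$ and $\gamma(V_y) \cap V_y = \emptyset$ for $\gamma \in \Gamma \backslash \Gamma_y$, so that the orbit space of $\bigcup_{\gamma \in \Gamma} \gamma(V_y)$ under $\Gamma$ is canonically identified with the orbit space $V_y/\Gamma_y$ under the finite stabilizer. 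Thus it suffices to construct the analytic quotient of a reduced analytic space by a finite group fixing a point, and the global Hausdorff hypothesis on $|Y|$ ensures that the resulting local pieces patch together without creating new pathologies.

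First I would endow the topological quotient $|X| := |Y|/\Gamma$ with the quotient topology, observe that $|\varpi|$ is open, and conclude that $|X|$ is Hausdorff, locally compact, and second countable, inheriting these properties from $|Y|$ together with proper discontinuity. Next, I would define the candidate structure sheaf by invariants, $\mathscr{O}_X(W) := \big(\mathscr{O}_Y(|\varpi|^{-1}(W))\big)^{\Gamma}$, which is manifestly a sheaf of reduced $\cmpx$-algebras, since it is a subsheaf of the reduced sheaf $\mathscr{O}_Y$. The stalk computation displayed immediately before the statement --- which uses that $|\varpi|$ is finite, so that $|\varpi|^{-1}(\varpi(y)) = \{y\}$, and commutes the direct limit with the invariant functor --- then yields the key local identity $\mathscr{O}_{X,\varpi(y)} = \mathscr{O}_{Y,y}^{\Gamma_y}$.

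The substantive step, and the main obstacle, is to verify that $(|X|, \mathscr{O}_X)$ is genuinely a complex analytic space, i.e.\ that it is locally isomorphic to a local model --- equivalently, that the invariant algebra $\mathscr{O}_{Y,y}^{\Gamma_y}$ is itself an analytic algebra. Here I would first linearize the action of the finite group $\Gamma_y$ near the fixed point $y$ through an averaging (Bochner--Cartan) argument, producing a $\Gamma_y$-equivariant embedding of a neighborhood of $y$ into some $\cmpx^{N}$ on which $\Gamma_y$ acts linearly. The finite generation of the ring of invariants --- Noether's theorem in its convergent-power-series form, which follows from the integral dependence of each germ $f \in \mathscr{O}_{Y,y}$ over $\mathscr{O}_{Y,y}^{\Gamma_y}$ via the monic polynomial $\prod_{\gamma \in \Gamma_y}(T - \gamma \cdot f)$ with invariant coefficients --- then furnishes generators $\sigma_1, \dots, \sigma_M$ whose common map $\sigma = (\sigma_1, \dots, \sigma_M)$ separates $\Gamma_y$-orbits and realizes the quotient as an analytic subvariety of $\cmpx^{M}$. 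This identifies the local model, shows that $\varpi$ is finite and surjective near $y$, and exhibits $\varpi_y^{\ast}$ as the inclusion $\mathscr{O}_{Y,y}^{\Gamma_y} \hookrightarrow \mathscr{O}_{Y,y}$.

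Finally, I would assemble the local quotients $V_y/\Gamma_y$ into a global analytic space via the Gluing Lemma: the gluing isomorphisms are induced by the identity on overlaps of the topological quotient $|X|$, the cocycle condition holds automatically, and the resulting space is Hausdorff by hypothesis. The analytic map $\varpi: Y \to X$ glued from the local finite surjections is then the desired analytic quotient map; it is locally finite and surjective, and near each $y$ it is isomorphic to $Y \to Y/\Gamma_y$ by construction, which establishes the remaining assertions of the theorem.
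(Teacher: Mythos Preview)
Your proposal is correct and follows the standard route to this result. However, the paper does not actually prove this theorem: it cites it as \cite[Theorem~8.1]{Riemenschneide_2011} and only supplies, in the paragraphs immediately preceding the statement, the preparatory material you also use --- the definition of the structure sheaf $\mathscr{O}_X(W) := \big(\mathscr{O}_Y(|\varpi|^{-1}(W))\big)^{\Gamma}$ and the stalk computation $\mathscr{O}_{X,|\varpi|(y)} = \mathscr{O}_{Y,y}^{\Gamma}$. Your argument goes beyond the paper by actually carrying out the substantive step (Cartan's linearization plus the Noether-type finite generation of invariants to produce a local model), which the paper simply delegates to the reference; in that sense you have supplied what the paper omits rather than reproduced what it contains.
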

\begin{remark}
	If $\mathscr{O}_{Y,y}$ is reduced or an integral domain, $\mathscr{O}_{Y,y}^{\Gamma_y}$ is obviously reduced or an integral domain for arbitrary automorphism groups $\Gamma_y$. Since the inclusion of $\cmpx$-algebras $\mathscr{O}_{Y,y}^{\Gamma_y} \hookrightarrow \mathscr{O}_{Y,y}$ is a finite homomorphism for finite groups $\Gamma_y$, both algebras have the same dimension. Hence, under our standard assumptions, the analytic quotient $X=Y/\Gamma$ has in $x = \varpi(y)$ the same dimension as $Y$ in $y$. We finally note that also \emph{normality} will be inherited from $Y$.
\end{remark}
\subsubsection*{Quotient Singularities}
Anticipating the introduction of complex orbifolds as objects which locally look like the quotient of a complex manifold with a finite group action, we turn to studying the singularities of an analytic quotient of a complex manifold by a finite subgroup of its holomorphic automorphisms. Such singularities are called quotient singularities:
\begin{definition}[Quotient singularity]\label{def:quotientsingularity}
	By a \emph{quotient singularity}, we understand a singular point $x$ of an analytic quotient $X = Y/\Gamma$, where $Y$ is a smooth analytic space and $\Gamma$ is a finite group action on $Y$ by analytic automorphisms. 
\end{definition}
\noindent Since a smooth analytic space is, in fact, a complex analytic manifold, we will use $M$ (instead of $Y$) to denote such spaces. Following Cartan, one can then show that quotient singularities are locally analytically isomorphic to quotients of affine spaces by linear actions:\footnote{This result can be applied to show that quotient singularities are in fact \emph{algebraic} in all dimensions.}
\begin{theorem}[Cartan]
	Each quotient singularity is isomorphic to a quotient $\cmpx^{\mathfrak{n}}/\Gamma$, where $\Gamma$ is a finite subgroup of $\operatorname{GL}({\mathfrak{n}},\cmpx)$.
\end{theorem}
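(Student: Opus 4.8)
The plan is to reduce the assertion to a purely local \emph{linearization} problem and then resolve that problem by Cartan's averaging trick. By Definition~\ref{def:quotientsingularity} a quotient singularity is a singular point $x$ of $X = M/\Gamma$ with $M$ a complex manifold and $\Gamma \subset \operatorname{Aut}(M)$ finite; since the claim is local around $x$, I would first invoke Theorem~\ref{thrm:analyticquotient}. That theorem says the analytic quotient map $\varpi : M \to X$ is, near any $y \in \varpi^{-1}(x)$, isomorphic to the quotient $M \to M/\Gamma_y$ by the finite stabilizer $\Gamma_y$. Choosing a $\Gamma_y$-invariant chart around $y$ (which exists by finiteness of $\Gamma_y$ and proper discontinuity, as in Remark~\ref{rmk:properlydiscontinous}), I may therefore assume that $M$ is an open polydisc $\varDelta \subset \cmpx^{\mathfrak{n}}$, that $y = 0$, and that the finite group $\Gamma := \Gamma_y$ acts on $\varDelta$ by biholomorphisms each fixing the origin.

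The heart of the argument is the following linearization lemma. For $\gamma \in \Gamma$ set $\rho(\gamma) := D\gamma(0) \in \operatorname{GL}(\mathfrak{n},\cmpx)$; differentiating $\gamma\circ\delta$ at $0$ and using the chain rule shows that $\rho : \Gamma \to \operatorname{GL}(\mathfrak{n},\cmpx)$ is a group homomorphism. I would then define the averaged map
\begin{equation*}
\Phi(z) = \frac{1}{|\Gamma|} \sum_{\gamma \in \Gamma} \rho(\gamma)^{-1} \, \gamma(z),
\end{equation*}
which is holomorphic on a sufficiently small $\Gamma$-invariant neighborhood of $0$. Its differential at the origin is $D\Phi(0) = \tfrac{1}{|\Gamma|}\sum_{\gamma} \rho(\gamma)^{-1}\rho(\gamma) = \mathbbm{1}$, so by the inverse function theorem $\Phi$ restricts to a biholomorphism of a smaller neighborhood onto its image. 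The key computation is the intertwining relation: for $\delta \in \Gamma$, reindexing the sum by $\gamma' = \gamma\delta$ and using that $\rho$ is a homomorphism gives $\Phi\circ\delta = \rho(\delta)\circ\Phi$.

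Finally I would assemble the conclusion. The relation $\Phi\circ\delta = \rho(\delta)\circ\Phi$ shows that $\Phi$ conjugates the a priori nonlinear $\Gamma$-action to the linear action of $\rho(\Gamma) \subset \operatorname{GL}(\mathfrak{n},\cmpx)$; rewriting it as $\gamma = \Phi^{-1}\circ\rho(\gamma)\circ\Phi$ shows that $\rho(\gamma) = \mathbbm{1}$ forces $\gamma = \operatorname{id}$, so $\rho$ is faithful and $\rho(\Gamma) \cong \Gamma$ is a finite subgroup of $\operatorname{GL}(\mathfrak{n},\cmpx)$. Passing to analytic quotients, $\Phi$ descends to an isomorphism of germs $(X,x) = (\varDelta/\Gamma, \varpi(0)) \cong (\cmpx^{\mathfrak{n}}/\rho(\Gamma), [0])$, which is exactly the desired presentation. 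I expect the main obstacle to be not the averaging identity itself but the bookkeeping of domains: one must arrange a single $\Gamma$-invariant neighborhood on which every $\gamma$, the map $\Phi$, and a local inverse of $\Phi$ are simultaneously defined, and on which the equivariance descends cleanly to the quotient. This is handled by shrinking to a $\Gamma$-invariant ball on which $\Phi$ is invertible, invariance being preserved precisely because $\rho(\Gamma)$ acts linearly.
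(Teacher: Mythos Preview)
Your proof is correct and is in fact the classical argument due to Cartan via the averaging trick. The paper, however, does not supply a proof of this statement at all: it merely records the result as a theorem attributed to Cartan and immediately moves on to use it, noting in passing that this linearization ``can be applied to show that quotient singularities are in fact algebraic in all dimensions.'' So there is nothing to compare against --- your write-up fills in exactly the content the paper leaves to the reader.

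One minor point worth tightening: when you assert that $\rho$ is faithful, the cleanest way is indeed via the conjugation identity $\gamma = \Phi^{-1}\circ\rho(\gamma)\circ\Phi$, but you should state this on the chosen $\Gamma$-invariant neighborhood where $\Phi$ is a biholomorphism; since $\Gamma$ already acts effectively on $M$ (and hence on any nonempty $\Gamma$-invariant open subset), this suffices. Your final paragraph already anticipates the only real bookkeeping issue, namely arranging a common $\Gamma$-invariant domain on which $\Phi$ and $\Phi^{-1}$ are defined; taking a small ball in the new linear coordinates (which is $\rho(\Gamma)$-invariant since $\rho(\Gamma) \subset \operatorname{GL}(\mathfrak{n},\cmpx)$ is finite and hence preserves a Hermitian form) handles this cleanly.
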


Let $M$ be a complex analytic manifold and let $\Gamma$ be a finite subgroup of its holomorphic automorphisms $\operatorname{Aut}(M)$. We will denote the quotient singularity $M/\Gamma$ by $X$ and the corresponding analytic quotient map by $\varpi: M \to X$. Since $M$ is a complex manifold, it can be embedded in an ambient complex number space $\cmpx^{\mathfrak{n}}$ and the structure sheaf $\mathscr{O}_{M}$ is given by the restriction $\mathscr{O}_{\cmpx^{\mathfrak{n}}}\big|_{_{M}}$; in particular, the stalks $\mathscr{O}_{M,p}$ for all points $p \in M$ are isomorphic with the $\cmpx$-algebra $\mathscr{O}_{\cmpx^{\mathfrak{n}},p} = \cmpx\{z_1, \dots, z_{\mathfrak{n}}\}$ of convergent power series at that point. Then, it follows from theorem~\ref{thrm:analyticquotient} and the above definition~\ref{def:quotientsingularity} that quotient singularities are completely determined by the normal invariant algebra
\begin{equation*}
\mathscr{O}_{X,x} = \mathscr{O}_{\cmpx^{\mathfrak{n}},p}^{\Gamma_p}
\end{equation*}
for any point $p \in |\varpi|^{-1}(x)$. As the ring $\mathscr{O}_{X,x}$ depends only on the conjugacy class of $\Gamma_p$, the quotients for conjugate groups are isomorphic. Hence, we consider only a representative for each conjugacy class.
\begin{definition}[Reflection groups and small groups]
	An element $\gamma \in \operatorname{Aut}(M)$, $M$ a connected complex manifold, is called a \emph{reflection} (or, perhaps more precisely, a \emph{pseudoreflection}) if it is of finite order and if the (analytic) fixpoint set
	\begin{equation*}
	\operatorname{Fix}(\gamma) := \big\{ p \in M \, \big| \, \gamma \cdot p = p \big\},
	\end{equation*}
	is of pure codimension-1 in $M$. A finite group $\Gamma \subset \operatorname{Aut}(M)$ is called a \emph{reflection group} if it is generated by pseudoreflections in $\operatorname{Aut}(M)$. Of course, an element $\gamma$ of finite order in $\operatorname{GL}(\mathfrak{n},\cmpx)$ is a reflection if and only if it leaves a hyperplane in $\cmpx^{\mathfrak{n}}$ pointwise fixed; this is equivalent to $\gamma$ having the eigenvalues 1 (of multiplicity $\mathfrak{n}-1$) and $e^{\frac{2 \pi \sqrt{-1}}{m}}$ --- an $m$-th root of unity with $m \geq 2$. On the other hand, a finite subgroup $\Gamma \subset \operatorname{GL}(\mathfrak{n},\cmpx)$ will be called \emph{small} if it contains no pseudoreflections.
\end{definition}
\noindent Then, we have the following well-known result due to Prill \cite{Prill1967ClassificationOfQuotients}:
\begin{theorem}[Classification of quotients of complex manifolds]\label{thrm:singularityclassification}
	Let $\Gamma \subset \operatorname{GL}(\mathfrak{n},\cmpx)$ be a finite subgroup. The following two statements are true:
	\begin{enumerate}[(i)]
		\item The analytic quotient $\cmpx^{\mathfrak{n}}/\Gamma$ is smooth if and only if $\Gamma$ is a reflection group;
		\item There exists a small group $\digamma$ such that $\cmpx^{\mathfrak{n}}/\Gamma$ and $\cmpx^{\mathfrak{n}}/\digamma$ are analytically isomorphic.
	\end{enumerate}
\end{theorem}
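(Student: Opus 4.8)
The plan is to reduce the analytic statement to the Chevalley--Shephard--Todd (C--S--T) theorem in its local graded form, and then to construct the small group $\digamma$ explicitly via the reflection subgroup. Since $\Gamma \subset \operatorname{GL}(\mathfrak{n},\cmpx)$ acts linearly it fixes the origin, so by Theorem~\ref{thrm:analyticquotient} the germ of $X := \cmpx^{\mathfrak{n}}/\Gamma$ at $x = \varpi(0)$ is encoded in the invariant algebra $\mathscr{O}_{X,x} \cong \cmpx\{z_1,\dots,z_{\mathfrak{n}}\}^{\Gamma}$. Because $\Gamma$ acts by \emph{homogeneous} (linear) automorphisms, the Reynolds averaging operator is compatible with the grading, and this analytic invariant algebra is regular precisely when the graded polynomial invariant ring $\cmpx[z_1,\dots,z_{\mathfrak{n}}]^{\Gamma}$ is a free polynomial algebra; moreover $x$ is a smooth point of $X$ if and only if $\mathscr{O}_{X,x}$ is regular. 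This identification converts both parts of the theorem into statements about the graded invariant ring.

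For part $(i)$, I would invoke C--S--T: a finite linear group $\Gamma \subset \operatorname{GL}(V)$, $V = \cmpx^{\mathfrak{n}}$, has polynomial invariant ring $\cmpx[V]^{\Gamma}$ if and only if it is generated by pseudoreflections. If $\Gamma$ is a reflection group, write $\cmpx[V]^{\Gamma} = \cmpx[f_1,\dots,f_{\mathfrak{n}}]$ with algebraically independent homogeneous $f_i$; the map $F = (f_1,\dots,f_{\mathfrak{n}}) : \cmpx^{\mathfrak{n}} \to \cmpx^{\mathfrak{n}}$ then descends to an analytic isomorphism of $X$ onto a smooth $\cmpx^{\mathfrak{n}}$, proving smoothness. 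Conversely, if $X$ is smooth the reduction above makes $\cmpx[V]^{\Gamma}$ a polynomial ring, and the converse direction of C--S--T forces $\Gamma$ to be generated by pseudoreflections.

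For part $(ii)$, let $\Gamma_r \trianglelefteq \Gamma$ denote the subgroup generated by all pseudoreflections in $\Gamma$; it is normal because any conjugate of a pseudoreflection is again a pseudoreflection (conjugation preserves finite order and codimension-$1$ fixed loci). By part $(i)$, $\cmpx^{\mathfrak{n}}/\Gamma_r$ is smooth, and homogeneous generators of $\cmpx[V]^{\Gamma_r}$ yield an identification $\cmpx^{\mathfrak{n}}/\Gamma_r \cong \cmpx^{\mathfrak{n}}$. The finite group $\digamma := \Gamma/\Gamma_r$ then acts on $\cmpx^{\mathfrak{n}}/\Gamma_r$ (since $\Gamma_r$ is normal) fixing the image of the origin; by Cartan's linearization theorem for a finite group fixing a point, one may choose coordinates making this action linear, realizing $\digamma \hookrightarrow \operatorname{GL}(\mathfrak{n},\cmpx)$. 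Transitivity of the quotient construction then gives $\cmpx^{\mathfrak{n}}/\Gamma \cong (\cmpx^{\mathfrak{n}}/\Gamma_r)/\digamma \cong \cmpx^{\mathfrak{n}}/\digamma$.

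The crux, and the step I expect to be the main obstacle, is showing that the realized group $\digamma$ is \emph{small}. I would argue by contradiction using ramification theory: the codimension-$1$ part of the branch locus of $\varpi : \cmpx^{\mathfrak{n}} \to \cmpx^{\mathfrak{n}}/\Gamma$ is the union of images of the fixed hyperplanes $\operatorname{Fix}(s)$ of pseudoreflections $s \in \Gamma$, all of which lie in $\Gamma_r$ by construction; hence this ramification is entirely absorbed by the intermediate covering $\varpi_r : \cmpx^{\mathfrak{n}} \to \cmpx^{\mathfrak{n}}/\Gamma_r$, so the induced covering $\bar{\varpi} : \cmpx^{\mathfrak{n}}/\Gamma_r \to \cmpx^{\mathfrak{n}}/\Gamma$ is unramified in codimension $1$. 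A nontrivial pseudoreflection $\bar{g} \in \digamma$ would force $\bar{\varpi}$ to ramify along the codimension-$1$ locus $\operatorname{Fix}(\bar{g})$, a contradiction. Invoking purity of the branch locus (Zariski--Nagata) to ensure that the inertia of each codimension-$1$ component of the ramification is generated by pseudoreflections is the delicate point; once it is in place, $\digamma$ has no pseudoreflections and is therefore small, completing the proof.
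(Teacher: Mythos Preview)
Your argument is correct and follows the standard route to this result (essentially Prill's original proof): reduce smoothness to regularity of the invariant ring via Theorem~\ref{thrm:analyticquotient}, invoke Chevalley--Shephard--Todd for part~(i), and for part~(ii) pass to the quotient by the normal subgroup $\Gamma_r$ generated by pseudoreflections, linearize the residual $\digamma$-action via Cartan, and rule out pseudoreflections in $\digamma$ by a ramification-in-codimension-one argument.

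The paper, however, does not give a proof at all. It states the theorem as a known result due to Prill, remarks that it is equivalent to the algebraic statement that $\mathscr{O}_{\cmpx^{\mathfrak{n}},p}^{\Gamma_p} \cong \cmpx\{z_1,\dots,z_{\mathfrak{n}}\}$ if and only if $\Gamma$ is a finite reflection group, and points to \cite[\S8.8]{Riemenschneide_2011} for details. So your proposal is not a different route from the paper's proof---it is a proof where the paper supplies only a citation. The content you wrote is exactly what one would find upon following that citation, and your identification of the ``delicate point'' (that the codimension-one ramification is entirely absorbed by $\Gamma_r$) is accurate; once that is secured, purity is not strictly needed, since the absence of codimension-one fixed loci for nontrivial elements of $\digamma$ is precisely the definition of smallness.
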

This, of course, is equivalent to the claim that the invariant algebra $\mathscr{O}_{\cmpx^{\mathfrak{n}},p}^{\Gamma_p}$ is isomorphic to the convergent power series ring $\cmpx\{z_1, \dots, z_{\mathfrak{n}}\}$ if and only if $\Gamma \subset \operatorname{GL}(\mathfrak{n},\cmpx)$ is a finite reflection group (see e.g. \cite[\textsection8.8]{Riemenschneide_2011} for more details). 
\begin{proposition}[Analytic spaces with at most quotient singularities]
	An analytic space $X = (|X|, \mathscr{O}_X)$ admitting only quotient singularities has the following properties:
	\begin{enumerate}[(i)]
		\item $X = (|X|, \mathscr{O}_X)$ is always a reduced normal analytic space;
		\item  The singular locus $\sing(X)$ is a closed reduced analytic subspace of $X$ and has complex codimension at least two in $X$;
		\item The smooth locus $\operatorname{Reg}(X)$ is a complex manifold and a dense open subset of $X$.		
	\end{enumerate}
\end{proposition}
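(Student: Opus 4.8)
The plan is to establish all three assertions locally, by reducing to the standard local model $M/\Gamma$, and then to assemble the statements from results already proved. First I would localize: by the hypothesis that $X$ admits only quotient singularities, every point $x \in X$ has an open neighborhood $(V_x, \mathscr{O}_X|_{V_x})$ analytically isomorphic to an analytic quotient $M/\Gamma$, where $M$ is a complex manifold and $\Gamma \subset \operatorname{Aut}(M)$ is finite; by Cartan's theorem one may even take $M = \cmpx^{\mathfrak{n}}$ with $\Gamma \subset \operatorname{GL}(\mathfrak{n}, \cmpx)$. Theorem~\ref{thrm:analyticquotient} then identifies the stalk $\mathscr{O}_{X,x}$ with the invariant algebra $\mathscr{O}_{M,p}^{\Gamma_p}$ for any $p$ lying over $x$. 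For part (i), I note that $\mathscr{O}_{M,p} \cong \cmpx\{z_1,\dots,z_{\mathfrak{n}}\}$ is an integrally closed integral domain; by the remark following Theorem~\ref{thrm:analyticquotient}, reducedness and normality of $\mathscr{O}_{M,p}$ pass to the invariant subalgebra $\mathscr{O}_{M,p}^{\Gamma_p}$. Since reducedness and normality are conditions on the stalks and every stalk of $\mathscr{O}_X$ is of this form, $X$ is a reduced normal analytic space.

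For part (ii), the key point is the codimension bound. Having shown in (i) that $X$ is normal, the codimension-$\geq 2$ statement follows at once from the general fact (recalled in the discussion of normal analytic spaces) that the singular locus of a normal analytic space has complex codimension at least two. To see directly that no codimension-one singularities arise --- the only real subtlety --- I would invoke Prill's theorem~\ref{thrm:singularityclassification}: the fixed hypersurfaces of pseudoreflections in $\Gamma$ do \emph{not} produce singular points, because one may replace $\Gamma$ by a small group $\digamma$ with $M/\Gamma \cong M/\digamma$ and $\digamma$ containing no pseudoreflections, so that every nonidentity element of $\digamma$ fixes a set of codimension $\geq 2$; consequently the non-free locus, which contains $\sing(X)$, has codimension $\geq 2$. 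That $\sing(X)$ is a closed reduced analytic subspace then follows because the singular locus of a reduced analytic space is always a closed analytic subset, and it inherits reducedness from $X$.

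For part (iii), $\operatorname{Reg}(X) = X \setminus \sing(X)$ is a complex manifold by the very definition of a smooth point, and it is open because $\sing(X)$ is closed by (ii). Density is immediate: a proper analytic subset of codimension $\geq 1$ is nowhere dense, so its complement is dense; equivalently this is the general assertion that the regular locus of an analytic space is open and dense. The only genuine obstacle in the whole argument is the codimension-$\geq 2$ claim in (ii), where one must resist the temptation to count the reflection hyperplanes as singular and instead pass to the small-group normal form; everything else is a direct transcription of Theorem~\ref{thrm:analyticquotient}, its remark on inherited normality, and the general structure theory of normal analytic spaces.
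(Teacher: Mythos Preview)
The paper states this proposition without proof; it appears as a summary of standard facts immediately after Prill's classification theorem and the remark on inherited normality, and the text moves directly to the corollary about one-dimensional quotients. So there is no ``paper's own proof'' to compare against.

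Your argument is correct and is precisely the proof one would reconstruct from the surrounding material. You use exactly the ingredients the paper has set up: Theorem~\ref{thrm:analyticquotient} to identify stalks with invariant algebras, the remark following it for inherited reducedness and normality, the general fact (stated in the paper's discussion of normal points) that normality forces $\sing(X)$ to have codimension at least two, and Prill's theorem to explain why pseudoreflections do not create codimension-one singularities. The only minor redundancy is that once you have (i), the codimension bound in (ii) follows immediately from normality without any further appeal to small groups; your direct argument via $\digamma$ is a nice sanity check but not logically necessary. Part (iii) is routine given (ii), as you say.
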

\noindent 
Finally, since all finite subgroups of $\operatorname{GL}(1,\cmpx) \cong \cmpx^{\ast}$ are reflection groups, we have the following important corollary to the above theorem:
\begin{corollary}\label{corll:1dAnalytic}
	If $M$ is a 1-dimensional complex analytic manifold (i.e., a Riemann surface) and $\Gamma \subset \operatorname{Aut}(M)$ is a finite subgroup of its holomorphic automorphisms, the analytic quotient $M/\Gamma$ will always be a smooth, complex analytic space --- i.e., another Riemann surface.
\end{corollary}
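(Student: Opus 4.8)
The plan is to deduce this as a direct consequence of Cartan's theorem and part (i) of Theorem~\ref{thrm:singularityclassification} (Prill's classification), the entire content being that in complex dimension one there are no nontrivial small groups. First I would localize the problem: since $\Gamma$ is finite it acts properly discontinuously, so by Remark~\ref{rmk:properlydiscontinous} and Theorem~\ref{thrm:analyticquotient} the analytic quotient $M/\Gamma$ exists, is a reduced normal analytic space, and near any point $x=\varpi(p)$ it is isomorphic to the quotient $M/\Gamma_p$ of a neighborhood of $p$ by the finite stabilizer $\Gamma_p\subset\operatorname{Aut}(M)$, with local ring $\mathscr{O}_{M,p}^{\Gamma_p}$. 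Thus it suffices to show that each such local model is smooth.

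Next I would linearize. By Cartan's theorem the quotient singularity at $x$ is isomorphic to $\cmpx^{1}/\Gamma_p$ for a finite subgroup $\Gamma_p\subset\operatorname{GL}(1,\cmpx)\cong\cmpx^{\ast}$; concretely, choosing a coordinate $z$ centered at $p$ in which $\Gamma_p$ acts linearly, every element of $\Gamma_p$ is multiplication by a root of unity. The crucial step is then to recognize $\Gamma_p$ as a reflection group. Any nontrivial $\gamma\in\Gamma_p$ is multiplication by some $\zeta\neq 1$ of finite order, and its fixed-point set $\operatorname{Fix}(\gamma)=\{0\}$ is of pure codimension one in the one-dimensional space $\cmpx^{1}$; hence $\gamma$ is a pseudoreflection. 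Since every nontrivial element is a pseudoreflection, $\Gamma_p$ is generated by pseudoreflections and is therefore a reflection group. Part (i) of Theorem~\ref{thrm:singularityclassification} now gives that $\cmpx^{1}/\Gamma_p$ is smooth, so $x$ is a smooth point of $M/\Gamma$.

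Because the argument applies at every point, $M/\Gamma$ has no quotient singularities and is therefore a smooth one-dimensional complex analytic space, i.e.\ a Riemann surface (its Hausdorffness is guaranteed by the proper discontinuity of the finite action). I do not expect any genuine obstacle here; the only point deserving care is the verification that a nontrivial finite-order element of $\operatorname{GL}(1,\cmpx)$ really is a pseudoreflection, which hinges on the dimensional coincidence that a single fixed point is of codimension one in $\cmpx^{1}$. For a fully explicit check one may instead invoke the classical uniformizing coordinate: if $\Gamma_p\cong\mathbb{Z}/m\mathbb{Z}$ acts by $z\mapsto e^{2\pi\sqrt{-1}/m}z$, the invariant ring $\mathscr{O}_{\cmpx,0}^{\Gamma_p}$ is generated by $w=z^{m}$, and $w$ furnishes a holomorphic coordinate exhibiting $\cmpx^{1}/\Gamma_p$ as a disk, thereby confirming smoothness directly without appeal to Prill's theorem.
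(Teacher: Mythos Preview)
Your proposal is correct and follows essentially the same approach as the paper. The paper's proof is just the single sentence preceding the corollary, ``since all finite subgroups of $\operatorname{GL}(1,\cmpx)\cong\cmpx^{\ast}$ are reflection groups,'' which then invokes part~(i) of Theorem~\ref{thrm:singularityclassification}; you have simply unpacked this into a careful argument (localization via Theorem~\ref{thrm:analyticquotient}, linearization via Cartan, verification that every nontrivial element of a finite subgroup of $\cmpx^{\ast}$ is a pseudoreflection) and supplemented it with the explicit $w=z^{m}$ coordinate check.
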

\subsubsection*{Ramified Analytic Coverings}
As we saw in theorem~\ref{thrm:analyticquotient}, an analytic quotient map $\varpi: Y \to X := Y/\Gamma$ between two complex analytic spaces $Y$ and $X$ is locally finite and surjective. This motivates us to study such analytic mappings in more detail (see e.g., \cite[\textsection7.2]{Grauert-Remmert}):
\begin{definition}[Analytic covering map]
	A finite surjective analytic map $\varpi: Y \to X$ between irreducible analytic spaces is called an \emph{analytic covering}. This means that there exists a thin subset\footnote{A subset $T \subset X$ is called \emph{thin} if it has the property that each point has a neighborhood on which some non-zero holomorphic function vanishes. Since the set on which a holomorphic function vanishes is closed and has an empty interior, a thin set is \emph{nowhere dense}, and the closure of a thin set is also thin.} $T \subset X$,  called the \emph{critical locus} of the covering, such that 
	\begin{enumerate}[(i)]
		\item $\varpi^{-1}(T)$ is thin in $Y$, and
		\item the restriction $\varpi\big|_{_{Y \backslash \varpi^{-1}(T)}} : Y \backslash \varpi^{-1}(T) \to X \backslash T$ is locally an analytic isomorphism.
	\end{enumerate}
	The second condition means that for a sufficiently small open neighborhood $W_{x} \subset X \backslash T$ of any point $x \in X \backslash T$, the inverse image $\varpi^{-1}(W_{x})$ consists of a finite number of components, called \emph{sheets} of $\varpi$, such that the restriction of $\varpi$ to each component is a complex analytic isomorphism between that component and $W_{x}$. 
\end{definition}
\begin{remark}
	We will always assume that our analytic spaces are irreducible so that ``analytic covering'' and ``finite analytic surjection'' can be regarded as synonyms.
\end{remark}

If $\varpi: Y \to X$ is an analytic covering, the restriction of $\varpi$ to the complement of critical locus is necessarily a finite-sheeted covering map; the number of sheets of this covering map will be called the (total) \emph{degree} of the analytic covering $\varpi$ and will be denoted by $\deg(\varpi)$. Additionally, for any point $y \in Y$, there are arbitrarily small open neighborhoods $V_y \subset Y$ of $y$ such that the restriction $\varpi\big|_{_{V_y}}$ is also an analytic covering (see e.g. \cite[\textsection5]{gunning1970lectures}). Since the degrees of these local analytic coverings can only decrease as the neighborhoods $V_y$ shrink to the point $y$, it is evident that the degree is the same for all sufficiently small such neighborhoods; this common degree will be called the \emph{ramification index} (or the \emph{local degree} or \emph{multiplicity}) of the mapping $\varpi$ at the point $y$, and will be denoted by $\deg_{\varpi}(y)$. 
\begin{remark}
	Note that if $\varpi: Y \xrightarrow{d:1} X$ is an analytic covering of total degree $d$, then selecting any point $x \in X$ and letting $\{y_i\}_{i \in I}:=\varpi^{-1}(x) \subset Y$ be the collection of distinct points in the pre-image of $x$, it follows that $\sum_{i \in I} \deg_{\varpi}(y_i) = d$.
\end{remark}

For any point $y \notin \varpi^{-1}(T)$, it is clear that $\deg_{\varpi}(y)=1$. However, in general, there may very well be points $y' \in \varpi^{-1}(T)$ for which $\deg_{\varpi}(y')=1$. This is because not all the points of $\varpi^{-1}(\varpi(y'))$ need necessarily have the same ramification index, even when the critical locus is chosen to be as small as possible. Hence, one usually introduces the subset
\begin{equation*}
R_{\varpi} := \left\{y \in Y \, \Big| \, \deg_{\varpi}(y)>1 \right\}
\end{equation*}
of $\varpi^{-1}(T)$ which will be called the \emph{ramification locus} of the analytic covering $\varpi: Y \to X$ and is a closed analytic subspace of $Y$. The set $B_{\varpi} := \varpi(R_{\varpi})$ is called the \emph{branching locus} of $\varpi$ and is a closed analytic subspace of $X$; note that $B_{\varpi}$ is clearly a subset of the critical locus $T$. An analytic covering $\varpi: Y \to X$ is said to \emph{branched at most at $T$} if the branch locus $B_{\varpi}$ is contained in $T$. In addition, the analytic covering $\varpi$ will be called \emph{unbranched}, if $B_{\varpi}$ is empty. Observe that when $X$ is singular, $R_{\varpi}$ and $B_{\varpi}$ can be of codimension $>1$, even when $\varpi$ is a non-trivial branched covering. However, when $X$ is a smooth, complex analytic space, $R_{\varpi}$ will be a hypersurface in $Y$, and $B_{\varpi}$ will be a hypersurface in $X$.

Now, let $\varpi: Y \to X$ be an analytic covering of normal complex spaces. We define an automorphism of this analytic covering as a complex analytic automorphism $f: Y \to Y$ with the property that the diagram
\begin{equation}
\begin{tikzcd}
Y \ar[rr,"f"] \ar[dr,"\varpi" swap] &  &Y \ar[ld,"\varpi"] \\
&X&
\end{tikzcd}
\end{equation}
commutes --- i.e. $\varpi \circ f = \varpi$. The group 
\begin{equation}
\operatorname{Aut}(\varpi) := \left\{f \in \operatorname{Aut}(Y) \, \Big| \, \varpi \circ f = \varpi \right\}
\end{equation}
of all such automorphisms of $\varpi: Y \to X$ is called the \emph{group of covering transformations} or \emph{deck transformations}. An analytic covering $\varpi: Y \to X$ will be called a \emph{Galois covering} (or, in topologists language, \emph{regular covering}) if $\operatorname{Aut}(\varpi)$ acts \emph{transitively} on every fiber of $\varpi$; the group $\operatorname{Aut}(\varpi)$ itself will be called the \emph{Galois group} of $\varpi$ and will be denoted by $\gal(Y/X)$ or $\gal(\varpi)$. In this case, the analytic quotient $Y/\gal(\varpi)$ is complex analytically equivalent to $X$ and we get the following corollary of theorem~\ref{thrm:analyticquotient}:
\begin{corollary}
	Every analytic quotient map $\varpi: Y \to X := Y/\Gamma$ is locally a Galois covering map.
\end{corollary}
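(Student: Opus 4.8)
The plan is to reduce the global statement to a purely local one via Theorem~\ref{thrm:analyticquotient} and then to verify the two defining properties of a Galois covering --- namely that $\varpi$ is an analytic covering and that its group of deck transformations acts transitively on fibers. Fix a point $y \in Y$ and set $x = \varpi(y)$. By Theorem~\ref{thrm:analyticquotient}, near $y$ the quotient map $\varpi$ is analytically isomorphic to the quotient map $V_y \to V_y/\Gamma_y$, where $\Gamma_y$ is the (finite) stabilizer subgroup and $V_y$ is a $\Gamma_y$-invariant neighborhood of $y$ chosen as in Remark~\ref{rmk:properlydiscontinous}, so that $\gamma(V_y) = V_y$ for $\gamma \in \Gamma_y$ and $\gamma(V_y) \cap V_y = \emptyset$ for $\gamma \in \Gamma \setminus \Gamma_y$. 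Thus it suffices to prove that $\varpi\big|_{V_y} : V_y \to V_y/\Gamma_y$ is a Galois covering with Galois group $\Gamma_y$.

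First I would exhibit the critical locus and check the covering axioms. Since $\Gamma_y$ is finite and acts by analytic automorphisms, each fixed-point set $\operatorname{Fix}(\gamma)$ for $\gamma \neq \mathbbm{1}$ is a proper analytic subset, hence thin; their finite union $R := \bigcup_{\gamma \in \Gamma_y \setminus \{\mathbbm{1}\}} \operatorname{Fix}(\gamma)$ is thin in $V_y$, and $T := \varpi(R)$ is thin in $V_y/\Gamma_y$. On the complement $V_y \setminus R$ the action of $\Gamma_y$ is free and properly discontinuous, so $\varpi$ restricts there to an unbranched finite-sheeted covering that is locally an analytic isomorphism; combined with the finiteness and surjectivity of $\varpi$ supplied by the theorem, this shows that $\varpi\big|_{V_y}$ is an analytic covering branched at most at $T$.

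Next I would identify the deck group and prove transitivity. Every $\gamma \in \Gamma_y$ restricts to an automorphism of $V_y$ (because $\gamma(V_y) = V_y$) and satisfies $\varpi \circ \gamma = \varpi$ by the very definition of the quotient map, so $\Gamma_y \subseteq \operatorname{Aut}(\varpi\big|_{V_y})$. For transitivity, note that by the construction of $V_y$ the only elements of $\Gamma$ carrying a point of $V_y$ to another point of $V_y$ lie in $\Gamma_y$; hence for any $y' \in V_y$ the fiber $\varpi^{-1}(\varpi(y')) \cap V_y$ is exactly the $\Gamma_y$-orbit of $y'$. Therefore $\Gamma_y$, and a fortiori $\operatorname{Aut}(\varpi\big|_{V_y})$, acts transitively on each fiber, which is precisely the definition of a Galois covering, with $\gal(\varpi\big|_{V_y}) = \Gamma_y$.

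The main obstacle will be the careful treatment of the ramification and branching loci in the singular category: since $Y$ (and hence $X$) is only assumed reduced and normal rather than smooth, one must invoke normality --- in particular local irreducibility and the fact that $\sing(Y)$ has codimension at least two --- to guarantee that $V_y$ may be taken irreducible and that the thin sets $R$ and $T$ genuinely have the required nowhere-dense structure, so that the restriction of $\varpi$ is locally an analytic isomorphism off $T$. Everything else is bookkeeping built directly on Theorem~\ref{thrm:analyticquotient} and Remark~\ref{rmk:properlydiscontinous}.
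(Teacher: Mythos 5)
Your proposal is correct and follows exactly the route the paper intends: the corollary is stated as an immediate consequence of Theorem~\ref{thrm:analyticquotient}, which reduces $\varpi$ near $y$ to the finite quotient $V_y \to V_y/\Gamma_y$, whereupon transitivity of $\Gamma_y \subseteq \operatorname{Aut}(\varpi\big|_{V_y})$ on fibers (fibers in $V_y$ being precisely $\Gamma_y$-orbits by the choice of $V_y$ in Remark~\ref{rmk:properlydiscontinous}) gives the Galois property. Your additional verifications --- thinness of the fixed-point locus $R$ via irreducibility of $V_y$ (a standing assumption in the paper) and connectedness of $V_y \setminus R$ --- correctly fill in the details the paper leaves implicit.
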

When the branched analytic covering $\varpi: Y \to X$ is Galois, the total degree of $\varpi$ is given by the order of its Galois group --- i.e. $\deg(\varpi) = \# \gal(\varpi)$. Consequently, using the fact that the restriction of  $\varpi$ to any (sufficiently small) neighborhood $V_y$ of a point $y$ is again a branched Galois covering, the ramification index of each point $y \in Y$ will be equal to the order of its stabilizer subgroup $\gal(\varpi)_{y} := \{\gamma \in \gal(\varpi) \, | \, \gamma \cdot y = y \}$. Additionally, since for any point $y \in Y$ the stabilizer subgroup of different points in its $\gal(\varpi)$-orbit $\varpi^{-1}(\varpi(y))$ vary only up to the conjugation by elements of $\gal(\varpi)$, one can always choose the critical locus $T$ such that 
\begin{equation*}
R_{\varpi} := \left\{ y \in Y \, \Big| \, \# \gal(\varpi)_{y} > 1 \right\} = \varpi^{-1}(T) \quad \text{and} \quad B_{\varpi} := \varpi(R_{\varpi}) = T
\end{equation*} 
for any branched Galois covering $\varpi: Y \to X$. In this case, we say that Galois covering $\varpi: Y \to X$ is \emph{branched along} $B_{\varpi}$. Moreover, since for an arbitrary lift $y$ of any point $x \in X$ the ramification index $\deg_{\varpi}(y) = \# \gal(\varpi)_y$ is independent of our choice of $y \in \varpi^{-1}(x)$, we can define the \emph{branching index} of $\varpi$ at any point $x \in X$ to be given by this common value. More generally, one defines a \emph{branching function} for $\varpi: Y \to X$ on $X$ as
\begin{equation*}
\nu_{\varpi}: X \ni x \to \# \gal(\varpi)_{y} \in \mathbb{N}
\end{equation*}
where $y$ is any lift of $x$ --- i.e. $y \in \varpi^{-1}(x)$.

We are finally ready to introduce notions of \emph{ramification divisor} and \emph{branch divisor} for a ramified Galois covering $\varpi: Y \to X$ between connected normal analytic spaces: Let us start by defining
\begin{equation*}
X' := \left\{x \in \operatorname{Reg}(X) \, \Big| \, \varpi^{-1}(x)  \subset \operatorname{Reg}(Y)\right\} \quad \text{and} \quad Y' := \varpi^{-1}(X')
\end{equation*}
such that $Y^{\prime}$ and $X^{\prime}$ are open analytic subsets of $Y$ and $X$ respectively and their complements have codimension at least 2. Then, the restriction $\varpi\big|_{Y^{\prime}}: Y^{\prime} \to X^{\prime}$ is a Galois covering map between complex analytic manifolds. Next, let us pick local coordinates $z_1, \dots,z_{\mathfrak{n}}$ on a neighborhood $V_y$ of a point $y \in Y^{\prime}$ and let $w_1, \dots, w_{\mathfrak{n}}$ be the coordinates around its image $\varpi(y) \in X^{\prime}$. Then, $w_i = \varpi_i(z_1,\dots,z_{\mathfrak{n}})$ gives the local expression of $\varpi$ near the point $y$ and the set
\begin{equation*}
\mathcal{R}' := \left\{y \in Y' \, \Bigg| \,  \det(\pdv{\varpi_i}{z_j} (y)) = 0 \right\}
\end{equation*}
can be viewed as the ramification locus of the restriction $\varpi\big|_{Y^{\prime}}$ --- i.e. set of all points $y \in Y^{\prime}$ around which $\varpi\big|_{V_y}$ is not a biholomorphism. Notice that since both $Y'$ and $X^{\prime}$ are by definition smooth, $\mathcal{R}'$ is necessarily a hypersurface in $Y'$. Since the complement of $Y^{\prime}$ has codimension at least 2, the Remmert--Stein extension theorem (see e.g., \cite[p.~181]{Grauert-Remmert}) ensures that the topological closure of this set will be a hypersurface $\mathcal{R}_{\varpi}$ in $Y$. Additionally, since the Galois covering $\varpi$ is finite, the set $\mathcal{B}_{\varpi} := \varpi(\mathcal{R}_{\varpi})$ will be a hypersurface in $X$. We will denote the irreducible components of the hypersurface $\mathcal{R}_{\varpi} \subset Y$ by $\mathcal{R}_i$; for each irreducible hypersurface $\mathcal{R}_i \subset Y$, the image $\mathcal{B}_i := \varpi(\mathcal{R}_i)$ is also an irreducible hypersurface in $X$ such that $\mathcal{B} = \bigcup_i \mathcal{B}_i$.

\begin{remark}
	Observe that if Galois covering $\varpi: Y \to X$ is ramified only along a \emph{singular part} of $Y$, the sets $\mathcal{R}_{\varpi}$ and $\mathcal{B}_{\varpi}$ as defined above will be empty. Since we are assuming that analytic spaces $Y$ and $X$ are normal (i.e. their singular locus has codimension $\geq 2$), one concludes that $\mathcal{R}_{\varpi}, \mathcal{B}_{\varpi} \neq \emptyset$ if and only if $\gal(\varpi)$ contains at least one pseudoreflection.
\end{remark}
Now, let us consider the sets 
\begin{equation*}
Y'' := Y' \big\backslash \Big(\sing(\mathcal{R}) \cup \varpi^{-1}\big(\sing(\mathcal{B})\big)\Big) \quad \text{and} \quad X'' := \varpi(Y'').
\end{equation*}
Both subsets are open and have complements of codimension at least 2 in $Y$ and $X$, respectively. Note that if $y \in Y''$ either $y \notin \mathcal{R}$ or $y$ belongs to one and only one irreducible component $\mathcal{R}_i$. In the first case, we say that $\varpi$ is \emph{unramified} at $y$; then $\varpi$ is a local biholomorphism at $y$. In the latter case (i.e., when $y$ has ramification index $>1$), let $\mathcal{R}_i$ be the unique irreducible component of $\mathcal{R}$ passing through $y$. Then, there are local coordinates $z_1,\dots, z_{\mathfrak{n}}$ on $V \subset Y''$ and $w_1,\dots,w_{\mathfrak{n}}$ on $W \subset X''$ centered at $y$ and $x = \varpi(y)$ respectively, such that locally $\mathcal{R}_i \cap V = \{z_1 = 0 \}$, $\mathcal{B}_i \cap W = \{w_1 = 0 \}$ and
\begin{equation}
\varpi\big|_{V}: (z_1,\dots,z_{\mathfrak{n}}) \mapsto (w_1= z_1^m, w_2 = z_2,\dots,w_{\mathfrak{n}} = z_{\mathfrak{n}}),
\end{equation}
where $m \in \mathbb{N}^{>1} := \mathbb{N}\backslash\{1\}$ denotes the ramification index of $\varpi$ at point $y$ --- i.e. $m = \deg_{\varpi}(y)$. For any irreducible component $\mathcal{R}_i$, the ramification index $\deg_{\varpi}(y) = \# \gal(\varpi)_y$ will be the same for all points $y \in \mathcal{R}_i \cap Y''$; this common value is denoted by $\deg_{\varpi}(\mathcal{R}_i )$ and will be called the \emph{ramification index of $\varpi$ along $\mathcal{R}_i$}. This enables us to define the \emph{ramification divisor} of a branched Galois covering $\varpi$ as the formal linear combination
\begin{equation}
\mathscr{R}_{\varpi} := \sum_{i} (m_i -1) \mathcal{R}_i,
\end{equation}
where $m_i = \deg_{\varpi}(\mathcal{R}_i )$ are the ramification indices (or multiplicities) of $\varpi$ along irreducible hypersurfaces (or prime divisors) $\mathcal{R}_i$.
\begin{remark}
	The ramification divisor $\mathscr{R}$ defined in this way is an \emph{effective Weil divisor} on $Y$. As we discussed earlier, the notions of Weil and Cartier divisors coincide only on smooth analytic spaces (e.g., on $Y'$). However, when $Y$ only contains quotient singularities, every Weil divisor is \emph{$\mathbb{Q}$-Cartier} by which we mean some multiple of it is a Cartier divisor; such normal reduced analytic spaces are said to be \emph{$\mathbb{Q}$-factorial} (see \cite{artal2014cartier} for more details).
\end{remark}
When the ramified analytic covering $\varpi: Y \to X$ is Galois, it is possible to define the branch divisor on $X$ as follows: As mentioned before, for each prime divisor $\mathcal{R}_i$ on $Y$, the image $\mathcal{B}_i = \varpi(\mathcal{R}_i)$ will be a prime divisor on $X$. If $\nu_{\varpi}: X \to \mathbb{N}$ denotes the branching function associated to the Galois covering $\varpi$, the restriction $\nu_{\varpi}\big|_{_{\mathcal{B}_i \cap X''}}: \mathcal{B}_i \cap X'' \to \mathbb{N}$ will be a constant function for each prime divisor $\mathcal{B}_i$; this constant value will be denoted by $\nu_{\varpi}(\mathcal{B}_i)$ and is called the \emph{branching index of $\varpi$ along $\mathcal{B}_i$}. Then, we can define the \emph{branch divisor} of Galois covering $\varpi$ as the effective $\mathbb{Q}$-divisor 
\begin{equation*}
\mathscr{B}_{\varpi} := \sum_{i} \left(1-\frac{1}{\nu_{\varpi}(\mathcal{B}_i)}\right) \mathcal{B}_i.
\end{equation*}
Here, by a $\mathbb{Q}$-divisor, we simply mean a formal finite sum of irreducible hypersurfaces with coefficients in $\mathbb{Q}$ (i.e., a Weil divisor with rational coefficients). Note that with this convention, $\mathscr{R}_{\varpi} = \varpi^{\ast}(\mathscr{B}_{\varpi})$, that is, the ramification divisor is the pull-back of the branch divisor.

Let us finish our discussion of analytic coverings by focusing on the very important example of branched Galois coverings $\varpi: Y \to X$ of 1-dimensional complex analytic spaces --- i.e., when both analytic spaces $X$ and $Y$ are Riemann surfaces. In that case, any non-constant map between compact Riemann surfaces is a finite branched covering. Let $f: Y \to X$ be such a non-constant, holomorphic map between compact Riemann surfaces $Y$ and $X$. For every $y \in Y$ there exist charts for $f(y)$ such that the local expression of the branched covering $f$ is of the form $z \mapsto z^m$ where $m = \deg_{f}(y)$ is the ramification index of $f$ at point $y$, $z$ is the local coordinate on the covering Riemann surface $Y$, and $w=z^m$ is the local coordinate on the base Riemann surface $X$ (see Fig.~\ref{fig:branchedcovering}). 
\begin{figure}
	\centering
	\includegraphics[width=0.8\linewidth]{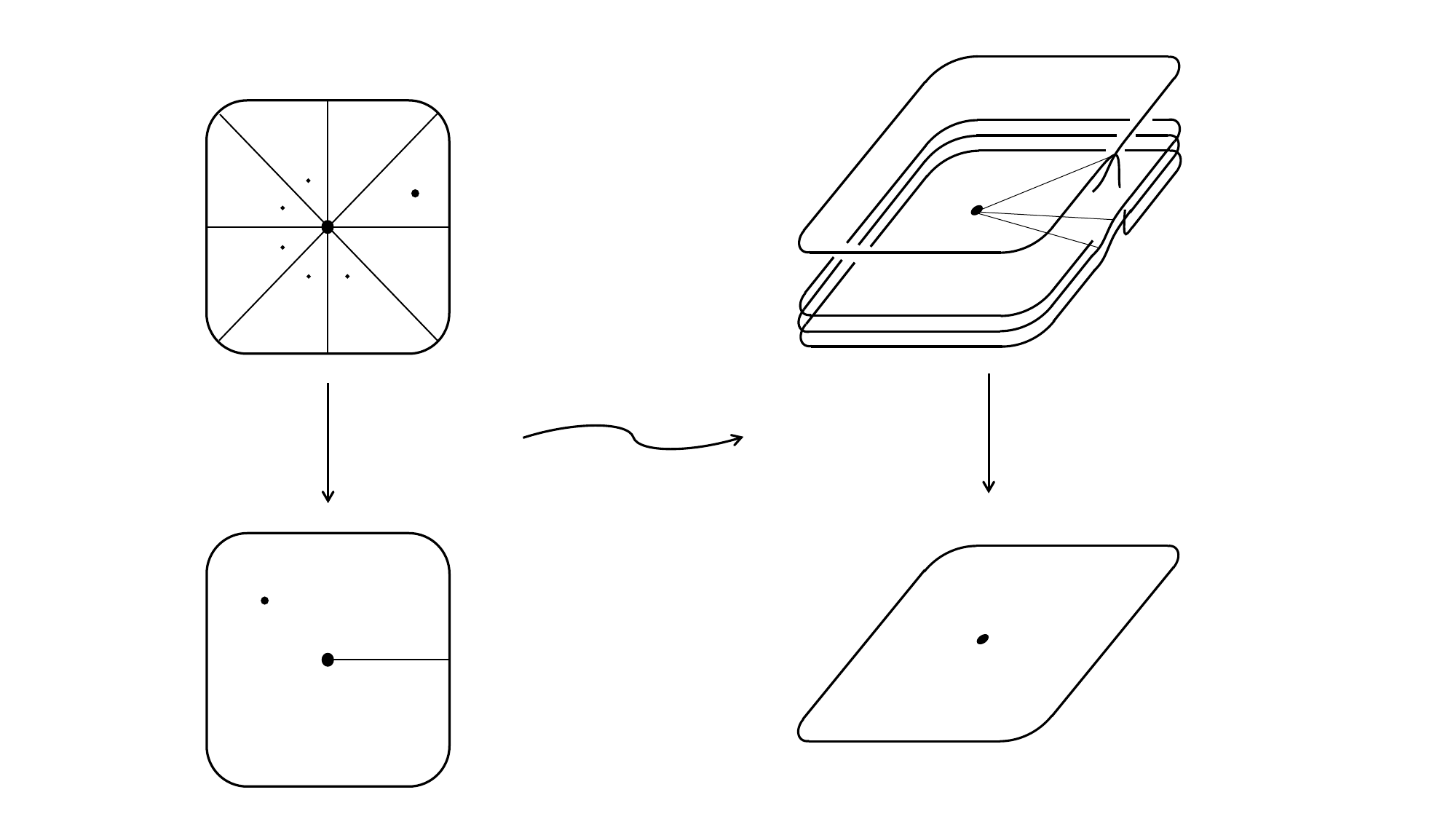}
	\put(-238,162){\rotatebox{0}{\fontsize{10}{10}\selectfont{} $\cmpx$}}
	\put(-238,59){\rotatebox{0}{\fontsize{10}{10}\selectfont{} $\cmpx$}}
	\put(-65,59){\rotatebox{0}{\fontsize{10}{10}\selectfont{} $\cmpx$}}
	\put(-65,170){\rotatebox{0}{\fontsize{10}{10}\selectfont{} $\cmpx$}}
	\put(-265,103){\rotatebox{0}{\fontsize{7}{7}\selectfont{} \rotatebox{270}{$z \mapsto z^{m}$}}}
	\put(-285,53){\rotatebox{0}{\fontsize{9}{9}\selectfont{} $w=z^{m}$}}
	\put(-267.5,148.5){\rotatebox{0}{\fontsize{7}{7}\selectfont{} $2$}}
	\put(-260,142.5){\rotatebox{0}{\fontsize{7}{7}\selectfont{} $1$}}
	\put(-262,133.5){\rotatebox{0}{\fontsize{7}{7}\selectfont{} $m$}}
	\put(-248,147){\rotatebox{0}{\fontsize{9}{9}\selectfont{} $z$}}
	\caption{\emph{A model branched covering}.} \label{fig:branchedcovering}
\end{figure}	
\subsubsection{Complex Analytic Orbifolds}\label{sect:orbifolddef}
Let $X_O$ be an analytic space of dimension $\mathfrak{n}$ admitting only quotient singularities. An \emph{orbifold chart} or \emph{local uniformizing system} on an open subset $U \subset X_O$ is a tuple $(U,\tilde{U},\Gamma,\ff)$ where connected and open set $\tilde{U} \subset \cmpx^{\mathfrak{n}}$ is biholomorphic to the open unit ball $B^{\mathfrak{n}}$, $\Gamma \subset \operatorname{GL}(\mathfrak{n},\cmpx)$ is a finite group acting effectively\footnote{The condition that local uniformizing groups act effectively is not always imposed in the literature, and there are occasions when this requirement is too restrictive. However, since we are exclusively concerned with \emph{effective orbifolds or \emph{reduced orbifolds}} in this paper, it is convenient to incorporate this condition as part of our definition.} on $\tilde{U}$ as holomorphic automorphisms, and the ramified covering $\ff: \tilde{U} \to U$, called a \emph{folding map}, is a $\Gamma$-invariant map which induces a biholomorphism $\tilde{U}/\Gamma \xrightarrow{\cong} U$. The pair $(\tilde{U},\Gamma)$ is called a \emph{local model} for $U$.
\begin{definition}[Analytic orbifold atlas]\label{def:orbifold}
	An \emph{analytic orbifold atlas} $\mathcal{U}$ on an analytic space $X$ (admitting only quotient singularities) is a collection  $\{(U_a,\tilde{U}_a,\Gamma_a,\mathrm{f}_a)\}_{a\in A}$ charts on this analytic space such that the following conditions are satisfied:
	\begin{enumerate}[(i)]
		\item $\{U_a\}_{a \in A}$ is an open cover of the underlying complex space $X_O$;
		\item If $(U_a,\tilde{U}_a,\Gamma_a,\ff_a)$ and $(U_b,\tilde{U}_b,\Gamma_b,\ff_b)$ for $a,b \in A$ are two orbifold charts with $U_a \cap U_b \neq \emptyset$, then for each $x \in U_a \cap U_b$ there exists an orbifold chart $(U_c,\tilde{U}_c,\Gamma_c,\ff_c) \in \mathcal{U}$ that contains $x$ -- i.e. $x \in U_c \subseteq U_a \cap U_b$;
		\item If $(U_a,\tilde{U}_a,\Gamma_a,\ff_a)$ and $(U_b,\tilde{U}_b,\Gamma_b,\ff_b)$ for $a,b \in A$ are two orbifold charts with $U_a \subset U_b$, then there exists a holomorphic embedding,  
		\begin{equation*}
		\eta_{ab}: \tilde{U}_a \hookrightarrow \tilde{U}_b
		\end{equation*} 
		called the \emph{change of charts} (or \emph{embedding} or \emph{gluing map}), such that the folding maps satisfy $\ff_a = \ff_b \circ \eta_{ab}$. Moreover, for any three orbifold charts labeled by $a,b,c \in A$ and having the property $U_a \subset U_b \subset U_c$, the corresponding embeddings should satisfy $\eta_{ac} = \eta_{ab} \circ \eta_{bc}$. 
	\end{enumerate}
\end{definition}
\begin{remark}\label{rmk:embeddings}
	The choice of embedding $\eta_{ab}: \tilde{U}_a \hookrightarrow \tilde{U}_b$ is unique only up to the action of $\Gamma_b$: Let $(U_a,\tilde{U}_a,\Gamma_a,\ff_a)$ and $(U_b,\tilde{U}_b,\Gamma_b,\ff_b)$ be two orbifold charts on $X$ with $U_a \subset U_b$. If $\eta_{ab}, \eta'_{ab}: \tilde{U}_a \hookrightarrow \tilde{U}_b$ are two embeddings, then there exists a \emph{unique} $\gamma \in \Gamma_b$ such that $\eta'_{ab} = \gamma \circ \eta_{ab}$. As a result, an embedding $\eta_{ab}: \tilde{U}_a \hookrightarrow \tilde{U}_b$ induces a \emph{monomorphism}\footnote{We use the term ``monomorphism'' to mean an injective group homomorphism.} $\Upsilon_{ab}: \Gamma_a \to \Gamma_b$ which is given by
	\begin{equation}
	\eta_{ab} \circ \gamma = \Upsilon_{ab}(\gamma) \circ \eta_{ab},
	\end{equation}
	that is $\eta_{ab}(\gamma \tilde{x}) = \Upsilon_{ab}(\gamma) \eta_{ab}(\tilde{x})$ for all $\gamma \in \Gamma_a$ and $\tilde{x} \in \tilde{U_a}$. 
\end{remark}

An analytic orbifold atlas $\mathcal{U}$ is said to be a refinement of another analytic orbifold atlas $\mathcal{V}$ if there exists an embedding of every chart in $\mathcal{U}$ into some chart of $\mathcal{V}$. This enables us to define an \emph{equivalence relation} between orbifold atlases where two orbifold atlases are said to be equivalent if they have a common refinement. Then,
\begin{definition}[Complex analytic orbifolds]
	A \emph{complex analytic orbifold} $O$ of dimension $\mathfrak{n}$ is a pair $(X_O,[\mathcal{U}])$ where $X_O$ is the \emph{underlying (complex) analytic space} with at most finite quotient singularities and $[\mathcal{U}]$ is an equivalence class of analytic orbifold atlases on $X_O$. A 1-dimensional complex analytic orbifold will be called an \emph{orbifold Riemann surface} or a \emph{Riemann orbisurface}. 
\end{definition}
\begin{remark}
	As in the manifold case, an orbifold atlas is always contained in a \emph{unique maximal} one and two orbifold atlases are equivalent if, and only if, they are contained in the same maximal atlas. Therefore, we can equivalently define an \emph{analytic orbifold structure} on a complex analytic space $X_O$ (with at most finite quotient singularities) as the \emph{datum of a maximal (analytic) orbifold atlas} on this space.
\end{remark}
\begin{remark}\label{rmk:topologistsdef}
	Let $O = (X_O , [\mathcal{U}])$ be a complex analytic orbifold where $X_O$ is the underlying complex analytic space. Remember that $X_O$ can be characterized as a $\cmpx$-ringed space $(|X_O|,\mathscr{O}_{X})$ where $|X_O|$ is a Hausdorff paracompact topological space; we will denote the topological space $|X_O|$ simply by $|O|$. Then, one can define $O$ alternatively as a pair $(|O|, [\mathscr{U}])$ where the charts of $\mathscr{U}$ are now given by a quadruple $(|U|,\tilde{U},\Gamma,|\ff|)$. Here,  $|U| \subset |O|$ denotes the underlying topological space of each open analytic subset $U \subset X_O$ and $|\ff|$ denotes the continuous map associated with the $\Gamma$-invariant analytic map $\ff =(|\ff|,\ff^{\ast})$ that induces a homeomorphism between $|U|$ and $\tilde{U}/\Gamma$ as topological spaces; as evident from our notation, the local models $(\tilde{U},\Gamma)$ are defined as before. This alternative definition of complex orbifolds is more common among topologists and closely resembles the definition of a \emph{smooth orbifold} (where the only differences are in local models and embeddings). 
\end{remark}
\subsubsection*{Local Groups and Canonical Stratification}
Consider a point $x \in X_O$ and a local chart $(U_a,\tilde{U}_a,\Gamma_a,\ff_a) \in \mathcal{U}$ containing $x$. In addition, let us choose $\tilde{x} \in \tilde{U}_a$ to be a particular pre-image of $x$ and denote by $\Gamma_{\tilde{x}}$ the subgroup of $\Gamma_a$ that fixes $\tilde{x}$. As our choice of $\tilde{x} \in \ff_a^{-1}(x)$ varies, the stabilizer subgroup of $\tilde{x}$ varies only up to conjugation by the elements of $\Gamma_a$. Similarly, as our choice of $U_i$ containing $x$ varies, the stabilizer varies only up to conjugation by a transition map. Therefore, we can define the \emph{isotropy group} or the \emph{local group} $\Gamma_x$ to be the conjugacy class of the stabilizer subgroup $\Gamma_{\tilde{x}} \subset \Gamma_a$ for some $\tilde{x} \in \ff_a^{-1}(x)$. It is then clear that local group $\Gamma_x$ for a point $x \in X_O$ is \emph{independent} of both the \emph{chart}, $U_a$, and the \emph{lift} $\tilde{x} \in \tilde{U}_a$.
\begin{remark}\label{rmk:pseudoreflections}
	The main observation we would like to make about this definition is that an orbifold chart $(U, \tilde{U}, \Gamma, \ff)$ contains more data than simply the analytic quotient $\tilde{U}/\Gamma$. In particular, an orbifold chart ``remembers'' the pseudoreflections contained in a local uniformizing group $\Gamma_i$ and their corresponding fixed-point sets. 
\end{remark}
This allows us to define the \emph{singular points} of $O$ as points whose local isotropy group $\Gamma_{x} \neq \{\mathbbm{1}\}$; those points with $\Gamma_{x} = \{\mathbbm{1}\}$ are called \emph{regular points} of $O$. The set $\big\{ x \in X_O \, \big| \, \Gamma_{x} \neq \{\mathbbm{1}\}\big\}$ of singular points of $O$ is called the \emph{singular locus} or the \emph{singular set} of $O$ and will be denoted by $\sing(O)$. It follows from theorem~\ref{thrm:singularityclassification} (or remark~\ref{rmk:pseudoreflections}) that if any local uniformizing group contains a pseudoreflection, the orbifold singular set $\sing(O)$ will be bigger than the singular set of the underlying analytic space $\sing(X_O)$ and that $\sing(O)=\sing(X_O)$ if and only if none of the local uniformizing groups contain a pseudoreflection. 
The subset of all orbifold regular points, denoted by $X_O^{\text{reg}}$, is an open dense subset of $X_O$; in particular, since $\sing(X_O) \subseteq \sing(O)$, the orbifold regular locus $X_O^{\text{reg}} := X_O \backslash \sing (O)$ will always be a complex sub-manifold of $\operatorname{\text{Reg}}(X_O) = X_O \backslash \sing (X_O)$. The local isotropy groups give a \emph{canonical stratification} of $X_O$ by stating that two points lie in the same \emph{stratum} $\sing_j(O)$ if their local groups are conjugate. Thus, we get a decomposition of $X_O$ as
\begin{equation}
X_O = X_O^{\text{reg}}\bigsqcup_{j} \sing_j(O), \qquad  \sing(O) = \bigsqcup_{j} \sing_j(O),
\end{equation}  
where the union is taken over all conjugacy classes. The dense open subset of regular points, $X_O^{\text{reg}}$, is sometimes called the \emph{principal stratum} and corresponds to the trivial conjugacy class. 
\subsubsection*{Analytic Orbifold Maps and Orbifold Covering}
The standard notion of structure preserving maps between complex analytic orbifolds can be given in an analogous manner to complex manifolds (see \cite{baily1956decomposition,baily1957imbedding}). However, it wasn't realized until recently (see, e.g., \cite{Chen:2000cy,moerdijk1997orbifolds}) that certain problems arise with the usual definition of an orbifold map; namely, as we will see later, this definition does not, in general, induce morphisms of sheaves or V-bundles. This led to the introduction of the notion of ``good maps'' in \cite{Chen:2000cy}:
\begin{definition}[Analytic orbifold Maps]\label{def:orbmap}
	Let $O=(X_O,\mathcal{U})$ and $O'=(X'_O,\mathcal{U}')$ be two complex analytic orbifolds (not necessarily of the same dimension). A map $f: O \to O'$ is said to be an \emph{analytic orbifold map} (or a \emph{holomorphic orbifold map}) if $f$ gives an analytic mapping between the underlying complex analytic spaces, denoted by $(|f|, f^{\ast}): (|X_O|, \mathscr{O}_{X}) \to (|X'_O|, \mathscr{O}_{X'})$, which admits a \emph{local lift} at each point $x \in X_O$: For every pair of orbifold charts $(U_a,\tilde{U}_a,\Gamma_a,\ff_a) \in \mathcal{U}$ and $(U'_a,\tilde{U}'_a,\Gamma'_a,\ff'_a) \in \mathcal{U}'$ containing an arbitrary point $x \in X$ and its image $f(x) \in X'$ respectively, with $f(U_a) \subset U'_a$, the orbifold map $f$ induces a group homomorphism between local isotropy groups $\bar{f}_{x}: \Gamma_{x} \to \Gamma'_{f(x)}$ and a holomorphic $\bar{f}_{x}$-equivariant map $\tilde{f}_x: \tilde{U}_a \to \tilde{U}'_a$ such that the diagram
	\begin{equation}\label{diagramorbifoldmap}
	\begin{CD}
	\tilde{U}_a	@> \tilde{f}_{x}>> \tilde{U}'_a\\
	@V \ff_a VV		@VV \ff '_a V\\
	U_a 		@>> (|f|, f^{\ast}) > U'_a
	\end{CD}
	\end{equation}
	commutes (see Figure \ref{fig:orbifoldmap}). Moreover, a holomorphic orbifold map $f_{\text{orb}}$ is said to be \emph{good} if it is compatible with the embeddings: If $\eta_{ab}: \tilde{U}_a \hookrightarrow \tilde{U}_b$ is an embedding on $O$, then there is an embedding $\hat{f}(\eta_{ab}) : \tilde{U}'_a \hookrightarrow \tilde{U}'_b$ on $O'$, such that
	\begin{enumerate}[(i)]
		\item $\tilde{f}_{\tilde{U}_b} \circ \eta_{ab} = \hat{f}(\eta_{ab}) \circ \tilde{f}_{\tilde{U}_a}$, and
		\item $\hat{f}(\eta_{bc} \circ \eta_{ab}) = \hat{f}(\eta_{bc}) \circ \hat{f}(\eta_{ab})$.
	\end{enumerate}
	Note that conditions (i) and (ii) imply that the composition of good orbifold maps is again a good orbifold map. Finally, a holomorphic orbifold map $f_{\text{orb}} : O \to O'$ is called an \emph{analytic orbifold automorphism} (or an \emph{orbifold biholomorphism}) if it admits an analytic inverse. In this case, we clearly have $\Gamma_x \cong \Gamma'_{f(x)}$ for all $x \in X \cong X'$ --- i.e., biholomorphisms must preserve the orbifold stratification.
\end{definition}
\begin{figure}
	\centering
	\includegraphics[width=0.9\textwidth]{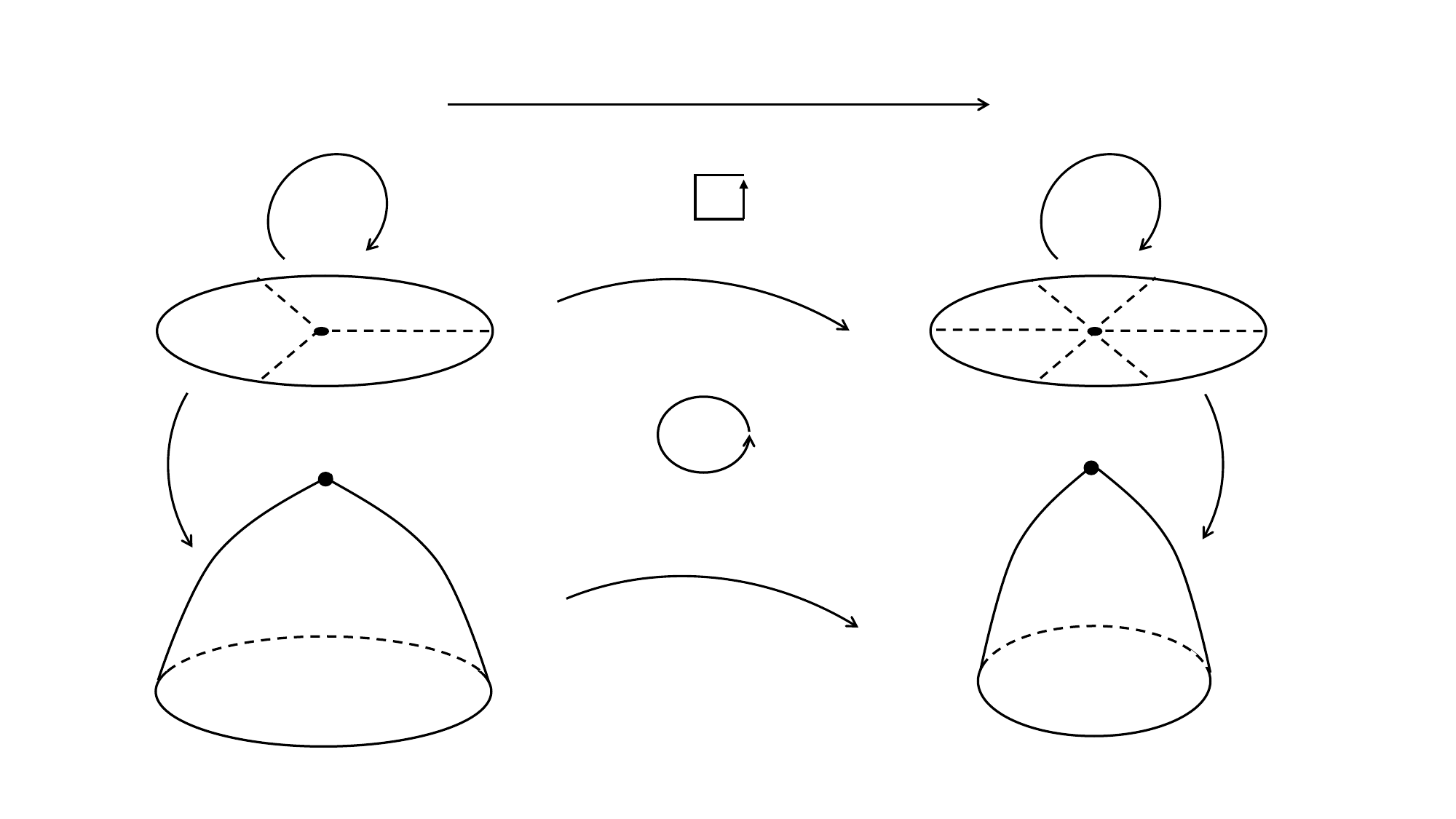}
	\put(-227,70){\rotatebox{0}{\fontsize{11}{11}\selectfont{} $f=(|f|, f^{\ast})$}}
	\put(-217,130){\rotatebox{0}{\fontsize{11}{11}\selectfont{} $\tilde{f}_{\tilde{U}}$}}
	\put(-217,195){\rotatebox{0}{\fontsize{11}{11}\selectfont{} $\bar{f}_x$}}
	\put(-307,187){\rotatebox{0}{\fontsize{14}{14}\selectfont{} $\Gamma_x$}}
	\put(-110,187){\rotatebox{0}{\fontsize{14}{14}\selectfont{} $\Gamma_{f(x)}$}}
	\put(-70,142){\rotatebox{0}{\fontsize{14}{14}\selectfont{} $\tilde{U}'$}}
	\put(-65,89){\rotatebox{0}{\fontsize{14}{14}\selectfont{} $\ff'$}}
	\put(-70,53){\rotatebox{0}{\fontsize{14}{14}\selectfont{} $U'$}}
	\put(-347,141){\rotatebox{0}{\fontsize{14}{14}\selectfont{} $\tilde{U}$}}
	\put(-358,89){\rotatebox{0}{\fontsize{14}{14}\selectfont{} $\ff$}}
	\put(-355,53){\rotatebox{0}{\fontsize{14}{14}\selectfont{} $U$}}
	\put(-308,93){\rotatebox{0}{\fontsize{12}{12}\selectfont{} $x$}}
	\put(-108,98){\rotatebox{0}{\fontsize{11}{11}\selectfont{} $f(x)$}}
	\caption{\emph{A local lift corresponding to an orbifold map.} Every holomorphic orbifold map between two complex orbifolds $f_{\text{orb}}: O \to O'$ consists of an analytic map $f=(|f|,f^{\ast})$ between the underlying analytic spaces together with a holomorphic local lift which is given by a group homomorphism $\bar{f}_x: \Gamma_{x} \to \Gamma_{f(x)}$ between local isotropy groups at each point $x \in O$ and its corresponding image $f(x) \in O'$ and a holomorphic $\bar{f}_{x}$-equivariant map $\tilde{f}_{\tilde{U}}: \tilde{U} \to \tilde{U}'$ between local uniformizing neighborhoods.} \label{fig:orbifoldmap}
\end{figure}	

\begin{remark}\label{rmk:orbifunctions}
	Considering $\cmpx$ or $\hat{\cmpx}$ as orbifolds with an empty singular set, one can define \emph{holomorphic/meromorphic orbifold functions} on a complex orbifold $O$ as holomorphic orbifold maps $f_{\text{orb}}: O \to \cmpx \text{ or } \hat{\cmpx}$. (see remark A.24!)
\end{remark}
Now, we can easily define an orbifold Galois covering using the above definitions: 
\begin{definition}[Orbifold Galois covering]
	An \emph{orbifold Galois covering} $\varpi_{\text{orb}}: \tilde{O}=(\tilde{X},\mathcal{V}) \to O = (X,\mathcal{U})$ is an analytic orbifold map such that $\varpi : \tilde{X} \to X$ is a Galois analytic covering and $\gal(\varpi)  \subset \operatorname{Aut}(O)$.\footnote{Remember that for any orbifold $O = (X_O, \mathcal{U})$, we have $\operatorname{Aut}(O) \subset \operatorname{Aut}(X_O)$.}
\end{definition}
\noindent In other words, the holomorphic orbifold map $\varpi_{\text{orb}}: \tilde{O} \to O$ is a projection map such that each point $x \in X_O$ has a  neighborhood $U \cong \tilde{U}/\Gamma$ for which connected components $V_i$ of $\varpi^{-1}(U)$ are analytically isomorphic to $\tilde{U}/\Gamma_i$ where $\Gamma_i \subset \Gamma$. Therefore, the restriction of projection map $\varpi$ to each sheet $V_i$, i.e. $\varpi\big|_{V_i} : V_i \to U$, corresponds to the natural projection $\tilde{U}/\Gamma_i \to \tilde{U}/\Gamma$ (see Figure \ref{fig:orbifoldcovering}). By the \emph{Galois group} and \emph{degree} of an orbifold Galois covering, we mean its Galois group and degree as an analytic cover. Finally, in cases that we need to keep track of base points on covering and base spaces, we will use the notation $(\tilde{O},\tilde{x}_{0}) \xrightarrow{\varpi_{\text{orb}}} (O,x_0)$ to refer to an orbifold covering for which $\varpi(\tilde{x}_{0}) = x_0$.
\begin{figure}
	\centering
	\includegraphics[width=0.8\linewidth]{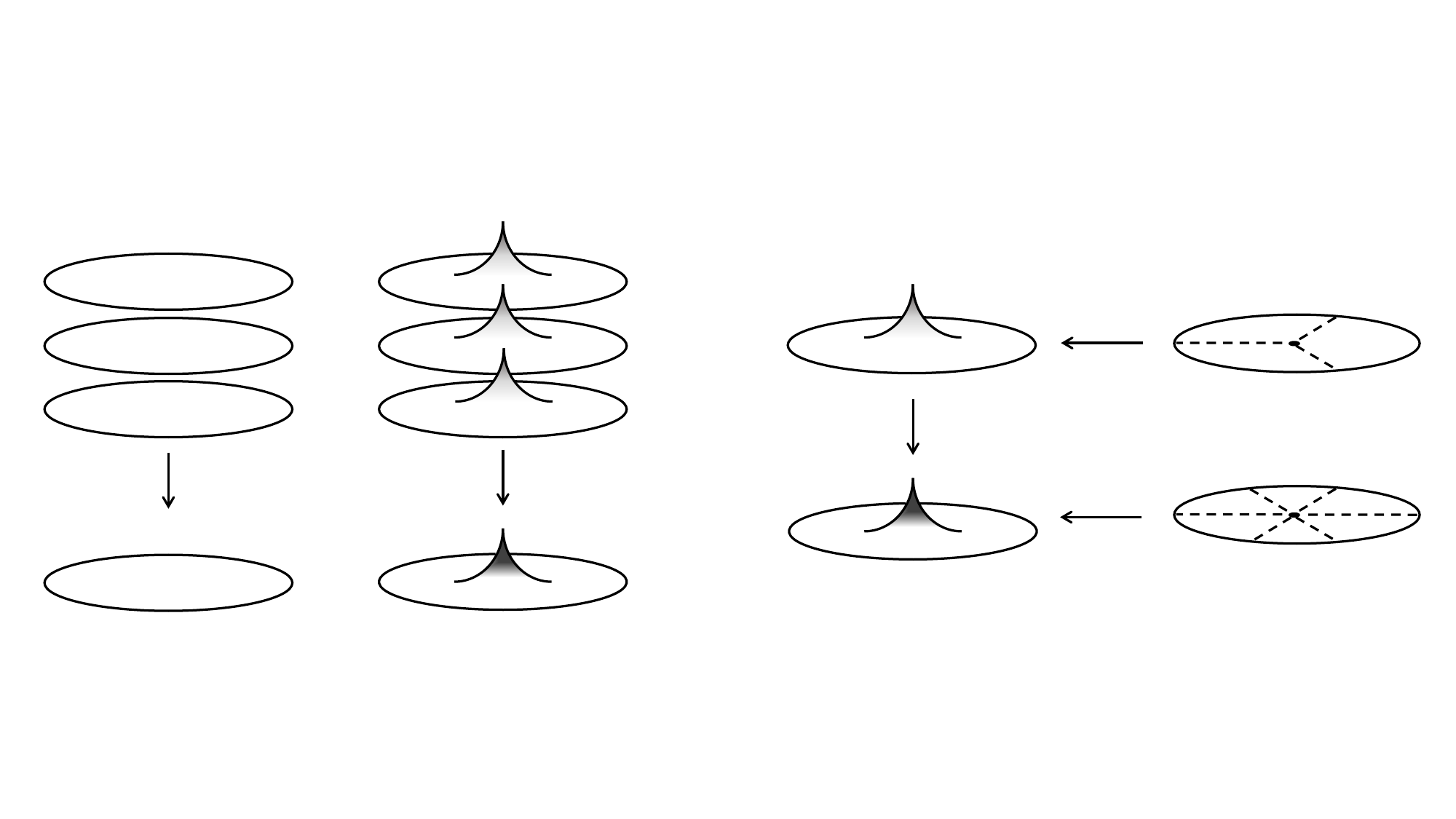}
	\put(-345,142){\rotatebox{0}{\fontsize{9}{9}\selectfont{} $\text{Manifold covering}$}}
	\put(-255,143){\rotatebox{0}{\fontsize{9}{9}\selectfont{} $\text{Orbifold covering}$}}
	\put(-305,79){\rotatebox{0}{\fontsize{9}{9}\selectfont{} $\varpi$}}
	\put(-225,78){\rotatebox{0}{\fontsize{9}{9}\selectfont{} $\varpi_{\text{orb}}$}}
	\put(-234,37){\rotatebox{0}{\fontsize{9}{9}\selectfont{} $U$}}
	\put(-312,37){\rotatebox{0}{\fontsize{9}{9}\selectfont{} $U$}}
	\put(-129,90){\rotatebox{0}{\fontsize{9}{9}\selectfont{} $\varpi|_{_{V_i}}$}}
	\put(-150,130){\rotatebox{0}{\fontsize{9}{9}\selectfont{} $V_i \cong \tilde{U}/\Gamma_i$}}
	\put(-50,122){\rotatebox{0}{\fontsize{20}{20}\selectfont{} $\curvearrowright$}}
	\put(-50,135){\rotatebox{0}{\fontsize{10}{10}\selectfont{} $\Gamma_i \subset \Gamma$}}
	\put(-148,48){\rotatebox{0}{\fontsize{9}{9}\selectfont{} $U \cong \tilde{U}/\Gamma$}}
	\put(-50,82){\rotatebox{0}{\fontsize{20}{20}\selectfont{} $\curvearrowright$}}
	\put(-50,90){\rotatebox{0}{\fontsize{10}{10}\selectfont{} $\Gamma$}}
	\vspace{-1cm}
	
	\caption{\emph{Orbifold covering}. The figure on the left compares the covering maps for manifolds and orbifolds. The figure on the right shows how the local groups of an orbifold and its covering orbifold are related.} \label{fig:orbifoldcovering}
\end{figure}
\subsubsection*{Global Quotient Orbifolds}
As pointed out in remark~\ref{rmk:embeddings}, the choice of embedding $\eta_{ab}: \tilde{U}_a \hookrightarrow \tilde{U}_b$ is in general \emph{not unique}; so, the charts do not have to satisfy a cocycle condition upstairs, though of course they do downstairs where the open sets $U_a$ glue to give the analytic space $X_O$ (see the discussion around Lemma \ref{lemma:gluing}). That is, the orbifold charts need not glue since an orbifold need not be a global quotient by a finite group. 

However, the most natural examples of complex orbifolds appear precisely when we take the quotient space $M/\Gamma$ of a complex manifold $M$ by a finite subgroup $\Gamma \subset \operatorname{Aut}(M)$ of its analytic automorphisms: 
\begin{proposition}\label{prop:quotientorb}
	If $M$ is a complex manifold and $\Gamma$ is a group acting holomorphically, effectively, and properly discontinuously on $M$, the quotient $M/\Gamma$ has the structure of a complex orbifold. Such orbifolds are called ``(effective) global quotient orbifolds'' and we will denote them by $[M/\Gamma]$ in order to emphasize their orbifold structure.\footnote{More concretely, one must use the geometric invariant theory (GIT) quotient to define the global quotient orbifolds. This is due to the fact that, as we have already seen, the analytic quotient defined in theorem~\ref{thrm:analyticquotient} is unable to produce the codimension-1 orbifold singularities created by the action of pseudoreflections contained in $\Gamma$. Therefore, with this notation, the analytic quotient $M/\Gamma$ plays the role of the underlying analytic space of the complex orbifold $[M/\Gamma]$.} 
\end{proposition}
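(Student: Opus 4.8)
The plan is to construct an explicit analytic orbifold atlas on the topological quotient $M/\Gamma$ satisfying the three axioms of Definition~\ref{def:orbifold}, and to identify its underlying analytic space with the analytic quotient furnished by Theorem~\ref{thrm:analyticquotient}. Since $\Gamma$ acts properly discontinuously, every point $p \in M$ has a finite stabilizer $\Gamma_p$, and by Remark~\ref{rmk:properlydiscontinous} one may choose arbitrarily small $\Gamma_p$-invariant neighborhoods $V_p$ of $p$ with $\gamma(V_p) \cap V_p = \emptyset$ for all $\gamma \in \Gamma \setminus \Gamma_p$. The first key step is to \emph{linearize} the action of the finite group $\Gamma_p$ near its fixed point: by Cartan's (Bochner's) averaging lemma one can choose holomorphic coordinates on $V_p$ in which $\Gamma_p$ acts linearly, thereby realizing $\Gamma_p$ as a finite subgroup of $\operatorname{GL}(\mathfrak{n},\cmpx)$ and $V_p$ as biholomorphic to the unit ball $B^{\mathfrak{n}}$. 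Setting $\tilde{U}_p := V_p$, letting $\ff_p \colon \tilde{U}_p \to U_p$ be the quotient folding map and $U_p := \ff_p(V_p) \cong V_p/\Gamma_p \subset M/\Gamma$, we obtain a candidate orbifold chart $(U_p,\tilde{U}_p,\Gamma_p,\ff_p)$ at each point of $M/\Gamma$.

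Next I would verify the atlas axioms. Axiom (i) is immediate, since the $U_p$ cover $M/\Gamma$ by construction. For axioms (ii) and (iii) the essential point is to produce, for a nested pair $U_a \subset U_b$, an embedding $\eta_{ab}\colon \tilde{U}_a \hookrightarrow \tilde{U}_b$ satisfying $\ff_a = \ff_b \circ \eta_{ab}$. This is carried out by lifting: choosing a point $\tilde{x}_a \in \tilde{U}_a$ over $x \in U_a$ and a lift $\tilde{x}_b \in \tilde{U}_b$ of the same $x$, the deck transformations of the covering $M \to M/\Gamma$ relate the two local models upstairs (here it is crucial that $M$ is a \emph{manifold}, so these local pieces are genuinely biholomorphic), and this relation furnishes $\eta_{ab}$, whose induced monomorphism $\Upsilon_{ab}\colon \Gamma_a \to \Gamma_b$ is exactly as in Remark~\ref{rmk:embeddings}. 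Given overlapping charts at $a$ and $b$ and a point $x \in U_a \cap U_b$, one simply takes the chart built above at $x$ with $U_x$ chosen small enough to lie inside $U_a \cap U_b$, which supplies axiom (ii). Finally, for the underlying space, Theorem~\ref{thrm:analyticquotient} endows $M/\Gamma$ with a canonical reduced normal analytic structure, locally isomorphic to $\tilde{U}_p/\Gamma_p$, so it is an analytic space admitting only finite quotient singularities in the sense of Definition~\ref{def:quotientsingularity} --- precisely the data required to support an orbifold structure.

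The main obstacle I anticipate is the \emph{compatibility of the embeddings}, namely the cocycle condition $\eta_{ac} = \eta_{ab} \circ \eta_{bc}$ of axiom (iii) together with the non-uniqueness flagged in Remark~\ref{rmk:embeddings}: because $\eta_{ab}$ is only well-defined up to post-composition by an element of $\Gamma_b$, the naive choices of lift need not compose strictly on the nose. The correct resolution is to observe that the collection of charts determines a well-defined \emph{equivalence class} of atlases rather than a single rigid atlas, so that any two coherent choices of embeddings differ only by the local group actions and hence define the same orbifold structure up to equivalence. A second, more conceptual subtlety --- highlighted in the footnote to the statement --- is that the bare analytic quotient of Theorem~\ref{thrm:analyticquotient} erases the codimension-one fixed-point data coming from pseudoreflections in $\Gamma$ (cf.\ Theorem~\ref{thrm:singularityclassification}), so that the orbifold $[M/\Gamma]$ genuinely records more information than its coarse space $M/\Gamma$. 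The orbifold chart data $(\tilde{U}_p,\Gamma_p,\ff_p)$ retains precisely these pseudoreflections, which is exactly why the bracket notation $[M/\Gamma]$ is warranted; I would emphasize this point to make clear that the construction yields the full orbifold (GIT-quotient) structure and not merely the underlying variety.
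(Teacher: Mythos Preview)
Your proposal is correct and follows essentially the same approach as the paper: both construct orbifold charts $(U_x,\tilde{U}_{\tilde{x}},\Gamma_{\tilde{x}},\varpi|_{\tilde{U}_{\tilde{x}}})$ using the $\Gamma_{\tilde{x}}$-invariant neighborhoods supplied by Remark~\ref{rmk:properlydiscontinous}, obtain embeddings from the transition data on $M$, and identify the underlying analytic space via Theorem~\ref{thrm:analyticquotient}. Your version is in fact more thorough than the paper's, which does not explicitly invoke Cartan's linearization (needed to match the requirement $\tilde{U}\cong B^{\mathfrak{n}}$, $\Gamma\subset\operatorname{GL}(\mathfrak{n},\cmpx)$ in the definition of orbifold chart) and handles axioms (ii)--(iii) only by the phrase ``augmenting some cover $\{U_x\}$ via adjoining finite intersections'' with a reference to Thurston.
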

\begin{proof}
	For any $x \in M/\Gamma$, choose a point $\tilde{x} \in M$ that projects onto $x$ and let $\Gamma_{\tilde{x}} :=\left\{\gamma \in \Gamma \,\big|\, \gamma \tilde{x} = \tilde{x}\right\}$ denote the stabilizer subgroup (or isotropy subgroup) of $\tilde{x}$. Since the action of $\Gamma$ on $M$ is properly discontinuous, according to Remark~\ref{rmk:properlydiscontinous}, there exists a (small enough) neighborhood $\tilde{U}_{\tilde{x}} \subset M$ containing $\tilde{x}$ such that $\gamma(\tilde{U}_{\tilde{x}}) = \tilde{U}_{\tilde{x}}$ for all $\gamma \in \Gamma_{\tilde{x}}$ and $\tilde{U}_{\tilde{x}} \cap \gamma(\tilde{U}_{\tilde{x}}) = \emptyset$ for all elements of $\Gamma$ not in $\Gamma_{\tilde{x}}$. Then, the analytic quotient map $\varpi: M \to M/\Gamma$ will be a (branched) Galois covering which induces a biholomorphism between an open analytic subset $U \subset M/\Gamma$ containing $x$ and $\tilde{U}_{\tilde{x}}/\Gamma_{\tilde{x}}$. Then, the quadruple $(U_x, \tilde{U}_{\tilde{x}}, \Gamma_{\tilde{x}}, \varpi|_{\tilde{U}_{\tilde{x}}})$ will be an orbifold chart on $M/\Gamma$ containing the point $x$. By augmenting some cover $\{U_{x}\}$ of $M/\Gamma$ via adjoining finite intersections,\footnote{see \cite[Chap.~14]{thurston80} for more details.} one obtains a natural orbifold structure on $M/\Gamma$ induced by the atlas $\mathcal{U} = \{(U_a,\tilde{U}_a,\Gamma_a,\varpi|_{\tilde{U}_a})\}_{a\in A}$. The corresponding embeddings $\eta_{ab}: \tilde{U}_a \hookrightarrow \tilde{U}_b$ are induced by transition functions on the complex manifold $M$ and are thus guaranteed to be holomorphic; this ensures that the quotient orbifold $[M/\Gamma] = (M/\Gamma,\mathcal{U})$ inherits a complex structure that is uniquely determined by the complex structure of $M$ and the group $\Gamma$.
\end{proof}
Since any complex manifold can also be regarded as a complex orbifold for which all of the local uniformizing groups $\Gamma_i$ are given by the trivial group $\{\mathbbm{1}\}$, the analytic Galois covering map $\varpi: M \to X_O := M/\Gamma$ is naturally promoted to an orbifold Galois covering map $\varpi_{\text{orb}}: M \to O := [M/\Gamma]$ for which $\Gamma$ is the group of covering transformations -- i.e. $\gal(\varpi_{\text{orb}}) \cong \Gamma$. More generally, if $\Gamma$ is a discrete subgroup of holomorphic automorphisms of a complex manifold $M$ that acts effectively and properly discontinuously on it and if $\Gamma'$ is a subgroup of $\Gamma$, i.e. $\Gamma' \subset \Gamma \subset \operatorname{Aut}(M)$, the holomorphic orbifold map $[M/ \Gamma'] \to [M/ \Gamma]$ will be an orbifold Galois covering. 
\begin{remark}\label{rmk:universalcov}
	When the manifold $M$ in the above construction of $[M/\Gamma]$ is \emph{simply connected}, $M$ plays the role of \emph{universal covering space} and $\Gamma$ plays the role of \emph{orbifold fundamental group}.
\end{remark}

Due to many nice features that global quotient orbifolds share with ordinary manifolds, they were given the name ``good orbifolds'' by Thurston \cite[Chap.~14]{thurston80}:
\begin{definition}[Good orbifolds]
	A complex orbifold $O$ is called \emph{good} or \emph{developable} if it is analytically isomorphic to a global quotient orbifold $[M/\Gamma]$ in an orbifold sense ($\Gamma$ is discrete but not necessarily finite). When the group $\Gamma$ is also finite, the orbifold $O \cong [M/\Gamma]$ will be called \emph{very good}. In other words, a (complex) orbifold $O$ is good (respectively, very good) \emph{if and only if} $O$ has a covering (respectively, finite covering) that is a (complex) manifold. Otherwise, we have a \emph{bad} orbifold. 
\end{definition}
\subsubsection*{Complex Analytic Orbifolds as Log Pairs}
When local isotropy groups contain pseudoreflections, each reflection fixes a hyperplane in $\tilde{U}_i$, and the folding map $\ff_i: \tilde{U}_i \to U_i$ will have a \emph{ramification divisor} $\mathscr{R}_{i}$ on $\tilde{U}_i$ and a \emph{branch divisor} $\mathscr{B}_{i}$ on $U_i$. Let $\mathcal{B}_{ij}$ denote the irreducible components of the branch divisor $\mathscr{B}_{i}$ and let $m_{ij}$ be the \emph{branching index} (or the \emph{multiplicity}) of $\ff_i$ along each prime divisor $\mathcal{B}_{ij}$ such that $\mathscr{B}_{i} = \sum_{j} \big(1-\tfrac{1}{m_{ij}}\big) \mathcal{B}_{ij}$. Then, the compatibility condition between orbifold charts means that there are \emph{global prime divisors} $\mathcal{B}_j \subset X_O$ and \emph{ramification indices} $m_j$ such that $\mathcal{B}_{ij} = U_i \cap \mathcal{B}_{j}$  and $m_{ij} = m_j$ (after suitable re-indexing). Therefore, it will be convenient to codify the above data by a single effective $\mathbb{Q}$-divisor
\begin{equation}
\brdiv := \sum_{\mathcal{B}_j \subset \sing(O)} \left(1-\frac{1}{m_j}\right) \mathcal{B}_j,
\end{equation}
called the \emph{branch divisor} of $O$. It turns out that a complex orbifold $O=(X_O, \mathcal{U})$ can be uniquely determined by the pair $(X_O,\brdiv)$, called a \emph{log pair}: An orbifold atlas $\mathcal{U}= \{(U_a, \tilde{U}_a, \Gamma_a, \ff_a)\}_{a \in A}$ on a normal analytic space $X_O$ is said to be \emph{compatible} with $\brdiv$ if every branch divisor $\mathscr{B}_{i}$ associated with the Galois coverings $\ff_i$ coincide with $\brdiv \cap U_i$. Therefore, we can alternatively characterize (although slightly inaccurately) a complex orbifold $O$ as being defined by the pair $\big(X_O, \brdiv\big)$.
This point of view was taken in \cite{Campana2004}.\footnote{Traditional approaches studied either the singularities of a normal analytic space $X$, or the singularities of a divisor $\brdiv$ on a smooth analytic space, but did not concentrate on problems that occur when both $X$ and $\brdiv$ are singular.} 

As in definition~\ref{def:Weil}, let us define the multiplicity $\operatorname{mult}_{\brdiv}(H)$ of $\brdiv$ along any irreducible divisor $H \subset X_O$ as the rational number $1-\tfrac{1}{m_j}$, if $H = \mathcal{B}_j$ for any $j$, and as $0$ if $H \neq \mathcal{B}_j$ for all $j$. Then, we put
\begin{equation}
\deg_{\brdiv}(H) := \frac{1}{1-\operatorname{mult}_{\brdiv}(H)},
\end{equation}
and call it the \emph{branching index of $\brdiv$ along $H$}. Observe that
\begin{equation*}
\deg_{\brdiv \cap U_i}(\mathcal{B}_{ij}) = \deg_{\ff_i}(\mathcal{B}_{ij}) = m_{ij},
\end{equation*}
and the branching index of $\brdiv$ along any $H \neq \mathcal{B}_j \, \, \forall j$ is equal to 1. Then, the definition~\ref{def:orbmap} of analytic orbifold maps is equivalent to the following one: 
\begin{definition}[Analytic orbifold maps between log pairs]
	Let $(Y, \brdiv_{Y})$ and $(X,\brdiv_{X})$ be two complex orbifolds. A finite analytic map $f: Y \to X$ between the underlying analytic spaces is called an \emph{orbifold analytic map} $f_{\text{orb}}: (Y, \brdiv_{Y}) \to (X,\brdiv_X)$ if 
	\begin{equation*}
	\deg_{\brdiv_X}\big(f(H)\big) \Big| \deg_{\brdiv_Y}(H) \cdot \deg_{f}(H),
	\end{equation*}
	for every irreducible hypersurface $H \subset Y$.
\end{definition}

The notions of orbifold biholomorphism and orbifold Galois covering can similarly be defined in this language. In particular, a branched Galois covering $\varpi: Y \to X$ will be called an orbifold Galois covering $\varpi_{\text{orb}}: (Y, \brdiv_{Y}) \to (X,\brdiv_X)$ if 
\begin{equation*}
\deg_{\brdiv_X}\big(\varpi(H)\big) = \deg_{\brdiv_Y}(H) \cdot \deg_{\varpi}(H),
\end{equation*}
for every irreducible hypersurface $H \subset Y$.
\subsection{Orbifold Riemann Surfaces}
Now that we have seen the basic definitions for the case of an $\mathfrak{n}$-dimensional complex orbifold let us specialize to the case of $\mathfrak{n}=1$ where everything greatly simplifies (see e.g., \cite[Appx.~E]{milnor1990dynamics} and \cite{nasatyr1995orbifold} for more details). The most significant simplification comes from the fact that, by corollary~\ref{corll:1dAnalytic}, every 1-dimensional complex analytic space is smooth. In other words, the underlying analytic space of an orbifold Riemann surface is an ordinary (i.e., smooth) Riemann surface. Moreover, the orbifold charts on a Riemann surface can always be chosen to have the form $(\mathrm{D},\mathrm{D},\mathbb{Z}_{m_i},\ff_i)$ where $\mathrm{D} \subset \cmpx$ is the unit disk, $\mathbb{Z}_{m_i}$ denotes the cyclic group of order $m_i \geq 2$ which acts on $\mathrm{D}$ in the standard way as the $m_i$-th roots of unity, and the folding map $\ff_i: \mathrm{D} \to \mathrm{D}/\mathbb{Z}_{m_i} \cong \mathrm{D}$ is a branched Galois covering of the form $z \mapsto z^{m_i}$ (see Figure~\ref{fig:branchedcovering}).

Additionally, observe that ($\mathbb{Q}$-)divisors on Riemann surfaces are nothing but a formal linear combination of points with coefficients in $\mathbb{Q}$ or $\mathbb{Z}$.\footnote{Since Riemann surfaces are smooth, we don't have to differentiate between Weil and Cartier divisors.} Therefore, we have 
\begin{definition}[Orbifold Riemann surface]
	A closed \emph{orbifold Riemann surface} $O$ is a pair $(X_O,\brdiv)$ consisting of a closed Riemann surface $X_O$, called the \emph{underlying Riemann surface structure} of $O$, together with a \emph{branch divisor} $\brdiv = \sum_{x_i \in \sing(O)} \left(1-\frac{1}{m_i}\right) x_i$, where $x_i \in \sing(O)$ are pairwise distinct marked points on $X_O$, called the \emph{conical points}, and each integer $m_i \geq 2$ is the corresponding \emph{branching index} (or the \emph{order of isotropy}) of the conical point $x_i$.
\end{definition}
\begin{remark}
	The above definition of a closed Riemann orbisurface can be generalized to include punctured orbifold Riemann surfaces as well. Formally, cusps can be viewed as a limit $m \to \infty$ (or equivalently cone angle $2\pi/m \to 0$) of a conical singularity on a Riemann orbisurface (see Figure \ref{fig:cusplimit}). However, this limit is \emph{singular} and cannot be blindly taken from the formulas for conical singularities.
\end{remark}
\begin{figure}
	\centering
	\includegraphics[width=1.1\linewidth]{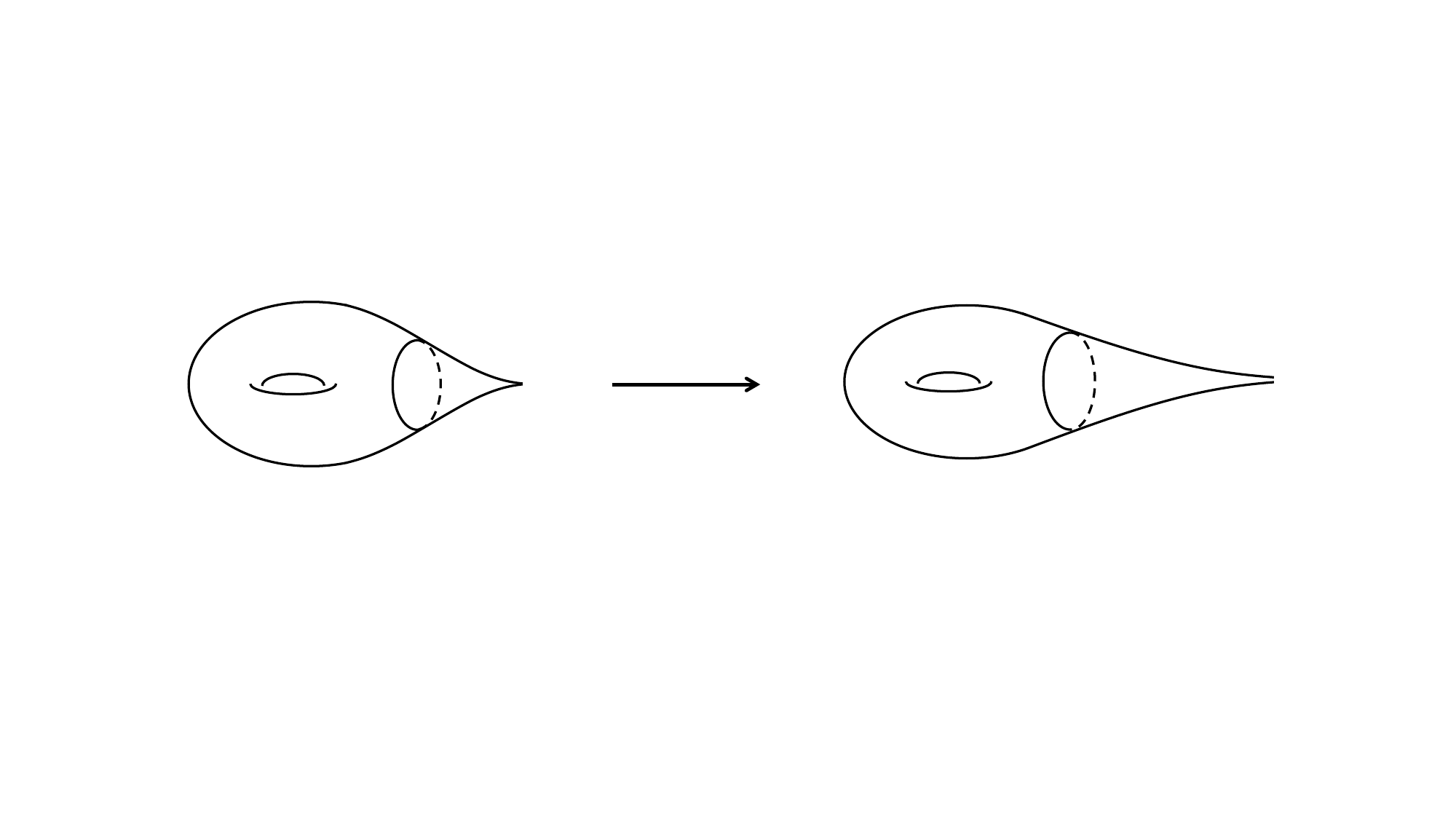}
	\put(-365,172){\rotatebox{0}{\fontsize{10}{10}\selectfont{} $\text{Conical point}$}}
	\put(-345,112){\rotatebox{0}{\fontsize{10}{10}\selectfont{} $\theta_m=\frac{2\pi}{m}$}}
	\put(-270,145){\rotatebox{0}{\fontsize{10}{10}\selectfont{} $m\rightarrow \infty$}}
	\put(-270,130){\rotatebox{0}{\fontsize{10}{10}\selectfont{} $\theta_m\rightarrow 0$}}
	\put(-110,169){\rotatebox{0}{\fontsize{10}{10}\selectfont{} $\text{Cusp}$}}
	\vspace{-3.5cm}
	
	\caption{\emph{Cusps as limits of cone points}. Formally, cusps can be viewed as a limit $m \to \infty$ (or cone angle $\theta_m =2\pi/m \to 0$) of a conical singularity on a Riemann orbisurface.} \label{fig:cusplimit}
\end{figure}

The above definition makes it clear that the whole theory of orbifold Riemann surfaces can be phrased in terms of Riemann surfaces with signature, where the signature is the map $X_O \to \mathbb{N} \cup \{\infty\}$ taking a point to its order:
\begin{definition}[Riemann surfaces with signature]\label{def:surfacewithsignature}
	By a \emph{Riemann surface with signature} we mean a Riemann surface $X$ of finite type\footnote{A Riemann surface $X$ is said to be of \emph{finite type} if $X$ is a (stable) Riemann surface (with or without nodes) such that either $n=0$ and $X$ is \emph{compact}, or $n>1$ and $X$ is compact except for $n$ punctures.} $(g,n)$ together with an assignment of a \emph{branching index} $m_i$ to each marked point; these branching indices $m_i$ are either integers $\geq 2$ or the symbol $\infty$. The \emph{signature} of $O$ is the tuple $(g;m_1,\dots,m_n)$ where $g$ is the genus of $X_O$ and branching indices $m_i$ are ordered such that $2 \leq m_1 \leq m_2 \leq \dotsm \leq m_n$. 
\end{definition}
\begin{remark}
	More precisely, by assigning the branching index 1 to every other point (except the marked points), we can define a Riemann surface with signature as a pair $(X,\nu)$ where $\nu : X \to \mathbb{N} \cup \{\infty\}$ is a branching function. Note that if $\mathcal{U} = \{(U_i,\mathrm{D},\mathbb{Z}_{m_i},\ff_i)\}_{i \in I}$ is an orbifold atlas on $X$ which is compatible with $\nu$, the restriction $\nu|_{_{U_i}}$ is precisely the same as the branching function associated to the branched Galois covering $\ff_i$ --- i.e. $\nu|_{_{U_i}}  \equiv \nu_{\ff_i}$.
\end{remark}

Let $(\tilde{X}, \tilde{\nu})$ and $(X, \nu)$ be two orbifold Riemann surfaces. A Galois branched covering map $\varpi: \tilde{X} \to X$ is said to yield an orbifold Galois covering $\varpi_{\text{orb}}: (\tilde{X}, \tilde{\nu}) \to (X, \nu)$ if
\begin{equation}
\nu\big(\varpi(\tilde{x})\big) = \tilde{\nu}(\tilde{x}) \deg_{\varpi}(\tilde{x}) \qquad \text{for all} \qquad \tilde{x} \in \tilde{X}.
\end{equation}
Note that a conical point $x_j \in \sing(X, \nu)$ may be obtained in two different ways: 
\begin{enumerate}
	\item If $\tilde{x} \notin \sing(\tilde{X},\tilde{\nu})$ but it is a ramification point of $\varpi$ with ramification index $\deg_{\varpi}(\tilde{x})>1$, then $\varpi(\tilde{x})$ is a conical point of order $\deg_{\varpi}(\tilde{x})$.
	\item If $\tilde{x} \in \sing(\tilde{X},\tilde{\nu})$ and the local degree of $\varpi$ at $\tilde{x}_i$ is given by $\deg_{\varpi}(\tilde{x}_i) \geq 1$, then $\varpi(\tilde{x}_i)$ is also a conical point of order $\tilde{\nu}(\tilde{x}_i) \deg_{\varpi}(\tilde{x}_i)$. As a result, we will always have $|\sing(\tilde{X},\tilde{\nu})| \leq |\sing(X,\nu)|$.
\end{enumerate}
When the branching function $\tilde{\nu}(x)$ is the trivial branching function $\tilde{\nu} \equiv 1$, we conclude:
\begin{proposition}\label{prop:branchedorbifoldcorrepondence}
	Let $\varpi: \tilde{X} \to X$ be a branched Galois covering between two Riemann surfaces. The base Riemann surface $X$ can be naturally given the structure of a (developable) Riemann orbisurface $O=(X,\nu_{\varpi})$ where $\nu_{\varpi}: X \to \mathbb{N}$ is the branching function associated to the branched Galois covering $\varpi$. Note that $O \cong [\tilde{X}/\gal(\varpi)]$.
\end{proposition}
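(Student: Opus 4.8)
The plan is to establish Proposition \ref{prop:branchedorbifoldcorrepondence} by assembling the machinery developed in Section \ref{covering}, since the statement is essentially a repackaging of the structural results on ramified Galois coverings into the orbifold language. First I would recall that $\varpi : \tilde{X} \to X$ is a branched Galois covering between two Riemann surfaces; by Corollary \ref{corll:1dAnalytic}, both $\tilde{X}$ and $X$ are automatically smooth, so there are no underlying analytic space singularities to worry about, and every ramification phenomenon is genuinely of orbifold (codimension-one) type. The branching function $\nu_{\varpi} : X \to \mathbb{N}$ was already defined in Section \ref{covering} by $\nu_{\varpi}(x) = \#\gal(\varpi)_{y}$ for any lift $y \in \varpi^{-1}(x)$, and this value is independent of the chosen lift precisely because the stabilizer subgroups in different points of a single $\gal(\varpi)$-orbit are related by conjugation. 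This well-definedness is the first thing I would state explicitly, as it is what makes $O = (X, \nu_{\varpi})$ a bona fide Riemann surface with signature in the sense of Definition \ref{def:surfacewithsignature}.

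Next I would verify that $O = (X, \nu_{\varpi})$ genuinely carries an orbifold structure compatible with $\nu_{\varpi}$, using the one-dimensional normal form for branched coverings recalled at the end of Section \ref{covering}: around any ramification point $y \in \tilde{X}$ the map $\varpi$ is locally $z \mapsto z^m$ with $m = \deg_{\varpi}(y) = \#\gal(\varpi)_y$. This lets me build an orbifold chart of the standard form $(U, \mathrm{D}, \mathbb{Z}_{m}, \ff)$ around each branch point $x = \varpi(y)$, with $\ff$ the folding map $z \mapsto z^m$ and $\mathbb{Z}_m \cong \gal(\varpi)_y$ acting as $m$-th roots of unity; away from the branch locus the charts are trivial. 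Gluing these via the finite-intersection procedure of Thurston (as in the proof of Proposition \ref{prop:quotientorb}) produces an orbifold atlas on $X$, and by construction the restriction $\nu_{\varpi}|_{U_i}$ coincides with the branching function $\nu_{\ff_i}$ of each folding chart, which is exactly the compatibility condition. At this stage $O = (X, \nu_{\varpi})$ is an orbifold Riemann surface.

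The third step is to promote the analytic Galois covering to an \emph{orbifold} Galois covering $\varpi_{\text{orb}} : \tilde{X} \to O$, where $\tilde{X}$ is regarded as an orbifold with trivial branching function $\tilde{\nu} \equiv 1$. I must check the covering identity $\nu_{\varpi}(\varpi(\tilde{x})) = \tilde{\nu}(\tilde{x}) \deg_{\varpi}(\tilde{x})$ for all $\tilde{x}$; with $\tilde{\nu} \equiv 1$ this reduces to $\nu_{\varpi}(\varpi(\tilde{x})) = \deg_{\varpi}(\tilde{x})$, which is immediate from the very definition of $\nu_{\varpi}$ together with the equality $\deg_{\varpi}(\tilde{x}) = \#\gal(\varpi)_{\tilde{x}}$ valid for Galois coverings. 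I would also confirm that $\gal(\varpi) \subset \operatorname{Aut}(O)$, i.e. that deck transformations respect the orbifold stratification — this holds because deck transformations permute the fibers of $\varpi$ and hence preserve the stabilizer-conjugacy classes that define the strata. Finally, the isomorphism $O \cong [\tilde{X}/\gal(\varpi)]$ follows from Theorem \ref{thrm:analyticquotient} and Proposition \ref{prop:quotientorb}: the analytic quotient $\tilde{X}/\gal(\varpi)$ is complex-analytically equivalent to $X$, and the global-quotient orbifold structure induced on it matches the atlas built above because both record the same local isotropy groups $\gal(\varpi)_y$; in particular $O$ is developable (good).

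I do not anticipate a deep obstacle here — the content is organizational rather than technical. The one point requiring genuine care, and the place I would slow down, is ensuring the orbifold structure on $X$ is \emph{intrinsic}, i.e. that the charts assembled from the local normal forms patch into a well-defined equivalence class of atlases independent of the auxiliary choices (choice of lift $y$, choice of local coordinate realizing the $z \mapsto z^m$ normal form, and choice of neighborhoods in the finite-intersection augmentation). Concretely, the subtlety is that the embeddings $\eta_{ab}$ between charts are unique only up to the local group action (Remark \ref{rmk:embeddings}), so I must confirm that different admissible choices yield equivalent atlases in the sense of sharing a common refinement. Since all the ambiguity is absorbed into conjugation by elements of the cyclic local groups $\mathbb{Z}_{m_i}$, and these act holomorphically and compatibly, this reduces to the same cocycle-up-to-inner-automorphism bookkeeping already handled in Proposition \ref{prop:quotientorb}; I would invoke that argument rather than redo it, and thereby close the proof.
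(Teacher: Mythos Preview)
Your proposal is correct and follows the same route the paper implicitly takes: the paper does not supply a formal proof, but states the proposition as an immediate consequence of the preceding discussion on orbifold Galois coverings (specifically, the case where the cover carries the trivial branching function $\tilde{\nu}\equiv 1$, so that the covering identity $\nu(\varpi(\tilde{x})) = \tilde{\nu}(\tilde{x})\deg_{\varpi}(\tilde{x})$ reduces to the defining relation for $\nu_{\varpi}$). Your write-up simply makes explicit the chart construction, well-definedness of $\nu_{\varpi}$, and the identification $O\cong[\tilde{X}/\gal(\varpi)]$ that the paper leaves as understood from Proposition~\ref{prop:quotientorb} and Theorem~\ref{thrm:analyticquotient}.
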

\subsection{Universal Orbifold Covering and Orbifold Fundamental Group}

In Remark~\ref{rmk:universalcov}, we have briefly mentioned universal covers of good orbifolds. However, even when an orbifold $O$ is not necessarily assumed to be developable, it is rather easy to define a \emph{universal covering orbifold} of $O$ in much the same way as one defines the universal covering of manifolds and the same uniqueness property holds in the orbifold case as well:
\begin{definition}[Universal covering orbifold]
	An orbifold Galois covering
	\begin{equation}
	\varpi_{\text{orb}}: \tilde{O} \to O
	\end{equation}
	is called a \emph{universal orbifold covering} of $O$ if for any other orbifold covering
	\begin{equation}
	\varpi'_{\text{orb}}: \breve{O} \to O
	\end{equation}
	there exists a lifting of $\varpi_{\text{orb}}$ to an orbifold covering map $\varpi''_{\text{orb}}:\tilde{O} \to \breve{O}$ such that the diagram
	\begin{equation}
	\begin{tikzcd}
	\tilde{O} \ar[dd,"\varpi_{\text{orb}}"] \ar[dr, "\varpi''_{\text{orb}}"] &  \\
	& \breve{O} \ar[ld,"\varpi'_{\text{orb}}"] \\
	O
	\end{tikzcd}
	\end{equation}
	commutes.
\end{definition}
In the case of developable orbifolds $O \cong [M / \Gamma]$, any (unramified) manifold covering $\tilde{M} \to M$ gives an orbifold covering by composition with the quotient map $M \to [M / \Gamma]$. In particular, the universal covering of $M$ gives rise to a universal orbifold covering of $O$, and the orbifold fundamental group is given by the short exact sequence
\begin{equation}
\mathbbm{1} \to \pi_1(M) \to \pi_1(O) \to \Gamma \to \mathbbm{1}.
\end{equation}

Once again, the situation is particularly nice for the case of orbifold Riemann surfaces:
\begin{theorem}[\cite{milnor1990dynamics} Theorem~E.1]\label{thm:universalcovering}
	With the following two exceptions, every orbifold Riemann surface of finite type $O=(X_O,\brdiv)$ admits as the universal cover either the Riemann sphere $\hat{\cmpx}$, the complex plane $\cmpx$ or the hyperbolic plane $\UHP$ which is necessarily a \emph{finite} Galois branched covering and is unique up to conformal isomorphism over $X_O$; this is a consequence of the classical uniformization theorem. The only exceptions, called ``bad'' orbisurfaces in Thurston's terminology, are given by:
	\begin{enumerate}[(i)]
		\item  Teardrop orbisurface: Riemann sphere $\hat{\cmpx}$ with just one ramified point (see Figure \ref{fig:badorbifolds}), or 
		\item Spindle orbisurface: Riemann sphere $\hat{\cmpx}$ with two ramified points for which the ramification indices are different (see Figure \ref{fig:badorbifolds}).
	\end{enumerate}
\end{theorem}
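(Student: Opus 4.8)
The statement to prove is Theorem E.1 (Milnor), which asserts that every orbifold Riemann surface of finite type admits one of the three simply connected model surfaces $\hat{\cmpx}$, $\cmpx$, or $\UHP$ as a finite Galois branched universal cover, with the sole exceptions being the teardrop and spindle orbisurfaces. My plan is to reduce this to the classical uniformization theorem for ordinary Riemann surfaces via an Euler-characteristic (orbifold Gauss--Bonnet) dichotomy, and to handle the bad cases by a direct nonexistence argument.

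\medskip

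\noindent\textbf{Step 1: Orbifold Euler characteristic and the trichotomy.} First I would introduce the orbifold Euler characteristic $\chi(O) = \chi(X_O) - \deg(\brdiv) = 2 - 2g - \sum_i \bigl(1 - \tfrac{1}{m_i}\bigr)$ (with cusps contributing a full unit), exactly as defined in Section~\ref{Action}. The sign of $\chi(O)$ should govern which model surface appears: $\chi(O) > 0$ corresponds to the spherical case $\hat{\cmpx}$, $\chi(O) = 0$ to the Euclidean case $\cmpx$, and $\chi(O) < 0$ to the hyperbolic case $\UHP$. The key organizing principle is that a finite branched cover $\tilde{X} \to X_O$ of degree $d$ with prescribed ramification multiplies orbifold Euler characteristics: by Riemann--Hurwitz, $\chi(\tilde{X}) = d \cdot \chi(O)$. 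Thus if a smooth (unramified-as-orbifold) finite cover $\tilde{X}$ exists, its genus is forced by $\chi(O)$, and the model geometry of $\tilde{X}$ as an ordinary Riemann surface matches the sign of $\chi(O)$.

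\medskip

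\noindent\textbf{Step 2: Constructing the cover in the good cases.} For the hyperbolic case $\chi(O) < 0$, which is the one actually needed in the body of the paper, I would invoke Theorem~\ref{theorem:geometricorbifolds} together with the classical results on Fuchsian groups: one builds a cocompact (or cofinite) Fuchsian group $\Gamma \subset \PSLR$ with the prescribed signature $(g; m_1,\dots,m_{n_e}; n_p)$, so that $O \cong [\UHP/\Gamma]$ and $\pi_{\Gamma}: \UHP \to O$ is the universal orbifold covering. Existence of such a Fuchsian presentation for any signature with $\chi(O) < 0$ is the Poincar\'e--Koebe uniformization applied to the orbifold, and finiteness of the Galois group $\Gamma$ over a smooth surface is not required here since $\UHP$ itself is the universal cover. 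For the parabolic ($\chi = 0$) and elliptic ($\chi > 0$) good cases one argues similarly with $\cmpx$ and $\hat{\cmpx}$ respectively, using that the only finite subgroups of $\PSLC$ acting on $\hat{\cmpx}$ are cyclic, dihedral, tetrahedral, octahedral, and icosahedral, which pins down precisely which signatures with $\chi > 0$ are realizable.

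\medskip

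\noindent\textbf{Step 3: The bad cases and the main obstacle.} The crux -- and the hardest part -- is showing that the teardrop $(0; m)$ and the spindle $(0; m_1, m_2)$ with $m_1 \neq m_2$ are genuinely \emph{not} developable, i.e.\ admit no manifold cover at all. Here the Euler characteristic argument becomes a divisibility obstruction rather than a mere sign computation. For the teardrop, any hypothetical finite smooth cover $\tilde{X} \to \hat{\cmpx}$ of degree $d$ branched only over the single conical point of order $m$ would, by Riemann--Hurwitz, need $2 - 2\tilde{g} = 2d - d(1 - \tfrac{1}{m}) = d(1 + \tfrac{1}{m})$; but over a single branch point the local monodromy data cannot close up consistently -- the product-of-monodromy relation in $\pi_1(\hat{\cmpx}\setminus\{pt\}) = \{1\}$ forces the single ramification permutation to be trivial, contradicting $m \geq 2$. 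The spindle argument is analogous: the two monodromy permutations around the two branch points must be mutually inverse (since $\pi_1$ of the twice-punctured sphere is generated by two loops whose product is trivial), which forces them to have equal cycle type and hence $m_1 = m_2$. I would phrase this cleanly as: a cover exists iff one can find a transitive permutation representation of the free group on the punctured base whose local monodromies have the prescribed orders and whose product is the identity, and then show this fails precisely for teardrops and unequal spindles. The subtlety I expect to wrestle with is making the passage between the analytic statement (existence of a branched Galois \emph{orbifold} cover in the sense of Section~\ref{sect:orbifolddef}) and the combinatorial monodromy statement fully rigorous, especially ensuring that the Galois (regular) condition is what rules out all smooth covers and not merely the universal one; this is where I would lean on Proposition~\ref{prop:branchedorbifoldcorrepondence} and the correspondence between orbifold covers and subgroups to translate nonexistence of \emph{any} finite smooth cover into the impossibility of the relevant monodromy representation.
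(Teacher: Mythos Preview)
The paper does not actually prove this theorem; it is stated with attribution to Milnor's book and followed only by Remark~\ref{rmk:finitecovering}, which explains one ingredient (that any finitely generated discrete $\Gamma$ in $\isom^+(\mathbb{X})$ has a torsion-free subgroup of finite index, via Selberg's lemma, yielding a finite smooth cover). So there is no ``paper's own proof'' to compare against in the strict sense.

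Your outline is the standard and correct route. The Euler-characteristic trichotomy (Step~1) and the existence of Fuchsian/Euclidean/spherical groups with prescribed signature (Step~2) are exactly how one reduces to classical uniformization; the paper itself invokes this later via the Corollary following equation~\eqref{orbEulerCh}. Your monodromy argument for the bad cases (Step~3) is the right one: for the teardrop, $\pi_1(\hat{\cmpx}\setminus\{\text{pt}\})$ is trivial so the single local monodromy must be trivial; for the spindle, the two monodromies are mutually inverse in $\pi_1(\hat{\cmpx}\setminus\{2\,\text{pts}\})\cong\mathbb{Z}$ and hence have equal order. One small point worth tightening: the theorem as stated in the paper says the universal cover is ``necessarily a \emph{finite} Galois branched covering,'' which is literally false for $\cmpx$ and $\UHP$ (the deck group is infinite). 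What is finite is the passage from $O$ to a smooth Riemann surface cover, as clarified in Remark~\ref{rmk:finitecovering}; you implicitly navigate this correctly in Step~2 but could flag the misstatement explicitly.
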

\begin{figure}
	\centering
	\includegraphics[width=.95\linewidth]{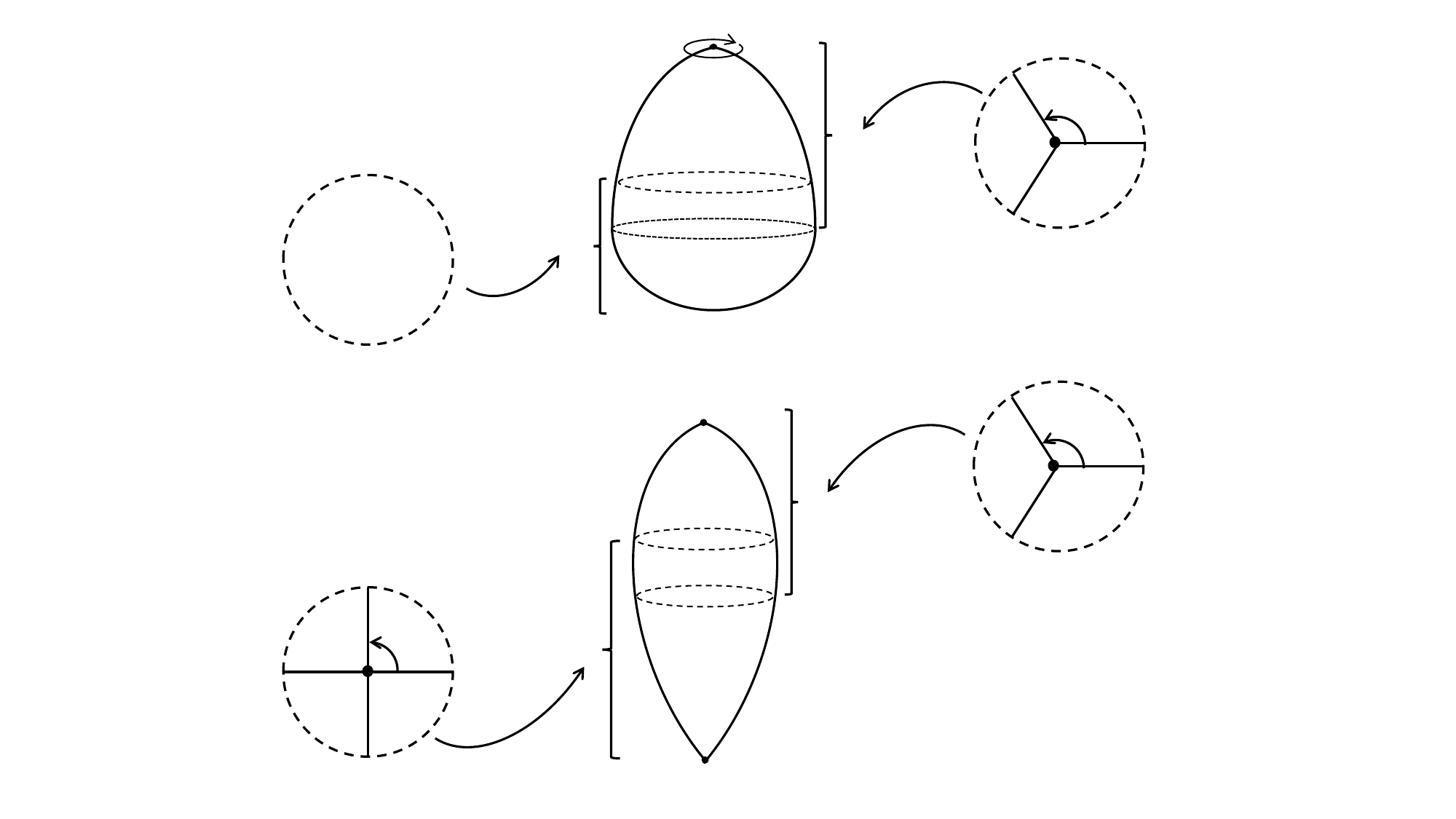}
	\put(-332,182){\rotatebox{0}{\fontsize{10}{10}\selectfont{} $\mathrm{D}$}}
	\put(-332,66){\rotatebox{0}{\fontsize{10}{10}\selectfont{} $\mathrm{D}$}}
	\put(-102,213){\rotatebox{0}{\fontsize{10}{10}\selectfont{} $\mathrm{D}$}}
	\put(-102,121){\rotatebox{0}{\fontsize{10}{10}\selectfont{} $\mathrm{D}$}}
	\put(-272,138){\rotatebox{0}{\fontsize{10}{10}\selectfont{} $\ff_2$}}
	\put(-258,160){\rotatebox{0}{\fontsize{10}{10}\selectfont{} $U_2$}}
	\put(-255,46){\rotatebox{0}{\fontsize{10}{10}\selectfont{} $U_2$}}
	\put(-272,13){\rotatebox{0}{\fontsize{10}{10}\selectfont{} $\ff_2$}}
	\put(-187,84){\rotatebox{0}{\fontsize{10}{10}\selectfont{} $U_1$}}
	\put(-160,114){\rotatebox{0}{\fontsize{10}{10}\selectfont{} $\ff_1$}}
	\put(-155,210){\rotatebox{0}{\fontsize{10}{10}\selectfont{} $\ff_1$}}
	\put(-177,186){\rotatebox{0}{\fontsize{10}{10}\selectfont{} $U_1$}}
	\put(-216,225){\rotatebox{0}{\fontsize{10}{10}\selectfont{} $N$}}
	\put(-218,203){\rotatebox{0}{\fontsize{10}{10}\selectfont{} $\frac{2\pi}{3}$}}
	\put(-115,200){\rotatebox{0}{\fontsize{10}{10}\selectfont{} $\frac{2\pi}{3}$}}
	\put(-115,109){\rotatebox{0}{\fontsize{10}{10}\selectfont{} $\frac{2\pi}{3}$}}
	\put(-303,50){\rotatebox{0}{\fontsize{10}{10}\selectfont{} $\frac{\pi}{2}$}}
	\put(-219,114){\rotatebox{0}{\fontsize{10}{10}\selectfont{} $N$}}
	\put(-219,7){\rotatebox{0}{\fontsize{10}{10}\selectfont{} $S$}}
	\caption{\emph{Bad orbisurfaces}. The top figure shows the (3)-teardrop orbisurface, which consists of the tuple $\big(\hat{\cmpx}, \brdiv = 3 N \big)$ where the Riemann sphere $\hat{\cmpx}$ is the underlying Riemann surface and the north pole $N$ is the only marked point with multiplicity 3. The bottom figure shows the (3,4)-spindle orbisurface that is given by the tuple $\big(\hat{\cmpx}, \brdiv = 3 N + 4 S\big)$ where now both the north pole $N$ and the south pole $S$ are marked points with inequivalent cone orders. Note that the $(m_N)$-teardrop can be viewed as the $(m_N,1)$-spindle. Both of these orbisurfaces are ``bad'' orbisurfaces in the sense that they admit no Riemann surface as their universal covering.} \label{fig:badorbifolds}
\end{figure}	
\begin{remark}\label{rmk:finitecovering}
	The statement that \emph{every} developable Riemann orbisurface of finite type is \emph{finitely} covered by $\hat{\cmpx}$, $\cmpx$, or $\UHP$ is equivalent to asserting that \emph{any} finitely generated, discrete subgroup $\Gamma$ of (orientation-preserving) isometries $\isom^{+}(\hat{\cmpx})\cong \PSLC$, $\isom^{+}(\cmpx) $, or $\isom^{+}(\UHP) \cong \PSLR$ with quotient space of finite type has a \emph{torsion-free subgroup of finite index}. The conjugacy classes of torsion in $\Gamma$ correspond to the conical points of $[\mathbb{X}/\Gamma]$, for $\mathbb{X} \in \{\hat{\cmpx},\cmpx,\UHP\}$, and one obtains a torsion-free subgroup of finite index of $\Gamma$ by avoiding these (finite number of) conjugacy classes. The resulting torsion-free subgroup is precisely the fundamental group of the underlying Riemann surface, $\pi_1(X_O)$.
\end{remark}

More generally, Thurston proved that every orbifold $O$ has a universal cover regardless of being developable or not (see \cite[Prop.~13.2.4]{thurston80}) and also defined the orbifold fundamental group $\pi_1(O)$ as the group of covering transformations of its universal covering:
\begin{definition}[Orbifold fundamental group]\label{def:orbifoldfundgroup}
	The \emph{orbifold fundamental group} $\pi_1(O)$ of an orbifold $O$ is the Galois group of its universal covering. 
\end{definition}

We need an interpretation of $\pi_1(O)$ in terms of homotopy classes of loops in $O$. However, defining the correct notion of homotopy would be an issue: For instance, a disk $\mathrm{D}_m \cong [\mathrm{D}/\mathbb{Z}_{m}]$ with one cone point of order $m$ has as its universal covering a non-singular disk, with covering transformation group $\mathbb{Z}_{m}$ acting by rotations. Thus, intuitively, a loop in $\mathrm{D}_m$ winding $m$ times around the cone point should be null-homotopic (see Figure \ref{fig:orbifoldfundgroup}).
\begin{figure}
	\centering
	\includegraphics[width=0.8\linewidth]{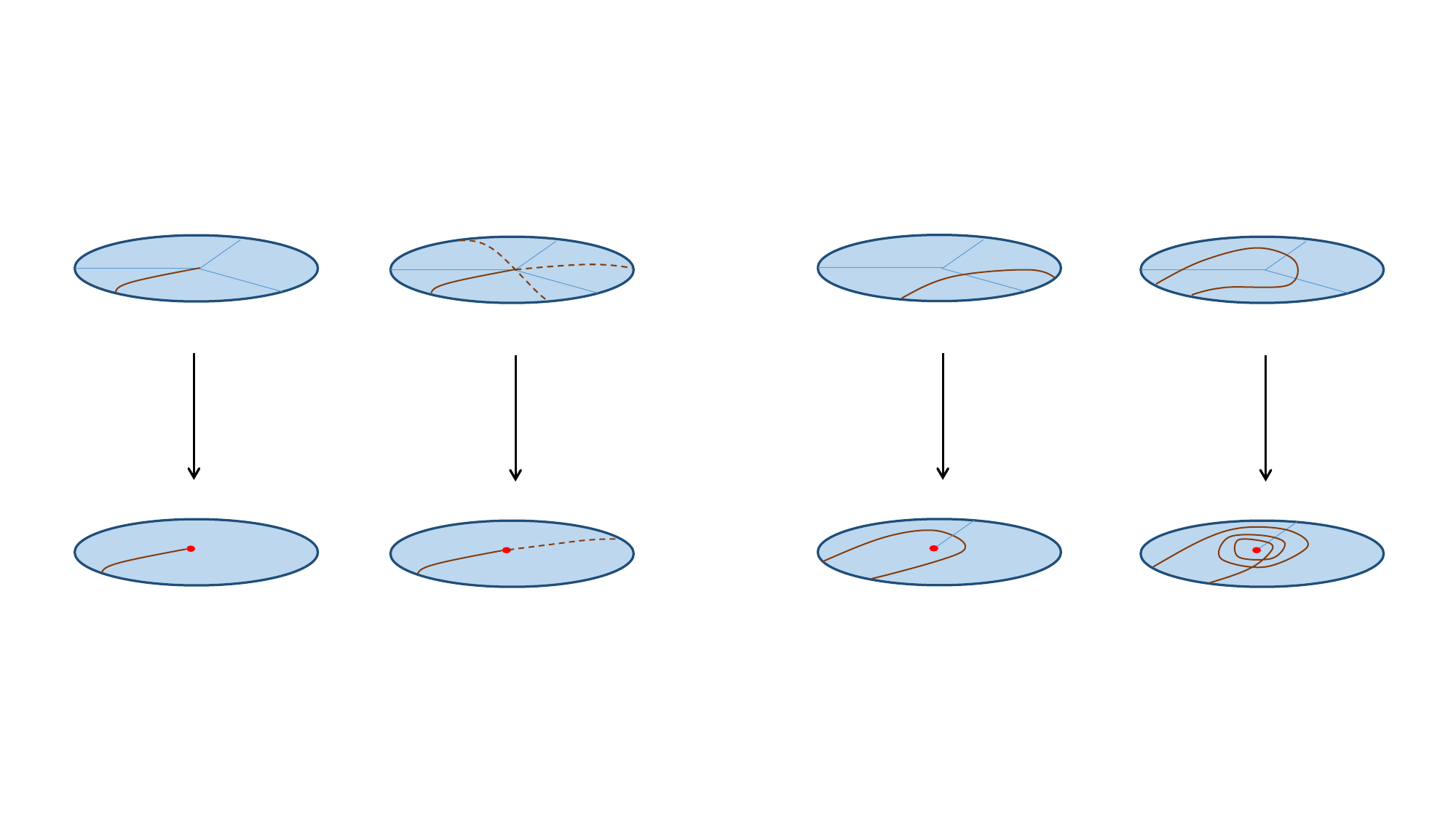}
	\put(-332,115){\rotatebox{0}{\fontsize{10}{10}\selectfont{} $\tilde{\alpha}$}}
	\put(-155,115){\rotatebox{0}{\fontsize{10}{10}\selectfont{} $\tilde{\alpha}$}}
	\put(-332,50){\rotatebox{0}{\fontsize{10}{10}\selectfont{} $\alpha$}}
	\put(-155,50){\rotatebox{0}{\fontsize{10}{10}\selectfont{} $\alpha$}}
	\put(-275,90){\rotatebox{0}{\fontsize{11}{11}\selectfont{} $:1$}}
	\put(-98,90){\rotatebox{0}{\fontsize{11}{11}\selectfont{} $3:1$}}
	\put(-195,135){\rotatebox{0}{\fontsize{11}{11}\selectfont{} $\tilde{U}$}}
	\put(-195,65){\rotatebox{0}{\fontsize{11}{11}\selectfont{} $U$}}
	\put(-18,133){\rotatebox{0}{\fontsize{11}{11}\selectfont{} $\tilde{U}$}}
	\put(-18,63){\rotatebox{0}{\fontsize{11}{11}\selectfont{} $U$}}
	\put(-105,33){\rotatebox{0}{\fontsize{11}{11}\selectfont{} $[\alpha^{3}]= [\operatorname{id}]$}}
	\vspace{-1cm}
	
	\caption{\emph{Orbifold covering and homotopy classes of loops}. This figure compares the loops around a ramified point of index three on an orbifold Riemann surface with those around the pre-image of this cone point on the covering space. We can see that circling once around the pre-image of the ramification point on the covering (orbi)surface is equivalent to circling three times around the ramified point on the base orbisurface.} \label{fig:orbifoldfundgroup}
\end{figure}

In order to introduce the correct notion of orbifold homotopy, it is more convenient to view the complex $\mathfrak{n}$-dimensional orbifold $O$ as a smooth (oriented) real orbifold of dimension $2\mathfrak{n}$ and work with topologists definition of an orbifold (see remark~\ref{rmk:topologistsdef}): Let $O=(|O|,\mathscr{U})$ be a smooth orbifold and let $(|U|, \tilde{U} \subset \mathbb{R}^{2\mathfrak{n}}, \Gamma \subset \operatorname{SO}(2\mathfrak{n}), |\ff|) \in \mathscr{U}$ be an orbifold chart on the underlying topological space $|O|$. Then, $([0,1] \times |U|, [0,1] \times \tilde{U}, \Gamma, \operatorname{id} \times |\ff|)$ where $\gamma \in \Gamma$ acts on $[0,1] \times \tilde{U}$ via $\gamma(t,x) = (t,\gamma x)$ is an orbifold chart on $[0,1] \times O$. The collection of all such orbifold charts forms an orbifold atlas, giving $[0,1] \times O$ the structure of a smooth orbifold. It thus makes sense to say that two orbifold maps
\begin{equation*}
f_{\text{orb}}, f'_{\text{orb}}: O \to O'
\end{equation*}
are \emph{homotopic in the category of orbifolds} if there is an orbifold map
\begin{equation*}
F: [0,1] \times O \to O', \qquad F(t,x) = F_{t}(x)
\end{equation*}
with $F_0 = f_{\text{orb}}$ and $F_1 = f'_{\text{orb}}$.

Armed with the notion of homotopy of orbifold maps, one can define the orbifold fundamental group $\pi_1(O)$ exactly as one does for the usual fundamental group, just replacing homotopies by orbifold homotopies. One should note that any two orbifold maps which are homotopic as orbifold maps are also homotopic as maps between topological spaces, but that the converse does not need to be true. In fact, there are plenty of orbifolds which are simply connected as topological spaces but whose orbifold fundamental group is non-trivial --- i.e., there are orbifold maps $\ell : S^{1} \to O$ which, as orbifold maps, are not homotopic to the constant map. See e.g. \cite[\textsection2.3]{erlandsson2020counting} and \cite[\textsection2.2]{boileau2003three} for more details.

At this stage, the reader might wonder how one can \emph{compute} fundamental groups of orbifolds. This can be done by studying the fundamental group of the regular locus $O\backslash \sing(O)$. Indeed, if singular locus $\sing(O)$ has \emph{real} codimension at least two (which is always the case if $O$ is complex), then $O\backslash \sing(O)$ is connected, and we have a surjective homomorphism $\pi_1\big(O \backslash \sing(O)\big) \to \pi_1(O)$ induced by inclusion $O \backslash \sing(O) \hookrightarrow O$. The surjectivity of this homomorphism comes from the fact that any loop on $O$ can be perturbed to avoid the singular locus. To compute $\pi_1(O)$, we only need to find the kernel of $\pi_1\big(O \backslash \sing(O)\big) \to \pi_1(O)$ --- i.e. which elements of $\pi_1\big(O \backslash \sing(O)\big)$ get killed. For the special case of orbifold Riemann surfaces, we can use the \emph{orbifold Seifert--van Kampen theorem} to arrive at the following proposition (see e.g. \cite{thurston80,Scott1983TheGO} for more details):\footnote{While for the proof of the following proposition requires more complicated machinery for orbifolds of general dimension, the conclusion of proposition~\ref{prop:fundgroup} is valid in all dimensions.} 
\begin{proposition}\label{prop:fundgroup}
	The group $\pi_1(O)$ is the quotient of $\pi_1\big(O \backslash \sing(O)\big)$ by the group normally generated by the elements $\mu_{x}^{m_x}$, for all $x \in \sing(O)$, where $m_x := \#\Gamma_{x}$ is the order of the local isotropy group of $x$ and $\mu_x$ is a meridian around $x$.
\end{proposition}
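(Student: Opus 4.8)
The plan is to prove Proposition~\ref{prop:fundgroup} by invoking the orbifold version of the Seifert--van Kampen theorem, organized around the covering of $O$ by a contractible piece near each singular point and the regular locus. First I would set up the local picture: around each conical point $x \in \sing(O)$, Theorem~\ref{thm:universalcovering} and the discussion of orbifold charts guarantee that a neighborhood is isomorphic to the disk orbifold $[\mathrm{D}/\mathbb{Z}_{m_x}]$ with $m_x = \#\Gamma_x$, whose orbifold fundamental group is cyclic of order $m_x$, generated by the image of a meridian loop $\mu_x$ encircling $x$. This is exactly the computation illustrated in Figure~\ref{fig:orbifoldfundgroup}: the universal cover of $[\mathrm{D}/\mathbb{Z}_{m_x}]$ is the smooth disk $\mathrm{D}$ with deck group $\mathbb{Z}_{m_x}$, so that $\mu_x^{m_x}$ lifts to a genuine loop and is therefore null-homotopic \emph{as an orbifold loop}, while $\mu_x, \dots, \mu_x^{m_x-1}$ are not.

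Next I would assemble the global statement. Since $\sing(O)$ has real codimension two, the regular locus $O^{\text{reg}} = O \setminus \sing(O)$ is a genuine (connected) surface, and inclusion induces a surjection $\pi_1(O^{\text{reg}}) \twoheadrightarrow \pi_1(O)$, surjectivity following because any orbifold loop can be perturbed off the finite singular set. The plan is then to cover $O$ by $O^{\text{reg}}$ together with a small disk-orbifold neighborhood $V_x \cong [\mathrm{D}/\mathbb{Z}_{m_x}]$ of each $x$, whose pairwise intersections $V_x \cap O^{\text{reg}}$ are punctured disks with $\pi_1 \cong \mathbb{Z}$ generated by $\mu_x$. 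Applying the orbifold Seifert--van Kampen theorem, $\pi_1(O)$ is the pushout of $\pi_1(O^{\text{reg}})$ with the groups $\pi_1(V_x) \cong \mathbb{Z}/m_x\mathbb{Z}$ amalgamated over the $\mathbb{Z}$'s generated by the meridians. Concretely, this pushout identifies $\mu_x \in \pi_1(O^{\text{reg}})$ with a generator of $\mathbb{Z}/m_x\mathbb{Z}$, which is precisely the relation $\mu_x^{m_x} = 1$. Hence $\pi_1(O)$ is the quotient of $\pi_1(O^{\text{reg}})$ by the normal closure of $\{\mu_x^{m_x} : x \in \sing(O)\}$.

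The main obstacle I anticipate is establishing the orbifold Seifert--van Kampen theorem in the form needed, or at least justifying its application to a cover by an orbifold chart and the regular locus. In the manifold setting this is standard, but in the orbifold category one must be careful that the groupoid/homotopy-theoretic machinery underlying $\pi_1(O)$ (Definition~\ref{def:orbifoldfundgroup} and the homotopy of orbifold maps defined above) is genuinely compatible with a van Kampen-style decomposition when one of the open sets carries nontrivial isotropy. For the purposes of this paper I would treat the orbifold Seifert--van Kampen theorem as a known result, citing \cite{thurston80,Scott1983TheGO}, and restrict the detailed verification to the two-dimensional case where the local models are the disk orbifolds $[\mathrm{D}/\mathbb{Z}_{m}]$ and the footnote's remark applies: the argument reduces to the elementary statement that $\pi_1([\mathrm{D}/\mathbb{Z}_m]) \cong \mathbb{Z}/m\mathbb{Z}$ with the meridian as generator. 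A secondary technical point to address is the basepoint bookkeeping and the choice of paths connecting a fixed basepoint in $O^{\text{reg}}$ to the boundary of each $V_x$, so that each $\mu_x$ is well-defined up to conjugacy in $\pi_1(O^{\text{reg}})$ --- which is exactly why the quotient is by the \emph{normal} closure rather than the subgroup generated by the $\mu_x^{m_x}$.
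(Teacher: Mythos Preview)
Your proposal is correct and follows essentially the same approach the paper indicates: the paper does not give a detailed proof but simply states that one uses the orbifold Seifert--van Kampen theorem and refers to \cite{thurston80,Scott1983TheGO}, which is precisely the decomposition (regular locus plus local disk-orbifold neighborhoods $[\mathrm{D}/\mathbb{Z}_{m_x}]$) you have outlined. Your write-up is in fact more detailed than what appears in the paper, and your identification of the main technical input---that the orbifold Seifert--van Kampen theorem must be taken as known from the cited references---matches the paper's own stance.
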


Thus, if $O$ is an orbifold Riemann surface with signature $(g; m_1,\dots,m_{n_e};n_p)$, $\pi_1(O)$ can be presented as (see Figure \ref{fig:fundamentalgroup} and \ref{fig:fundgroup})
\begin{footnotesize}
	\begin{equation}	
	\pi_1(O,x_{\ast}) = \left\langle \mathsf{A}_1, \mathsf{B}_1,\dots, \mathsf{A}_g, \mathsf{B}_g, \mathsf{C}_1, \dots, \mathsf{C}_{n_e}, \mathsf{P}_1, \dots, \mathsf{P}_{n_p} \,\bigg|\, \mathsf{C}_1^{m_1} = \dotsm = \mathsf{C}_{n_e}^{m_{n_e}} = \prod_{i=1}^{g}[\mathsf{A}_i,\mathsf{B}_i] \prod_{j=1}^{n_e} \mathsf{C}_j \prod_{k=1}^{n_p} \mathsf{P}_k = \operatorname{id} \right\rangle,
	\end{equation}
\end{footnotesize}
where $\mathsf{A}_i$s and $\mathsf{B}_i$s are homotopy classes of loops (based at $x_{\ast}$) that span $H_1(X_O,\mathbb{Z})$, $\mathsf{C}_j$s and $\mathsf{P}_k$s are meridians around conical points and cusps respectively, the commutator $[\mathsf{A}_i,\mathsf{,B}_i]$ is defined as $\mathsf{A}_i \mathsf{B}_i \mathsf{A}_i^{-1} \mathsf{B}_i^{-1}$, and the relation $\prod_{i=1}^{g}[\mathsf{A}_i,\mathsf{B}_i] \prod_{j=1}^{n_e} \mathsf{C}_j \prod_{k=1}^{n_p} \mathsf{P}_k = \operatorname{id}$ comes from cutting open $O$ along the chosen basis for $\pi_1(O,x_{\ast})$.

\begin{figure}
	\centering
	\includegraphics[width=0.9\linewidth]{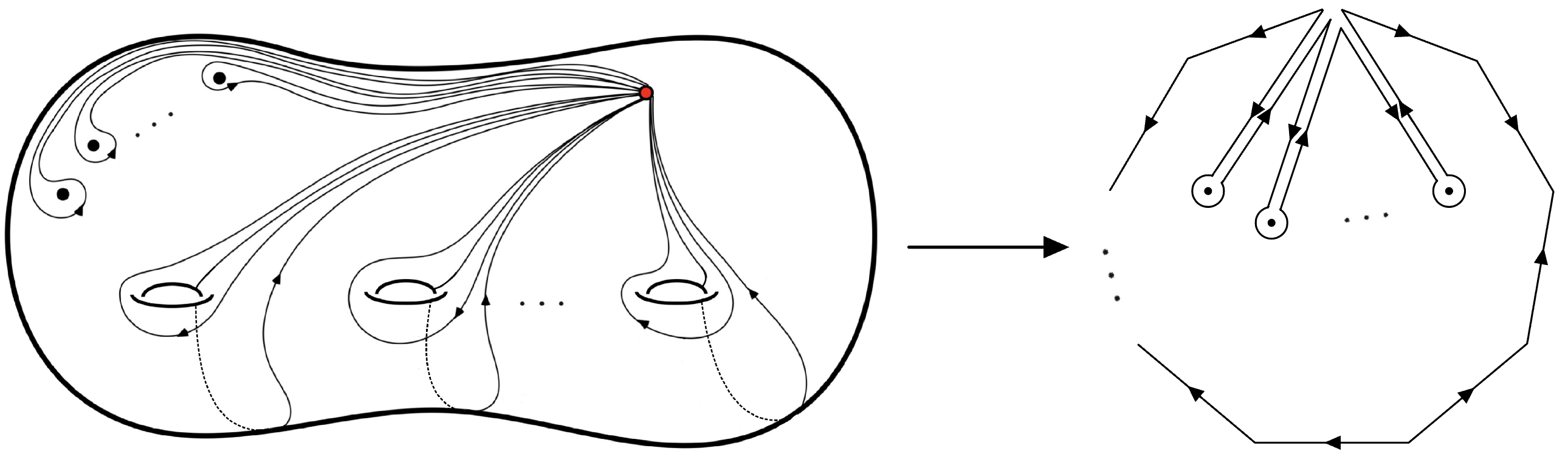} 
	\put(-368,63){\rotatebox{0}{\fontsize{9}{9}\selectfont{} $C_{1}$}}
	\put(-360,74){\rotatebox{0}{\fontsize{9}{9}\selectfont{} $C_{2}$}}
	\put(-340,84){\rotatebox{0}{\fontsize{9}{9}\selectfont{} $P_{n_p}$}}
	\put(-368,30){\rotatebox{0}{\fontsize{9}{9}\selectfont{} $B_{1}$}}
	\put(-313,30){\rotatebox{0}{\fontsize{9}{9}\selectfont{} $B_{2}$}}
	\put(-243,30){\rotatebox{0}{\fontsize{9}{9}\selectfont{} $B_{g}$}}
	\put(-193,29){\rotatebox{0}{\fontsize{9}{9}\selectfont{} $A_{g}$}}
	\put(-266,20){\rotatebox{0}{\fontsize{9}{9}\selectfont{} $A_{2}$}}
	\put(-334,19){\rotatebox{0}{\fontsize{9}{9}\selectfont{} $A_{1}$}}
	\put(-228,90){\rotatebox{0}{\fontsize{9}{9}\selectfont{} $x_{\ast}$}}
	\put(-167,56){\rotatebox{0}{\fontsize{7}{7}\selectfont{} $\text{Cut open}$}}
	\put(-168,45){\rotatebox{0}{\fontsize{7}{7}\selectfont{} $\text{along loops}$}}
	\put(-64,114){\rotatebox{0}{\fontsize{9}{9}\selectfont{} $x_{\ast}$}}
	\put(-121,84){\rotatebox{0}{\fontsize{9}{9}\selectfont{} $A_{g}^{-1}$}}
	\put(-90,110){\rotatebox{0}{\fontsize{9}{9}\selectfont{} $B_{g}^{-1}$}}
	\put(-44,108){\rotatebox{0}{\fontsize{9}{9}\selectfont{} $A_{1}$}}
	\put(-13,83){\rotatebox{0}{\fontsize{9}{9}\selectfont{} $B_{1}$}}
	\put(-8,45){\rotatebox{0}{\fontsize{9}{9}\selectfont{} $A_{1}^{-1}$}}
	\put(-27,9){\rotatebox{0}{\fontsize{9}{9}\selectfont{} $B_{1}^{-1}$}}
	\put(-65,-4){\rotatebox{0}{\fontsize{9}{9}\selectfont{} $A_{2}$}}
	\put(-105,10){\rotatebox{0}{\fontsize{9}{9}\selectfont{} $B_{2}$}}
	\put(-96,55){\rotatebox{0}{\fontsize{9}{9}\selectfont{} $C_{1}$}}
	\put(-80,47){\rotatebox{0}{\fontsize{9}{9}\selectfont{} $C_{2}$}}
	\put(-35,55){\rotatebox{0}{\fontsize{9}{9}\selectfont{} $P_{n_p}$}} 
	
	\caption{\emph{Orbifold fundamental group}. A choice of generators for the homotopy group
		of a surface with marked points.} 
	\label{fig:fundgroup}
\end{figure}
\subsection{Orbifold Euler Characteristic and Riemann-Hurwitz Formula}\label{sect:Eulerchracter}
The \emph{orbifold Euler characteristic} is a generalization of the notion of Euler characteristic for manifolds that includes contributions coming from nontrivial automorphisms. In particular, while every manifold has an integer Euler characteristic, the orbifold Euler characteristic is, in general a rational number. In this subsection, we will only focus on studying the orbifold Euler characteristic for the case of developable Riemann orbisurfaces.\footnote{The interested reader can consult the reference \cite{hirzebruch1990euler} for more general cases.} As we will see, the Euler characteristic has an important connection to branched Galois coverings, and this allows us to calculate the orbifold Euler characteristic for developable orbifold Riemann surfaces using the so called Riemann-Hurwitz formula.

For simplicity, let us start by only considering closed Riemann surfaces. We remember from elementary topology that a closed Riemann surface $Y$ has only one topological invariant, which we may take to be its genus $g$. In this case, the Euler Characteristic of $Y$, denoted by $\chi(Y)$, is found by triangulating $Y$ and using the formula $\chi(Y) = F_{aces} - E_{dges} + V_{ertices}$. As expected, the result only depends on the genus $g$ and is given by $\chi(Y) = 2-2g$. Now, let $\varpi: Y \to X$ be a Galois covering map between closed Riemann surfaces. There is a formula relating the various invariants involved: the genus of $Y$, the genus of $X$, the degree of $\varpi$, and the amount of ramification:
\begin{theorem}[Riemann-Hurwitz relation]
	Let $\varpi: Y \xrightarrow{d:1} X$ be a branched Galois covering map of degree $d$ --- i.e. $\deg(\varpi) = \#\gal(Y/X) = d$. We have the relation
	\begin{equation*}
	\chi(Y) = d \, \big[ \chi(X)  - \deg(\mathscr{B}_{\varpi}) \big]
	\end{equation*}
	where $\mathscr{B}_{\varpi} = \sum_{x_j \in B_{\varpi}} (1-\tfrac{1}{\nu_{\varpi}(x_j)}) x_j$ is the branch divisor associated to the branched Galois covering $\varpi$ and $\deg(\mathscr{B}_{\varpi}) = \sum_{x_j \in B_{\varpi}} (1-\tfrac{1}{\nu_{\varpi}(x_j)})$. 
\end{theorem}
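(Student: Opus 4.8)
The plan is to prove the Riemann--Hurwitz relation for a branched Galois covering $\varpi\colon Y \xrightarrow{d:1} X$ of closed Riemann surfaces by comparing a well-chosen triangulation of $X$ with its lift to $Y$. First I would fix a triangulation of the base surface $X$ that is \emph{adapted} to the branch locus: choose the triangulation so that every branch point $x_j \in B_{\varpi}$ appears as a vertex. This is a harmless normalization, since any finite set of points on a closed surface can be included among the vertices of a triangulation by refining if necessary. Denote by $V$, $E$, $F$ the numbers of vertices, edges, and faces of this triangulation, so that $\chi(X) = V - E + F = 2 - 2g_X$.

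Next I would pull the triangulation back along $\varpi$ to obtain a triangulation of $Y$. Away from the ramification locus, $\varpi$ is an honest $d$-sheeted (unramified) covering, so each face lifts to exactly $d$ faces and each edge lifts to exactly $d$ edges; this gives $d F$ faces and $dE$ edges in the lifted triangulation. The vertices require more care, and this is where the ramification enters. For a vertex $x$ of $X$ that is \emph{not} a branch point, the fibre $\varpi^{-1}(x)$ consists of $d$ distinct points, contributing $d$ vertices upstairs. For a branch point $x_j$, the fibre $\varpi^{-1}(x_j)$ consists of fewer points: since $\varpi$ is Galois, every point in the fibre has the same ramification index $\nu_{\varpi}(x_j)$, so by the fibre-counting remark in the excerpt the fibre contains exactly $d/\nu_{\varpi}(x_j)$ points. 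Thus the number of vertices upstairs is
\begin{equation*}
\tilde V = d\big(V - |B_{\varpi}|\big) + \sum_{x_j \in B_{\varpi}} \frac{d}{\nu_{\varpi}(x_j)} = dV - d \sum_{x_j \in B_{\varpi}} \left(1 - \frac{1}{\nu_{\varpi}(x_j)}\right).
\end{equation*}

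Assembling these counts, I would then compute the Euler characteristic of $Y$ directly:
\begin{equation*}
\chi(Y) = \tilde V - dE + dF = d(V - E + F) - d \sum_{x_j \in B_{\varpi}} \left(1 - \frac{1}{\nu_{\varpi}(x_j)}\right) = d\big[\chi(X) - \deg(\mathscr{B}_{\varpi})\big],
\end{equation*}
where I have recognized $\deg(\mathscr{B}_{\varpi}) = \sum_{x_j \in B_{\varpi}}(1 - 1/\nu_{\varpi}(x_j))$ from the definition of the branch divisor given in the statement. This is precisely the claimed relation.

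The main obstacle, and the only genuinely nontrivial input, is the vertex count over the branch points --- specifically the claim that every point in a fibre over a branch point has the \emph{same} ramification index, so that $|\varpi^{-1}(x_j)| = d/\nu_{\varpi}(x_j)$. This is exactly where the \emph{Galois} hypothesis is essential: for a general branched covering the ramification indices in a single fibre can differ, and the fibre size is then governed by the less clean relation $\sum_{i} \deg_{\varpi}(y_i) = d$ recorded in the excerpt. For a Galois covering, however, the deck group $\gal(\varpi)$ acts transitively on each fibre, so all stabilizers in a given fibre are conjugate and hence have equal order; by the orbit--stabilizer count the fibre size is $d/\#\gal(\varpi)_{y} = d/\nu_{\varpi}(x_j)$, using the identification of the local degree with the order of the stabilizer established in the discussion of Galois coverings above. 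Once this transitivity point is invoked, the rest of the argument is the bookkeeping carried out in the displayed equations, and I would take care only to verify that the adapted triangulation genuinely lifts to a triangulation (i.e.\ that refining near branch points does not destroy the face/edge multiplicativity away from $R_{\varpi}$), which is immediate since $\varpi$ is an unramified covering over $X \setminus B_{\varpi}$.
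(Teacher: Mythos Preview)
Your proof is correct and follows essentially the same approach as the paper: both choose a triangulation of $X$ with the branch points among the vertices, pull it back along $\varpi$, and count faces, edges, and vertices upstairs, using the Galois hypothesis to conclude that each branch point $x_j$ has exactly $d/\nu_\varpi(x_j)$ preimages. Your discussion of why the Galois assumption is needed (transitivity of the deck group on fibres, hence equal stabilizer orders via orbit--stabilizer) is more explicit than the paper's own proof, which relegates this point to a remark afterward.
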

\begin{proof}
	Here, we will provide a simple topological proof of the Riemann-Hurwitz (RH) formula:\footnote{One can also prove the RH relation by using a mixture of the Gauss-Bonnet formula and topological considerations \cite[\textsection2.1]{griffiths2014principles} or by using the relation $\mathcal{K}_{Y} = \varpi^{\ast} \mathcal{K}_{X} + \mathscr{R}_{\varpi}$ between the corresponding canonical divisors. We will come back to this point later in this appendix.} Choose a sufficiently small triangulation of $X$ so that each triangle is contained in an evenly covered neighborhood and such that every branch point of $\varpi$ is a vertex in this triangulation. Then, as mentioned above, $\chi(X) = F - E + V$ where $F$, $E$, and $V$ are the number of faces, edges, and vertices (respectively) of the chosen triangulation. Since $\varpi: Y \to X$ is surjective, the pullback of this triangulation is clearly a triangulation of $Y$. Thus, we just need to count the number of faces, edges, and vertices of this pulled-back triangulation to calculate the Euler Characteristic of $Y$. We denote these numbers by $\tilde{F}$, $\tilde{E}$, and $\tilde{V}$ respectively.
	
	The pull-back of each evenly covered neighborhood will contain $d$ copies of the triangle contained within it. Thus we have $d$ faces and $d$ edges --- i.e. $\tilde{F} = d \, F$ and $\tilde{E} = d \, E$. Naively, one would expect there to be $d$ vertices as well; however, since a branch point $x_j \in B_{\varpi}$ has
	\begin{equation}\label{eq:numlifts}
	\big| \varpi^{-1}(x_j)\big| = d - \sum_{y_i \in \varpi^{-1}(x_j)} \big(\deg_{\varpi}(y_i) -1\big) = d/\nu_{\varpi}(x_j)
	\end{equation}
	distinct pre-images, we have\footnote{Note that $\sum\limits_{y_i \in R_{\varpi}}=\sum\limits_{x_j\in B_{\varpi}}\sum\limits_{y_i \in \varpi^{-1}(x_j)}$.}
	\begin{equation}\label{eq:numverticies}
	\tilde{V} = d \, V - \sum_{y_i \in R_{\varpi}} \big(\deg_{\varpi}(y_i)-1\big) = d \, V - \sum_{x_j \in B_{\varpi}} \frac{d}{\nu_{\varpi}(x_j)} \big(\nu_{\varpi}(x_j) -1\big).
	\end{equation}
	Therefore, 
	\begin{equation}
	\chi(Y) = \tilde{F} - \tilde{E} + \tilde{V} =  d \Big[ (F - E + V ) - \sum_{x_j \in B_{\varpi}} (1-\tfrac{1}{\nu_{\varpi}(x_j)}) \Big] = d \, \big[\chi(X) - \deg(\mathscr{B}_{\varpi}) \big].
	\end{equation}
\end{proof}
\begin{remark}
	The RH relation proved above assumes that the branched covering $\varpi$ is Galois. However, it is possible to write RH relation in a way which is true for all branched coverings regardless of whether they are Galois or not. We have to pay attention to two main differences in this case: (i) For a general branched covering the degree $d$ of the covering is not necessarily equivalent to the order of the covering transformation group; in general, $d \leq \# \operatorname{Aut}(\varpi)$ with equality happening only when $\varpi$ is Galois. (ii) When $\varpi$ is not required to be Galois, the ramification indices of ramified points $y_1 , y_2 \in  \varpi^{-1}(x)$ need not be the same. Therefore, the last equalities in both \eqref{eq:numlifts} and \eqref{eq:numverticies} are not true in this general setting; in fact, when $\varpi$ is not necessarily Galois, we cannot even define branching indices and branch divisors. However, we can still write $\tilde{V} = d \, V - \sum_{y_i \in R_{\varpi}} \big(\deg_{\varpi}(y_i)-1\big)$ which results in the following form for RH relation:
	\begin{equation*}
	\chi(Y) = d \, \chi(X)  - \deg(\mathscr{R}_{\varpi}). 
	\end{equation*}
	This form of the Riemann--Hurwitz formula holds true for a general branched covering and is obviously equivalent to the previous form when $\varpi$ is Galois.
\end{remark}

Now, let the orbifold Riemann surface $O = (X_O , \brdiv)$ be a developable Riemann orbisurface with signature $(g;m_1,\dots,m_{n_e};n_p)$. Any such orbifold Riemann surface is finitely covered by a Riemann surface $Y$ such that $\brdiv$ is the branch divisor of the branched Galois covering $\varpi: Y \to X_O$. Then, it immediately follows from the proposition~\ref{prop:branchedorbifoldcorrepondence} that there exists a corresponding orbifold Galois covering $\varpi_{\text{orb}}: Y \xrightarrow{d:1} O = (X_O , \brdiv)$ and it is natural to define the orbifold Euler characteristic of $O$ by using the equation $\chi(Y) = d \, \chi(O)$. Hence, we get
\begin{equation}\label{orbEulerCh}
\chi(O) = \frac{1}{d} \, \chi(Y) = \chi(X_O)  - \deg(\brdiv) = 2-2g - n_p - \sum_{i=1}^{n_e} \left(1-\frac{1}{m_i}\right). 
\end{equation}
In the following subsection, we will derive this relation in an equivalent way using the notion of \emph{orbifold canonical divisor}.

Let us end this subsection by making the theorem~\ref{thm:universalcovering} a little sharper. It immediately follows from equation $\chi(Y) = d \, \chi(O)$ that the Euler characteristic of a developable Riemann orbisurface $O$ should have the same sign as the Euler characteristic of its universal covering. Hence, we have the following corollary:
\begin{corollary}
	Let $O$ be a closed (or possibly punctured) orbifold Riemann surface, which is not a teardrop or a spindle. Then, $O$ admits $\UHP$, $\cmpx$, or $\hat{\cmpx}$ as its universal covering if and only if $\chi(O)<0$, $\chi(O)=0$, or $\chi(O)>0$ respectively.
\end{corollary}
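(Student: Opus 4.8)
The plan is to deduce this corollary directly from the orbifold Riemann--Hurwitz formula \eqref{orbEulerCh} together with Theorem~\ref{thm:universalcovering}. The key observation is that a developable Riemann orbisurface $O$ has a finite orbifold Galois covering $\varpi_{\text{orb}} : Y \xrightarrow{d:1} O$ by an honest Riemann surface $Y$, so that $\chi(Y) = d \cdot \chi(O)$ with $d = \#\gal(\varpi_{\text{orb}})$ a positive integer. Since the sign of $\chi(O)$ therefore agrees with the sign of $\chi(Y)$, the problem reduces to the classical statement for smooth Riemann surfaces: a closed Riemann surface $Y$ has universal cover $\hat{\cmpx}$, $\cmpx$, or $\UHP$ according to whether $\chi(Y)$ is positive, zero, or negative (equivalently, genus $0$, $1$, or $\geq 2$). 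This is the content of the usual uniformization theorem, which I may invoke freely.

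First I would fix, by Theorem~\ref{thm:universalcovering}, the universal covering $\mathbb{X} \in \{\hat{\cmpx},\cmpx,\UHP\}$ of $O$, which exists and is a finite Galois branched covering precisely because $O$ is excluded from being a teardrop or spindle. Writing $\pi_1^{\text{orb}}(O) = \gal(\mathbb{X}/O)$ as in Definition~\ref{def:orbifoldfundgroup}, Remark~\ref{rmk:finitecovering} guarantees a torsion-free subgroup $\Gamma_0 \subset \pi_1^{\text{orb}}(O)$ of finite index $d$, which is the fundamental group of the underlying Riemann surface. The quotient $Y := \mathbb{X}/\Gamma_0$ is then a smooth Riemann surface sitting in an intermediate orbifold covering $Y \to O$, and $\mathbb{X}$ is simultaneously the universal cover of both $Y$ and $O$. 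Since $Y$ and $O$ share the same universal cover $\mathbb{X}$, and since $\chi(Y) = d\,\chi(O)$ with $d > 0$, I would conclude that $\chi(O)$ and $\chi(Y)$ have the same sign. Applying the classical trichotomy for $Y$ then yields: $\mathbb{X} = \hat{\cmpx} \iff \chi(Y) > 0 \iff \chi(O) > 0$, and likewise $\mathbb{X} = \cmpx \iff \chi(O) = 0$ and $\mathbb{X} = \UHP \iff \chi(O) < 0$.

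The main obstacle I anticipate is not analytic but rather bookkeeping: one must be careful that the finite covering $Y \to O$ provided by Remark~\ref{rmk:finitecovering} is ramified, so $Y$ genuinely carries the \emph{smooth} structure (trivial branching function) while still sharing the universal cover $\mathbb{X}$ with $O$. The point is that passing from $\pi_1^{\text{orb}}(O)$ to its torsion-free subgroup $\Gamma_0$ kills exactly the conjugacy classes of torsion corresponding to the conical points, so $Y = \mathbb{X}/\Gamma_0$ is nonsingular and its universal cover is unchanged. The relation $\chi(Y) = d\,\chi(O)$ must be read off from the orbifold Riemann--Hurwitz formula \eqref{orbEulerCh} applied to $Y \to O$; here all branching indices of the \emph{orbifold} covering are accounted for so that the smooth $\chi(Y)$ and the rational $\chi(O)$ are related by the positive integer $d$. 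Once this is set up correctly, the sign comparison is immediate and the trichotomy follows without further calculation.

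For the punctured case, I would note that cusps contribute $n_p$ to $\deg(\brdiv)$ in \eqref{orbEulerCh} and can be treated by viewing them as the limit $m_i \to \infty$ of conical points; the same covering argument applies with $Y$ now a punctured Riemann surface, and the classical uniformization of finite-type surfaces gives the identical trichotomy. Thus the corollary holds uniformly for closed and punctured developable orbisurfaces, with the teardrop and spindle exclusions being exactly the cases where no such smooth $Y$ (equivalently, no torsion-free finite-index subgroup) exists, i.e.\ the non-developable orbisurfaces of Theorem~\ref{thm:universalcovering}.
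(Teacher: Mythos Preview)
Your proposal is correct and follows essentially the same route as the paper. The paper's argument is the single sentence preceding the corollary: from $\chi(Y)=d\,\chi(O)$ with $d>0$ the sign of $\chi(O)$ agrees with that of any finite manifold cover $Y$ sharing the universal cover $\mathbb{X}$, and the classical trichotomy for $Y$ then pins down $\mathbb{X}$. Your write-up is simply a more detailed unpacking of this, invoking Remark~\ref{rmk:finitecovering} to produce the torsion-free finite-index subgroup and hence the smooth intermediate cover $Y$; this is exactly the mechanism the paper has in mind but leaves implicit.
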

\noindent The orbifold Riemann surfaces with $\chi(O)<0$ are called \emph{hyperbolic} and are the main focus of our study in the main body of this paper. Notice that all hyperbolic Riemann surfaces are, by definition, developable.
\subsection{Orbisheaves, Orbibundles, and Orbidivisors}
The notions of bundle theory, and more generally, sheaf theory, are fundamental to doing geometry on any object. Fortunately, these notions and many other usual differential geometric concepts can be generalized to the orbifold case with the help of orbifold maps (see, e.g., \cite[\textsection4.2]{Boyer_2008}).
\subsubsection*{V-bundles}
In this section, we define holomorphic vector V-bundles (or orbibundles) as a reasonable generalization of holomorphic vector bundles over complex manifolds. Remember that we defined a rank $r$ holomorphic vector bundle $E$ over an analytic space $X$ as an analytic map $\pi: E \to X$ such that $\pi$ is locally a projection $V \times \cmpx^r \to V$. Similarly, a holomorphic vector V-bundle of rank $r$ over an orbifold $O=(X,\mathcal{U})$ should be thought of as a pair $\big(\mathcal{E}=(E,\mathcal{V}) , \, \pi_{\text{orb}}: \mathcal{E} \to O \big)$ where $\mathcal{E}$ is a complex orbifold and $\pi_{\text{orb}}$ is an analytic orbifold map. Thus, starting from an analytic map $\pi: E \to X$ between the underlying analytic spaces, our task reduces to the \emph{construction of the appropriate local lifts of  $\pi$} (as in definition~\ref{def:orbmap}):
\begin{definition}[Holomorphic vector V-bundle]\label{def:Vbundle}
	Let $O = (X, \mathcal{U})$ be a complex orbifold. A \emph{holomorphic vector V-bundle} (or a \emph{holomorphic vector orbibundle}) of rank $r$ over $O$ is a collection of holomorphic vector bundles $\tilde{\pi}_a: \tilde{E}_a \to \tilde{U}_a$ with fiber $\cmpx^r$ for each orbifold chart $(U_a, \tilde{U}_a, \Gamma_a, \ff_a)$ of $O$, together with a collection of group homomorphisms $\bar{\pi}_a: \Gamma_a \to \bar{\pi}_a(\Gamma_a)$ defining an action of $\Gamma_a$ on $\tilde{E}_a$ by (ordinary) holomorphic bundle maps, such that:
	\begin{enumerate}[(i)]
		\item Each $\tilde{\pi}_a$ is $\Gamma_a$-equivariant, so that the following diagram is commutative for any $\gamma \in \Gamma_a$:
		\begin{equation*}
		\begin{tikzcd}
		\tilde{E}_a \ar[d,"\tilde{\pi}_a"] \ar[r, "\bar{\pi}_a(\gamma)"] &  \tilde{E}_a  \ar[d,"\tilde{\pi}_a"] \\
		\tilde{U}_a \ar[r,"\gamma"] & \tilde{U}_a 
		\end{tikzcd}
		\end{equation*}
		\item For any holomorphic embedding $\eta_{ab}: \tilde{U}_a \hookrightarrow \tilde{U}_b$ of charts on $O$, there exists a holomorphic bundle isomorphism $\widehat{\eta}_{ab}: \tilde{E}_a \to \tilde{E}_b\big|_{\eta_{ab}(\tilde{U}_a )} := \tilde{\pi}_a^{-1}\big(\eta_{ab}(\tilde{U}_a )\big)$, such that $\widehat{\eta}_{ab}$ is $\bar{\pi}_a$-equivariant.
		\item For two embeddings $\eta_{ab}: \tilde{U}_a \hookrightarrow \tilde{U}_b$ and $\eta_{bc}: \tilde{U}_b \hookrightarrow \tilde{U}_c$, we have $\widehat{\eta_{ab} \circ \eta_{bc}} = \widehat{\eta}_{ab} \circ \widehat{\eta}_{bc}$.
	\end{enumerate}
\end{definition}
\begin{remark}
	The total (underlying) analytic space $E$ of an orbibundle is obtained from the local bundles $\tilde{E}_a$ in the following way: Choosing small enough orbifold charts on $O$, there always exists a local trivialization $\tilde{E}_a \cong \tilde{U}_a \times \cmpx^r$ such that $\tilde{\pi}_a: \tilde{U}_a \times \cmpx^r \to \tilde{U}_a$ is a holomorphic projection on the first factor and the action of $\bar{\pi}_a(\Gamma_a)$ on $\tilde{U}_a \times \cmpx^r$ is diagonalized --- i.e. for any pair $(\tilde{x}, v) \in \tilde{U}_a \times \cmpx^r$ and any $\gamma \in \Gamma_i$, we have $\bar{\pi}_a(\gamma) \cdot (\tilde{x}, v) := (\gamma \cdot \tilde{x}, \widehat{\Upsilon}_a(\gamma) \cdot v)$ where $\widehat{\Upsilon}_a: \Gamma_a \to \operatorname{GL}(r, \cmpx)$ is a monomorphism. Then, we have a branched Galois covering $\ff^{b}_a: \tilde{E}_a \to E_a := \tilde{E}_a/\bar{\pi}_a(\Gamma_a)$. As a result, since $\tilde{\pi}_a$ is $\Gamma_a$-equivariant, we get a unique \emph{analytic projection map} $\pi_a: E_a \to U_a$ such that the following diagram commutes:
	\begin{equation*}
	\begin{tikzcd}
	\tilde{E}_a \ar[d,"\tilde{\pi}_a"] \ar[r, "\ff^{b}_a"] &  E_a  \ar[d,"\pi_a"] \\
	\tilde{U}_a \ar[r,"\ff_a"] & U_a 
	\end{tikzcd}
	\end{equation*}
	Now, we can glue the analytic varieties $E_a$ in the following way, stemming from the gluing condition on $X$: Let $(U_a, \tilde{U}_a, \Gamma_a ,\ff_a)$ and $(U_b, \tilde{U}_b, \Gamma_b ,\ff_b)$ be any two orbifold charts in $\mathcal{U}$ with $U_a \cap U_b \neq \emptyset$ and let $x \in U_a \cap U_b$ be a point. Then, according to definition~\ref{def:orbifold}, there always exists another orbifold chart $(U_c \subset U_a \cap U_b, \tilde{U}_c, \Gamma_c ,\ff_c) \in \mathcal{U}$ containing $x$ such that embeddings $\eta_{ca}: \tilde{U}_c \hookrightarrow \tilde{U}_a$ and  $\eta_{cb}: \tilde{U}_c \hookrightarrow \tilde{U}_b$ induce bundle biholomorphisms $\widehat{\eta}_{ca}: \tilde{E}_c \to \tilde{E}_a\big|_{\eta_{ca}(\tilde{U}_c )}$ and $\widehat{\eta}_{cb}: \tilde{E}_c \to \tilde{E}_b\big|_{\eta_{cb}(\tilde{U}_b )}$. Gluing $E_a$ and $E_b$ according to this data results in a complex orbifold $\mathcal{E}$ with an underlying  analytic space $E$ and an analytic  
	orbifold map $\pi_{\text{orb}}: \mathcal{E} \to O$, which is determined by the analytic map $\pi: E \to X$ (obtained by gluing analytic projection maps $\pi_a = \pi|_{E_a}$) and local lifts $\tilde{\pi}_a: \tilde{E}_a \to \tilde{U_a}$.
\end{remark}
The next concept needed to defined is holomorphic sections of holomorphic vector V-bundles. Defining this is easy globally : a section of $\mathcal{E}$ in orbibudle $\pi_{\text{orb}}: \mathcal{E} \to O$ is a holomorphic orbifold map $s: O \to \mathcal{E}$ satisfying  $\pi_{\text{orb}} \circ s = \operatorname{id}_{O}$. Locally, this concept can be defined as follows:
\begin{definition}[Sections of V-bundles]\label{def:secVbundle}
	If we consider the holomorphic vector V-bundle $\pi_{\text{orb}}: \mathcal{E} \to O$, a \emph{holomorphic section} of $\mathcal{E}$ can be defined in either of the following two equivalent ways:
	\begin{enumerate}
		\item $s: O \to \mathcal{E}$ is a holomorphic orbifold map satisfying $\pi_{\text{orb}} \circ s = \operatorname{id}_{O}$.
		\item A collection of \emph{$\Gamma_a$-equivariant} holomorphic sections $s_a: \tilde{U}_a \to \tilde{E}_a$ such that for any embedding $\eta_{ab}: \tilde{U}_a \hookrightarrow \tilde{U}_b$ the following diagram commutes:
		\begin{equation*}
		\begin{tikzcd}
		\tilde{E}_a \ar[r, "\widehat{\eta}_{ab}"] &  \tilde{E}_b\big|_{\eta_{ab}(\tilde{U}_a)} \\
		\tilde{U}_a \ar[u,"s_a"] \ar[r, "\eta_{ab}"] & \eta_{ab}(\tilde{U}_a) \ar[u,"s_a|_{\eta_{ab}(\tilde{U}_a)}"]
		\end{tikzcd}
		\end{equation*}
	\end{enumerate}
\end{definition}
	To glue the local sections $s_i$ to the global section $s: O \to \mathcal{E}$ one should demand the Equivariance of the local sections. We will call a local holomorphic section $s_a: \tilde{U}_a \to \tilde{E}_a$ \emph{$\Gamma_a$-invariant} (as opposed to \emph{$\Gamma_a$-equivariant}) if $\gamma \circ s_a = s_a$. Given the local holomorphic sections $s_a: \tilde{U}_a \to \tilde{E}_a$ of a holomorphic vector V-bundle $\mathcal{E}$, we can always construct $\Gamma_a$-invariant local sections $s_a^{\Gamma_a}$ by ``averaging over the group'' --- i.e. we define an invariant local section by
	\begin{equation}\label{invariantsection}
	s_a^{\Gamma_a} = \frac{1}{\#\Gamma_a} \sum_{\gamma \in \Gamma_a} s_a \circ \gamma.
	\end{equation}
	Notice that this determines a well-defined map from the underlying analytic space $X_O$, namely $s_{_{U_a}} := (\ff_a)_{\ast} \big( s_a^{\Gamma_a}\big): U_a \to \tilde{E}_a$. Gluing these invariant local sections over each orbifold chart, we obtain \emph{global invariant sections} and view them \emph{interchangeably} as invariant objects on $\tilde{U}_a$s or as objects on $U$s. However, note that smoothness in the orbifold sense is somewhat different from ordinary smoothness, and holomorphic invariant sections can have singular behavior (although usually in a controlled way) when viewed as objects on the open analytic subsets $U_a$. 
\begin{remark}\label{rmk:trivialVbunle}
	Consider the easiest (but still important) example of a \emph{trivial holomorphic line V-bundle}: This line V-bundle is given by trivial holomorphic line bundles $\tilde{E}_a \cong \tilde{U}_a \times \cmpx$ on each local uniformizing neighborhood $\tilde{U}_a$ together with a trivial action of $\Gamma_a$ on the second factor --- i.e. $\widehat{\Upsilon}_a(\Gamma_a) = 1\in \operatorname{GL}(1,\cmpx)$. Then, clearly $E_a \cong U_a \times \cmpx$ and the total space $\mathcal{E}$ is just $O \times \cmpx$. Holomorphic sections of this bundle clearly are in a one-to-one-correspondence with analytic orbifold maps from $O$ to $\cmpx$ endowed with the trivial orbifold structure. So, according to remark~\ref{rmk:orbifunctions}, they seem to be a good candidate for a \emph{structure orbisheaf} on $O$; however, in order to get coherent sheaves on the underlying space $X_O$, nonetheless, we have to deal with \emph{invariant sections} of the trivial holomorphic line bundles $\tilde{U}_a \times \cmpx \to \tilde{U}_a$ or sheaves $\tilde{\mathcal{F}}_a$ on the local uniformizing neighborhoods $\tilde{U}_a$ (see remark~\ref{rmk:invarinatsheaf}).
\end{remark}
All of the standard notions of the tangent bundle, cotangent bundle, and the different associated tensor bundles have V-bundle analogues: On every local uniformizing neighborhood $\tilde{U}_a$, take the holomorphic tangent bundle $T^{1,0}\tilde{U} \cong \tilde{U} \times \cmpx^\mathfrak{n}$ and for any change of charts $\eta_{ab}: \tilde{U}_a \hookrightarrow \tilde{U}_b$ on $O$, construct the corresponding bundle biholomorphism $\widehat{\eta}_{ab}: T^{1,0}\tilde{U}_a \to T^{1,0}\tilde{U}_b\big|_{\eta_{ab}(\tilde{U}_a)}$ by defining it to be given by $\eta_{ab}$ on the first factor and the Jacobian $\operatorname{Jac}[\eta_{ab}]$ on the second one. If we denote by $(\partial_1^{(a)}, \dots, \partial_\mathfrak{n}^{(a)})$ the local coordinate basis on each $T^{1,0}\tilde{U}_a$, the Jacobian matrix $\operatorname{Jac}[\eta_{ab}] \in \operatorname{GL}(\mathfrak{n},\cmpx)$ is defined as
\begin{equation*}
\Big(\operatorname{Jac}[\eta_{ab}]\Big)_{k,l} = \frac{\partial_k^{(b)} \circ \eta_{ab}}{\partial_l^{(a)}}.
\end{equation*} 
\begin{remark}
	Locally, around any point $x \in X_O$, the fiber $\big(\pi_{\text{orb}}^{1,0}\big)^{-1}(x) \subset T^{1,0}O$ is not biholomorphic to $\cmpx^\mathfrak{n}$, but is biholomorphic to a small neighbourhood of $x \in X_O$ --- i.e. in general $\big(\pi_{\text{orb}}^{1,0}\big)^{-1}(x)  \cong \cmpx^\mathfrak{n}/\Gamma_{x}$. This is because, in a local chart, the actions of $\gamma \in \Gamma_a$ on $\tilde{U}_a$ and of $\operatorname{Jac}(\gamma)$ on $T_{\ff_a^{-1}(x)}^{1,0}\tilde{U}_a$ are essentially the same. On the other hand, the underlying analytic space of $T^{1,0}O$ is not necessarily the ordinary tangent space $T^{1,0}X_O$.
\end{remark}
The above construction obviously generalizes to the anti-holomorphic tangent V-bundle $T^{0,1}O$, holomorphic and anti-holomorphic cotangent V-bundles, symmetric and antisymmetric tensor V-bundles of type $(k,l)$, etc. Particularly, if $O$ has dimension $\mathfrak{n}$, we denote the highest exterior power of the holomorphic cotangent V-bundle, $\bigwedge^{\mathfrak{n}} T^{\ast}_{1,0}O$, by $K_O$ and call it the \emph{orbifold canonical bundle}. Additionally, one can easily generalize notions such as Riemannian and Hermitian metrics, orbifold $(p,q)$-differential forms, Hermitian and Chern connections, Chern forms, etc., to the orbifold setting. We will come back to these notions in the next subsections.
\begin{remark}
	Notice that all of the above definitions simplify for the case of developable orbifolds $O \cong [M/\Gamma]$. In this case, we can always view objects defined on $O$ --- such as tensors, differential forms, connections, etc. --- as globally defined ordinary objects defined on $M$ that are invariant under the action of $\Gamma$. We will come back to this point in the later subsections when we study differential forms and metrics on hyperbolic Riemann orbisurfaces in greater detail.
\end{remark}
Now, consider a Weil divisor $\mathcal{D}$ on the underlying analytic space $X_O$. We can lift its restriction $\mathcal{D} \cap U_a$ to a divisor $\tilde{D}_{\tilde{U}_a}$ on the local uniformizing neighborhood $\tilde{U}_a$ by $\tilde{D}_{\tilde{U}_a} := \ff_a^{-1}(\mathcal{D} \cap U_a)$. The collection of all such divisors $\tilde{D}_{\tilde{U}_a}$ on each $\tilde{U}_a$ defines an \emph{orbidivisor} (or a \emph{Baily divisor}) on the orbifold $O$ (see e.g. \cite[Def.~4.4.11]{Boyer_2008} for more details). In fact, we have the following proposition:
\begin{proposition}
	The branch divisor $\brdiv$ or more generally any $\mathbb{Q}$-divisor on $X_O$ of the form $\sum_i \frac{b_i}{m_i} \mathcal{D}_i$, where $m_i$ is a ramification index and $b_i \in \mathbb{Z}$, lifts to an orbidivisor on $O = (X_O, \mathcal{U})$.
\end{proposition}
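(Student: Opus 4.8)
The plan is to prove this proposition by unwinding the definition of an orbidivisor (Baily divisor) and verifying that the local lifts of a $\mathbb{Q}$-divisor of the prescribed form are honest (integral) divisors on each local uniformizing neighborhood that are compatible under the change-of-chart embeddings. The key observation, which will do all the work, is that the folding maps $\ff_a\colon \tilde{U}_a \to U_a$ are branched Galois coverings of the form $z \mapsto z^{m_i}$ along each prime divisor $\mathcal{D}_i$ whose branching index is $m_i$; consequently, the pullback $\ff_a^{-1}(\mathcal{D}_i \cap U_a)$ picks up a factor of $m_i$ in its multiplicity. It is precisely this factor that clears the denominator $m_i$ appearing in the coefficient $\tfrac{b_i}{m_i}$.

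First I would fix a prime divisor $\mathcal{D}_i$ with ramification index $m_i$ and a chart $(U_a,\tilde U_a,\Gamma_a,\ff_a)$ meeting it. Using Corollary~\ref{corll:1dAnalytic} (so that every local model is smooth) together with the local normal form of a branched Galois covering established at the end of Section~\ref{covering} --- namely that in suitable coordinates $\ff_a$ looks like $(z_1,\dots,z_{\mathfrak n}) \mapsto (z_1^{m_i}, z_2,\dots,z_{\mathfrak n})$ along the component $\{z_1 = 0\}$ lying over $\mathcal{D}_i$ --- I would compute the order of vanishing of the pullback. If $\mathcal{D}_i$ is locally cut out by $w_1$, then $\ff_a^{\ast} w_1 = z_1^{m_i}$, so the pullback divisor $\ff_a^{-1}(\mathcal{D}_i\cap U_a)$ has multiplicity $m_i$ along $\{z_1=0\}$. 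Therefore the restriction of the $\mathbb{Q}$-divisor $\tfrac{b_i}{m_i}\mathcal{D}_i$ pulls back to $\tfrac{b_i}{m_i}\cdot m_i\,\{z_1=0\} = b_i\,\{z_1=0\}$, which is an integral (Weil) divisor on the smooth neighborhood $\tilde U_a$ since $b_i \in \mathbb{Z}$. Summing over $i$ and over all prime divisors meeting $U_a$ yields a genuine divisor $\tilde D_{\tilde U_a}$ on $\tilde U_a$. Applying this to $\brdiv = \sum_i(1-\tfrac{1}{m_i})\mathcal{D}_i$ is the special case where one writes $1-\tfrac1{m_i} = \tfrac{m_i-1}{m_i}$, i.e. $b_i = m_i-1 \in \mathbb{Z}$, so the branch divisor is covered by the general statement.

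Next I would verify the two compatibility conditions required by the definition of an orbidivisor. The invariance under the local uniformizing group $\Gamma_a$ is automatic, since $\ff_a$ is $\Gamma_a$-invariant ($\ff_a\circ\gamma = \ff_a$) and hence $\ff_a^{-1}(\mathcal{D}\cap U_a)$ is a $\Gamma_a$-invariant divisor on $\tilde U_a$. For the gluing, given an embedding $\eta_{ab}\colon \tilde U_a \hookrightarrow \tilde U_b$ satisfying $\ff_a = \ff_b\circ\eta_{ab}$, I would check that $\eta_{ab}^{\ast}(\tilde D_{\tilde U_b}) = \tilde D_{\tilde U_a}$ on the overlap; this follows formally from $\ff_a^{-1} = \eta_{ab}^{-1}\circ \ff_b^{-1}$ applied to $\mathcal{D}$, using that $\eta_{ab}$ is a biholomorphism onto its image and hence preserves orders of vanishing. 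Together, invariance and compatibility are exactly the data of a Baily divisor in the sense of the cited definition.

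The main obstacle --- really the only genuinely delicate point --- is bookkeeping the multiplicities correctly when a prime divisor $\mathcal{D}_i$ is itself part of the branch locus versus when it is not, and ensuring the local normal form is applied only where it is valid (away from the codimension $\ge 2$ loci where charts or the covering degenerate). I would handle this by restricting attention, as in the construction of the ramification and branch divisors in Section~\ref{covering}, to the open dense loci $X''\subset X_O$ and $\tilde U_a'' \subset \tilde U_a$ whose complements have codimension at least two, computing the pullback multiplicities there, and then invoking the Remmert--Stein extension theorem (already cited in the excerpt) to extend $\tilde D_{\tilde U_a}$ uniquely across the thin exceptional sets. Since $X_O$ is a normal analytic space and, in the orbifold-curve case, smooth, this extension is unambiguous, and the resulting collection $\{\tilde D_{\tilde U_a}\}$ glues to a well-defined orbidivisor on $O$, completing the proof.
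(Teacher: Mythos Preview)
The paper does not actually prove this proposition: it is stated immediately after the definition of an orbidivisor (Baily divisor), with a reference to \cite[Def.~4.4.11]{Boyer_2008} for details, and the text moves on directly to the next statement. Your proposal therefore supplies a proof where the paper gives none.

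Your argument is correct and is the standard one. The essential computation is exactly what you isolate: in a chart where the folding map has the local normal form $(z_1,\dots,z_{\mathfrak n})\mapsto (z_1^{m_i},z_2,\dots,z_{\mathfrak n})$ along $\mathcal D_i$, the pullback of the local equation $w_1$ is $z_1^{m_i}$, so $\ff_a^{\ast}\bigl(\tfrac{b_i}{m_i}\mathcal D_i\bigr)$ has integral multiplicity $b_i$ along $\{z_1=0\}$. The $\Gamma_a$-invariance (from $\ff_a\circ\gamma=\ff_a$) and the compatibility under embeddings (from $\ff_a=\ff_b\circ\eta_{ab}$) are precisely the two conditions in the definition of a Baily divisor, and your use of Remmert--Stein to extend across codimension~$\ge 2$ loci mirrors how the paper handles the analogous issue when constructing ramification and branch divisors in Section~\ref{covering}. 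The special case of $\brdiv$ via $b_i=m_i-1$ is also correctly identified.
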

An divisor obtained as the lift of a branch divisor is called a \emph{ramification divisor}. The following is straightforward
\begin{proposition}
	To each Baily divisor $\mathcal{D}$ on the orbifold $O$, there corresponds a complex line V-bundle $\linebundle(\mathcal{D})$.
\end{proposition}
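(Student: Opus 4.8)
The plan is to produce, directly from the given orbidivisor, the local data demanded by Definition~\ref{def:Vbundle} (in rank $r=1$), and to verify the equivariance and gluing axioms by invoking the pullback functoriality of the manifold-level correspondence $\mathcal{D}\mapsto\linebundle(\mathcal{D})$ recalled in the discussion of Meromorphic Functions and Divisors. Fixing an orbifold atlas $\mathcal{U}=\{(U_a,\tilde{U}_a,\Gamma_a,\ff_a)\}_{a\in A}$ compatible with the Baily divisor, I am handed on each local uniformizing neighborhood the honest divisor $\tilde{D}_{\tilde{U}_a}=\ff_a^{-1}(\mathcal{D}\cap U_a)$. Since each $\tilde{U}_a$ is biholomorphic to the unit ball and hence a complex manifold, the divisor-to-line-bundle correspondence produces a holomorphic line bundle $\tilde{E}_a:=\linebundle(\tilde{D}_{\tilde{U}_a})$ over $\tilde{U}_a$, described concretely by nowhere-vanishing transition functions $u^a_{ij}=\phi^a_i/\phi^a_j$ built from local defining functions $\phi^a_i$ of $\tilde{D}_{\tilde{U}_a}$. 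This supplies the required holomorphic bundle over each chart.

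Next I would construct the local group actions $\bar{\pi}_a$ and verify axiom~(i). The essential point is that the folding map is $\Gamma_a$-invariant, $\ff_a\circ\gamma=\ff_a$, whence $\gamma^{-1}(\tilde{D}_{\tilde{U}_a})=(\ff_a\circ\gamma)^{-1}(\mathcal{D}\cap U_a)=\tilde{D}_{\tilde{U}_a}$ for every $\gamma\in\Gamma_a$; that is, $\tilde{D}_{\tilde{U}_a}$ is $\Gamma_a$-invariant. For any biholomorphism of $\tilde{U}_a$ preserving a divisor, functoriality of $\linebundle(\cdot)$ supplies a holomorphic bundle automorphism covering it, and applying this to each $\gamma$ yields the homomorphism $\bar{\pi}_a\colon\Gamma_a\to\operatorname{Aut}(\tilde{E}_a)$; the homomorphism property and the $\Gamma_a$-equivariance of $\tilde{\pi}_a$ both follow from functoriality. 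In transition-function terms this automorphism is assembled explicitly from the comparison functions between $\phi^a_i$ and $\phi^a_j\circ\gamma$.

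I would then build the gluing isomorphisms $\widehat{\eta}_{ab}$ and check axioms~(ii) and~(iii). Whenever $U_a\subset U_b$, the relation $\ff_a=\ff_b\circ\eta_{ab}$ gives $\tilde{D}_{\tilde{U}_a}=\eta_{ab}^{-1}\bigl(\tilde{D}_{\tilde{U}_b}\bigr)$, so functoriality supplies a canonical bundle isomorphism $\widehat{\eta}_{ab}\colon\tilde{E}_a\to\tilde{E}_b|_{\eta_{ab}(\tilde{U}_a)}$ covering $\eta_{ab}$. The cocycle relation~(iii) follows by applying functoriality to the composition law for embeddings, and the $\bar{\pi}_a$-equivariance demanded in~(ii) follows by applying it to the intertwining relation $\eta_{ab}\circ\gamma=\Upsilon_{ab}(\gamma)\circ\eta_{ab}$ of Remark~\ref{rmk:embeddings}. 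The total space $\mathcal{E}$ and the orbifold projection $\pi_{\text{orb}}$ are then obtained from the data $(\tilde{E}_a,\bar{\pi}_a,\widehat{\eta}_{ab})$ by the gluing described in the remark following Definition~\ref{def:Vbundle}.

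The main obstacle I anticipate is well-definedness against the nonuniqueness of the embeddings. By Remark~\ref{rmk:embeddings}, $\eta_{ab}$ is determined only up to post-composition by some $\gamma\in\Gamma_b$, so I must check that the induced isomorphism transforms consistently, namely $\widehat{(\gamma\circ\eta_{ab})}=\bar{\pi}_b(\gamma)\circ\widehat{\eta}_{ab}$, in order that the glued object be independent of these choices once the $\Gamma_b$-action is built in. This is precisely the place where the equivariant structure $\bar{\pi}_b$ of the second step must mesh with the functorially induced isomorphisms of the third step; it is again a formal consequence of functoriality, but it is the step that makes the assignment $\mathcal{D}\mapsto\linebundle(\mathcal{D})$ canonical, and hence carries the real content. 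Once it is in place, the collection $(\tilde{E}_a,\bar{\pi}_a,\widehat{\eta}_{ab})$ is exactly the data of a holomorphic line V-bundle, which I would denote $\linebundle(\mathcal{D})$.
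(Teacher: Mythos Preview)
The paper does not give a proof of this proposition at all; it simply prefaces the statement with ``The following is straightforward'' and moves on. Your proposal is a correct and careful execution of exactly the construction the paper has in mind: lift the divisor to each $\tilde{U}_a$, apply the manifold-level correspondence $\mathcal{D}\mapsto\linebundle(\mathcal{D})$, and use functoriality together with the invariance/intertwining relations $\ff_a\circ\gamma=\ff_a$ and $\eta_{ab}\circ\gamma=\Upsilon_{ab}(\gamma)\circ\eta_{ab}$ to manufacture the $\Gamma_a$-actions and gluing isomorphisms required by Definition~\ref{def:Vbundle}. Your discussion of the well-definedness under the $\Gamma_b$-ambiguity of $\eta_{ab}$ is the only nontrivial point, and you have identified it correctly.
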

The most important Baily divisor on a complex orbifold $O$ is the \emph{orbifold canonical divisor} $\mathcal{K}_O$ which is any Baily divisor associated to the canonical orbibundle $K_O$. In the presence of a branch divisor $\brdiv$, an orbifold canonical divisor
$\mathcal{K}_O$ is not the same (meaning not linearly equivalent) as the canonical divisor $\mathcal{K}_X$ of the underlying analytic space $X_O$. In fact we have (for proof see \cite[Prop.~4.4.15]{Boyer_2008})
\begin{proposition}
	The orbifold canonical divisor $\mathcal{K}_O$ and canonical divisor $\mathcal{K}_X$  of its underlying analytic space are related by
	\begin{equation*}
	\mathcal{K}_O \cap U_i \equiv \ff_i^{\ast}(\mathcal{K}_X \cap U_i) + \sum_j (1-\tfrac{1}{m_j}) \ff_i^{\ast}(\mathcal{D}_j \cap U_i).
	\end{equation*}
	In terms of the orbifold rational Chern class, the above equation implies
	\begin{equation*}
	c_1(O) = - c_1(\mathcal{K}_O) = \underbrace{- c_1(\mathcal{K}_X)}_{c_1(X)} - \sum_j (1-\tfrac{1}{m_j}) c_1\big(\linebundle(\mathcal{D}_j)\big) \in H^2(X,\mathbb{Q}).
	\end{equation*}
\end{proposition}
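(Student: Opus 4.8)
The plan is to reduce the statement to the classical Riemann--Hurwitz formula for canonical divisors applied to each folding map $\ff_i \colon \tilde{U}_i \to U_i$, and then to identify the resulting ramification divisor with the lift of the branch divisor $\brdiv$. Recall that the orbifold canonical divisor $\mathcal{K}_O$ is, by definition, the Baily divisor associated with the canonical orbibundle $K_O = \bigwedge^{\mathfrak{n}} T^{\ast}_{1,0} O$; on the chart $(U_i, \tilde{U}_i, \Gamma_i, \ff_i)$ its local piece is exactly the ordinary canonical divisor $\mathcal{K}_{\tilde{U}_i}$ of the smooth local model $\tilde{U}_i$. Thus the symbol $\mathcal{K}_O \cap U_i$ in the statement is to be read as the divisor $\mathcal{K}_{\tilde{U}_i}$ on $\tilde{U}_i$, and the whole assertion becomes a local identity on each uniformizing neighborhood.

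First I would establish the local Riemann--Hurwitz relation
\begin{equation*}
\mathcal{K}_{\tilde{U}_i} \equiv \ff_i^{\ast}(\mathcal{K}_{U_i}) + \mathscr{R}_{\ff_i},
\end{equation*}
where $\mathscr{R}_{\ff_i}$ is the ramification divisor of the folding map. This follows from a direct coordinate computation: since $\ff_i$ is branched along the prime divisors $\mathcal{D}_j \cap U_i$, near a generic point of a ramification component it takes the normal form $z \mapsto w = z^{m_j}$, so a local generator $dw$ of $\mathcal{K}_{U_i}$ pulls back to $m_j z^{m_j-1}\,dz$, which vanishes to order $m_j-1$ along the ramification locus $\{z = 0\}$. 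Hence $\ff_i^{\ast}(\mathcal{K}_{U_i})$ and $\mathcal{K}_{\tilde{U}_i}$ differ precisely by $\mathscr{R}_{\ff_i} = \sum_j (m_j - 1)\mathcal{R}_{ij}$, with $\mathcal{R}_{ij}$ the reduced preimage of $\mathcal{D}_j \cap U_i$. The second step is to rewrite this ramification divisor in terms of the branch divisor: using $\ff_i^{\ast}(\mathcal{D}_j \cap U_i) = m_j\,\mathcal{R}_{ij}$ (each branch component pulls back with multiplicity equal to its branching index), one obtains
\begin{equation*}
\mathscr{R}_{\ff_i} = \sum_j (m_j - 1)\mathcal{R}_{ij} = \sum_j \Big(1 - \tfrac{1}{m_j}\Big) m_j\,\mathcal{R}_{ij} = \sum_j \Big(1 - \tfrac{1}{m_j}\Big) \ff_i^{\ast}(\mathcal{D}_j \cap U_i),
\end{equation*}
which upon substitution yields the first displayed formula of the proposition. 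This is just the relation $\mathscr{R}_{\varpi} = \varpi^{\ast}(\mathscr{B}_{\varpi})$ established earlier, specialized to $\varpi = \ff_i$ and $\mathscr{B}_{\ff_i} = \brdiv \cap U_i$.

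For the Chern-class statement I would pass to the line V-bundles determined by these divisors and take first (rational) Chern classes, using the divisor-to-V-bundle correspondence $\mathcal{D} \mapsto \linebundle(\mathcal{D})$ recorded earlier together with the identity $c_1(\mathcal{K}_X) = -c_1(X)$. The local identities glue to a global relation in $H^2(X,\mathbb{Q})$ because the embeddings $\eta_{ab}$ are holomorphic and their induced pullbacks are compatible on overlaps, so the classes $c_1(\linebundle(\mathcal{D}_j))$ are globally well defined; the appearance of the fractional coefficients $1-\tfrac{1}{m_j}$ is exactly what forces the statement into rational rather than integral cohomology, which is legitimate because $X_O$ is $\mathbb{Q}$-factorial. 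As a consistency check, in the one-dimensional case taking degrees reproduces $\deg c_1(O) = (2-2g) - \sum_j \big(1-\tfrac{1}{m_j}\big) = \chi(O)$, matching the orbifold Euler characteristic \eqref{orbEulerCh}. The main obstacle I anticipate is not the local computation, which is elementary, but the careful bookkeeping required to make $\ff_i^{\ast}$ well defined on Weil divisor classes and to verify that the local formulas are genuinely compatible with the gluing data of the orbifold atlas --- in particular that the reduced preimages $\mathcal{R}_{ij}$ patch to a single orbidivisor and that the multiplicity-$m_j$ pullback behaviour is uniform along each branch component. Since $X_O$ carries only quotient singularities and is therefore normal and $\mathbb{Q}$-factorial, these subtleties are controlled, but they are precisely where the orbifold structure (as opposed to the bare analytic quotient $M/\Gamma$) genuinely enters, through the fact that an orbifold chart remembers the pseudoreflections in $\Gamma_i$ and their fixed loci.
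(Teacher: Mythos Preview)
Your approach is correct. The paper itself does not give a proof of this proposition but simply refers the reader to \cite[Prop.~4.4.15]{Boyer_2008}; your local Riemann--Hurwitz computation on each folding map, followed by the identification $\mathscr{R}_{\ff_i} = \ff_i^{\ast}(\brdiv \cap U_i)$ and passage to rational Chern classes, is exactly the standard argument one finds in that reference.
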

Let $O = (X, \brdiv)$ be a good orbifold Riemann surface. An orbifold canonical divisor is given by
\begin{equation}
\mathcal{K}_O  = \ff^{\ast} \mathcal{K}_X + \brdiv,
\end{equation}
where $\mathcal{K}_X$ is an ordinary canonical divisor on the underlying Riemann surface $X$. Thus, if $g$ denotes the genus of $X$, the orbifold Chern number (obtained by integrating the first Chern character over $X$) is
\begin{equation}
c_1(O) = - \deg(\mathcal{K}_O)  = -\deg(\mathcal{K}_X) - \deg(\brdiv) = 2-2g-n_p - \sum_{i=1}^{n_e} (1-\tfrac{1}{m_i}),
\end{equation}
which equals the orbifold Euler characteristic $\chi(O)$ defined before (this follows from the equivalence of the top Chern class with the Euler class).

\subsubsection*{Orbisheaves}
We first introduce the notion of an orbisheaf following \cite[Def.~4.2.1]{Boyer_2008}. Similar to V-bundles, \emph{Orbifold sheaves} or \emph{orbisheaves} consist of a sequence of sheaves defined on the disjoint union $\bigsqcup_{a} \tilde{U}_a$ of the local uniformizing neighborhoods that satisfy certain compatibility conditions with respect to the local uniformizing groups and embeddings:
\begin{definition}[Orbisheaf]
	Let $O=(X_O,\mathcal{U})$ be a complex orbifold. An \emph{orbisheaf} $\mathcal{F}_O$ on $O$ consists of a collection of sheaves $\{\tilde{\mathcal{F}}_a\}_{a \in A}$ defined over each local uniformizing neighborhood $\tilde{U}_a$ of $O$, such that for each embedding $\eta_{ab}: \tilde{U}_a \hookrightarrow \tilde{U}_b$ there exists an isomorphism of sheaves $\eta^{\ast}_{ab}: \tilde{\mathcal{F}}_a \to (\eta_{ab})^{\ast}\big(\tilde{\mathcal{F}}_b\big)$, which is functorial.
\end{definition}
Let $\mathcal{F}_O$ be an orbisheaf on $O$, and $(U_a, \tilde{U}_a, \Gamma_a, \ff_a)$ an orbifold chart. Then, one can define an action of $\Gamma_a$ on the sheaf $\tilde{\mathcal{F}}_a$, which says that $\tilde{\mathcal{F}}_a$ is a \emph{$\Gamma_a$-equivariant} sheaf on $\tilde{U}_a$. So, every orbisheaf $\mathcal{F}_O$ is equivariant under the local uniformizing groups $\Gamma_a$. We now have
\begin{definition}[Structure orbisheaf]\label{def:structureorbisheaf}
	The \emph{structure orbisheaf} $\mathscr{O}_O$ of an orbifold $O$ is the orbisheaf defined by the collection of structure sheaves $\mathscr{O}_{\tilde{U}_a}$ defined on each local uniformizing neighborhood $\tilde{U}_a$. The structure orbisheaf $\mathscr{O}_O$ is well-defined since each embedding $\eta_{ab}: \tilde{U}_a \hookrightarrow \tilde{U}_b$ induces an isomorphism $\mathscr{O}_{\tilde{U}_a} \approx (\eta_{ab})^{\ast}(\mathscr{O}_{\tilde{U}_b})$ by sending $f \in \mathscr{O}_{\tilde{U}_a, \tilde{x}}$ to $f \circ \eta_{ab}^{-1} \in (\eta_{ab})^{\ast}(\mathscr{O}_{\tilde{U}_b})$.
\end{definition}
	This definition evidently does not align with the holomorphic sections of the trivial line V-bundle nor it yields a sheaf on the underlying space $X_O$. Therefore, we need to utilize local $\Gamma_a$-invariant sections (in contrast to $\Gamma_a$-equivariant) of such sheaves and then assemble them across $X_O$ \cite[Lemma~4.2.4]{Boyer_2008}. Accordingly, $\mathcal{F}_X$ are defined as sheaves on $X_O$ which are invariant local sections of orbisheaves $\mathcal{F}_O$. In this regard, $H^{0}(\tilde{U}_a, \mathscr{O}_O)^{\Gamma_a} \simeq H^0(U_a, \mathscr{O}_X)$ holds for the structure sheaves. For a coherent orbisheaf $\mathcal{F}_O$ of $\mathscr{O}_O$-modules, the $\Gamma_a$-invariant sections are coherent sheafs of $\mathscr{O}_X$-modules. Interestingly, this helps one to construct the orbisheaf cohomology. But one should note that this cohomology only probes the topology of the underlying analytic space $X_O$. Hence, a more complicated notion of cohomology, the so called \emph{Chen--Ruan cohomology}, of an orbifold is needed to probe the full topological features of an orbifold. See e.g., \cite{Chen:2000cy} and \cite{adem2007orbifolds} for more details.

There are several important orbisheaves on complex orbifolds that we shall work with: First, there is the structure orbisheaf $\mathscr{O}_O$ defined in \ref{def:structureorbisheaf}, where each $\mathscr{O}_{\tilde{U}_a}$ is the sheaf of holomorphic functions on $\tilde{U}_a$. Similarly, there is the meromorphic orbisheaf $\mathscr{M}_{O}$ consisting of meromorphic functions on each local uniformizing neighborhood $\tilde{U}_a$. Finally, there is the \emph{canonical orbisheaf} of a complex orbifold: On a complex orbifold $O$ of complex dimension $\mathfrak{n}$, we denote by $\Omega^{k}_O$ the orbisheaf of \emph{holomorphic differential $k$-forms} on $O$. This is the orbisheaf constructed from the collection of ordinary canonical sheaves $\Omega^k_{\tilde{U}_a}$ on each orbifold uniformizing neighborhood $\tilde{U}_a$. $\Omega^{k}_O$  is a locally free orbisheaf of rank ${\mathfrak{n} \choose k}$. Then,
\begin{definition}[Canonical orbisheaf]
	The \emph{canonical orbisheaf} of a complex orbifold $O$ of complex dimension $\mathfrak{n}$ is the orbisheaf $\Omega^{\mathfrak{n}}_O$.
\end{definition}
\subsection{Orbifold Metrics}\label{sect:orbifoldmetric}
In this section, we delve into the examination of metrics on orbifolds. It is clear that, for each $\tilde{U}_i$ on $(X_O, \mathcal{U})$, this metric should be defined as a $\Gamma_i$-invariant metric:
\begin{definition}[
	Hermitian orbifold metrics] 
	A Hermitian metric, $\mathsf{h}$, on a complex orbifold $O=(X_O,\mathcal{U})$ can be characterized as a family of $\Gamma_a$-invariant (local) Hermitian metrics $\tilde{\mathsf{h}}_a^{\Gamma_a}$ defined on each neighborhood $\tilde{U}_a$ such that the change of charts are Hermitian isometries. A complex orbifold with a Hermitian metric is called a \emph{Hermitian orbifold}.
\end{definition}
\begin{remark}
	A slight modification of the usual partition of unity arguments assures us that 
	\emph{every complex orbifold} admits a \emph{Hermitian metric} (see \cite{moerdijk_mrcun_2003} for more details).
\end{remark}
\begin{remark}
	There is a beautiful connection between the preceding discussion and the geometry of the situation which is provided by the \emph{Gauss-Bonnet theorem}: Consider a good compact orbifold Riemann surface $O$ which is expressed as $[\tilde{X}/\Gamma]$ where $\tilde{X} \in \{\hat{\cmpx},\cmpx,\UHP\}$ and $\Gamma < \isom^{+}(\tilde{X})$ is discrete group. There exists a canonical constant curvature Hermitian metric on $\tilde{X}$ which induces a Hermitian metric of constant curvature on $O$. $O$ has a well-defined area $A(O)$ which has the same naturality property under finite coverings as the Euler number, i.e. if $\tilde{O}$ is an orbifold covering of $O$ of degree $d$, then $A(\tilde{O}) = d \cdot A(O)$. Hence, we can the use the fact that $O$ is finitely covered by some Riemann surface and apply the usual Gauss-Bonnet Theorem to this Riemann surface. In particular, if $O$ is $[\hat{\cmpx}/\Gamma]$, we deduce that $A(O) = 2 \pi \chi(O)$ and if $O$ is $[\UHP/\Gamma]$, we deduce that $A(O) = - 2 \pi \chi(O)$.
\end{remark}
More generally, one can define Hermitian metric on every holomorphic V-bundle:
\begin{definition}[Hermitian metrics on holomorphic V-bundles]
	Let $O= (X_O,\mathcal{U})$ be a complex orbifold and $\pi_{\text{orb}}: \mathcal{E} \to O$ a holomorphic vector V-bundle. A \emph{Hermitian orbifold metric} $\mathsf{h}$ on $\mathcal{E}$ is a collection of local $\Gamma_a$-invariant Hermitian metrics $\tilde{h}_a^{\Gamma_a}$ on each local holomorphic vector bundle $\tilde{E}_a \to \tilde{U}_a$, such that all embeddings are \emph{Hermitian isometries}.
\end{definition}

Finally, let $\mathcal{E} \to O$ be a holomorphic vector V-bundle endowed with a Hermitian metric $\mathsf{h}$. A \emph{Hermitian connection} $\nabla$ on $\mathcal{E}$ is defined to be a collection $\{\nabla_a\}$ of $\Gamma_a$-equivariant Hermitian connections supported on each local uniformizing neighborhood $\tilde{U}_a$ such that $\nabla_a$s are compatible with changes of charts. Then, the first Chern class or degree of a V-bundle can be defined using Chern-Weil theory; notice that the degree of a V-bundle is a rational number. Sobolev spaces and Hodge theory for V-bundles follow in the same way.


%
Let $O=(X_O,\brdiv)$ be an orbifold Riemann surface. We say that a hermitian metric of class $\mathcal{\infty}$ on the underlying Riemann surface $X_O$ is compatible with the branch divisor $\brdiv = \sum (1-1/m_i) x_i$ if in a holomorphic local coordinate system centered at $x_i$ the metric is of the form $(\rho(u)/|u|^{2 - 2/m_i}) |du|^2$ for $m_i \neq \infty$, whereas it is of the form $\rho(u)/|u|^2 \log^2(|u|^{-2}) ) |du|^2$ if $m_i=\infty$. Here, $\rho$ is continuous at the marked points and positive. The cone angle is $2\pi/m_i$, including the complete case with angle zero. Let $K_X$ be the canonical divisor of $X_O$; the orbifold Riemann surface is called stable, if the degree of the divisor $K_X + \brdiv$ is positive. In this case, by a result of McOwen \cite{McOwen-1988} and Troyanov \cite{Troyanov_1986,Troyanov1991PrescribingCO}, there exists a unique conical metric $ds^2_{\text{hyp}}(\brdiv)$ on $X_O$ in the given conformal class, which has constant curvature $-1$ and prescribed cone angles. Moreover $\operatorname{Vol}(X_O, ds^2_{\text{hyp}}(\brdiv))/\pi = \deg(K_X + \brdiv) = -\chi(O)$. Where by definition $\chi(O) = \chi(X_O) - \deg(\brdiv)$ is the Euler characteristic of the Riemann orbisurface $O=(X_O,\brdiv)$.
\subsubsection{Hyperbolic metric on Riemann Orbisurfaces}
The Poincar\'{e} metric on $\UHP$
\begin{equation}
\dd{s_{\text{hyp}}^2} = \frac{|\dd{z}|^2}{(\Im z)^2},
\end{equation}
is the \emph{unique} (up to multiplicative constant) Riemannian metric that is invariant under $\PSLR$, and descends to a Riemannian metric on $O =  [\UHP\slash \Gamma]$. As a metric on $O$, it has singularities at the elliptic and parabolic fixed points. One can describe the local geometry of a hyperbolic cusp and a hyperbolic cone using a distinguished holomorphic coordinate $w$ (called \emph{rotationally symmetric} ($rs$) by Wolpert \cite{Wolpert1990TheHM,Wolpert2005CuspsAT}) that is unique up to a constant of modulus 1:
\begin{itemize}
	\item \emph{The model cusp:} Let $\mathcal{C}_{\infty}$ denote an infinite (non-compact) cusp.\footnote{See Lemma 2.1.1 of \cite{Obitsu-2008}.} A fundamental domain for $\mathcal{C}_{\infty}$ in the $\UHP$ is given by the set $\left\{z \in \UHP \, \Big| \, 0 \leq \Re z \leq 1 \right\}$ and by identifying the boundary points $\Re z =0$ with $\Re z =1$ -- i.e. $\mathcal{C}_{\infty} \cong \mathbb{S}^1 \times \mathbb{R}^{+}$. The isotropy group that corresponds to the above fundamental domain consists of $\mathbb{Z}$ acting by addition. Let $\mathcal{C}_{\infty,\epsilon}$, the hyperbolic cusp with apex at infinity and horocycle at height $\epsilon$, denote the submanifold of $\mathcal{C}_{\infty}$ obtained by restricting the previous fundamental domain to $\Im z > \epsilon$ -- i.e. $\mathcal{C}_{\infty, \epsilon} \cong \mathbb{S}^1 \times [\epsilon, \infty)$. This fundamental domain can be endowed with the Poincar\'{e} metric 
	\begin{equation}
	\dd{s_{\text{cusp}}^2} = \frac{|\dd{z}|^2}{(\Im z)^2}.
	\end{equation}
	Observe that this is a \emph{complete} metric of Gaussian curvature $-1$ and finite volume, $\operatorname{Vol}(\mathcal{C}_{\infty, \epsilon}) = \epsilon$.
	
	The hyperbolic cusp $\mathcal{C}_{\infty, \epsilon}$ can equivalently be presented as a Riemann surface with boundary, parameterized by the complex coordinate $w \equiv e^{i 2 \pi z}$, valued in the punctured disk $\UD^{\ast}(0,e^{-2 \pi \epsilon})$. The hyperbolic metric can then be written as
	\begin{equation}\label{cuspcoordinate}
	\dd{s_{\text{cusp}}^2} = \frac{|\dd{w}|^2}{\left(|w| \log|w|\right)^2}.
	\end{equation}
	The coordinate $w$ is uniquely determined by this condition, up to a factor of modulus 1. Following \cite{Wolpert1990TheHM,Wolpert2005CuspsAT}, we will call $w$ an $rs$-coordinate in a neighborhood of a parabolic fixed-point (see Figure \ref{fig:localcoordinate}).  
	
	\item \emph{The model cone:} 
	For a given positive integer $m$, let $\mathcal{C}_{m}$ denote the infinite hyperbolic cone of angle $2\pi/m$.\footnote{See section 2 of \cite{Judge_1995}.} One can realize $\mathcal{C}_{m}$ as a half-infinite cylinder $\mathbb{S}^1 \times \mathbb{R}^{+}$, equipped with the constant curvature $-1$ metric
	\begin{equation}
	\dd{s_{\text{cone}}^2} = \left(\frac{2 \pi}{m}\right)^2\frac{|\dd{z}|^2}{\sinh[2](\frac{2 \pi}{m} \Im z)}.
	\end{equation}
	In contrast to the cusp case, this metric is \emph{not} complete. A suitable change of variables provides a parameterization of the hyperbolic cone by $\mathcal{C}_{m} \cong (0,\infty) \times (0,2\pi]$ with coordinates $(\rho,\theta)$. The metric in this coordinate becomes 
	\begin{equation}
	\dd{s_{\text{cone}}^2} = \dd{\rho^2} + m^{-2} \sinh[2](\rho) \dd{\theta^2},
	\end{equation}
	having volume form\footnote{The $\star$ in Eq.~\eqref{volumeform} is the Hodge star, and the notation $\star 1$ emphasizes that the volume form is the Hodge dual of the constant map on the manifold.}
	\begin{equation}\label{volumeform}
	\star \hspace*{-1pt}1 = m^{-1} \sinh(\rho) \dd{\rho} \wedge \dd{\theta}.
	\end{equation}
	A fundamental domain for $\mathcal{C}_{m}$ in the hyperbolic unit disk $\UD$ is provided by a sector with a vertex at the origin and with angle $2 \pi/m$ -- i.e. $\left\{\mathsf{u}_{\UD} \in \UD \, \Big| \, 0 \leq \arg(\mathsf{u}_{\UD}) < \frac{2 \pi}{m}\right\}$. The hyperbolic metric on $\mathcal{C}_{m}$ is the metric induced onto the fundamental domain (viewed as a subset of the $\UD$ endowed with its complete hyperbolic metric). The isotropy group which corresponds to this fundamental domain is the group $\mathbb{Z}_{m}$ consisting of the numbers $\exp(\sqrt{-1}\, 2 \pi j/m)$ for $j=1,2,\dots,m$ acting by multiplication. As before, let the hyperbolic cone of angle $2\pi/m$ and a boundary at height $\epsilon$, $\mathcal{C}_{m,\epsilon} \cong \mathbb{S}^1 \times [\epsilon, \infty)$, be the submanifold of $\mathcal{C}_{m}$ obtained by restricting the $(\rho,\theta)$-coordinate to $0 \leq \rho < \cosh[-1](1+\epsilon m/ 2\pi)$. A fundamental domain for $\mathcal{C}_{m,\epsilon}$ in the unit disk model is obtained by adding the restriction that $|u| < \sqrt{\epsilon m / (4 \pi + \epsilon m)}$. An elementary calculation shows that the volume of this manifold is finite and is given by $\operatorname{vol}(\mathcal{C}_{m,\epsilon}) = \epsilon$. 
	
	Finally, the hyperbolic cone can also be seen as a Riemann surface with boundary, parameterized by the complex coordinate $\tilde{w} \in \UD^{\ast}(0,R)$, such that
	\begin{equation}\label{conersmetric}
	\dd{s_{\text{cone}}^2} = \frac{4 |\dd{\tilde{w}}|^2}{m^2 |\tilde{w}|^{2-2/m} \left(1-|\tilde{w}|^{2/m}\right)^2}.
	\end{equation} 
	As for the case of the cusp, a coordinate $\tilde{w}$ with this property is unique up to a factor of modulus 1 and was also called \cite{Montplet2016RiemannRochII} an $rs$-coordinate after Wolpert \cite{Wolpert1990TheHM,Wolpert2005CuspsAT} (see Figure \ref{fig:localcoordinate}). The parameter $R$ can be easily obtained by computing and comparing volumes in different coordinates. In particular, as $\epsilon m \to 0$, we have $R \sim (\epsilon m / 4\pi)^{m/2}$.
\end{itemize}
\begin{figure}
	\centering
	\includegraphics[width=0.9\linewidth]{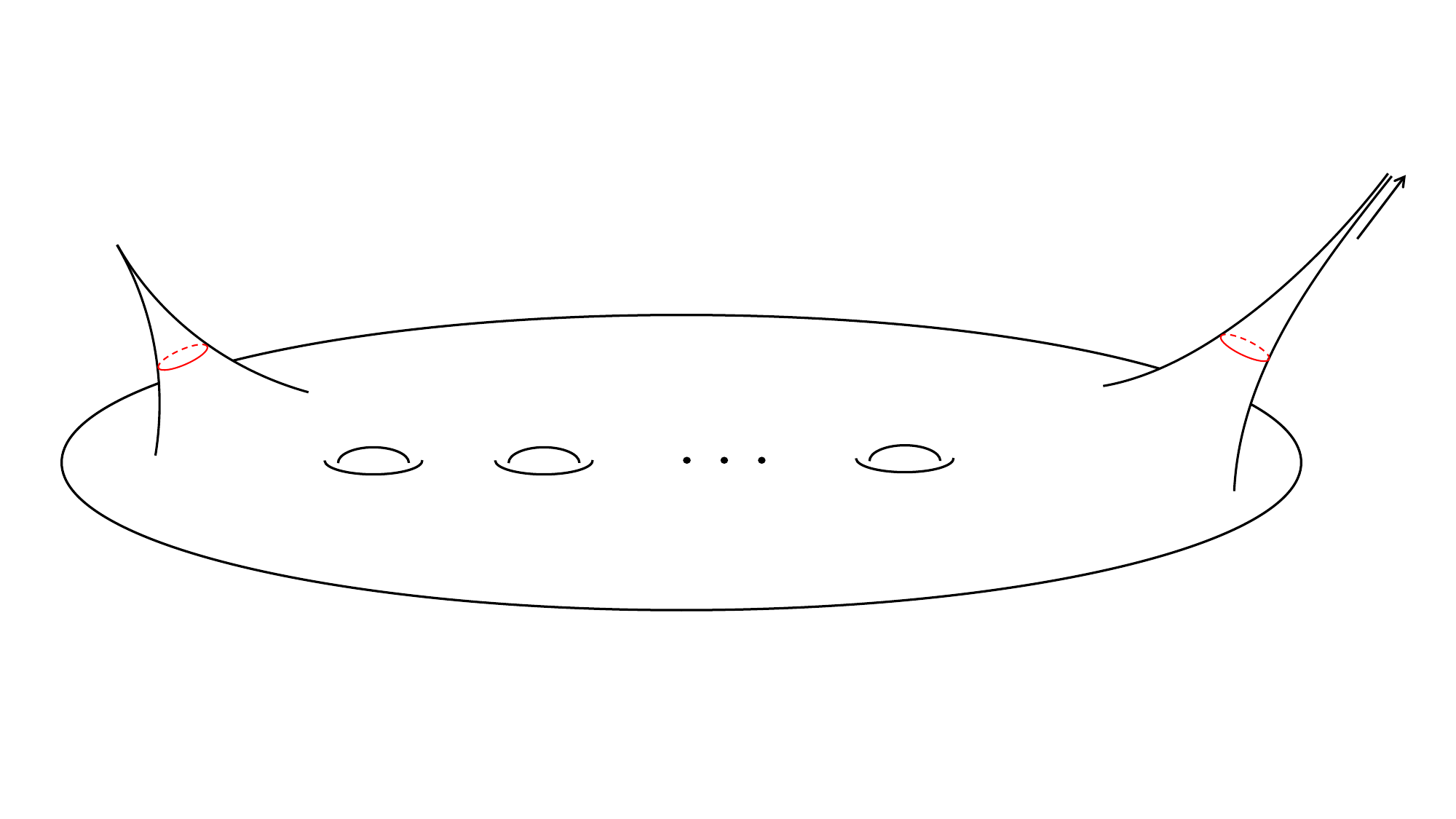}
	\put(-360,160){\rotatebox{0}{\fontsize{12}{12}\selectfont{} $\dd{s_{cone}^2} = \frac{4 |\dd{\tilde{w}}|^2}{m^2 |\tilde{w}|^{2-2/m} \left(1-|\tilde{w}|^{2/m}\right)^2}$}}
	\put(-140,157){\rotatebox{0}{\fontsize{12}{12}\selectfont{} $\dd{s_{cusp}^2} = \frac{ |d w|^{2}}{\left(|w| \log{|w|}\right)^{2}}$}}
	\put(-19,172){\rotatebox{0}{\fontsize{13}{13}\selectfont{} $\infty$}}
	\put(-50,115){\rotatebox{0}{\fontsize{12}{12}\selectfont{} $\mathcal{C}_{\infty, \epsilon}$}}
	\put(-372,120){\rotatebox{0}{\fontsize{12}{12}\selectfont{} $\mathcal{C}_{m, \epsilon}$}}
	\vspace{-1cm}
	
	\caption{\emph{Rotation symmetric coordinates}. Model cusp, $\mathcal{C}_{\infty, \epsilon}$, and model cone, $\mathcal{C}_{m, \epsilon}$, are shown in an $\epsilon$-neighborhoods of the parabolic and elliptic fixed points (i.e. at height $\epsilon$ in $rs$-coordinates $w,\tilde{w}$). The hyperbolic metrics, $\dd{s_{\text{cusp}}^2}$ and $\dd{s_{\text{cone}}^2}$, in rotation symmetric coordinates, $w$ and $\tilde{w}$, are also shown in neighborhoods of cusps and cones respectively.} \label{fig:localcoordinate}
\end{figure}
\subsection{Orbifold Differential Forms and Automorphic Forms}\label{sec:differentials}
In this section, we will study differential forms on hyperbolic Riemann orbisurfaces in more details (see, e.g. \cite{ZT_localindextheorem_1991,McIntyre:2004xs,ZT_2018} for more details). We start by defining orbifold differential forms on a general complex orbifold as a collection of invariant differential forms on each local uniformizing neighborhood:
\begin{definition}[Orbifold differential forms]\label{def:orbidif}
	If $O = (X_O, \mathcal{U})$ is a complex orbifold with an atlas of analytic orbifold charts $\mathcal{U} =\{(U_a,\tilde{U}_a,\Gamma_a,\ff_a)\}_{a \in A}$, we can define a complex \emph{orbifold $k$-form} $\phi$ on $O$ as a collection of local $\Gamma_a$-invariant complex $k$-forms $\{\tilde{\phi}_a^{\Gamma_a}\}_{a \in A}$ defined on each local uniformizing neighborhood $\tilde{U}_i$ such that every $\tilde{\phi}_i^{\Gamma_i}$ is preserved by all the change of charts. We say that the complex orbifold $k$-form $\phi$ is \emph{bigraded} of type $(p,q)$, with $k=p+q$, if $\phi$ is an invariant section of the V-bundle $\bigwedge^{p,q} T^{\ast}O:= \left(\bigwedge^{p} T^{\ast}_{1,0}O\right)\wedge \left(\bigwedge^{q} T^{\ast}_{0,1}O\right)$. We will denote the vector space of all such orbifold $(p,q)$-forms on $O$ by $\mathcal{E}^{p,q}(O)$.
\end{definition}
\begin{remark}\label{rmk:Stokes}
	Integration theory also goes through: Let $(U_a,\tilde{U}_a,\Gamma_a,\ff_a) \in \mathcal{U}$ be an orbifold chart and let $\phi$ be an orbifold differential from compactly supported on $V \subset X_O$. The characterization of $\phi$ as a collection of  local $\Gamma_a$-invariant differential forms $\{\tilde{\phi}_a^{\Gamma_a}\}$ that are supported on each $\tilde{U}_a$, enables us to define the integration of $\phi$ over $V$ as
	\begin{equation}
	\int_{V} \phi = \sum_{a \in A} \frac{1}{\#\Gamma_a}\int_{\ff^{-1}_a(U_a \cap V)} \tilde{\phi}_a^{\Gamma_a},
	\end{equation}
	where we have used partitions of unity to write the integral over $V$ as a sum of integrals over $V \cap U_a$. So, all of the standard integration techniques, such as \emph{Stokes' theorem}, are equally valid on orbifolds. 
\end{remark}

Now, consider a hyperbolic orbifold Riemann surface $O$ and let $K_O$ ($\neq K_X$) denote its holomorphic cotangent V-bundle (or its orbifold canonical bundle). For any $k,l \in \mathbb{Z}$, an orbifold $(k,l)$-differential on $O$ is defined as an element of $\diffspace^{k,l}(O) := \Gamma(O, K_O^{k} \otimes \overline{K}_O^{l})$ --- the vector space of smooth global sections of line V-bundle $K_O^{k} \otimes \overline{K}_O^{l}$.\footnote{Whenever $k$ or $l$ are negative, we understand $K_O^{k} := (TO)^{-k}$ and $\overline{K}_O^{l} := (\bar{T}O)^{-l}$.} For any pair of non-negative integers $p$ and $q$, there exists an isomorphism between the space $\mathcal{E}^{p,q}(O, K_O^{k} \otimes \overline{K}_O^{l})$ of orbifold differential forms of type $(p,q)$ with coefficients in the line V-bundle $K_O^{k} \otimes \overline{K}_O^{l}$ and the complex vector space $\diffspace^{0,0}(O,K_O^{k+p} \otimes \overline{K}_O^{l+q})$.

Every hyperbolic Riemann orbisurface $O$ can be realized as an orbifold quotient of the upper half-plane $\UHP = \{z \in \cmpx \, | \, \Im z > 0\}$ by a finite discrete subgroup group $\Gamma$ of its (orientation preserving) isometries $\operatorname{Isom}^{+}(\UHP) \cong \PSLR$, called a \emph{Fuchsian group}. Then, using the realization of $O$ as $[\UHP/\Gamma]$, one can identify every orbifold $(k,l)$-differential with a $\Gamma$-automorphic form of weight $(2k,2l)$ on $\UHP$: An \emph{automorphic form} of weight $(2k,2l)$ for $\Gamma$ is a $\Gamma$-invariant global section of the line bundle $K_{\UHP}^{k} \otimes \overline{K_{\UHP}}^{l}$. We will denote the space of $\Gamma$-automorphic forms of weight $(2k,2l)$ by $\diffspace^{k,l}(\UHP,\Gamma)$; an arbitrary element $\phi$ of $\diffspace^{k,l}(\UHP,\Gamma)$ has the form $\phi = \phi(z) \dd{z^k} \dd{\bar{z}^l}$ where $\phi(z)$ transforms according to the rule $\phi(\gamma z) \gamma'(z)^{k} \overline{\gamma'(z)}^{l} = \phi(z)$ for all $\gamma \in \Gamma$ and $z\in \UHP$.
\section{Geometric structures on orbifolds}\label{Apx:geometricorbifolds}
\subsection{Basic definitions and some theorems}\label{B1}
In \cite{ehresmann1936espaces} Ehresmann studied what he called \emph{locally homogeneous spaces}. More precisely, a locally homogeneous geometry on a manifold $M$, gives a $(G, \mathbb{X})$-structure on $M$ in the following sense described by Ehresmann. Let $\mathbb{X}$ be a (homogeneous) complex manifold -- called the \emph{model space} -- and let $G$ be a group acting effectively, transitively, and holomorphically on $\mathbb{X}$. A \emph{holomorphic $(G,\mathbb{X})$-structure} on a complex manifold $M$ is given by an open cover $\{U_a\}_{a \in A}$ of $M$ with holomorphic charts  $\mathfrak{f}_a: U_a \to \mathbb{X}$ such that the transition maps $\mathfrak{f}_a \circ \mathfrak{f}_b^{-1}: \mathfrak{f}_b(U_a \cap U_b) \to \mathfrak{f}_a(U_a \cap U_b)$ are given (on each connected component) by the restriction of an element $\textcolor{black}{\mathrm{g}_{a,b}} \in G$ (see Figure \ref{fig:geometricstructure}). Note that any geometric feature of $\mathbb{X}$ which is invariant by the symmetry group $G$ has an intrinsic meaning on the manifold $M$ equipped with a $(G,\mathbb{X})$-structure.
\begin{figure}
	\centering
	\includegraphics[width=0.7\linewidth]{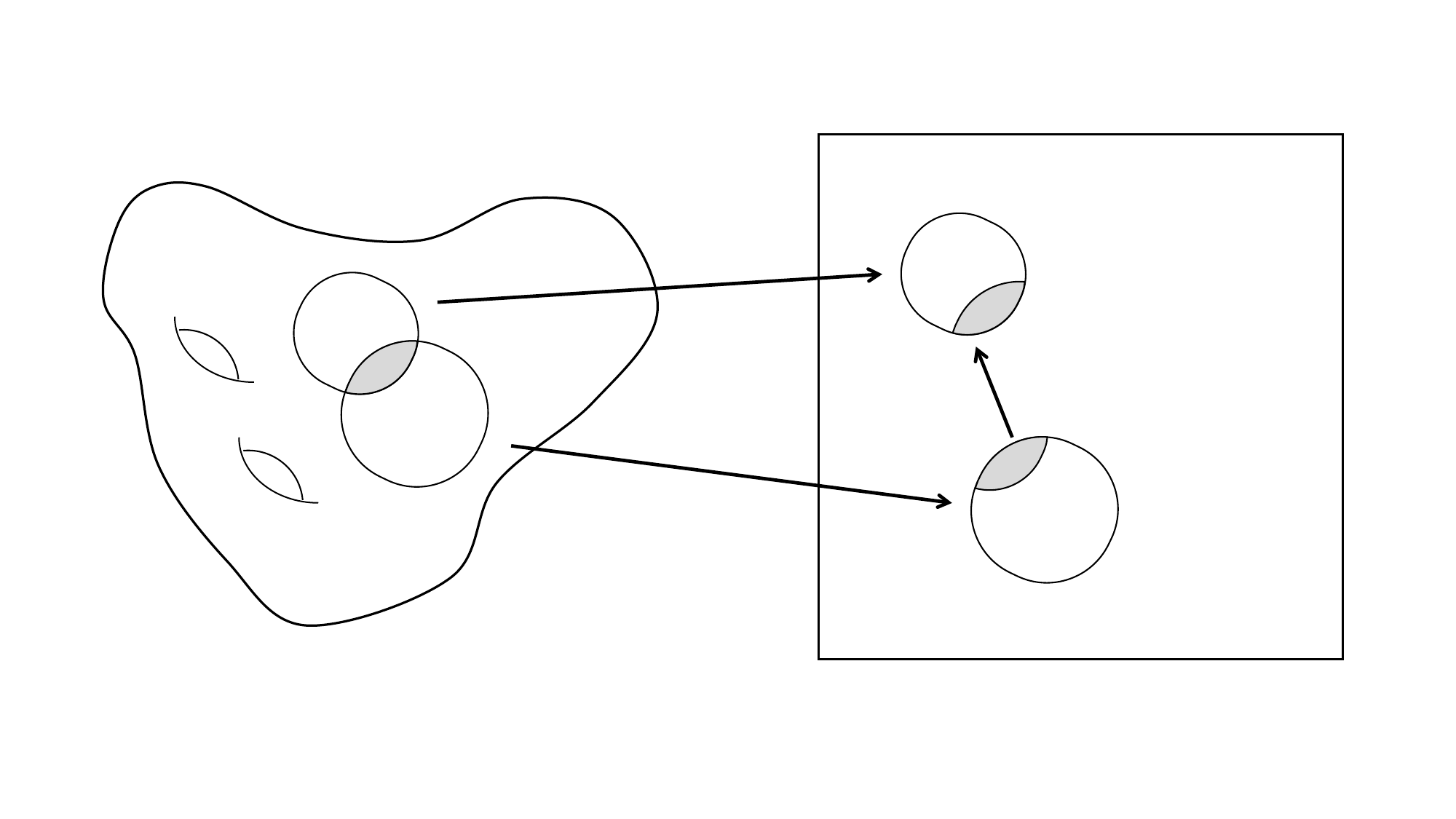}
	\put(-295,120){\rotatebox{0}{\fontsize{12}{12}\selectfont{} $M$}}
	\put(-23,115){\rotatebox{0}{\fontsize{12}{12}\selectfont{} $X$}}
	\put(-96,87){\rotatebox{0}{\fontsize{8}{8}\selectfont{} $\text{Element of G}$}} 
	\caption{\emph{Geometric structure on manifolds}. A geometric $(G,\mathbb{X})$-structure on a complex manifold $M$ is given by an atlas of holomorphic charts such that open neighborhoods are biholomorphic to open subsets of $\mathbb{X}$ and transition maps are given by restrictions of elements of $G$.} 
	\label{fig:geometricstructure}
\end{figure}

There exists a useful \emph{globalization of the coordinate charts} of a geometric structure in terms of the universal covering space and the fundamental group. The $(G,\mathbb{X})$-coordinate atlas $\{(U_a,\mathfrak{f}_a)\}_{a \in A}$ is replaced by a universal covering space $\tilde{M}$ with its group of deck transformations $\pi_1(M)$: The coordinate charts $\mathfrak{f}_a: U_a \to \mathbb{X}$ are replaced by a globally defined map $\dev: \tilde{M} \to \mathbb{X}$ called a \emph{developing map} (see Figure \ref{fig:devpair}). In addition, the developing map is equivariant with respect to the actions of $\pi_1(M)$:
\begin{equation}
\dev \circ \gamma = \hol(\gamma) \circ \dev,
\end{equation}
where $\gamma \in \pi_1(M)$ is a deck transformation and $\hol: \pi_1(M) \to G$ is called a \emph{holonomy representation} -- i.e., the coordinate changes are replaced by the holonomy homomorphism. The resulting \emph{developing pair} $(\dev,\hol)$ is unique up to composition/conjugation by elements in $G$, i.e. up to $\big(\dev,\hol(\cdot)\big) \to \big(\mathrm{g} \circ \dev,\mathrm{g} \hol(\cdot) \mathrm{g}^{-1}\big)$ transformations. This determines the structure.
\begin{figure}
	\centering
	\includegraphics[width=0.9\linewidth]{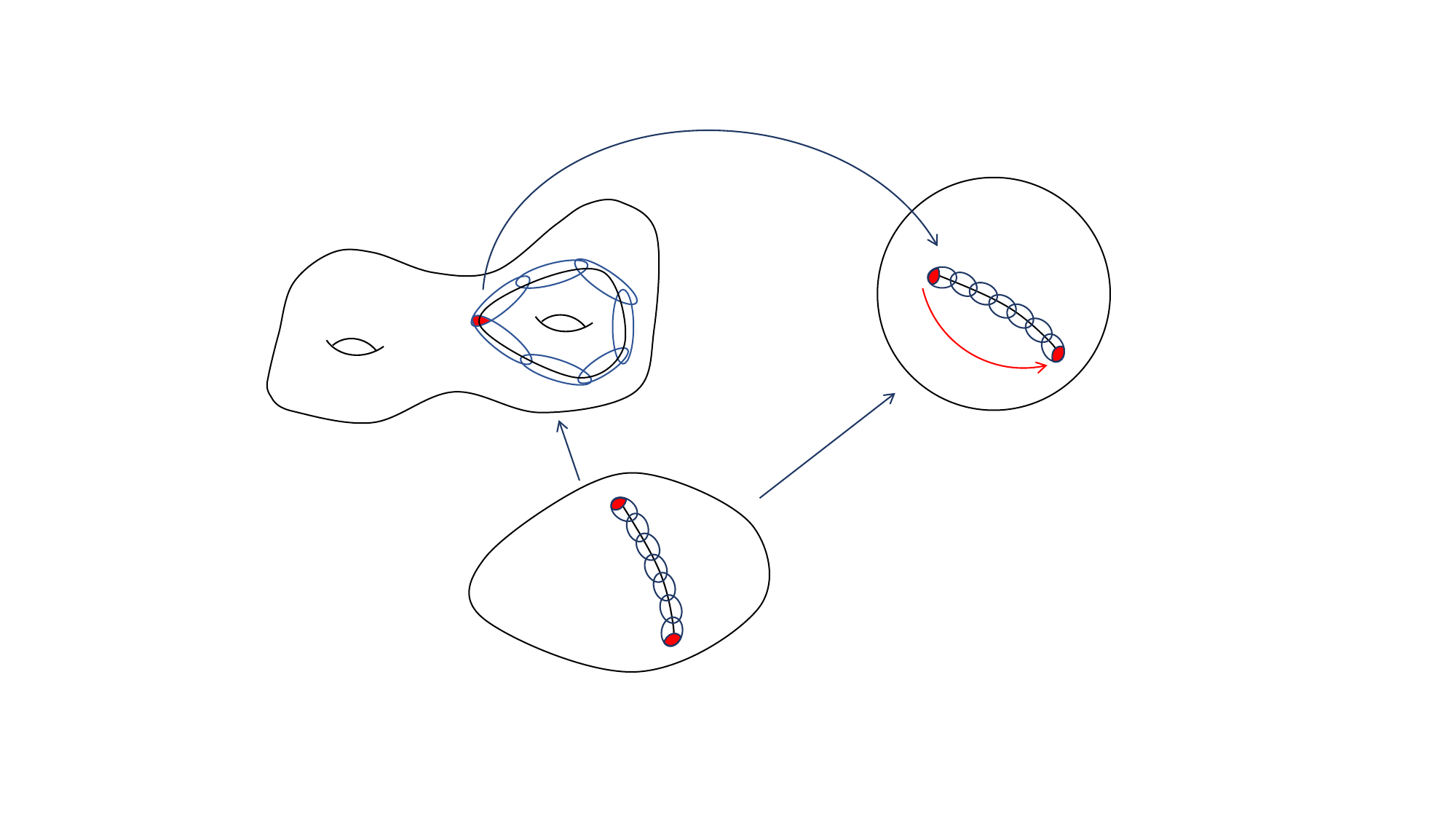}
	\put(-310,155){\rotatebox{0}{\fontsize{12}{12}\selectfont{} $M$}}
	\put(-210,173){\rotatebox{0}{\fontsize{12}{12}\selectfont{} $\mathfrak{f}$}}
	\put(-97,155){\rotatebox{0}{\fontsize{12}{12}\selectfont{} $\mathbb{X}$}}
	\put(-274.5,137){\rotatebox{0}{\fontsize{12}{12}\selectfont{} $\textcolor{blue}{\mathscr{U}}$}}
	\put(-275,130){\rotatebox{0}{\fontsize{10}{10}\selectfont{} $x_{\ast}$}}
	\put(-230,148){\rotatebox{0}{\fontsize{10}{10}\selectfont{} $l$}}
	\put(-142,114){\rotatebox{0}{\fontsize{6.5}{6.5}\selectfont{} $\textcolor{red}{\hol(l)\hspace{-.7mm}\in\hspace{-.7mm}G}$}}
	\put(-270,38){\rotatebox{0}{\fontsize{12}{12}\selectfont{} $\tilde{M}$}}
	\put(-225,60){\rotatebox{0}{\fontsize{10}{10}\selectfont{} $\tilde{l}$}}
	\put(-182,103){\rotatebox{0}{\fontsize{10}{10}\selectfont{} $\dev$}}
	\vspace{-1cm}
	
	\caption{\emph{Development pair}.} 
	\label{fig:devpair}
\end{figure}
In this section, we will introduce $(G,\mathbb{X})$-structures on orbifolds. Simply put, a \emph{$(G,\mathbb{X})$-orbifold} is locally modeled on $\mathbb{X}$ modulo finite subgroups of $G$. We will start by giving a definition of geometric structures on orbifolds based on atlases of charts, as well as using developing maps from the universal orbifold covering. Then, we will introduce and study the deformation spaces of these orbifold $(G,\mathbb{X})$-structures in analogy with Goldman's work \cite{goldman2010locally,goldman1987geometric,goldman2008higgs} for the manifold case. Some of the most important examples of these geometric structures on orbifolds are provided by projective structures as well as hyperbolic, Euclidean, or spherical structures; we will end this section by studying these specific examples and the relation between them. See \cite{choi2004geometric,Choi_2012} for more on orbifold geometric structures.

In order to give a precise definition of an orbifold geometric structure modeled by the pair $(G,\mathbb{X})$, we need to introduce the notion of (holomorphic) orbifold $(G,\mathbb{X})$-charts: Let $O=(X_O,\mathcal{U})$ be a complex orbifold and let $U \subset X_O$ be an open subset of $X_O$ with a local model $(\tilde{U},\Gamma)$. Then, a \emph{holomorphic $(G,\mathbb{X})$-chart} on $U$ is defined to be given by a pair $(\tilde{\mathfrak{f}},\mathfrak{C})$ where $\tilde{\mathfrak{f}}: \tilde{U} \hookrightarrow \mathbb{X}$ gives a holomorphic embedding of local uniformizing neighborhood $\tilde{U}$ into the model space $\mathbb{X}$ (equivalently, $\tilde{\mathfrak{f}}: \tilde{U} \to \tilde{V} \subset \mathbb{X}$ is a holomorphic isomorphism onto an open subset of $\mathbb{X}$) and is considered as a \emph{local $\mathbb{X}$-coordinate} on \textcolor{black}{$\tilde{V}$} while $\mathfrak{C}: \Gamma \hookrightarrow G$ is an injective group homomorphism (equivalently, a group isomorphism between the local uniformizing group $\Gamma$ and a finite subgroup of $G$).
\begin{definition}[Orbifold $(G,\mathbb{X})$-structures]
	Let $\mathcal{U}:=\big\{(U_a, \tilde{U}_a, \Gamma_a, \ff_a)\big\}_{a \in A}$ denote an orbifold atlas that induces an orbifold structure on $X_O$. A \emph{holomorphic orbifold $(G,\mathbb{X})$-atlas} on \textcolor{black}{$X_O$} that is \emph{compatible with orbifold atlas $\mathcal{U}$} is given by a collection of holomorphic $(G,\mathbb{X})$-charts $\big\{(\tilde{\mathfrak{f}}_a: \tilde{U}_a \hookrightarrow \mathbb{X},\, \mathfrak{C}_a: \Gamma_a \hookrightarrow G)\big\}_{_{a \in A}}$ such that holomorphic embeddings $\eta_{ab}: \tilde{U}_a \to \tilde{U}_b$ are realized as elements $\textcolor{black}{\mathrm{g}_{a,b}} \in G$ and monomorphisms $\Upsilon_{ab}: \Gamma_a \hookrightarrow \Gamma_b$ are given by conjugations $\gamma \mapsto \mathrm{g}_{ab}\circ\gamma\circ \mathrm{g}_{ab}^{-1}$ for all $\gamma \in \Gamma_a$. The datum of a holomorphic orbifold $(G,\mathbb{X})$-atlas compatible with the \emph{maximal} complex orbifold atlas, $\mathcal{U}_{\max}$, defines a \emph{holomorphic $(G,\mathbb{X})$-structure} on complex orbifold \textcolor{black}{$O=(X_O,\mathcal{U}_{\max})$}. If a complex orbifold $O$ admits a holomorphic $(G,\mathbb{X})$-structure, one can always choose a local model $(\tilde{U}_a,\Gamma_a)$ for each open set $U_a \subset X_O$ where $\tilde{U}_a \subset \mathbb{X}$ and $\Gamma_a < G$; notice that if we require the collection $\{(\tilde{U}_a \subset \mathbb{X}, \Gamma_a<G)\}$ to be a set of local models for a complex orbifold, the space $\mathbb{X}$ should itself admit a complex structure and the action of group $G$ on $\mathbb{X}$ should be given by holomorphic automorphisms. When there is no risk of confusion, we will say that a maximal orbifold atlas $\mathcal{U}_{\max}:=\{(U_a,\tilde{U}_a \subset \mathbb{X},\Gamma_a < G,\textcolor{black}{\ff_a})\}_{_{a \in A}}$ induces a complex orbifold $(G,\mathbb{X})$-structure on \textcolor{black}{$X_O$}.
\end{definition} 

Once again, the definition of complex orbifold $(G,\mathbb{X})$-structures simplifies considerably for the case of Riemann orbisurfaces (orbifold Riemann surfaces) due to the restricted nature of singular points in one complex dimension:
\begin{definition}[Complex $(G,\mathbb{X})$-structures on Riemann orbisurfaces]
	Let $O=(X_O,\brdiv)$ be a Riemann orbisurface. A complex $(G,\mathbb{X})$-structure on the Riemann orbisurface $O$ is given by a $(G,\mathbb{X})$-structure on its underlying Riemann surface $X_O$ such that the complex structure that is induced on $X_O$ by the $(G,\mathbb{X})$-structure coincides with the already existent complex structure on $X_O$.
\end{definition} 
A \emph{holomorphic orbifold $(G,\mathbb{X})$-map} $f: O \to O'$ is a holomorphic orbifold map $O \overset{f}{\to} O'$ such that its holomorphic local lift at each point $x \in \textcolor{black}{X_O}$ is given by a pair of maps between local models $(\tilde{U}_a,\Gamma_{x}) \overset{(\tilde{f}_{x},\bar{f}_{x})}{\longrightarrow} (\tilde{U}'_a,\Gamma'_{f_{_{O}}(x)})$ where the group homomorphism $\bar{f}_{x}: \Gamma_{x} \to \Gamma'_{f_{_{O}}(x)}$ is induced by conjugation $\gamma_{a} \mapsto \mathrm{g}_{ab} \circ\gamma_{a}\circ \mathrm{g}_{ab}^{-1}$ for all $\gamma_{a} \in  \Gamma_{x}$ and holomorphic $\tilde{f}_{x}$-equivariant map $\tilde{f}_{x}: \tilde{U}_a \to \tilde{U}'_a$ is given by a restriction of $\mathrm{g}\in G$. Note that if $O$ is a complex orbifold and $f: O \to O'$ is a holomorphic orbifold map to another complex orbifold $O'$ equipped with a (holomorphic) $(G,\mathbb{X})$-structure \textcolor{black}{$\mathfrak{G}'$}, we can pull-back the $(G,\mathbb{X})$-structure $\mathfrak{G}'$ on $O'$ to another $(G,\mathbb{X})$-structure $f^{\ast}(\mathfrak{G}')$ on $O$ such that $f$ becomes a $(G,\mathbb{X})$-map. In particular, a complex $(G,\mathbb{X})$-structure on an orbifold $O$ induces a complex $(G,\mathbb{X})$-structure on its covering orbifolds through a pull-back by the covering map.

\begin{theorem}[Thurston]\label{theorem:geometricorbifolds}
	When $G$ is a group of biholomorphisms of a complex manifold $\mathbb{X}$, then every complex $(G,\mathbb{X})$-orbifold is \emph{good}.
\end{theorem}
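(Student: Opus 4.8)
The plan is to prove that every complex $(G,\mathbb{X})$-orbifold $O$ is good by constructing a developing map that exhibits $O$ as a quotient $[\mathbb{X}/\Gamma]$ (or at least as covered by a manifold), following the standard strategy for the manifold case adapted to the orbifold setting. First I would pass to the universal orbifold covering $\varpi_{\text{orb}} \colon \tilde{O} \to O$, which exists by the discussion preceding Definition~\ref{def:orbifoldfundgroup} (Thurston's existence result), and whose Galois group is the orbifold fundamental group $\pi_1(O)$. The key observation is that the $(G,\mathbb{X})$-structure on $O$ pulls back, via the covering map, to a $(G,\mathbb{X})$-structure on $\tilde{O}$; this is exactly the pull-back construction $f^{\ast}(\mathfrak{G}')$ described just before the theorem statement. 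So the problem reduces to showing that the universal orbifold cover $\tilde{O}$, equipped with its induced $(G,\mathbb{X})$-structure, is actually a \emph{manifold} — for then $O$ is a global quotient of a manifold by $\pi_1(O)$ acting properly discontinuously, hence good in the sense of the definition following Remark~\ref{rmk:universalcov}.

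The central mechanism is the construction of the developing pair $(\dev, \hol)$, exactly analogous to the manifold case sketched in Appendix~\ref{B1}. I would build a globally defined holomorphic map $\dev \colon \tilde{O} \to \mathbb{X}$ by analytically continuing the local $\mathbb{X}$-coordinate charts $\tilde{\mathfrak{f}}_a$ along paths in $\tilde{O}$; since $\tilde{O}$ is the universal orbifold cover it is ``simply connected'' in the orbifold sense, so this continuation is path-independent and yields a well-defined $\dev$. Simultaneously one extracts the holonomy homomorphism $\hol \colon \pi_1(O) \to G$ from the transition elements $\mathrm{g}_{ab} \in G$, satisfying the equivariance relation $\dev \circ \gamma = \hol(\gamma) \circ \dev$ for all $\gamma \in \pi_1(O)$. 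The crucial point specific to the orbifold setting — and the reason the theorem is true — is that the local uniformizing groups $\Gamma_a$ inject into $G$ via the monomorphisms $\mathfrak{C}_a \colon \Gamma_a \hookrightarrow G$, and $G$ acts by genuine biholomorphisms (not merely up to finite ambiguity). This means the folding maps can be developed \emph{without creating new branching}: a local chart $(\tilde{U}_a, \Gamma_a)$ maps isomorphically onto $(\tilde{V}_a, \mathfrak{C}_a(\Gamma_a)) \subset (\mathbb{X}, G)$, and because the developing map on the \emph{covering} absorbs the local group action into honest $G$-translations, the developed image has trivial isotropy and $\tilde{O}$ carries no orbifold singularities.

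I expect the main obstacle to be establishing rigorously that $\dev$ is a \emph{local biholomorphism onto $\mathbb{X}$ with the singular structure resolved} — i.e., verifying that the isotropy groups of $\tilde{O}$ are all trivial. One must check that when one develops through an orbifold chart whose local group $\Gamma_a$ contains elements, these elements are realized on the cover as deck transformations of $\tilde{O}$ rather than as fixed-point-bearing local symmetries; this is where the hypothesis that $G$ acts \emph{effectively and holomorphically} and that each $\mathfrak{C}_a$ is injective does the real work. Concretely, I would argue that the pull-back of the $(G,\mathbb{X})$-structure to $\tilde{O}$ has all local groups equal to $\{\mathbbm{1}\}$ by tracing how $\Gamma_a$ acts under $\varpi_{\text{orb}}$: the preimage of a singular point of order $m$ in $O$ consists of regular points of $\tilde{O}$ precisely because the universal cover unwinds the cyclic/finite isotropy, just as in the model picture of Figure~\ref{fig:orbifoldfundgroup}. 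Once $\tilde{O}$ is shown to be a manifold, the equivariance of $(\dev,\hol)$ together with the proper discontinuity of the $\pi_1(O)$-action (inherited from the orbifold covering structure) completes the identification $O \cong [\tilde{O}/\pi_1(O)]$, and by Proposition~\ref{prop:quotientorb} this is a good orbifold. A secondary technical point, which I would handle by the standard monodromy argument, is confirming the path-independence of the analytic continuation defining $\dev$; this relies on the orbifold version of simple-connectivity of the universal cover and the cocycle condition on the gluing data $\mathrm{g}_{ab}$.
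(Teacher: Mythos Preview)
The paper does not supply a proof of this theorem; it is simply attributed to Thurston and stated as background, so there is no ``paper's own proof'' to compare against. Your overall strategy --- pass to the universal orbifold cover $\tilde{O}$, pull back the $(G,\mathbb{X})$-structure, build the developing pair $(\dev,\hol)$, and conclude that $\tilde{O}$ is a manifold --- is indeed the standard one and is correct in outline.

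However, there is a genuine gap at the crucial step. You write that ``the preimage of a singular point of order $m$ in $O$ consists of regular points of $\tilde{O}$ precisely because the universal cover unwinds the cyclic/finite isotropy.'' This is circular: whether the universal orbifold cover resolves isotropy is exactly what distinguishes good orbifolds from bad ones. For the teardrop or spindle (which carry no $(G,\mathbb{X})$-structure of the required kind), the universal orbifold cover is the orbifold itself and still has singular points. You cannot simply assert that $\tilde{O}$ is a manifold by appealing to universality alone; you must use the $(G,\mathbb{X})$-structure in an essential way, and your proposal stops just short of doing so.

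The missing argument is a contradiction: suppose $\tilde{x}\in\tilde{O}$ has nontrivial local group $\Gamma_{\tilde{x}}$. The pulled-back $(G,\mathbb{X})$-chart gives an embedding $\tilde{\mathfrak{f}}\colon\tilde{U}\hookrightarrow\mathbb{X}$ together with an injection $\mathfrak{C}\colon\Gamma_{\tilde{x}}\hookrightarrow G$, so that a neighborhood of $\tilde{x}$ in $\tilde{O}$ is modeled on $\tilde{V}/\mathfrak{C}(\Gamma_{\tilde{x}})$ with $\tilde{V}=\tilde{\mathfrak{f}}(\tilde{U})\subset\mathbb{X}$. But $\tilde{V}$ itself, being an open subset of the \emph{manifold} $\mathbb{X}$, furnishes a nontrivial orbifold cover of that neighborhood. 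Using the developing map and the $G$-action, this local cover extends to a nontrivial connected orbifold cover of all of $\tilde{O}$, contradicting the universality of $\tilde{O}$. Hence every local group of $\tilde{O}$ is trivial, $\tilde{O}$ is a manifold, and $O\cong[\tilde{O}/\pi_1(O)]$ is good. You correctly identify that the injectivity of $\mathfrak{C}_a$ and the effective action of $G$ on $\mathbb{X}$ ``do the real work,'' but you need to make this contradiction-with-universality step explicit rather than folding it into the phrase ``unwinds the isotropy.''
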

\begin{remark}
If $G$ is a subgroup of a linear group, then $O$ is \emph{very good} by Selberg's lemma provided that $O$ has a finitely generated fundamental group. In particular, all geometric orbifold Riemann surfaces are very good -- i.e. finitely covered by a manifold.
\end{remark}
Next, we note that the idea of developing map extends to orbifolds with complex geometric structures:
\begin{theorem}
Let $O$ be a $(G,\mathbb{X})$-orbifold, where $(G,\mathbb{X})$ is a complex analytic geometry. Then, there exists a developing map
\begin{equation*}
\dev: \tilde{O} \to \mathbb{X}
\end{equation*}
defined on the universal covering $\tilde{O}$ and a holonomy representation $\hol: \pi_1(O) \to G$ such that
\begin{equation*}
\dev \circ \gamma = \hol(\gamma) \circ \dev
\end{equation*}
for any deck transformation $\gamma \in \pi_1(O)$.
\end{theorem}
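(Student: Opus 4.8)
The plan is to construct the developing map and holonomy representation by globalizing the local $(G,\mathbb{X})$-chart data along paths in the universal orbifold cover, in complete analogy with the manifold case. First I would fix a universal orbifold covering $\varpi_{\text{orb}}: \tilde{O} \to O$, whose existence is guaranteed by Thurston's construction (Definition~\ref{def:orbifoldfundgroup} and the surrounding discussion), and note that $\pi_1(O)$ acts on $\tilde{O}$ as the group of deck transformations. Because $(G,\mathbb{X})$ is a complex analytic geometry, Theorem~\ref{theorem:geometricorbifolds} tells us that $O$ is good, so $\tilde{O}$ is genuinely a complex manifold $\tilde{M}$ (not merely an orbifold): the torsion of $\pi_1(O)$ corresponding to the local isotropy groups is ``unwrapped'' by passing to the universal cover, exactly as in Remark~\ref{rmk:universalcov} and Remark~\ref{rmk:finitecovering}. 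This is the crucial simplification — on $\tilde{M}$ the $(G,\mathbb{X})$-structure becomes an ordinary (manifold) $(G,\mathbb{X})$-structure, so Ehresmann's classical globalization applies directly.

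Next I would carry out the analytic continuation. Pick a base point $\tilde{x}_0 \in \tilde{M}$ lying over $x_0 \in X_O^{\text{reg}}$ and a chart $(\tilde{\mathfrak{f}}_0, \mathfrak{C}_0)$ around it, and define $\dev$ near $\tilde{x}_0$ by composing the covering projection with $\tilde{\mathfrak{f}}_0$. For any other point $\tilde{x} \in \tilde{M}$, choose a path from $\tilde{x}_0$ to $\tilde{x}$, cover it by a finite chain of orbifold $(G,\mathbb{X})$-charts whose overlaps are realized by elements $\mathrm{g}_{a,b} \in G$, and extend the chart map along the chain by successively pre-composing with these $\mathrm{g}_{a,b}$. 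The simple-connectivity of $\tilde{M}$ and the cocycle compatibility of the transition elements $\mathrm{g}_{a,b}$ ensure that the resulting extension is independent of the chosen path, so $\dev: \tilde{M} \to \mathbb{X}$ is a well-defined holomorphic map. This is the standard monodromy argument; the only orbifold-specific point is to check that it is legitimate to restrict attention to paths avoiding the preimages of $\sing(O)$ (a set of real codimension $\geq 2$ on $X_O$, hence not separating $\tilde M$), so that the charts used are effectively manifold charts on $\tilde{M}$.

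To obtain the holonomy, I would fix a deck transformation $\gamma \in \pi_1(O)$ and compare the two holomorphic maps $\dev \circ \gamma$ and $\dev$. Both develop the same $(G,\mathbb{X})$-structure on $\tilde{M}$, so near any point they differ by an element of $G$; by connectedness of $\tilde{M}$ and the rigidity of $(G,\mathbb{X})$-developments (two developing maps agreeing up to $G$ on an open set agree up to the \emph{same} $G$-element everywhere, since $G$ acts on $\mathbb{X}$ holomorphically and effectively), this element is constant, giving a unique $\hol(\gamma) \in G$ with $\dev \circ \gamma = \hol(\gamma) \circ \dev$. A short computation using $\dev \circ (\gamma_1 \gamma_2) = (\dev \circ \gamma_1) \circ \gamma_2$ shows $\hol(\gamma_1 \gamma_2) = \hol(\gamma_1)\hol(\gamma_2)$, so $\hol: \pi_1(O) \to G$ is a group homomorphism.

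The main obstacle I anticipate is not the monodromy continuation itself — which is formally identical to the manifold case — but rather justifying rigorously that the orbifold local $(G,\mathbb{X})$-charts $(\tilde{\mathfrak{f}}_a, \mathfrak{C}_a)$ assemble into honest charts on the \emph{manifold} $\tilde{M}$ and that the local isotropy contributions are consistently absorbed into $\hol$. Concretely, at a point $\tilde{x} \in \tilde{M}$ projecting to a singular point of $O$, the folding maps and the embeddings $\Upsilon_{ab}$ introduce the finite groups $\mathfrak{C}_a(\Gamma_a) < G$, and one must verify that developing around the preimage of a cone point produces exactly the holonomy element realizing the corresponding element of $\pi_1(O)$ (e.g.\ that a loop winding $m$ times around a cone point of order $m$, which is null-homotopic in the orbifold sense by Proposition~\ref{prop:fundgroup}, develops trivially). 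Handling this cleanly requires invoking goodness (Theorem~\ref{theorem:geometricorbifolds}) to replace $O$ by a global quotient $[\tilde{M}/\pi_1(O)]$ at the outset, after which the cone-point subtleties disappear because $\tilde{M}$ is smooth and the entire argument reduces to the equivariant development of a manifold $(G,\mathbb{X})$-structure under the deck group.
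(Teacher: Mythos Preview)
The paper states this theorem without proof (it appears in Appendix~\ref{Apx:geometricorbifolds} immediately after Thurston's goodness theorem and is followed directly by the next subsection, with no proof environment). Your proposal is the standard and correct argument: invoke goodness (Theorem~\ref{theorem:geometricorbifolds}) so that the universal cover $\tilde{O}$ is a genuine complex manifold, then run Ehresmann's analytic-continuation construction of $(\dev,\hol)$ verbatim as in the manifold case. The only comment is that your discussion of the ``main obstacle'' is slightly over-cautious: once you have replaced $O$ by $[\tilde{M}/\pi_1(O)]$ at the outset, the cone-point consistency check you describe is automatic, since the pulled-back $(G,\mathbb{X})$-structure on $\tilde{M}$ is an honest manifold $(G,\mathbb{X})$-structure and the orbifold charts $(\tilde{\mathfrak{f}}_a,\mathfrak{C}_a)$ are by definition $\Gamma_a$-equivariant, so the local isotropy contributions are already encoded in the deck action.
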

\subsection{Hierarchy of Geometric Structures}
Often one geometric structure \emph{contains} or \emph{refines} another geometry as follows. Suppose that $G$ and $G'$ act transitively on $\mathbb{X}$ and $\mathbb{X}'$ respectively, and $\mathbb{X} \xrightarrow{f} \mathbb{X}'$ is a local biholomorphism which is equivariant with respect to a homomorphism $G \xrightarrow{F} G'$ --- i.e. for each $\mathrm{g} \in G$, the following diagram
\begin{equation}
\begin{CD}
\mathbb{X}	@> f >> \mathbb{X}' \\
@V \mathrm{g} VV		@VV F(\mathrm{g}) V\\
\mathbb{X}	@> f >> \mathbb{X}'
\end{CD}
\end{equation}
commutes. Then (by composition with $f$ and $F$) every $(G,\mathbb{X})$-structure determines a $(G',\mathbb{X}')$-structure. There are many important examples of this correspondence, most of which occur when $f$ is an embedding. For example, when $f$ is the identity map and $G \subset G'$ is a subgroup preserving some extra structure on $\mathbb{X}=\mathbb{X}'$, then every $(G,\mathbb{X})$-structure is a fortiori an $(G',\mathbb{X}')$-structure. A more important example for us is the relation between projective and hyperbolic structures: In this case the map $f: \UHP \hookrightarrow \CPl$ will be an embedding and $\PSLR$ can be viewed as the subgroup of $\PSLC$ which leaves the subspace $\UHP \subset \CPl$ invariant. Therefore, every hyperbolic structure determines a projective structure.
\subsection{Orbifold $\CPl$-Structure and Projective Connections}\label{sec:projconn}
Let $O=(X_O,\brdiv)$ be a hyperbolic orbifold Riemann surface with signature $(g;m_1,\dots,m_{n_e};n_p)$. A \emph{$\CPl$-structure} or \emph{complex projective structure} on the Riemann orbisurface $O$ is an orbifold $(G,\mathbb{X})$-structure on its underlying Riemann surface $X_O$ with $\mathbb{X}=\CPl$ and $G=\PSLC$ such that the complex structure on $X_O$ induced by the $\CPl$-structure \textcolor{black}{coincides} with the given complex structure on this Riemann surface (see Appx.\ref{B1} for more details). Similar to other orbifold $(G,\mathbb{X})$-structures, a $\CPl$-structure on $O$ can be equivalently described as a \emph{developing pair} $(\dev,\hol)$ where $\dev: \tilde{O} \to \CPl$ is the \emph{developing map} defined on the universal cover $\tilde{O} \cong \UHP$ and $\hol: \pi_1(O) \to \PSLC$ is the \emph{holonomy} or \emph{monodromy representation} such that developing map is a $\hol$-equivariant immersion.

Since any hyperbolic Riemann orbisurface is developable, there exists a finite Galois covering
\begin{equation}
\varpi: Y \to X_O^{\text{punc}}
\end{equation}
such that $\varpi$ is unramified over $X_O^{\text{reg}} = X_O^{\text{punc}} \backslash \sing_{\curlywedge}(O)$, and for each $x_i \in \sing_{\curlywedge}(O)$, the order of ramification at every point of $\varpi^{-1}(x_i)$ is $\textcolor{black}{\nu}(x_i) = m_i$ --- i.e. $\textcolor{black}{\varpi}$ induces a holomorphic orbifold covering map $Y \to O$. Let us, for the sake of simplicity, take $O$ to be a compact Riemann orbisurface so that $Y$ is a closed Riemann surface of genus $\tilde{g}$; the case with punctures follows from the formal limit $m \to \infty$. Let \textcolor{black}{$H$} be the group of deck transformations or the \emph{Galois group} for $\varpi$ such that $O \cong [Y/H]$; we have $H \subseteq \operatorname{Aut}(Y)$ where $\operatorname{Aut}(Y)$ denotes the group of holomorphic automorphisms of $Y$. Now, let us fix a projective structure $P$ on the closed Riemann surface $Y$ and consider the convex combination 
\begin{equation}
P^{{}_H} := \frac{1}{\# H} \sum_{\textcolor{black}{\mathrm{h}} \in H} \mathrm{h}^{\ast} P,
\end{equation}
where $\mathrm{h}^{\ast} P$ denotes the pullback of $\CPl$-structure $P$ by holomorphic automorphism $\mathrm{h}$, $\#  H$ is the order of $H$, and the average is defined using the convex structure of the space $\mathcal{P}(Y)$ of all projective structures on $Y$ compatible with its complex structure. Note that this projective structure $P^{{}_H}$ on $Y$ is clearly left invariant by the action of $H$ on $Y$. 

We will now construct an orbifold projective structure on $X_O$ using $P^{{}_H}$ (see \cite{biswas2006orbifold,ares2013symplectic}): Let $\big\{(\tilde{U}_a, \tilde{\mathfrak{f}}_a)\big\}_{a \in A}$ be a maximal $\CPl$-atlas on $Y$ where open subsets $\tilde{U}_a$ of $Y$ left invariant by the action of $H$ on $Y$ and $\CPl$-coordinate functions $\tilde{\mathfrak{f}}_a : \tilde{U}_a \to \tilde{V}_a \subset \CPl$ are holomorphic isomorphisms compatible with the projective structure $P^{{}_H}$. Consider the ramified coverings 
\begin{equation}
\varpi \circ \tilde{\mathfrak{f}}_a^{-1}: \tilde{V_a} \to U_a \cong \tilde{U}_a/H_a \subset X_O,
\end{equation}
where $H_a \subset H$ is the Galois group of the restriction $\varpi|_{_{\tilde{U}_a}}$. Then, the collection of ramified coverings $\big\{\varpi \circ \tilde{\mathfrak{f}}_a^{-1}\big\}_{a \in A}$ combine together to define an orbifold projective structure on $X_O$. Indeed, that they define a $\CPl$-structure on $O$ is an immediate consequence of the facts that $P^{{}_H}$ is left invariant by the action on $Y$ of the Galois group $H$ and $\varpi$ is ramified exactly over $\sing(O)$ with $m_i = \nu (x_i)$ as the order of ramification over each $x_i \in \sing(O)$. Note that one consequence of this construction is that the space of all $\CPl$-structures on $O$ compatible with its complex structure, $\mathcal{P}(O)$, is the fixed point locus for the action of $H$ on $\mathcal{P}(Y)$ --- i.e. $\mathcal{P}(O) = \mathcal{P}^{{}_H}(Y) \subset \mathcal{P}(Y)$.

It is well known that if $\big\{(\tilde{U}'_b, \tilde{\mathfrak{f}'}_b)\big\}_{b \in B}$ is a $\CPl$-atlas on a compact Riemann surface $Y$ that defines a complex projective structure $P \in \mathcal{P}(Y)$ and $\big\{(\tilde{U}_a, \tilde{u}_a:\tilde{U}_a \to \cmpx)\big\}_{a \in A}$ is any atlas of holomorphic charts on this Riemann surface, the collection of holomorphic functions   $\left\{\text{Sch}\left(\tilde{\mathfrak{f}'}_b;\tilde{u}_a\right)\right\}$, defined on overlaps $\tilde{U}'_b \cap \tilde{U}_a$, give what is known as a ``(holomorphic) projective connection'' on $Y$.
Let $\big\{(\tilde{U}_a, \tilde{u}_a)\big\}_{a \in A}$ be any \textcolor{black}{complex-analytic atlas} on $Y$ with local coordinates $\tilde{u}_a:\tilde{U}_a \to \cmpx$ and transition functions $\tilde{u}_a = \tilde{g}_{ab} \circ \tilde{u}_b$ on $\tilde{U}_a \cap \tilde{U}_b$. A \emph{holomorphic projective connection} $\tilde{R}$ on $Y$ is in general defined as a collection $\{\tilde{r}_a\}_{a \in A}$ of holomorphic functions $\tilde{r}_a$ supported on $\tilde{U}_a$ that satisfy
\begin{equation}\label{proj}
\tilde{r}_b  = \tilde{r}_a \circ \tilde{g}_{ab} (\tilde{g'}_{ab})^2 + \text{Sch}\left(\tilde{g}_{ab};\tilde{u}_b\right),
\end{equation}
on every intersection $\tilde{U}_a \cap \tilde{U}_b$. One can conversely show that any projective connection $\tilde{R}$ on $Y$ defines a $\CPl$-structure $P \in \mathcal{P}(Y)$ as follows (see e.g. \cite[Proposition~3.3]{shengyuannote}): Let $\tilde{R} = \{\tilde{r}_a\}_{a \in A}$ be a holomorphic projective connection with respect to complex-analytic atlas $\big\{(\tilde{U}_a, \tilde{u}_a)\big\}_{a \in A}$. On each open subset $\tilde{U}_a$, let $\zeta_a$ be any solution of the equation 
\begin{equation}
\text{Sch}\left(\zeta_a;\tilde{u}_a\right) = - \tilde{r}_a.
\end{equation}
The holomorphic functions $\zeta_a$ have nowhere vanishing derivatives, so that we can assume they are injective up to shrinking $\tilde{U}_a$s. Then, the new coordinates $\big\{\zeta_a \circ \tilde{u}_a\big\}_{a \in A}$ define the same complex structure on $Y$. In addition, the Schwarzian derivatives $\text{Sch}\left(\zeta_a \circ \tilde{u}_a;\zeta_b \circ \tilde{u}_b\right)$ can easily be shown to vanish by using $\text{Sch}\left(f \circ g;z\right) = \text{Sch}\left(f ;g(z)\right) (g')^2 + \text{Sch}\left(g;z\right)$. This implies that $\big\{(\tilde{U}_a, \zeta_a \circ \tilde{u}_a)\big\}_{a \in A}$ is an atlas of complex projective structure. A different choice of atlas or a different collection of solutions $\zeta_a$ would define the same complex projective structure. Therefore, one concludes that the set of holomorphic projective connections $\tilde{R}$ on a Riemann surface $Y$ is in bijection with the set of $\CPl$-structures on $Y$.

Similar to the way that we have constructed orbifold $\CPl$-structures on $X_O$ using $H$-invariant projective structures $P^{{}_H}$ on the covering Riemann surface $Y$, we can try to construct projective connections on $X_O$ by starting from $H$-invariant projective connections on $Y$. More concretely, let $\big\{(\tilde{U}_a, \tilde{u}_a)\big\}_{a \in A}$ be a complex-analytic atlas on $Y$ and let $\tilde{R}^{{}_H}$ be a holomorphic projective connection corresponding to the $H$-invariant $\CPl$-structure $P^{{}_H} \in \mathcal{P}^{{}_H}(Y)$. The projective connection $\tilde{R}^{{}_H}$ is given by a collection $\{\tilde{r}^{{}_H}_a\}_{a \in A}$ of holomorphic functions $\tilde{r}^{{}_H}_a$ supported on each open subset $\tilde{U}_a$ that are invariant under the action of Galois groups $H_a \subset H$ of restrictions $\varpi|_{_{\tilde{U}_a}}: \tilde{U}_a \to U_a \cong \tilde{U}_a/H_a \subset X_O$ and satisfy \eqref{proj} on every overlap $\tilde{U}_a \cap \tilde{U}_b$. Let $\tilde{U}_a$ be an open subset containing only one ramification point of the covering $\varpi: Y \to X_O$ with multiplicity $m_a$ such that $H_a := \operatorname{Gal}(\varpi|_{_{\tilde{U}_a}}) \cong \mathbb{Z}_{m_a}$. If $\tilde{r}^{{}_H}_a$ is one of the holomorphic functions defining $\tilde{R}^{{}_H}$ that is supported on $\tilde{U}_a$ and is left invariant by the Galois group $\mathbb{Z}_{m_a}$ for $\varpi|_{_{\tilde{U}_a}}: \tilde{U}_a \to U_a \cong \tilde{U}_a/\mathbb{Z}_{m_a}\subset X_O$, then $\tilde{r}^{{}_H}_a$ descends, by the map $\varpi|_{_{\tilde{U}_a}}$, to a meromorphic function with at most a pole of order 2 at singular point $x_a \in U_a$. In other words, $\tilde{r}^{{}_H}_a = (\varpi|_{_{\tilde{U}_a}})^{\ast} r_a$, where $r_a$ is a meromorphic function on $U_a$ with pole only at $x_a$ of order at most two. In fact, if $x_a \in U_a$ is a singular point of order $m_a$ and $u_a(x_a) = 0$, the behavior of $r_a$ near this singularity is given by
\begin{equation}
r_a(u_a) = \frac{1-1/m_a^2}{u_a^2} + \order{|u_a|^{-1}} \quad \text{as} \quad u_a \to 0.
\end{equation}
If $m_a \neq \infty$, the monodromy in $\PSLC$ around this \emph{regular singularity} will be given by multiplication with $e^{\frac{2\pi i}{m_a}}$ while if $m_a = \infty$, the monodromy around this singularity will be given by a non-trivial parabolic element. The collection $\{r_a\}_{a \in A}$ of such meromorphic functions will define a \emph{quasi-bounded projective connection} $R$ on the underlying Riemann surface $X_O$ which is in bijection with orbifold $\CPl$-structure on this surface (see e.g. \cite{faltings1983real,luo1993monodromy,gupta2021monodromy} for more details). A quasi-bounded projective connection $R$ naturally determines a second-order linear differential equation on the Riemann surface $X_O$, the Fuchsian differential equation
\begin{equation}\label{FuchsianODE}
\dv[2]{\mathsf{y}_a}{u_a} + \frac{1}{2} r_a \mathsf{y}_a = 0,
\end{equation}
where $\{\mathsf{y}_a\}_{a \in A}$ is understood and as a multi-valued meromorphic differential of order $-1/2$ on $X_O$.

Last but not least, the difference between two projective connections is a meromorphic quadratic differential with only simple poles --- i.e., a collection $\{q_a\}_{a \in A}$ of meromorphic functions on each open subset $U_a \in X_O$ with the transformation law
\begin{equation}
q_b  = q_a \circ g_{ab} (g'_{ab})^2,
\end{equation}
and the additional condition that $q_a(u_a) = \order{|u_a|^{-1}}$ as $u_a \to 0$, if $x_a \in U_a$ is a singular point and $u_a(x_a)=0$. Conversely, we can add a meromorphic quadratic differential to a given quasi-bounded projective connection $R$ to obtain a new quasi-bounded projective connection. Since we know that each Riemann orbisurface has at least one $\CPl$-structure, the one given by Poincar\'{e}-Koebe uniformization, we will have the Proposition \ref{Biswas} (see \cite{biswas2006orbifold}).
\section{Asymptotics Near Elliptic and Parabolic Fixed Points}\label{Apx:Asymptotics}
In this appendix, we will sketch the derivation of the asymptotics of the Liouville field $\varphi$ in a neighborhood of each of the parabolic and elliptic points (see section 2 of \cite{Montplet2016RiemannRochII} as well as the proof of lemma 2 in \cite{Zograf1988ONLE} and lemma 4 in \cite{Takhtajan:2001uj}). 
One of the remarkable corollaries of the uniformization theorem is that the orbifold Riemann surface $O^{\text{reg}}$ has a unique metric of constant curvature $-1$ compatible with the complex structure. It is the projection on $O$ of the Poincar\'{e} metric $(\Im z)^{-2} |\dd{z}|^2$ on $\UHP$ and has the form $\dd{s^2} = e^{\varphi(w)} |\dd{w}|^2$. The condition that the curvature is constant and equal to $-1$ means that the function $\varphi$ satisfies the Liouville's equation on $O^{\text{reg}}$,
\begin{equation}\label{extra8}
\partial_w \partial_{\bar{w}} \varphi = \frac{1}{2} e^{\varphi}.
\end{equation}
The two asymptotic forms of metric \eqref{cuspcoordinate} and \eqref{conersmetric} in an $\epsilon$-neighborhood of cusps and cones enable us to find the asymptotic form of the Liouville field $\varphi$ near these  parabolic and elliptic fixed-points:
\begin{itemize}
	\item \emph{Asymptotic form of $\varphi(w)$ near cusps}:
	\begin{equation}
	\varphi(w) = \left\{
	\begin{split}
	& -2 \log |w-w_i| - 2 \log \left| \log |w-w_i|\right| + \order{1} \quad(w_i \neq \infty),\qquad & w \to w_i,\\
	& -2 \log |w| - 2 \log \log |w| + \order{1}, & w \to \infty;
	\end{split}
	\right.
	\end{equation}
	\item \emph{Asymptotic form of $\varphi(w)$ near cones}:
	\begin{equation}
	\varphi(w) = \left\{
	\begin{split}
	& -2 (1-\frac{1}{m_i})\log |w-w_i| + \order{1} \quad(w_i \neq \infty),\qquad & w \to w_i,\\
	\end{split}
	\right.
	\end{equation}
\end{itemize}
One can also derive these asymptotics by studying the mapping $J$, called \emph{Klein's Hauptmodule}, as a meromorphic function on $\UHP$ which is automorphic with respect to the Fuchsian group $\Gamma$-- i.e.
\begin{equation}\label{J} 
J(\gamma z) = J(z) \hspace{1cm}\text{for}\hspace{.2cm} z \in \UHP \hspace{.2cm}\text{and}\hspace{.2cm} \forall \gamma \in \Gamma
\end{equation}
For the sake of simplicity, let us assume that the Fuchsian group $\Gamma$ has genus $0$; we can choose a standard system of $n_e$ elliptic generators $\tau_1,\dots,\tau_{n_e}$ of orders $m_1,\dots,m_{n_e}$ and $n_p$ parabolic generators $\kappa_{1}, \dots, \kappa_{n_p}$ ($n_e+n_p =n$) satisfying the single relation $\tau_1 \dotsm \tau_{n_e}\kappa_{1} \dotsm \kappa_{n_p} = 1$.  Let $z_{1}, \dots, z_{n_e} \in \hat{\cmpx}$ and $z_{n_e+1}, \dots, z_n \in \mathbb{R} \cup \{\infty\}$ be the fixed points of elliptic and parabolic generators respectively which project into $w_1, \dots, w_{n_e}, w_{n_e+1}, \dots, w_{n} \in \hat{\cmpx}$. We will assume that $z_{n-2}= 0, z_{n-1}=1,$ and $z_{n}=\infty$, which can always be achieved by conjugation with $\PSLR$. The elements   $\tau_1,\dots,\tau_{n_e}$ and $\kappa_{1}, \dots, \kappa_{n_p}$ are then represented in their \emph{matrix form} as
\begin{equation}\label{generators}
\tau_i =  \mqty(z_{i} - \lambda_i \bar{z}_{i} & (\lambda_i -1) z_{i} \bar{z}_{i} \\ 1-\lambda_i & \lambda_i z_{i} - \bar{z}_{i}), \quad
\kappa_j = \mqty(1+\delta_{n_e+j}\hspace{.3mm} z_{n_e+j} & - \delta_{n_e+j}\hspace{.3mm} z_{n_e+j}^2 \\ \delta_{n_e+j}  & 1-\delta_{n_e+j}\hspace{.3mm} z_{n_e+j}), \quad \kappa_{n_p} = \mqty(1 &  \delta_n \\ 0 & 1),
\end{equation}
with $i=1,\dots,n_e$ and $j=1,\dots,n_p-1$. In the above equation, $\lambda_1,\dots, \lambda_{n_e} = e^{\frac{2 \pi i}{m_1}}, \dots, e^{\frac{2\pi i}{m_{n_e}}}$ are called the \emph{multipliers} of $\tau_1, \dots, \tau_{n_e}$ and $\delta_{n_e+1}, \dots, \delta_{n} \in \mathbb{R}$ are called \emph{translation lengths} of $\kappa_{1},\dots,\kappa_{n_p}$. Therefore, according to the Eq.\eqref{J} and using \eqref{generators}, it is easy to see
that in a neighborhood of each elliptic point $z_{i}$, $i=1, \dots, n_e$ with ramification index $m_i$, the Hauptmodule can be expanded as
\begin{equation}\label{Hauptmodulecone}
J(z) = w_{i} + \sum_{k=1}^{\infty} J^{(i)}_k  \left(\frac{z-z_{i}}{z-\bar{z}_{i}}\right)^{k m_i},
\end{equation}
and in a neighborhood of each of the parabolic points $z_{i}$, $i=n_e+1,\dots,n-1$:
\begin{equation}\label{Hauptmodulecusp}
J(z) = w_{i} + \sum_{k=1}^{\infty} J^{(i)}_k \exp(-\frac{2 \pi \sqrt{-1}k}{|\delta_{i}|(z-z_{i})}).
\end{equation}
Similarly, in a neighborhood of the parabolic point $z_n = \infty$, the function $J(z)$ can be expanded as 
\begin{equation}\label{Hauptmodulecuspinfty}
J(z) = \sum_{k=-1}^{\infty} J^{(n)}_k \exp(\frac{2 \pi \sqrt{-1}k z}{|\delta_n|}).
\end{equation}
In equations~\eqref{Hauptmodulecone}, \eqref{Hauptmodulecusp} and \eqref{Hauptmodulecuspinfty}, the coefficients $J^{(i)} _{1}, J^{(n)}_{-1}\neq 0$ because the mapping $J$ is univalent in any fundamental domain for $\Gamma$. If we choose elements $\varsigma_{n_e+j}, \dots, \varsigma_n \in \PSLR$ such that $\varsigma_{n_e+j} (\infty) = z_{n_e+j}$ and
\begin{equation*}
\varsigma_{n_e+j}^{-1}\hspace{1mm} \kappa_{j}\hspace{1mm} \varsigma_{n_e+j} = \mqty(1 & \pm 1 \\ 0 & 1),
\end{equation*}
for $j=1,\dots,n_p$, we can also rewrite \eqref{Hauptmodulecusp} and \eqref{Hauptmodulecuspinfty} in the form
\begin{equation}\label{diagJexpansion}
J(\varsigma_{i} z) = \left\{
\begin{split}
& w_{i} + \sum_{k=1}^{\infty} J^{(i)}_k \exp(2 \pi \sqrt{-1}k z), &\hspace{.5cm} i=n_e+1\dots n-1,\\
& \sum_{k=-1}^{\infty} J^{(i)}_k \exp(2 \pi \sqrt{-1} k z), \qquad & i=n.
\end{split}
\right.
\end{equation}
The hyperbolic metric $e^{\varphi(w)} |\dd{w}|^2$ on $O$ is given by
\begin{equation}\label{extra9}
e^{\varphi(w)} = \frac{\left|J^{-1}(w)'\right|^2}{\left(\Im J^{-1}(w)\right)^2},
\end{equation}
and satisfies the Liouville's equation \eqref{Liouvilleequation}. In order to find the asymptotic behavior of the Liouville field $\varphi(w)$ near branch points and cusps, we need to study the \emph{multivalued} analytic function $J^{-1}: O \to \UHP$ which is a locally univalent linearly polymorphic function\footnote{Linearly polymorphic means that the branches of this function are connected by linear fractional transformations in $\Gamma$.} on $O$. Let us first calculate the expansion of $J^{-1}(w)$ near elliptic fixed points. To do so, we will rewrite the expansion \eqref{Hauptmodulecone} as
\begin{equation}
J = w_{i} + \sum_{k=1}^{\infty} J^{(i)}_k  \left(\mathsf{u}_{\UD}^{m_i}\right)^k,
\end{equation}
where $i=1,\dots,n_e$ and $\mathsf{u}_{\UD} \equiv (z - z_{i})/(z-\bar{z}_{i})$ is the coordinate on $\UD$ instead of $\UHP$. Formally, by inverting the above power series, we get
\begin{multline}\label{Jtildeinv}
\mathsf{u}_{\UD}=\tilde{J}^{-1}(w)= \left(\frac{1}{J^{(i)}_1 }\right)^{\frac{1}{m_i}} (w-w_{i})^{\frac{1}{m_i}} - \frac{J^{(i)}_2}{m_i \left(J^{(i)}_1\right)^{2+\frac{1}{m_i}}} (w-w_{i})^{1+\frac{1}{m_i}} \\ + \frac{-2 m_i J^{(i)}_1 J^{(i)}_3 + \left(J^{(i)}_2\right)^2 (1 + 3 m_i)}{2 m_i^2 \left(J^{(i)}_1\right)^{4 + \frac{1}{m_i}}} (w-w_{i})^{2+\frac{1}{m_i}} + \cdots 
\end{multline}
Then, by using the definitions $\mathsf{u}_{\UD}$, we have
\begin{equation}\label{JInversecone}
J^{-1}(w) = z_{i}  + 2 \sqrt{-1} \, \Im z_{i} \left(\frac{w-w_{i}}{J^{(i)}_1}\right)^{\frac{1}{m_i}} \left(1+ \frac{J^{(i)}_2}{m_i \left(J^{(i)}_1\right)^2} (w-w_{i}) + \cdots \right).
\end{equation}
We can use the same method to find the expansion of $J^{-1}$ near cusps as well. Near parabolic points by rewriting the equations  \eqref{Hauptmodulecusp} and \eqref{Hauptmodulecuspinfty} as a power series in $\iota_{j \neq n} \equiv \exp(-\frac{2 \pi \sqrt{-1}}{|\delta_{j}|(z-z_{j})})$ and $\iota_n \equiv \exp(\frac{2 \pi \sqrt{-1} z}{|\delta_n|})$, we get 
\begin{equation*}
J(z) = \left\{
\begin{split}
& w_{j} + \sum_{k=1}^{\infty} J^{(j)}_k  \iota_{j}^k, &\hspace{.5cm} j=n_e+1,\dots,n-1 ,\\
& \sum_{k=-1}^{\infty} J^{(j)}_k  \iota_{j}^k , \qquad & j=n.
\end{split}
\right.
\end{equation*}
As before, we can formally invert the above power series  ($j=n_e+1,\dots,n-1$)
\begin{equation*}
\left\{
\begin{split}
&\iota_{j} = \frac{1}{J_{1}^{(j)}} (w-w_{j}) - \frac{J_{2}^{(j)}}{\left(J_{1}^{(j)}\right)^3} (w-w_{j})^2\\
&\hspace{1.5cm} +  \frac{2 \left(J_{2}^{(j)}\right)^2 - J_{1}^{(j)} J_{3}^{(j)}}{\left(J_{1}^{(j)}\right)^5} (w-w_{j})^3 + \cdots,\hspace{2.7cm} w \to w_{j}, \\
& \iota_n = \frac{J_{-1}^{(n)}}{w} + \frac{J_{-1}^{(n)} J_{0}^{(n)}}{w^2} + \frac{ J_{-1}^{(n)} \left( \left( J_{0}^{(n)}\right)^2 - J_{-1}^{(n)} J_{1}^{(n)}\right)}{w^3} + \cdots,   \hspace{1.2cm} w \to \infty
\end{split}
\right.
\end{equation*}
and again using the definitions $\iota_{j \neq n}$ and $\iota_n$ , to see ($j=n_e+1,\dots,n-1$)
\begin{equation*}\label{JInversecusp}
J^{-1}(w) = \left\{
\begin{split}
& z_{j} - \frac{2 \pi \sqrt{-1}}{|\delta_{j}|} \bigg(\log(\frac{w-w_{j}}{J_1^{(j)}}) - \frac{J_2^{(j)}}{\left(J_1^{(j)}\right)^2} (w-w_{j})\hspace{1mm}+ \\
&\hspace{1cm}+ \frac{\frac{3}{2} \left(J_{2}^{(j)}\right)^2 - J_{1}^{(j)} J_{3}^{(j)}}{\left(J_{1}^{(j)}\right)^4} (w-w_{j})^2 + \cdots\bigg)^{-1}, \hspace{2.2cm}w \to w_{j},\\
&\frac{|\delta_n|}{2 \pi \sqrt{-1}} \left(\log(\frac{J_{-1}^{(n)}}{w}) + \frac{J_{0}^{(n)}}{w} + \frac{\frac{1}{2}\left( J_{0}^{(n)}\right)^2 - J_{-1}^{(n)} J_{1}^{(n)}}{w^2} + \cdots \right), \hspace{.2cm} w \to \infty.
\end{split}
\right.
\end{equation*}
Let us identify the accessory parameters as (see \emph{lemma 1} of  \cite{Zograf1988ONLE} and \emph{lemma 3} of  \cite{Takhtajan:2001uj})
\begin{equation*}
\left\{
\begin{split}
&c_{i} \equiv -\frac{h_i  J^{(i)}_2}{\left(J^{(i)}_1\right)^2}, \hspace{2.2cm} i=1,\dots,n_e,\\
& c_{i} \equiv - \frac{J^{(i)}_2}{\left(J^{(i)}_1\right)^2},  \hspace{2,1cm}i=n_e+1,\dots,n-1,\\
& c_n \equiv J_0^{(i)},  \hspace{3.1cm} i=n,
\end{split}
\right.
\end{equation*}
where $h_i/2 \equiv (1 - 1/m_i^2)/2$ is the \emph{conformal weight} of the \emph{twist operators} corresponding to branch points \cite{Knizhnik:1987xp}. Accordingly, we can summarize the asymptotic behavior of the $J^{-1}(w)$ near conical singularities and cusps ( $i=1,\dots,n_e$ ,  $j=n_e+1,\dots,n-1$)
\begin{equation}\label{JInverse}
J^{-1}(w) = \left\{
\begin{split}
& z_{i}  + 2 \sqrt{-1} \, \Im z_{i} \left(\frac{w-w_{i}}{J^{(i)}_1}\right)^{\frac{1}{m_i}} \left(1-\frac{c_{i}}{m_i h_i} (w-w_{i}) + \cdots \right), \hspace{1.2cm} w \to w_{i}, \\
& z_{j} - \frac{2 \pi \sqrt{-1}}{|\delta_{j}|} \left(\log(\frac{w-w_{j}}{J_1^{(j)}}) - c_{j} (w-w_{j}) + \cdots\right)^{-1}\hspace{-.5cm},\hspace{2.5cm} w \to w_{j},\\
&\frac{|\delta_n|}{2 \pi \sqrt{-1}} \left(\log(\frac{J_{-1}^{(n)}}{w}) + \frac{c_n}{w} + \cdots \right), \hspace{5.2cm} w \to \infty .
\end{split}
\right.
\end{equation}
Finally, we are ready to derive the asymptotic behavior of the Liouville filed $\varphi(w)$ and its derivatives  using Eq.\eqref{extra9} and the above expansion near the cones and cusps:
\begin{lemma}\label{lemma:asymptotics}
	The function $\varphi(w)$ has the following properties:
	\begin{enumerate}
		\item  $\partial_w^2 \varphi - \frac{1}{2} (\partial_w \varphi)^2 = \text{Sch}\left(J^{-1};w\right)$;
		\item
		For $i=1,\dots,n_e$ and $j=n_e+1,\dots,n-1$,
		\begin{equation*}
		\varphi(w) = \left\{
		\begin{split}
		& -2 (1-\frac{1}{m_i}) \log|w-w_{i}| + \log \frac{4|J^{(i)}_1|^{-\frac{2}{m_i}}}{m_i^2} + 	\Sorder{1} \hspace{1.4cm}w \to w_{i}\\
		& -2 \log |w-w_{j}| - 2 \log\left|\log\left|\frac{w-w_{j}}{J^{(j)}_{1}}\right|\right| + \Sorder{1} \hspace{1.9cm} w \to w_{j},\\
		&-2 \log|w| - 2 \log\log \left|\frac{w}{J^{(n)}_{-1}}\right| + \order{|w|^{-1}}, \hspace{2.8cm} w\to\infty
		\end{split}
		\right.
		\end{equation*}
		\item 
		For $i=1,\dots,n_e$ and $j=n_e+1,\dots,n-1$,
		\begin{equation*}
		\left\{
		\begin{split}
		&\frac{4|J^{(i)}_1|^{-\frac{2}{m_i}}}{m_i^2} = \lim_{w \to w_{i}} e^{\varphi(w)} |w-w_{i}|^{2-\frac{2}{m_i}},\\
		& \left|J^{(j)}_{1}\right|^2 = \lim_{w \to w_{j}} \exp(\log|w-w_{j}|^2 - \frac{2 e^{-\frac{\varphi(w)}{2}}}{|w-w_{j}|}),\\
		& \left|J^{(n)}_{-1}\right|^2 = \lim_{w \to \infty} \exp(\log|w|^2 - \frac{2 e^{-\frac{\varphi(w)}{2}}}{|w|}),
		\end{split}
		\right.
		\end{equation*}
		\item
		For $i=1,\dots,n_e$ and $j=n_e+1,\dots,n-1$,
		\begin{equation*}
		\partial_w \varphi(w) = \left\{
		\begin{split}
		&- \frac{1-\frac{1}{m_i}}{w-w_{i}}+ \frac{c_{i}}{1-\frac{1}{m_i}} + \order{1} \hspace{4.8cm}w \to w_{i},\\
		&-\frac{1}{w-w_{j}} \left(1+\left(\log\left|\frac{w-w_{j}}{J_{1}^{(j)}}\right|\right)^{-1}\right) + c_{j}  + \order{1} \hspace{1.2cm}w \to w_{j},\\
		& -\frac{1}{w} \left(1+\left(\log\left|\frac{w}{J_{-1}^{(n)}}\right|\right)^{-1}\right) - \frac{c_n}{w^2} + \order{\frac{1}{|w|^2}}, \hspace{1.5cm} w \to \infty
		\end{split}
		\right.
		\end{equation*}
		\item
		For $i=1,\dots,n_e$ and $j=n_e+1,\dots,n-1$,
		\begin{equation*}
		\partial_w^2 \varphi(w) = \left\{
		\begin{split}
		& \frac{1-\frac{1}{m_i}}{(w-w_{i})^2}+ \dotsm \hspace{8cm}w \to w_{i},\\
		& \frac{1}{(w-w_{j})^2} + \frac{1}{(w-w_{j})^2 \log\left|\frac{w-w_{j}}{J_{1}^{(j)}}\right|}+ \frac{1}{(w-w_{j})^2 \log^2\left|\frac{w-w_{j}}{J_{1}^{(j)}}\right|} + \dotsm\\
		&\hspace{10.6cm}w \to w_{j},\\
		&\frac{2 c_n}{w^3} + \frac{1}{w^2} + \frac{1}{w^2 \log\left|\frac{w}{J_{-1}^{(n)}}\right|}+ \frac{1}{w^2 \log^2\left|\frac{w}{J_{-1}^{(n)}}\right|} + \dotsm, \hspace{2.5cm}w \to \infty
		\end{split}
		\right.
		\end{equation*}
	\end{enumerate}
\end{lemma}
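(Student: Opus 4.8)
The plan is to derive all five statements of Lemma~\ref{lemma:asymptotics} as essentially algebraic consequences of the inversion formulas \eqref{JInverse} for $J^{-1}(w)$ near the fixed points, combined with the expression \eqref{extra9} for the hyperbolic metric $e^{\varphi(w)} = |J^{-1}(w)'|^2/(\Im J^{-1}(w))^2$. The first statement, $\partial_w^2 \varphi - \tfrac{1}{2}(\partial_w \varphi)^2 = \text{Sch}(J^{-1};w)$, is the cleanest: taking the logarithm of \eqref{extra9} gives $\varphi = \log|J^{-1}{}'|^2 - 2\log \Im J^{-1}$, and a direct computation of $\partial_w^2\varphi - \tfrac12(\partial_w\varphi)^2$ using $\partial_w \log \Im J^{-1} = \tfrac{1}{2\sqrt{-1}}\,(J^{-1})'/\Im J^{-1}$ collapses to the Schwarzian by the standard identity; this is independent of the local behavior and I would prove it first as a global statement on $\Omega^{\text{reg}}$.

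For the remaining statements I would treat the three types of fixed points (conical $i=1,\dots,n_e$; finite cusps $j=n_e+1,\dots,n-1$; the cusp at infinity) in parallel, substituting the corresponding branch of \eqref{JInverse} into \eqref{extra9}. The key computational input is that near a conical point, $J^{-1}(w) - z_i \sim 2\sqrt{-1}\,\Im z_i\,\big((w-w_i)/J_1^{(i)}\big)^{1/m_i}$, so $\Im J^{-1}(w) \sim \Im z_i\,|(w-w_i)/J_1^{(i)}|^{1/m_i}$ to leading order, while $(J^{-1})'(w) \sim \tfrac{2\sqrt{-1}\,\Im z_i}{m_i J_1^{(i)}}\,\big((w-w_i)/J_1^{(i)}\big)^{1/m_i - 1}$. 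Feeding these into \eqref{extra9} yields $e^{\varphi}\sim \tfrac{4}{m_i^2}|J_1^{(i)}|^{-2/m_i}\,|w-w_i|^{2/m_i - 2}$, which is precisely statement~2 (conical case), and rearranging gives statement~3. Statements~4 and~5 then follow by differentiating the expansion of statement~2 term by term, keeping track of the subleading $c_i$-contributions that appear once the next order in \eqref{JInverse} is retained; this is where the accessory parameters enter, through the $-\tfrac{c_i}{m_i h_i}(w-w_i)$ correction in the conical inversion formula and the analogous $-c_j(w-w_j)$ and $c_n/w$ corrections in the cuspidal ones.

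The cuspidal cases require more care because of the logarithmic terms: here $J^{-1}(w) - z_j \sim -\tfrac{2\pi\sqrt{-1}}{|\delta_j|}\big(\log\tfrac{w-w_j}{J_1^{(j)}}\big)^{-1}$, so $\Im J^{-1}$ is governed by $\big(\log|w-w_j/J_1^{(j)}|\big)^{-1}$ rather than a pure power, producing the $-2\log|\log|(w-w_j)/J_1^{(j)}||$ term in statement~2 and the nested logarithmic denominators $\big(\log|\cdots|\big)^{-1}$ and $\big(\log^2|\cdots|\big)^{-1}$ in statements~4 and~5. I expect the main obstacle to be the bookkeeping in these logarithmic expansions: one must carefully expand $\log(J^{-1}{}')$ and $\log \Im J^{-1}$ to sufficient order, distinguish between $\log(w-w_j)$ and $\log|w-w_j|$ (holomorphic versus real-analytic), and verify that the real and imaginary parts recombine correctly so that the $\Sorder{1}$ and $\order{|w|^{-1}}$ remainder estimates in statement~2 hold uniformly. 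The reciprocal structure $(\Im J^{-1})^{-2}$ in the metric means that small errors in the logarithmic subleading terms propagate, so the delicate point is controlling the remainders rather than the leading asymptotics, which are straightforward. Throughout, I would lean on the references cited in the appendix (lemma~2 of \cite{Zograf1988ONLE} and lemma~4 of \cite{Takhtajan:2001uj} for the cuspidal estimates, and section~2 of \cite{Montplet2016RiemannRochII} for the conical ones), adapting their remainder-estimation arguments to the present notation.
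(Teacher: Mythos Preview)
Your proposal is correct and follows essentially the same route as the paper: parts~2--5 are read off by substituting the local inversion formulas \eqref{JInverse} into the metric formula \eqref{extra9}, exactly as you outline. The one (minor) difference is in the handling of part~1: the paper's proof says part~1 ``follows from parts 2--5 and the definition of the Schwarzian derivative,'' whereas you prove the identity $\partial_w^2\varphi - \tfrac12(\partial_w\varphi)^2 = \text{Sch}(J^{-1};w)$ directly and globally from $\varphi = \log|(J^{-1})'|^2 - 2\log\Im J^{-1}$. Your route is arguably cleaner here, since part~1 is a pointwise identity on all of $\Omega^{\text{reg}}$ rather than an asymptotic statement near the singular points, and the direct Schwarzian computation you describe is the standard and most transparent way to see it.
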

\begin{proof}
	Parts 2-5 follow from \eqref{extra9} and \eqref{JInverse}. Also, the part 1 follows from parts 2-5 and the definition of \emph{Schwarzian derivative} in \eqref{Sch}.
\end{proof}
\begin{remark}\label{extraa}
	In this paper, we sometimes need to study the behavior of objects in the case when the point at the infinity is a conical point. So it is good to clear things out in that case as well. In this remark, we assume that the point $w_n$ is a conical point at infinity. Instead of \eqref{Hauptmodulecone} we have:
	\begin{equation*} 
	J(z) = \sum_{k=-1}^{\infty} J^{(n)}_k  \left(\mathsf{u}_{\UD}^{m_n}\right)^k.
	\end{equation*}
	By inverting this expansion, we arrive at
	\begin{equation*}
	\begin{aligned}
	\mathsf{u}_{\UD}&=\left(J_{-1}^{(n)}\right)^{\frac{1}{m_n}}\left(\frac{1}{w}\right)^{\frac{1}{m_n}}+\frac{\left(J_{-1}^{(n)}\right)^{\frac{1}{m_n}}J_0^{(n)}}{m_n}\left(\frac{1}{w}\right)^{1+\frac{1}{m_n}}+\cdots,\\
	\end{aligned}
	\end{equation*}
	which, according to the  \emph{lemma 3} of \cite{Takhtajan:2001uj}, it implies that in the case of the conical point at the infinity, the accessory parameter becomes
	\begin{equation}
	c_n \equiv h_n J_0^{(n)},
	\end{equation}
	with $h_n\equiv 1-1/m_n^2$.
	Next point of interest is the part 3 of lemma \ref{lemma:asymptotics} which implies that for conical points $w_i$ that are not at infinity, we have
	\begin{equation*}
	\Lponetial_{i}=|J^{(i)}_1|^{\frac{2}{m_i}}=\frac{4}{m^2_i}\lim_{w\rightarrow w_i}e^{-\varphi(w)} |w-w_{i}|^{-2+\frac{2}{m_i}}.
	\end{equation*} 
	This expression is also subject to a slight change for $\Lponetial_{i}$ when $w_i\to\infty$. Considering $\mathsf{u}_{\UD}=(z-z_i)/(z-\bar{z}_i)$ and $J^{-1}(w)=z$ to write $\mathsf{u}_{\UD}=(J^{-1}(w)-z_i)/(z-\bar{z}_i)$ implies that
	\begin{equation*}
	J^{-1}(w)=z_i+2\sqrt{-1}\Im z_i\left(\left(J_{-1}^{(n)}\right)^{\frac{1}{m_n}}\left(\frac{1}{w}\right)^{\frac{1}{m_n}}+\frac{\left(J_{-1}^{(n)}\right)^{\frac{1}{m_n}}J_0^{(n)}}{m_n}\left(\frac{1}{w}\right)^{1+\frac{1}{m_n}}+\cdots\right).
	\end{equation*}
	Now from \eqref{extra9} and the equation above we have
	\begin{equation*}
	e^{\varphi(w)}=\frac{4}{m_n^2}\left(J_{-1}^{(n)}\right)^{\frac{2}{m_n}}\left(\frac{1}{w}\right)^{2\left(1+\frac{1}{m_n}\right)}+\cdots.
	\end{equation*}
	Accordingly, 
	\begin{equation}
	\Lponetial_{n}=\left|J_{-1}^{(n)}\right|^{\frac{2}{m_n}}=\frac{m^2_n}{4}\lim_{w\rightarrow \infty}e^{\varphi(w)} |w|^{2(1+\frac{1}{m_n})}.
	\end{equation}
\end{remark}
\subsection{Quadratic and Beltrami Differentials}\label{Apx:QuadraticBeltrami}
In this appendix, we will study the behavior of quadratic differentials, i.e., cusp forms of weight 4, and harmonic Beltrami differentials, automorphic forms of weight (-2,2), near elliptic and parabolic fixed points and connect them with functions on the Riemann orbisurface $O$ (see \cite{Kra_1972,Lehner_1964} for more details). Let us begin by studying the behavior of quadratic differential $q(z) \in \Hilbert^{2,0}(\UHP,\Gamma)$ as $z \to \infty$: Parabolic generator $\kappa_{n_p} \in \Gamma$ with the fixed-point $z_n = \infty$ has the following normal form (see \eqref{generators}):
\begin{equation*}
\kappa_{n_p}(z) = z + \delta_n,
\end{equation*}
i.e., it is a translation by $\delta_n$. Therefore, the equation $q(\kappa_{n_p} z) \kappa'_{n_p}(z)^2 = q(z)$ satisfied by the quadratic differential $q(z)$ can be translated as
\begin{equation}\label{quaddiffinfty}
q(z+\delta_n) = q(z).
\end{equation}
It is then easy to verify that the following Fourier series expansion
\begin{equation}
q(z) = \sum_{k=1}^{\infty} q_k^{(n)} \exp(\frac{2 \pi \sqrt{-1} k z}{|\delta_n|}),
\end{equation}
satisfies the equation \eqref{quaddiffinfty} and therefore represents the behavior of $q(z)$ near $z_n = \infty$. In order to study the behavior of $q(z)$ near the fixed points of other parabolic generators $\kappa_{1},\dots,\kappa_{n_p-1}$, we note that the M\"{o}bius transformation
\begin{equation*}
\PSLR \ni \varsigma_{n_e+j} (z) = \frac{1}{z-z_{n_e+j}}, \qquad j=1,\dots,n_p-1,
\end{equation*} 
sends that fixed-point to $\infty$ and we have $\varsigma_{n_e+j}^{-1}\hspace{1mm} \kappa_{j}\hspace{1mm} \varsigma_{n_e+j} (z) = z + \delta_{n_e+j}$. As a result, the normal form of the parabolic generators $\kappa_{j}$ for $j=1,\dots, n_p-1$ is given by
\begin{equation*}
\frac{1}{\kappa_{j}(z)-z_{n_e+j}} = \frac{1}{z-z_{n_e+j}} + \delta_{n_e+j},
\end{equation*} 
and we have that $q(z)$ satisfies
\begin{equation*}
q\left(z_{n_e+j} + \frac{(z-z_{n_e+j})}{1+ \delta_{n_e+j} (z-z_{n_e+j})}\right) \frac{1}{\left(1+\delta_{n_e+j}(z-z_{n_e+j})\right)^4} = q(z), 
\end{equation*}
near parabolic points $z_{n_e+1},\dots,z_{n-1}$. Therefore, as $z \to z_{i}$ the quadratic differential has the following Fourier series expansion
\begin{equation}
q(z) = \frac{1}{(z-z_{i})^4} \sum_{k=1}^{\infty} q_k^{(i)} \exp(-\frac{2 \pi \sqrt{-1} k}{|\delta_{i}|(z-z_{i})}), \qquad  i=n_e+1,\dots,n-1
\end{equation}
To study the behavior of differentials near the elliptic fixed points $z_{1},\dots,z_{n_e}$, it is easier to first use the unit disk model of the hyperbolic plane. Let $\tau_i \in \Gamma$ be an elliptic generator of order $m_i$ with fixed points $z_{i}\in \UHP$ and $\bar{z}_{i} \in \bar{\UHP}$. Then, the M\"{o}bius transformation
\begin{equation}\label{UHPtoUD}
\PSLR \ni \varsigma_{i} (z) = \frac{z-z_{i}}{z-\bar{z}_{i}}, \qquad i=1,\dots,n_e,
\end{equation} 
sends the points $(z_{i},\bar{z}_{i})$ to $(0,\infty)$ and therefor $\varsigma_{i}^{-1} \hspace{.5mm}\tau_i \hspace{.5mm}\varsigma_{i}(z)$ fixes both $0$ and $\infty$ and should be given by $\varsigma_{i}^{-1}\hspace{.5mm} \tau_i \hspace{.5mm}\varsigma_{i}(z) = \lambda_i z$. For elliptic generators, the \emph{multiplier} $\lambda_i$ can be determined to be the $m_i$-th primitive root of unity, i.e. $\lambda_i = \exp(2 \pi \sqrt{-1}/m_i)$,
via the use of condition $\tau_i^{m_i} = \mathbbm{1}$. Therefore, the normal form of the elliptic generators $\tau_i$ of order $m_i$ is given by
\begin{equation*}
\frac{\tau_i(z)-z_{i}}{\tau_i(z)-\bar{z}_{i}} = e^{\frac{2\pi \sqrt{-1}}{m_i}} \frac{z-z_{i}}{z-\bar{z}_{i}}, \qquad i= 1, \dots, n_e.
\end{equation*}
The M\"{o}bius transformations \eqref{UHPtoUD} give the standard isomorphism $\varsigma_{i}:\UHP \to \UD$; let us denote the coordinate on $\UD$ by $\mathsf{u}_{\UD} = \varsigma_{i}(z)$ and the push-forward of the density of Poincar\'{e} metric, $\rho(z) = (\Im z)^{-2}$, by $\rho(\mathsf{u}_{\UD}) = 4 (1-|\mathsf{u}_{\UD}|^2)^{-2}$. Similarly, if we denote the push-forward of $q(z) \in \Hilbert^{2,0}(\UHP,\Gamma)$ with $q(\mathsf{u}_{\UD}) \in \Hilbert^{2,0}(\UD,\Gamma)$, it satisfies 
\begin{equation*}
q(\lambda_i \mathsf{u}_{\UD})
\hspace{.5mm}\lambda_i^2 = q(\mathsf{u}_{\UD}).
\end{equation*}
It is easy to see that the solution to the above equation has the following power series form in $\mathsf{u}_{\UD}$:
\begin{equation}\label{quaddifdisk}
q(\mathsf{u}_{\UD}) = \sum_{k=1}^{\infty} q^{(i)}_k \mathsf{u}_{\UD}^{k m_i - 2} \qquad \text{as} \quad \mathsf{u}_{\UD} \to 0,
\end{equation}
since $\lambda_i^{m_i}=1$. We can also find the pullback of $q(\mathsf{u}_{\UD}) \in \Hilbert^{2,0}(\UD,\Gamma)$ to $\Hilbert^{2,0}(\UHP,\Gamma)$ by using $q(\mathsf{u}_{\UD}) \dd{\mathsf{u}_{\UD}}^2 = q(z) \dd{z}^2$ -- i.e. $q(z) = q(\mathsf{u}_{\UD}) \left(\dv{\mathsf{u}_{\UD}}{z} \right)^2$,
\begin{equation}
q(z) = - \frac{4 (\Im z_{i})^2}{(z - \bar{z}_{i})^4} \sum_{k=1}^{\infty} q^{(n+j)}_k \left(\frac{z-z_{i}}{z - \bar{z}_{i}}\right)^{k m_i -2} \qquad i=1,\dots,n_e,
\end{equation}
which satisfies the equation
\begin{equation*}
q\left(\frac{- z  \, z_{i} + \bar{z}_{i} \left(\lambda_i z + z_{i} +\lambda_i z_{i}\right)}{z(\lambda_i-1) - \lambda_i z_{i} + \bar{z}_{i}}\right) \hspace{1mm} \lambda_i^2 \frac{\left(\frac{- z  \, z_{i} + \bar{z}_{i} \left(\lambda_i z + z_{i} +\lambda_i z_{i}\right)}{z(\lambda_i-1) - \lambda_i z_{i} + \bar{z}_{i}} - \bar{z}_{i} \right)^4}{(z-\bar{z}_{i})^4} = q(z).
\end{equation*}
We can summarize the behavior of $q(z) \in \Hilbert^{2,0}(\UHP,\Gamma)$ near branch points and cusps as
\begin{equation}\label{q}
q(z) = \left\{
\begin{split}
& - \frac{4 (\Im z_{i})^2}{(z - \bar{z}_{i})^4} \, \sum_{k=1}^{\infty} q^{(i)}_k \left(\frac{z-z_{i}}{z - \bar{z}_{i}}\right)^{k m_i -2} \hspace{.7cm}(i=1,\dots,n_e),\hspace{.5cm}z \to z_{i},\\
& \frac{1}{(z-z_{i})^4} \, \sum_{k=1}^{\infty} q_k^{(i)} \exp(-\frac{2 \pi \sqrt{-1} k}{|\delta_i|(z-z_{i})}) \hspace{.5cm} (i=n_e+1,\dots,n-1),\hspace{.1cm}z \to z_{i},\\
&\sum_{k=1}^{\infty} q_k^{(n)} \exp(\frac{2 \pi \sqrt{-1} k z}{|\delta_n|}),\hspace{4cm} z \to \infty.
\end{split}
\right.
\end{equation}
Using the complex anti-linear isomorphism $\Hilbert^{2,0}(\UHP,\Gamma) \cong \Hilbert^{-1,1}(\UHP,\Gamma)$ given by $q(z) \mapsto \mu(z) = \rho(z)^{-1} \, \overline{q(z)}$,\footnote{Equivalently, $q(\mathsf{u}_{\UD}) \mapsto \mu(\mathsf{u}_{\UD}) = \rho(\mathsf{u}_{\UD})^{-1} \, \overline{q(\mathsf{u}_{\UD})}$ induces the isomorphism $\Hilbert^{2,0}(\UD,\Gamma) \cong \Hilbert^{-1,1}(\UD,\Gamma)$.}  we have 
\begin{equation}\label{Beltramiasymptotics}
\mu(z) = \left\{
\begin{split}
& -\frac{4 (\Im z)^2 (\Im z_{i})^2}{(\bar{z} - z_{i})^4} \, \sum_{k=1}^{\infty} \bar{q}_k^{(i)} \left(\frac{\bar{z}-\bar{z}_{i}}{\bar{z} - z_{i}}\right)^{k m_i -2} \hspace{.5cm}(i=1,\dots,n_e),\hspace{.5cm}z \to z_{i},\\
& \frac{(\Im z)^2}{(\bar{z}-z_{i})^4} \, \sum_{k=1}^{\infty} \bar{q}_k^{(i)}\exp(\frac{2 \pi \sqrt{-1} k}{|\delta_{i}|(\bar{z}-z_{i})}) \hspace{.8cm}(i=n_e+1,\dots,n-1),\hspace{.1cm}z \to z_{i},\\
& (\Im z)^2 \, \sum_{k=1}^{\infty} \bar{q}_k^{(n)} \exp(-\frac{2 \pi \sqrt{-1} k \bar{z}}{|\delta_n|}), \hspace{3.5cm} z \to \infty,\\
\end{split}
\right.
\end{equation}
where we have used the fact that $\bar{z}_{n_e+j} = z_{n_e+j} \in \mathbb{R}$ for $j=1,\dots,n_p-1$. The mapping
\begin{equation}
q(z) \mapsto Q(w) = (q \circ J^{-1})(w) \,  \left(J^{-1}(w)'\right)^{2},
\end{equation} 
determines a linear isomorphism of spaces $\Hilbert^{2,0}(\UHP,\Gamma)$ and $\Hilbert^{2,0}(O)$ and the inverse mapping is given by 
\begin{equation}
Q(w) \mapsto q(z) = (Q \circ J)(z) \, J'(z)^2.
\end{equation}
The Petersson inner product in $\Hilbert^{2,0}(\UHP,\Gamma)$ can be carried over to $\Hilbert^{2,0}(O)$ by setting $\langle Q_1, Q_2 \rangle \equidef  \langle Q_1\circ J J'^2, Q_2 \circ J J'^2 \rangle$ for all $Q_1, Q_2 \in \Hilbert^{2,0}(O)$, so that
\begin{equation}\label{Qinnerprod}
\langle Q_1, Q_2 \rangle = \iint_{\cmpx} e^{-\varphi(w)} Q_1(w)\, \overline{Q_2(w)} \dd[2]{w}.
\end{equation}
It follows from Eq.\eqref{q} and Eq.\eqref{JInverse} that 
\begin{equation*}
\begin{split}
Q(w) & = \sum_{k=1}^{\infty} q_k^{(n)} \exp(\frac{2 \pi \sqrt{-1} k}{|\delta_n|} \times \frac{|\delta_n|}{2 \pi \sqrt{-1}} \left(\log(\frac{J_{-1}^{(n)}}{w}) + \frac{c_n}{w} + \cdots \right))\times\\
& \hspace{4cm} \times \left(\frac{|\delta_n|}{2 \pi \sqrt{-1}} \left(-\frac{1}{w} - \frac{c_n}{w^2} + \cdots \right)\right)^2 \\
& \simeq \left(-\frac{|\delta_n|^2}{4 \pi^2} \left(\frac{1}{w^2} + \frac{2 c_n}{w^3} + \frac{c_n^2}{w^4} + \cdots \right)\right) \sum_{k=1}^{\infty} q_k^{(n)} \left(\frac{J_{-1}^{(n)}}{w}\right)^k\\
& = - \frac{|\delta_n|^2 q_1^{(n)} J_{-1}^{(n)}}{4\pi^2} \cdot \frac{1}{w^3} - \frac{|\delta_n|^2 J_{-1}^{(n)} \big(2 c_n q_1^{(n)} + q_2^{(n)} J_{-1}^{(n)}\big)}{4 \pi^2} \cdot \frac{1}{w^4} + \order{|w|^{-5}} \hspace{.2cm}\text{as} \quad w \to \infty,
\end{split}
\end{equation*}
and
\begin{equation*}
\begin{split}
Q(w) &= \left(- \frac{2 \pi \sqrt{-1}}{|\delta_{i}|} \frac{1}{\log(\frac{w-w_{i}}{J_1^{(i)}}) - c_{i} (w-w_{i}) + \cdots} \right)^{-4} \times \\
&\hspace{-.7cm} \times\sum_{k=1}^{\infty} q_k^{(i)} \exp(-\frac{2\pi \sqrt{-1} k}{|\delta_{i}|}\times - \frac{|\delta_{i}|}{2 \pi \sqrt{-1}} \left(\log(\frac{w-w_{i}}{J_1^{(i)}}) - c_{i} (w-w_{i}) + \cdots \right))\times \\
&\hspace{-.7cm} \times \left(- \frac{2 \pi \sqrt{-1}}{|\delta_{i}|} \frac{-1}{(w-w_{i})\log^2(\frac{w-w_{i}}{J_1^{(i)}})} + \cdots \right)^2\\
&\hspace{-.7cm}=- \frac{|\delta_{i}|^2 q_1^{(i)}}{4 \pi^2 J_1^{(i)}} \cdot \frac{1}{w-w_{i}} + \order{|w-w_{i}|^{-2}}, \hspace{.5cm}(i=n_e+1,\dots,n-1) \hspace{.1cm}\text{as}\hspace{.1cm} w \to w_{i}.
\end{split}
\end{equation*}
Similarly, near branch points $w_{1}, \dots, w_{n_e}$ we have
\begin{equation}
Q(w) = (q \circ \tilde{J}^{-1})(w) \left(\tilde{J}^{-1}(w)'\right)^2,
\end{equation}
where $\tilde{J}^{-1}: O \to \UD$ is the inverse of Klien's Hauptmodule in the unit disk model of the hyperbolic plane. Using equations \eqref{Jtildeinv} and \eqref{quaddifdisk}, we get
\begin{equation*}
\begin{split}
Q(w) &= \sum_{k=1}^{\infty} q^{(i)}_k \left( \left(\frac{1}{J^{(i)}_1 }\right)^{\frac{1}{m_i}} (w-w_{i})^{\frac{1}{m_i}} - \frac{J^{(i)}_2}{m_i \left(J^{(i)}_1\right)^{2+\frac{1}{m_i}}} (w-w_{i})^{1+\frac{1}{m_i}} + \cdots \right)^{k m_i - 2}\times\\
&\hspace{-.5cm}\times \left( m_i^{-1} \left(\frac{1}{J^{(i)}_1 }\right)^{\frac{1}{m_i}} (w-w_{i})^{-1+\frac{1}{m_i}} - (1+\frac{1}{m_i}) \frac{J^{(i)}_2}{m_i \left(J^{(i)}_1\right)^{2+\frac{1}{m_i}}} (w-w_{i})^{\frac{1}{m_i}} + \cdots \right)^{2} \\
& = \frac{q^{(i)}_1}{m_i^2 J^{(i)}_1} \cdot \frac{1}{w-w_{i}} + \order{1} \hspace{1cm} (i=1,\dots,n_e), \hspace{.1cm}\text{as}\hspace{.2cm} w \to w_{n+j}.
\end{split}
\end{equation*}
We can summarize the behavior of $Q(w) \in \Hilbert^{2,0}(O)$ near conical singularities and  punctures as follows
\begin{equation}
Q(w) = \left\{
\begin{split}
&\frac{q^{(i)}_1}{m_i^2 J^{(i)}_1} \cdot \frac{1}{w-w_{i}} + \order{1} \hspace{3cm} (i=1,\dots,n_e),\hspace{.5cm} w \to w_{i} ,\\
& - \frac{|\delta_{i}|^2 q_1^{(i)}}{4 \pi^2 J_1^{(i)}} \cdot \frac{1}{w-w_{i}} + \order{|w-w_{i}|^{-2} }, \hspace{.2cm} (i=n_e+1,\dots,n-1), \hspace{.4cm}w \to w_{i},\\
&- \frac{|\delta_n|^2 q_1^{(n)} J_{-1}^{(n)}}{4 \pi^2} \cdot \frac{1}{w^3} + \order{|w|^{-4}} \hspace{4.5cm}w \to \infty .
\end{split}
\right.
\end{equation}
Finally, using the complex anti-linear isomorphism $\Hilbert^{2,0}(O) \cong \Hilbert^{-1,1}(O)$ given by $Q(w) \mapsto M(w) = e^{-\varphi(w)} \, \overline{Q(w)}$, we have 
\begin{equation}\label{MAsymptotic}
M(w) = \left\{
\begin{split}
& \frac{\bar{q}^{(i)}_1}{4 \bar{J}^{(i)}_1} \cdot |w - w_{i}|^{1-\frac{2}{m_i}}  + \order{|w - w_{i}|} \hspace{1.5cm} (i=1,\dots,n_e),\hspace{.2cm} w \to w_{i},\\
& - \frac{|\delta_{i}|^2 \bar{q}_1^{(i)}}{4\pi^2 \bar{J}_1^{(i)}} \cdot |w-w_{i}| \log^2 |w-w_{i}|+ \order{\log^2 |w-w_{i}|} \\
&\hspace{6.9cm} (i=n_e+1,\dots,n-1), \hspace{.2cm}w \to w_{i},\\
& - \frac{|\delta_n|^2 \bar{q}_1^{(n)} \bar{J}_{-1}^{(n)}}{4 \pi^2} \cdot \frac{\log^2|w|}{|w|} + \order{|w|^{-2} \log^2|w|} \hspace{1cm} w \to \infty.
\end{split}
\right.
\end{equation}
Using the definition \eqref{Mdef} as well as equations \eqref{JInverse} and \eqref{Beltramiasymptotics}, one can independently check the above asymptotic behaviors for $M(w)$.
\vspace{3cm}

\section{List of symbols in the main text}\label{symbol}
\begin{longtable}{| p{.12\textwidth} | p{.88\textwidth} |}
	\hline 
	$g$ & Genus of a corresponding surface, or rank of a Schottky group\\
	$n_e$ & Number of elliptic fixed points or branch points\\
	$n_p$ & Number of parabolic fixed points or punctures\\
	$n$ & Number of all marked points, namely $n_e+n_p$\\
	$m_i$ & Orders of the marked points\\
	$\boldsymbol{m}$ & $n$-tuple of all branching indices for an orbifold, i.e. $(m_1,\dots,m_n)$\\
	$h_i/2$ & conformal weights corresponding to the marked points, i.e. $1-\frac{1}{m_i^2}$\\
	$\orderrange$ & Natural numbers with the inclusion of $\infty$ and omission of $1$\\
	$\mathrm{D}$ & Unit disk in $\cmpx$\\
	$X$ & Riemann surface without singularities\\
	$O$ & Orbifold Riemann surface\\
	$O^\mu$ & Orbifold Riemann surface deformed by $\mu$\\
	$X_O$ & Underlying  Riemann surface of $O$\\
	$X_O^{\text{reg}}$ & Regular locus of $O$\\
	$\hat{X}_O$ & Compactified underlying Riemann surface\\  
	$K$ & Kleinian group\\
	$\Gamma$ & Fuchsian group\\
	$\Gamma^\mu$ & Fuchsian group deformed by $\mu$\\
	$\Sigma$ & Schottky group\\
	$N$ & Smallest normal subgroup of $\Gamma$ containing $\{\alpha_1, \dots, \alpha_g ,\tau_1, \dots, \tau_{n_e},\kappa_1, \dots, \kappa_{n_p}\}$\\
	$\pi_1(O,x_{*})$ & Fundamental group of an orbifold based at $x_{*}$\\
	$\mathbb{Z}_{m}$ & Cyclic group of order $m$\\
	$\Automorphism_{\ast}(\Gamma)$ & Group of proper automorphisms of $\Gamma$\\ 
	$\Inn(\Gamma)$ & Group of inner automorphisms\\
	$\operatorname{MCG}(O)$ & Mapping class group, namely $\operatorname{Homeo}^{+}(O)/\operatorname{Homeo}_{_{\operatorname{id}}}^{+} (O)$\\
	$\modular(\Gamma)$ & Teichm\"{u}ller modular group of $\Gamma$, i.e. $\Automorphism_{\ast}(\Gamma)/\Inn(\Gamma)=\operatorname{Out}^{+}(\Gamma)$\\
	$\operatorname{Homeo}^{+}(O)$ & Group of orientation preserving homeomorphisms of $O$ in the category of orbifolds, which has $\operatorname{Homeo}_{_{\operatorname{id}}}^{+} (O)$ as its identity component\\ 
	$\operatorname{Out}^{+}(\Gamma)$ & Group of outer automorphisms of $\Gamma \simeq \pi_1(O)$\\ 
	$\operatorname{MCG}_{0}(O)$ & Group of pure mapping classes of $O$\\
	$\puremod(\Gamma)$ & Group of pure mapping classes of $\Gamma$\\  
	$\symm{\s_i}$ & Symmetric group associated with the stratum of order $m_i$\\
	$\symm{\sigtype}$ & Product symmetric group $\symm{\s_2} \times \symm{\s_3} \times \dotsm \times \symm{\s_{\infty}}$\\
	$\operatorname{deck}$ & Group of covering or deck transformations\\
	$H_1$ & First homology group\\
	$\Omega$ & Region of discontinuity of Schottky group $\Sigma$\\
	$\Omega_{0}$& Region of discontinuity with pre-images of cusps subtracted\\
	$\singrigon$ &  $(\Omega_{0},\widetilde{\brdiv})$\\
	$\Omega^{\text{reg}}$ & $\Omega_0 \backslash \text{Supp}(\widetilde{\brdiv})$\\
	$\Lambda$ & Limit set of a Kleinian group, i.e. $\hat{\cmpx}\backslash\Omega$\\
	\hline
	$\fund$ & Fundamental domain of a Fuchsian group\\
	$\SchottkyFund$ & Fundamental domain of a Schottky group\\
	$\SchottkyFund_0$ & $\SchottkyFund \cap \Omega_0$\\
	$\singfund$ &  $(\mathcal{D}_0,\widetilde{\brdiv}\big{|}_{_{\mathcal{D}}})$\\
	$\singfund_{\epsilon}$ & Regularized singular fundamental domain of a Schottky group defined by $\singfund \big\backslash \bigcup_{i=1}^{n} D_i^{\epsilon}$\\
	$\gamma$ & An element of a Fuchsian group\\
	$\sigma$ & An element of a Schottky group\\
	$\eta$ & An element of a symmetric group\\
	$\mathsf{A}_i$ & Homotopy class of loops around a handle\\
	$\mathsf{B}_i$ & Homotopy class of loops around a hole\\
	$\mathsf{C}_i$ & Homotopy class of loops around a branch point\\
	$\mathsf{P}_i$ & Homotopy class of loops around a puncture\\
	$\alpha_i,\beta_i$ & Hyperbolic generators of a Fuchsian group\\
	$\tau_i$ & Elliptic generator of a Fuchsian group\\
	$\kappa_i$ & Parabolic generator of a Fuchsian group\\
	$L_i$ & Generator of a Schottky group\\
	$\tilde{a}_i, \tilde{b}_i$ & attracting and repelling fixed points of the loxodromic element $L_i$\\
	$\lambda_i$ & multiplier of the loxodromic element $L_i$\\
	$\mathsf{a}_i$ & A retrosection of a Riemann surface\\
	$f_{\eta}$ & 1-cocycle associated with an element $\eta\in\symm{\sigtype}$\\
	$\sing(O)$ & Set of all singular points for an orbifold $O$\\
	$\sing_m(O)$ & Set of singular points of order $m$ for an orbifold $O$, i.e. $\nu^{-1}(m)$\\
	$\sing_{\curlywedge}(O)$ & Set of singular points of finite order for an orbifold $O$, i.e. $\bigsqcup_{m \neq \infty} \sing_m(O)$\\
	$\s_m$ & Cardinality of the stratum of singular points of order $m \in \orderrange$\\
	$\sigtype$ & Signature type of an orbifold $O$, which is the unordered set of cardinalities of all strata, i.e. $\{\s_m\}_{m \in \orderrange}$\\
	$\nu$ & Branching function which assigns to each singular point its corresponding branching order\\
	$x_i$ & Marked points on the Riemannian orbifold\\
	$z^{{}_e}_{i},z^{{}_p}_{i}$ & Elliptic and Parabolic fixed points of a Fuchsian group, respectively\\
	$x^{{}_c}_{i},z^{{}_p}_{i}$ & Images of the elliptic and Parabolic fixed points of a Fuchsian group, under the projection  $\UHP \to O = [\UHP/\Gamma]$, respectively\\
	$z$ & Coordinates on the upper half-plane\\
	$w$ & Global coordinates on $\Omega$\\
	$w_i$ & Singular points on an orbifold, namely branch points for $i=1,\cdots,n_e$, and cusps for $i=n_e+1,\cdots,n$\\
	$t$ & Bers coordinates\\
	$\varepsilon$ & Small complex parameter for variation\\
	$\epsilon$ & Small real parameter for regularization\\
	\hline
	$\mathscr{CM}_{\boldsymbol{\alpha}}(X)$ & Space of all smooth conformal metrics $e^{\psi(u,\bar{u})} |du|^2$ on $X\backslash\{x_1,\dots,x_n\}$ which have conical singularities of angles $2\pi(1- \alpha_i)$ at the insertion points\\
	$\mathscr{CM}(O)$ & Space of singular conformal metrics on $X_O$ representing $\brdiv$\\
	$\brdiv$ & Branch divisor\\
	$\widetilde{\brdiv}$ & Branch divisor corresponding to the lift of the original $\brdiv$ under the Schottky group\\
	$U_a$ & Open subset in $X_O$\\
	$u_a$ & Coordinate function on an orbifold chart\\
	$u_{\UD}$& Coordinate on unit disk\\
	$g_{ab}$ & Transition function between two orbifold coordinate charts, $U_a$ and $U_b$\\
	$\mathcal{C}^{\infty}$ & Smooth  functions\\
	$\varphi,\psi$ & (Classical) Liouville field\\
	$\Gpotential_{\boldsymbol{m}}$ & Liouville action functional for non-zero genus\\
	$S_{\boldsymbol{m}}$ & Liouville action functional for zero genus\\
	$\mathbb{F}_{O}$ & Free energy defined through the action functional on an orbifold\\
	$\Delta_0$ & Laplace operator\\
	$\pi_\Gamma$ & Orbifold covering map between $\UHP$ and an orbifold $O$ that provides it with the Fuchsian global coordinates\\
	$\pi_\Sigma$ & Orbifold covering map between $\Omega$ or $\singrigon$ and an orbifold $O$ which restricted to $\Omega^{\text{reg}}$ gives the Schottky global coordinates\\
	$J$ & Depending on the context, the orbifold covering map between $\UHP$ and $\Omega$ or $\singrigon$, namely Klein's Hauptmodul, or orbifold covering map between $\UHP$ and $O$\\
	$\mathfrak{j}$ & The epimorphism $\Gamma \to \Sigma$ which maps $\beta_i$s to $L_i$s and all $\alpha_i$s, $\kappa_j$s, and $\tau_k$s to $\mathbbm{1}$\\
	$J_\mu$ & Orbifold covering map between $\UHP$ and an underlying Riemann surface $O^\mu$\\
	$J_k^{(i)}$ & The $k^{th}$ order coefficient of $J$'s expansion around the $i^{th}$ marked point\\
	$\Lponetial_{i}$ & Smooth functions on $\moduli_{0,n}$ defined by  $\left| J^{(i)}_{1} \right|^{\frac{2}{m_i}}$ for $i=1,\dots,n_e$, $\left| J^{(i)}_{1} \right|^2$ for $i=n_e+1,\dots,n-1$, and $ \left| J^{(n)}_{-1} \right|^2$ for $i=n$\\
	$\varLambda$ & Complex anti-linear mapping between $\diffspace^{-1,1}(\UHP,\Gamma)$ and $\Hilbert^{2,0}(\UHP,\Gamma)$ defined through the Bergman integral\\
	$\schottky_{g}$ & Schottky space of genus $g$, i.e. the set of equivalence classes of representations $\left[\varrho_{\mu}\right] : \Sigma \to \PSLC$, $\mu \in \deformspace(\Sigma)$.\\
	$\schottky_{g,n}(\boldsymbol{m})$ & Generalized Schottky space of genus $g$ and $n$ marked points, corresponding to $\boldsymbol{m}=(m_1,\dots,m_n)$\\
	$\teich_{g,n}(\boldsymbol{m})$ & Teichm\"{u}ller space of marked Riemann orbisurfaces of genus $g > 1$, corresponding to $\boldsymbol{m}=(m_1,\dots,m_n)$\\
	$\teich(\Gamma)$ & Teichm\"{u}ller space defined as the set of equivalence classes of representations $\left[\varrho_{\mu}\right] : \Gamma \to \PSLR$, $\mu \in \deformspace(\Gamma)$\\
	$\mathscr{P}_{g,n}(\boldsymbol{m})$ & Affine bundle $\mathscr{P}_{g,n}(\boldsymbol{m}) \to \teich_{g,n}(\boldsymbol{m})$, that the Fuchsian projective connection $\operatorname{Sch}(\pi_{\Gamma}^{-1})$ gives a canonical section for it or $\mathscr{P}_{g,n}(\boldsymbol{m}) \to \schottky_{g,n}(\boldsymbol{m})$ which has the Schottky projective connection $\operatorname{Sch}(\pi_{\Sigma}^{-1})$ as a canonical section\\
	\hline
	$\symmoduli_{g,n}(\boldsymbol{m})$ & Moduli space of Riemannian orbifold with signature $(g;m_1,\dots,m_{n_e};n_p)$ that is isomorphic to $\teich(\Gamma) / \modular(\Gamma)$\\
	$\moduli_{g,n}$ & Moduli space of smooth complex algebraic curves of genus $g$ with $n$ labeled points, i.e. $\teich(\Gamma)/\puremod(\Gamma)$\\
	$R$ & Projective connection\\
	$r_a$ & Elements of a projective connection\\
	$\text{Sch}\left(f;z\right)$ & Schwartzian derivative of $f$ with respect to $z$\\
	$\Lie_{\mu},\Lie_{\bar{\mu}}$ & Lie derivative in holomorphic and anti-holomorphic tangential directions, $\mu$ and $\bar{\mu}$\\
	$Q$ & Quadratic differential\\
	$q_a$ & Holomorphic functions constructing a quadratic differential\\
	$\mu,\mu'$ & Beltrami differentials for a Fuchsian group\\
	$M$ & Functions defined by $ \left(\mu \circ J^{-1}\right) \frac{\overline{(J^{-1})'}}{(J^{-1})'}$, which construct the analogue of the Beltrami equation for $F^\mu$\\
	$\mu_i$ & Basis element for harmonic differentials in $\Hilbert^{-1,1}(\UHP,\Gamma)$\\
	$M_i$ & Basis element for harmonic differentials in $\Hilbert^{-1,1}(O)$\\
	$q_i$ & Basis element for quadratic differentials in $\Hilbert^{2,0}(\UHP,\Gamma)$\\
	$R_i$ & Linearly independent elements that generate the space $\Hilbert^{2,0}(O)$\\
	$Q_i$ & Basis element for quadratic differentials in $\Hilbert^{2,0}(O)$ biorthogonal to $R_i$\\
	$P_i$ & Basis element for quadratic differentials in $\Hilbert^{2,0}(\singrigon,\Sigma)$\\
	$\varE_i$ & Terms multiplying to conformal weights in the energy-momentum tensor expansion for zero genus\\
	$\mathscr{E}_i$ & Terms multiplying to conformal weights in the energy-momentum tensor expansion for non-zero genus\\
	$v_i$ & Normalized basis of the space of holomorphic 1-forms - abelian differentials of the first kind\\
	$\phi$ & An element of $\diffspace^{k,\ell}(\UHP,\Gamma)$\\
	$\mathsf{R}(w)$ & Projection of the automorphic form $\text{Sch}\left(J^{-1};w\right)$ of weight four for the Schottky group to the subspace $\Hilbert^{2,0}(\singrigon,\Sigma) \cong T_{\pi\circ\Phi(0)}^{\ast}\schottky_{g,n}(\boldsymbol{m})$\\
	$\mathrm{b}_i$ & Coefficients of the projection $\mathsf{R}(w)$\\
	$\operatorname{Proj_{\Hilbert^{k,\ell}}}$ & Projection operator onto $\Hilbert^{k,\ell}(\UHP,\Gamma)$\\
	$\diffspace^{-1,1}(\UHP,\Gamma)$ & Complex Banach space of the Beltrami differentials for $\Gamma$\\
	$\diffspace^{-1,1}(\Omega,\Sigma)$ & Complex Banach space of the Beltrami differentials for $\Sigma$\\
	$\deformspace(\Gamma)$ & Open ball of radius $1$ in $\diffspace^{-1,1}(\UHP,\Gamma)$\\
	$\deformspace(\Sigma)$ & Open ball of radius $1$ in $\diffspace^{-1,1}(\Omega,\Sigma)$\\
	$\Hilbert^{2,0}(\UHP,\Gamma)$ & Space of cusp forms of weight $4$ for the group $\Gamma$--equivalently, meromorphic (2,0)-tensors/quadratic differentials on the Riemann orbisurface $O$\\
	$\Hilbert^{2,0}(\singrigon,\hspace{-.7mm}\Sigma\hspace{-.5mm})$ & Meromorphic quadratic differential for Schottky group $\Sigma$\\
	$\Hilbert^{k,0}(\UHP,\Gamma)$ & cusp forms of weight $2k$ for $\Gamma$\\
	$\Hilbert^{-1,1}(\UHP,\Gamma)$ & Harmonic Beltrami differentials, that are a subspace $ \Lambda^{\ast}\left(\Hilbert^{2,0}(\UHP,\Gamma)\right)$ of $\diffspace^{-1,1}(\UHP,\Gamma)$\\
	\hline
	$\Hilbert^{-1,1}(\singrigon,\hspace{-.7mm}\Sigma\hspace{-.5mm})$ &Space of harmonic Beltrami differentials with respect to the hyperbolic metric on $\singrigon$\\
	$\Hilbert^{2,0}(O)$ & Space of quadratic differentials on an orbifold\\
	$\Hilbert^{-1,1}(O)$ & Space of harmonic differentials on an orbifold\\
	$\diffspace^{k,\ell}(\UHP,\Gamma)$ & Smooth family of automorphic forms of weight $(2k,2\ell)$, or $(k,\ell)$-tensors on an orbifold $O$\\
	$\Phi$ & Mapping from $\deformspace(\Gamma)$ to $\teich_{g,n}(\boldsymbol{m})$\\
	$\Psi$ & Mapping from $\teich_{0,n}(\boldsymbol{m})$ to  $\cmpx^{n-3}$, defined by $(\Psi \circ \Phi)(\mu) = (w_1^{\mu},\dots,w_{n-3}^{\mu}) \in \cmpx^{n-3}$ \\
	$\pi$ & Mapping from $\teich_{g,n}(\boldsymbol{m})$ to $\schottky_{g,n}(\boldsymbol{m})$\\
	$\kerphi(\UHP,\Gamma)$ & Kernel of the differential $\dd{\Phi}$ at the point $0 \in \deformspace(\Gamma)$\\
	$\mathcal{P}(O)$ & Space of all $\CPl$-structures on a complex orbifold $O$\\
	$T_{\varphi},\bar{T}_{\varphi}$ & $(2,0)$ and $(0,2)$ components of the classical energy-momentum tensor on an orbifold $O$\\
	$T_{a},\bar{T}_{a}$ & Components of the classical energy-momentum tensor on each chart $U_a$\\
	$\mathscr{Q}$ & Difference between the Fuchsian and Schottky projective connections, i.e. $\operatorname{Sch}(\pi_{\Gamma}^{-1}) - \operatorname{Sch}(\pi_{\Sigma}^{-1})$\\
	$\omega_{_{\text{WP}}}$ & Symplectic form of the Weil-Petersson metric\\
	$\omega_{\text{TZ},i}^{\text{cusp}}$ & Symplectic form of $i^{th}$-cuspidal Takhtajan-Zograf metric \\
	$\omega_{\text{TZ},i}^{\text{ell}}$ & Symplectic form of $i^{th}$-elliptic Takhtajan-Zograf metric \\
	$\langle\cdot,\cdot\rangle_{\text{WP}}$ & Petersson inner product on $T_{[\mu]} \teich_{g,n}(\boldsymbol{m}) \cong \Hilbert^{-1,1}(\UHP,\Gamma^{\mu})$ \\
	$\langle\cdot ,\cdot \rangle^{\text{cusp}}_i$ & $i^{th}$-cuspidal Takhtajan-Zograf (TZ) inner product\\
	$\langle \cdot, \cdot\rangle^{\text{cusp}}$ & The invariant inner product under $\modular(\Gamma)$, i.e. $\langle \cdot , \cdot \rangle^{\text{cusp}} = \langle \cdot , \cdot \rangle^{\text{cusp}}_1 + \cdots + \langle \cdot , \cdot \rangle^{\text{cusp}}_{n_p}$\\
	$\langle \cdot,\cdot \rangle^{\text{ell}}_{i}$ & $i^{th}$-elliptic Takhtajan-Zograf (TZ) inner product\\
	$\left(||\cdot||_{1}^{\text{Quill}}\right)^
	{2}$ & Quillen metric on $\lambda_1$\\
	$||\cdot||_{k}^{\text{Quill}}$ & Quillen norm on $\lambda_k$\\
	$\langle \cdot , \cdot \rangle_{_{Sch}}$ & Hermitian metric on $\lambda_{Sch}$\\
	
	$\lambda_H$ & Hodge line bundle over $\symmoduli_{g,n}$\\
	$\lambda_{0,n}(\sigtype)$ & Holomorphic line bundle over $\symmoduli_{0,n}$\\
	$\lambda_{Sch}$ & Holomorphic $\mathbb{Q}$-line bundle over $\symmoduli_{g,n}$\\
	$\lambda_1$ & Hodge line bundle, i.e. the determinant line bundle associated with the Cauchy-Riemann operator $\bar{\partial}_1$\\
	$\lambda_k$ & Determinant line bundle associated with the Cauchy-Riemann operator $\bar{\partial}_k$\\
	$\lambda_{_{Hod}}$ & Hodge line bundle $(\lambda_{_{Hod}} ; \langle \cdot , \cdot \rangle_{_{Quil}})$\\
	
	$f^\mu$ & Solution of Beltrami equation corresponding to a Beltrami differential $\mu$\\
	$F^\mu$ & Quasi-conformal mapping that satisfies $F^{\mu} \circ J = J_{\mu} \circ f^{\mu}$\\
	$\dot{f}_{+}^{\mu},\dot{f}_{-}^{\mu}$ & The derivatives $\left.\left(\pdv{\varepsilon} f^{\varepsilon \mu}\right) \right|_{\varepsilon=0}$, $ \left.\left(\pdv{\bar{\varepsilon}} f^{\varepsilon \mu}\right) \right|_{\varepsilon=0}$, respectively\\
	$\operatorname{hol}$ & Holonomy representation\\
	$\varrho_{\mu_i}$ & Representations of Fuchsian or Schottky group corresponding to a Beltrami differential $\mu_i$\\
	$\rho(z)$ & Density of a hyperbolic metric on $\UHP$\\
	$E_i(z,s)$ & Eisenstein-Mass series associated with the cusp $z_{n_e+i}$\\
	\hline
	$\mathsf{H}$ & Hermitian metric defined on the holomorphic line bundle $\lambda_{0,n}(\sigtype)$ defined by $\Lponetial_{1}^{m_1h_1}\dotsm\Lponetial_{n_e}^{m_{n_e} h_{n_e}}\Lponetial_{n_e + l}\dotsm \Lponetial_{n-1} \Lponetial_{n}^{-1}$ for zero genus and, $\Lponetial_{1}^{m_1 h_1} \dotsm \Lponetial_{n_e}^{m_{n_e} h_{n_e}} \Lponetial_{n_e+1} \dotsm \Lponetial_{n}$ for non-zero genus\\
	$c_i$ & Accessory parameters\\
	$C_{i}^{\epsilon}$ & Regularization circles around branch points and punctures\\
	$C_i, C'_i$ & Jordan curves constructing the boundary of the Schottky fundamental domain\\
	$\mathrm{D}_i^\epsilon$ & Disks of radius $\epsilon$ around branch points and punctures\\
	$O_\epsilon$ & The region outside the regularizing disks, i.e. $\cmpx \Big\backslash \bigcup_{i=1}^{n-1}\left\{w \, \Big| \, |w-w_i|<\epsilon\right\} \cup \left\{w \, \Big| \, |w|>\epsilon^{-1}\right\}$\\ 
	$\jmath$ & Holomorphic fibration between $\schottky_{g,n}(\boldsymbol{m})$ and $\schottky_{g}$ whose fibers are configuration spaces of $n$ labeled points\\
	$\confspace{n}{\hat{\cmpx}}$ & Configuration space of complex $n$-tuples in $\hat{\cmpx}$\\
	$\linebundle_i$ & $i^{th}$ relative dualizing sheaf on $\schottky_{g,n}(\boldsymbol{m})$ or the $i^{th}$ tautological line bundle\\
	$\linebundle$ & $\mathbb{Q}$-line bundle defined by $\bigotimes_{i=1}^{n} \linebundle_i^{\Lponetial_{i}}$\\
	$\mathsf{c}_1$ & First Chern class\\
	$c_1(E,\nabla)$ & First Chern form of a vector bundle $E$\\
	$\theta_{L_k^{-1}}$ & Boundary 1-form of Takhtajan-Zograf action\\
	$l_k$ & Left-hand lower element in the matrix representation of the generator $L_k \in \PSLC$ for $k=2,\dots,g$\\
	$\deg(\mathcal{D})$ & Degree of a divisor\\
	$[\cdot/\cdot]$ & Quotient as an analytic orbifold/stack\\
	$\boldsymbol{\tau}$ & Period matrix\\
	$\det^{\prime}\Delta_0$ & Zeta-function regularized determinant of the Laplace operator in the hyperbolic metric $\exp(\varphi)|dw|^2$ acting on functions\\
	$\mathfrak{F}_{g}$ & The function from $\mathfrak{S}_{g}$ to $\cmpx$ given by $\mathfrak{F_{g}} = \prod_{\{\gamma\}}\prod_{k=0}^{\infty}\left(1- \hspace{1mm}q_{\gamma}^{1+k}\right)$\\
	$q_\gamma$ & Multiplier of $\gamma\in\Gamma$\\
	$\mathfrak{F}_{g,n}(\boldsymbol{m})$ & Generalization of $\mathfrak{F}_{g}$ to $\schottky_{g,n}(\boldsymbol{m})$\\
	$B_2(x)$ & Second Bernoulli polynomial\\
	$Z_{\text{Sz}}(s,\Gamma,\mathfrak{U})$ & Selberg zeta function\\
	$Z^{1-\text{loop}}_{\text{gravity}}$ & One-loop partition function of 3-dimensional gravity\\
	$V_{\alpha_{i}}, V_{m_i}$ & Liouville vertex operators with charges $\alpha_i$\\
	$\boldsymbol{T}(w)$ & Conformal energy momentum tensor\\
	$\boldsymbol{h}_{m_i}$ & Conformal dimensions of vertex operators 
	$V_{m_i}$ \\
	$h_{\text{cl}}(m_i)$ & Classical limit of $\boldsymbol{h}_{m_i}$\\
	\hline
\end{longtable}
\vspace{2cm}

\section{List of symbols in the appendices}\label{symbolapp}
\begin{longtable}{| p{.12\textwidth} | p{.88\textwidth} |} 
	\hline
	$\mathfrak{n}$ & Dimension or rank of different objects\\
	$g$ & genus of an orbifold Riemann surface\\
	$n_e$ & Number of elliptic fixed points or branch points\\
	$n_p$ & Number of parabolic fixed points or punctures\\
	$n$ & Number of all marked points, namely $n_e+n_p$\\
	$m_i$ & Orders of the marked points\\
	$[\cdot/\cdot]$ & Quotient as an analytic orbifold/stack\\
	$X$ & Analytic subvariety\\
	$(|X|,\mathscr{O}_X)$ & Ringed space or complex analytic space\\
	$|X|$ & Underlying topological space of an analytic space $X$\\
	$X_{\text{red}}$ & Reduction of an analytic space\\
	$E$ & Vector bundles of different kind\\
	$\bar{E_1}$ & Complex conjugate of a vector bundle\\
	$\mathsf{M}_{r\times r}(\cmpx)$ & Space of complex $r \times r$ matrices\\
	$L$ & Complex line bundle\\
	$\linebundle$ & Holomorphic line bundle\\
	$\mathcal{K}_{X}$ & Canonical line bundle, i.e. $\mathfrak{n}$-th exterior power $\bigwedge^{\mathfrak{n}} T^{\ast}_{(1,0)}X$\\
	$\mathcal{K}_{O}$ & Canonical orbidivisor\\
	$\mathcal{K}^{-1}_{X}$ & Anticanonical line bundle, i.e. The dual or inverse line bundle of $\mathcal{K}_{X}$\\
	$\linebundle(\mathcal{D})$ & Holomorphic line bundle corresponding to a divisor\\
	$O$ & Complex analytic orbifold\\
	$X_O$ & Underlying complex analytic space for an orbifold\\
	$X_O^{\text{reg}}$ & Orbifold regular locus or the principal stratum\\
	$\mathbb{X}$ & A complex model space\\
	
	$f_1,\dots,f_k$ & A system of local defining functions for an analytic subvariety\\
	$\operatorname{Reg}(X)$ & Set of all regular point of a subvariety $X$\\
	$\sing(X)$ & Singular locus or the set of singular points of of a subvariety $X$, i.e. $X\backslash \operatorname{Reg}(X)$\\
	$\varDelta(\boldsymbol{\epsilon})$ & A small polydisc\\
	$\mathfrak{O}_{\varDelta(\boldsymbol{\epsilon})}$ & Ring of all holomorphic functions on $\varDelta(\boldsymbol{\epsilon})$\\
	$\mathfrak{O}_{X}$ & Ring of holomorphic functions on a subvariety $X$, i.e. $\mathfrak{O}_{\varDelta(\boldsymbol{\epsilon})}/\mathfrak{I}(X)$\\
	$\mathfrak{O}_{U}$ & Ring of holomorphic functions in some open subset $U \subset \cmpx^{\mathfrak{n}}$ containing the origin \\
	$\mathfrak{O}_{U,0}$ & Ring of germs of holomorphic functions at the origin\\
	$\mathfrak{O}_{X,0}$ & Ring of germs of holomorphic functions on the subvariety $X$ defined by $\mathfrak{O}_{\cmpx^{\mathfrak{n}},0}/\mathfrak{I}([X]_0)$\\
	$\mathfrak{I}(X)$ & Ideal of the subvariety $X$, i.e. ideal of all vanishing functions in $X$ in the ring $\mathfrak{O}_{\varDelta(\boldsymbol{\epsilon})}$\\
	$\mathfrak{I}$ & Defining ideal of a subvariety $X$, i.e ideal formed by the set of defining functions of $X$\\
	$\sqrt{\mathfrak{I}}$ & Radical ideal of a subvariety $X$,\hspace{-.5mm} i.e.\hspace{-.5mm}$\left\{\hspace{-.5mm}f\hspace{-.3mm} \in \mathfrak{O}_{\varDelta(\boldsymbol{\epsilon})} \,\hspace{-.5mm} \Big|\hspace{-.5mm} \, f^{k'}\hspace{-1.5mm} \in\hspace{-.5mm} \mathfrak{I} \text{ for some positive integer } k'\right\}$\\
	\hline
	$\mathfrak{I}([X]_0)$ & Ideal canonically associated to a germ $[X]_0$ of an analytic subvariety at the origin defined as the ideal of germs of all analytic functions vanishing on the subvariety $X$ representing the germ $[X]_0$\\
	$\mathfrak{m}_{X,x}$ & Maximal ideal of a ringed space $X$\\
	$\mathscr{J}$ & Coherent ideal used in the definition of the closed complex analytic subspace\\
	$[X]_0$ & Germ of analytic subvariety $X$ at 0  in $\cmpx^{\mathfrak{n}}$\\
	$[f]_0$ & Germ of holomorphic function $f$ at the origin\\
	$[X(\mathfrak{I})]_0$ & Locus of the ideal $\mathfrak{I}$, i.e. a germ of an analytic subvariety at the origin in $\cmpx^{\mathfrak{n}}$ canonically associated to an ideal $\mathfrak{I} \subseteq \mathfrak{O}_{\cmpx^{\mathfrak{n}},0}$\\ 
	$\mathscr{O}_{\cmpx^{\mathfrak{n}}},\mathscr{O}_{U}$ & Sheaf of germs of holomorphic functions of $\mathfrak{n}$ complex variables and its restriction to $U \subset \cmpx^{\mathfrak{n}}$, repectively\\
	$\mathscr{I}(X)$ & Sheaf of ideals of the analytic subvariety $X$\\
	$\mathscr{O}_{X}$ & Sheaf of germs of holomorphic functions on the subvariety $X$, i.e. $\mathscr{O}_{U}/\mathscr{I}(X)$\\
	$\mathscr{I}$ & Coherent sheaf of $\mathscr{O}_U$-ideals\\
	$\mathscr{E}_{M}$ & Sheaf of germs of $\mathcal{C}^{\infty}$ complex functions on a complex manifold $M$\\
	$\mathscr{E}_{M}^r$ & Direct sum sheaf defined by $ \underbrace{\mathscr{E}_{M} \oplus \dotsm \oplus \mathscr{E}_{M}}_{r}$\\
	$\mathscr{E}(E)$ & Sheaf of germs of $\mathcal{C}^{\infty}$-sections of a vector bundle $E$\\
	$\mathscr{E}^{(k,l)}_{M}(E)$ & Sheaf of germs of smooth sections of $\bigwedge^{k,l} T^{\ast}M \otimes E$\\ 
	$\mathscr{E}_{M}^{\ast}$ & Multiplicative sheaf of invertible $\mathcal{C}^{\infty}$ complex functions on a complex manifold $M$\\
	$\mathscr{O}_X(E)$ & Analytic sheaf on $X$ of germs of holomorphic sections in $E$\\
	$\mathscr{O}_{X}^{\ast}$ & Sheaf of nowhere-vanishing holomorphic functions, i.e. sheaf of invertible elements in $ \mathscr{O}_{X}$\\
	$\mathscr{M}_X$ & Sheaf of meromorphic functions on $X$\\
	$\mathscr{M}^{\ast}_X$ & subsheaf of not-identically-zero meromorphic functions on $X$\\
	$\mathcal{F}_O$ & Orbisheaf\\
	$\tilde{\mathcal{F}}_a$ & Sheaves that construct an orbisheaf $\mathcal{F}_O$\\
	$\mathscr{O}_O$ & Structure orbisheaf of an orbifold $O$\\ 
	$\mathcal{F}_X$ & Sheaves on $X_O$ coming from invariant local sections of orbisheaves $\mathcal{F}_O$ \\
	$\Omega^{\mathfrak{n}}_O$ & canonical orbisheaf of a complex orbifold $O$ of complex dimension $\mathfrak{n}$\\
	$(|f|,f^{\ast})$ & Morphism of $\cmpx$-ringed spaces\\
	$g^{\ast}_{ab}$ & Gluing isomorphisms in a analytic atlas\\
	$g^{\alpha \beta}$ & Transition matrix of a vector bundle $E$\\
	$\psi_{\alpha}$ & Trivialization of a vector bundle $E$ on an open set $V_{\alpha}$\\
	$\pi$ & The projection from $E$ to the base space in the definition of a vector bundle\\
	$\operatorname{pr}$ & The projection $V_{\alpha} \times \cmpx^r \to V_{\alpha}$ in the definition of a Vector bundle\\
	$\delta$ & Connecting homomorphism for a holomorphic line bundle $\linebundle$\\
	$\{u^{\alpha \beta}\}$ & System of transition functions for the holomorphic line bundle $\linebundle(\mathcal{D})$\\
	$\varpi$ & Surjective analytic map invariant under $\Gamma$\\
	$|\varpi|$ & Quotient map sending $y$ to its left orbit $\Gamma(y)$\\
	$\eta_{ab}$ & Change of charts for an orbifold\\
	\hline
	$\Upsilon_{ab}$ & Monomorphism between $\Gamma_a$ and $\Gamma_b$ in orbifolds\\
	$f_{\text{orb}}$ & Analytic orbifold automorphism\\
	$\varpi_{\text{orb}}$ & Orbifold Galois covering\\
	$\pi_{\text{orb}}$ & Orbifold projection map in the definition of orbibundles\\
	$\widehat{\eta}_{ab}$ & Holomorphic bundle isomorphism in the definition of orbibundles\\
	$\widehat{\Upsilon}_a$ & Monomorphism in the definition of orbibundles\\
	$\mathfrak{f}_a$ & Transition maps for a $(G,\mathbb{X})$-structure\\
	$\dev$ & Developing map\\
	$\hol$ & Holonomy representation\\
	$\mathfrak{C}$ & Injective group homomorphism between $\Gamma$ and $G$ in a $(G,\mathbb{X})$-structure\\
	$J$ & Klein's Hauptmodule, i.e. a meromorphic function on $\UHP$ which is automorphic with respect to the Fuchsian group $\Gamma$\\
	
	$s$ & Smooth complex section of vector bundle $E$ or Holomorphic section of a holomorphic vector bundle $E$\\
	$\boldsymbol{s}$ & $k$-frame of vector bundle $E$, i.e. a collection $(s_1, \dots, s_k)$ of $k$ sections $s_i$ of vector bundle $E$ on $V$ linearly independent at each point in $V$\\
	\small{$\operatorname{Hom}(E_1,E_2)$} & Homomorphism of vector bundles $E_1$ and $E_2$\\
	$\bigwedge^k E$ & k-th exterior power of a vector bundle\\
	$\mathcal{A}^{k}(V,E)$ & Vector space of $\mathcal{C}^{\infty}$ sections of $\big( \bigwedge^k \, T^{\ast} M \big) \otimes E$ on $V \subset M$, which are called differential forms on $V$ with values in the vector bundle $E$\\
	$\mathcal{A}^{(k,l)}(E)$ & Vector space of smooth sections of this sheaf are $(k,l)$-forms with values in $E$\\
	$\nabla$ & Connection of a vector bundle $E$ or $\mathcal{A}^{(k,l)}(E)$\\
	$\{\nabla_a\}$ & $\Gamma_a$-equivariant Hermitian connections supported on each local uniformizing neighborhood $\tilde{U}_a$ such that $\nabla_a$s are compatible with changes of charts\\
	$\Theta$ & Curvature of a connection $\nabla$, i.e. $\nabla \circ \nabla$\\
	$A = (A_{ij})$ & Connection matrix with respect to a frame\\
	$\theta = (\theta_{ij})$ & Curvature matrix with respect to a frame\\
	$\mathsf{g} = (\mathsf{g}_{ij})$ & Gauge transformation matrix\\
	$h$ & Hermitian metric on a complex vector bundle $E$\\
	$\mathsf{h}$ & Hermitian metric on a complex orbifold\\
	$\tilde{\mathsf{h}}_a^{\Gamma_a}$ & $\Gamma_a$-invariant (local) Hermitian metrics used in the definition of $\mathsf{h}$\\
	$c_i(E,\nabla)$ & i-th Chern form of a vector bundle $E$\\
	$c_i(E)$ & i-th Chern class of a vector bundle $E$\\
	$\operatorname{Pic}(X)$ & Picard group of $X$, i.e. the group $H^1(X,\mathscr{O}_{X}^{\ast})$ which is the group of holomorphic line bundles on the analytic space $X$ with group multiplication being the tensor product, and the inverse bundle being the dual bundle\\
	$\operatorname{Pic}^{0}(X)$ & Kernel of the connecting homomorphism $\delta$ for a holomorphic line bundle $\linebundle$\\
	$\operatorname{Div}(X)$ & Divisor group, i.e. the group constructed via the formal sum of Weil divisors\\
    {\fontsize{6.5}{150}\selectfont $H^0(X, \mathscr{M}^{\ast}_X / \mathscr{O}^{\ast}_X)$}     & Abelian group of Cartier divisors on $X$\\
	$\operatorname{Cl}(X)$ & Divisor class group of Weil divisors modulo linear equivalence\\
	$\operatorname{CaCl}(X)$ & Group of Cartier divisor classes, i.e. Cartier divisors modulo principal divisors\\
	$\Gamma$ & A finite subgroup of the group $\operatorname{Aut}(X)$ of analytic automorphisms of $X$\\
	$\Gamma_{y}$ & Isotropy subgroup or stabilizer subgroup of $y$\\
	$\Gamma(y)$ & Orbit of a point $y$\\
	$\operatorname{Fix}(\gamma)$ & Fixpoint set\\
	$T$ & Thin subset\\
	$R_{\varpi}$ & Ramification locus of an analytic covering\\
	$B_{\varpi}$ & Branching locus of an analytic covering, i.e. $\varpi(R_{\varpi})$\\
	$\operatorname{Aut}(\varpi)$ & Group of covering transformations or deck transformations, i.e. group of all automorphisms of an analytic covering\\
	{\fontsize{6}{150}\selectfont$\gal(Y/X), \gal(\varpi)$} & Galois group of an analytic covering, i.e. $\operatorname{Aut}(\varpi)$\\
	$\pi_1$ & Fundamental group\\
	$\isom^{+}$ & Group of orientation preserving isometries\\
	$\mathcal{D}$ & Divisor\\
	$\text{Div}(M)$ & Weil or Cartier divisors on a smooth complex manifold $M$\\
	$\mathcal{D}^{\text{reg}}$ & Weil divisor defined on $\operatorname{Reg}(X)$\\
	$\phi$ & Principal Cartier divisor or orbifold $k$-form\\
	$\mathcal{R}_i$ & Prime divisors or irreducible hypersurfaces\\
	$\mathscr{R}_{\varpi}$ & Ramification divisor of an analytic covering\\
	$\mathscr{B}_{\varpi}$ & Branching divisor of an analytic covering\\
	$\brdiv$ & Branch divisor of an orbifold\\
	$H_i$ & Irreducible analytic hypersurface\\
	$\operatorname{Supp}(\mathcal{D})$ & Support of a Weil divisor, i.e. $\bigcup_{i} H_i$\\
	$\operatorname{mult}_{H_i}(\mathcal{D})$ & Multiplicity of a Weil divisor $\mathcal{D}$, i.e. the coefficient $a_i$ in its definition\\
	$\deg(\mathcal{D})$ & Degree of a Weil divisor, i.e. $ \sum_i\operatorname{mult}_{H_i}(\mathcal{D}) = \sum_i a_i$\\
	$Z_f$ & Zero set of a holomorphic function $f$\\
	$\ord(f)$ & Order of vanishing of a holomorphic function $f$\\
	$[\mathcal{D}]$ & Linear system of divisors defined by $\mathcal{D}$, i.e. the set of all divisors on $X$ that are linearly equivalent to $\mathcal{D}$\\
	$\deg(\varpi)$ & degree of an analytic covering\\
	$\#$ & Order of a group\\
	$\nu_{\varpi}$ & Branching function of a covering\\
	$m_i$ & Ramification indices of a covering $\varpi$ along irreducible hypersurfaces $\mathcal{R}_i$\\
	$(X_O,\brdiv)$ & Log pair\\
	{\fontsize{9.5}{150}\selectfont$(\mathrm{D},\mathrm{D},\mathbb{Z}_{m_i},\ff_i)$} & Charts on an orbifold Riemann surface\\
	$m_i$ & Branching index of an orbifold Riemann surface\\
	$U_a$ & Open subset in $X_O$ in the definition of orbifold\\
	$\tilde{U}_a$ & Open subset in $\cmpx^{\mathfrak{n}}$ in the definition of orbifold\\
	$\Gamma_a$ & Subgroup of $\operatorname{GL}(\mathfrak{n},\cmpx)$ in the definition of orbifold\\
	$\mathrm{f}_a$ & Folding map in the definition of orbifold\\
	$\mathcal{U}$ & Orbifold atlas\\
	$\mathcal{U}_{\max}$ & Maximal orbifold atlas\\
	$[\mathcal{U}]$ & Equivalence class of analytic orbifold atlases on $X_O$\\
	$\Gamma_x$ & isotropy group or the local group\\
	\hline
	$m_x$ & Order of the local isotropy group\\
	$\mathsf{A}_i$ & Homotopy class of loops around a handle\\
	$\mathsf{B}_i$ & Homotopy class of loops around a hole\\
	$\mathsf{C}_i$ & Homotopy class of loops around a branch point\\
	$\mathsf{P}_i$ & Homotopy class of loops around a puncture\\
	$F$ & Number of faces\\
	$E$ & Number of edges\\
	$V$ & Number of vertices\\
	$\tilde{\phi}_a^{\Gamma_a}$ & $\Gamma_a$-invariant complex $k$-forms used in the definition of the orbifold $k$-form\\
	$\mathcal{E}^{p,q}(O)$ & Vector space of all orbifold $(p,q)$-froms on an orbifold $O$\\
	{\fontsize{6.5}{150}\selectfont$\diffspace^{p,q}\bigg(O, K_O^{k} \hspace{-.5mm}\otimes \overline{K}_O^{l}\bigg)$} & Vector space of smooth differential forms of type $(p,q)$ on $O$ with values in $K_O^{k} \otimes \overline{K}_O^{l}$\\
	$\mathcal{E}^{k,l}(\UHP,\Gamma)$ & Hilbert space of automorphic forms of weight $(2k,2l)$ with the natural scalar product $\langle \phi_1,\phi_2 \rangle = \int_{X} \phi_1 \overline{\phi_2} \rho^{-k-\ell+1}$\\
	$\Hilbert^{k,\ell}(X)$ & space of harmonic $(k,\ell)$-differentials that are square-integrable  with respect to the hyperbolic metric on $X =\UHP \slash \Gamma$\\
	$\mathcal{P}$ & Complex structure\\
	$P$ & Projective structure\\
	$P^{{}_H}$ & Projective structure fixed by the convex combination $\frac{1}{\# H} \sum_{\textcolor{black}{\mathrm{h}} \in H} \mathrm{h}^{\ast} P$\\ 
	$\text{Sch}$ & Schwarzian derivative\\
	$q_a$ & Elements of a quadratic differential\\
	$\phi$ & Liouville field\\
	$\alpha_i,\beta_i$ & Hyperbolic generators of a Fuchsian group\\
	$\tau_i$ & Elliptic generator of a Fuchsian group\\
	$\kappa_i$ & Parabolic generator of a Fuchsian group\\
	$\lambda_i$ & Multipliers of elliptic generators of a Fuchsian group\\
	$\delta_i$ & Translation length of parabolic generators of a Fuchsian group\\
	$J_k^{(i)}$ & The $k^{th}$ order coefficient of $J$'s expansion around the $i^{th}$ marked point\\
	$c_i$ & Accessory parameters\\
	$h_i$ & Conformal weight corresponding to the order of a marked point $m_i$\\
	$\Hilbert^{2,0}(\UHP,\Gamma)$ & Space of cusp forms of weight $4$ for the group $\Gamma$--equivalently, meromorphic (2,0)-tensors/quadratic differentials on the Riemann orbisurface $O$\\
	$\Hilbert^{-1,1}(\UHP,\Gamma)$ & Harmonic Beltrami differentials, that are a subspace $ \Lambda^{\ast}\left(\Hilbert^{2,0}(\UHP,\Gamma)\right)$ of $\diffspace^{-1,1}(\UHP,\Gamma)$\\
	$\Hilbert^{2,0}(O)$ & Space of quadratic differentials on an orbifold\\
	$\Hilbert^{-1,1}(O)$ & Space of harmonic differentials on an orbifold\\
	$q(z)$ & An element of $\Hilbert^{2,0}(\UHP,\Gamma)$\\
	$\mu(z)$ & An element of $\Hilbert^{-1,1}(\UHP,\Gamma)$\\
	$Q(w)$ & An element of $\Hilbert^{2,0}(O)$\\
	$M(w)$ & An element of $\Hilbert^{-1,1}(O)$\\
	\hline
\end{longtable}

\bibliographystyle{JHEP}
\bibliography{BIB}

\end{document}